\documentclass[aip,superscriptaddress,nofootinbib,onecolumn,10pt]{revtex4-2} 

\usepackage{soul}
\usepackage[pdftex]{graphicx}
\usepackage{dcolumn}
\usepackage{bm}
\usepackage{epsfig}
\usepackage{amsmath}
\usepackage{amssymb} 
\usepackage{amsfonts}
\usepackage{pifont} 
\usepackage{multirow}
\usepackage{subfigure}
\usepackage{float}
\usepackage[utf8]{inputenc} 
\usepackage{epstopdf}
\usepackage{dsfont}
\usepackage{MnSymbol}
\usepackage{xspace}
\usepackage[most]{tcolorbox}
\tcbuselibrary{breakable}
\usepackage{multirow}
\usepackage{enumitem}
\usepackage{csquotes}
\usepackage{tabularray}
\usepackage{booktabs}
\usepackage[linesnumbered,lined,boxed,commentsnumbered,ruled,vlined,nofillcomment]{algorithm2e}
\usepackage{dashbox}
\usepackage[dvipsnames,svgnames]{xcolor}

\newcommand{\CYAN}[1]{\textcolor{cyan}{#1}}
\newcommand{\MAG}[1]{\textcolor{magenta}{#1}}

\usepackage{amsthm}
\newtheorem{theorem}{Theorem}
\newtheorem*{theorem*}{Theorem}
\newtheorem{lemma}[theorem]{Lemma}
\newtheorem*{lemma*}{Lemma}
\newtheorem{conjecture}[theorem]{Conjecture}
\newtheorem{corollary}[theorem]{Corollary}
\newtheorem*{corollary*}{Corollary}
\newtheorem{proposition}[theorem]{Proposition}
\newtheorem*{proposition*}{Proposition}
\newtheorem{definition}[theorem]{Definition}
\newtheorem*{definition*}{Definition}
\newtheorem{example}[theorem]{Example}
\newtheorem*{example*}{Example}

\makeatletter

\makeatother

\newtcbtheorem[]{tcexample}{Example}{breakable, colback=orange!3!white,colframe=orange!85!black, fonttitle=\bfseries}{Exa}

\DeclareMathOperator{\spn}{span}

\newcommand{\g}{\mathfrak{g}}
\newcommand{\gh}{\mathfrak{h}}
\newcommand{\gr}{\mathfrak{r}}
\newcommand{\gn}{\mathfrak{n}}

\newcommand{\gs}{\mathfrak{s}}
\newcommand{\gd}{\mathfrak{d}}
\newcommand{\sch}{\mathfrak{sch}}
\newcommand{\aff}{\mathfrak{aff}}

\newcommand{\sso}[1]{\mathfrak{so}(#1)}
\def\C{\mathbb{C}}
\def\R{\mathbb{R}}
\newcommand{\su}[1]{\mathfrak{su}(#1)}
\newcommand{\slR}[1]{\mathfrak{sl}(#1,\R)}
\newcommand{\slC}[1]{\mathfrak{sl}(#1,\C)}
\newcommand{\spR}[1]{\mathfrak{sp}(#1,\R)}
\newcommand{\wh}{\mathfrak{wh}}
\def\Z{\mathbb{Z}}
\def\N{\mathbb{N}}

\def\dagg{\textup{$\dagger$}}

\newcommand{\Np}[1]{\N^{#1}_{\geq 0}}
\newcommand{\x}[2]{\chi_{#1}(#2)}

\newcommand{\chimap}[1]{\chi({#1})}

\newcommand{\mdeg}{\mathrm{mdeg}}

\newcommand{\ket}[1]{\ensuremath{| #1 \rangle}{}}

\newcommand{\lie}[1]{\langle #1 \rangle_{\mathrm{Lie}}{}}

\newcommand{\PP}[2]{\mathcal{P}_{#2}(#1)}

\newcommand{\Tp}{\mathrm{Tp}}

\newcommand{\plainrefs}[1]{{\mbox{\textup{[\!\!\citenum{#1}]}}}}

\usepackage{bbm}

\newcommand{\upto}[1]{\simeq_{#1}}

\newcommand{\graphlegend}[1]
{\begin{tikzpicture}[scale=0.75, every node/.style={transform shape}]
		\def\axy{0.3} 
		\def\bxy{0.325}   
		\coordinate (x) at (-2,0);
		\coordinate (z) at (2,0);	
		\node[ellipse, minimum width=0.6cm, minimum height=0.65cm, very thick,  draw = #1] (E2) at (x) {\large$x$};
		\node[ellipse, minimum width=0.6cm, minimum height=0.65cm, very thick, draw = #1] (E1) at (z) {\large$z$};
		\path (x) ++({\axy*cos(0)}, {\bxy*sin(0)}) coordinate (P1);
		\path (z) ++({\axy*cos(-180)}, {\bxy*sin(-180)}) coordinate (P2);
		\draw[-{Latex[length=3mm]}, very thick, draw=#1] (P1)-- (P2);
		\node at (0,0.3) {\large$y$};
		\node at (4,0) {\large$\hat{=}$};
		\node at (8.5,0) {\large$[x,y]=\kappa z$\quad with $\kappa\in\R\setminus\{0\}$.};
	\end{tikzpicture}
}

\widowpenalty10000
\clubpenalty10000

\usepackage[bookmarks=false,pdfstartview={FitH}]{hyperref}

\usepackage{tikz,xcolor}
\usetikzlibrary{patterns,patterns.meta,decorations.pathreplacing,angles,quotes, plotmarks, arrows.meta, calc, hobby}
\usetikzlibrary{shapes.geometric}
\usetikzlibrary{cd, babel, math} 
\definecolor{lime}{HTML}{A6CE39}
\DeclareRobustCommand{\orcidicon}{%
	\begin{tikzpicture}
		\draw[lime, fill=lime] (0,0) 
		circle [radius=0.16] 
		node[white] {{\fontfamily{qag}\selectfont \tiny ID}};
		\draw[white, fill=white] (-0.0625,0.095) 
		circle [radius=0.007];
	\end{tikzpicture}
	\hspace{-2mm}
}
\foreach \x in {A, ..., Z}{%
	\expandafter\xdef\csname orcid\x\endcsname{\noexpand\href{https://orcid.org/\csname orcidauthor\x\endcsname}{\noexpand\orcidicon}}
}



\begin{document}

	\title[Finite-dimensional Lie algebras in bosonic quantum dynamics: The single-mode case]{Finite-dimensional Lie algebras in bosonic quantum dynamics: The single-mode case}
	
	\author{Tim Heib\,\orcidC{}}
	\email{t.heib@fz-juelich.de}	
	\affiliation{Institute for Quantum Computing Analytics (PGI-12), Forschungszentrum J\"ulich, 52425 J\"ulich, Germany}
	\affiliation{Theoretical Physics, Universit\"at des Saarlandes, 66123 Saarbr\"ucken, Germany}
	\author{Andreea Silvia Goia\orcidE{}}
    \thanks{Current affiliation: Ludwig-Maximilians-Universität München, Theresienstrasse 37, 80333 München, Germany; Technische Universität München,  85748 Garching, Germany.}
	\affiliation{Institute for Quantum Computing Analytics (PGI-12), Forschungszentrum J\"ulich, 52425 J\"ulich, Germany}
	\affiliation{\mbox{University of Bucharest, Faculty of Physics, Măgurele, PO BOX MG11, 077125, Romania}}
	\author{Sona Baghiyan\orcidD{}}
    \thanks{Current affiliation: Physics Department, University of Wisconsin-Madison, Madison, Wisconsin, 53706-1390, USA.}
	\affiliation{Institute for Quantum Computing Analytics (PGI-12), Forschungszentrum J\"ulich, 52425 J\"ulich, Germany}
	\affiliation{St. Olaf College, Northfield, Minnesota, 55057, USA}
	\author{Robert Zeier\,\orcidA{}}
	\email{r.zeier@fz-juelich.de}
	\affiliation{Forschungszentrum Jülich GmbH, Peter Grünberg Institute, Quantum Control (PGI-8), 52425 Jülich, Germany}
	\author{David Edward Bruschi\,\orcidB{}}
	\email{david.edward.bruschi@posteo.net}	
	\email{d.e.bruschi@fz-juelich.de}	
	\affiliation{Institute for Quantum Computing Analytics (PGI-12), Forschungszentrum J\"ulich, 52425 J\"ulich, Germany}
	\affiliation{Theoretical Physics, Universit\"at des Saarlandes, 66123 Saarbr\"ucken, Germany}
	
	\begin{abstract}
		We study, classify, and explore the mathematical properties of finite-dimensional Lie algebras occurring in the quantum dynamics of single-mode and self-interacting bosonic systems.
		These Lie algebras are contained in the real skew-hermitian Weyl algebra $\hat{A}_1$, 
		defined as the real subalgebra of the Weyl algebra $A_1$ consisting of all skew-hermitian polynomials.
		A central aspect of our analysis is the choice of basis for $\hat{A}_1$, which is composed of skew-symmetric combinations of two elements of the Weyl algebra called monomials, namely strings of creation and annihilation operators combined with their hermitian conjugate.
		Motivated by the quest for analytical solutions in quantum optimal control and dynamics, we aim at answering the following three fundamental questions: (i) \emph{What are the finite-dimensional Lie subalgebras in $\hat{A}_1$ generated by monomials alone}?
		(ii)~\emph{What are the finite-dimensional Lie subalgebras in $\hat{A}_1$ that contain the free Hamiltonian}? (iii) \emph{What are the non-abelian and finite-dimensional Lie subalgebras that can be faithfully realized in $\hat{A}_1$}?
		We answer the first question by providing all possible realizations of all finite-dimensional non-abelian Lie algebras that are generated by monomials alone. We answer the second question by proving that any non-abelian and finite-dimensional subalgebra of $\hat{A}_1$ that contains a free Hamiltonian term must be a subalgebra of the Schr\"odinger algebra. We partially answer the third question by classifying all nilpotent and non-solvable Lie algebras that can be realized in $\hat{A}_1$, and comment on the remaining cases. 
		Finally, we also discuss the implications of our results for quantum control theory.
		Our work constitutes an important stepping stone to understanding
		quantum dynamics of bosonic systems in full generality.
	\end{abstract}

	\date{\today}

	\maketitle
	\thispagestyle{empty}
	
	\newpage
	
	\section*{Introduction\label{sec:introduction}}
	Completely determining the quantum dynamics of a physical system is key to fully understanding natural processes. In those cases where sufficient control can be achieved, it is then possible to steer the state of the system, starting from an initially given one, without the need for additional intermediate measurements.
	In general, an idealized setting is assumed where the state of a physical system---encoding all information of interest at any given time---is ultimately determined by a set of differential equations. Obtaining the analytical solution to such equations, which govern the time evolution, would, in principle, yield a complete understanding of the system's dynamics. However, in the vast majority of cases, analytical solutions to the full dynamics cannot be found, and this is true even for relatively simple setups. Examples include driven two-level quantum systems \plainrefs{Barnes:2012} and problems in many-body physics, such as interaction-driven metal-insulator transitions, superconductivity, or magnetism \plainrefs{LeBlanc:2015}, to mention a few. This limitation is not only confined to problems concerning quantum dynamics, but also extends to other areas of physics, such as the three-body problem \plainrefs{Musielak:2014} in classical mechanics or binary black hole coalescence \plainrefs{Pretorius:2009} in classical gravitational physics, thereby highlighting the fundamental challenge in fully understanding
	the dynamics of physical systems. In virtually all cases where a full solution is outstanding, one relies on numerical methods, such as Monte Carlo methods \plainrefs{Yokoyama:1987,Shi:2013,tenHaaf:1995}, dynamical mean-field methods \plainrefs{Aoki:2014,Anders:2011}, the density matrix embedding method \plainrefs{Knizia:2012}, Runge-Kutta methods \plainrefs{Blanes:2002}, or truncated Wigner approximations \plainrefs{Blakie:2008} to name a few. These are often used in conjunction with analytical methods such as considering the high-temperature limit \plainrefs{LeBlanc:2015}, or the limit of infinite dimensions \plainrefs{Metzner:1989,Georges:1996}, and yield approximate solutions. 
	Regardless of the approach taken, a complete predictive understanding of the system's evolution remains elusive.
	This fundamental gap naturally leads to the foundational question: \emph{Under which conditions can an analytical solution to the dynamics of a quantum system be, at least in principle, obtained?} 
	
	In quantum mechanics, the differential equation that determines the unitary time evolution of a quantum state $\rho(t)$ is the time-dependent von Neumann equation 
	\begin{align}
		\dot\rho(t)=\frac{\mathrm{d}}{\mathrm{d}t}\rho(t)=i [H(t),\rho(t)],
	\end{align}
	where $H(t)$ is the hermitian Hamiltonian operator encoding the specific dynamics of a given system. In this work, we use natural units with $\hbar=1$, unless explicitly stated. The solution to this equation is $\rho(t)=U(t)\rho(0)U^\dagg(t)$, where $\rho(0)$ is the initial state, and the problem is then moved to finding the unitary time evolution operator $U(t)$, which is a one-parameter family of operators determined by the differential equation $(\mathrm{d}/\mathrm{d}t)\, U(t)=-iH(t)U(t)$. Its formal solution is given by the time-ordered exponential
	\begin{align}\label{time:evolution:general:solution}
		U(t)=\overset{\leftarrow}{\mathcal{T}}\exp\left[-i\int_0^t H(\tau)\mathrm{d}\tau\right],
	\end{align}
	where $\overset{\leftarrow}{\mathcal{T}}$ is the time-ordering operator \plainrefs{Dyson:1949}.
	
	The expression \eqref{time:evolution:general:solution}, although formally correct, is almost always non-conductive to obtaining an exact closed-form expression for $\rho(t)$. This occurs because there is no operational way to apply this generic solution to an initial state $\rho(0)$ and explicitly obtain, for example, the statistical moments of such a state. Instead, one often needs to employ some approximate form of \eqref{time:evolution:general:solution} that can be concretely applied on the initial states and therefore allow for the computation of any quantity of interest. An example for obtaining an expression for the unitary time evolution operator is the Magnus expansion, which provides a formal series solution \plainrefs{Blanes:2009}. However, one often lacks analytical control over the series and resorts to truncating it and only computing the first few terms \plainrefs{Kirchhoff:2025,Blanes:2009}. 
	
	A powerful alternative approach has been pioneered by Wei and Norman \plainrefs{Wei:Norman:1963,Wei:Norman:1964}, which is nevertheless restricted to finite dimensions. This approach is based on the concept of \emph{factorization}, that is, on the ability to write the time evolution operator $U(t)=\prod_n U_n(t)$ as a product of unitary operators $U_n(t)$ each of which can be potentially applied in a direct fashion to the initial state of interest. The idea is that the operators $U_n(t)$  are determined by time-independent hermitian operators $G_n$ that are elements of a finite-dimensional Lie algebra \plainrefs{Wei:Norman:1963,Wei:Norman:1964}. This approach has generated renewed interest in the past years and has sparked new efforts in finding exact solutions to quantum dynamics of quantum systems, such as in the context of optomechanics \plainrefs{Qvarfort:Serafini:2020,Schneiter:Qvarfort:2020,martíneztibaduiza:2025}, and more general approaches \plainrefs{Qvarfort:2025,Bruschi:Lee:2013}. In particular, a novel direction has been recently proposed to tackle the problem as a whole by classifying finite-dimensional subalgebras of the so-called skew-hermitian Weyl algebra of skew-hermitian bosonic operators, providing the first set of general constraints on the elements of a Hamiltonian allowing for finite factorization \plainrefs{Bruschi:Xuereb:2024}. This has opened a novel direction in the broad area of dynamical systems, with promising implications.
	
	In this work we move on from the foundational work on Hamiltonian Lie algebras proposed in \plainrefs{Bruschi:Xuereb:2024}, and take a first step by focussing on algebraic aspects of single-mode bosonic systems. In general, bosonic systems (or, equivalently, systems of quantum harmonic oscillators) are described by non-commutative operators that are elements of the infinite-dimensional complex associative Weyl algebra $A_n$ that consists of all polynomials in the creation and annihilation operators of $n$ modes \plainrefs{Bruschi:Xuereb:2024}. We are interested in the single-mode case where $n=1$, and we note that a complete classification of all finite-dimensional algebras that can be realized in the complex algebra $A_1$ has recently been achieved \plainrefs{Tanasa:2005,TST:2006}. However, motivated by the hermitian nature of physical Hamiltonian operators we shift our focus to the real skew-hermitian Weyl algebra $\hat{A}_1$ and tackle the following key question: 
	\begin{quote}
		\textbf{Q}: \textit{Which Hamiltonians of one bosonic mode admit a finite-dimensional Lie algebra?}
	\end{quote}
	Identifying such Hamiltonians enables the application of the Wei-Norman method for all such cases, which provides in general local solutions in time. In the specific case of solvable algebras it also provides the exact analytical solution to the corresponding dynamics in terms of a product of simple operations for all times. Our study thereby contributes to the broader question of finding physical systems that admit closed-form (factorized) solution for their time evolution (as we will discuss later on). 
	
	The guiding principle of this work is the problem of quantum optimal control \plainrefs{Boscain:2021,Huang:1983}, a field gaining increasing attention due to recent advances in quantum information science \plainrefs{Google:2024,Werninghaus:2021,Kim:2023}.   
	The general idea of optimal control is the following: an initial state $\rho(0)$ is given
	together with a set of time-dependent control fields $u_j(t)$ that can be arbitrarily externally manipulated, and one aims at designing specific control fields in order to steer the system toward a desired state with high fidelity and robustness against noise. Our work can contribute towards this field by providing a deeper understanding of the relationship between controllability and the finiteness of subalgebras of the Weyl algebra $\hat{A}_1$, in particular in those cases when the free Hamiltonian cannot be steered (i.e., it is part of the drift). We also discuss the nullity
	of all finite-dimensional Lie subalgebras of the Weyl algebra $\hat{A}_1$, which is 
	defined as the minimal number of elements necessary to generate a given Lie algebra. Better understanding this concept can help to reduce the number of necessary operations that an external agent requires in order to steer a given system, thereby reducing cost and increasing efficiency of the processes of interest. In addition, we also discuss the set of reachable states for a given initial state. We find that nonlinear elements are required, where nonlinear here means that the operators are at least cubic in the creation and annihilation operators. Here, linear dynamics are given by Hamiltonians that are quadratic in the creation and annihilation operators \plainrefs{Adesso:Ragy:2014}.
	
	In this work, our main objective is to provide a complete list of non-abelian and finite-dimensional subalgebras of the real skew-hermitian Weyl algebra $\hat{A}_1$ that are either:
	\begin{enumerate}[label = (\alph*)]
		\item generated by a finite set of monomials of genuine bosonic creation and annihilation operators $a^\dagg$ and $a$, respectively. These monomials form a basis of the skew-hermitian Weyl algebra $\hat{A}_1$ and were introduced in \plainrefs{Bruschi:Xuereb:2024} (see equation \eqref{eqn:def:monomials});
		\item generated by a set of polynomials that includes a free Hamiltonian term - an operator that is of the form $i\omega a^\dagg a+ic$ with $c \in \R$;
		\item nilpotent or non-solvable.
	\end{enumerate}
	This aim connects our work to the broader and ongoing mathematical effort of providing a complete classification of all finite-dimensional Lie algebras, including solvable ones. The starting point of this classification-programme was the complete classification of semisimple Lie algebras, which laid the groundwork for further structural analysis \plainrefs{Cartan:1894,Killing:erster:1888,Killing:zweiter:1888,Killing:dritter:1889,Killing:vierter:1890}. Building on these preliminary endeavours, the field has seen significant progress \plainrefs{Bianchi:1903,Levi:1905,Matltsev:1942}, but the full classification is still an outstanding problem \plainrefs{Qi:2019,DelBarco:2025, Oeh:2023}. This classification programme encompasses a wide variety of algebraic structures, including nilpotent, solvable, non-reductive, and generally non-semisimple algebras.  Recent progress has been achieved in classifying mostly lower-dimensional algebras, even though often under restrictive conditions \plainrefs{DelBarco:2025,DeGraaf:2007,Ancocha:2015,Choriyeva:2024}.
	
	Our work is organized as follows. In Section~\ref{section:scope:of:the:work} we outline the scope of the work and provide a more detailed motivation for our investigation. Section~\ref{section:background:formalism} introduces the skew-hermitian Weyl algebra $\hat{A}_1$, summarizing essential definitions, conventions, and notational choices adopted from \plainrefs{Bruschi:Xuereb:2024}. In Section~\ref{section:monomial:generated:algebras} we present a comprehensive glossary of all finite-dimensional and non-abelian Lie subalgberas of the skew-hermitian Weyl algebra that are generated by monomials,
	where we also introduce key concepts, such as so-called generic Commutator Chains that aid in exploring specific structures of subalgebras of $\hat{A}_1$. We provide a list of all non-abelian finite-dimensional Lie subalgebras that contain a free Hamiltonian in Section~\ref{section:lie:algebras:with:free:hamiltonian}, where we do not constraint the algebras to be generated by monomials alone, but allow for arbitrary polynomials to be in the initial generating set. In Section~\ref{section:classification:nilpotent:nonsolvable}, we classify all finite-dimensional  Lie algebras that are either nilpotent or non-solvable and can faithfully be realized in $\hat{A}_1$. Finally, we discuss solvable subalgebras of $\hat{A}_1$, the Igusa condition \plainrefs{Igusa1981}, and the consequences for quantum optimal control theory in Section~\ref{section:considerations}. Section~\ref{sec:conclusion} contains the discussion of our results, as well as the outline of potential future directions.

	\section{Scope of this work\label{section:scope:of:the:work}}
	The main aspiration of this work is to apply the concepts and ideas presented in the foundational work on this topic \plainrefs{Bruschi:Xuereb:2024} to the case of a single bosonic mode, or the quantum harmonic oscillator. We here shed more light on the concrete motivation for this study.
	
	We focus our interest on physical systems composed solely by quantum harmonic oscillators that are each characterized by a frequency $\omega$ together with a pair of creation and annihilation operators $a^\dagg$ and $a$ respectively, which satisfy the canonical commutation relation $[a,a^\dagg]=1$. These systems appear as key elements across diverse domains of physics, such as the modes of the electromagnetic field in quantum field theory \plainrefs{Srednicki:2007}, phononic excitations in lattices in condensed matter physics \plainrefs{Ashcroft:Mermin:1976}, polaritons in dielectrics \plainrefs{Huttner:1992}, plasmons in ionic crystals \plainrefs{Bozhevolnyi:2017}, photonic excitations in cavity quantum electrodynamics \plainrefs{Haroche:Raimond:2006}, as well as phononic and photonic excitations in optomechanical systems \plainrefs{Aspelmeyer:Kippenberg:2014}. Consequently, it is undeniable that gaining insight into the underlying algebraic structure given by the skew-hermitian Weyl algebra is of significant interest. The relevance is further underscored by the recent advancements not only in quantum information science \plainrefs{Adesso:Illuminati:2007:v2,Weedbrook:Pirandola:2012,Adesso:Ragy:2014}, where bosonic systems are crucial to many key applications, but even to novel developments in the interdisciplinary field of relativistic quantum information \plainrefs{Mann:Ralph:2012}.
	
	The dynamics of bosonic systems are generally governed by a Hamiltonian $H$ that encodes the relevant interactions. As outlined in the introduction, closed-form solutions to the time evolution can, in certain cases, be obtained by factorizing the unitary time evolution operator. This factorization approach will be examined in greater detail in the following subsection. Additionally, we will also address the classification of finite-dimensional Lie algebras that can be realized within the complex Weyl algebra $A_{1}$, with particular attention to how these results translate to the real and skew-hermitian setting. This section ends with an outline of the specific goals of the present work, thereby providing a structured framework for the subsequent analysis.
	
	\subsection{Factorization of bosonic dynamics as a guiding principle}
	The dynamics of quantum mechanical systems is given by the one-parameter family $U(t)$ of unitary operators determined by the differential equation $(\mathrm{d}/\mathrm{d}t)\, U(t)=-iH(t)U(t)$ and the initial condition $U(0)=\mathds{1}$, where $H(t)$ denotes the potentially time-dependent hermitian operator known as the Hamiltonian of the system. Here, we want to consider, for sake of simplicity, only the finite-dimensional case. That is, $U(t)$ is a unitary operator on a finite-dimensional Hilbert space.  Note that an operator $O$ is called hermitian if $O^\dagg=O$, and skew-hermitian if  $O^\dagg=-O$. The Hamiltonian $H(t)$ can generally be decomposed into two parts: a time-independent component $H_{\mathrm{d}}$ and a time-dependent one, where the latter can be decomposed further in terms of a linear combination of time-independent hermitian operators $H_j$ with time-dependent real-valued coefficients $u_j(t)$, called \textit{control functions} in the parlance of quantum control. We therefore have
	\begin{align}
		H(t)=H_{\mathrm{d}}+\sum_{j\in\mathcal{J}} u_j(t) H_j,
	\end{align}
	where $\mathcal{J}$ is an index-set and $u_j(t)$ are control functions that are sometimes assumed to be piecewise constant. In our work, we only require that not all controls are always constant. The time-independent term $H_{\mathrm{d}}$ is normally referred to as the \emph{drift} Hamiltonian in the parlance of quantum control, and it represents the part of the system that cannot be externally manipulated. 
	
	The formal solution to the differential equation determining $U(t)$ is generally given by the time-orderd exponential given in \eqref{time:evolution:general:solution}. Rather than working directly with the hermitian Hamiltonian $H(t)$, it is often convenient to consider the associated skew-hermitian operator $iH(t)$ instead. This operator decomposes similarly into a linear combination of time-independent skew-hermitian operators $G_j$ as 
	\begin{align}
		iH(t)=G_{\mathrm{d}}+\sum_{j\in\mathcal{J}}u_j(t)G_j.
	\end{align}
	The choice between working with the skew-hermitian or the hermitian Hamiltonian is largely a matter of convention and does not affect the qualitative structure of the dynamics. In the physics-oriented literature, the Lie algebra $\g$ corresponding to a Lie group $G$ is typically defined as the set of elements $x\in \g$ that satisfy $\exp(ixt)\in G$ for all $t\in\R$, see \plainrefs{Georgi:1999}. In contrast, the more mathematically-oriented literature often defines $\g$ as the set of all elements $x\in \g$ that satisfy $\exp(xt)\in G$ for all $t\in\R$, see \plainrefs{Woit:2017}. Thus, these conventions are related by the identification $\g\leftrightarrow i\g$, which constitutes a Lie-algebra isomorphism with $\phi:\g\to i\g,\,x\mapsto ix$ and $[\phi(x),\phi(y)]_{i\g}=-i[\phi(x),\phi(y)]_\g$ for all $x,y\in\g$. We choose to follow the mathematical convention, and therefore work with skew-hermitian operators.
	
	The useful property that one wishes to have is that the set $\mathcal{G}:=\{G_{\mathrm{d}},G_j\}_{j\in\mathcal{J}}$ generates a finite-dimensional Lie algebra $\g:=\lie{\mathcal{G}}$. This is trivially obeyed in the finite-dimensional setting, but when extended to the infinite-dimensional case necessary, due to the requirements given in \plainrefs{Wei:Norman:1963,Wei:Norman:1964}. This space is constructed starting from the set of generators $\mathcal{G}$ and the bilinear, skew-symmetric, and Jacobi-identity satisfying operation $\mathfrak{G}\supseteq\mathcal{G}$ called \textit{Lie bracket}, and then adding all elements that can be obtained by performing a countable number of Lie brackets and linear combinations using the elements in the initial set $\mathcal{G}$. To be precise, and due to the importance of generating Lie algebras from initial sets of operators for this work, we deem it necessary to rigorously define the term \emph{Lie closure}.
    \begin{definition}[Lie closure]\label{def:Lie:closure}
        Let $\mathcal{G}\subseteq \mathfrak{G}$ be a subset of a Lie algebra $\mathfrak{G}$ equipped with a Lie bracket $[\cdot,\cdot]:\mathfrak{G}\times \mathfrak{G}\to \mathfrak{G}$. The \emph{Lie closure} of $\mathcal{G}$, denoted $\lie{\mathcal{G}}$, is defined as the Lie algebra $$\lie{\mathcal{G}}:=\spn\left\{\bigcup_{j\in\N_{\geq0}}\mathfrak{v}_j\right\},$$ where $\mathfrak{v}_0:=\spn\{\mathcal{G}\}$ and  $\mathfrak{v}_j:=[\mathfrak{v}_{f(j)},\mathfrak{v}_{g(j))}]$ for recursively defined functions $f,g:\N_{\geq1}\to\N_{\geq0}$ with initial values $f(1):=0=:g(1)$ and the update rules:
        \begin{align*}
            f(k+1):=&\left\{\begin{matrix}
                f(k)+1&,\text{ if }f(k)<g(k)\\
                0&,\text{otherwise}
            \end{matrix}\right.,\;&\; g(k+1):=&\left\{\begin{matrix}
                g(k)&,\text{ if }f(k)<g(k)\\
                g(k)+1&,\text{otherwise}
            \end{matrix}\right.,
        \end{align*}
        for all $k\in\N_{\geq1}$.
    \end{definition}
    Recall that the commutator of two subspaces $\mathfrak{v}_1,\mathfrak{v}_2\subseteq\mathfrak{G}$ of a Lie algebra $\mathfrak{G}$ is defined as $[\mathfrak{v}_1,\mathfrak{v}_2]:=\spn\{[x,y]\,\mid\,x\in\mathfrak{v}_1,\,y\in\mathfrak{v}_2\}$ \plainrefs{Knapp:1996}. Moreover, it is easy to verify that this definition is equivalent to commonly employed in the literature \plainrefs{Shore:2020}.

    Without loss of generality, we may now assume that the set $\mathcal{G}$ is linearly independent. Once the construction of the whole Lie algebra $\g$ has been performed, we can apply, e.g., the Gram-Schmidt procedure to construct a basis of $\g$ that is of the form $\{\mathcal{G}_{\mathrm{d}},G_k\}_{k\in\mathcal{K}}$, where $\mathcal{J}\subseteq\mathcal{K}$. In general this means that the number $|\mathcal{J}|$ of initial control elements in the Hamiltonian is less than the number $|\mathcal{K}|$ of basis elements of the generated algebra, since the Lie bracket among elements of $\mathcal{G}$ typically generates additional linearly independent elements. The condition that $\g$ is finite dimensional then translates to the requirement $|\mathcal{K}|<\infty$. Example~\ref{example:xy:interaction} helps better clarify this point.
	
	\begin{tcolorbox}[breakable, colback=Cerulean!3!white,colframe=Cerulean!85!black,title=Example: Hamiltonian with finite factorization]
		\begin{example}\label{example:xy:interaction}
			We consider here the Hamiltonian of two qubits interacting via an $\mathrm{XY}$-type coupling \plainrefs{Liangyu:2021}, also commonly refered to as $\mathrm{XX}$-type coupling \plainrefs{Coden:2021,Dimo:2022}. The Hamiltonian reads
			\begin{align*}
				H(t)=a_1(t)\sigma_{\mathrm{z},1}+a_2(t)\sigma_{\mathrm{z},2}+ J\left(\sigma_{\mathrm{x},1}\sigma_{\mathrm{x},2}+\sigma_{\mathrm{y}1}\sigma_{\mathrm{y},2}\right),
			\end{align*}
			where $\sigma_{j,k}$ are the Pauli operators acting on qubit $k$, and satisfying the commutation relations $[\sigma_{j,p},\sigma_{k,q}]=i\delta_{pq}\sum_{\ell}\epsilon_{jk\ell}\sigma_{\ell,p}$ and anticommutation relations $\{\sigma_{j,p},\sigma_{k,p}\}=2\delta_{jk}\mathds{1},$ with $\epsilon_{jk\ell}$ denoting the totally antisymmetric Levi-Civita symbol, $\delta_{jk}$ the Kronecker delta and $\mathds{1}$ the unity operator. Here, the interaction-operator $J(\sigma_{\mathrm{x},1}\sigma_{\mathrm{x},2}+\sigma_{\mathrm{y},1}\sigma_{\mathrm{y},2})$ constitutes the drift Hamiltonian. The generating set $\mathcal{G}$ of linearly independent skew-hermitian operators is
			\begin{align*}
				\mathcal{G}:=\left\{i\sigma_{\mathrm{z},1},i\sigma_{\mathrm{z},2},i\sigma_{\mathrm{x},1}\sigma_{\mathrm{x},2}+i\sigma_{\mathrm{y},1}\sigma_{\mathrm{y},2}\right\}.
			\end{align*}
			The induced Lie algebra $\g:=\lie{\mathcal{G}}$ is finite dimensional. A basis $\mathcal{F}$ for $\g$ as a vector space includes an additional element, and it reads
			\begin{align*}
				\mathcal{F}=\left\{i\sigma_{\mathrm{z},1},i\sigma_{\mathrm{z},2},i\sigma_{\mathrm{x},1}\sigma_{\mathrm{x},2}+i\sigma_{\mathrm{y},1}\sigma_{\mathrm{y},2},i\sigma_{\mathrm{x},1}\sigma_{\mathrm{y},2}-i\sigma_{\mathrm{y},1}\sigma_{\mathrm{x},2}\right\},
			\end{align*}
			demonstrating that $|\mathcal{G}|<|\mathcal{F}|$. It is straightforward to verify that the $\lie{\mathcal{G}}$ is isomorphic to $\mathfrak{u}(2) = \mathfrak{su}(2)\oplus\mathfrak{u}(1)$.
		\end{example}
	\end{tcolorbox}
	
	The factorization method pioneered by Wei and Norman enables the explicit construction of the time evolution operator $U(t)$, provided that the Lie algebra $\g$ is finite dimensional and the associated Lie group acts on a finite-dimensional space or a Banach space \plainrefs{Wei:Norman:1963,Wei:Norman:1964}. In such cases, one can express $U(t)$ as the product of exponentials involving time-dependent and real scalar functions $f_{\mathrm{d}}(t)$ and $f_k(t)$ for $k\in\mathcal{K}$ as 
	\begin{align}\label{eqn:time:evolution:operator:factorized}
		U(t)=e^{-f_{\mathrm{d}}(t) G_{\mathrm{d}}}\prod_{k\in\mathcal{K}}e^{-f_k(t) G_k}.
	\end{align}      
	In this work we will focus exclusively on finite sets $\mathcal{G}=\{G_j\mid\;0\leq j\leq m\}$ consisting of skew-hermitian polynomials constructed from the bosonic creation and annihilation operators $a^\dagg$ and $a$. A more detailed discussion can be found in Section~\ref{section:background:formalism}. This choice of physical systems introduces a significant difference in the mathematical setting, that is, the Hilbert space $\mathcal{H}$ on which these operators act on is infinite dimensional. This poses additional challenges as compared to the finite-dimensional case, in particular with regard to the controllability of the system. One major complication is the presence of \textit{unbounded operators}, namely linear operators $A$ for which there exists no constant $C$ such that $\| Av\|\leq C\|v\|$ for all $v\in\mathcal{H}$, see \plainrefs{Hall:Lie:quantum:theory:16}. While all linear operators are bounded in the case of finite-dimensional Hilbert spaces, unbounded linear operators are ubiquitous in an infinite-dimensional setting \plainrefs{Bogachev:2020}. A simple example of an unbounded linear operator is, for instance, the bosonic annihilation or creation operator \plainrefs{Bagarello:2007}. An illustrative case of the problem mentioned here can be found in Example~\ref{example:unbounded:operator} below.
	
	The controllability of quantum dynamics is highly dependent on whether the underlying Hilbert space is finite- or infinite-dimensional. In finite-dimensional settings, the Lie algebra $\g$ associated with any Hamiltonian will always be finite-dimensional, and the factorization \eqref{eqn:time:evolution:operator:factorized} can therefore always be obtained. This is not necessarily the case for systems with infinite-dimensional Hilbert spaces. In such settings, the Lie algebra associated with an arbitrary Hamiltonian often becomes infinite dimensional. Consider for example the Hamiltonian $H=\omega_1a_1^\dagg a_1+\omega_2 a_2^\dagg a_2+\omega_3 a_3^\dagg a_3+g_0(t)(a_1a_2a_3+a_1^\dagg a_2^\dagg a_3^\dagg)$, where the operators $a_j,a^\dagg_j$ satisfy the canonical commutation relations $[a_j,a_k^\dagg]=\delta_{jk}$. This Hamiltonian provides a toy model to explain the process of parametric down-conversion \plainrefs{Agust:SPDC:2020}. It is straightforward to verify that the corresponding generating set $\mathcal{G}=\{i\omega_1a_1^\dagg a_1+i\omega_2 a_2^\dagg a_2+i\omega_3 a_3^\dagg a_3,i(a_1a_2a_3+a_1^\dagg a_2^\dagg a_3^\dagg)\}$ generates an infinite-dimensional Lie algebra \plainrefs{Bruschi:Xuereb:2024}. All together, the previous observations should make it clear that addressing the problem of controllability becomes much more complicated in the case of infinite-dimensional Hilbert spaces, since the finiteness of the dimensionality of the Lie algebra must be supplemented with the appropriate subspaces on which operators can act \plainrefs{Huang:Tarn:1983,Lan:2005,Wu:Tarn:2006,Nelson:1959,Bloch:finite:controlability:2010} (refer also to 
	Section~\ref{sec:consequences:control}).
	These and related topics and issues have been extensively explored in the past. Notable results include, among others, conditions for controllability in finite-dimensional systems \plainrefs{JS:1972}, which depend significantly on the presence and form of the drift Hamiltonian. Therefore, since we do not tackle issues of domain and applicability of our work in the context of infinite-dimensional systems, we regard factorization as a motivation and restrict our scrutiny strictly to Lie subalgebras of $\hat{A}_1$.
	
	\begin{tcolorbox}[breakable, colback=Cerulean!3!white,colframe=Cerulean!85!black,title=Example: Unbounded operator]
		\begin{example}\label{example:unbounded:operator}
			Let $\mathcal{H}$ be the Hilbert space of a single quantum-mechanical harmonic oscillator. This space is spanned by the orthonormal basis $\{\ket{n}\}_{n\in\N_{\geq0}}$, defined by the equations $a\ket{n}=\sqrt{n}\ket{n-1}$ and $a^\dagg\ket{n}=\sqrt{n+1}\ket{n+1}$ for all $n\in\N_{\geq0}$ \plainrefs{Cohen:Tannoudji:2020}. Consider now the state
			\begin{align*}
				|\,\varphi\rangle:=\frac{\sqrt{6}}{\pi}\sum_{n=1}^\infty\frac{1}{n}|\,n\rangle\qquad\text{satisfying}\qquad\||\,\varphi\rangle\|^2=\langle\varphi\mid\varphi\rangle=\frac{6}{\pi^2}\sum_{n=1}^\infty\frac{1}{n^2}=1.
			\end{align*}
			This is clearly properly normalized. However, acting with the annihilation operator on it yields:
			\begin{align*}
				\| a|\,\varphi\rangle\|^2=\left\|\frac{\sqrt{6}}{\pi}\sum_{n=1}^\infty\frac{1}{\sqrt{n}}|\,n-1\rangle\right\|^2=\frac{6}{\pi^2}\sum_{n=1}^\infty\frac{1}{n},
			\end{align*}
			which diverges. This illustrates that the annihilation operator is unbounded, as it maps a normalized state to one with infinite norm.
		\end{example}
	\end{tcolorbox}

	\subsection{Factorization of quantum dynamics: a practical use case}
	Before we state the aim of this work we wish to provide a more concrete outlook on the physical motivation for studying this line of research, in line with the discussion present above. As argued in the previous subsection, factorization of the form \eqref{eqn:time:evolution:operator:factorized} has great potential when explicit expressions of the time evolution have to be obtained, or analytical properties of a quantum channel (such as the time evolution) have to be found. 
	
	The Wei-Norman method is the key tool to tackle this issue \plainrefs{Qvarfort:2025}. One starts from the time-evolution operator
	\begin{align}\label{eqn:appendix:time:ordered:exponential}
		U(t)=\overset{\leftarrow}{\mathcal{T}}\exp\left[-i\int_0^t H(\tau)\mathrm{d}\tau\right],
    \end{align}
    which yields the factorization ansatz
    \begin{align}
		\label{ansatz:Wei:Norman}
		U(t)&=\prod_{j=1}^n U_j(t),\qquad\text{ with }\qquad U_j(t)=e^{-f_j(t) g_j}.
	\end{align}
	Here, $g_j$ are linearly independent elements of a finite-dimensional Lie algebra $\g$. In our case, we have $\g\subseteq\hat{A}_1$. The real-valued and time-dependent functions $f_j(t)$ encode the dynamics of the system and need to be determined by the procedure explained below.
	
	We start by taking the time derivative of \eqref{ansatz:Wei:Norman}, we use the expression \eqref{eqn:appendix:time:ordered:exponential}, and perform some algebra as explained in Appendix~\ref{app:factorization} to find
	\begin{align}
		iH(t)=\sum_{j=1}^nu_j(t)g_j=\Dot{f}_1g_1+\Dot{f}_2(t)U_1(t) g_2U_1^\dagg(t)+\ldots+\Dot{f}_n(t) U_1(t)\ldots U_{n-1}(t)g_nU_{n-1}^\dagg(t)\ldots U_1^\dagg(t).
	\end{align}
	Finally, one must compute all similitude relations of the form $U_1(t)\ldots U_{k-1}(t)g_kU_{k-1}^\dagg(t)\ldots U_1^\dagg(t)$ with $k\in\{2,\ldots,n\}$ and equate the coefficients of identical generators $g_j$ on both sides which will be successful only if the Hamiltonian Lie algebra $\g=\lie{\{g_j\}_{j=1}^n}$ is finite dimensional. Our work ties into this factorization method precisely at this stage of the considerations: we are interested in obtaining conditions to determine a priori when the type of Lie algebras just mentioned are finite dimensional or not.
	
	In order to give a more concrete understanding of this problem, we illustrate the factorization method with two simple yet important examples, which will be central to our work. In particular we will consider in the Schr\"odinger algebra $\mathcal{S}$ and the Wigner-Heisenberg algebra $\wh_2$. Explicit and detailed computations can be found by the interested reader 
	in Appendix~\ref{app:factorization}, while we discuss the high-level results now.

	\subsubsection{The Schr\"odinger algebra}
	
	The first algebra that we consider is the Schr\"odinger algebra $\mathcal{S}$, which is a $6$-dimensional non-abelian Lie algebra that is spanned by the basis $\mathcal{G}:=\{g_1:=ia^\dagg a,g_2:=(a-a^\dagg),g_3:=i(a+a^\dagg),g_4:=a^2-(a^\dagg)^2,g_5:=i(a^2+(a^\dagg)^2),g_6:=i\}$ 
	and we point to Equation~\eqref{eqn:def:monomials} for the notation and
	Proposition~\ref{prop:schroedinger:algebra} for details about the Schr\"odinger algebra. 
	We employ the generic Hamiltonian
	\begin{align*}
		H(t)=u_1(t) a^{\dagg}a-i\,u_2(t)(a-a^{\dagg})+u_3(t)(a+a^{\dagg})-i\,u_4(t)(a^2-(a^{\dagg})^2)+u_5(t)(a^2+(a^{\dagg})^2),
	\end{align*}
	where $u_1(t)=\omega$ is standard choice for the harmonic oscillator.
	The ansatz for this case reads
	\begin{align*}
		U(t)=e^{-if_1(t)a^{\dagg}a}e^{-f_2(t)(a-a^{\dagg})}e^{-f_3(t)i(a+a^{\dagg})}e^{-f_4(t)(a^2-a^{\dagg}{}^2)}e^{-f_5(t)i(a^2+a^{\dagg}{}^2)}
	\end{align*}
	modulo an overall complex phase that is irrelevant.
	
	In Appendix~\ref{app:factorization} we compute the similitude relations necessary to obtain the differential equations for the functions $f_j(t)$ as required by the method described above. This is possible as the Schr\"odinger algebra is a finite-dimensional Lie algebra.
	We find: 
	\begin{subequations}\label{Schroedinger:factorization:equations}
		\begin{align}
			u_1(t) &= \dot{f}_1(t) - 2\dot{f}_5(t)\sinh(4f_4(t)), \\[6pt]
			u_2(t) &= \dot{f}_2(t)\cos(f_1(t)) - \dot{f}_3(t)\sin(f_1(t))   - 2\dot{f}_4(t)\bigl(f_2(t)\cos(f_1(t)) + f_3(t)\sin(f_1(t))\bigr) \nonumber \\
			&\quad + 2\dot{f}_5(t)\bigl(f_2(t)e^{-4f_4(t)}\sin(f_1(t)) - f_3(t)e^{4f_4(t)}\cos(f_1(t))\bigr), \\[6pt]
			u_3(t) &= \dot{f}_2(t)\sin(f_1(t)) + \dot{f}_3(t)\cos(f_1(t))   + 2\dot{f}_4(t)\bigl(f_3(t)\cos(f_1(t)) - f_2(t)\sin(f_1(t))\bigr) \nonumber \\
			&\quad - 2\dot{f}_5(t)\bigl(f_2(t)e^{-4f_4(t)}\cos(f_1(t)) + f_3(t)e^{4f_4(t)}\sin(f_1(t))\bigr), \\[6pt]
			u_4(t) &= \dot{f}_4(t)\cos(2f_1(t)) - \dot{f}_5(t)\cosh(4f_4(t))\sin(2f_1(t)), \\[6pt]
			u_5(t) &= \dot{f}_4(t)\sin(2f_1(t)) + \dot{f}_5(t)\cosh(4f_4(t))\cos(2f_1(t)).
		\end{align}
	\end{subequations}
	Finally, the inverted set of equations is:
	\begin{subequations}\label{Schroedinger:factorization:final:equations}
		\begin{align}
			\dot{f}_1(t) &= u_1(t)-2u_4(t)\sin(2f_1(t))\tanh(4f_4(t))+2u_5(t)\cos(2f_1(t))\tanh(4f_4(t)),\\[6pt]
			\dot{f}_2(t) &= u_2(t)\cos(f_1(t))+u_3(t)\sin(f_1(t))
			\nonumber \\
			&\quad-2u_4(t)\Big(f_3(t)\sin(2f_1(t))\big(1+\tanh(4f_4(t))\big)-f_2(t)\cos(2f_1(t))\Big)
			\nonumber \\
			&\quad+2u_5(t)\Big(f_3(t)\cos(2f_1(t))\big(1+\tanh(4f_4(t))\big)+f_2(t)\sin(2f_1(t))\Big), \\[6pt]
			\dot{f}_3(t) &= -u_2(t)\sin(f_1(t))+u_3(t)\cos(f_1(t))
			\nonumber \\
			&\quad-2u_4(t)\Big(f_2(t)\sin(2f_1(t))\big(1-\tanh(4f_4(t))\big)+f_3(t)\cos(2f_1(t))\Big)
			\nonumber \\
			&\quad+2u_5(t)\Big(f_2(t)\cos(2f_1(t))\big(1-\tanh(4f_4(t))\big)-f_3(t)\sin(2f_1(t))\Big), \\[6pt]
			\dot{f}_4(t) &= u_4(t)\cos(2f_1(t))+u_5(t)\sin(2f_1(t)), \\[6pt]
			\dot{f}_5(t) &= -u_4(t)\frac{\sin(2f_1(t))}{\cosh(4f_4(t))}+u_5(t)\frac{\cos(2f_1(t))}{\cosh(4f_4(t))}.
		\end{align}
	\end{subequations}
	These equations can then be solved numerically once the input Hamiltonian functions have been given.

	\subsubsection{The Wigner-Heisenberg algebra of second kind}
	We move on to the case of Wigner-Heisenberg algebra $\wh_2$, which is a $4$-dimensional Lie-algebra spanned by the basis $\mathcal{G}:=\{g_1:=ia^\dagg a,g_2:=a-a^\dagg,g_3:=i(a+a^\dagg),g_4:=i\}$. For more information about this algebra we refer the reader to Definition~\ref{def:Wigner:Heisenberg:algebra:2}. 
	We employ the generic Hamiltonian 
	\begin{align*}
		H(t)=u_1(t) a^\dagg a - i\,u_2(t)(a-a^\dagg)+u_3(t)(a+a^\dagg),
	\end{align*}
	where $u_1(t)=\omega$ is the standard choice for most scenarios.
	The ansatz for this case reads
	\begin{align*}
		U(t)=e^{-if_1(t)a^{\dagg}a}e^{-f_2(t)(a-a^{\dagg})}e^{-f_3(t)i(a+a^{\dagg})}
	\end{align*}
	modulo an irrelevant complex phase.
	
	As done for the previous case, in Appendix~\ref{app:factorization} we compute the similitude relations necessary to obtain the differential equations for the functions $f_j(t)$. We can obtain analytical solutions to the differential equations, thus obtaining 
	\begin{subequations}\label{Hesienberg:factorization:solution}
		\begin{align}
			f_1(t)&=\int_0^t u_1(\tau)\mathrm{d}\tau,\\
			f_2(t)&=\int_0^t\left(\cos(f_1(\tau)) u_2(\tau)+\sin(f_1(\tau)) u_3(\tau)\right) \mathrm{d}\tau,\\
			f_3(t)&=\int_0^t\left(\cos(f_1(\tau)) u_3(\tau)-\sin(f_1(\tau)) u_2(\tau)\right) \mathrm{d}\tau.
		\end{align}
	\end{subequations}
	Notice that it is immediate to see that one obtains \eqref{Hesienberg:factorization:solution} by inverting and integrating \eqref{Schroedinger:factorization:equations} with $u_4(t)=u_5(t)\equiv0$, which immediately also implies $f_4(t)=f_5(t)\equiv0$.

	\subsection{The real skew-hermitian Weyl algebra and the complex Weyl algebra of single mode}
	Beyond the physically motivated goal of identifying finite-dimensional Lie algebras this work is also driven by the need to refine the methods developed in \plainrefs{Bruschi:Xuereb:2024}, and apply them to cases of interest to modern research. The simplest, yet nontrivial, case of interest is that of the single-mode skew-hermitian Weyl algebra. Our aim in this work is to provide a complete classification of all finite-dimensional Lie subalgebras that can be faithfully realized within this algebra. A comparable classification has already been achieved for all non-abelian finite-dimensional Lie algebras that can be realized as a subalgebra of the complex Weyl algebra $A_1$ \plainrefs{Tanasa:2005,TST:2006}. The restriction to non-abelian subalgebras serves to focus attention on non-trivial structures, as any abelian finite-dimensional Lie algebra can trivially be realized in both the complex Weyl algebra $A_1$ and its skew-hermitian counterpart $\hat{A}_{1}$. The non-trivial finite-dimensional subalgebras of the complex Weyl algebra $A_1$ fall into the following six classes:
	
	\begin{enumerate}[label=(\roman*)]
		\item $\slC{2}$, also known as the complex special linear algebra (of $2$ modes),
		\item $\slC{2}\oplus\C$,
		\item $\slC{2}\ltimes\gh_{1,\C}$, where $\gh_{1,\C}$ denotes the complex three-dimensional Heisenberg algebra,
		\item $\mathcal{L}_{n,\C}$ for $n\geq2$, a family of nilpotent Lie algebras of dimension $n+1$,
		\item $\tilde{\mathcal{L}}_{n,\C}$ for $n\geq 2$, a family of solvable Lie algebras of dimension $n+2$,
		\item $\gr_\C(j_1,\ldots,j_n)$ for $0\leq j_1<\ldots<j_n$, a class of infinite families of solvable Lie algebras of dimension $n+1$.
	\end{enumerate}
	The nilpotent Lie algebras $\mathcal{L}_{n,\C}$ admit a basis $\{e_1,\ldots,e_n,e_{n+1}\}$ with non-zero commutation relations $[e_1,e_j]=e_{j+1}$ for $j\in\{2,\ldots,n\}$. Notably $\mathcal{L}_{2,\C}\cong\gh_{1,\C}$. The solvable algebras $\tilde{\mathcal{L}}_{n,\C}$ are isomorphic to a semidirect sum $\C\ltimes\mathcal{L}_{n,\C}$ and their first derived algebra is therefore isomorphic to $\mathcal{L}_{n,\C}$
	(where the Definition~\ref{def:semidirect:product} provides more details on semidirect sums). These algebras always admit a basis $\{e_0,e_1,\ldots,e_n,e_{n+1}\}$ with the non-vanishing commutation relations being $[e_0,e_1]=e_1$, $[e_0,e_j]=-(n+1-j)e_j$, and $[e_1,e_j]=e_{j+1}$ for $j\in\{2,\ldots,n\}$. The algebras $\gr_\C(j_1,\ldots,j_n)$ are isomorphic to the semidirect sum $\C\ltimes\C^n$ if $j_1\neq 0$ and $(\C\ltimes\C^n)\oplus\C$ otherwise. They admit for $j_1\neq 0$ a basis $\{e_0,e_1,\ldots,e_n\}$, where the only non-zero commutation relations are $[e_0,e_k]=-j_k e_k$. These algebras satisfy therefore the isomorphism $\gr_\C(j_1,\ldots,j_n)\cong\gr_\C(mj_1,\ldots,mj_n)$ for all $m\in\N_{>0}$, suggesting a connection to homogeneous coordinates in projective geometry \plainrefs{Graustein:1930}. The first derived algebra of each $\gr_\C(j_1,\ldots,j_n)$ is abelian, and therefore isomorphic to $\C^n$ if $j_1\neq 0$ and $\C^{n-1}$ otherwise.
	
	A natural assumption is that the same classification of finite-dimensional subalgebras also holds the real skew-hermitian case. Indeed, one can construct explicit real realizations of each of the algebra classes discussed above:
	\begin{align*}
		\slR{2}&\cong\left\langle\left\{2i\left(a^\dagg a+\frac{1}{2}\right),a^2-(a^\dagg)^2,i\left(a^2+(a^\dagg)^2\right)\right\}\right\rangle_{\textrm{Lie}},\\
		\slR{2}\oplus\R&\cong\left\langle\left\{2i,ia^\dagg a,a^2-(a^\dagg)^2,i\left(a^2+(a^\dagg)^2\right)\right\}\right\rangle_{\textrm{Lie}},\\
		\slR{2}\ltimes\gh_1&\cong\left\langle\left\{2i,ia^\dagg a,a-a^\dagg,i\left(a+a^\dagg\right),a^2-(a^\dagg)^2,i\left(a^2+(a^\dagg)^2\right)\right\}\right\rangle_{\textrm{Lie}},\\
		\mathcal{L}_n&\cong\left\langle\left\{2i,i\left(a+a^\dagg\right),a-a^\dagg,i\left(a-a^\dagg\right)^2,\ldots,i^{(n-1)\mod 2}\left(a-a^\dagg\right)^n\right\}\right\rangle_{\textrm{Lie}},\\
		\tilde{\mathcal{L}}_n&\cong\left\langle\left\{2i,i\left(a+a^\dagg\right),a^2-(a^\dagg)^2,a-a^\dagg,i\left(a-a^\dagg\right)^2,\ldots,i^{(n-1)\mod 2}\left(a-a^\dagg\right)^n\right\}\right\rangle_{\textrm{Lie}},\\
		\gr(j_1,\ldots,j_n)&\cong\left\langle\left\{a^2-(a^\dagg)^2,i\left(a+a^\dagg\right)^{j_1},\ldots,i\left(a+a^\dagg\right)^{j_n}\right\}\right\rangle_{\textrm{Lie}}.
	\end{align*}
	However, as we will demonstrate in the following, this naive assumption that classification of all real finite-dimensional skew-hermitian subalgebras yields the same algebras as in the complex case does not hold in general. The real skew-hermitian setting introduces subtleties that prevent a direct translation of the complex classification into the real one. The underlying reason for the failure of the ability to map one algebra to the other rests in the fact that the field of real numbers is not algebraically closed. This prevents the existence of certain Lie-algebra isomorphisms that are valid over algebraically closed fields such as the complex numbers $\C$. For instance, the proof of Theorem 4.17 from \plainrefs{TST:2006}---a theorem that classifies every non-nilpotent and solvable algebra that can be realized within $A_1$ and whose derived algebra isn't abelian---relies on the ability to diagonalize a linear map, which can be done over $\C$, but not over $\R$.

 This distinction can be illustrated by comparing the following two Lie algebras $\wh_1=\lie{\{2i,a^2-(a^\dagg)^2,i(a+a^\dagg),a-a^\dagg\}}$ and $\wh_2=\lie{2i,ia^\dagg a, i(a+a^\dagg),a-a^\dagg)\}}$ considered over both the real numbers and the complex numbers. Before analyzing these algebras directly, we first want to explain their qualitative difference. The aforementioned Theorem 4.17 from \plainrefs{TST:2006} states that any finite-dimensional non-nilpotent and solvable subalgebra of $A_1$, whose derived algebra is nilpotent, is itself isomorphic to $\tilde{\mathcal{L}}_{n,\C}$ for the appropriate $n$. To follow the proof in case of the real skew-hermitian Weyl Algebra $\hat{A}_1$, we require the existence of an element $h\in\wh_2$ such that the adjoint map $\operatorname{ad}(h)|_{\wh_2}$ is not nilpotent. Clearly, we can choose $h=ia^\dagg a$. Next, we need to diagonalize $\operatorname{ad}(ia^\dagg a)|_{\gh}=[ia^\dagg a,\cdot]|_{\gh}$. The eigenvalues of this endomorphism are $\lambda_1=\lambda_2=0$, $\lambda_3=i$, and $\lambda_4=-i$ with the corresponding eigenspaces: $\spn\{2i,ia^\dagg a\}$, $\spn\{a^\dagg\}$, and $\spn\{ a\}$. Thus, diagonalizing $\operatorname{ad}(ia^\dagg a)|_{\gh}$ is not possible over the field of real numbers, since some of the eigenvalues are not real, marking the point where the proofs for complex and real Lie algebras diverge.
	
	Here we present a result that clarifies the intuitive reasoning above.
	
	\begin{proposition}\label{prop:example:naive:assumption:wrong}
		The real skew-hermitian Lie algebras $\wh_1$ and $\wh_2$ are not isomorphic over the field of real numbers, but their complexifications are isomorphic.
	\end{proposition}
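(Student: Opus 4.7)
The plan is to distinguish $\wh_1$ and $\wh_2$ over $\R$ by a conjugation-invariant spectral feature of the adjoint representation, and to establish the isomorphism of the complexifications by writing down an explicit basis change that turns a real rotation into a real hyperbolic dilation after adjoining $i$. First I would compute the structure constants in each of the two given bases using only $[a,a^\dagg]=1$. Both algebras turn out to be four-dimensional, share the one-dimensional center $\spn_{\R}\{2i\}$, and contain the three-dimensional Heisenberg subalgebra $\spn_{\R}\{2i,\,i(a+a^\dagg),\,a-a^\dagg\}$ as an ideal. The only qualitative difference is that the fourth generator acts on the two-dimensional quotient of this Heisenberg ideal by the center \emph{hyperbolically} in $\wh_1$ (eigenvalues $\pm 2$) and as a real \emph{rotation} in $\wh_2$ (eigenvalues $\pm i$, visible only after complexification).

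For the non-isomorphism over $\R$, I would use as invariant the family $\{\chi_{\operatorname{ad}(x)}(\lambda)\}_{x}$ of characteristic polynomials of the adjoint maps, since a Lie-algebra isomorphism conjugates the adjoint action of every element and hence preserves each such polynomial. Writing a generic element as $\alpha g_1+\beta g_2+\gamma g_3+\delta g_4$ in either basis, the fact that $2i$ is central produces a zero row and zero column in the matrix of $\operatorname{ad}(x)$, which reduces the $4\times 4$ determinant to a $2\times 2$ one. A short computation yields $\chi_{\operatorname{ad}(x)}(\lambda)=\lambda^{2}(\lambda^{2}-4\beta^{2})$ on $\wh_1$ and $\chi_{\operatorname{ad}(x)}(\lambda)=\lambda^{2}(\lambda^{2}+\beta^{2})$ on $\wh_2$. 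Therefore $\wh_1$ contains an element ($g_2$ itself) whose adjoint has nonzero real eigenvalues, whereas every $\operatorname{ad}(x)$ in $\wh_2$ has spectrum contained in $i\R$; this discrepancy is preserved by any real Lie-algebra isomorphism, so none can exist.

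For the complexifications, I would exhibit an explicit $\C$-linear map $\phi:(\wh_1)_{\C}\to(\wh_2)_{\C}$ by setting
\begin{align*}
\phi(g_1):=2ih_1,\qquad \phi(g_2):=-2ih_2,\qquad \phi(g_3):=h_3-ih_4,\qquad \phi(g_4):=h_3+ih_4,
\end{align*}
where $h_1,\ldots,h_4$ denote the given generators of $\wh_2$ in the stated order. The factor $-2i$ is chosen precisely so that it converts the imaginary eigenvalues $\pm i$ of $\operatorname{ad}(h_2)$ on the complex eigenvectors $h_3\mp ih_4$ into the real eigenvalues $\pm 2$ that characterise the action of $\operatorname{ad}(g_2)$ on $g_3,g_4$. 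The remaining verification $[h_3-ih_4,h_3+ih_4]=2i[h_3,h_4]=-2ih_1=-\phi(g_1)$ matches $[g_3,g_4]=-g_1$, and all other brackets are trivial, so a routine check shows that $\phi$ is an isomorphism of complex Lie algebras.

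The delicate step is clearly the non-isomorphism: a direct structure-constant comparison is basis-dependent and insufficient, so one must commit to a bona fide isomorphism invariant that is sensitive to $\R$ but insensitive to $\C$. The spectrum of $\operatorname{ad}$ fits both requirements, since the dichotomy ``real versus purely imaginary eigenvalues'' is meaningful over $\R$ but collapses once $i$ is adjoined. This is exactly the mechanism alluded to in the paragraphs preceding the proposition, where the failure of the proof of Theorem~4.17 in \plainrefs{TST:2006} over $\R$ is traced to the non-diagonalizability of $\operatorname{ad}(ia^\dagg a)$ on $\wh_2$ by real eigenvectors; my plan realises this general obstruction in the smallest possible concrete instance.
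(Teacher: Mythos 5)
Your proof is correct, and on the non-isomorphism half it takes a genuinely different route from the paper. For the complexifications, the paper does essentially what you do: it exhibits explicit bases ($e_1=1$, $e_2=a^\dagg a$, $e_3=a^\dagg$, $e_4=a$ for $\wh_{2,\C}$ and $b_1=-2i$, $b_2=\tfrac{1}{2}(a^2-(a^\dagg)^2)$, $b_3=-i(a+a^\dagg)$, $b_4=a-a^\dagg$ for $\wh_{1,\C}$) realizing the same structure constants, which is the same mechanism as your map $g_3,g_4\mapsto h_3\mp ih_4$ converting the rotation eigenvectors into hyperbolic ones; your version just composes the two basis changes into one explicit isomorphism. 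For the non-isomorphism over $\R$, however, the paper declines to argue directly and instead identifies $\wh_2\cong A_{4,9}^{a=0}$ and $\wh_1\cong A_{4,8}^{b=-1}$ and cites the four-dimensional classification of \plainrefs{Popovych:2003}, whereas you give a self-contained invariant: the characteristic polynomial of $\operatorname{ad}(x)$ is preserved under any real isomorphism, and your computations $\chi_{\operatorname{ad}(x)}(\lambda)=\lambda^2(\lambda^2-4\beta^2)$ on $\wh_1$ versus $\lambda^2(\lambda^2+\beta^2)$ on $\wh_2$ are correct (the $2\times2$ blocks on the quotient of the Heisenberg ideal by the center are $\operatorname{diag}(2\beta,-2\beta)$ and a $\beta$-rotation generator, respectively). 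Since $\lambda^2-4=\lambda^2+\beta^2$ has no real solution, no isomorphism can exist. This buys a proof that is independent of the external classification and makes the real-versus-complex dichotomy explicit, exactly in the spirit of the paper's surrounding discussion of why Theorem~4.17 of \plainrefs{TST:2006} fails over $\R$; the only blemish is the phrase \enquote{the fourth generator}, which by your stated ordering should be the second generator ($a^2-(a^\dagg)^2$, resp.\ $ia^\dagg a$), consistent with your use of $\beta$ in the characteristic polynomials.
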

	
	\begin{proof}
		We begin with the second part of the statement. Consider the complexifications $\wh_{2,\C}=\wh_2+i\wh_2$ and $\wh_{1,\C}=\wh_1+i\wh_1$. Here, one can choose the following basis elements: For $\wh_{2,\C}$ the elements $e_1=1$, $e_2=a^\dagg a$, $e_3=a^\dagg$, and $e_4=a$; for $\wh_{1\C}$ the elements $b_1=-2i$, $b_2=\frac{1}{2}(a^2-(a^\dagg)^2)$, $b_3=-i(a+a^\dagg)$, and $b_4=a-a^\dagg$. The non-trivial commutator relations are:
		\begin{align*}
			[e_2,e_3]&=e_3,\;&\;[e_2,e_4]&=-e_4,\;&\;[e_3,e_4]&=-e_1,\\
			[b_2,b_3]&=b_3,\;&\;[b_2,b_4]&=-b_4,\;&\;[b_3,b_4]&=-b_1.
		\end{align*}
		This yields the immediate conclusion $\wh_{1,\C}\cong\wh_{2,\C}$. 
		
		To prove the first part of the statement (non-isomorphism over $\R$), one could attempt to construct an explicit Lie-algebra isomorphism $\phi:\wh_1\to\wh_2$ by solving the resulting system of equations. However, this approach is cumbersome and not particular illuminating. Instead, we refer to the literature: $\wh_2\cong A_{4,9}^{a}$ with $a=0$, and $\wh_1\cong A_{4,8}^{b}$ with $b=-1$, and it is known that then $A_{4,9}^{a}\not\cong A_{4,8}^{b}$, see \plainrefs{Popovych:2003}.  
	\end{proof}
	
	It is also worth noting that $\slC{2}$ has three real forms, namely $\su{2}$, $\slR{2}$, and $\slC{2}$ viewed as a real Lie algebra \plainrefs{Gorbatsevich:2017}. Among these, we expect that only $\slR{2}$ can be faithfully realized in $\hat{A}_1$. This expectation is supported by Theorem 4.4 from \plainrefs{Joseph:1972} (see also Theorem~\ref{thm:Joseph:4:4:full} in Appendix~\ref{App:Realizability:Recap}), which prohibits the realization of certain semisimple Lie algebras associated with compact Lie groups. An example of such a Lie algebra is $\su{2}$. Consequently, not only is $\su{2}$ excluded from faithful realizations in $\hat{A}_1$, but also $\slC{2}$, since $\su{2}$ is a subalgebra of $\slC{2}$.

	\subsection{Aim}
	
	The aim of this work is to investigate subalgebras, in particular finite-dimensional single-mode bosonic subalgebras, of the skew-hermitian Weyl algebra $\hat{A}_1$ from a perspective that encompasses ideas from both physics and more abstract mathematics. These algebras naturally arise in the context of quantum dynamics of coupled bosonic systems. Our treatment focuses on sets of monomials, their algebraic properties, and their commutation relations.
	
	The whole approach that we put forward revolves around three central questions:
	\begin{quote}
		\textbf{Q1}: \emph{Can we determine all finite-dimensional Lie algebras that are generated by a finite set $\mathcal{G}$ of monomials from the skew-hermitian Weyl algebra?}
	\end{quote}
	
	\begin{quote}
		\textbf{Q2}: \emph{Can we determine every finite-dimensional subalgebra of $\hat{A}_1$ that contains a free Hamiltonian term of the form $i(a^\dagg a+c)$, with $c\in\mathbb{R}$?}
	\end{quote}
	
	\begin{quote}
		\textbf{Q3}: \emph{Can we provide a complete classification of all non-abelian and finite-dimensional Lie algebras that can be faithfully realized in $\hat{A}_1$?}
	\end{quote}
	
	The first question is addressed comprehensively in Section~\ref{section:monomial:generated:algebras}, culminating in the key result presented in Theorem~\ref{thm:glossary:monomial:genereted}, which provides a complete list of every realization of every non-abelian, monomial-generated, and finite-dimensional Lie subalgebra of $\hat{A}_1$. It is important to note that a \enquote{monomial} in $\hat{A}_1$ is generally not a monomial in the complex Weyl algebra  $A_1$; for a detailed explanation of this distinction, see Section~\ref{section:background:formalism}. 
	The second question is physically motivated by the observation that the free Hamiltonian often appears as (the only) part of the drift in the Hamiltonian. Unlike in the case of Question~Q1, we do not restrict ourselves to specific sets of monomials, but also consider arbitrary polynomials in $\hat{A}_1$. This question is explored in  Section~\ref{section:lie:algebras:with:free:hamiltonian} and, together with the goals of Question~Q1, it provides insight into which conditions should the operators that define the Hamiltonian satisfy in order for the time evolution of the system to be controllable by a finite set of operations. Finally, the third question is inspired by the known classification of all non-abelian finite-dimensional Lie subalgebras of the complex Weyl algebra $A_1$.
	We tackle this question in Section~\ref{section:classification:nilpotent:nonsolvable}, where we succeed in classifying all nilpotent and all non-solvable finite-dimensional Lie algebras that can be faithfully realized in $\hat{A}_1$. However, a complete classification of solvable algebras remains open. Completing this classification would render the nilpotent case redundant, as any nilpotent algebra is, by definition, necessarily also solvable. A schematic overview of our classification strategy is provided in Figure~\ref{fig:sketch:approach}.
	\begin{figure}[tpb]
		\centering
		\includegraphics[width=0.85\linewidth]{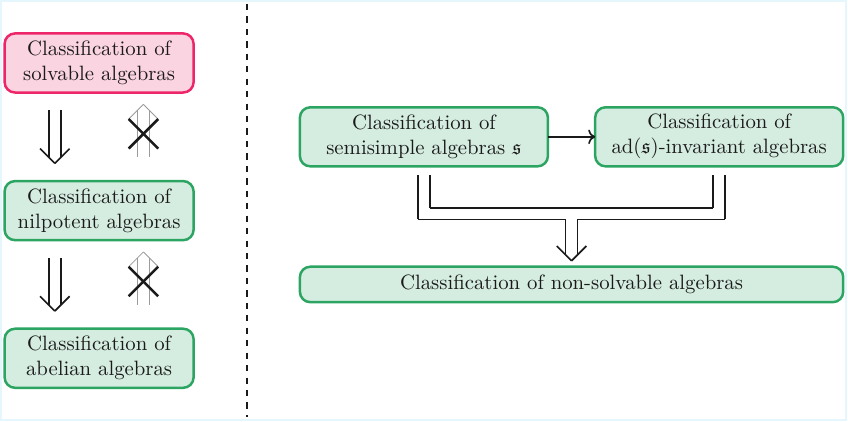}
		\caption{Schematic overview of the classification strategy for finite-dimensional Lie subalgebras of $\hat{A}_1$. The study is divided into two main branches: the classification of solvable Lie algebras (left), and the classification of the non-solvable Lie algebras (right). Double arrows indicate causal relations, while single arrows represent other structural dependencies. A double arrow is crossed out when one classification does not imply the other. Green boxes indicated classifications that have been achieved in this work, while the red boxes highlights those that remain outstanding. The flow-chart immediately shows that a complete classification of all solvable Lie algebras would immediately imply a classification of all nilpotent Lie algebras, since every nilpotent algebra is solvable. The reverse does not hold. Furthermore, there is no exploitable causal relationship between the classification of solvable and non-solvable Lie algebras. In order to achieve the classification of all non-solvable Lie algebras we start by identifying all semisimple Lie algebras $\gs$ that can be faithfully realized in $\hat{A}_1$. This informs the second step, namely, classifying all $\operatorname{ad}$-invariant algebras for such semisimple Lie algebras $\gs$. These two results together yield a complete classification of all non-solvable finite-dimensional Lie algebras that can be faithfully realized in $\hat{A}_1$ via the Levi-Mal'tsev theorem \plainrefs{Kuzmin:1977}.}
		\label{fig:sketch:approach}
	\end{figure}
	
	Investigating these three questions individually sheds light on the overarching question posed at the beginning of this work: \emph{What Hamiltonians admit a finite-dimensional Lie algebra?}

	\section{Background Formalism\label{section:background:formalism}}
	This work builds upon the foundational work on this topic \plainrefs{Bruschi:Xuereb:2024}. Therefore, we will employ the same notation here to maintain consistency. We summarize the most essential concepts and refer the interested reader to peruse the aforementioned work for further details.

	\subsection{The Weyl algebra}
	We consider a single-mode bosonic quantum system with creation and annihilation operators $a^\dagg$ and $a$ satisfying the canonical commutation relation $[a,a^\dagg]=1$. The set of operators $\{a,a^\dagg,1\}$ generate the complex three-dimensional \emph{Heisenberg algebra} $\gh_{1,\C}$ which induces the infinite-dimensional associative single-mode \emph{Weyl algebra} $A_1$ over the field of the complex numbers $\C$, see \plainrefs{Bjoerk:1979,Dixmier:1968,Dixmier:1977}. 
	
	The core ingredients for the following developments are listed here:
	
	\vspace{0.2cm}
	
	\noindent\textbf{Canonical basis}---The Poincaré-Birkhoff-Witt theorem  states that any element of $A_1$ can be expressed as a linear combination of \emph{canonical} monomials $(a^\dagg)^\alpha a^\beta$, see \plainrefs{PBW:Thm}. This choice of ordering, i.e., all creation operators are left of all annihilation operators, is known as the \textit{normal ordering} \plainrefs{Weinberg:1995}. We denote these monomials by $a^\gamma:=a^{(\alpha,\beta)}:=(a^\dagg)^\alpha a^\beta$, with the convention $a^0\equiv1$. The vectors $\gamma$ are covered in the next point.
	
	\noindent\textbf{Multi-indices}---We denote $\gamma=(\alpha,\beta)\in\Np{2}$ with components $\alpha,\beta\in\N_{\geq0}$ as a \emph{multi-index}, where $\Np{2}$ is the set of all non-negative integer vectors of length two. Given two multi-indices $\gamma=(\alpha,\beta)$ and $\gamma'=(\alpha',\beta')$, we define the order $\gamma>\gamma'$ if $\alpha>\alpha'$, or if $\alpha=\alpha'$ and $\beta>\beta'$, see \plainrefs{Bruschi:Xuereb:2024}. The relations $\gamma\geq\gamma'$, and $\gamma=\gamma'$ are defined analogously. A multi-index $\gamma\in\Np{2}$ is \emph{well-ordered} if $\alpha\geq \beta$. We use the notation $\tilde{\gamma}\in\Np{2}$ when $\tilde{\alpha}=\tilde{\beta}$, and write $\tilde{\gamma}\in\N_=^{2}$. We introduce furthermore the multi-indices $\iota_k\in\Np{2}$, defined by the coefficients $(\iota_k)_j=\delta_{jk}$, as well as the multi-index $\tau=\iota_1+\iota_2$\footnote{Note that for purpose of this work, we changed the convention for the multi-indices $\iota_k$. In our previous work, we defined $\iota_k$ as a multi-index from $\Np{n}$ \plainrefs{Bruschi:Xuereb:2024}. We change this convention now, such that $\iota_k$ is a multi-index from $\Np{2n}$ for $n=1$ instead. Thus, in the previous notation one had $\tau_k=(\iota_k,\iota_k)\in\Np{2n}$, whereas now one has $\tau_k=\iota_k+\iota_{n+k}$.}. 
	
	Let $m=c a^\gamma$ be a monomial with $c\in\C\setminus\{0\}$. The multi-degree $\mdeg(m)$ of $m$ is defined as the multi-index $\gamma$, while the degree $\deg(m)$ is given by $\deg(m)=|\gamma|=|\alpha|+|\beta|$, except in the case $\gamma=0$, for which we define $\deg(0)=-\infty$ as done in the literature \plainrefs{Fischer:2005}. The degree of a polynomial $p\in A_1$ is the maximal degree among all monomials in its unique expansion with respect to the canonical basis.
    
	\noindent\textbf{Dagger operation}---The linear anti-automorphism $(\cdot)^\dagg:A_1\to A_1,\;p\mapsto p^\dagg$ shall be defined on the basis elements by $1^\dagg=1$, $(a^\dagg)^\dagg=a$, $(a)^\dagg=a^\dagg$, and its property to reverse the order of multiplication, i.e.,$(m_1m_2)^\dagg=(m_2)^\dagg (m_1)^\dagg$. This induces the map $(\cdot)^\dagg:\Np{2}\to\Np{2},\;\gamma=(\alpha,\beta)\mapsto\gamma^\dagg:=(\beta,\alpha)$ on the set of multi-indices, since $(a^{(\alpha,\beta)})^\dagg=a^{(\beta,\alpha)}$.

	\subsection{The skew-hermitian Weyl algebra}
	The skew-hermitian Weyl algebra, denoted by $\hat{A}_1$, is the real subalgebra of the complex Weyl algebra $A_1$ consisting of all polynomials $p\in A_1$ that satisfy $p^\dagg=-p$. We denote the basis elements by $g^{\gamma _p}_{\sigma _p}$, and we also use the nomenclature \emph{monomials} for generators that are multiplied by an arbitrary non-zero real number $c$\footnote{Usually we employ $c=1$, except in those cases where $\gamma=(0,0,)$, $\gamma=(1,1)$, or explicitly stated. Then, we use other values of $c$. For instance, in the cases where $\gamma=(0,0)$ or $\gamma=(1,1)$, we usually use $c=1/2$.}, while 
	these are also referred to as \emph{generators} in the previous work \plainrefs{Bruschi:Xuereb:2024} and \emph{monomial generators} below. In addition, when multiplied by a non-trivial scalar, they are referred to as \emph{monomials} of $\hat{A}_1$. Note that these are not monomials in the sense of the complex algebra ${A}_{1}$, but rather with respect to the skew-hermitian algebra $\hat{A}_1$. Since we will exclusively consider the skew-hermitian Weyl algebra from this point onward, we will use the term \emph{monomial} to refer specificially to monomials in $\hat{A}_1$, unless otherwise specified. These monomials are defined as follows: 
	\begin{align}\label{eqn:def:monomials}
		g_+^{\gamma_p}:=g_{+}(a^{\gamma_p}):=&i((a^{\gamma_p})^{\dagg}+a^{\gamma_p})=i((a^{\dagg})^{\beta_p}a^{\alpha_p}+(a^{\dagg})^{\alpha_p}a^{\beta_p}),\nonumber\\
		g_-^{\gamma_p}:=g_{-}(a^{\gamma_p}):=&(a^{\gamma_p})^{\dagg}-a^{\gamma_p}=(a^{\dagg})^{\beta_p}a^{\alpha_p}-(a^{\dagg})^{\alpha_p}a^{\beta_p},
	\end{align}where the $a^{\gamma_p}$ are normal-ordered monomials in $A_n$. By construction, we demand $\alpha_p\geq \beta_p$ if $\sigma_p=+$ and $\alpha_p>\beta_p$ if $\sigma_p=-$ (while in the latter case the equality sign is not included since $g_-^{\tilde{\gamma}_p}=(a^{\tilde{\gamma}_p})^{\dagg}-a^{\tilde{\gamma}_p}=0$).
	
	The skew-hermitian Weyl algebra is closed under the Lie bracket, since for any two polynomials $p,p'\in\hat{A}_1$, it holds that  $[p,p']^\dagg=[p'^\dagg,p^\dagg]=-[p,p']$. Moreover, the skew-hermitian Weyl algebra $\hat{A}_1$ can be regarded as an inner product space when equipped with the scalar product $\langle\,\cdot\,|\,\cdot\,\rangle$ defined by the relation $\langle g_\sigma^{(\alpha,\beta)}|g_{\hat{\sigma}}^{(\hat{\alpha},\hat{\beta})}\rangle=\delta_{\sigma\hat{\sigma}}\delta_{\alpha\hat{\alpha}}\delta_{\beta\hat{\beta}}$ on the basis elements.

	\subsection{Decomposition of the skew-hermitian Weyl algebra\label{sec:dec:skew}}
	We now seek to decompose the skew-hermitian Weyl algebra $\hat{A}_1 $ into a direct sum of meaningful subspaces, which has already proven to be a winning strategy for the study of the finiteness of its subalgebras \plainrefs{Bruschi:Xuereb:2024}. We choose to decompose $\hat{A}_1$ according to the following:
	
	\begin{definition}\label{def:vector:spaces:a:1:k}
		Basis sets for the decomposition of $\hat{A}_1 $:
		\begin{align*}
			\mathcal{B}_1^0&:=\left\{i=\frac{1}{2}g_+^0, ia^{\dagg}a=\frac{1}{2}g_+^{\tau}\right\},\;&\;\mathcal{B}_1^1&:=\left\{i(a + a^{\dagg})=g_+^{\iota_1}, a - a^{\dagg}=g_-^{\iota_1}\right\},\;&\;\mathcal{B}_1^2&:=\left\{i(a^2 + (a^{\dagg})^2)=g_+^{2\iota_1}, a^2 - (a^{\dagg})^2=g_-^{2\iota_1}\right\},\\
			\mathcal{B}_1^=&:=\left\{2i(a^{\dagg})^k a^k =g_+^{k\tau} \mid k \geq 2\right\},\;&\;\mathcal{B}_1^\perp&:=\left\{g_{\sigma}^{(\alpha, \beta)} \mid \alpha > \beta \text{ and } \alpha + \beta \geq 3\right\}.
		\end{align*}
		We employ the sets listed here as bases of the vector spaces $\hat{A}_1^K$ for $K\in\{0,1,2,=,\perp\}$, defined by $\hat{A}_1^K:=\spn\{\mathcal{B}_1^K\}$. This, in turn, allows us to write the decomposition
		\begin{align*}
			\hat{A}_1=\hat{A}_1^0\oplus\hat{A}_1^1\oplus\hat{A}_1^2\oplus\hat{A}_1^=\oplus\hat{A}_1^\perp.
		\end{align*}
	\end{definition}
	
	The scalar product $\langle\,\cdot\,|\,\cdot\,\rangle$ defined earlier allows us to define orthogonal projections onto subspaces $\hat{W}\subseteq\hat{A}_1$. Let $\hat{W}$ be a subspace of $\hat{A}_1$ with orthonormal basis $\mathcal{B}_{\hat{W}}$. Then we define the orthogonal projection
	\begin{align*}
		\mathcal{P}_{\hat{W}}:\hat{A}\to\hat{W},
		\quad\text{with action}
		\quad
		v\mapsto\PP{v}{\hat{W}}:=\sum_{b\in\mathcal{B}}\langle v|b\rangle b.
	\end{align*}
	We say that a vector $v\in\hat{A}_1$ has non-vanishing support in a subspace $\hat{W}$ if $\PP{v}{\hat{W}}\neq0$. 

    Since we are interested in generating Lie algebras using only sets of monomials, it is convenient to provide a precise definition for those sets:

    \begin{definition}[Monomial generators]\label{def:gens:mon}
		Starting from distinct single monomials $g_{\sigma_p}^{\gamma_p}\in \hat{A}_1$ indexed by $p\in\mathcal{R}\subseteq\N_{\geq0}$ and a set of non-vanishing coefficients $\{c_p\}_{p\in\mathcal{P}}\subseteq\R\setminus\{0\}$, let
		$
		\mathcal{G}:=\{c_pg_{\sigma_p}^{\gamma_p}\mid\,p\in\mathcal{R}\subseteq\N_{\geq0}\}$ denote a set of \emph{monomial generators} for generating a Lie algebra. 
	\end{definition}

    Distinctness of monomials $g_{\sigma_p}^{\gamma_p}$ within a set of monomial generators $\mathcal{G}$ avoids unnecessary redundancies, since including the same monomial $g_{\sigma_p}^{\gamma_p}$ with different coefficients $c_p$ does clearly not alter the generated Lie algebra $\lie{\mathcal{G}}$. Moreover, the generated Lie algebra also does not rely on the specific non-trivial coefficients of the elements in $\mathcal{G}$. However, in certain cases it is convenient to use coefficients different from one. To address this coefficient ambiguity, first introduced in the definition of monomials themselves, we introduce the following equivalence relation:
    \begin{definition}\label{def:equivalence:of:sets:of:monomial:generators}
        Let $\mathcal{G}=\{c_pg_{\sigma_p}^{\gamma_p}\mid\,p\in\mathcal{P}\subseteq\N_{\geq0}\},\mathcal{G}'=\{\hat{c}_qg_{\hat{\sigma}_q}^{\hat{\gamma}_q}\mid\,q\in\mathcal{Q}\subseteq\N_{\geq0}\}\subseteq\hat{A}_1$ be two sets of monomial generators. We write $\mathcal{G}\sim\mathcal{G}'$ if and only if for every $p\in\mathcal{P}$ there exists exactly one $q\in\mathcal{Q}$ and vice versa such that $g_{\sigma_p}^{\gamma_p}=g_{\hat{\sigma}_q}^{\hat{\gamma}_q}$. 
    \end{definition}
    It is straightforward to verify that the relation \enquote{$\sim$} is reflexive, symmetric, and transitive, and hence an equivalence relation.

    \begin{definition}
        Let \enquote{$\sim$} be the equivalence relation from Definition~\ref{def:equivalence:of:sets:of:monomial:generators}, and let $[\mathcal{G}]_\sim,[\mathcal{G}']_\sim$ denote two equivalence classes with $\mathcal{G}=\{c_pg_{\hat{\sigma}_p}^{\gamma_p}\mid\,p\in\mathcal{P}\}$ and $\mathcal{G}'=\{\hat{c}_qg_{\hat{\sigma}_q}^{\hat{\gamma}_q}\mid\,q\in\mathcal{Q}\}$. We define the following symbols:
        \begin{enumerate}
            \item \textbf{Union:} The union of two equivalence classes is given by the equivalence class $[\mathcal{G}]_\sim\cup_\sim[\mathcal{G}']_\sim:=[\{g_{\sigma_p}^{\gamma_p}\,\mid\,p\in\mathcal{P}\}\cup\{g_{\hat{\sigma}_q}^{\hat{\gamma}_q}\mid\,q\in\mathcal{Q}\}]_\sim$.
            \item \textbf{Intersection:} The intersection of two equivalence classes is given by the equivalence class $[\mathcal{G}]_\sim\cap_\sim[\mathcal{G}']_\sim:=[\{g_{\sigma_p}^{\gamma_p}\,\mid\,p\in\mathcal{P}\}\cap\{g_{\hat{\sigma}_q}^{\hat{\gamma}_q}\mid\,q\in\mathcal{Q}\}]_\sim$. Moreover, the intersection of an equivalence class $[\mathcal{G}]_\sim$ with a subset $\mathcal{A}\subseteq\hat{A}_1$ is given by the equivalence class $[\mathcal{G}]_\sim\cap_{\sim}\mathcal{A}:=[\{g_{\sigma_p}^{\gamma_p}\,\mid\,p\in\mathcal{P}\}\cap\mathcal{A}]_\sim$.
            \item \textbf{Subset:} We write $[\mathcal{G}]_\sim\subseteq_\sim[\mathcal{G}']_\sim$ if and only if for every $p\in\mathcal{P}$, there exists an index $q\in\mathcal{Q}$ such that $g_{\sigma_p}^{\gamma_p}=g_{\hat{\sigma}_q}^{\hat{\gamma}_q}$.
            \item \textbf{Proper subset:} We write $[\mathcal{G}]_\sim\subsetneq_\sim[\mathcal{G}']_\sim$ if and only if for every $p\in\mathcal{P}$, there exists an index $q\in\mathcal{Q}$ such that $g_{\sigma_p}^{\gamma_p}=g_{\hat{\sigma}_q}^{\hat{\gamma}_q}$, and an index $q\in\mathcal{Q}$ such that $g_{\sigma_p}^{\gamma_p}\neq g_{\hat{\sigma}_q}^{\hat{\gamma}_q}$ for all $p\in\mathcal{P}$. 
            \item \textbf{Element:} We write $cg_\sigma^\gamma\in_\sim[\mathcal{G}]_\sim$ if and only if $c\neq 0$ and $g_\sigma^\gamma\in\{g_{\sigma_p}^{\gamma_p}\,\mid\,p\in\mathcal{P}\}$. 
        \end{enumerate}
        Finally, we define $\spn\{[\mathcal{G}]_\sim\}:=\spn\{\mathcal{G}\}$ and $\lie{[\mathcal{G}]}:=\lie{\mathcal{G}}$.
    \end{definition}
    Since each equivalence class $[\mathcal{G}]_\sim$ with $\mathcal{G}=\{c_pg_{\hat{\sigma}_p}^{\gamma_p}\mid\,p\in\mathcal{P}\subseteq\N_{\geq0}\}$ satisfies,  by Definitions~\ref{def:gens:mon} and~\ref{def:equivalence:of:sets:of:monomial:generators}, $\mathcal{G}\sim \{g_{\hat{\sigma}_p}^{\gamma_p}\mid\,p\in\mathcal{P}\subseteq\N_{\geq0}\}$, we often write $\mathcal{G}$ instead of $[\mathcal{G}]_\sim$ and omit the \enquote{$\sim$} indicator in operation, as context makes clear whether sets of equivalence classes are considered.

	\begin{definition}[Partioning monomial generators]\label{def:gens:mon:partition}
		Let
		$
		\mathcal{G}:=\{g_{\sigma_p}^{\gamma_p}\mid\,p\in\mathcal{R}\subseteq\N_{\geq0}\}$ be a set of monomial generators for generating a Lie algebra. The
		set $\mathcal{G}$ is partitioned into disjoint subsets
		\begin{align*}
			\mathcal{G}^K&:=\mathcal{G}\cap \hat{A}_1^K \,\text{ where }\,  K\in\{0,1,2,=,\perp\} \,\text{ and }\, \hat{A}_1^K \,\text{ is defined above.}
		\end{align*}
	\end{definition}

	Definition~\ref{def:gens:mon:partition} 
	partitions these monomial generators into classes with similar algebraic properties which will be exploited in the following sections. Single monomials from the sets $\mathcal{G}^0$ or $\mathcal{G}^=$ can always be written as $g_+^{k\tau}$ for some $k\in\N_{\geq0}$. We will sometimes write $g_+^{\tilde{\gamma}}$ instead of $g_+^{k\tau}$, where $\tilde{\gamma}$ denotes multi-indices form the set $\N_=^2:=\{k\tau\mid\,k\in\N_{\geq0}\}$. This tilde notation will be used exclusively for such multi-indices.

	Before proceeding we introduce the following important notation:
	
	\begin{definition}[Free Hamiltonian]\label{def:free:hamiltonian}
		Polynomials $p\in \hat{A}_1$ of the form $i\omega a^\dagg a+i\phi$, with $\omega>0$ and $\phi\in\R$, are called \emph{free Hamiltonians}. The real coefficient $\omega$ is referred to as \emph{frequency}, and both $\omega$ and $\phi$ may be time-dependent.
	\end{definition}
	
	We now provide a physical interpretation for the vector spaces introduced in Definition~\ref{def:vector:spaces:a:1:k} in order to better understand their role:
	\begin{itemize}
		\item[$\boldsymbol{\hat{A}_1^0}$:] This space is composed of polynomials made from linear combinations of the operators $i$ and $ia^\dagg a$. The former induces an irrelevant (i.e., unobservable) global phase factor in the time evolution operator \eqref{eqn:time:evolution:operator:factorized}, while the latter is the key component of the \emph{free Hamiltonian} as defined above in Definition~\ref{def:free:hamiltonian}.
		In the physics literature, the free Hamiltonian is often written as $H_0=\omega a^\dagg a+\phi$, i.e., in its hermitian form. However, we follow the mathematical convention of working with skew-hermitian Hamiltonians (as already mentioned before), thereby incorporating the factor $i$ from the von Neumann equation into the Hamiltonian itself. The free Hamiltonian drives the free evolution of a harmonic oscillator in the absence of interactions, and it is typically not tunable, i.e., $\omega$ is constant in time. This means that is often considered part of the \emph{drift} term in control theory.
		\item[$\boldsymbol{\hat{A}_1^1}$:] This space is spanned by two elements $i(a{+}a^\dagg)$ and $a{-}a^\dagg$. The exponentials of these operators generate the \emph{Weyl displacement operator} $D(\alpha{+}i\beta):=\exp(-\alpha(a{-}a^\dagg)+i\beta(a{+}a^\dagg))$, which result in the coherent states $|\alpha{+}i\beta\rangle:=D(\alpha{+}i\beta)\ket{0}$ when applied to the vacuum state $\ket{0}$ and is a central tool of quantum optics \plainrefs{Adesso:Ragy:2014,KEGGZ:2023}.
		\item[$\boldsymbol{\hat{A}_1^2}$:] This space is spanned by the two elements $i(a^2+(a^\dagg)^2)$ and $a^2-(a^\dagg)^2$, known as single-mode squeezing operators in quantum optics \plainrefs{Adesso:Ragy:2014}. These operators do not commute with the (skew-Hermitian) number operator $ia^\dagg a$, and are therefore classified as active transformations, since they induce a change in the total number of excitations present in the system.
		\item[$\boldsymbol{\hat{A}_1^=}$:] This space includes all higher-order nonlinearties of the form $i (a^\dagg)^n a^n$ with $n\geq2$, such as the Kerr-nonlinearity $i(a^\dagg)^2a^2$, which plays a crucial role in many physical systems including Josephson junctions  \plainrefs{Weiss:2015,Larson:2021}  or Fabry-Pérot cavities \plainrefs{Gong:2009}.
		\item[$\boldsymbol{\hat{A}_1^\perp}$:] This space includes all remaining nonlinearties, that is, those operators with $\gamma$ such that $|\gamma|\geq3$ and $\alpha\neq\beta$. An example is $g_-^{(3,0)}=a^3-(a^\dagg)^3$.
	\end{itemize}

	\subsection{Basic definitions}
	
	We conclude this section by introducing several basic maps and concepts that will be used extensively throughout this work. These are presented abstractly without immediate reference to their applications. Their significance will become apparent when properly contextualized. 
	
	\begin{definition}[Basic maps]\label{def:basic:maps}
		Let $\gamma=(\alpha,\beta)\in\Np{2}$ be a (not necessarily well-ordered) multi-index, where $\alpha,\beta$ are two non-negative integers. We introduce:
		\begin{enumerate}
			\item The binary operation $\circ:\Np{2}\times\Np{2}\to\Np{2}$ defined as the element-wise multiplication $\gamma\circ\gamma'=(\alpha,\beta)\circ(\alpha',\beta')\mapsto(\alpha\alpha',\beta\beta')$.
			\item The map $\Theta:\Np{2}\to\Np{2}$ defined by $\Theta(\gamma)=\gamma$ if $\gamma\geq\gamma^\dagg$ and $\Theta(\gamma)=\gamma^\dagg$ otherwise.
			\item The map $E:\Np{2}\to\{+1,-1\}$ defined by $E(\gamma)=1$ if $\gamma\geq\gamma^\dagg$ and $E(\gamma)=-1$ otherwise.
			\item The map $\chi:\Np{2}\to\Z,\gamma\mapsto\chimap{\gamma}=\chimap{(\alpha,\beta)}:=\alpha-\beta$.
		\end{enumerate}
	\end{definition}
	
	\begin{definition}[Center and centralizer]
		Let $T\subseteq\hat{A}_1$ and $S\subseteq\hat{A}_1$. The \emph{centralizer} of $T$ with respect to $S$ is denoted by $C_S(T)$ and is defined as $$C_S(T):=\{s\in S\,\mid\,[s,t]=0\text{ for all }t\in T\}.$$ If $T=\{t\}$ is a singleton, we write $C_S(t)$ instead of $C_S(\{t\})$. The \emph{center} of a subalgebra $\g\subseteq\hat{A}_1$, denoted $\mathcal{Z}(\g)$, is defined as: $\mathcal{Z}(\g):=C_\g(\g)$.
	\end{definition}

	\begin{definition}\label{def:equivalence:relation:polynomials}
		Let $x,y\in\hat{A}_1$ be two polynomials. We write $x\upto{d} y$ for $d\in\N_{\geq0}$ if and only if $\deg(x{-}y)<d$. This relation was introduced and extensively used in the foundational work   \plainrefs{Bruschi:Xuereb:2024}.
	\end{definition}
	
	It is straightforward to verify that Definition~\ref{def:equivalence:relation:polynomials} describes an equivalence relation that identifies two polynomials $x$ and $y$ that have the exact same elements of degree $\geq d$. Example~\ref{exa:highlighting:equivalence:relation} highlights the fact that two polynomials must not be equal to satisfy this equivalence relation. On the contrary, this equivalence relation allows us to add terms of lower degree to the terms without changing the validity of the relation.
	\begin{tcolorbox}[breakable, colback=Cerulean!3!white,colframe=Cerulean!85!black,title=Example: Equivalnce up to degree $d$]
		\begin{example}\label{exa:highlighting:equivalence:relation}
			Let $x= i(a^\dagg a)^2$, $y=i(a^\dagg)^2 a^2$, and $z=i(a^2+(a^\dagg)^2)$. Then $x\upto{4}y$ and $x\upto{3}y$, since $x=i(a^\dagg)^2a^2+ia^\dagg a$. One has furthermore $x\upto{4}x+z$, since $\deg(z)=2$, which highlights the fact that for any $p,q\in \hat{A}_1$, one has $p\upto{\deg(p)}p+q$ if $\deg(q)<\deg(p)$. 
		\end{example}
		
	\end{tcolorbox}
	
	We now present several properties of the objects defined above. The corresponding proofs are omitted, as they follow directly from standard arguments. We invite the interested reader to verify them independently.
	
	\begin{proposition}\label{prop:basic:maps}
		Let $\gamma,\gamma'\in\Np{2}$ be two well-ordered multi-indices. Then: (a) $\chi:\Np{2}\to\N_{\geq0}$ is a linear map; (b) $\N_=^2=\ker\chi$; (c) $\gamma+\gamma^\dagg\in \N_=^2$; (d) $\chimap{\gamma\circ\gamma'}=\chimap{\gamma'\circ\gamma}$; (e) $\chimap{\gamma^\dagg}=-\chimap{\gamma}$.
	\end{proposition}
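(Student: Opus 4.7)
The plan is to verify each of (a)--(e) by directly unwinding Definition~\ref{def:basic:maps} together with the multi-index conventions; no step requires a substantive idea, so the main task is simply to record the correct interpretations of the objects involved.

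First I would dispatch (a) by checking additivity and compatibility with non-negative integer scaling directly from the formula $\chimap{(\alpha,\beta)}=\alpha-\beta$. For $\gamma=(\alpha,\beta)$ and $\gamma'=(\alpha',\beta')$ both well-ordered, a one-line computation gives $\chimap{\gamma+\gamma'}=(\alpha+\alpha')-(\beta+\beta')=\chimap{\gamma}+\chimap{\gamma'}$, and analogously $\chimap{c\gamma}=c\chimap{\gamma}$ for $c\in\N_{\geq 0}$. Well-ordering forces $\chimap{\gamma}\geq 0$, which is precisely what promotes the codomain from $\Z$ (as originally stated in Definition~\ref{def:basic:maps}) to $\N_{\geq 0}$. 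Statement (b) follows immediately from the equivalence $\chimap{(\alpha,\beta)}=0\iff\alpha=\beta$, which is exactly the defining condition for $\N_=^2=\{k\tau\mid k\in\N_{\geq 0}\}$ with $\tau=(1,1)$.

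For (c), a direct computation gives $\gamma+\gamma^\dagg=(\alpha,\beta)+(\beta,\alpha)=(\alpha+\beta)\tau$, which manifestly lies in $\N_=^2$. Part (d) reduces to the commutativity of integer multiplication applied component-wise: $\gamma\circ\gamma'=(\alpha\alpha',\beta\beta')=(\alpha'\alpha,\beta'\beta)=\gamma'\circ\gamma$, so applying $\chi$ to either side yields the claim. Finally (e) is immediate from $\chimap{\gamma^\dagg}=\chimap{(\beta,\alpha)}=\beta-\alpha=-\chimap{\gamma}$.

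The only subtlety worth flagging is the mild terminological point in (a): since $\Np{2}$ is a commutative monoid rather than a vector space, \enquote{linear} should be interpreted as additive and compatible with non-negative integer scalar multiplication, both of which hold for $\chi$. There is no genuine obstacle anywhere in the proof; this proposition is essentially a bookkeeping statement that fixes notation for use in the subsequent sections.
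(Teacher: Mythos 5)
Your verification is correct, and it is exactly the kind of routine unwinding of Definition~\ref{def:basic:maps} that the paper has in mind when it omits the proof as following \enquote{directly from standard arguments}. The remark that \enquote{linear} must be read in the additive-monoid sense over $\Np{2}$ is a fair and accurate clarification, and all five computations check out.
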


	\begin{proposition}\label{prop:equivalence:relation:linear:combination}
		Let $x_1,x_2,y_1,y_2\in\hat{A}_1$ with $x_1\upto{d_1}y_1$ and $x_2\upto{d_2}y_2$. Then $x_1+x_2\upto{d}y_1+y_2$, where $d=\max\{d_1,d_2\}.$
	\end{proposition}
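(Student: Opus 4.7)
The plan is direct and rests on the elementary fact that the degree function on $\hat{A}_1$ satisfies $\deg(p+q)\leq\max\{\deg(p),\deg(q)\}$ (with the convention $\deg(0)=-\infty$ fitting naturally here, so that the bound continues to hold even when $p+q=0$ or when one summand vanishes). This sub-additivity is inherited from the analogous bound on the complex Weyl algebra $A_1$, since $\hat{A}_1$ is a real subspace and the degree function is defined in terms of the canonical basis of $A_1$.

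First I would rewrite the relevant difference as
\begin{align*}
(x_1+x_2)-(y_1+y_2) \;=\; (x_1-y_1) + (x_2-y_2).
\end{align*}
Then, applying the sub-additivity of the degree,
\begin{align*}
\deg\bigl((x_1+x_2)-(y_1+y_2)\bigr) \;\leq\; \max\bigl\{\deg(x_1-y_1),\,\deg(x_2-y_2)\bigr\}.
\end{align*}
By the hypotheses $x_1\upto{d_1}y_1$ and $x_2\upto{d_2}y_2$, Definition~\ref{def:equivalence:relation:polynomials} gives $\deg(x_1-y_1)<d_1$ and $\deg(x_2-y_2)<d_2$. Since $d_1\leq d$ and $d_2\leq d$ where $d=\max\{d_1,d_2\}$, both $\deg(x_j-y_j)<d$ for $j\in\{1,2\}$, and therefore the maximum on the right-hand side is strictly smaller than $d$.

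Combining the two bounds yields $\deg\bigl((x_1+x_2)-(y_1+y_2)\bigr)<d$, which by Definition~\ref{def:equivalence:relation:polynomials} is precisely the statement $x_1+x_2\upto{d}y_1+y_2$. There is no genuine obstacle here; the only subtlety worth a brief remark is to justify the sub-additivity of $\deg$ in the degenerate case where $(x_1-y_1)+(x_2-y_2)=0$, which is handled cleanly by the convention $\deg(0)=-\infty$ adopted in Section~\ref{section:background:formalism}.
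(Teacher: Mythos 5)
Your proof is correct and follows essentially the same route as the paper's: decompose the difference as $(x_1-y_1)+(x_2-y_2)$, bound its degree by the maximum of the two summands' degrees, and invoke Definition~\ref{def:equivalence:relation:polynomials}. If anything, your use of the inequality $\deg(p+q)\leq\max\{\deg(p),\deg(q)\}$ is slightly more careful than the paper's, which writes an equality that can fail under cancellation but is not needed for the conclusion.
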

	
	\begin{proof}
		One computes $\deg(x_1+x_2-(y_1{+}y_2))=\max\{\deg(x_1{-}y_1),\deg(x_2{-}y_2)\}<\max\{d_1,d_2\}=d$. Hence, by Definition~\ref{def:equivalence:relation:polynomials}, one has $x_1+x_2\upto{d}y_1+y_2$.
	\end{proof}
	
	The property stated in Proposition~\ref{prop:equivalence:relation:linear:combination} is essential when working with polynomials, as it allows to deduce equivalence relations for entire polynomials from those of their individual monomial components. Due to its ubiquity and simplicity, this fact is often used without explicit reference. 
	
	\begin{proposition}\label{prop:equivalnce:relation:commutator}
		Let $g_\sigma^\gamma\in\hat{A}_1$ be a monomial, and let $u,u'\in\hat{A}_1$ be two arbitrary polynomials with unique decompositions 
		\begin{align*}
			u=\sum_{p\in\mathcal{P}}c_pg_{\sigma_p}^{\gamma_p}+\sum_{q\in\mathcal{Q}}\hat{c}_qg_{\hat{\sigma}_q}^{\hat{\gamma}_q}\quad\text{ and }\quad u'=\sum_{p'\in\mathcal{P}'}c_{p'}'g_{\sigma_{p'}'}^{\gamma_{p'}'}+\sum_{q'\in\mathcal{Q}'}\hat{c}_{q'}'g_{\hat{\sigma}_{q'}'}^{\hat{\gamma}_{q'}'},
		\end{align*}
		where $c_p,\hat{c}_q,c_{p'}',\hat{c}_{q'}'\neq0$ for all $p\in\mathcal{P},q\in\mathcal{Q},p'\in\mathcal{P}',q'\in\mathcal{Q}'$, $|\gamma_p|=x=\deg(u)>|\hat{\gamma}_q|$ for all $p\in \mathcal{P},q\in\mathcal{Q}$, and $|\gamma_{p'}'|=x'=\deg(u')>|\hat{\gamma}_{q'}'|$ for all $p'\in\mathcal{P}',q'\in\mathcal{Q}'$. Then:
		\begin{subequations}
			\begin{align}\label{first:claim:here}
				[u,g_\sigma^\gamma]\upto{d}\sum_{p\in\mathcal{P}}c_p[g_{\sigma_p}^{\gamma_p},g_\sigma^\gamma],
			\end{align}
			\begin{align}\label{second:claim:here}
				[u,u']\upto{f}\sum_{p\in\mathcal{P}}\sum_{p'\in\mathcal{P}'}c_pc_{p'}'[g_{\sigma_p}^{\gamma_p},g_{\sigma_{p'}}^{\gamma_{p'}'}],
			\end{align}
		\end{subequations}	
		where $d=|\gamma|+x-2$ and $f=x+x'-2$.
	\end{proposition}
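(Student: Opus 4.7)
The plan is to exploit bilinearity of the Lie bracket together with the standard degree-reducing property of commutators in the Weyl algebra. The essential auxiliary fact I would first establish (or invoke) is: for any two monomials $g_\tau^\mu, g_{\tau'}^{\mu'}\in\hat{A}_1$, one has $\deg([g_\tau^\mu,g_{\tau'}^{\mu'}])\leq|\mu|+|\mu'|-2$. This follows from normal-ordering in $A_1$: for canonical monomials $(a^\dagg)^\alpha a^\beta$ and $(a^\dagg)^{\alpha'} a^{\beta'}$, both products $pq$ and $qp$ have the same leading normal-ordered term $(a^\dagg)^{\alpha+\alpha'}a^{\beta+\beta'}$, which cancels in the commutator, leaving a polynomial of degree at most $|\mu|+|\mu'|-2$. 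Since each element of $\hat{A}_1$ of the form $g_\sigma^\gamma$ is (up to a sign and factor of $i$) a sum of two canonical monomials of the same total degree, the bound transfers verbatim to $\hat{A}_1$.

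For claim~\eqref{first:claim:here}, bilinearity of the commutator yields
\begin{align*}
[u,g_\sigma^\gamma]=\sum_{p\in\mathcal{P}}c_p[g_{\sigma_p}^{\gamma_p},g_\sigma^\gamma]+\sum_{q\in\mathcal{Q}}\hat{c}_q[g_{\hat{\sigma}_q}^{\hat{\gamma}_q},g_\sigma^\gamma].
\end{align*}
Since $|\hat{\gamma}_q|<x$ for every $q\in\mathcal{Q}$, the auxiliary bound gives $\deg([g_{\hat{\sigma}_q}^{\hat{\gamma}_q},g_\sigma^\gamma])\leq|\hat{\gamma}_q|+|\gamma|-2<x+|\gamma|-2=d$. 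Iterating Proposition~\ref{prop:equivalence:relation:linear:combination}, the second sum has degree strictly less than $d$, and Definition~\ref{def:equivalence:relation:polynomials} gives the claimed equivalence.

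For claim~\eqref{second:claim:here}, expanding $[u,u']$ bilinearly produces four groups of terms: the \textbf{top--top} sum $\sum_{p,p'}c_pc_{p'}'[g_{\sigma_p}^{\gamma_p},g_{\sigma_{p'}'}^{\gamma_{p'}'}]$, which is exactly the right-hand side of \eqref{second:claim:here}, plus three \textbf{mixed} sums (top--low, low--top, low--low) in which at least one of the two monomials being commuted has degree strictly less than $x$ or $x'$. By the same degree bound, every commutator appearing in any of the three mixed sums has degree at most $x+x'-3<f$. Hence the mixed terms collectively have degree $<f$, and Proposition~\ref{prop:equivalence:relation:linear:combination} again delivers the stated equivalence.

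The only non-bookkeeping step is the auxiliary degree-reduction lemma for commutators of monomials; once that is secured, the rest of the proof is a routine bilinearity expansion combined with the fact that lower-degree inputs yield strictly lower-degree commutators. I do not anticipate any genuinely hard step.
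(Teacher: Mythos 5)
Your proof is correct and follows essentially the same route as the paper's: bilinearity of the commutator, the degree bound $\deg([g_\tau^\mu,g_{\tau'}^{\mu'}])\leq|\mu|+|\mu'|-2$ (which the paper imports as Proposition~13 of the foundational work rather than rederiving via normal-ordering cancellation, as you do), and the observation that every cross term involving a lower-degree monomial falls strictly below $d$ respectively $f$. The only cosmetic difference is that you organize the second claim as a single four-way split while the paper applies the first claim in two stages; the content is identical.
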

	
	\begin{proof}
		The first claim \eqref{first:claim:here} follows from Definition~\ref{def:equivalence:relation:polynomials} and Proposition 13 in \plainrefs{Bruschi:Xuereb:2024}. We compute:
		\begin{align*}
			\deg\left([u,g_\sigma^\gamma]-\sum_{p\in\mathcal{P}}c_p[g_{\sigma_p}^{\gamma_p},g_\sigma^\gamma]\right)=\deg\left(\sum_{q\in\mathcal{Q}}\hat{c}_q[g_{\hat{\sigma}_q}^{\hat{\gamma}_q},g_\sigma^\gamma]\right)\leq\max_{q\in\mathcal{Q}}\{|\hat{\gamma}_q|\}+|\gamma|-2<x+|\gamma|-2=d.
		\end{align*}
		For convenience, the indices $p$, $q$, $p'$, and $q'$ shall now only be taken from the index-sets $\mathcal{P}$, $\mathcal{Q}$, $\mathcal{P}'$, and $\mathcal{Q}'$ respectively. To prove the second claim \eqref{second:claim:here} we compute:
		\begin{align*}
			[u,u']=\sum_pc_p[g_{\sigma_p}^{\gamma_p},u']+\sum_q\hat{c}_q[g_{\hat{\sigma}_q}^{\hat{\gamma}_q},u']\upto{f}\sum_{p,p'}c_pc_{p'}'[g_{\sigma_p}^{\gamma_p},g_{\sigma_{p'}}^{\gamma_{p'}'}]+\sum_{q,p'}\hat{c}_qc_{p'}'[g_{\hat{\sigma}_q}^{\hat{\gamma}_q},g_{\sigma_{p'}}^{\gamma_{p'}'}],
		\end{align*}
		where we used the linearity of the commutator and the first part of the statement, as well as Proposition~\ref{prop:equivalence:relation:linear:combination} by which $f=\max\{\max_{p,p'}\{|\gamma_p|+|\gamma_{p'}'|-2\},\max_{q,p'}\{|\hat{\gamma}_q|+|\gamma_{p'}'|-2\}\}=x+x'-2$. The simple computation
		\begin{align*}
			\deg\left(\sum_{p,p'}c_pc_{p'}'[g_{\sigma_p}^{\gamma_p},g_{\sigma_{p'}}^{\gamma_{p'}'}]-\sum_{p,p'}c_pc_{p'}'[g_{\sigma_p}^{\gamma_p},g_{\sigma_{p'}}^{\gamma_{p'}'}]-\sum_{q,p'}\hat{c}_qc_{p'}'[g_{\hat{\sigma}_q}^{\hat{\gamma}_q},g_{\sigma_{p'}}^{\gamma_{p'}'}]\right)&=\deg\left(\sum_{q,p'}\hat{c}_qc_{p'}'[g_{\hat{\sigma}_q}^{\hat{\gamma}_q},g_{\sigma_{p'}}^{\gamma_{p'}'}]\right)\leq\max_{q,p'}\{|\hat{\gamma}_q|+|\gamma_{p'}'|-2\}<f
		\end{align*}
		completes the proof.
	\end{proof}

	\subsection{Commutators of monomials}
	Before proceeding with the study of the Lie algebras, it is useful to derive an explicit expression for the commutators of two arbitrary monomials in $\hat{A}_1$.
	
	\begin{lemma}\label{lem:commutator:form}
		Let $g_\sigma^\gamma,g_{\hat{\sigma}}^{\hat{\gamma}}\in\hat{A}_1$ be two arbitrary monomials. Then, to leading order in the (maximal) degree $d=|\gamma|+|\hat{\gamma}|-2$, the commutator of $g_\sigma^\gamma$ and $g_{\hat{\sigma}}^{\hat{\gamma}}\in\hat{A}_1$ satisfies:
		\begin{align*}
			[g_+^{\hat{\gamma}},g_-^{\gamma}]&\upto{d}\chimap{\hat{\gamma}\circ\gamma^\dagg}g_{+}^{\gamma+\hat{\gamma}-\tau}-\chimap{\hat{\gamma}\circ\gamma}g_+^{\Theta(\gamma+\hat{\gamma}^\dagg)-\tau},\;&\;[g_+^{\hat{\gamma}},g_+^{\hat{\gamma}}]&\upto{d}-\chimap{\hat{\gamma}\circ\gamma^\dagg}g_-^{\gamma+\hat{\gamma}-\tau}-\chimap{\hat{\gamma}\circ\gamma}E(\hat{\gamma}+\gamma^\dagg)g_-^{\Theta(\gamma+\hat{\gamma}^\dagg)-\tau},\\
			[g_-^{\hat{\gamma}},g_+^\gamma]&\upto{d}\chimap{\hat{\gamma}\circ\gamma^\dagg}g_+^{\gamma+\hat{\gamma}^\dagg-\tau}+\chimap{\hat{\gamma}\circ\gamma}g_+^{\Theta(\gamma+\hat{\gamma}^\dagg)-\tau},\;&\;[g_-^{\hat{\gamma}},g_-^\gamma]&\upto{d}\chimap{\hat{\gamma}\circ\gamma^\dagg}g_-^{\gamma+\hat{\gamma}-\tau}-\chimap{\hat{\gamma}\circ\gamma}E(\hat{\gamma}+\gamma^\dagg)g_-^{\Theta(\gamma+\hat{\gamma}^\dagg)-\tau}.
		\end{align*}        
		These four possible outcomes can be summarized in the following expression:
		\begin{align*}
			[g_{\hat{\sigma}}^{\hat{\gamma}},g_\sigma^\gamma]\upto{d}\hat{\epsilon}\left(\hat{\alpha}\beta-\hat{\beta}\alpha\right)g_{-\sigma\hat{\sigma}}^{\gamma+\hat{\gamma}-\tau}+\hat{\epsilon}'\left(\hat{\alpha}\alpha-\hat{\beta}\beta\right)g_{-\sigma\hat{\sigma}}^{\Theta(\gamma+\hat{\gamma}^\dagg)-\tau}=\hat{\epsilon}\chimap{\hat{\gamma}\circ\gamma^\dagg}g_{-\sigma\hat{\sigma}}^{\gamma+\hat{\gamma}-\tau}+\hat{\epsilon}'\chimap{\hat{\gamma}\circ\gamma}g_{-\sigma\hat{\sigma}}^{\Theta(\gamma+\hat{\gamma}^\dagg)-\tau},
		\end{align*}
		$\hat{\epsilon},\hat{\epsilon}'\in\{\pm\}$ are sign-factors depending on $\sigma$, $\hat{\sigma}$, and the result of $\Theta(\gamma+\hat{\gamma}^\dagg)$.\footnote{The general form of the expression above for $\hat{A}_n$ has been presented in the literature \plainrefs{Bruschi:Xuereb:2024}.}
	\end{lemma}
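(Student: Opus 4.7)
The plan is to reduce the commutator of two skew-hermitian monomials to four commutators of canonical normal-ordered monomials $a^\mu=(a^\dagg)^{\alpha_\mu}a^{\beta_\mu}$ in the complex Weyl algebra $A_1$, apply a leading-order commutator formula in $A_1$, and then reassemble the result as $g_\pm$ monomials in $\hat{A}_1$. Concretely, I would write
$$g_\sigma^\gamma = c_\sigma\bigl(a^{\gamma^\dagg}+\sigma\,a^\gamma\bigr), \qquad c_+=i,\ c_-=1,\ \sigma\in\{+1,-1\},$$
and analogously for $g_{\hat\sigma}^{\hat\gamma}$. Bilinearity of the bracket then expands $[g_{\hat\sigma}^{\hat\gamma},g_\sigma^\gamma]$ into four cross-commutators, each weighted by $c_{\hat\sigma}c_\sigma$ and signs coming from $\sigma,\hat\sigma$.

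The second step is to invoke (or quickly re-derive from iterated use of the canonical commutation relation) the leading-order identity
$$[a^{\gamma_1},a^{\gamma_2}]\upto{|\gamma_1|+|\gamma_2|-2}(\beta_1\alpha_2-\alpha_1\beta_2)\,a^{\gamma_1+\gamma_2-\tau},$$
which is Proposition~13 of \plainrefs{Bruschi:Xuereb:2024} specialized to the single-mode case. Applying it to each of the four cross-commutators, and using $\chimap{\hat\gamma\circ\gamma^\dagg}=\hat\alpha\beta-\hat\beta\alpha$ and $\chimap{\hat\gamma\circ\gamma}=\hat\alpha\alpha-\hat\beta\beta$ from Definition~\ref{def:basic:maps}, yields four terms supported on the multi-indices $\gamma+\hat\gamma-\tau$, $(\gamma+\hat\gamma)^\dagg-\tau$, $\hat\gamma^\dagg+\gamma-\tau$, and $\hat\gamma+\gamma^\dagg-\tau$.

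The key observation for the reassembly is that these four multi-indices split into two dagger-conjugate pairs. In the pair $\{\gamma+\hat\gamma-\tau,\,(\gamma+\hat\gamma)^\dagg-\tau\}$ the first entry is already well-ordered (sum of well-ordered multi-indices), so combining its two contributions with the phase $c_\sigma c_{\hat\sigma}$ produces, up to sign, either $g_+^{\gamma+\hat\gamma-\tau}$ or $g_-^{\gamma+\hat\gamma-\tau}$ according to whether $\sigma\hat\sigma=-1$ or $+1$; this is precisely the $g_{-\sigma\hat\sigma}^{\gamma+\hat\gamma-\tau}$ term of the lemma. For the second pair $\{\hat\gamma^\dagg+\gamma-\tau,\,\hat\gamma+\gamma^\dagg-\tau\}$ exactly one of the two is well-ordered, with the selection determined by the sign of $\chimap{\gamma}-\chimap{\hat\gamma}$; this is exactly what $\Theta(\gamma+\hat\gamma^\dagg)$ picks out, and the sign flip incurred whenever the dagger conjugate must be used to express the result as a $g_\pm$ with well-ordered index is what the factor $E(\hat\gamma+\gamma^\dagg)$ records in the $g_-$ outcomes.

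The main obstacle is the purely combinatorial bookkeeping of signs: the phase product $c_\sigma c_{\hat\sigma}\in\{1,i,-1\}$, the sign $\sigma\hat\sigma$ that decides between the symmetric and antisymmetric reassembly, and the sign flip produced when $\Theta$ replaces a non–well-ordered index by its dagger. To keep this under control I would first verify the four explicit cases $(\hat\sigma,\sigma)\in\{(+,-),(+,+),(-,+),(-,-)\}$ separately, check that each reproduces the corresponding displayed formula in the lemma, and only then read off the unified sign factors $\hat\epsilon,\hat\epsilon'$ from the resulting table. The degree bound $\upto{d}$ with $d=|\gamma|+|\hat\gamma|-2$ is inherited automatically from the canonical-monomial commutator formula, together with Proposition~\ref{prop:equivalence:relation:linear:combination} applied to the linear combinations involved in reassembling the $g_\pm$.
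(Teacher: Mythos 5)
Your proposal is correct, but it takes a genuinely different route from the paper: the paper's entire proof is a one-line citation of Lemma~35 from \plainrefs{Bruschi:Xuereb:2024} (plus a remark that replacing $\Theta(\hat{\gamma}+\gamma^\dagg)$ by $\Theta(\gamma+\hat{\gamma}^\dagg)$ is harmless because $\Theta(\gamma')=\Theta(\gamma'^\dagg)$), whereas you effectively re-derive that lemma in the single-mode case from the lower-level canonical-monomial commutator formula (Proposition~13 of the same reference). Your decomposition $g_\sigma^\gamma=c_\sigma(a^{\gamma^\dagg}+\sigma a^\gamma)$, the four cross-commutators, and the pairing of the resulting multi-indices into the dagger-conjugate pairs $\{\gamma+\hat{\gamma}-\tau,(\gamma+\hat{\gamma})^\dagg-\tau\}$ and $\{\gamma+\hat{\gamma}^\dagg-\tau,\gamma^\dagg+\hat{\gamma}-\tau\}$ are all sound, and your identification of when $\Theta$ selects which member (the sign of $\chimap{\gamma}-\chimap{\hat{\gamma}}$) and of where the factor $E(\hat{\gamma}+\gamma^\dagg)$ enters (only when $\sigma=\hat{\sigma}$, i.e.\ when the antisymmetric combination $a^{\nu^\dagg}-a^\nu$ must be rewritten as $\pm g_-^{\Theta(\nu)}$) is exactly right. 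What your approach buys is self-containedness and transparency about the origin of $\hat{\epsilon},\hat{\epsilon}'$; what the paper's buys is brevity. The only point you gloss over, which the paper treats in the discussion following Lemma~\ref{important:monomial:commutator:lemma} rather than inside this proof, is that the monomials $g_-^{\gamma+\hat{\gamma}-\tau}$ or $g_-^{\Theta(\gamma+\hat{\gamma}^\dagg)-\tau}$ may land on a multi-index in $\N_=^2$ and hence vanish identically; one should check (as the paper does) that this occurs only when the accompanying prefactor $\chimap{\cdot}$ also vanishes, so the displayed formulas remain consistent. This is a minor addendum, not a gap in your argument.
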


	\begin{proof}
		This is an immediate application of Lemma 35 from \plainrefs{Bruschi:Xuereb:2024}, restricted to the single-mode case. Note that we changed the multi-index of the second monomial from $\Theta(\hat{\gamma}+\gamma^\dagg)-\tau$, as written in \plainrefs{Bruschi:Xuereb:2024}, to $\Theta(\gamma+\hat{\gamma}^\dagg)-\tau$ for the sake of convenience. This modification is justified by the observation $\Theta(\gamma')=\Theta(\gamma'{}^\dagg)$ for any well-ordered multi-index $\gamma'\in\Np{2}$.
	\end{proof}
	
	\begin{lemma}\label{important:monomial:commutator:lemma}
		Let $g_\sigma^\gamma,g_{\hat{\sigma}}^{\hat{\gamma}}\in\hat{A}_1$ be two non-vanishing monomials. Then, either (a) $[g_{\hat{\sigma}}^{\hat{\gamma}},g_\sigma^\gamma]=0$ or (b) $\deg([g_{\hat{\sigma}}^{\hat{\gamma}},g_\sigma^\gamma])=|\gamma|+|\hat{\gamma}|-2$, but not both.
	\end{lemma}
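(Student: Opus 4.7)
The plan is to leverage Lemma~\ref{lem:commutator:form}, which writes the leading-degree-$d$ part of $[g_{\hat\sigma}^{\hat\gamma},g_\sigma^\gamma]$ explicitly as
\begin{align*}
\hat\epsilon(\hat\alpha\beta-\hat\beta\alpha)\,g_{-\sigma\hat\sigma}^{\gamma+\hat\gamma-\tau}+\hat\epsilon'(\hat\alpha\alpha-\hat\beta\beta)\,g_{-\sigma\hat\sigma}^{\Theta(\gamma+\hat\gamma^\dagg)-\tau}.
\end{align*}
If this expression is nonzero then by the meaning of $\upto{d}$ the commutator has degree exactly $d$, placing us in case (b). If it vanishes, I will show the full commutator is identically zero, which is case (a). The two alternatives are manifestly exclusive: a polynomial of degree $d\ge 0$ (which holds whenever both monomials are non-trivial, so $|\gamma|,|\hat\gamma|\ge 1$) cannot simultaneously equal the zero polynomial, whose degree is $-\infty$.

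First I would identify exactly when the displayed leading-order expression vanishes. This happens in one of two ways: either (i) both scalar coefficients $c_1:=\hat\alpha\beta-\hat\beta\alpha$ and $c_2:=\hat\alpha\alpha-\hat\beta\beta$ are zero, or (ii) the two multi-indices $\gamma+\hat\gamma-\tau$ and $\Theta(\gamma+\hat\gamma^\dagg)-\tau$ coincide and the two terms cancel. Using $\alpha\ge\beta\ge 0$ and $\hat\alpha\ge\hat\beta\ge 0$, multiplying the equations $c_1=0$ and $c_2=0$ yields $(\hat\alpha^2-\hat\beta^2)\alpha\beta=0$, which quickly forces $\gamma=0$, $\hat\gamma=0$, or both $\gamma,\hat\gamma\in\N_=^2$. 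A separate comparison of $\gamma+\hat\gamma$ with $\Theta(\gamma+\hat\gamma^\dagg)$, split according to the sign of $(\alpha-\beta)-(\hat\alpha-\hat\beta)$, shows the two multi-indices coincide if and only if at least one of $\gamma,\hat\gamma$ lies in $\N_=^2$.

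Second, I would verify case-by-case that in each such configuration the commutator is indeed zero. If $\gamma=0$ or $\hat\gamma=0$, the corresponding monomial equals $g_+^0=2i$ up to a real scalar, which is central in $\hat{A}_1$. If both $\gamma,\hat\gamma\in\N_=^2$, both monomials are proportional to a polynomial in the number operator $N=a^\dagg a$ (explicitly $g_+^{k\tau}=2i\,N(N-1)\cdots(N-k+1)$), and polynomials in a single operator commute. The subtle remaining configuration is exactly one of $\gamma,\hat\gamma$ lying in $\N_=^2$, which only enables route (ii); here I would substitute $\hat\gamma=\hat k\tau$ with $\hat k\ge 1$ (forcing $\hat\sigma=+$ since $g_-^{\hat k\tau}=0$), read off $\hat\epsilon,\hat\epsilon'$ from each of the two relevant explicit expressions in Lemma~\ref{lem:commutator:form} for $\sigma\in\{+,-\}$, and use $E(\hat\gamma+\gamma^\dagg)=-1$ (which follows from $\alpha>\beta$ combined with $\hat\alpha=\hat\beta$) to check that $\hat\epsilon'-\hat\epsilon\ne 0$ in both subcases. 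Thus the two leading-order terms add constructively rather than cancel, the leading-order is in fact nonzero, and this configuration never actually occurs under our assumption.

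The main obstacle I anticipate is the sign bookkeeping in that last step: one must unpack Lemma~\ref{lem:commutator:form}'s four separate commutator formulae and carefully track $\hat\epsilon$, $\hat\epsilon'$, and $E(\hat\gamma+\gamma^\dagg)$, because the coincidence of the two multi-indices in that configuration naively looks as if it could permit a nontrivial cancellation. The remaining pieces—centrality of $2i$, the commuting of functions of $N$, and the algebra behind $c_1=c_2=0\Rightarrow \gamma,\hat\gamma\in\N_=^2$ (modulo the $\gamma=0$ or $\hat\gamma=0$ degeneracies)—are short and routine.
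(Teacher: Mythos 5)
Your route is genuinely different from the paper's: the paper simply reduces the statement to Lemma~36 of the foundational work and then appends a paragraph patching that lemma's treatment of identically vanishing output monomials, whereas you rebuild the claim from scratch out of the leading-order formula of Lemma~\ref{lem:commutator:form}. The skeleton is sound (if the degree-$d$ part is nonzero the commutator has degree exactly $d$; otherwise one must show it vanishes identically; the two alternatives exclude each other), and your treatment of the configuration where exactly one of $\gamma,\hat{\gamma}$ lies in $\N_=^2$ --- the two output multi-indices coincide and one must check $\hat{\epsilon}'\neq\hat{\epsilon}$ so that the terms reinforce rather than cancel --- is correct and consistent with what the paper later verifies in Proposition~\ref{prop:thm:25:equivalent}(c).

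The gap is your assertion that the leading-order expression can vanish ``in one of two ways.'' There is a third way, and it is precisely the subtlety the paper devotes a paragraph to immediately after this lemma: the output monomial $g_{-\sigma\hat{\sigma}}^{\Theta(\gamma+\hat{\gamma}^\dagg)-\tau}$ can vanish \emph{identically}, being of the form $g_-^{\tilde{\gamma}}$ with $\tilde{\gamma}\in\N_=^2$, even though its scalar coefficient $\hat{\alpha}\alpha-\hat{\beta}\beta$ is nonzero. This occurs whenever $\sigma=\hat{\sigma}$ and $\chimap{\gamma}=\chimap{\hat{\gamma}}$. (The first output monomial is safe: $\gamma+\hat{\gamma}-\tau\in\N_=^2$ forces both $\gamma,\hat{\gamma}\in\N_=^2$ and hence a vanishing coefficient; the second is not symmetric in this respect.) Concretely, for $\gamma=\hat{\gamma}=(2,1)$ and $\sigma=\hat{\sigma}=-$ one has $c_1=0$, $c_2=3\neq 0$, the two multi-indices $(3,1)$ and $(2,2)$ do not coincide, yet the displayed expression vanishes because $g_-^{(2,2)}=0$; neither your case (i) nor your case (ii) captures this. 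Your characterization of the vanishing locus, and hence the list of configurations in which you must still verify $[g_{\hat{\sigma}}^{\hat{\gamma}},g_\sigma^\gamma]=0$, is therefore incomplete. The lemma survives: imposing the extra constraints $\sigma=\hat{\sigma}$, $\chimap{\gamma}=\chimap{\hat{\gamma}}$ and $\hat{\alpha}\beta=\hat{\beta}\alpha$ only yields identical monomials, a central factor $g_+^0$, or two balanced multi-indices, all of which commute --- but this third branch must be added explicitly. Two smaller slips: the product $c_1c_2$ does not equal $(\hat{\alpha}^2-\hat{\beta}^2)\alpha\beta$ (the intended conclusion follows instead from the linear combinations $\hat{\beta}c_2+\hat{\alpha}c_1=(\hat{\alpha}^2-\hat{\beta}^2)\beta$ and $\hat{\alpha}c_2+\hat{\beta}c_1=(\hat{\alpha}^2-\hat{\beta}^2)\alpha$); and a non-vanishing monomial may have $|\gamma|=0$ (namely $g_+^0=2i$), so exclusivity of (a) and (b) there rests on the impossibility of a negative finite degree rather than on $d\geq 0$.
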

	
	\begin{proof}
		This is a reduced version of Lemma 36 from \plainrefs{Bruschi:Xuereb:2024}.
	\end{proof}
	While the original version of Lemma~\ref{important:monomial:commutator:lemma} presented in \plainrefs{Bruschi:Xuereb:2024} is correct, a subtle aspect that we present here was overlooked in its proof. When considering the right-hand side of the expressions (a)-(d) from Lemma~35, the monomials {$g_-^{\gamma+\lambda-\tau_\nu}$} and {$g_-^{\Theta(\gamma+\lambda^\dagg)-\tau_\nu}$} may, in principle, be ill-defined if they are of the form {$g_-^{\tilde{\gamma}}$} with $\tilde{\gamma}\in\N_{=}^{2n}$. To address this, we first note that by extending the definition of $g_-^{\gamma}$ in \plainrefs{Bruschi:Xuereb:2024} to also include multi-indices $\gamma\in\N_=^{2n}$, the monomials with these multi-indices vanish. Thus, one must verify that the monomials in the expressions (a)-(d) from Lemma~35 only vanish when their prefactors also vanish, an issue that we address here. Assume without loss of generality {$g_-^{\gamma+\lambda-\tau_\nu}$}$\,=0$ (the case for {$ g_-^{\Theta(\gamma+\lambda^\dagg)-\tau_\nu}$}$\,=0$ is analogous). This implies $\gamma+\lambda\in \N_{=}^{2n}$, since $\tau_\nu\in \N_{=}^{2n}$ for all $\nu\in\{1,\ldots,n\}$. Consequently, $\alpha_\nu+\mu_\nu=\beta_\nu+\nu_\nu$ for all $\nu\in\{1,\ldots,n\}$. Given that the multi-indices $\gamma$ and $\lambda$ are well-ordered, if $\gamma\notin \Np{=}$, then there exists an index $k$ such that $\alpha_k>\beta_k$ and $\alpha_{k'}=\beta_{k'}$ for all $k'<k$. To satisfy the equation $\alpha_k+\beta_k=\mu_k+\nu_k$, one must have $\mu_k<\nu_k$, implying $\lambda\notin\N_{=}^{2n}$. But then there exists an index $j<k$ such that $\mu_{j}>\hat{\nu}_{j}$, which would require $\alpha_{j}<\beta_{j}$, a contradiction. Thus, $\gamma\in\N_{=}^{2n}$, and by symmetry, $\lambda\in\N_{=}^{2n}$ as well. In this case, however, the prefactor $\alpha_\nu\nu_\nu-\beta_\nu\mu_\nu$ identically vanishes, ensuring consistency.
	
	\section{Finite-dimensional Lie algebras generated by sets of monomials\label{section:monomial:generated:algebras}}
	In this section we proceed with the first main part of our work. We examine all possible subalgebras of the skew-hermitian Weyl algebra $\hat{A}_1$ that are generated by a finite set $\mathcal{G}:=\{g^{\gamma _p}_{\sigma _p}\,\mid\,p\in \mathcal{R}, \,|\mathcal{R}|<\infty\}$ of single monomials $g^{\gamma _p}_{\sigma _p}$ with the aim of determining whether the resulting Lie algebra is finite dimensional. Our goal is to provide a complete list of all non-abelian and finite-dimensional Lie algebras $\g$ that are generated in this fashion.

	\subsection{Characteristic examples and crucial tools}
	We begin by providing preliminary results and tools to tackle the main question of this section. In particular, we analyze the result of commuting elements in two different sets of monomial generators, see Lemma~\ref{lem:commutator:form}. This might help to determine whether the Lie algebras generated by these sets are finite dimensional. We also identify the key subalgebras with well-known Lie algebras from the literature, such as the Schr\"odinger algebra $\mathcal{S}\cong\slR{2}\ltimes\gh_1$ or the special linear algebra $\slR{2}$. 
	We start by defining some of the most well-known Lie algebras
	
	\begin{definition}[Heisenberg algebra]
		The real \emph{Heisenberg algebra} $\gh_1$ is the three-dimensional Lie algebra with basis $\{q,p,z\}$, defined by the commutation relations $[q,p]=z$ and $[z,\gh_1]=0$, see \plainrefs{Kac:1990}.
	\end{definition}
	
	\begin{definition}[Wigner-Heisenberg algebra $\wh_2$]\label{def:Wigner:Heisenberg:algebra:2}
		The real \emph{Wigner-Heisenberg algebra of the second kind}, denoted $\wh_2$, is the four-dimensional Lie algebra with basis $\{h,q,p,z\}$, defined by the commutation relations $[h,q]=-p$, $[h,p]=q$ $[q,p]=z$, and $[z,\wh_2]=0$, see \plainrefs{Yang:1951}.
	\end{definition}
	
	The Wigner-Heisenberg algebra sometimes appears in the literature with other names, one example being $A_{4,9}^{a=0}$, see \plainrefs{Popovych:2003}. In the original formulation \plainrefs{Wigner:1950}, the commutation relations were expresses differently as: $[H,ix]=v$, $[H,v]=ix$, $[v,ix]=1$. By identifying $e_4\leftrightarrow H$, $e_3\leftrightarrow ix+v$, $e_2\leftrightarrow ix-v$, and $e_1\leftrightarrow -2$, one observes that this algebra, over the field of real numbers, is not isomorphic to $\wh_2$ and clearly refers to a different algebra then its modern incarnation \plainrefs{Chung:2023}. That is the reason why we add the descriptive \enquote{of second kind} to differentiate it from the original algebra envisioned by E. Wigner \plainrefs{Wigner:1950}. In this work we shall omit, for notational convenience, the descriptive notation \enquote{of second kind}, and shall write instead \textit{The Wigner-Heisenberg algebra $\wh_2$}. For completeness, we also define the original Wigner-Heisenberg algebra here.
	
	\begin{definition}[Wigner-Heisenberg algebra $\wh_1$]\label{def:Wigner:Heisenberg:algebra:1}
		The real \emph{Wigner-Heisenberg algebra of the first kind}, denoted by $\wh_1$, is the four-dimensional Lie algebra with basis $\{e_1,e_2,e_3,e_4\}$, defined by the commutation relations $[e_2,e_3]=e_1$, $[e_2,e_4]=e_2$ $[e_3,e_4]=-e_3$ and $[e_1,\wh_1]=0$, see \plainrefs{Wigner:1950}.
	\end{definition}
	
	This algebra is sometimes referred to as $A_{4,8}^{b=-1}$ in the literature \plainrefs{Popovych:2003}. As shown in Proposition~\ref{prop:example:naive:assumption:wrong}, the complexifications of the two algebras $\wh_1$ and $\wh_2$ are isomorphic,
	i.e., $\wh_{1,\C}\cong\wh_{2,\C}\equiv\wh_\C$. The complex Wigner-Heisenberg algebra $\wh_\C$ is commonly used when describing the dynamics of a single quantum harmonic oscillator with a linear driving term as done, for example, in linear optics \plainrefs{Qvarfort:2025}, although often not explicitly named. 
	
	\begin{definition}[$\slR{2}$-algebra]\label{def:sl2:algebra}
		The real special linear Lie algebra $\slR{2}$ is the three-dimensional Lie algebra with basis $\{h,x,y\}$, defined by the commutation relations $[h,x]=2x$, $[h,y]=-2y$, and $[x,y]=h$.
	\end{definition}
	
	This algebra is isomorphic to several other well-known real Lie algebras, such as $\slR{2}\cong\su{1,1}\cong\sso{2,1}\cong\spR{2}$, see \plainrefs{helgason2024differential}. However, it is not isomorphic to the compact real Lie algebras $\su{2}\cong\sso{3}$, see \plainrefs{Knapp:1986}.
	We are now ready prove a few initial results related to these concepts. Following Definition~\ref{def:gens:mon:partition}, $\mathcal{G}$
	denotes a set of monomial generators with disjoint sets 
	$\mathcal{G}^K=\mathcal{G}\cap \hat{A}_1^K$ where $K\in\{0,1,2,=,\perp\}$ and $\hat{A}_1^K$ is defined in Sec.~\ref{sec:dec:skew}.
	
	\begin{proposition}\label{prop:commutator:g0:g=:g0:g=}
		A set $\mathcal{G}$  of monomial generators with $\mathcal{G}=\mathcal{G}^0\cup\mathcal{G}^=$
		generates an abelian Lie algebra. That is, $[\mathcal{G}, \mathcal{G}] \subseteq[\hat{A}_1^0\oplus\hat{A}_1^=,\hat{A}_1^0\oplus\hat{A}_1^=]\subseteq \{0\}$ and $\langle \mathcal{G}\rangle_{\mathrm{Lie}}=\spn(\mathcal{G})$ is an abelian Lie algebra.
	\end{proposition}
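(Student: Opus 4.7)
The plan is to show directly that every pair of generators in $\mathcal{G}$ commutes. Once $[g,g']=0$ for all $g,g'\in\mathcal{G}$, the recursive construction of Definition~\ref{def:Lie:closure} yields $\mathfrak{v}_j=\{0\}$ for all $j\geq 1$, so the Lie closure collapses to $\mathfrak{v}_0=\spn(\mathcal{G})$, which is trivially an abelian Lie algebra. Thus the entire claim reduces to a single commutator computation.

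First I would observe that every basis monomial of $\hat{A}_1^0$ and $\hat{A}_1^=$ lies in the family $\{g_+^{k\tau}\mid k\in\N_{\geq 0}\}$: indeed $\mathcal{B}_1^0=\{\tfrac{1}{2}g_+^0,\tfrac{1}{2}g_+^\tau\}$ and $\mathcal{B}_1^=\supseteq\{g_+^{k\tau}\mid k\geq 2\}$. Consequently, any element of $\mathcal{G}^0\cup\mathcal{G}^=$ is, up to a nonzero scalar, of the form $g_+^{k\tau}$ with $k\geq 0$, and both entries of the multi-index $k\tau$ are equal, i.e. $k\tau\in\N_=^2$, and $(k\tau)^\dagg=k\tau$.

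Next I would apply Lemma~\ref{lem:commutator:form} to a generic pair $g_+^{k\tau}$, $g_+^{k'\tau}$. The leading-order coefficients are controlled by $\chi(k'\tau\circ(k\tau)^\dagg)$ and $\chi(k'\tau\circ k\tau)$. Since $k\tau\circ k'\tau=(kk',kk')\in\N_=^2$, Proposition~\ref{prop:basic:maps}(b) gives $\chi(k'\tau\circ k\tau)=0=\chi(k'\tau\circ(k\tau)^\dagg)$. Therefore Lemma~\ref{lem:commutator:form} yields
\begin{equation*}
[g_+^{k'\tau},g_+^{k\tau}]\upto{d} 0,\qquad d=2(k+k')-2,
\end{equation*}
meaning the commutator has degree strictly less than $|k\tau|+|k'\tau|-2$. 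Invoking Lemma~\ref{important:monomial:commutator:lemma}, the two alternatives are either vanishing commutator or commutator of degree exactly $|k\tau|+|k'\tau|-2$; the latter is ruled out, so $[g_+^{k'\tau},g_+^{k\tau}]=0$.

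Because $\mathcal{G}=\mathcal{G}^0\cup\mathcal{G}^=$, the previous step shows that all pairwise commutators among generators vanish; by bilinearity the same is true on $\spn(\mathcal{G})\subseteq\hat{A}_1^0\oplus\hat{A}_1^=$. Feeding this into Definition~\ref{def:Lie:closure} yields $\lie{\mathcal{G}}=\spn(\mathcal{G})$, concluding the proof. There is no serious obstacle here: the core content is the observation that multi-indices in $\N_=^2$ lie in $\ker\chi$, which makes the leading-order formula of Lemma~\ref{lem:commutator:form} vanish, and Lemma~\ref{important:monomial:commutator:lemma} then promotes the vanishing of the leading term to the vanishing of the full commutator.
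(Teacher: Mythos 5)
Your proof is correct, but it takes a different route from the paper's. The paper disposes of this proposition in one line by citing Lemma~16(a4) of the foundational reference \plainrefs{Bruschi:Xuereb:2024} (which states directly that monomials with multi-indices in $\N_=^{2n}$ commute) together with bilinearity. You instead re-derive that fact internally: you observe that every generator in $\mathcal{G}^0\cup\mathcal{G}^=$ is a scalar multiple of some $g_+^{k\tau}$ with $k\tau\in\N_=^2=\ker\chi$, so both leading-order coefficients $\chimap{k'\tau\circ (k\tau)^\dagg}$ and $\chimap{k'\tau\circ k\tau}$ in Lemma~\ref{lem:commutator:form} vanish, and then you invoke the dichotomy of Lemma~\ref{important:monomial:commutator:lemma} (either the commutator vanishes or it has degree exactly $|\gamma|+|\hat\gamma|-2$) to upgrade the vanishing of the leading term to the vanishing of the full commutator. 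That last step is the one genuinely clever move, and it is sound; the concluding reduction of $\lie{\mathcal{G}}$ to $\spn(\mathcal{G})$ via Definition~\ref{def:Lie:closure} is also fine. What your approach buys is self-containment — everything is proved from lemmas stated in this paper rather than imported — at the cost of being slightly longer than the citation the authors chose.
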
	
	
	\begin{proof}
		This follows directly from Lemma 16 (a4) in \plainrefs{Bruschi:Xuereb:2024} and the bi-linearity of the commutator.
	\end{proof}
	
	\begin{proposition}\label{prop:A11:algbera:heisenberg}
		The Lie algebra generated by a set 
		$\mathcal{G}$ of monomial generators with $\mathcal{G}=\mathcal{G}^1\subseteq\hat{A}_1^1$ is isomorphic to a subalgebra of the real Heisenberg algebra $\gh_1$. That is, $\langle\mathcal{G}\rangle_{\mathrm{Lie}} \subseteq \langle \hat{A}_1^1\rangle_{\mathrm{Lie}}= \spn\{2i, i(a+a^{\dagg}), a-a^{\dagg}\}\cong\gh_1$. 
	\end{proposition}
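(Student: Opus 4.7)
The plan is to exploit the fact that $\hat{A}_1^1$ is only two-dimensional, so the generating set $\mathcal{G}\subseteq\hat{A}_1^1$ is spanned by at most the two monomials $g_+^{\iota_1}=i(a+a^\dagg)$ and $g_-^{\iota_1}=a-a^\dagg$. Since the Lie closure depends only on the span of the generators (not the specific coefficients of each generator) and on the behaviour of iterated commutators, it suffices to understand what $[g_+^{\iota_1},g_-^{\iota_1}]$ produces and then check that no further structure emerges.

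First, I would compute the single non-trivial commutator. Applying Lemma~\ref{lem:commutator:form} with $\gamma=\hat{\gamma}=\iota_1$, one has $d=|\gamma|+|\hat{\gamma}|-2=0$, so the equivalence $\upto{0}$ is an exact equality. The term in $g_+^{\gamma+\hat{\gamma}-\tau}$ has prefactor $\chimap{\iota_1\circ\iota_2}=0$, while the term in $g_+^{\Theta(\gamma+\hat{\gamma}^\dagg)-\tau}=g_+^{\Theta(\tau)-\tau}=g_+^{0}=2i$ has a non-vanishing prefactor. A direct verification via the canonical commutation relation confirms $[i(a+a^\dagg),a-a^\dagg]=-2i$.

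Next, I would close off the induction: the element $2i$ lies in the center of $\hat{A}_1$ (scalars commute with everything), so $[2i,\,\hat{A}_1^1]=\{0\}$, and no new basis element can be produced by further bracketing. Therefore $\lie{\mathcal{G}}\subseteq\spn\{2i,\,i(a+a^\dagg),\,a-a^\dagg\}=\lie{\hat{A}_1^1}$. Finally, I would exhibit the isomorphism with $\gh_1$ by mapping $q\mapsto i(a+a^\dagg)$, $p\mapsto a-a^\dagg$, $z\mapsto -2i$, which matches the defining bracket $[q,p]=z$ and the centrality of $z$ in Definition of $\gh_1$.

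There is no real obstacle here: the argument is a single computation followed by a dimension-counting observation, and the only care needed is to invoke Lemma~\ref{lem:commutator:form} correctly (or equivalently to note that the canonical bracket collapses to a scalar) and to verify that the centrality of $2i$ truncates the Lie closure at the first iteration.
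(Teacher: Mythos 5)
Your proof is correct and follows essentially the same route as the paper: compute the single non-trivial commutator $[i(a+a^\dagg),a-a^\dagg]=-2i$, observe that the scalar is central so the closure terminates, and identify the resulting three-dimensional algebra with $\gh_1$ via the defining relation $[q,p]=z$. The extra verification through Lemma~\ref{lem:commutator:form} is a harmless elaboration of the same computation the paper does directly from the canonical commutation relation.
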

	
	\begin{proof}
		We compute the commutator $[a-a^{\dagg}, i(a+a^{\dagg})]=2i$. The element $2i$ is clearly in the Lie center of $\hat{A}_1$. This shows $\langle\mathcal{G}^1\rangle_{\mathrm{Lie}} \subseteq \langle \hat{A}_1^1\rangle_{\mathrm{Lie}}= \spn\{2i, i(a+a^{\dagg}), a-a^{\dagg}\}$. Identifying, $2i\leftrightarrow z$, $a-a^\dagg\leftrightarrow q$, and $i(a+a^\dagg)\leftrightarrow p$, we recover the defining relations of the real Heisenberg algebra $\gh_1$. Hence, $\lie{\hat{A}_1^1}\cong\gh_1$.
	\end{proof}
	
	\begin{definition}
		To avoid cumbersome notation in the subsequent proofs, we introduce the following spaces: $\hat{A}_1^{\leq1}:=\hat{A}_1^0\oplus\hat{A}_1^{1}$ and $\hat{A}_1^{\leq 2}:=\hat{A}_1^{\leq 1}\oplus\hat{A}_1^2=\hat{A}_1^0\oplus\hat{A}_1^{1}\oplus\hat{A}_1^2$. Moreover, starting from a set $\mathcal{G}$ of monomial generators, we define the subsets $\mathcal{G}^{\leq1}$ and $\mathcal{G}^{\leq 2}$ via the intersections $\mathcal{G}\cap \hat{A}_1^{\leq1}$ and $\mathcal{G}\cap\hat{A}_1^{\leq 2}$ respectively. 
	\end{definition}
	
	\begin{proposition}\label{prop:algebra:go:g1:space}
		The Lie algebra generated by a set $\mathcal{G}$ of monomial generators with 
		$\mathcal{G}=\mathcal{G}^0\cup\mathcal{G}^1\subseteq\hat{A}_1^0\oplus\hat{A}_1^1 = \hat{A}_1^{\leq 1}$ is isomorphic to a subalgebra of the real Wigner-Heisenberg algebra $\wh_2$. That is, $\langle\mathcal{G}\rangle_{\mathrm{Lie}} \subseteq \langle \hat{A}_1^{\leq1}\rangle_{\mathrm{Lie}}= \spn\{i,ia^\dagg a, i(a{+}a^{\dagg}), a{-}a^{\dagg}\}\cong\wh_2$.
	\end{proposition}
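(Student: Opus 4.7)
The plan is to show that the four-dimensional space $\hat{A}_1^{\leq 1} = \spn\{i, ia^\dagg a, i(a+a^\dagg), a-a^\dagg\}$ is already closed under the Lie bracket, so that its Lie closure is itself and equals $\wh_2$. Once that is established, since every element of $\mathcal{G}^0 \cup \mathcal{G}^1$ lies in $\hat{A}_1^{\leq 1}$, the Lie closure $\lie{\mathcal{G}}$ must be contained in the Lie closure of $\hat{A}_1^{\leq 1}$, i.e., in $\wh_2$.

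First I would leverage Proposition~\ref{prop:A11:algbera:heisenberg}, which already tells us that $\lie{\hat{A}_1^1} = \spn\{2i, i(a+a^\dagg), a-a^\dagg\} \cong \gh_1$. The only new element introduced by the presence of $\mathcal{G}^0$ is $ia^\dagg a$ (the generator $i$ is central in the entire Weyl algebra and contributes nothing beyond its own span). Therefore the calculation reduces to evaluating the two commutators $[ia^\dagg a,\,i(a+a^\dagg)]$ and $[ia^\dagg a,\,a-a^\dagg]$, using the canonical relation $[a,a^\dagg]=1$. A short calculation yields
\begin{align*}
[ia^\dagg a,\,i(a+a^\dagg)] &= a-a^\dagg, & [ia^\dagg a,\,a-a^\dagg] &= -i(a+a^\dagg),
\end{align*}
both of which already lie in $\hat{A}_1^{\leq 1}$. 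Combined with Proposition~\ref{prop:A11:algbera:heisenberg} (which gives $[a-a^\dagg, i(a+a^\dagg)] = 2i$) and the centrality of $i$, this exhausts all pairwise brackets of the four candidate basis elements and shows that $\hat{A}_1^{\leq 1}$ is a Lie subalgebra of $\hat{A}_1$.

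Next I would exhibit the isomorphism with $\wh_2$ explicitly by fixing the identifications $z \leftrightarrow 2i$, $h \leftrightarrow ia^\dagg a$, $q \leftrightarrow a-a^\dagg$, and $p \leftrightarrow i(a+a^\dagg)$. Comparing the non-trivial brackets computed above with the defining relations of Definition~\ref{def:Wigner:Heisenberg:algebra:2} gives $[h,q]=-p$, $[h,p]=q$, $[q,p]=z$, and $[z,\cdot]=0$, so the identification is a Lie algebra isomorphism. This simultaneously proves that $\lie{\hat{A}_1^{\leq 1}} = \hat{A}_1^{\leq 1}$ as a set and that it is isomorphic to $\wh_2$.

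Finally, since $\mathcal{G} = \mathcal{G}^0 \cup \mathcal{G}^1 \subseteq \hat{A}_1^{\leq 1}$ and $\hat{A}_1^{\leq 1}$ has been shown to be a Lie subalgebra, the Lie closure satisfies $\lie{\mathcal{G}} \subseteq \lie{\hat{A}_1^{\leq 1}} = \hat{A}_1^{\leq 1} \cong \wh_2$, which is exactly the stated claim. I do not anticipate any real obstacles here; the proof is essentially a bounded bookkeeping exercise, and the only thing requiring care is to confirm that no new basis element appears beyond the four already listed, which is guaranteed by the explicit closure computation above.
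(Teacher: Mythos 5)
Your proposal is correct and follows essentially the same route as the paper: compute the pairwise brackets of the four basis monomials of $\hat{A}_1^{\leq 1}$ to show closure, then identify $h\leftrightarrow ia^\dagg a$, $q\leftrightarrow a-a^\dagg$, $p\leftrightarrow i(a+a^\dagg)$, $z\leftrightarrow 2i$ with the defining relations of Definition~\ref{def:Wigner:Heisenberg:algebra:2}. The commutators you state are correct and match the paper's (up to antisymmetry), so there is nothing to add.
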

	
	\begin{proof}
		We compute the relevant commutators. We have: $[a-a^{\dagg}, ia^{\dagg}a]=i(a{+}a^{\dagg})$, $[i(a{+}a^{\dagg}), ia^{\dagg}a]=-(a{-}a^{\dagg})$, and $[a{-}a^\dagg,i(a{+}a^\dagg)]=2i$. The remaining commutators either vanish trivially or follow from the antisymmetric property of the commutators. This implies $\langle\mathcal{G}\rangle_{\mathrm{Lie}} \subseteq \langle \hat{A}_1^{\leq1}\rangle_{\mathrm{Lie}}= \spn\{i,ia^\dagg a, i(a{+}a^{\dagg}), a{-}a^{\dagg}\}$.
		Identifying the monomials as: $h\leftrightarrow ia^\dagg a$, $q\leftrightarrow(a{-}a^\dagg)$, $p=i(a{+}a^\dagg)$, and $z\leftrightarrow 2i$, we recover the defining relations of the Wigner-Heisenberg algebra $\wh_2$ as given in Definition~\ref{def:Wigner:Heisenberg:algebra:2}, therefore implying $\langle\hat{A}_1^{\leq1}\rangle_{\mathrm{Lie}}\cong\wh_2$.
	\end{proof}
	
	\begin{proposition}\label{prop:algebra:g2:space}
		The Lie algebra generated by a set $\mathcal{G}$ of monomial generators with $\mathcal{G}=\mathcal{G}^2\subseteq\hat{A}_1^2$ is isomorphic to a subalgebra of the real special linear algebra $\slR{2}$. That is, $\langle\mathcal{G}\rangle_{\mathrm{Lie}} \subseteq \langle\hat{A}_1^2\rangle_{\mathrm{Lie}}=\spn\{i(a^{\dagg}a+\frac{1}{2}), i(a^2 + (a^{\dagg})^2), a^2 - (a^{\dagg})^2\}\cong\slR{2}$.
	\end{proposition}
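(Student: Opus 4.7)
The plan is to verify the claimed equality $\lie{\hat{A}_1^2}=\spn\{H,Y,X\}$ (with $H:=i(a^\dagg a+\tfrac12)$, $Y:=i(a^2+(a^\dagg)^2)=g_+^{2\iota_1}$, $X:=a^2-(a^\dagg)^2=g_-^{2\iota_1}$) by a direct closure computation, and then exhibit an explicit Lie-algebra isomorphism with $\slR{2}$. The subalgebra part of the statement, namely $\lie{\mathcal{G}}\subseteq\lie{\hat{A}_1^2}$, is then immediate from $\mathcal{G}\subseteq\hat{A}_1^2$.

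First, I would compute the single nontrivial commutator among the generators of $\hat{A}_1^2$. Using $[a,a^\dagg]=1$ one gets $[a^2,(a^\dagg)^2]=4a^\dagg a+2$, so that $[X,Y]=2i[a^2,(a^\dagg)^2]=8\,i(a^\dagg a+\tfrac12)=8H$. Alternatively, Lemma~\ref{lem:commutator:form} applied to $\gamma=\hat{\gamma}=2\iota_1$ gives this value immediately, since $\chimap{\hat{\gamma}\circ\gamma^\dagg}=0$ and $\chimap{\hat{\gamma}\circ\gamma}=4$. In particular this shows that the Lie closure must contain the element $H\in\hat{A}_1^0\setminus\hat{A}_1^2$.

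Next, I would verify closure of the three-dimensional space $\spn\{H,X,Y\}$ under brackets by computing $[H,X]$ and $[H,Y]$. Using $[a^\dagg a,a^2]=-2a^2$ and $[a^\dagg a,(a^\dagg)^2]=2(a^\dagg)^2$ one obtains $[H,X]=-2Y$ and $[H,Y]=2X$. Together with $[X,Y]=8H$ this shows $\spn\{H,X,Y\}$ is a Lie subalgebra. That no element of higher degree can enter the closure is guaranteed by Lemma~\ref{important:monomial:commutator:lemma}, which forces every nonzero commutator of degree-$2$ monomials to again have degree $2$. Hence $\lie{\hat{A}_1^2}=\spn\{H,X,Y\}$.

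The isomorphism with $\slR{2}$ is the mildly delicate step, because the relation $[X,Y]=+8H$ combined with $[H,X]=-2Y$, $[H,Y]=+2X$ does not directly match the defining relations of Definition~\ref{def:sl2:algebra}. However, $\operatorname{ad}(X/2)$ restricted to $\spn\{H,Y\}$ has matrix $\left(\begin{smallmatrix} 0 & 4 \\ 1 & 0 \end{smallmatrix}\right)$ with real eigenvalues $\pm 2$, which already guarantees the algebra is the split (noncompact) real form $\slR{2}$ rather than $\su{2}$. An explicit isomorphism is then obtained by picking the corresponding $\operatorname{ad}(X/2)$-eigenvectors $H\pm Y/2$ and rescaling to fix the structure constant of $[\cdot,\cdot]$; concretely, setting
\begin{align*}
h:=\tfrac{1}{2}(a^2-(a^\dagg)^2),\qquad x:=\tfrac{1}{2}H+\tfrac{1}{4}Y,\qquad y:=-\tfrac{1}{2}H+\tfrac{1}{4}Y,
\end{align*}
a short check confirms $[h,x]=2x$, $[h,y]=-2y$, $[x,y]=h$. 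The only real obstacle is this rescaling: choosing the eigenvectors without normalization yields $[x,y]=-4h$, so one has to carry a negative product of scalars $cd=-1/4$ through the calculation, which is possible only because $-1/4<0$ reflects the split signature of the Killing form and is exactly what distinguishes $\slR{2}$ from $\su{2}$ in this setting.
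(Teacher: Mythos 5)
Your proposal is correct and follows essentially the same route as the paper: compute the three commutators $[X,Y]=8H$, $[H,X]=-2Y$, $[H,Y]=2X$ to establish closure of $\spn\{H,X,Y\}$, then exhibit an explicit $\slR{2}$-triple (your relations $[h,x]=2x$, $[h,y]=-2y$, $[x,y]=h$ all check out). The only difference is cosmetic: you take the Cartan element to be $\tfrac{1}{2}(a^2-(a^\dagg)^2)$ with $\operatorname{ad}$-eigenvectors $\pm\tfrac12 H+\tfrac14 Y$, whereas the paper's proof picks a different (and less transparent) triple and leaves the verification to the reader; your choice in fact matches the triple $\tilde h,\tilde x,\tilde y$ used later in Proposition~\ref{prop:schroedinger:algebra}.
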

	
	Before proceeding with the proof of this claim, we emphasize the following important observation: The Lie algebra generated by the space $\hat{A}_n^2$ for $n$-modes is the real symplectic Lie algebra $\mathfrak{sp}(2n,\R)$ \plainrefs{Bruschi:Xuereb:2024,Sattinger:1986,Ripka:1986}. In the single-mode case, where $n=1$, this Lie algebra is isomorphic to the special linear algebra $\slR{2}$ \plainrefs{helgason2024differential}. Consequently, it would be natural to denote the Lie algebra generated by $\hat{A}_n^2$ with $\mathfrak{sp}(2,\R)$ rather than $\slR{2}$. Nevertheless, in Proposition~\ref{prop:realizable:semisimple:algebras}, we demonstrate that the only finite-dimensional Lie subalgebras of $\hat{A}_1$ that are semisimple must be three-dimensional. Therefore, we adopt the conventional notation prevalent in the literature, referring to the two real three-dimensional semisimple Lie algebras as the special unitary Lie algebra $\mathfrak{su}(2)$ and the special linear algebra $\slR{2}$.
	
	\begin{proof}
		We compute the relevant commutators: $[i(a^2{+}(a^{\dagg})^2),a^2{-}(a^{\dagg})^2]=-8i(a^{\dagg}a{+}\frac{1}{2})$, $[i(a^{\dagg}a{+}\frac{1}{2}),i(a^2{+}(a^{\dagg})^2)]=2(a^2{-}(a^{\dagg})^2)$, and $[i(a^{\dagg}a{+}\frac{1}{2}),a^2{-}(a^{\dagg})^2]=-2i(a^2{+}(a^{\dagg})^2)$. This proves the assertion $\langle\mathcal{G}\rangle_{\mathrm{Lie}} \subseteq \langle\hat{A}_1^2\rangle_{\mathrm{Lie}}$ where $\langle\hat{A}_1^2\rangle_{\mathrm{Lie}}=\spn\{i(a^{\dagg}a{+}\frac{1}{2}), i(a^2 {+} (a^{\dagg})^2), a^2 {-} (a^{\dagg})^2\}$. This leaves proving $\langle\hat{A}_1^2\rangle_{\mathrm{Lie}}\cong\slR{2}$. We can clearly choose the basis
		\begin{align*}
			\mathcal{B}=\left\{H:=i\left(a^\dagg a+\frac{1}{2}\right)+\frac{1}{2}\left(g_+^{2\iota_1}+g_-^{2\iota_1}\right),\, X:=\frac{1}{2\sqrt{2}}\left(g_+^{2\iota_1}+2i\left(a^\dagg a+\frac{1}{2}\right)\right),\;Y:=-\frac{1}{2\sqrt{2}}\left(g_-^{2\iota_1}+2i\left(a^\dagg a+\frac{1}{2}\right)\right)\right\}\subseteq\langle\hat{A}_1^2\rangle_{\mathrm{Lie}}.
		\end{align*}
		It is immediate to verify that $[H,X]=2X$, $[H,Y]=-2Y$ and $[X,Y]=H$. The details of this calculation are left as an exercise to the reader.
		The identification $H\leftrightarrow h$, $X\leftrightarrow x$, and $Y\leftrightarrow y$ concludes this proof.
	\end{proof}
	
	We have seen that the commutation relations among the basis elements of $\hat{A}_1^0\oplus\hat{A}_1^1\oplus\hat{A}_1^2:=\hat{A}_1^{\leq2}$, i.e., the monomials $i=\frac{1}{2}g_+^0$, $ia^\dagg a=\frac{1}{2}g_+^{\tau}$, $i(a+a^\dagg)=g_+^{\iota_1}$, $a-a^\dagg=g_-^{\iota_1}$, $i(a^2+(a^\dagg)^2)=g_+^{2\iota_1}$, and $a^2-(a^\dagg)^2=g_-^{2\iota_1}$, are key for understanding the interplay of the sets $\mathcal{G}^0$, $\mathcal{G}^1$, and $\mathcal{G}^2$. For completeness, we present the full set of commutation relations between these elements in Appendix~\ref{App:commutation:relations}. The explicit expressions for the individual commutators are listed in Table~\ref{tab:Full:Commutator:Algebra:(pesudo):schroedinger:algebra}.
	
	\begin{tcolorbox}[breakable, colback=Cerulean!3!white,colframe=Cerulean!85!black,title=Example: Different realization of $\slR{2}$ in $\hat{A}_1$]
		We note the following:  There exist multiple distinct subspaces of the skew-hermitian Weyl algebra $\hat{A}_1$ that are isomorphic to $\slR{2}$, when equipped with the usual commutation relations.
		\begin{example}\label{exa:different:realizations:of:sl2}
			We have seen that $\langle\hat{A}_1^2\rangle_{\mathrm{Lie}}$ is a realization of $\slR{2}$ as a subalgebra of $\hat{A}_1$. We have furthermore seen that it admits the basis
			\begin{align*}
				\mathcal{B}_1=\left\{H:=i\left(a^\dagg a{+}\frac{1}{2}\right)+\frac{1}{2}\left(g_+^{2\iota_1}{+}g_-^{2\iota_1}\right),\, X:=\frac{1}{2\sqrt{2}}\left(g_+^{2\iota_1}{+}2i\left(a^\dagg a{+}\frac{1}{2}\right)\right),\;Y:=-\frac{1}{2\sqrt{2}}\left(g_-^{2\iota_1}{+}2i\left(a^\dagg a{+}\frac{1}{2}\right)\right)\right\},
			\end{align*}
			which satisfy the commutation relations $[H,X]=2X$, $[H,Y]=2Y$, and $[X,Y]=H$. We can now furthermore compute $[H,g_+^{\iota_1}]=g_+^{\iota_1}$, $[H,g_-^{\iota_1}]=-g_-^{\iota_1}$, $[X,g_+^{\iota_1}]=0$, $[X,g_-^{\iota_1}]=-\sqrt{2}g_+^{\iota_1}$, $[Y,g_+^{\iota_1}]=-\frac{1}{\sqrt{2}}\left(g_-^{\iota_1}{+}g_+^{\iota_1}\right)$, and $[Y,g_-^{\iota_1}]=\frac{1}{\sqrt{2}}\left(g_-^{\iota_1}{+}g_+^{\iota_1}\right)$.
			Here, we used the commutation relations listed in Table~\ref{tab:Full:Commutator:Algebra:(pesudo):schroedinger:algebra}. Thus, one has $[\tilde{H},\tilde{X}]=2\tilde{X}$, $\tilde{H},\tilde{Y}=-2\tilde{Y}$, and $[\tilde{X},\tilde{Y}]=\tilde{H}$, where we have introduced
			the new elements $\tilde{H},\tilde{X},\tilde{Y}$ that form the basis
			\begin{align*}
				\mathcal{B}_2:=\left\{\tilde{H}:=H+\frac{1}{\sqrt{2}}\left(g_+^{\iota_1}+g_-^{\iota_1}\right),\Tilde{X}:=X+g_+^{\iota_1}+\frac{1}{\sqrt{2}}i,\Tilde{Y}:=Y\right\}.
			\end{align*}
			This basis generates a three dimensional algebra $\langle\mathcal{B}_2\rangle_{\mathrm{Lie}}$ that is, by the identification $\tilde{H}\leftrightarrow h$, $\tilde{X}\leftrightarrow x$, and $\tilde{Y}\leftrightarrow y$, isomorphic to $\slR{2}$. This example illustrates that multiple subspaces of the skew-hermitian Weyl algebra can realize the same Lie algebraic structure. Note, furthermore that the vector space $\spn\{\mathcal{B}_1\cup\mathcal{B}_2\}$ is five-dimensional and generates the six-dimensional Lie algebra $\langle\hat{A}_1^1\oplus\hat{A}_1^2\rangle_{\mathrm{Lie}}$. This is an easy example of the fact that combining two vector spaces associated to semisimple Lie algebras can generate a non-semisimple algebra (more about this is discussed with Proposition~\ref{prop:algebra:g0:g2:space}).
		\end{example}
	\end{tcolorbox}
	
	\begin{proposition}\label{prop:algebra:g0:g2:space}
		The Lie algebra generated by a set $\mathcal{G}$ of monomial generators with  $\mathcal{G}=\mathcal{G}^0\cup\mathcal{G}^2$ is isomorphic to a subalgebra of the Lie algebra $\slR{2}\oplus\R$. That is, $\langle\mathcal{G}\rangle_{\mathrm{Lie}} \subseteq \langle\hat{A}_1^0\oplus\hat{A}_1^2\rangle_{\mathrm{Lie}}=\spn\{i, ia^{\dagg}a, i(a^2 {+} (a^{\dagg})^2), a^2 {-} (a^{\dagg})^2\}\cong \slR{2}\oplus\R$.
	\end{proposition}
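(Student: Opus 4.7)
The plan is to proceed in two stages, first verifying the set-theoretic containment $\lie{\mathcal{G}} \subseteq \lie{\hat{A}_1^0\oplus\hat{A}_1^2} = \spn\{i,ia^\dagg a,i(a^2+(a^\dagg)^2),a^2-(a^\dagg)^2\}$, and then identifying this four-dimensional space with $\slR{2}\oplus\R$ as a Lie algebra. For the containment, I would go through the pairwise commutators among the four monomial generators $i$, $ia^\dagg a$, $i(a^2+(a^\dagg)^2)$, $a^2-(a^\dagg)^2$ listed in Table~\ref{tab:Full:Commutator:Algebra:(pesudo):schroedinger:algebra} of Appendix~\ref{App:commutation:relations}. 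The commutators with $i$ all vanish, $[ia^\dagg a, i(a^2+(a^\dagg)^2)]$ and $[ia^\dagg a, a^2-(a^\dagg)^2]$ land back in $\hat{A}_1^2$, while $[i(a^2+(a^\dagg)^2), a^2-(a^\dagg)^2]$ yields $-8i(a^\dagg a+\tfrac{1}{2})$, an element of $\hat{A}_1^0$. Hence the span of these four monomials is already closed under the bracket, so it equals $\lie{\hat{A}_1^0\oplus\hat{A}_1^2}$ and contains $\lie{\mathcal{G}}$.

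For the isomorphism, the key observation is that $i$ is central in the whole of $\hat{A}_1$, and in particular in $\lie{\hat{A}_1^0\oplus\hat{A}_1^2}$, so $\R\cdot i$ is a one-dimensional ideal. I would then isolate a complementary three-dimensional subalgebra by substituting the generator $ia^\dagg a$ with $i(a^\dagg a+\tfrac{1}{2}) = ia^\dagg a + \tfrac{1}{2}\cdot i$, which lies in the same four-dimensional span. With this change of basis the complement becomes exactly $\lie{\hat{A}_1^2}$, which was shown to be isomorphic to $\slR{2}$ in Proposition~\ref{prop:algebra:g2:space}. Since $i$ commutes with every element of $\lie{\hat{A}_1^2}$, the decomposition
\begin{equation*}
\lie{\hat{A}_1^0\oplus\hat{A}_1^2} \;=\; \R\cdot i \;\oplus\; \lie{\hat{A}_1^2}
\end{equation*}
is a direct sum of Lie algebras, and thus isomorphic to $\R\oplus\slR{2}$.

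The argument is largely bookkeeping on top of Proposition~\ref{prop:algebra:g2:space}, so there is no genuinely hard step. The only point requiring a small amount of care is the basis adjustment: one must note that while the natural monomial generator of $\hat{A}_1^0$ is $ia^\dagg a$, the natural basis of $\lie{\hat{A}_1^2} \cong \slR{2}$ coming from Proposition~\ref{prop:algebra:g2:space} uses $i(a^\dagg a+\tfrac{1}{2})$. Because the two differ only by $\tfrac{1}{2}\cdot i$ and $i$ is central, this shift has no effect on the commutator structure, so the direct-sum splitting is clean and well-defined. Therefore $\lie{\mathcal{G}} \subseteq \lie{\hat{A}_1^0\oplus\hat{A}_1^2} \cong \slR{2}\oplus\R$, completing the proof.
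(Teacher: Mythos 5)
Your proposal is correct and follows essentially the same route as the paper: verify closure of the four-dimensional span via the commutator table (with $[i(a^2+(a^\dagg)^2),a^2-(a^\dagg)^2]$ landing in $\hat{A}_1^0$), then split off the central element $i$ and identify the complement with $\lie{\hat{A}_1^2}\cong\slR{2}$ from Proposition~\ref{prop:algebra:g2:space}. The basis shift $ia^\dagg a\mapsto i(a^\dagg a+\tfrac{1}{2})$ that you flag is exactly the adjustment implicit in the paper's identification $H\leftrightarrow(H,0)$, $2i\leftrightarrow(0,2i)$, so nothing is missing.
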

	
	\begin{proof}
		We compute the relevant commutators: $[ia^{\dagg}a,i(a^2+(a^{\dagg})^2)]=2(a^2-(a^{\dagg})^2)$, $[ia^{\dagg}a,a^2-(a^{\dagg})^2]=-2i(a^2+(a^{\dagg})^2)$, and $[i(a^2{+}(a^{\dagg})^2),a^2{-}(a^{\dagg})^2]=-8ia^{\dagg}a-4i$. Since $i$ commutes with any element in $\hat{A}_1$, one has now shown that $\langle\mathcal{G}\rangle_{\mathrm{Lie}} \subseteq \langle\hat{A}_1^1\oplus\hat{A}_1^2\rangle_{\mathrm{Lie}}=\spn\{i, ia^{\dagg}a, i(a^2 {+} (a^{\dagg})^2), a^2 {-} (a^{\dagg})^2\}$. Thus, one simply needs to show that $\langle\hat{A}_1^0\oplus\hat{A}_1^2\rangle_{\mathrm{Lie}}$ is isomorphic to the outer direct sum $\slR{2}\oplus\R$. We use the basis $\{H,X,Y\}$ of $\lie{\hat{A}_1^2}$ introduced in the proof of Proposition~\ref{prop:algebra:g2:space}. Identifying $H\leftrightarrow(H,0)$, $X\leftrightarrow (X,0)$, $Y\leftrightarrow (Y,0)$, $2i\leftrightarrow(0,2i)$ and recalling that the Lie bracket on $\slR{2}\oplus\R$ is defined as $[(e_1,b_1),(e_2,b_2)]=([e_1,e_2],[b_1,b_2])$ yields clearly $\lie{\hat{A}_1^0\oplus\hat{A}_1^2}\cong\slR{2}\oplus\R$.
	\end{proof}
	
	\begin{proposition}\label{prop:algebra:g0:gperp:divergence}
		Given a set $\mathcal{G}$ of monomial generators with $\mathcal{G}=\mathcal{G}^0 \cup \mathcal{G}^{\perp}$ with $|\mathcal{G}^\perp|\leq 1$.
		Then the Lie algebra $\langle\mathcal{G}\rangle_{\mathrm{Lie}}$ is finite dimensional if and only if $\mathcal{G}^0\subseteq\{i\}$.
	\end{proposition}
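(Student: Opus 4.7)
The proof splits into the two directions of the equivalence.

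For the backward implication, suppose $\mathcal{G}^0 \subseteq \{i\}$. Then $\mathcal{G}$ consists of at most the central element $i$ together with at most one monomial $g_\sigma^\gamma \in \mathcal{G}^\perp$. Every pairwise commutator in $\mathcal{G}$ therefore vanishes, so $\langle \mathcal{G}\rangle_{\mathrm{Lie}} = \spn(\mathcal{G})$ is finite dimensional.

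For the forward implication I argue by contrapositive. Suppose $\mathcal{G}^0 \not\subseteq \{i\}$; since the only monomials in $\hat{A}_1^0$ are $i$ and $ia^\dagg a$, this forces $ia^\dagg a \in \mathcal{G}^0$. The nontrivial case is $\mathcal{G}^\perp = \{g_\sigma^\gamma\}$ with $\gamma = (\alpha,\beta)$, $\alpha > \beta$, and $|\gamma| = \alpha+\beta \geq 3$. The plan is to exhibit an infinite sequence of elements of strictly increasing degree in $\langle \mathcal{G}\rangle_{\mathrm{Lie}}$; since polynomials of distinct degrees are linearly independent, this suffices.

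The construction rests on three explicit identities derived from Lemma~\ref{lem:commutator:form}. A direct computation first yields the exact identities $[ia^\dagg a, g_+^\gamma] = \chi(\gamma)\, g_-^\gamma$ and $[ia^\dagg a, g_-^\gamma] = -\chi(\gamma)\, g_+^\gamma$ with $\chi(\gamma) = \alpha-\beta \neq 0$, which together place both $g_+^\gamma$ and $g_-^\gamma$ in the Lie closure. Lemma~\ref{lem:commutator:form} then supplies the leading-order identities
\begin{align*}
[g_+^\gamma,\, g_-^\gamma] &\upto{2|\gamma|-2} -(\alpha^2-\beta^2)\, g_+^{(\alpha+\beta-1)\tau}, \\
[g_+^{k\tau},\, g_+^\gamma] &\upto{|\gamma|+2k-2} 2k(\alpha-\beta)\, g_-^{\gamma_k}, \\
[g_-^{\gamma_k},\, g_+^\gamma] &\upto{|\gamma|+2k-2} (\alpha^2-\beta^2)\, g_+^{(\alpha+\beta+k-2)\tau},
\end{align*}
where $\gamma_k := (\alpha+k-1, \beta+k-1)$ and all displayed coefficients are nonzero for $\alpha > \beta$ and $k \geq 1$. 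A useful feature is that in each of these three identities the two terms produced by Lemma~\ref{lem:commutator:form} either collapse (as in the first, since $\chi(\gamma\circ\gamma^\dagg) = 0$) or combine additively (as in the third) to leave a single monomial as the leading-order output. Setting $P_1 := [g_+^\gamma, g_-^\gamma]$ and, recursively, $P_{n+1} := [P_n, g_+^\gamma]$, these identities show by induction that the leading term of $P_n$ is a single monomial, alternating between $g_+^{k_n\tau} \in \hat{A}_1^=$ for odd $n$ and $g_-^{\gamma_{k_n}} \in \hat{A}_1^\perp$ for even $n$, with each $k_n$ strictly larger than the previous. Consequently $\deg(P_n) = (n+1)|\gamma| - 2n \to \infty$, providing the required infinite family of linearly independent elements of $\langle \mathcal{G}\rangle_{\mathrm{Lie}}$.

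The principal technical point is handling the inductive step, where one must pass from the leading term of $P_n$ to that of $P_{n+1}$ without losing information to subleading corrections. Proposition~\ref{prop:equivalnce:relation:commutator} guarantees that $[P_n, g_+^\gamma]$ agrees with the commutator of the leading term of $P_n$ with $g_+^\gamma$ up to polynomials of degree strictly less than $\deg(P_n) + |\gamma| - 2$, so the leading-order computation carries through unaltered. The remaining verification is that the leading coefficient at each step does not vanish; this follows from the explicit identities above, whose coefficients are nonzero integer multiples of powers of $(\alpha-\beta)$ and therefore nonzero in the regime $\alpha > \beta \geq 0$.
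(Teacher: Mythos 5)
Your backward direction and your first two identities are fine, but the third identity is wrong, and the induction built on it collapses at $n=3$. For $\gamma_k=(\alpha+k-1,\beta+k-1)$ with $k\geq 2$, Lemma~\ref{lem:commutator:form} gives \emph{two} leading-order monomials in $[g_-^{\gamma_k},g_+^\gamma]$: the prefactor of $g_{+}^{\gamma_k+\gamma-\tau}$ is $\hat{\alpha}\beta-\hat{\beta}\alpha=-(k-1)(\alpha-\beta)\neq 0$ (it vanishes only for $k=1$, i.e.\ $\gamma_k=\gamma$), and the prefactor of the $\hat{A}_1^=$ monomial is $\alpha^2-\beta^2+(k-1)(\alpha-\beta)$, not $\alpha^2-\beta^2$. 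In your recursion the relevant value is $k=k_1=|\gamma|-1\geq 2$, so the leading part of $P_3$ is a sum of a monomial in $\hat{A}_1^\perp$ and one in $\hat{A}_1^=$ rather than a single monomial. From $P_4$ onward you are therefore commuting a genuine linear combination against $g_+^\gamma$, and the claim that "the leading coefficient does not vanish" no longer follows from your displayed identities: one must rule out cancellations between the images of the different leading monomials, which is precisely the delicate multi-index bookkeeping the paper carries out in Theorem~\ref{thm:andreea:sona} and Lemma~\ref{lem:help:lem:infinite:algebra:containg:support:perp:and:equal:elements}. As written, your induction hypothesis is false and the degree-growth argument is not established.

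The paper avoids this entirely: it notes (via Theorem 25 of the cited foundational work, i.e.\ $[ia^\dagg a,g_\sigma^\gamma]=\sigma\chimap{\gamma}g_{-\sigma}^\gamma$) that both $g_+^\gamma$ and $g_-^\gamma$ lie in the closure, and then invokes the Commutator Chain of type II machinery (Definition 47 and Theorem 53 of that reference) to get elements of strictly increasing degree. If you want a self-contained fix in the spirit of your construction, keep your $P_1\upto{2|\gamma|-2}-(\alpha^2-\beta^2)g_+^{(|\gamma|-1)\tau}$ but then iterate against the fixed element $P_1$ rather than against $g_+^\gamma$: setting $Q_1:=g_+^\gamma$ and $Q_{m+1}:=[Q_m,P_1]$, Proposition~\ref{prop:thm:25:equivalent}(c),(f) (equivalently Lemma~\ref{lem:help:lem:infinite:algebra:containg:support:perp:and:equal:elements}) shows that commutation with an element whose leading part is a single monomial of $\hat{A}_1^=$ sends a leading monomial $g_\sigma^\lambda$ with $\chimap{\lambda}\neq 0$ to a nonzero multiple of $g_{-\sigma}^{\lambda+(|\gamma|-2)\tau}$ plus lower-order terms, so each $Q_m$ retains a single leading monomial and $\deg(Q_m)$ increases by $2|\gamma|-4>0$ at every step.
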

	
	\begin{proof}
		Assume, one has $\mathcal{G}^0\subseteq\{i\}$ and $|\mathcal{G}^\perp|\leq 1$. In this case, the set $\mathcal{G}^0\cup\mathcal{G}^\perp$ generates an abelian algebra, and is henceforth finite dimensional. This trivially follows from the fact that: (a) $[i, g_{\sigma}^{\gamma}]=0$ for any $g_\sigma^\gamma\in\mathcal{G}^\perp$; and (b) $\mathcal{G}^\perp$ contains at most one element. 
		
		Suppose, on the other hand, $ia^\dagg a\in\mathcal{G}^0$ and $\mathcal{G}^\perp\neq\emptyset$. Then, by Theorem 25 from \plainrefs{Bruschi:Xuereb:2024}, both monomials $g_+^\gamma,g_-^\gamma$ lie in the Lie algebra $\langle \mathcal{G}^0\cup\mathcal{G}^\perp\rangle_\mathrm{Lie}$ for any $g_\sigma^\gamma\in\mathcal{G}^\perp$. Such two operators allow for the construction of a Commutator Chain of type II (see Definition 47 in \plainrefs{Bruschi:Xuereb:2024}) that contains elements of strictly increasing degree (due to Theorem 53 from \plainrefs{Bruschi:Xuereb:2024}). Hence, since the whole Commutator Chain is contained in the algebra this in turn makes $\langle\mathcal{G}\rangle_{\mathrm{Lie}}$ infinite-dimensional in this case. The claim follows from the observation that $|\mathcal{G}^\perp|=1$ is a subcase of $\mathcal{G}^\perp\neq\emptyset$.
	\end{proof}	
	
	Before proceeding, we introduce a graphic-representation tool for algebras that proves very useful in this work.

	\begin{definition}[Labeled, directed commutator graph for Lie algebras]\label{directed:graphs}
		We introduce here the concept of labeled directed graphs for Lie algebras, similarly introduced in \plainrefs{Kazi:2025}. A graph is composed of vertices and edges that connect pairs of vertices. We choose the edges and vertices as follows: let $\g$  be a Lie algebra. Then, the vertices and edges of the graph represent elements of $\g$, and the labelling is such that the target vertex is obtained as the commutator between the starting vertex with the element that labels the edge, each item appearing modulo an overall multiplication constant. This means that $x\in\g$ and $c\cdot x\in\g$ label the same vertices and edges for all $x\in\g$ and $c\in\mathbb{R}\setminus\{0\}$. The basic building block reads as
		
		\vspace{0.1cm}
		{{\color{white}test}\;\hspace{3.6cm}\graphlegend{Black}}
		
		\vspace{0.1cm}
	\end{definition}
	
	We now move on defining the \textit{Schr\"odinger algebra}, which is of particular importance to our work.
	
	\begin{definition}[Schr\"odinger algebra]\label{def:Schroedinger:algebra}
		The real \emph{Schr\"odinger algebra} $\mathcal{S}$ is the six-dimensional Lie algebra with basis $\mathcal{G}:=\{h,x,y,q,p,z\}$, defined by the commutation relations:
		\begin{align*}
			[h,x]&=2x,\;&\;[h,y]&=-2y,\;&\;[x,y]&=h,\;&\;[h,q]&=q,\;&\;[h,p]&=-p,\\
			[x,q]&=0,\;&\;[x,p]&=q,\;&\;[y,q]&=p,\;&\;[y,p]&=0,\;&\;[q,p]&=z,
		\end{align*}
		while $[z,\mathcal{S}]=0$. The Schr\"odinger algebra $\mathcal{S}$ contains notably the subalgebras $\slR{2}=\langle\{h,x,y\}\rangle_{\mathrm{Lie}}$ and $\gh_1=\langle\{q,p,z\}\rangle_{\mathrm{Lie}}$, and it is the semidirect sum (see also Def.~\ref{def:semidirect:product})
		$\mathcal{S}\cong\slR{2}\ltimes\gh_1\cong\lie{\{h,x,y,q,p,z\}}$ \plainrefs{Yang:2018}.
	\end{definition}
	
	\begin{figure}[htpb]
		\centering
		\includegraphics[width=1.0\linewidth]{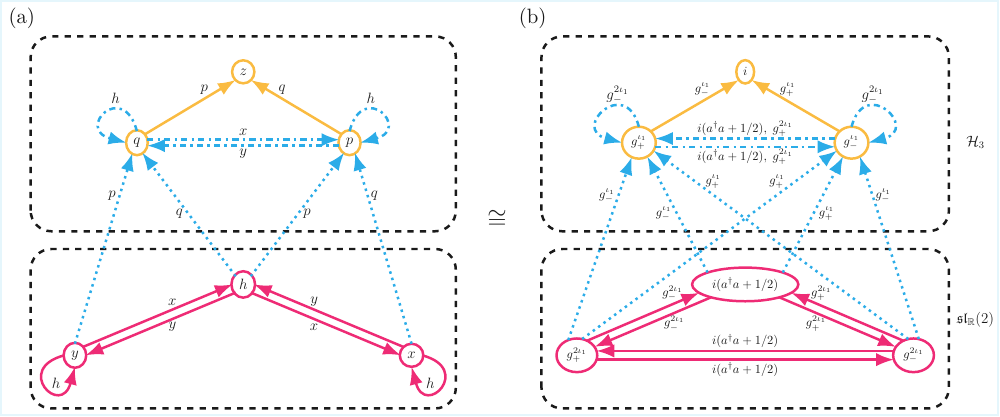}
		\caption[Depiction of the Schr\"odinger algebra using a directed graphs.]{Depiction of the Schr\"odinger algebra using a directed graphs as presented in Definition~\ref{directed:graphs}. 
			In (a) the chosen basis is the one found in Definition~\ref{def:Schroedinger:algebra}; in (b) the chosen basis is $\{i,g_+^{\iota_1},g_-^{\iota_1},i(a^\dagg a+1/2), g_+^{2\iota_1},g_-^{2\iota_1}\}$, inspired by the basis of $\hat{A}_1^0\oplus\hat{A}_1^1\oplus\hat{A}_1^2$.
			The red edges and vertices belong to the special linear algebra $\slR{2}$, the yellow ones to the Heisenberg algebra $\gh_1$, and the blue dotted edges represent commutation relations between elements from $\slR{2}$ and $\gh_1$. This highlights the structure of the Schr\"odinger algebra as a semidirect sum of the special linear algebra and the Heisenberg algebra, further emphasized by the dotted rectangles encircling the basis elements of the respective subalgebras. }
		\label{clone:trooper}
	\end{figure}
	
	We have to remark that the term \enquote{Schr\"odinger algebra} is used in the literature to refer to different but related algebraic structures \plainrefs{Dobrev:1997,Nikitin:2020,Duval:1994,Aizawa:2011,Tao:2022}.  
	The Schr\"odinger algebra, as defined here, can be regarded as the fundamental version of any of the algebras used in the literature described by this name. Generally, the Schr\"odinger algebra is related to the symmetries of the Schr\"odinger equation of a free particle. 
	
	There are now two trains of thought on extending the fundamental Schr\"odinger algebra: (i) One idea is it to extend the group of symmetries of the free Schr\"odinger equation from a $(1{+}1)$-dimensional flat spacetime to an $(n{+}1)$-dimensional spacetime. One usually writes $\sch(1,3)$ if one refers to the 13-dimensional algebra in a $(3{+}1)$-dimensional spacetime \plainrefs{Nikitin:2020}. Other sources write alternatively $\widetilde{\sch}(n)$ \plainrefs{Duval:1994} or $\gs(n)$ \plainrefs{Aizawa:2011}, and call these algebras an \textit{extended} or \textit{super Schr\"odinger algebra}, since these algebra can be obtained by extending the \textit{Galileo algebra} $\g(1,n)$ \plainrefs{Nikitin:2020}. The Schr\"odinger algebra given in Definition~\ref{def:Schroedinger:algebra} is, using this notion, the $(1{+}1)$-dimensional Schr\"odinger algebra \plainrefs{Aizawa:2011}; (ii) The other idea is to extend the fundamental Schr\"odinger algebra, also called the first Schr\"odinger algebra $\sch_1$ \plainrefs{Tao:2022}, from Definition~\ref{def:Schroedinger:algebra} to the $n$-th Schr\"odinger algebra $\sch_n$ by taking the semidirect sum of $\slR{2}$ and the $n$-th Heisenberg algebra $\gh_n$. This algebra is then spanned by the elements $\{h,x,y,q_1,\ldots,q_n,p_1,\ldots,p_n,z\}$ which satisfy the following commutation relations:
	\begin{align*}
		[h,x]&=2x,\;&\;[h,y]&=-2y,\;&\;[x,y]&=h,\;&\;[h,q_j]&=q_j,\;&\;[h,p_j]&=-p_j,\\
		[x,q_j]&=0,\;&\;[x,p_j]&=q_j,\;&\;[y,q_j]&=p_j,\;&\;[y,p_j]&=0,\;&\;[q_j,p_k]&=z\delta_{jk},
	\end{align*}
	while $[z,\sch_n]=0$, see \plainrefs{Tao:2022}. 
	
	Note, furthermore, that in the context of this work, the natural extension of the Schr\"odinger algebra to multiple bosonic modes would be the Lie algebra $\lie{\hat{A}_n^0\oplus\hat{A}_n^1\oplus\hat{A}_n^2}=\lie{\hat{A}_n^{\leq 2}}$. This algebra, however, is $(2n^2{+}3n{+}1)$-dimensional, since it contains $n$ linearly independent free Hamiltonian terms $ia_j^\dagg a_j$, the central element $i$, $2n$ single-mode squeezing operators $g_\sigma^{2\iota_j}$, $n(n{-}1)$ two-mode squeezing operators $g_\sigma^{\iota_j+\iota_k}$, $n(n{-}1)$ mode-mixing operators $g_\sigma^{\iota_j+\iota_{n+k}}$, and finally $2n$ quadrature operators $g_\sigma^{\iota_k}$. Thus, for two modes, the dimension of $\lie{\hat{A}_2^{\leq2}}$ is already fifteen, which does not agree with the other two conventions.
	
	A visualization of the Schrödinger algebra, using labeled, directed graphs, as described in Definitions~\ref{directed:graphs} can be found in Figure~\ref{clone:trooper}. This graph-representation of Lie algebras visually illustrates structural properties. For instance, $\slR{2}$ is clearly non-solvable, as every vertex $v$ contained within the graph of $\slR{2}$ can be obtained from at least one other vertex contained within using edges pointing towards $v$. This implies that its first derived algebra is $\slR{2}$ itself. Calculating the derived series corresponds to iteratively removing every vertex to which no edge points, along all associated edges, i.e., all edges that are either labeled by the removed edges or originate therefrom. It is therefore clear that the presence of a closed directed walk, where all the vertices are connected via edges labeled by vertex-elements of the walk itself, indicates that the algebra is non-solvable. This graph structure also facilitates the identification of nilpotent algebras, since computing the lower central series involves removing, at each step, every vertex to which no edge points, along with all edges that originate from those vertices, but not the edges that are labeled with these vertices. Thus, we can see that $\gh_1$ is nilpotent, as the first derived algebra is represented solely by basis element $i$, and no edge points from this vertex to another one, implying that the second element of the lower series is already the trivial one. 
	
	\begin{proposition}\label{prop:algebra:g1:g2:space}
		The Lie algebra generated by a set
		$\mathcal{G}$ of monomial generators with
		$\mathcal{G}=\mathcal{G}^1\cup\mathcal{G}^2$ is a subspace of $\hat{A}_1^0\oplus\hat{A}_1^1\oplus\hat{A}_1^2 =\hat{A}_1^{\leq 2}$. That is, $\langle\mathcal{G}\rangle_{\mathrm{Lie}} \subseteq \langle\hat{A}_1^1\oplus\hat{A}_1^2\rangle_{\mathrm{Lie}}= \spn\{ i, ia^{\dagg}a, i(a{+}a^{\dagg}), a{-}a^{\dagg}, i(a^2{+}(a^{\dagg})^2), a^2{-}(a^{\dagg})^2\}$. One has furthermore $\langle\mathcal{G}^{\leq 2}\rangle_{\mathrm{Lie}} \subseteq \langle\hat{A}_1^{\leq 2}\rangle_{\mathrm{Lie}}= \langle\hat{A}_1^1\oplus\hat{A}_1^2\rangle_{\mathrm{Lie}}$, where $\mathcal{G}^0\cup\mathcal{G}^1\cup \mathcal{G}^2:=\mathcal{G}^{\leq 2}$. 
	\end{proposition}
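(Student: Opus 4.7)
The plan is to prove the proposition by showing that $\hat{A}_1^{\leq 2}$ is itself a six-dimensional Lie subalgebra of $\hat{A}_1$, so that the Lie closure of any subset of $\hat{A}_1^1\cup\hat{A}_1^2$, and a fortiori of $\hat{A}_1^{\leq 2}$, is automatically contained in it. First I would observe that the six listed elements are, by Definition~\ref{def:vector:spaces:a:1:k}, precisely the concatenated bases $\mathcal{B}_1^0\cup\mathcal{B}_1^1\cup\mathcal{B}_1^2$ and therefore span $\hat{A}_1^{\leq 2}$. The entire task thus reduces to verifying closure of this span under the Lie bracket and, separately, the reverse inclusion asserting that every basis element can be reached from $\hat{A}_1^1\cup\hat{A}_1^2$ by iterated commutators.

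Closure is obtained by inspecting the six bilinear pairings among the summands $\hat{A}_1^0$, $\hat{A}_1^1$, $\hat{A}_1^2$. Four of these are already settled: Proposition~\ref{prop:A11:algbera:heisenberg} gives $[\hat{A}_1^1,\hat{A}_1^1]\subseteq\spn\{i\}$, Proposition~\ref{prop:algebra:go:g1:space} gives $[\hat{A}_1^0,\hat{A}_1^1]\subseteq\hat{A}_1^1$, Proposition~\ref{prop:algebra:g2:space} gives $[\hat{A}_1^2,\hat{A}_1^2]\subseteq\hat{A}_1^0\oplus\hat{A}_1^2$, and Proposition~\ref{prop:algebra:g0:g2:space} gives $[\hat{A}_1^0,\hat{A}_1^2]\subseteq\hat{A}_1^2$. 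The only remaining case is the cross-term $[\hat{A}_1^1,\hat{A}_1^2]$. Here I would invoke Lemma~\ref{important:monomial:commutator:lemma}: any nonzero commutator of two monomials $g_\sigma^\gamma\in\hat{A}_1^1$ and $g_{\hat\sigma}^{\hat\gamma}\in\hat{A}_1^2$ has degree exactly $|\gamma|+|\hat\gamma|-2 = 1$. Since the only skew-hermitian polynomials of degree at most one are linear combinations of $i$, $i(a+a^\dagg)$, and $a-a^\dagg$, every such output lies in $\spn\{i\}\oplus\hat{A}_1^1\subseteq\hat{A}_1^{\leq 2}$. Extending from monomials to arbitrary elements via Proposition~\ref{prop:equivalnce:relation:commutator} and bilinearity then gives $\lie{\hat{A}_1^1\oplus\hat{A}_1^2}\subseteq\hat{A}_1^{\leq 2}$ as well as $\lie{\hat{A}_1^{\leq 2}}\subseteq\hat{A}_1^{\leq 2}$.

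For the stated equality, the reverse inclusion is immediate: Proposition~\ref{prop:A11:algbera:heisenberg} places $i$ inside $\lie{\hat{A}_1^1}$, Proposition~\ref{prop:algebra:g2:space} places $i(a^\dagg a+\tfrac{1}{2})$ inside $\lie{\hat{A}_1^2}$, and subtracting an appropriate multiple of $i$ produces $ia^\dagg a$; the remaining four basis elements lie tautologically in $\hat{A}_1^1\cup\hat{A}_1^2$. Finally, the secondary claim $\lie{\mathcal{G}^{\leq 2}}\subseteq\lie{\hat{A}_1^{\leq 2}}$ follows because $\mathcal{G}^{\leq 2}\subseteq\hat{A}_1^{\leq 2}$ by Definition~\ref{def:gens:mon:partition}, combined with the closure just established. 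I do not expect any serious obstacle in executing this plan: the only subtlety is ensuring that a cross-commutator cannot acquire a component along $ia^\dagg a\in\hat{A}_1^0$, and this is ruled out precisely because $ia^\dagg a$ has degree two while Lemma~\ref{important:monomial:commutator:lemma} forces the output of $[\hat{A}_1^1,\hat{A}_1^2]$ to have degree at most one.
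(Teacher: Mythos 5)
Your proof is correct and follows the same overall structure as the paper's: reduce the claim to closure of $\spn\{i,ia^{\dagg}a,i(a{+}a^{\dagg}),a{-}a^{\dagg},i(a^2{+}(a^{\dagg})^2),a^2{-}(a^{\dagg})^2\}$ under the bracket, dispatch most pairings by citing Propositions~\ref{prop:A11:algbera:heisenberg}, \ref{prop:algebra:go:g1:space}, \ref{prop:algebra:g2:space}, and~\ref{prop:algebra:g0:g2:space}, and then treat the one genuinely new pairing $[\hat{A}_1^1,\hat{A}_1^2]$. The difference lies in how that cross-term is handled: the paper reads the four commutators off Table~\ref{tab:Full:Commutator:Algebra:(pesudo):schroedinger:algebra} and obtains the sharper containment $[\hat{A}_1^1,\hat{A}_1^2]\subseteq\hat{A}_1^1$, whereas you invoke the degree dichotomy of Lemma~\ref{important:monomial:commutator:lemma} to conclude that any such commutator has degree at most $1+2-2=1$ and hence lies in $\spn\{i\}\oplus\hat{A}_1^1$. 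Your containment is slightly weaker but entirely sufficient, and it buys a computation-free argument that in particular rules out an $ia^{\dagg}a$ component on degree grounds alone; the table-based route buys the exact image of the cross-bracket, which the paper reuses elsewhere (e.g., in identifying the semidirect-sum structure of the Schr\"odinger algebra). Two trivial housekeeping points: of the six pairings only five are accounted for explicitly in your write-up, the sixth being $[\hat{A}_1^0,\hat{A}_1^0]=\{0\}$, which is immediate from Proposition~\ref{prop:commutator:g0:g=:g0:g=}; and your explicit verification that $i$ and $ia^{\dagg}a$ are reachable from $\hat{A}_1^1\cup\hat{A}_1^2$ is a welcome expansion of a step the paper leaves terse.
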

	
	\begin{proof}
		The relevant commutation relations are listed in Table~\ref{tab:Full:Commutator:Algebra:(pesudo):schroedinger:algebra}. These relations imply $[\mathcal{G}^1,\mathcal{G}^2]\subseteq\hat{A}_1^1$ and more generally $[\hat{A}_1^1,\hat{A}_1^2]\subseteq\hat{A}_1^1$. From Propositions~\ref{prop:algebra:g0:g2:space} and~\ref{prop:A11:algbera:heisenberg}, we also know: $[\mathcal{G}^2,\mathcal{G}^2]\subseteq\langle\mathcal{G}^0\cup \mathcal{G}^2\rangle_{\mathrm{Lie}}\subseteq\hat{A}_1^0\oplus\hat{A}_1^2$ and $\langle\mathcal{G}^1\rangle_{\mathrm{Lie}}\subseteq\hat{A}_1^0\oplus\hat{A}_1^1$. Combining these, we conclude:  $\langle\mathcal{G}\rangle\subseteq\hat{A}_1^{\leq 2}=\spn\{ i, ia^{\dagg}a, i(a{+}a^{\dagg}), a{-}a^{\dagg}, i(a^2{+}(a^{\dagg})^2), a^2{-}(a^{\dagg})^2\}$ and furthermore $\langle\hat{A}_1^1\oplus\hat{A}_1^2\rangle_{\mathrm{Lie}}=\langle\hat{A}_1^{\leq 2}\rangle_{\mathrm{Lie}}= \spn\{ i, ia^{\dagg}a, i(a{+}a^{\dagg}), a{-}a^{\dagg}, i(a^2{+}(a^{\dagg})^2), a^2{-}(a^{\dagg})^2\}$.
	\end{proof}
	
	\begin{corollary}\label{cor:decomposition:a0:a1:a2:space}
		The space $\lie{\hat{A}_1^0\oplus\hat{A}_1^1\oplus\hat{A}_1^2}=\lie{\hat{A}_1^{\leq 2}}$ can be decomposed into a direct sum of the spaces $\lie{\hat{A}_1^1}$ and $\lie{\hat{A}_1^2}$, i.e., $\lie{\hat{A}_1^{\leq 2}}=\lie{\hat{A}_1^1}\oplus\lie{\hat{A}_1^2}$.
	\end{corollary}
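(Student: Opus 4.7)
My plan is to interpret the direct sum in the corollary as a direct sum of vector spaces (and not of Lie algebras, since $\lie{\hat{A}_1^{\leq 2}}$ is already identified with the Schr\"odinger algebra $\mathcal{S}\cong\slR{2}\ltimes\gh_1$, which is only a semidirect sum). The proof then reduces to two checks: that the sum of $\lie{\hat{A}_1^1}$ and $\lie{\hat{A}_1^2}$ already exhausts $\lie{\hat{A}_1^{\leq 2}}$, and that these two subspaces intersect trivially.

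For the first check, I will appeal to the explicit bases provided in the preceding propositions. Proposition~\ref{prop:A11:algbera:heisenberg} gives $\lie{\hat{A}_1^1}=\spn\{2i,i(a+a^\dagg),a-a^\dagg\}$, while Proposition~\ref{prop:algebra:g2:space} gives $\lie{\hat{A}_1^2}=\spn\{i(a^\dagg a+\tfrac{1}{2}),i(a^2+(a^\dagg)^2),a^2-(a^\dagg)^2\}$. The union of these two bases has six elements, and since $i(a^\dagg a+\tfrac{1}{2})$ together with $2i$ spans $\spn\{i,ia^\dagg a\}$, the union spans exactly $\spn\{i,ia^\dagg a,i(a+a^\dagg),a-a^\dagg,i(a^2+(a^\dagg)^2),a^2-(a^\dagg)^2\}$, which by Proposition~\ref{prop:algebra:g1:g2:space} equals $\lie{\hat{A}_1^{\leq 2}}$. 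Hence $\lie{\hat{A}_1^1}+\lie{\hat{A}_1^2}=\lie{\hat{A}_1^{\leq 2}}$.

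For the second check, I will argue by degree. Every element of $\lie{\hat{A}_1^1}$ is a linear combination of $2i$, $i(a+a^\dagg)$, and $a-a^\dagg$, and therefore has degree at most $1$. On the other hand, the three basis elements of $\lie{\hat{A}_1^2}$ have canonical-basis expansions whose degree-$2$ parts are $ia^\dagg a$, $i(a^2+(a^\dagg)^2)$, and $a^2-(a^\dagg)^2$, respectively, and these are linearly independent in $A_1$. Consequently, any non-zero element of $\lie{\hat{A}_1^2}$ has a non-vanishing degree-$2$ component, so it cannot lie in $\lie{\hat{A}_1^1}$. This gives $\lie{\hat{A}_1^1}\cap\lie{\hat{A}_1^2}=\{0\}$, and combined with the previous step establishes the asserted direct-sum decomposition.

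The only potential pitfall I foresee is the notational one noted above: one must be careful not to read $\oplus$ here as a Lie-algebra direct sum, because $[\lie{\hat{A}_1^1},\lie{\hat{A}_1^2}]\neq\{0\}$ in general (as made explicit by the commutation tables in Appendix~\ref{App:commutation:relations}). Beyond this interpretive point, the argument is a routine dimension count combined with a degree argument, so no deeper obstacle is expected.
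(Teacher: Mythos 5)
Your proposal is correct and follows essentially the same route as the paper's proof: both invoke Propositions~\ref{prop:algebra:g1:g2:space}, \ref{prop:A11:algbera:heisenberg}, and~\ref{prop:algebra:g2:space} for the explicit spanning sets, then verify that the two subspaces together span $\lie{\hat{A}_1^{\leq 2}}$ and intersect trivially. Your degree argument merely makes explicit what the paper dismisses as ``evident,'' and your remark that $\oplus$ is a vector-space (not Lie-algebra) direct sum is a correct and worthwhile clarification.
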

	
	\begin{proof}
		By Proposition~\ref{prop:algebra:g1:g2:space}, we obtain $\langle\hat{A}_1^{\leq 2}\rangle_{\mathrm{Lie}}= \spn\{ i, ia^{\dagg}a, i(a{+}a^{\dagg}), a{-}a^{\dagg}, i(a^2{+}(a^{\dagg})^2), a^2{-}(a^{\dagg})^2\}$. Furthermore Propositions~\ref{prop:A11:algbera:heisenberg} and~\ref{prop:algebra:g2:space} state that $\lie{\hat{A}_1^1}=\spn\{i, i(a+a^{\dagg}), a-a^{\dagg}\}$ and $\lie{\hat{A}_1^2}=\spn\{i(a^\dagg a+1/2),i(a^2+(a^{\dagg})^2), a^2-(a^{\dagg})^2\}$. It is evident that $\lie{\hat{A}_1^1}\cap \lie{\hat{A}_1^2}=\{0\}$, and that the union of the two spans cover the full space: $\spn\{i, i(a+a^{\dagg}), a-a^{\dagg}\}\oplus\spn\{i(a^\dagg a+1/2),i(a^2+(a^{\dagg})^2), a^2-(a^{\dagg})^2\}=\spn\{ i, ia^{\dagg}a, i(a+a^{\dagg}), a-a^{\dagg}, i(a^2+(a^{\dagg})^2), a^2-(a^{\dagg})^2\}$. Hence, we conclude: $\lie{\hat{A}_1^{\leq 2}}=\lie{\hat{A}_1^1}\oplus\lie{\hat{A}_1^2}$.
	\end{proof}

	\begin{proposition}\label{prop:schroedinger:algebra}
		The Lie algebra generated by the vector space $\hat{A}_1^0\oplus\hat{A}_1^1\oplus\hat{A}_1^2=\hat{A}_1^{\leq2}$ is isomorphic to the Schr\"odinger algebra $\mathcal{S}$, that is, $\langle\hat{A}_1^{\leq2}\rangle_{\mathrm{Lie}}\cong\mathcal{S}=\slR{2}\ltimes\gh_1$.
	\end{proposition}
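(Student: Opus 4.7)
The plan is to promote the vector-space decomposition $\lie{\hat{A}_1^{\leq 2}} = \lie{\hat{A}_1^1} \oplus \lie{\hat{A}_1^2}$ supplied by Corollary~\ref{cor:decomposition:a0:a1:a2:space} into a semidirect-sum identification with the Schr\"odinger algebra $\mathcal{S} = \slR{2} \ltimes \gh_1$. Since Propositions~\ref{prop:A11:algbera:heisenberg} and~\ref{prop:algebra:g2:space} already identify the two summands abstractly with $\gh_1$ and $\slR{2}$, only two things remain: first, to verify that $\lie{\hat{A}_1^1}$ is a two-sided ideal in $\lie{\hat{A}_1^{\leq 2}}$; and second, to exhibit an explicit basis whose brackets match the defining relations of Definition~\ref{def:Schroedinger:algebra}.

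For the ideal property I would invoke Lemma~\ref{important:monomial:commutator:lemma}: a non-vanishing commutator between a monomial in $\hat{A}_1^2$ (degree two) and one in $\hat{A}_1^1$ (degree one) has degree exactly $1$, and therefore lies in $\hat{A}_1^1 \subseteq \lie{\hat{A}_1^1}$. Combining this observation with $[\lie{\hat{A}_1^1},\lie{\hat{A}_1^1}] \subseteq \R\,i \subseteq \lie{\hat{A}_1^1}$ from Proposition~\ref{prop:A11:algbera:heisenberg} then shows that $\lie{\hat{A}_1^1}$ is normalized by all of $\lie{\hat{A}_1^{\leq 2}}$.

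For the explicit isomorphism I would start from the $\slR{2}$-triple $\{H, X, Y\}$ constructed in Proposition~\ref{prop:algebra:g2:space} and then pass to an adapted basis of the Heisenberg ideal. The subtle point is that the natural pair $\{g_+^{\iota_1},g_-^{\iota_1}\}$ is \emph{not} an eigenbasis of $\operatorname{ad}(H)$: a direct computation in the spirit of Example~\ref{exa:different:realizations:of:sl2} gives $[H,g_+^{\iota_1}] = g_+^{\iota_1}$ together with $[H,g_-^{\iota_1}] = -g_-^{\iota_1} - 2\,g_+^{\iota_1}$, i.e.\ a Jordan-type off-diagonal contribution along $g_+^{\iota_1}$. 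Diagonalising this $2{\times}2$ block yields the eigenvectors $q := g_+^{\iota_1}$ (eigenvalue $+1$) and $p := g_+^{\iota_1} + g_-^{\iota_1}$ (eigenvalue $-1$), together with $z := [q,p] = 2i$. Setting $h := H$ and absorbing the normalisations of the $\slR{2}$-triple by $x := -X/\sqrt{2}$ and $y := -\sqrt{2}\,Y$, a routine check of the remaining Schr\"odinger relations $[x,q]=0$, $[x,p]=q$, $[y,q]=p$, $[y,p]=0$, $[x,y]=h$, $[h,q]=q$, $[h,p]=-p$ using the commutator identities catalogued in Appendix~\ref{App:commutation:relations} completes the identification with $\mathcal{S}$.

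The main obstacle is precisely this bookkeeping: although the semidirect-sum structure is morally transparent once the ideal property is in place, the natural monomial generators of $\hat{A}_1^1$ are misaligned with the Jordan form of $\operatorname{ad}(H)$, and one must first pass to the eigenbasis $\{q,p\}$ and rescale the $\slR{2}$-triple by appropriate factors of $\sqrt{2}$ before the commutators land on the relations of Definition~\ref{def:Schroedinger:algebra} verbatim.
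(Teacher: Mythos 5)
Your proposal is correct, and I verified the key computations: with $H=i(a^\dagg a+\tfrac12)+\tfrac12(g_+^{2\iota_1}+g_-^{2\iota_1})$ one indeed gets $[H,g_+^{\iota_1}]=g_+^{\iota_1}$ and $[H,g_-^{\iota_1}]=-g_-^{\iota_1}-2g_+^{\iota_1}$, so your diagonalisation to $q=g_+^{\iota_1}$, $p=g_+^{\iota_1}+g_-^{\iota_1}$ is genuinely necessary for your choice of $\slR{2}$-triple, and the rescaled triple $h=H$, $x=-X/\sqrt{2}$, $y=-\sqrt{2}Y$ then satisfies all the relations of Definition~\ref{def:Schroedinger:algebra}. (Note, incidentally, that your bracket $[H,g_-^{\iota_1}]$ disagrees with the value $-g_-^{\iota_1}$ reported in Example~\ref{exa:different:realizations:of:sl2}; your value is the correct one.) The only slip is the sign $[q,p]=[g_+^{\iota_1},g_-^{\iota_1}]=-2i$ rather than $2i$, which is immaterial since $z$ is defined as this bracket.

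The paper's proof follows the same overall strategy — exhibit an explicit six-element basis and check the Schr\"odinger relations against Table~\ref{tab:Full:Commutator:Algebra:(pesudo):schroedinger:algebra} — but sidesteps your Jordan-block bookkeeping entirely by choosing a \emph{different} Cartan element: it takes $\tilde h:=g_-^{2\iota_1}/2$ together with $\tilde x:=-(2i(a^\dagg a+\tfrac12)+g_+^{2\iota_1})/4$ and $\tilde y:=(2i(a^\dagg a+\tfrac12)-g_+^{2\iota_1})/4$, for which the natural monomials $\tilde q=g_+^{\iota_1}$ and $\tilde p=g_-^{\iota_1}$ are already $\operatorname{ad}(\tilde h)$-eigenvectors with eigenvalues $\pm1$. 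Your route buys continuity with the triple of Proposition~\ref{prop:algebra:g2:space} and makes the semidirect-sum structure explicit via the ideal argument (which is sound: a commutator of a degree-two with a degree-one monomial has degree at most one and hence lies in $\spn\{i,g_+^{\iota_1},g_-^{\iota_1}\}=\lie{\hat{A}_1^1}$); the paper's route buys a cleaner basis at the cost of abandoning the previously constructed $\{H,X,Y\}$. The ideal verification is not strictly needed once the explicit basis is checked, but it is a useful structural addendum.
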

	
	\begin{proof}
		In Proposition~\ref{prop:algebra:g1:g2:space}, we have seen that $\langle\hat{A}_1^{\leq2}\rangle_{\mathrm{Lie}}=\lie{\hat{A}_1^1\oplus\hat{A}_1^2}=\spn\{i,ia^\dagg a,i(a+a^\dagg),a-a^\dagg, i(a^2)+(a^\dagg)^2),a^2-(a^\dagg)^2\}$. We define now the elements $\tilde{h}:=(a^2-(a^\dagg)^2)/2=g_-^{2\iota_1}/2$, $\tilde{x}:=-i(a+a^\dagg)^2/4=i (g_+^{\iota_1})^2/4 $, $\tilde{y}:=-i(a-a^\dagg)^2/4=-i (g_-^{\iota_1})^2/4$, $\tilde{q}:=g_+^{\iota_1}$, $\tilde{p}:=g_-^{\iota_1}$, and $\tilde{z}:=-2i$. Table~\ref{tab:Full:Commutator:Algebra:(pesudo):schroedinger:algebra} allows us to write:
		\begin{align*}
			[\tilde{h},\tilde{x}]&=2\tilde{x},\;&\;
			[\tilde{h},\tilde{y}]&=-2\tilde{x},\;&\;
			[\tilde{x},\tilde{y}]&=\tilde{h},\;&\;
			[\tilde{h},\tilde{q}]&=g_+^{\iota_1}=\tilde{q},\;&\;
			[\tilde{h},\tilde{p}]&=-g_-^{\iota_1}=-\tilde{p},\\
			[\tilde{q},\tilde{p}]&=-2i=\tilde{z},\;&\;
			[\tilde{x},\tilde{q}]&=0,\;&\;
			[\tilde{x},\tilde{p}]&=\tilde{q},\;&\;
			[\tilde{y},\tilde{p}]&=0,\;&\;
			[\tilde{y},\tilde{q}]&=\tilde{p},
		\end{align*}
		while $\tilde{z}=-2i$ commutes with every element. Thus $\lie{\{\tilde{h},\tilde{x},\tilde{y},\tilde{q},\tilde{p},\tilde{z}\}}\cong\mathcal{S}$ by Definition~\ref{def:Schroedinger:algebra}. One has furthermore $\tilde{x}=-(2i(a^\dagg a{+}1/2)+g_+^{2\iota_1})/4$ and $\tilde{y}=(2i(a^\dagg a{+}1/2)-g_+^{2\iota_1})/4$ and therefore $\lie{\{\tilde{h},\tilde{x},\tilde{y},\tilde{q},\tilde{p},\tilde{z}\}}=\lie{\hat{A}_1^1\oplus\hat{A}_1^2}$.
	\end{proof}
	
	In the case presented above, it was straightforward to choose a basis that rendered establishing a Lie algebra-isomorphism between $\lie{\hat{A}_1^0\oplus\hat{A}_1^1\oplus\hat{A}_1^2}$ and the Schr\"odinger algebra $\mathcal{S}$ a trivial task. However, this task is generally quite challenging when the two algebras $\g$ and $\g'$ under consideration share the same invariants, such as the dimension, the dimension of the center, or the dimension of the first derived algebra. In such a case, one is typically required to solve a system of $m^2(m{+}1)/2 +1$ non-linear equations with $m^2+1$ variables, where $m$ is the dimension of the two algebras. For a more detailed discussion, see Appendix~\ref{App:Testing:Isomorphisms}.  
	
	\begin{proposition}\label{prop:algebra:g1:g2:g=:space}
		The Lie algebra generated by a set $\mathcal{G}$ of monomial generators with
		$\mathcal{G}=\mathcal{G}^1 \cup \mathcal{G}^2 \cup \mathcal{G}^=$ is infinite dimensional if and only if  both $\mathcal{G}^1 \cup \mathcal{G}^2 \neq \varnothing$ and $\mathcal{G}^= \neq \varnothing$ hold.
	\end{proposition}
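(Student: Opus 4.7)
The plan is to dispatch the two directions separately. The implication ``$\mathcal{G}^1\cup\mathcal{G}^2=\varnothing$ or $\mathcal{G}^= =\varnothing$ $\Rightarrow$ $\lie{\mathcal{G}}$ finite-dimensional'' follows by invoking earlier results, while the converse requires exhibiting an infinite sequence of linearly independent elements of $\lie{\mathcal{G}}$ via iterated commutators against a fixed generator drawn from $\mathcal{G}^=$.

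For the ($\Leftarrow$) direction, if $\mathcal{G}^= = \varnothing$ then $\mathcal{G} \subseteq \hat{A}_1^{\leq 2}$, so Proposition~\ref{prop:algebra:g1:g2:space} and Proposition~\ref{prop:schroedinger:algebra} confine $\lie{\mathcal{G}}$ inside the six-dimensional Schr\"odinger algebra. If instead $\mathcal{G}^1 \cup \mathcal{G}^2 = \varnothing$, then $\mathcal{G} \subseteq \hat{A}_1^=$, and Proposition~\ref{prop:commutator:g0:g=:g0:g=} (applied with the $\mathcal{G}^0$-part empty) yields $\lie{\mathcal{G}} = \spn(\mathcal{G})$, which is abelian of dimension $|\mathcal{G}| < \infty$.

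For the ($\Rightarrow$) direction, fix $g_+^{k\tau} \in \mathcal{G}^=$ (so $k \geq 2$) and any $g_\sigma^\gamma \in \mathcal{G}^1 \cup \mathcal{G}^2$, so $\gamma = (\alpha,0)$ with $\alpha \in \{1,2\}$ and $\chi(\gamma) = \alpha > 0$. A direct application of Lemma~\ref{lem:commutator:form} with $\hat\gamma = k\tau$ gives
\begin{align*}
[g_+^{k\tau}, g_\sigma^\gamma] \upto{|\gamma|+2k-2} \pm 2k\,\chi(\gamma)\, g_{-\sigma}^{\gamma + (k-1)\tau},
\end{align*}
because the two contributions produced by the Lemma share the same output multi-index $\gamma + (k-1)\tau = (\alpha+k-1, k-1)$ and combine constructively: the inequality $\alpha > 0$ forces $\hat\gamma + \gamma^\dagg = (k, k+\alpha)$ to be reverse-ordered, so $E(\hat\gamma + \gamma^\dagg) = -1$. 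For $k \geq 2$ the resulting multi-index has total degree $\alpha + 2k-2 \geq 3$ and is non-symmetric ($\alpha + k - 1 > k - 1$), so the leading term sits in $\hat{A}_1^\perp$, and $\chi$ is preserved.

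I would then iterate: set $\gamma_0 := \gamma$ and $\gamma_j := \gamma + j(k-1)\tau$. An inductive application of Lemma~\ref{lem:commutator:form} shows that $\operatorname{ad}(g_+^{k\tau})$ sends a polynomial with leading term $c\, g_{\sigma_j}^{\gamma_j}$ to a polynomial with leading term $\pm 2k\chi(\gamma_j)\, c\, g_{-\sigma_j}^{\gamma_{j+1}}$, where $\chi(\gamma_j) = \chi(\gamma) > 0$ by construction, and Proposition~\ref{prop:equivalnce:relation:commutator} guarantees that the lower-order tails of the intermediate polynomials do not interfere with the leading behaviour. Lemma~\ref{important:monomial:commutator:lemma} together with the explicit non-zero coefficient rules out accidental vanishing at every step. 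Since the degree grows by $2(k-1) \geq 2$ per iteration, we obtain an infinite family of elements of $\lie{\mathcal{G}}$ of pairwise distinct degrees, which is therefore linearly independent, forcing $\lie{\mathcal{G}}$ to be infinite-dimensional.

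The main obstacle is controlling the constructive-versus-destructive interference of the two leading-term contributions produced by Lemma~\ref{lem:commutator:form} at each step of the induction. However, this reduces to a single uniform observation: $\alpha' > \beta'$ is preserved by the shift $\gamma' \mapsto \gamma' + (k-1)\tau$, which uniformly pins the $E$-factor to $-1$ and ensures $\Theta(\gamma' + \hat\gamma^\dagg) = \gamma' + \hat\gamma$, so the two contributions carry the same multi-index and the same sign at every iteration. Once this single structural fact is verified, the inductive step, and hence the whole proof, is mechanical.
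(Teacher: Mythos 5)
Your proof is correct and follows essentially the same route as the paper: both directions are handled identically, with finiteness deduced from Propositions~\ref{prop:commutator:g0:g=:g0:g=} and~\ref{prop:algebra:g1:g2:space}, and infinite-dimensionality obtained by iterating $\operatorname{ad}(g_+^{k\tau})$ on a generator from $\mathcal{G}^1\cup\mathcal{G}^2$ to produce elements of strictly increasing degree. The only difference is presentational: where the paper cites the type-I Commutator Chain machinery (Theorem 46 of the foundational reference) as a black box, you re-derive the same degree-growth property explicitly from Lemma~\ref{lem:commutator:form}, correctly verifying that the two leading contributions share the multi-index $\gamma+(k-1)\tau$ and combine with nonzero coefficient $\pm 2k\chi(\gamma)$ at every step.
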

	
	\begin{proof}
		Suppose there exists an element $g_+^{\tilde{\gamma}}=2i(a^{\dagg})^ka^k \in \mathcal{G}^=$ for some $k\geq2$ and another monomial $g_\sigma^\gamma\in\mathcal{G}^1\cup\mathcal{G}^2$ with $\gamma=(\alpha,\beta)\in\Np{2}$. These two elements can be used to construct a Commutator Chain of type I (see Def. 39 in \plainrefs{Bruschi:Xuereb:2024}). Moreover, since $\alpha\neq \beta$ by Definition~\ref{def:vector:spaces:a:1:k}, Theorem 46 from \plainrefs{Bruschi:Xuereb:2024} applies. According to this theorem the chain elements $z^{(\ell)}$ have degree $\deg(z^{(\ell)})=2\ell(k{-}1)+1$, which is a strictly increasing function of $\ell$, and are therefore all linearly independent. The Lie algebra $\langle\mathcal{G}^1 \cup \mathcal{G}^2 \cup \mathcal{G}^=\rangle_{\mathrm{Lie}}$ therefore contains all elements $z^{(\ell)}$ of the chain and must consequently be infinite dimensional. If, on the other hand, either $\mathcal{G}^1\cup\mathcal{G}^2=\emptyset$, or $\mathcal{G}^==\emptyset$, then $\lie{\mathcal{G}}$ is finite dimensional by Proposition~\ref{prop:commutator:g0:g=:g0:g=} or Proposition~\ref{prop:algebra:g1:g2:space} respectively.
	\end{proof}

	\begin{proposition}\label{prop:algebra:g=:gperp:space}
		The Lie algebra 
		generated by a set $\mathcal{G}$ of monomial generators with
		$\mathcal{G}= \mathcal{G}^=\cup\mathcal{G}^{\bot}$ is infinite dimensional if both sets $\mathcal{G}^=$ and $\mathcal{G}^\perp$ are non-empty.
	\end{proposition}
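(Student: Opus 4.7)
The plan is to follow essentially the same strategy as in the proof of Proposition~\ref{prop:algebra:g1:g2:g=:space}: given a single element of $\mathcal{G}^=$ and a single element of $\mathcal{G}^\perp$, I will build a Commutator Chain of type I whose successive terms have strictly increasing degree, and therefore produce an infinite linearly independent family inside $\lie{\mathcal{G}}$.

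First I would pick an arbitrary pair $g_+^{k\tau}\in \mathcal{G}^=$ (with $k\geq 2$, which exists by Definition~\ref{def:vector:spaces:a:1:k} because $\mathcal{G}^= \neq \varnothing$) and $g_\sigma^\gamma\in \mathcal{G}^\perp$ (with $\gamma=(\alpha,\beta)$, $\alpha>\beta$, $\alpha+\beta\geq 3$). Using this pair as the seed, I would set up the Commutator Chain of type I introduced in Definition~39 of \plainrefs{Bruschi:Xuereb:2024}. I would then invoke Theorem~46 from the same reference, whose hypothesis $\alpha\neq\beta$ is automatic because $\gamma\in\hat{A}_1^\perp$, to deduce that the chain elements $z^{(\ell)}$ have strictly increasing degrees. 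Since these elements are all contained in $\lie{\mathcal{G}}$, their linear independence yields the infinite dimensionality of the algebra.

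The main obstacle, which I expect to be largely technical, is to confirm that the chain neither collapses at the first step nor at any iteration. A short application of Lemma~\ref{lem:commutator:form} with $\hat{\gamma}=k\tau$ shows that the leading term of $[g_+^{k\tau},g_\sigma^\gamma]$ is a monomial of type $g_{-\sigma}^{(\alpha+k-1,\beta+k-1)}$ whose prefactor is proportional to $k(\alpha-\beta)$, nonzero since $\alpha>\beta$ and $k\geq 2$. Moreover, $\chi$ of the multi-index is preserved along the chain (each step only adds $(k-1,k-1)\in\N_=^2$), so the new multi-index still lies in $\hat{A}_1^\perp$ and the same coefficient $k\chimap{\gamma}$ reappears at every iteration. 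This routine propagation ensures that the hypotheses of Theorem~46 stay valid throughout, completing the argument.
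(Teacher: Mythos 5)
Your proposal is correct and follows essentially the same route as the paper: both arguments select one monomial from each of $\mathcal{G}^=$ and $\mathcal{G}^\perp$, observe that $\alpha\neq\beta$ and $\tilde{\alpha}\geq 2$ hold by Definition~\ref{def:vector:spaces:a:1:k}, and invoke Theorem~46 of \plainrefs{Bruschi:Xuereb:2024} to obtain a type-I Commutator Chain of strictly increasing degree lying entirely in $\lie{\mathcal{G}}$. Your closing paragraph merely re-derives, via Lemma~\ref{lem:commutator:form}, the non-collapse of the chain that Theorem~46 already guarantees, so it is redundant but harmless.
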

	
	\begin{proof}
		Suppose there exists two elements $g_{\sigma}^{\gamma} \in \mathcal{G}^{\bot}$ and $g_{+}^{\tilde{\gamma}} \in \mathcal{G}^{=}$ with $\gamma=(\alpha,\beta)\in\Np{2}$ and $\tilde{\gamma}=(\tilde{\alpha},\tilde{\alpha})\in\N_=^2$. Then, since $\alpha_1 \neq \beta_1$ and $\tilde{\alpha}_2 \geq2$ by Definition~\ref{def:vector:spaces:a:1:k}, the monomials $g_{\sigma}^\gamma$ and $g_+^{\tilde{\gamma}}$ satisfy, according to Theorem 46 from \plainrefs{Bruschi:Xuereb:2024}, the necessary conditions to generate a commutator chain of type I that lies completely in $\lie{\mathcal{G}^=\cup\mathcal{G}^\perp}$ and contains (an infinite amount of) elements of strictly increasing degree. Hence, the Lie algebra $\lie{\mathcal{G}^=\cup\mathcal{G}^\perp}$ is infinite dimensional.
	\end{proof}

	\subsection{Generic Commutator Chains in $\hat{A}_1$}
	We now focus on the objects called Commutator Chains (according to the terminology introduced in \plainrefs{Bruschi:Xuereb:2024}), which are particular sequences of nested commutators. In particular, we seek to generalize the original concept by introducing a broader framework to characterize such nested commutators. As we will see later, this generalization allows for greater flexibility in the study of the finiteness of subalgebras of $\hat{A}_1$, in particular when the generating set $\mathcal{G}$ contains elements that are linear combinations of monomials $g_\sigma^\gamma$. Note that this can be directly generalized to $\hat{A}_n$ by simply replacing $\hat{A}_1$ with $\hat{A}_n$. Nevertheless, we choose to restrict the definitions below to the case of interest here to avoid introducing unnecessary points of confusion.

	\begin{definition}\label{Commutator:Chain:General}
		Let $u^{(0)}\in\hat{A}_1$ be a fixed element, referred to as \emph{seed element}, and $\{s^{(\ell)}\}_{\ell\in\N_{\geq0}}\subseteq\hat{A}_1$ a sequence of elements, referred to as \emph{auxiliary sequence}. We define the \emph{generic Commutator Chain} $C^{\mathrm{gen}}\subseteq\hat{A}_1$ generated by $u^{(0)}$ and $\{s^{(\ell)}\}_{\ell\in\N_{\geq0}}$ as the sequence
		\begin{align}
			C^{\mathrm{gen}}(u^{(0)},\{s^{(\ell)}\}_{\ell\in\N_{\geq0}})\equiv C^{\mathrm{gen}}:=\{u^{(\ell)}\in\hat{A}_1\mid\,\,u^{(\ell+1)}:=[u^{(\ell)},s^{(\ell)}]\,\,\text{for all integers}\,\,\ell\in\N_{\geq0},\,\,\text{with}\,\,s^{(\ell)}\in\hat{A}_1\}.
		\end{align}
	\end{definition}
	
	\begin{figure}[t]
		\centering
		\includegraphics[width=0.65\linewidth]{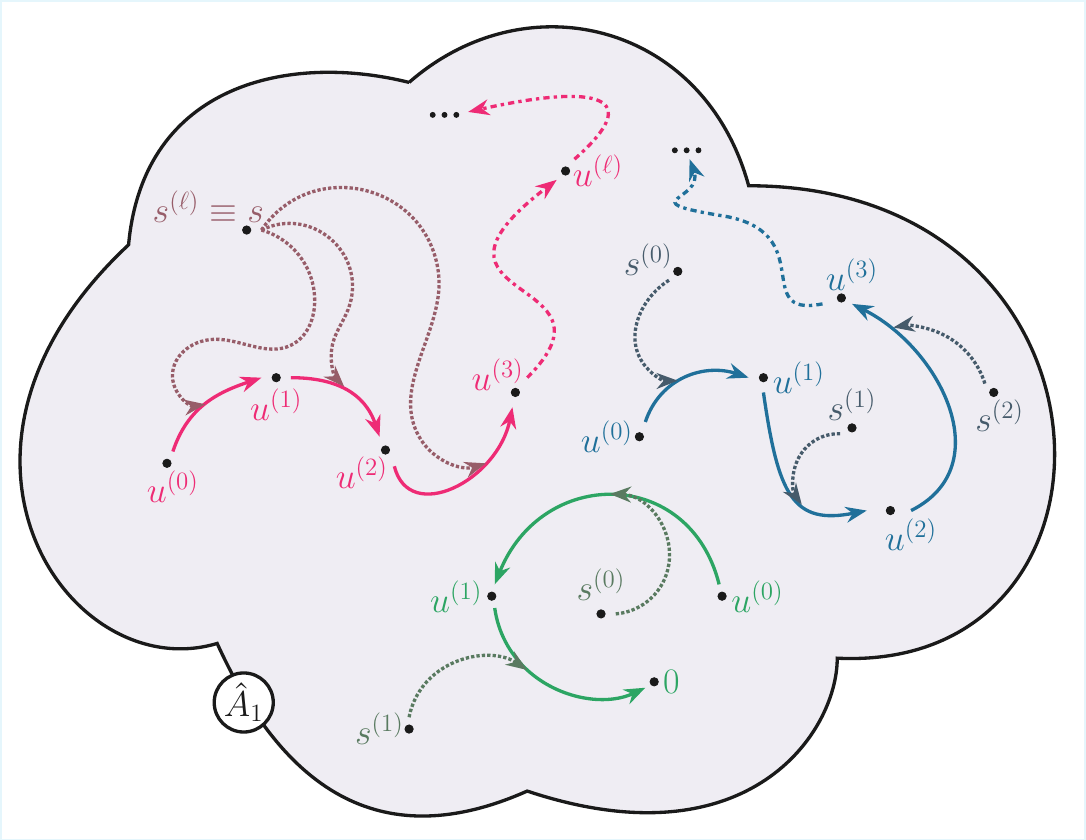}
		\caption{Depiction of three different generic Commutator Chains $C^{\mathrm{gen}}$. The red chain utilizes the same auxiliary element at each step, while the blue chain employs a different auxiliary element at every step. The green chain illustrates a chain that eventually terminates.}
		\label{fig:depict:generic:commutator:chains}
	\end{figure}

	\begin{definition}\label{Termination:Chain:General}
		Let $C^{\mathrm{gen}}(u^{(0)},\{s^{(\ell)}\}_{\ell\in\N_{\geq0}})$ be a generic Commutator Chain. We say that the chain \emph{terminates} if there exists an $\ell_*\geq1$ such that $u^{(\ell_*)}=0$.
	\end{definition}

	A depiction of several generic Commutator Chains is provided in Figure~\ref{fig:depict:generic:commutator:chains}. Notably, the Commutator Chains of Type I and II, as introduced in the literature \plainrefs{Bruschi:Xuereb:2024}, are present as special cases within this more general framework.

	\begin{lemma}\label{lem:generic:chain:behavior}
		Let $g_\sigma^\gamma,g_{\hat{\sigma}}^{\hat{\gamma}}\in\hat{A}_1$ be two monomials with $\hat{\gamma}>\hat{\gamma}^\dagg$ and $C^\mathrm{gen}$ a generic Commutator Chain with seed element $g_{\hat{\sigma}}^{\hat{\gamma}}=u^{(0)}$ and auxiliary sequence $g_\sigma^\gamma=s^{(\ell)}$ for all $\ell\in\N_{\geq0}$. One has then:
		\begin{align}\label{eqn:generic:chain:help}
			u^{(\ell)}\upto{d^{(\ell)}}\sum_{p\in\mathcal{P}^{(\ell)}}C_p^{(\ell)}g_{(-\sigma)^\ell\hat{\sigma}}^{\gamma_p^{(\ell)}}+\sum_{q\in\mathcal{Q}^{(\ell)}}\hat{C}_q^{(\ell)}g_{(-\sigma)^\ell\hat{\sigma}}^{\hat{\gamma}_q^{(\ell)}},
		\end{align}
		where $d^{(\ell)}=|\hat{\gamma}|+\ell(|\gamma|-2)$, $\gamma_p^{(\ell)}=\hat{\gamma}+p\gamma+(\ell-p)\gamma^\dagg-\ell\tau$, $\hat{\gamma}_q^{(\ell)}=\hat{\gamma}^\dagg+q\gamma+(\ell-q)\gamma^\dagg-\ell\tau$, and $\mathcal{P}^{(\ell)}$ as well as $\mathcal{Q}^{(\ell)}$ are suitable index sets defined such that $\gamma_p^{(\ell)}\geq(\gamma_p^{(\ell)})^\dagg$ and $\hat{\gamma}_q^{(\ell)}\geq(\hat{\gamma}_q^{(\ell)})^\dagg$.
	\end{lemma}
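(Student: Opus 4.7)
My plan is to prove the statement by induction on $\ell$. The base case $\ell = 0$ is essentially immediate: taking $\mathcal{P}^{(0)} = \{0\}$ with $C_0^{(0)} = 1$ (so that $\gamma_0^{(0)} = \hat{\gamma}$) and $\mathcal{Q}^{(0)} = \varnothing$, the right-hand side of the claimed equivalence reduces to exactly $g_{\hat{\sigma}}^{\hat{\gamma}} = u^{(0)}$. The choice $\mathcal{Q}^{(0)} = \varnothing$ is forced by the well-orderedness constraint in the statement, since $\hat{\gamma}_0^{(0)} = \hat{\gamma}^\dagger$ fails to be well-ordered under the strict hypothesis $\hat{\gamma} > \hat{\gamma}^\dagger$.

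For the inductive step, assuming the stated expression at level $\ell$, I would compute $u^{(\ell+1)} = [u^{(\ell)}, g_\sigma^\gamma]$ by first applying Proposition~\ref{prop:equivalnce:relation:commutator} to reduce the commutator to a linear combination of individual monomial brackets modulo terms of degree strictly less than $d^{(\ell)} + |\gamma| - 2 = d^{(\ell+1)}$. Lemma~\ref{lem:commutator:form} then expands each bracket $[g_{(-\sigma)^\ell\hat{\sigma}}^{\gamma_p^{(\ell)}}, g_\sigma^\gamma]$ and $[g_{(-\sigma)^\ell\hat{\sigma}}^{\hat{\gamma}_q^{(\ell)}}, g_\sigma^\gamma]$ into two leading-order monomial terms apiece, with the sign label updating correctly from $(-\sigma)^\ell\hat{\sigma}$ to $-\sigma \cdot (-\sigma)^\ell\hat{\sigma} = (-\sigma)^{\ell+1}\hat{\sigma}$. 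What remains is to identify the resulting multi-indices. Direct substitution yields $\gamma_p^{(\ell)} + \gamma - \tau = \gamma_{p+1}^{(\ell+1)}$ and $\hat{\gamma}_q^{(\ell)} + \gamma - \tau = \hat{\gamma}_{q+1}^{(\ell+1)}$, both automatically well-ordered since summing two well-ordered vectors and subtracting $\tau$ preserves the ordering. For the $\Theta$-branches, I would compute $\gamma + (\gamma_p^{(\ell)})^\dagger - \tau = \hat{\gamma}_{\ell-p+1}^{(\ell+1)}$ and $\gamma + (\hat{\gamma}_q^{(\ell)})^\dagger - \tau = \gamma_{\ell-q+1}^{(\ell+1)}$, so that $\Theta$ merely selects whichever element of the dagger-pair $\{\gamma_p^{(\ell+1)}, \hat{\gamma}_{\ell-p+1}^{(\ell+1)}\}$ (respectively $\{\gamma_{\ell-q+1}^{(\ell+1)}, \hat{\gamma}_q^{(\ell+1)}\}$) is well-ordered. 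Setting $\mathcal{P}^{(\ell+1)}$ and $\mathcal{Q}^{(\ell+1)}$ equal to the collections of indices actually produced, and assembling $C_{p'}^{(\ell+1)}$ and $\hat{C}_{q'}^{(\ell+1)}$ as the appropriate sums of parent coefficients weighted by the $\chi$-factors and $\hat{\epsilon}, \hat{\epsilon}'$ signs from Lemma~\ref{lem:commutator:form}, then closes the induction.

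The hard part will be this final bookkeeping rather than a genuinely deep step. Multiple parent indices can produce the same child index after $\Theta$, and $\Theta$ redirects contributions between the $\mathcal{P}$-sum and the $\mathcal{Q}$-sum, so the coefficients must be collected with some care. However, because the lemma only asserts the \emph{form} of the leading-order expansion and does not pin down the coefficients $C_p^{(\ell)}, \hat{C}_q^{(\ell)}$ explicitly, a consistent reindexing suffices and no delicate cancellation argument is required.
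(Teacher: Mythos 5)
Your proposal is correct and follows essentially the same route as the paper's proof: induction on $\ell$ with the base case $\mathcal{P}^{(0)}=\{0\}$, $\mathcal{Q}^{(0)}=\varnothing$, then Proposition~\ref{prop:equivalnce:relation:commutator} and Lemma~\ref{lem:commutator:form} for the inductive step. Your explicit identification of the child multi-indices (e.g.\ $\gamma_p^{(\ell)}+\gamma-\tau=\gamma_{p+1}^{(\ell+1)}$ and $\gamma+(\gamma_p^{(\ell)})^\dagg-\tau=\hat{\gamma}_{\ell-p+1}^{(\ell+1)}$) is actually more detailed than the paper, which leaves that bookkeeping as "immediate to verify."
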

	
	We want to emphasize that Lemma~\ref{lem:generic:chain:behavior} has two important implications: (a) that the result of the commutator above has exactly the expression \eqref{eqn:generic:chain:help}; and (b) that $\deg(u^{(\ell)})\leq d^{(\ell)}$, while equality is not in general guaranteed.
	
	\begin{proof}
		We prove the claim by induction. The base case $\ell=0$ is clearly given by $\mathcal{P}^{(0)}=\{0\}$, $\mathcal{Q}^{(0)}=\emptyset$, $C_0^{(0)}=1$, and $\gamma_0^{(0)}=\hat{\gamma}$. Note that $\mathcal{Q}^{(0)}=\emptyset$ is given by the conditions $\hat{\gamma}>\hat{\gamma}^\dagg$ and $\hat{\gamma}_q^{(\ell)}\geq(\hat{\gamma}_q^{(\ell)})^\dagg$.
		Thus, we can now assume that $u^{(\ell)}$ is of the form \eqref{eqn:generic:chain:help} and we proceed to compute
		\begin{align*}
			u^{(\ell+1)}=[u^{(\ell)},g_\sigma^\gamma]\upto{d^{(\ell+1)}}\sum_{p\in\mathcal{P}^{(\ell)}}C_p^{(\ell)}[g_{(-\sigma)^\ell\hat{\sigma}}^{\gamma_p^{(\ell)}},g_\sigma^\gamma]+\sum_{q\in\mathcal{Q}^{(\ell)}}\hat{C}_q^{(\ell)}[g_{(-\sigma)^\ell\hat{\sigma}}^{\hat{\gamma}_q^{(\ell)}},g_\sigma^\gamma].
		\end{align*}
		Here, we have applied Proposition~\ref{prop:equivalnce:relation:commutator} and we now apply Lemma~\ref{lem:commutator:form} to compute
		\begin{align*}
			[g_{(-\sigma)^\ell\hat{\sigma}}^{\gamma_p^{(\ell)}},g_\sigma^\gamma]&\upto{d^{(\ell+1)}}\hat{\epsilon}_1\left(\alpha{\beta}_p^{(\ell)}{-}\beta{\alpha}_p^{(\ell)}\right)g_{(-\sigma)^{\ell+1}\hat{\sigma}}^{\gamma_p^{(\ell)}+\gamma-\tau}+\hat{\epsilon}_2\left(\alpha{\alpha}_p^{(\ell)}{-}\beta{\beta}_p^{(\ell)}\right)g_{(-\sigma)^{\ell+1}\hat{\sigma}}^{\Theta(\gamma+(\gamma_p^{(\ell)})^\dagg)-\tau},\\
			[g_{(-\sigma)^\ell\hat{\sigma}}^{\hat{\gamma}_q^{(\ell)}},g_\sigma^\gamma]&\upto{d^{(\ell+1)}}\hat{\epsilon}_3\left(\alpha\hat{\beta}_q^{(\ell)}{-}\beta\hat{\alpha}_q^{(\ell)}\right)g_{(-\sigma)^{\ell+1}\hat{\sigma}}^{\hat{\gamma}_q^{(\ell)}+\gamma-\tau}+\hat{\epsilon}_4\left(\alpha\hat{\alpha}_p^{(\ell)}{-}\beta\hat{\beta}_q^{(\ell)}\right)g_{(-\sigma)^{\ell+1}\hat{\sigma}}^{\Theta(\gamma+(\hat{\gamma}_q^{(\ell)})^\dagg)-\tau}.
		\end{align*}
		It is now immediate to verify that $u^{(\ell+1)}$ is of the desired form, with the correct multi-indices $\gamma_p^{(\ell+1)}$, $\hat{\gamma}_q^{(\ell+1)}$ and appropriate index sets $\mathcal{P}^{(\ell+1)}$ and $\mathcal{Q}^{(\ell+1)}$, which concludes this proof.
	\end{proof}	
	
	Before proceeding, we want to extend the equivalence relation \enquote{$\upto{d}$} from Definition~\ref{def:equivalence:relation:polynomials} to generic Commutator Chains.
	
	\begin{definition}[Equivalence of generic Commutator Chains]\label{def:equivalnce:generic:comm:chains}
		Let $C^{\mathrm{gen}}_1$ and $C^{\mathrm{gen}}_2$ be two generic Commutator Chains with seed elements $u_1^{(0)}$, $u^{(0)}_2$, and auxiliary sequences $s^{(\ell)}_1$, $s^{(\ell)}_2$ for all $\ell\in\N_0$, respectively. We say that the two chains are equivalent up to order $d^{(\ell)}$, written $C^{\mathrm{gen}}_1\upto{d^{(\ell)}}C^{\mathrm{gen}}_2$ for $\{d^{(\ell)}\}_{\ell \in\N_{\geq0}}$ with $d^{(\ell)}\in\N_{\geq0}$ for all $\ell\in\N_{\geq0}$, if and only if $u_1^{(\ell)}\upto{d^{(\ell)}}u^{(\ell)}_2$ for all $\ell\in\N_{\geq0}$. 
	\end{definition}
	
	It is immediate to verify that $C^{\mathrm{gen}}_1\upto{d^{(\ell)}}C^{\mathrm{gen}}_2$ is an equivalence relation. Note that any pair of generic Commutator Chains $C^{\mathrm{gen}}_1$ and $C^{\mathrm{gen}}_2$ is trivially equivalent in the sense $C^{\mathrm{gen}}_1\upto{d^{(\ell)}}C^{\mathrm{gen}}_2$ if $d^{(\ell)}=\infty$ for all $\ell$, since any pair of elements $a_1,a_2$ in $\hat{A}_1$ are equivalent in the sense $a_1\upto{\infty}a_2$.
	
	\begin{proposition}\label{prop:equivalnce:generic:com:chains}
		Consider two generic Commutator Chains $C^{\mathrm{gen}}_1$ and $C^{\mathrm{gen}}_2$  with seed elements $u_1^{(0)}$, $u^{(0)}_2$, and auxiliary sequences $s^{(\ell)}_1\equiv s_1$, $s^{(\ell)}_2\equiv s_2$ for all $\ell\in\N_{\geq0}$, respectively. Suppose $u_1^{(0)}\upto{d_1}u^{(0)}_2$ and $s_1\upto{d_2}s_2$ with $d_1=\deg(u_1^{(0)})\in\N_{\geq0}$ and $d_2=\deg(s_1)\in\N_{\geq0}$. Then the two chains have the property that $C^{\mathrm{gen}}_1\upto{d^{(\ell)}}C^{\mathrm{gen}}_2$  with $d^{(\ell)}=d_1+\ell(d_2-2)$. 
	\end{proposition}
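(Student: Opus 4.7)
\emph{Plan of proof.} The proof will proceed by induction on $\ell \in \N_{\geq 0}$. The base case $\ell = 0$ reduces to the hypothesis $u_1^{(0)} \upto{d_1} u_2^{(0)}$, since $d^{(0)} = d_1$. For the inductive step, the key identity is the bilinear splitting
\[
u_1^{(\ell+1)} - u_2^{(\ell+1)} = [u_1^{(\ell)}, s_1] - [u_2^{(\ell)}, s_2] = [u_1^{(\ell)} - u_2^{(\ell)}, s_1] + [u_2^{(\ell)}, s_1 - s_2],
\]
which separates the error at step $\ell+1$ into a piece controlled by the inductive hypothesis and a piece controlled by the assumption $s_1 \upto{d_2} s_2$.

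I would then bound the degree of each summand using the standard Weyl-algebra estimate $\deg([x,y]) \leq \deg(x) + \deg(y) - 2$, which follows from the leading-order cancellation in $xy - yx$ (or equivalently from Lemma~\ref{lem:commutator:form} together with Proposition~\ref{prop:equivalnce:relation:commutator}). The inductive hypothesis gives $\deg(u_1^{(\ell)} - u_2^{(\ell)}) < d^{(\ell)}$, and because $\deg(s_1) = d_2$ the first summand is bounded in degree by $d^{(\ell)} + d_2 - 3 < d^{(\ell+1)}$. Since $\deg(s_1 - s_2) < d_2$, the second summand is bounded by $\deg(u_2^{(\ell)}) + d_2 - 3$, which is strictly less than $d^{(\ell+1)}$ provided the auxiliary bound $\deg(u_2^{(\ell)}) \leq d^{(\ell)}$ is in hand.

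This auxiliary bound needs a short parallel induction. Its base case $\deg(u_2^{(0)}) \leq d_1$ follows from writing $u_2^{(0)} = u_1^{(0)} - (u_1^{(0)} - u_2^{(0)})$ and combining $\deg(u_1^{(0)}) = d_1$ with $\deg(u_1^{(0)} - u_2^{(0)}) < d_1$. The inductive step uses again the commutator degree estimate together with $\deg(s_2) \leq d_2$, which is obtained by exactly the same argument applied to $s_2 = s_1 - (s_1 - s_2)$. Combining the two inductions yields $\deg(u_1^{(\ell+1)} - u_2^{(\ell+1)}) < d^{(\ell+1)}$, closing the main induction.

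I do not anticipate a serious obstacle: the argument is a routine double induction, and the specific growth rate $d^{(\ell)} = d_1 + \ell(d_2 - 2)$ is calibrated precisely so that the degree bookkeeping closes with a strict inequality at every step. The only mildly subtle point is that some commutators or differences may vanish identically; this is handled uniformly by the convention $\deg(0) = -\infty$ adopted in Section~\ref{section:background:formalism}, under which the inequalities above remain valid in all degenerate cases.
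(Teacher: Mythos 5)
Your proof is correct and follows essentially the same route as the paper: an induction on $\ell$ whose inductive step rests on the degree bound $\deg([x,y])\leq\deg(x)+\deg(y)-2$, which the paper delegates to Proposition~\ref{prop:equivalnce:relation:commutator} and Lemma~\ref{lem:generic:chain:behavior} while you unpack it via the bilinear splitting $[u_1^{(\ell)},s_1]-[u_2^{(\ell)},s_2]=[u_1^{(\ell)}-u_2^{(\ell)},s_1]+[u_2^{(\ell)},s_1-s_2]$. Your explicit parallel induction for $\deg(u_2^{(\ell)})\leq d^{(\ell)}$ is a slightly more self-contained way of obtaining the bound the paper reads off from Lemma~\ref{lem:generic:chain:behavior}, but the argument is the same in substance.
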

	
	\begin{proof}
		This can be shown using induction. In the base case $\ell=0$, we have $u_1^{(0)}\upto{d^{(0)}}u_2^{(0)}$ by the initial assumptions, with $d^{(0)}=d_1$. Thus, the base case holds, and we can assume that $u_1^{(\ell)}\upto{d^{(\ell)}}u_2^{(\ell)}$ with $d^{(\ell)}=d_1+\ell(d_2-2)$ for some $\ell\in\N_{\geq0}$. Proposition~\ref{prop:equivalnce:relation:commutator} and Lemma~\ref{lem:generic:chain:behavior} yield:
		\begin{align*}
			u_1^{(\ell+1)}=[u_1^{(\ell)},s_1]\upto{d^{(\ell+1)}}[u_2^{(\ell)},s_2]=u_2^{(\ell+1)},
		\end{align*}
		since $\deg(u_1^{(\ell)})\leq d^{(\ell)}$ and $\deg(u_2^{(\ell)})\leq d^{(\ell)}$, and $d^{(\ell+1)}=d^{(\ell)}+d_2-2$. Together with the claim of induction this completes the proof. 
	\end{proof}
	
	\subsection{Classification of finite-dimensional Lie algebras}
	We are now in the position to prove the main results of this section.
	
	\begin{theorem}\label{thm:andreea:sona}
		Let $g_{\hat{\sigma}}^{\hat{\gamma}}\in \hat{A}_1^1\oplus \hat{A}_1^2\oplus\hat{A}_1^\perp$ and $g_\sigma^\gamma\in \hat{A}_1^\perp$ be two different monomials. Then, the Lie algebra $\g:=\langle\{g_{\hat{\sigma}}^{\hat{\gamma}},g_\sigma^\gamma\}\rangle_\mathrm{Lie}$ is infinite dimensional. 
	\end{theorem}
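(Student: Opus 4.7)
The plan is to exhibit an infinite linearly independent family inside $\g$ by constructing a generic Commutator Chain whose elements have strictly increasing degree. Since both monomials $g_{\hat{\sigma}}^{\hat{\gamma}} \in \hat{A}_1^1\oplus \hat{A}_1^2\oplus\hat{A}_1^\perp$ and $g_\sigma^\gamma \in \hat{A}_1^\perp$ have strictly well-ordered multi-indices ($\hat{\alpha}>\hat{\beta}$ and $\alpha>\beta$, hence $\hat{\gamma}>\hat{\gamma}^\dagg$ and $\gamma>\gamma^\dagg$), I can set $u^{(0)}:=g_{\hat{\sigma}}^{\hat{\gamma}}$ as the seed and take the constant auxiliary sequence $s^{(\ell)}:=g_\sigma^\gamma$ in the definition of a generic Commutator Chain $C^\mathrm{gen}$. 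By construction $u^{(\ell)}\in\g$ for every $\ell\in\N_{\geq 0}$. Lemma~\ref{lem:generic:chain:behavior} then yields an explicit expansion of $u^{(\ell)}$ as a sum of monomials of multi-indices in $\{\gamma_p^{(\ell)}\}\cup\{\hat{\gamma}_q^{(\ell)}\}$ modulo lower-degree terms, where all such multi-indices have common degree $d^{(\ell)}=|\hat{\gamma}|+\ell(|\gamma|-2)$. Because $g_\sigma^\gamma\in \hat{A}_1^\perp$ forces $|\gamma|\geq 3$, the sequence $d^{(\ell)}$ is strictly increasing.

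The heart of the argument, and the main obstacle, is to prove by induction that $\deg(u^{(\ell)})=d^{(\ell)}$ (so in particular $u^{(\ell)}\neq 0$) for every $\ell$. The base case is immediate. For the inductive step I would use Proposition~\ref{prop:equivalnce:relation:commutator} to reduce the leading degree of $u^{(\ell+1)}=[u^{(\ell)},g_\sigma^\gamma]$ to the commutators of the degree-$d^{(\ell)}$ monomials of $u^{(\ell)}$ with $g_\sigma^\gamma$, and then apply Lemma~\ref{lem:commutator:form} to each. The plan is to isolate an \emph{extremal} multi-index among $\{\gamma_p^{(\ell+1)}\}\cup\{\hat{\gamma}_q^{(\ell+1)}\}$, chosen so that it can be produced by a unique branch of the iterated application of Lemma~\ref{lem:commutator:form}, and track its coefficient. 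A natural candidate is the multi-index whose well-ordered first component is strictly larger than all others. At each step it arises from one source monomial via the $\chi(\,\cdot\,\circ\gamma)$-branch, and a short computation shows $\chi(\tilde{\gamma}\circ\gamma)=\tilde{\alpha}\alpha-\tilde{\beta}\beta>0$ whenever $\tilde{\alpha}>\tilde{\beta}\geq 0$ and $\alpha>\beta\geq 0$ with $\tilde{\alpha},\alpha\geq 1$---a property that is preserved inductively by the extremal multi-index. Consequently the overall coefficient of the extremal monomial in $u^{(\ell+1)}$ is a product of strictly positive rational factors, hence nonzero.

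The subtle point to check is that the extremal multi-index never lands in $\N_=^2$ in combination with a subscript sign that would force the monomial to vanish (i.e., a $g_-^{k\tau}$). I would handle this by a brief case analysis: if $\alpha-\beta\neq\hat{\alpha}-\hat{\beta}$ then no such collision occurs since the strict well-ordering is preserved at every step; in the boundary case $\alpha-\beta=\hat{\alpha}-\hat{\beta}$, I would instead track the neighbouring multi-index $\gamma+\tilde{\gamma}-\tau$, whose coefficient $\chi(\tilde{\gamma}\circ\gamma^\dagg)$ can be shown to be nonzero for the alternative branch, so that the inductive claim survives. Once $\deg(u^{(\ell)})=d^{(\ell)}$ is established for all $\ell$, the family $\{u^{(\ell)}\}_{\ell\in\N_{\geq 0}}\subseteq\g$ consists of elements of pairwise distinct degrees and is therefore linearly independent, forcing $\g$ to be infinite dimensional.
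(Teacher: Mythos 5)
Your overall strategy (a constant-auxiliary generic Commutator Chain, tracking one extremal monomial of maximal degree, and concluding from strictly increasing degrees) is exactly the paper's, but the central non-vanishing step rests on a mix-up of the two branches of Lemma~\ref{lem:commutator:form}, and this leaves precisely the degenerate configurations uncovered. The multi-index with maximal first component at step $\ell$ is $\gamma_\ell^{(\ell)}=\hat{\gamma}+\ell(\gamma-\tau)$, and it is produced from its predecessor via the branch $g_{-\sigma\hat{\sigma}}^{\gamma_\ell^{(\ell)}+\gamma-\tau}$, whose coefficient is $\chimap{\gamma_\ell^{(\ell)}\circ\gamma^\dagg}=\alpha_\ell^{(\ell)}\beta-\beta_\ell^{(\ell)}\alpha$ --- \emph{not} the $\chimap{\cdot\circ\gamma}$-branch with coefficient $\alpha_\ell^{(\ell)}\alpha-\beta_\ell^{(\ell)}\beta$ whose positivity you verify (that branch yields the different multi-index $\Theta(\gamma+(\gamma_\ell^{(\ell)})^\dagg)-\tau$). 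The relevant coefficient equals $\alpha\hat{\beta}-\hat{\alpha}\beta-\ell(\alpha-\beta)$ and vanishes at the very first step ($\ell=0$) whenever $\alpha\hat{\beta}=\beta\hat{\alpha}$, e.g.\ when $\beta=\hat{\beta}=0$ or when $\gamma$ and $\hat{\gamma}$ are proportional. In those cases the extremal monomial never enters the chain, your induction has no base case, and the argument collapses. These are exactly the situations the paper isolates as its Cases 2 and 3: there one must first compute a single commutator to manufacture a replacement seed or auxiliary element using the \emph{other} branch (whose coefficient $\hat{\alpha}\alpha-\hat{\beta}\beta$ is always nonzero for well-ordered non-diagonal multi-indices), transfer the chain via Proposition~\ref{prop:equivalnce:generic:com:chains}, and fall back on a Commutator Chain of type II when $\gamma=\hat{\gamma}$ with $\sigma=-\hat{\sigma}$.

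Your proposed boundary-case fix --- switching tracked indices when $\alpha-\beta=\hat{\alpha}-\hat{\beta}$ --- addresses a different, secondary degeneracy (the vanishing of monomials of the form $g_-^{k\tau}$ produced by the $\Theta$-branch) and does not repair the gap above, since the coefficient you would then fall back on, $\chimap{\hat{\gamma}\circ\gamma^\dagg}$, is exactly the one that vanishes in the problematic configurations. Separately, the uniqueness of the extremal monomial among all branch products at step $\ell+1$ is asserted rather than proved; the paper devotes a dedicated multi-index analysis (ruling out each of six possible coincidences) to this point, and while your choice of extremal index is the right one to make that analysis tractable, it still needs to be carried out.
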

	
	The proof of this theorem relies on constructing generic Commutator Chains that contain elements of strictly increasing degree. Since monomials of different degrees are linearly independent, the presence of such a Commutator Chain implies that the Lie algebra generated by the two initial monomials contains an infinite number of linearly-independent terms, which in turn implies that the whole generated algebra is infinite dimensional since the Commutator Chain is a subset of the algebra itself. The relevant generic Commutator Chains used in the proof are illustrated schematically in Figure~\ref{fig:sketch:chains:thm:sona:andreea}.
	
	\begin{figure}[t]
		\centering
		\includegraphics[width=0.8\linewidth]{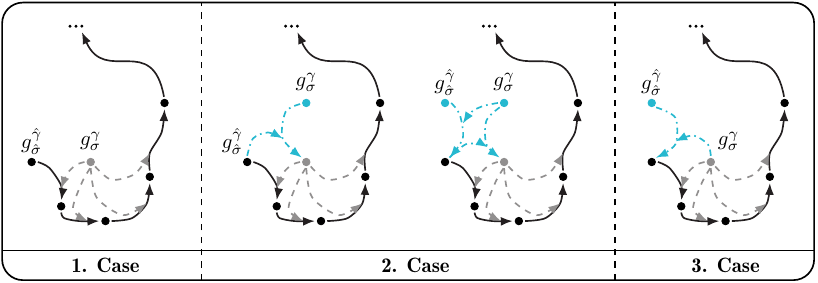}
		\caption{Schematic overview of the generic Commutator Chains used in the proof of Theorem~\ref{thm:andreea:sona}. The central chain (Case 1) corresponds to the situation where the two initial monomials $g_{\hat{\sigma}}^{\hat{\gamma}}\in\hat{A}_1^1\oplus\hat{A}_1^2\oplus\hat{A}_1^\perp$ and $g_\sigma^\gamma\in\hat{A}_1^\perp$ can be directly used to construct a chain whose elements have a simple and easily verifiable property: in the expansion in terms of monomials of every chain element $u^{(\ell)}$ there exists a unique non-vanishing monomial with highest degree. This makes it straightforward to prove that the degrees of the elements in the chain increase strictly with each step. In contrast, the remaining Cases 2 and 3 do not have this property. However, each of them can be reduced to Case 1 by constructing at most two auxiliary elements (indicated by the end of the light-blue lines) that can in turn be used so that the resulting chain generated by such two elements can be treated analogously to Case 1.}
		\label{fig:sketch:chains:thm:sona:andreea}
	\end{figure}
	
	\begin{proof}
		We prove the claim by considering the generic Commutator Chain $C^\mathrm{gen}$ with seed element $g_{\hat{\sigma}}^{\hat{\gamma}}=u^{(0)}$ and auxiliary sequence $s^{(\ell)}\equiv g_\sigma^\gamma$ for all $\ell\in\N_{\geq0}$ and thereby proving that $\deg(u^{(\ell)})=d^{(\ell)}=|\hat{\gamma}|+\ell(|\gamma|-2)$ for all $\ell\in\N_{\geq0}$. Therefore, the degree of the chain elements grows strictly as a function of $\ell$, since $|\gamma|>3$ by Lemma 7 from \plainrefs{Bruschi:Xuereb:2024}, ultimately implying that the Lie algebra $\g=\lie{\{g_\sigma^\gamma,g_{\hat{\sigma}}^{\hat{\gamma}}\}}$ is infinite dimensional. We obtain our goal by extending the statement of Lemma~\ref{lem:generic:chain:behavior} and dividing the proof into three distinct cases:
		
		\par\noindent
		\textbf{Case 1}: $\boldsymbol{\alpha\hat{\beta}\neq\beta\hat{\alpha}}$. Here, we extend the statement of Lemma~\ref{lem:generic:chain:behavior} to state not only that $u^{(\ell)}$ is given by the expression \eqref{eqn:generic:chain:help} but also that, in the expansion of $u^{(\ell)}$ to highest degree $d^{(\ell)}$, there exists always exactly one unique monomial with multi-index of the specific form $\gamma_\ell^{(\ell)}=\hat{\gamma}+\ell(\gamma{-}\tau)$ and non-vanishing prefactor.
		
		Let us see why this occurs. Lemma~\ref{lem:commutator:form} implies that
		\begin{align}\label{eqn:expansion:u_0:thm:sona:andreea}
			u^{(1)}=[u^{(0)},s^{(0)}]=[g_{\hat{\sigma}}^{\hat{\gamma}},g_\sigma^\gamma]\upto{d^{(1)}}\hat{\epsilon}\left(\alpha\hat{\beta}{-}\beta\hat{\alpha}\right)g_{-\sigma\hat{\sigma}}^{\gamma+\hat{\gamma}-\tau}+\hat{\epsilon}'\left(\alpha\hat{\alpha}{-}\beta\hat{\beta}\right)g_{-\sigma\hat{\sigma}}^{\Theta(\gamma+\hat{\gamma}^\dagg)-\tau}.
		\end{align}
		Since $\alpha\hat{\beta}-\beta\hat{\alpha}\neq0$, we can conclude that in the expansion of $u^{(1)}$ there exists at least one monomial with non-vanishing prefactor, and it has multi-index $\gamma_1^{(1)}=\gamma+\hat{\gamma}-\tau$. We now need to show that there exists no other such monomial in the expansion of $u^{(1)}$. To obtain this we simply need to consider the equation $\gamma+\hat{\gamma}-\tau=\Theta(\gamma+\hat{\gamma}^\dagg)-\tau$, since these are the multi-indices of the two monomials appearing in expansion \eqref{eqn:expansion:u_0:thm:sona:andreea}. Now recall that $\Theta(\gamma)=\gamma$ if $\gamma\geq \gamma^\dagg$ and $\Theta(\gamma)=\gamma^\dagg$ otherwise. Thus, one only needs to consider the following two possibilities: Either $\Theta(\gamma+\hat{\gamma}^\dagg)=\gamma+\hat{\gamma}^\dagg$ or $\Theta(\gamma+\hat{\gamma}^\dagg)=\gamma^\dagg+\hat{\gamma}$. In the former case, one has $\gamma+\hat{\gamma}-\tau=\Theta(\gamma+\hat{\gamma}^\dagg)-\tau=\gamma+\hat{\gamma}^\dagg-\tau$ and consequently $\hat{\gamma}=\hat{\gamma}^\dagg$. In the latter case, one has $\gamma+\hat{\gamma}-\tau=\Theta(\gamma+\hat{\gamma}^\dagg)-\tau=\gamma^\dagg+\hat{\gamma}-\tau$ and consequently $\gamma=\gamma^\dagg$. To summarize $\gamma+\hat{\gamma}-\tau=\Theta(\gamma+\hat{\gamma}^\dagg)-\tau$, implies either $\gamma=\gamma^\dagg$ or $\hat{\gamma}=\hat{\gamma}^\dagg$, which is excluded by the initial assumption $g_{\hat{\sigma}}^{\hat{\gamma}}\in \hat{A}_1^1\oplus \hat{A}_1^2\oplus\hat{A}_1^\perp$ and $g_\sigma^\gamma\in \hat{A}_1^\perp$. At last, we need to check that the monomial with multi-index $\gamma_1^{(1)}$ appearing in the expansion of $u^{(1)}$ does itself not vanish, as it would be the case if it were of the form $g_-^{\tilde{\gamma}}$ with $\tilde{\gamma}\in\N_=^2$. Here, it is enough to observe that $\gamma+\hat{\gamma}-\tau>(\gamma+\hat{\gamma}-\tau)^\dagg$ showing that $\gamma_1^{(1)}\notin\N_=^2$. This implies consequently $\deg(u^{(1)})=d^{(1)}=|\hat{\gamma}|+|\gamma|-2$ and concludes the base case for the proof by induction. 
		
		Now we assume that the induction hypothesis holds for $\ell\in\N_{\geq0}$ Concretely, $u^{(\ell)}$ can be written in the form given in \eqref{eqn:generic:chain:help} and there exists exactly one monomial with multi-index $\gamma_\ell^{(\ell)}=\hat{\gamma}+\ell(\gamma-\tau)$ and non-vanishing prefactor $C_\ell^{(\ell)}$.
		
		Thus, we know that in the expansion of $u^{(\ell+1)}$, there exists a term proportional to the commutator
		\begin{align*}
			[g_{(-\sigma)^\ell\hat{\sigma}}^{\gamma_\ell^{(\ell)}},g_\sigma^\gamma]&\upto{d^{(\ell+1)}}\hat{\epsilon}_1\left(\alpha{\beta}_\ell^{(\ell)}-\beta{\alpha}_\ell^{(\ell)}\right)g_{(-\sigma)^{\ell+1}\hat{\sigma}}^{\gamma_\ell^{(\ell)}+\gamma-\tau}+\hat{\epsilon}_2\left(\alpha{\alpha}_\ell^{(\ell)}-\beta{\beta}_\ell^{(\ell)}\right)g_{(-\sigma)^{\ell+1}\hat{\sigma}}^{\Theta(\gamma+(\gamma_\ell^{(\ell)})^\dagg)-\tau},
		\end{align*}
		since $u^{(\ell+1)}=[u^{(\ell)},g_{\sigma}^\gamma]$ and since in the expansion of $u^{(\ell)}$, given by equation \eqref{eqn:generic:chain:help}, there exists the monomial $g_{(-\sigma)^\ell\hat{\sigma}}^{\gamma_\ell^{(\ell)}}$ with a non-vanishing prefactor.
		We see that the first term in the expression above is a multiple of the monomial with multi-index $\gamma_{\ell+1}^{(\ell+1)}=\gamma_\ell^{(\ell)}+\gamma-\tau$. The prefactor of this monomial is $\hat{\epsilon}_1C_\ell^{(\ell)}(\alpha\beta_\ell^{(\ell)}-\beta{\alpha}_\ell^{(\ell)})$, where $\hat{\epsilon}_1\in\{\pm1\}$ and $C_\ell^{(\ell)}\neq0$ by the induction hypothesis. Hence, to show that this prefactor does not vanish, one must show that $\alpha{\beta}_\ell^{(\ell)}-\beta{\alpha}_\ell^{(\ell)}\neq0$. We have
		\begin{align*}
			\alpha{\beta}_\ell^{(\ell)}-\beta{\alpha}_\ell^{(\ell)}&=\alpha(\hat{\beta}+\ell(\beta-1))-\beta(\hat{\alpha}+\ell(\alpha-1))=-\ell(\alpha-\beta)+\alpha\hat{\beta}-\hat{\alpha}\beta,
		\end{align*}
		We will now show that the expression above is, without loss of generality, always smaller than zero and therefore non-zero. The first essential observation is that $\alpha>\beta$ holds, by Definition~\ref{def:vector:spaces:a:1:k}, for every monomial $g_\sigma^\gamma\in\hat{A}_n^\perp$. This implies $-\ell(\alpha-\beta)<0$ for all $\ell\in\N_{\geq1}$. Next, we have to argue why, without loss of generality, $\alpha\hat{\beta}-\hat{\alpha}\beta\leq 0$ holds in all cases. Now recall that the monomial $g_{\hat{\sigma}}^{\hat{\gamma}}$ is assumed to belong to the space $\hat{A}_n^1\oplus\hat{A}_n^2\oplus\hat{A}_n^\perp$. We must consequently consider the three possibilities (a) $g_{\hat{\sigma}}^{\hat{\gamma}}\in\hat{A}_n^1$, (b) $g_{\hat{\sigma}}^{\hat{\gamma}}\in\hat{A}_n^2$, and (c) $g_{\hat{\sigma}}^{\hat{\gamma}}\in\hat{A}_n^\perp$, which we tackle individually below:

		\begin{enumerate}[label = (\alph*)]
			\item Here one has $\boldsymbol{g_{\hat{\sigma}}^{\hat{\gamma}}\in\hat{A}_1^1}$. By Definition~\ref{def:vector:spaces:a:1:k}, this implies $\hat{\gamma}=(\hat{\alpha},\hat{\beta})=\iota_1=(1,0)$, and therefore $\alpha\hat{\beta}-\hat{\alpha}\beta=-\beta\leq 0$. 
			\item Here one has $\boldsymbol{g_{\hat{\sigma}}^{\hat{\gamma}}\in\hat{A}_1^2}$. By Definition~\ref{def:vector:spaces:a:1:k}, this implies $\hat{\gamma}=(\hat{\alpha},\hat{\beta})=2\iota_1=(2,0)$, and therefore $\alpha\hat{\beta}-\hat{\alpha}\beta=-2\beta\leq 0$. 
			\item Here one has $\boldsymbol{g_{\hat{\sigma}}^{\hat{\gamma}}\in\hat{A}_1^\perp}$. Recall now $g_\sigma^\gamma$ is also a monomial that lies in the space $\hat{A}_1^\perp$. Hence, one can simply exchange the roles of $g_{\hat{\sigma}}^{\hat{\gamma}}$ and $g_\sigma^\gamma$ if $\alpha\hat{\beta}-\hat{\alpha}\beta>0$, allowing to assume without loss of generality that $\alpha\hat{\beta}-\hat{\alpha}\beta\leq 0$. 
		\end{enumerate}
		Thus, for every $\ell\in\N_{\geq1}$, the term $\alpha{\beta}_\ell^{(\ell)}-\beta{\alpha}_\ell^{(\ell)}$ is the sum of a negative and a non-positive term, thus implying that it is non-zero. It follows therefore that there exists at least one monomial in the expansion of $u^{(\ell+1)}$ with non-vanishing prefactor and multi-index $\gamma_{\ell+1}^{(\ell+1)}$. Next, we need to consider the case that there exist other monomials in the expansion of $u^{(\ell+1)}$ with the same multi-index, and prove that this cannot occur if we wish to ensure its claimed uniqueness. To obtain our goal we need to explicitly compute the expansion of $u^{(\ell+1)}$ to the highest degree. This can be done using Propositions~\ref{prop:equivalnce:relation:commutator} and \ref{lem:commutator:form}, as well as the induction hypothesis that $u^{(\ell)}$ has the form by the expression \eqref{eqn:generic:chain:help}. We have
		\begin{align}
			u^{(\ell+1)}=[u^{(\ell)},g_\sigma^\gamma]\upto{d^{(\ell+1)}}\sum_{p\in\mathcal{P}^{(\ell)}}C_p^{(\ell)}[g_{(-\sigma)^\ell\hat{\sigma}}^{\gamma_p^{(\ell)}},g_\sigma^\gamma]+\sum_{q\in\mathcal{Q}^{(\ell)}}\hat{C}_q^{(\ell)}[g_{(-\sigma)^\ell\hat{\sigma}}^{\hat{\gamma}_q^{(\ell)}},g_\sigma^\gamma],\label{eqn:help:thm:40:u:ell:plus:one}
		\end{align}
		where the individual commutators are given by
		\begin{subequations}
			\begin{align}
				[g_{(-\sigma)^\ell\hat{\sigma}}^{\gamma_p^{(\ell)}},g_\sigma^\gamma]&\upto{d^{(\ell+1)}}\hat{\epsilon}_1\left(\alpha{\beta}_p^{(\ell)}-\beta{\alpha}_p^{(\ell)}\right)g_{(-\sigma)^{\ell+1}\hat{\sigma}}^{\gamma_p^{(\ell)}+\gamma-\tau}+\hat{\epsilon}_2\left(\alpha{\alpha}_p^{(\ell)}-\beta{\beta}_p^{(\ell)}\right)g_{(-\sigma)^{\ell+1}\hat{\sigma}}^{\Theta(\gamma+(\gamma_p^{(\ell)})^\dagg)-\tau},\label{eqn:help:thm:40:u:ell:plus:one:first:commutator}\\
				[g_{(-\sigma)^\ell\hat{\sigma}}^{\hat{\gamma}_q^{(\ell)}},g_\sigma^\gamma]&\upto{d^{(\ell+1)}}\hat{\epsilon}_3\left(\alpha\hat{\beta}_q^{(\ell)}-\beta\hat{\alpha}_q^{(\ell)}\right)g_{(-\sigma)^{\ell+1}\hat{\sigma}}^{\hat{\gamma}_q^{(\ell)}+\gamma-\tau}+\hat{\epsilon}_4\left(\alpha\hat{\alpha}_p^{(\ell)}-\beta\hat{\beta}_q^{(\ell)}\right)g_{(-\sigma)^{\ell+1}\hat{\sigma}}^{\Theta(\gamma+(\hat{\gamma}_q^{(\ell)})^\dagg)-\tau}.\label{eqn:help:thm:40:u:ell:plus:one:second:commutator}
			\end{align}
		\end{subequations}
		Thus, to verify that there exists only one monomial with the multi-index $\gamma_{\ell+1}^{(\ell+1)}$ in the expansion of $u^{(\ell+1)}$ found in equation \eqref{eqn:help:thm:40:u:ell:plus:one}, we need to examine the possible multi-indices of the monomials obtained in equations \eqref{eqn:help:thm:40:u:ell:plus:one:first:commutator} and \eqref{eqn:help:thm:40:u:ell:plus:one:second:commutator}. The first monomial in the right-hand side of equation \eqref{eqn:help:thm:40:u:ell:plus:one:first:commutator} is $\gamma_p^{(\ell)}+\gamma-\tau=\hat{\gamma}+(p+1)\gamma+(\ell+1-(p+1))\gamma^\dagg-(\ell+1)\tau$, while the second one is $\Theta(\gamma+(\gamma_p^{(\ell)})^\dagg)-\tau$ which takes either the form $\hat{\gamma}+p\gamma+(\ell+1-p)\gamma^\dagg-(\ell+1)\tau$ or $\hat{\gamma}^\dagg+(\ell+1-p)\gamma+p\gamma^\dagg-(\ell+1)\tau$ depending on the result of the $\Theta(\cdot)$ function. The first monomial in the right-hand side of equation \eqref{eqn:help:thm:40:u:ell:plus:one:second:commutator} is similarly $\hat{\gamma}_q^{(\ell)}+\gamma-\tau=\hat{\gamma}^\dagg+(q+1)\gamma+(\ell+1-(q+1))\gamma^\dagg-(\ell+1)\tau$ and the second one is $\Theta(\gamma+(\hat{\gamma}_q^{(\ell)})^\dagg)-\tau$ which either takes the form $\hat{\gamma}^\dagg+q\gamma+(\ell+1-q)\gamma^\dagg-(\ell+1)\tau$ or $\hat{\gamma}+(\ell+1-q)\gamma+q\gamma^\dagg-(\ell+1)\tau$ depending the result of the $\Theta(\cdot)$ function. To prove that no other monomial in the expansion of $u^{(\ell+1)}$ possesses the multi-index $\gamma_{\ell+1}^{(\ell+1)}$, we need to show the multi-indices listed before coincide with $\gamma_{\ell+1}^{(\ell+1)}$ only if they parametrize the same monomial. We can spell the conditions out as follows:
		\begin{subequations}
			\begin{align}
				\hat{\gamma}+(\ell+1)\gamma-(\ell+1)\tau&=\gamma_p^{(\ell)}+\gamma-\tau=\hat{\gamma}+(p+1)\gamma+(\ell+1-(p+1))\gamma^\dagg-(\ell+1)\tau,\label{eqn:thm:sona:andreea:case:1:1}\\
				\hat{\gamma}+(\ell+1)\gamma-(\ell+1)\tau&=\gamma_p^{(\ell)}+\gamma^\dagg-\tau=\hat{\gamma}+p\gamma+(\ell+1-p)\gamma^\dagg-(\ell+1)\tau,\label{eqn:thm:sona:andreea:case:1:2}\\
				\hat{\gamma}+(\ell+1)\gamma-(\ell+1)\tau&=(\gamma_p^{(\ell)})^\dagg+\gamma-\tau=\hat{\gamma}^\dagg+(\ell+1-p)\gamma+p\gamma^\dagg-(\ell+1)\tau,\label{eqn:thm:sona:andreea:case:1:3}\\
				\hat{\gamma}+(\ell+1)\gamma-(\ell+1)\tau&=\hat{\gamma}_q^{(\ell)}+\gamma-\tau=\hat{\gamma}^\dagg+(q+1)\gamma+(\ell+1-(q+1))\gamma^\dagg-(\ell+1)\tau,\label{eqn:thm:sona:andreea:case:1:4}\\
				\hat{\gamma}+(\ell+1)\gamma-(\ell+1)\tau&=\hat{\gamma}_q^{(\ell)}+\gamma^\dagg-\tau=\hat{\gamma}^\dagg+q\gamma+(\ell+1-q)\gamma^\dagg-(\ell+1)\tau,\label{eqn:thm:sona:andreea:case:1:5}\\
				\hat{\gamma}+(\ell+1)\gamma-(\ell+1)\tau&=(\hat{\gamma}_q^{(\ell)})^\dagg+\gamma-\tau=\hat{\gamma}+(\ell+1-q)\gamma+q\gamma^\dagg-(\ell+1)\tau\label{eqn:thm:sona:andreea:case:1:6}.
			\end{align}
		\end{subequations}
		These equations have a similar structure and we can consequently combine them into the single equation $z(\hat{\gamma}-\hat{\gamma}^\dagg)+(\ell+1-y)(\gamma-\gamma^\dagg)=0$ with $(z,y)\in\{(0,p+1),(0,p), (1,\ell+1-p), (1,q+1), (1,q), (0,\ell+1-q)\}$, where the pair $(z,y)=(0,p+1)$ corresponds for example to equation \eqref{eqn:thm:sona:andreea:case:1:1}, while the pair $(z,y)=(0,\ell+1-q))$ corresponds to equation \eqref{eqn:thm:sona:andreea:case:1:6}. 
		
		If $z=0$, one is concerned with the equation $(\ell+1-y)(\alpha-\beta)=0$ which implies $y=\ell+1$, since $\alpha>\beta$. Thus, one has either $p=\ell$ in Equation~\eqref{eqn:thm:sona:andreea:case:1:1} and is therefore left with the same monomial in the expansion of $u^{(\ell+1)}$, or one has $p=\ell+1$ in Equation~\eqref{eqn:thm:sona:andreea:case:1:2}. This is not possible, since one always has $p\leq\ell$. The only other equation with $z=0$ is Equation~\eqref{eqn:thm:sona:andreea:case:1:6}, where one would need $q=0$. This, however, is not possible as this monomial in the expansion of $u^{(\ell+1)}$ would originate from a commutator of a monomial in the expansion of $u^{(\ell)}$ with the multi-index $\hat{\gamma}_0^{(\ell)}$ and the monomial $g_\sigma^\gamma$. The multi-index $\hat{\gamma}_0^{(\ell)}$ violates clearly the condition $\hat{\gamma}_q^{(\ell)}\geq (\hat{\gamma}_q^{(\ell)})^\dagg$ and can therefore not appear in the expansion of $u^{(\ell)}$. If now $z=1$, one is concerned with the equation $\hat{\alpha}-\hat{\beta}+(\ell+1-y)(\alpha-\beta)=0$. To satisfy this equation one needs $y>\ell+1$, since $\alpha>\beta\geq0$ and $\hat{\alpha}>\hat{\beta}\geq0$. However, $y$ can clearly never be larger than $\ell+1$ making this case impossible. All together, this multi-index analysis shows that there exists exactly one monomial in the expansion of $u^{(\ell+1)}$ with multi-index $\gamma_{\ell+1}^{(\ell+1)}$ and non-vanishing prefactor. The observation $\gamma_{\ell+1}^{(\ell+1)}>(\gamma_{\ell+1}^{(\ell+1)})^\dagg$ implies furthermore $g_{(-\sigma)^\ell\hat{\sigma}}^{\gamma_{\ell+1}^{(\ell+1)}}\neq0$. This shows consequently $\deg(u^{(\ell+1)})=d^{(\ell+1)}$, since $\deg(u^{(\ell)})\leq d^{(\ell)}$  for all $\ell\in \N_{\geq0}$ by Lemma~\ref{lem:generic:chain:behavior}, and $|\gamma_{\ell+1}^{(\ell+1)}|=d^{(\ell+1)}$, concluding the proof by induction in this first case. 
		\par\noindent
		\textbf{Case 2}: $\boldsymbol{\beta=0=\hat{\beta}}$. By Lemma 7 from \plainrefs{Bruschi:Xuereb:2024}, this implies $\alpha\geq 3$. Thus, one can choose without loss of generality under appropriate relabeling $\alpha\geq 3>\hat{\alpha}>0$ for $g_{\hat{\sigma}}^{\hat{\gamma}}\in\hat{A}_1^1\oplus\hat{A}_1^2$ and $\alpha\geq\hat{\alpha}\geq 3$ for $g_{\hat{\sigma}}^{\hat{\gamma}}\in\hat{A}_1^\perp$. Here, we simply need to compute the commutator of these two monomials:
		\begin{align*}
			u^{(1)}=[g_{\hat{\sigma}}^{\hat{\gamma}},g_\sigma^\gamma]\upto{d^{(1)}}\hat{\epsilon}\left(\alpha\hat{\beta}-\beta\hat{\alpha}\right)g_{-\sigma\hat{\sigma}}^{\gamma+\hat{\gamma}-\tau}+\hat{\epsilon}'\left(\alpha\hat{\alpha}-\beta\hat{\beta}\right)g_{-\sigma\hat{\sigma}}^{\Theta(\gamma+\hat{\gamma}^\dagg)-\tau}=\hat{\epsilon}'\alpha\hat{\alpha}g_{-\sigma\hat{\sigma}}^{\gamma+\hat{\gamma}^\dagg-\tau}.
		\end{align*}
		The prefactor $\hat{\epsilon}'\alpha\hat{\alpha}$ of the latter monomial is non-vanishing since $\alpha,\hat{\alpha}>0$. This term can therefore only vanish if $\gamma+\hat{\gamma}^\dagg-\tau=(\alpha-1,\hat{\alpha}-1)\in \N_=^2$ and $-\sigma\hat{\sigma}=-1$. In order for this to happen we would have to have  $\gamma=\hat{\gamma}$ and $\sigma=\hat{\sigma}$, i.e., $g_\sigma^\gamma=g_{\hat{\sigma}}^{\hat{\gamma}}$. However, this case is excluded by the initial assumption that one considers two distinct monomials. Hence, $\hat{\epsilon}'\alpha\hat{\alpha}g_{-\sigma\hat{\sigma}}^{\gamma+\hat{\gamma}^\dagg-\tau}$ does not vanish.
		
		We now are led to consider the following three subcases $g_{\hat{\sigma}}^{\hat{\gamma}}\in\hat{A}_1^1$, $g_{\hat{\sigma}}^{\hat{\gamma}}\in\hat{A}_1^2$, and $g_{\hat{\sigma}}^{\hat{\gamma}}\in\hat{A}_1^\perp$ separately:
		\begin{itemize}
			\item[$\boldsymbol{\hat{A}_1^2}$:] Here one has $g_{\hat{\sigma}}^{\hat{\gamma}}\in\hat{A}_1^2$, and $\gamma+\hat{\gamma}^\dagg-\tau=(\alpha-1,1)$, since $\hat{\gamma}=2\iota_1$. One can then simply consider the generic Commutator Chain with seed element $g_{\hat{\sigma}}^{\hat{\gamma}}=u^{(0)}$ and auxiliary sequence $g_{-\sigma\hat{\sigma}}^{\gamma+\hat{\gamma}^\dagg-\tau}\equiv s^{(\ell)}\in\hat{A}_1^\perp$ for all $\ell\in\N_{\geq0}$ instead of the auxiliary sequence $g_\sigma^\gamma=s^{(\ell)}$, which can be treated by the first case, since $\chimap{\hat{\gamma}\circ(\gamma+\hat{\gamma}^\dagg-\tau)^\dagg}=2\neq0$. This generic Commutator Chain will not necessarily lie in $\g=\lie{\{g_\sigma^\gamma,g_{\hat{\sigma}}^{\hat{\gamma}}\}}$ but, according to Proposition~\ref{prop:equivalnce:generic:com:chains}, it will generate to highest degree $d^{(\ell)}$ the same elements as those found in the generic Commutator Chain with seed element $g_{\hat{\sigma}}^{\hat{\gamma}}$ and auxiliary sequence $s^{(\ell)}\equiv u^{(1)}/(\hat{\epsilon}'\alpha\hat{\alpha})\upto{d^{(1)}}g_{-\sigma\hat{\sigma}}^{\gamma+\hat{\gamma}^\dagg-\tau}$ for all $\ell\in\N_{\geq0}$, thereby implying that $\g$ is infinite dimensional.
			\item[$\boldsymbol{\hat{A}_1^\perp}$:] Here one has $g_{\hat{\sigma}}^{\hat{\gamma}}\in\hat{A}_1^\perp$, and $\gamma+\hat{\gamma}^\dagg-\tau=(\alpha-1,\hat{\alpha}-1)$. This multi-index is only in $\N_=^2$ if $\alpha=\hat{\alpha}$. Suppose this is the case. Then, to be concerned  with two different monomials $g_{\hat{\sigma}}^{\hat{\gamma}}$ and $g_\sigma^\gamma$, one must have consequently $\hat{\sigma}=-\sigma$. This allows however to create a Commutator Chain of type II \plainrefs{Bruschi:Xuereb:2024} that is wholly contained in $\g$, and is composed by elements of strictly increasing degree due to Theorem 53 from \plainrefs{Bruschi:Xuereb:2024}, thereby implying that $\g$ is an infinite-dimensional Lie algebra. Therefore, we can assume without loss of generality $\alpha>\hat{\alpha}$ and can replace, by the same argument as before, $g_\sigma^\gamma$ with $g_{-\sigma\hat{\sigma}}^{\hat{\gamma}+\gamma-\tau}\in\hat{A}_1^\perp$ as the auxiliary sequence in the generic Commutator Chain of interest. In such case, we would then be once again in the remit of Case 1, since $\gamma+\hat{\gamma}^\dagg-\tau$ satisfies the necessary conditions for the first case given that $\chimap{\hat{\gamma}\circ(\gamma+\hat{\gamma}^\dagg-\tau)^\dagg}=\hat{\alpha}(\hat{\alpha}-1)\neq0$.  Proposition~\ref{prop:equivalnce:generic:com:chains} guarantees that the generic Commutator Chain obtained this way generates, to highest degree $d^{(\ell)}$, the same elements as the generic Commutator Chain with seed element $g_{\hat{\sigma}}^{\hat{\gamma}}$ and auxiliary sequence $s^{(\ell)}\equiv u^{(1)}/(\hat{\epsilon}'\alpha\hat{\alpha})\upto{d^{(1)}}g_{-\sigma\hat{\sigma}}^{\hat{\gamma}+\gamma-\tau}$ for all $\ell\in\N_{\geq0}$. All of these elements lie in $\g$, which is therefore infinite dimensional.
			\item[$\boldsymbol{\hat{A}_1^1}$:] Here one has $g_{\hat{\sigma}}^{\hat{\gamma}}\in\hat{A}_1^1$, and $g_{-\sigma\hat{\sigma}}^{\gamma+\hat{\gamma}^\dagg-\tau}\in\hat{A}_1^2$ or $g_{-\sigma\hat{\sigma}}^{\gamma+\hat{\gamma}^\dagg-\tau}\in\hat{A}_1^\perp$ and can, by the same logic as in the previous two cases, simply replace $g_{\hat{\sigma}}^{\hat{\gamma}}$ with $u^{(1)}/(\hat{\epsilon}'\alpha\hat{\alpha})\upto{d^{(1)}}g_{-\sigma\hat{\sigma}}^{\gamma+\hat{\gamma}^\dagg-\tau}$ as the seed element $u^{(0)}$ in the generic Commutator Chain of interest. Thus, one will simply be concerned with one of the previous two cases, since $\gamma+\hat{\gamma}^\dagg-\tau=(\alpha-1,0)$.
		\end{itemize}
		\textbf{Case 3}: $\boldsymbol{\alpha\hat{\beta}=\beta\hat{\alpha}\neq0}$. This requires clearly $\beta,\hat{\beta}>0$ and consequently $g_{\hat{\sigma}}^{\hat{\gamma}}\in\hat{A}_1^\perp$. First, note that in the case $\gamma=\hat{\gamma}$, we can create a Commutator Chain of type II , which lies entirely in $\g$ and contains elements of strictly increasing degree due to Theorem 53 of \plainrefs{Bruschi:Xuereb:2024}, thereby implying that $\g$ is an infinite-dimensional Lie algebra. This is the case, as one assumes $g_\sigma^\gamma\neq g_{\hat{\sigma}}^{\hat{\gamma}}=g_{\hat{\sigma}}^{\gamma}$, implying $\sigma=-\hat{\sigma}$. We are left considering the case $\gamma\neq\hat{\gamma}$ and can assume without loss of generality $\Theta(\gamma+\hat{\gamma}^\dagg)=\gamma+\hat{\gamma}^\dagg$. If this is not the case, one can simply switch the roles of $g_{\hat{\sigma}}^{\hat{\gamma}}$ and $g_\sigma^\gamma$. Here, we have now
		\begin{align*}
			u^{(1)}\upto{d^{(1)}}\hat{\epsilon}\left(\alpha\hat{\beta}-\beta\hat{\alpha}\right)g_{-\sigma\hat{\sigma}}^{\gamma+\hat{\gamma}-\tau}+\hat{\epsilon}'\left(\alpha\hat{\alpha}-\beta\hat{\beta}\right)g_{-\sigma\hat{\sigma}}^{\Theta(\gamma+\hat{\gamma}^\dagg)-\tau}=\hat{\epsilon}'\left(\alpha\hat{\alpha}-\beta\hat{\beta}\right)g_{-\sigma\hat{\sigma}}^{\gamma+\hat{\gamma}^\dagg-\tau},
		\end{align*}
		since we assume that $\alpha\hat{\beta}=\beta\hat{\alpha}$ in this third case, thereby prompting the first term $\hat{\epsilon}(\alpha\hat{\beta}-\beta\hat{\alpha})g_{-\sigma\hat{\sigma}}^{\gamma+\hat{\gamma}-\tau}$ to vanish. The remaining term in the expression above does not vanish, since $\alpha>\beta$ and $\hat{\alpha}>\hat{\beta}$ implies $\alpha\hat{\alpha}>\beta\hat{\beta}$, and because $\gamma\neq\hat{\gamma}$ implies $\gamma+\hat{\gamma}^\dagg-\tau\notin \N_=^2$, guaranteeing $g_{-\sigma\hat{\sigma}}^{\gamma+\hat{\gamma}^\dagg-\tau}\neq0$. Now, we simply need to compute 
		\begin{align*}
			-\chimap{(\gamma+\hat{\gamma}^\dagg-\tau)\circ\gamma^\dagg}=\alpha(\beta+\hat{\alpha}-1)-\beta(\alpha+\hat{\beta}-1)=\alpha(\hat{\alpha}-1)-\beta(\hat{\beta}-1)>0,
		\end{align*}
		since $\alpha>\beta>0$ and $\hat{\alpha}-1>\hat{\beta}-1\geq0$. Therefore, at this point we can replace the seed element $g_{\hat{\sigma}}^{\hat{\gamma}}$ in the original generic Commutator Chain with $g_{-\sigma\hat{\sigma}}^{\gamma+\hat{\gamma}^\dagg-\tau}\in\hat{A}_1^\perp$, which is treated by the first case. This yields, according to Proposition~\ref{prop:equivalnce:generic:com:chains}, a generic Commutator Chain that contains to highest degree the same elements as the generic Commutator Chain with seed element $u^{(1)}/(\hat{\epsilon}'(\alpha\hat{\alpha}-\beta\hat{\beta}))\upto{d^{(1)}}g_{-\sigma\hat{\sigma}}^{\gamma+\hat{\gamma}^\dagg-\tau}$ and auxiliary sequence $g_\sigma^\gamma\equiv s^{(\ell)}$ for all $\ell\in\N_{\geq0}$, and this Commutator Chain lies in $\g$.
		
		Finally, given any two monomials $g_{\hat{\sigma}}^{\hat{\gamma}}\in\hat{A}_1^1\oplus\hat{A}_1^2\oplus\hat{A}_1^\perp$ and $g_\sigma^\gamma\in\hat{A}_1^\perp$ with $g_\sigma^\gamma\neq g_{\hat{\sigma}}^{\hat{\gamma}}$, one can construct a generic Commutator Chain that lies completely in $\g$ and contains elements of strictly increasing degree, which in turn implies that the Lie algebra $\g$ is infinite dimensional.
	\end{proof}
	
	\begin{proposition}   
		Given an  $n$-dimensional Lie algebra $\g$ with vector space basis $\mathcal{B}=\{g_1,\ldots,g_n\}$, then Algorithm~\ref{alg:generating:all:subalgebras:from:subset:of:basis} generates every possible realization of every possible subalgebra of $\g$ that is generated by a subset of~$\mathcal{B}$.
		{
			\begin{algorithm}[t]
				\DontPrintSemicolon
				\KwData{A list $\mathcal{B}$ of elements $g_1,\ldots,g_n$ that form a basis of the finite-dimensional Lie algebra $\g$}
				\KwResult{List $\mathcal{L}$ of all realizations of finite-dimensional Lie algebras generated by a subset of $\mathcal{B}$}
				\SetKwData{Left}{left}\SetKwData{This}{this}\SetKwData{Up}{up}
				\SetKwFunction{Union}{Union}\SetKwFunction{FindCompress}{FindCompress}
				\SetKwInOut{Input}{input}\SetKwInOut{Output}{output}
				\BlankLine
					$\mathcal{L}\leftarrow\{\spn\{g_1\},\ldots,\spn\{g_n\},\g\}$\tcc*[h]{Include all trivial subalgebras}\;
					\BlankLine
					\SetKwFunction{Fctf}{Fct1}
					\SetKwProg{Fn}{Function}{:}{end} 
					\Fn(\tcc*[h]{Generate a basis for $\lie{\mathcal{S}\cup\mathcal{A}}$}){\Fctf{$\mathcal{S},\mathcal{A}$}}{
						\uIf{$\spn\{\mathcal{S}\cup\mathcal{A}\}\neq\g$}{ 
							$\mathcal{H}\leftarrow\emptyset$\;
							\For{$(s,a)\in\mathcal{S}\times\mathcal{A}$}{
								\lIf{$[s,a]\notin\spn\{\mathcal{S}\cup       \mathcal{A}\}$}{
									$\mathcal{H}\leftarrow \mathcal{H}\cup\{[s,a]\}$
								}
							}
							\For{$(a_j,a_k)\in\mathcal{A}\times\mathcal{A}$ \textbf{with} $j<k$}{
								\lIf{$[a_j,a_k]\notin\spn\{\mathcal{S}\cup\mathcal{A}\}$}{
									$\mathcal{H}\leftarrow\mathcal{H}\cup\{[a_j,a_k]\}$
								}
							}
							\lIf{$\mathcal{H}=\emptyset$}{
								\Return $\mathcal{S}\cup\mathcal{A}$
							}
							\lElse{
								\Fctf{$\mathcal{S}\cup\mathcal{A}$, Basis of $\spn\{\mathcal{H}\}$}
							}
						}
						\lElse{
							\Return $\mathcal{B}$
						}
					}
					\BlankLine
					\SetKwFunction{Fctg}{Proc1} 
					\SetKwProg{Fn}{Procedure}{:}{end} 
					\Fn(\tcc*[h]{All realizations of every subalgebra generated by $\mathcal{S}$ and $g_j$ for $j\geq k$}){\Fctg{$\mathcal{S},k$}}{
						\If{$k\leq n\;\wedge\;\spn\{\mathcal{S}\cup\{g_k\}\}\neq\g$}{
							\If{$g_k\in\spn\{\mathcal{S}\}$}{
								\Fctg{$\mathcal{S},k+1$}
							}
							\Else{
								\For{$\ell\in\{k,\ldots,n\}$}{
									$\mathcal{T}\leftarrow$ \Fctf{$\mathcal{S},\{g_\ell\}$}\;
									\If{$\spn\{\mathcal{T}\}\notin\mathcal{L}$}{
                                        $\mathcal{L}\leftarrow\mathcal{L}\cup\{\spn\{\mathcal{T}\}\}$\tcc*[h]{append the space $\spn\{\mathcal{T}\}$ to the list/set of Lie algebras $\mathcal{L}$}\;
										\Fctg{$\mathcal{T},\ell+1$}\;
									}
								}
							}
						}
					}
					\BlankLine
					\tcc*[h]{All possible realizations of every possible subalgebra of $\g$ using only subsets of $\mathcal{B}$ as initial monomials}\;
					\For{$j\in\{1,\ldots,n-1\}$}{
						\Fctg{$\{g_j\},j+1$}\;
					}
				\caption{Generating subalgebras from set of basis elements}\label{alg:generating:all:subalgebras:from:subset:of:basis}
		\end{algorithm}}
	\end{proposition}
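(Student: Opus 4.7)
The algorithm consists of two mutually complementary routines: \texttt{Fct1}, which takes a basis $\mathcal{S}$ of a Lie subalgebra together with a set $\mathcal{A}$ of additional generators and returns a basis of $\lie{\mathcal{S}\cup\mathcal{A}}$; and \texttt{Proc1}, which recursively enumerates monotone-increasing index sequences $j_1<j_2<\cdots<j_m$ and uses \texttt{Fct1} to compute $\lie{\{g_{j_1},\ldots,g_{j_m}\}}$. My plan is to prove (i) that \texttt{Fct1} correctly computes the Lie closure under the invariant that $\spn\{\mathcal{S}\}$ is already Lie-closed, (ii) that \texttt{Proc1} enumerates every subset of $\mathcal{B}$ via its sorted representation, and (iii) to conclude from (i)--(ii) together with the initialisation of $\mathcal{L}$ that every Lie-closure $\lie{\mathcal{S}'}$ with $\mathcal{S}'\subseteq\mathcal{B}$ eventually appears in $\mathcal{L}$.

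For step (i) I would maintain the loop invariant that at every entry to \texttt{Fct1}$(\mathcal{S},\mathcal{A})$ the set $\mathcal{S}\cup\mathcal{A}$ is linearly independent and $\spn\{\mathcal{S}\}$ is closed under the Lie bracket. Under this invariant, all commutators not already accounted for in $\spn\{\mathcal{S}\cup\mathcal{A}\}$ must come from pairs in $\mathcal{S}\times\mathcal{A}$ or from $\mathcal{A}\times\mathcal{A}$, which are precisely the pairs enumerated by \texttt{Fct1}. If every such commutator lies in $\spn\{\mathcal{S}\cup\mathcal{A}\}$, then $\mathcal{S}\cup\mathcal{A}$ is a basis of $\lie{\mathcal{S}\cup\mathcal{A}}$ and the routine returns it. Otherwise the set $\mathcal{H}$ of new commutators is non-empty, the recursive call passes $\mathcal{S}\cup\mathcal{A}$ as the now-closed part and a basis of $\spn\{\mathcal{H}\}$ as the new generators, and the invariant is preserved because the only uncomputed commutators involve the new elements. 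Termination is immediate since each recursive call strictly increases $\dim\spn\{\mathcal{S}\cup\mathcal{A}\}$, which is bounded by $n$. The early-exit branch that returns $\mathcal{B}$ when $\spn\{\mathcal{S}\cup\mathcal{A}\}=\g$ is justified by $\lie{\g}=\g$.

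For step (ii) I would induct on the length $m$ of the subset. The base case $m=1$ is covered by the pre-loaded content of $\mathcal{L}$, and the subset $\{g_{j_1}\}$ with $j_1<n$ additionally triggers the outer-loop call \texttt{Proc1}$(\{g_{j_1}\},j_1+1)$. For the inductive step, suppose that after some prefix of recursive calls the current state is \texttt{Proc1}$(\mathcal{T}_k,j_k+1)$ with $\mathcal{T}_k$ a basis of $\lie{\{g_{j_1},\ldots,g_{j_k}\}}$. The inner \texttt{for} loop iterates $\ell$ through all indices $\geq j_k+1$, and in particular through $\ell=j_{k+1}$, so the call \texttt{Fct1}$(\mathcal{T}_k,\{g_{j_{k+1}}\})$ returns a basis $\mathcal{T}_{k+1}$ of $\lie{\{g_{j_1},\ldots,g_{j_{k+1}}\}}$ by step (i). The resulting subspace is appended to $\mathcal{L}$ unless already present, and the subsequent recursive call continues the enumeration. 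Since every subset of $\mathcal{B}$ has a unique sorted representation $j_1<\cdots<j_m$, every Lie closure of a subset of $\mathcal{B}$ is thereby produced.

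The main obstacle I anticipate is the bookkeeping invariant in step (i): without the assumption that $\spn\{\mathcal{S}\}$ is already Lie-closed the algorithm would miss the commutators $[s,s']$ with $s,s'\in\mathcal{S}$, since they are not among the enumerated pairs. The optimisation in \texttt{Proc1} that skips redundant $g_k\in\spn\{\mathcal{S}\}$ is also mildly subtle but follows since $\lie{\mathcal{S}\cup\{g_k\}}=\lie{\mathcal{S}}$ whenever $g_k\in\spn\{\mathcal{S}\}$, so no new subspace is lost by this shortcut. Everything else reduces to a straightforward dimension count and the bijection between subsets of $\mathcal{B}$ and their sorted index tuples.
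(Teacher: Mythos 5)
You follow the same two-stage decomposition as the paper: first the correctness of \texttt{Fct1} as a Lie-closure routine, then the completeness of the enumeration performed by \texttt{Proc1}. Your step (i) is essentially the paper's Part 1, up to one imprecision in the stated invariant: at the recursive call the new first argument $\mathcal{S}\cup\mathcal{A}$ spans a space that is \emph{not} closed under the bracket (that is precisely why $\mathcal{H}\neq\emptyset$), so ``$\spn\{\mathcal{S}\}$ is Lie-closed'' is not literally preserved. What is preserved, and what your phrase ``the only uncomputed commutators involve the new elements'' really amounts to, is the weaker invariant $[\mathcal{S},\mathcal{S}]\subseteq\spn\{\mathcal{S}\cup\mathcal{A}\}$; with that correction the termination and correctness argument goes through.

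The genuine gap is in step (ii). Your induction asserts that after $\mathcal{T}_{k+1}$ is computed ``the subsequent recursive call continues the enumeration,'' but the algorithm recurses with \texttt{Proc1}$(\mathcal{T},\ell+1)$ \emph{only if} $\spn\{\mathcal{T}\}\notin\mathcal{L}$. When the subspace is already recorded, the entire subtree below the sorted tuple $(j_1,\ldots,j_{k+1})$ is pruned and none of its supersets is ever visited, so the algorithm is not a complete traversal of the sorted subsets of $\mathcal{B}$ and your bijection argument does not apply as stated. One must argue separately that the pruning loses nothing: if $\spn\{\mathcal{T}\}$ was first obtained from some other generating set $\mathcal{S}'$ with $\lie{\mathcal{S}'}=\spn\{\mathcal{T}\}$, then for any further basis elements $g_{\ell_1},\ldots,g_{\ell_p}$ one has $\lie{\mathcal{S}\cup\{g_\ell,g_{\ell_1},\ldots,g_{\ell_p}\}}=\lie{\spn\{\mathcal{T}\}\cup\{g_{\ell_1},\ldots,g_{\ell_p}\}}=\lie{\mathcal{S}'\cup\{g_{\ell_1},\ldots,g_{\ell_p}\}}$, so every realization reachable from the pruned branch was already produced (or will be produced) along the branch that first recorded $\spn\{\mathcal{T}\}$. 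This is exactly the content of the paper's point 4 in its Part 2, and it is the most delicate part of the correctness proof; without it the completeness claim is unproven. Your treatment of the other shortcuts (the $g_k\in\spn\{\mathcal{S}\}$ skip and the early exit when the span already equals $\g$) is fine.
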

	
	\begin{proof}
		Let $\mathcal{B}=\{g_1,\ldots,g_n\}$ an $n$-dimensional basis of the Lie algebra $\g$ as a vector space. The proof is given in two parts: first, we show that the recursive procedure \Fctf{}, defined in Algorithm~\ref{alg:generating:all:subalgebras:from:subset:of:basis}, correctly constructs a subalgebra of $\g$; second, we show that the recursive iteration \Fctg{}, also defined in Algorithm~\ref{alg:generating:all:subalgebras:from:subset:of:basis}, exhausts all subsets of $\mathcal{B}$ that generate distinct realizations of subalgebras of $\g$.
		
		\textbf{Part 1}: We aim at showing that \Fctf{$\mathcal{S},\{g_k\}$} yields a subalgebra of $\g$ if $\mathcal{S}$ is a basis of a subalgebra of $\g$ and $g_k\notin\spn\{\mathcal{S}\}$. The first step is to check if $\spn\{\mathcal{S}\cup\{g_k\}\}=\g$. If this is true, \Fctf{} can simply return the basis $\mathcal{B}$, since $\mathcal{S}\cup\{g_k\}$ generates $\g$, and since $\g$ is a Lie algebra. If $\spn\{\mathcal{S}\cup\{g_k\}\}\neq\g$, one continues with the second step: One computes the commutator of every basis element $s$ in $\mathcal{S}$ with $g_k$. Note that one does not need to compute the commutator of the different basis elements in $\mathcal{S}$, since $\spn\{\mathcal{S}\}=\lie{\mathcal{S}}$. One collects every new element in the set $\mathcal{H}$. If $\mathcal{H}$ is empty, one has $\spn\{\mathcal{S}\cup\{g_k\}\}=\lie{\mathcal{S}\cup\{g_k\}}$ and \Fctf{} returns $\mathcal{S}\cup\{g_k\}$, since $\mathcal{S}\cup\{g_k\}$ is a basis of the vector space $\lie{\mathcal{S}\cup\{g_k\}}$. If $\mathcal{H}$ is non-empty, one chooses a basis $\mathcal{B}_{\mathcal{H}}$ of $\spn\{\mathcal{H}\}$. By construction, no element in $\mathcal{B}_{\mathcal{H}}$ is also in the vector space $\spn\{\mathcal{S}\cup\{g_k\}\}$. Thus one has that $\spn\{\mathcal{S}\cup\{g_k\}\cup\mathcal{B}_{\mathcal{H}}\}$ has the basis $\mathcal{S}\cup\{g_k\}\cup\mathcal{B}_{\mathcal{H}}$, since every element in $\mathcal{H}$ is not in $\spn\{\mathcal{S}\}$ and $g_k$ is also neither in $\spn\{\mathcal{S}\}$ nor in $\spn\{\mathcal{B}_{\mathcal{H}}\}$. We now repeat the process with $\mathcal{S}\cup\{g_k\}$ instead of $\mathcal{S}$ and $\mathcal{A}=\mathcal{B}_{\mathcal{H}}$ instead of $\{g_k\}$. The only new aspect is that one also needs to compute the commutator of every pair $(a_1,a_2)\in\mathcal{A}\times\mathcal{A}$, as one does not know if also such commutators are elements of $\spn\{\mathcal{S}\cup\mathcal{A}\}$. One can now use the anti-symmetric property of the Lie bracket to reduce the number of commutators one has to compute. Iterating this process yields consequently a set $\mathcal{S}'$ that satisfies either the property $\lie{\mathcal{S}'}=\spn\{\mathcal{S}'\}$ and is therefore a basis of the vector space determining $\lie{\mathcal{S}'}\neq\g$, or $\lie{\mathcal{S}'}=\g$. The procedure $\Fctf{}$ returns in the first case $\mathcal{S}'$ and in the second $\mathcal{B}$. The finiteness of $\g$ guarantees furthermore that this process will eventually terminate.
		
		\textbf{Part 2}: Now we have to show that performing \Fctg{$\{s_j\},j+1$} for every $j\in\{1,\ldots,n-1\}$ generates every possible realization of every possible subalgebra of $\g$ that is generated by a subset of $\mathcal{B}$. Note that one initializes the set $\mathcal{L}$ containing every realization with the trivial subalgebras $\lie{g_1},\ldots,\lie{g_n}$ and $\g$ which poses the bases $\{g_1\},\ldots,\{g_n\}$ and $\mathcal{B}$ respectively. Note that $\mathcal{B}$ contains $2^n-1$ non-empty distinct subsets that all span different vector spaces. The naive approach would therefore be to compute the Lie algebra that each subset of $\mathcal{B}$ generates. Here, we do this with a recursive algorithm, which generally reduces the number of Lie closures one has to compute. The process is the following: we start with the set $\mathcal{S}_j=\{g_j\}$ for $j\in\{1,\ldots,n-1\}$ and generate the corresponding Lie algebra $\lie{\{g_j\}}$. This initial step is skipped, since $\spn\{g_j\}=\lie{\{g_j\}}$ allows us to include these algebras in the list of all possible realizations $\mathcal{L}$ from the start. Then, we construct the sets $\mathcal{S}_{j,k_1}=\{g_j,g_{k_1}\}$ by adding to each set $\mathcal{S}_j$ all elements $g_{k_1}$ satisfying $k_1> j$, and we subsequently generate the corresponding Lie algebras $\lie{\mathcal{S}_{j,k_1}}$ of these sets. The next step consists of constructing the sets $\mathcal{S}_{j,k_1,k_2}=\{g_j,g_{k_1},g_{k_2}\}$ by adding to each set $\mathcal{S}_{j,k_1}$ all elements $g_{k_2}$ that satisfy $k_2>k_1$. Repeating this procedure generates clearly every possible subset of $\mathcal{B}$ containing the element $g_j$. The ordering condition $j<k_{1}<\ldots<k_{m}\leq n$ guarantees furthermore that every set $\mathcal{S}_{j,k_1,\ldots,k_m}$ is unique. We do not have to perform \Fctg{$\{g_n\},n+1$}, since $\{g_n\}$ is the only set that can be created this way and the corresponding vector space $\lie{\{g_n\}}$ is already included in the set $\mathcal{L}$ because $n+1>n$. Thus, one has to perform \Fctg{$\{g_j\},j+1$} only for all $j\in\{1,\ldots,n-1\}$.
		
		We do now have to consider every step in the algorithm, where following the above construction is skipped, to verify that this does not change the set containing every possible realization $\mathcal{L}$: Let us consider \Fctg{$\mathcal{S},k$}, where the input of \Fctg{} is $\mathcal{S}$, a basis of the Lie algebra generated by $\mathcal{S}_{j,k_1,\ldots,k_m}$ and an integer $k>m$. Here, one performs the following steps and checks: 
		\begin{enumerate}
			\item If $k>n$ (Algorithm~\ref{alg:generating:all:subalgebras:from:subset:of:basis} line 13): One has already reached the last element in $\mathcal{B}$ and cannot add $g_{k}$ to $\mathcal{S}$, since $g_k$ does not exists. Thus $\mathcal{S}$ is already the largest possible set given the construction rules.
			\item If $\spn\{\mathcal{S}\cup\{g_k\}\}=\g$ (Algorithm~\ref{alg:generating:all:subalgebras:from:subset:of:basis} line 13): Adding any element $g_{\ell}$ to $\mathcal{S}\cup\{g_k\}$ cannot generate a different algebra than $\g$, so one can skip all of the sets $\mathcal{S}_{j,k_1,\ldots,k_m,k,\ell_1,\ldots,\ell_p}$ for $k<\ell_1<\ell_2<\ldots<\ell_p$, since they would all generate $\g$ which is already included in $\mathcal{L}$. 
			\item If $g_k\in\spn\{\mathcal{S}\}$ (Algorithm~\ref{alg:generating:all:subalgebras:from:subset:of:basis} line 14): Adding $g_k$ to $\mathcal{S}$ will not generate a new algebra. Thus, one can skip the set $\mathcal{S}_{j,k_1,\ldots,k_m,k}$ as it generates the same Lie algebra as $\mathcal{S}_{j,k_1,\ldots,k_m}$, and one needs to try to add the next element $g_{k+1}$ to $\mathcal{S}$.
			\item If none of the above conditions hold (Algorithm~\ref{alg:generating:all:subalgebras:from:subset:of:basis} lines 15-20): one computes a basis $\mathcal{T}$ for the Lie algebra $\lie{\mathcal{S}\cup\{g_\ell\}}$ for every $\ell\in\{k,\ldots,n\}$ (Algorithm~\ref{alg:generating:all:subalgebras:from:subset:of:basis} lines 16 and 17).
			Then, one checks if $\spn\{\mathcal{T}\}$ is already included in $\mathcal{L}$ (see Algorithm~\ref{alg:generating:all:subalgebras:from:subset:of:basis} line 18). If it is not, one repeats this process with $\mathcal{T}$ instead of $\mathcal{S}$ and $\ell+1$ instead of $k$ (see Algorithm~\ref{alg:generating:all:subalgebras:from:subset:of:basis} lines 20 and 21) and adds $\spn\{\mathcal{T}\}$ to $\mathcal{L}$ (Algorithm~\ref{alg:generating:all:subalgebras:from:subset:of:basis} line 19). But if it is already included in $\mathcal{L}$, one does not repeat this process. I.e., one does not compute the algebra generated by the sets $\mathcal{S}\cup\{g_{\ell},g_{\ell_2},\ldots,g_{\ell_p}\}$ for $\ell<\ell_1<\ldots<\ell_p\leq n$ and $p\geq 1$. The reason for this is simple: If the Lie algebra $\spn\{\mathcal{T}\}$ has been obtained in a previous step, one did already compute all algebras generated by the successive sets that span the same vector spaces as $\mathcal{S}\cup\{g_{\ell},g_{\ell_2},\ldots,g_{\ell_m}\}$ and generate therefore the same Lie algebras. Note that the initialization of $\mathcal{L}$ does not interfere with this, since one only considers Lie algebras that are generated by sets with at least two elements.
		\end{enumerate}
		These final steps conclude the proof.
	\end{proof}
	
	In order to help the reader better understand our Algorithm~\ref{alg:generating:all:subalgebras:from:subset:of:basis}, we apply it to an example where the Lie algebra is $\g=\lie{\{s_1=i,s_2=ia^\dagg a, s_3=g_+^{\iota_1},s_4=g_-^{\iota_1},s_5=g_+^{2\iota_1},s_6=g_-^{2\iota_1}\}}$. We illustrate each step in which a new Lie subalgebra is added to the set $\mathcal{L}$ of realizations of subalgebras, also indicating the parent algebra through appropriate indentation. The corresponding illustration of the execution of the algorithm is shown in Figure~\ref{fig:example:algorithm}. We are now in the position to state one of our main results.
	
	\begin{figure}
		\centering
		\includegraphics[width=0.85\linewidth]{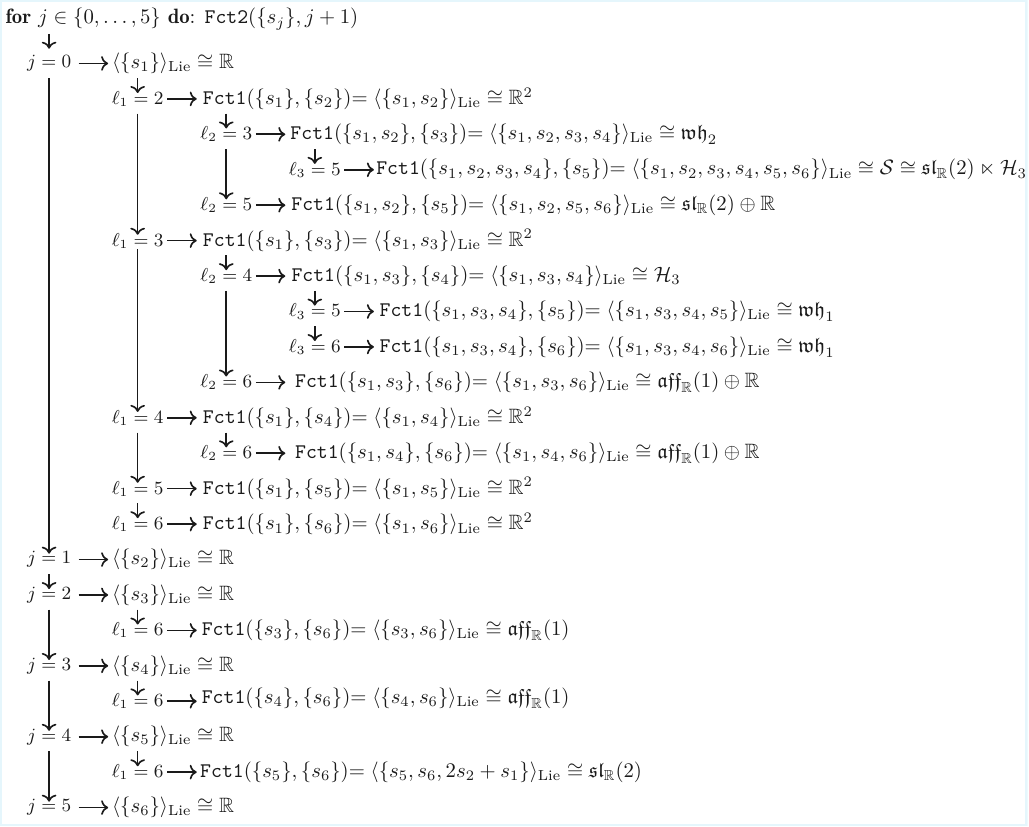}
		\caption{Schematic visualization of Algorithm~\ref{alg:generating:all:subalgebras:from:subset:of:basis} applied to the basis $\mathcal{B}=\{s_1=i,s_2=ia^\dagg a, s_3=g_+^{\iota_1},s_4=g_-^{\iota_1},s_5=g_+^{2\iota_1},s_6=g_-^{2\iota_1}\}$. Here we omit to include the trivial Lie subalgebras $\lie{s_1},\ldots,\lie{s_6}$, and $\lie{\mathcal{B}}$ generated by subsets of $\mathcal{B}$ in the initial $\mathcal{L}$-set of all such subalgebras. Instead, we initialize $\mathcal{L}$ as the empty set and generate these subalgebras explicitly. This is achieved by performing the initial \texttt{for}-loop over the indices $\{0,1,\ldots,5\}$, defining $\{s_0\}:=\emptyset$, and not skipping the generation step when $\lie{\mathcal{S}\cup \{s_k\}}=\lie{\mathcal{B}}$ occurs the first time. Only steps in which a new realization of a Lie subalgebra is generated and added to $\mathcal{L}$ are visualized. Horizontal lines indicate points at which the function \texttt{Fct1} is executed, while vertical lines represent \texttt{for}-loops initiated by the recursive procedure \texttt{Proc2}, where new elements are added to the basis of the previously generated subalgebra. For each subalgebra generated via \texttt{Fct1}, we also provide its isomorphism class.}
		\label{fig:example:algorithm}
	\end{figure}

	\begin{theorem}[Subalgebra classification]\label{thm:glossary:monomial:genereted}
		Any non-abelian finite-dimensional subalgebra $\g$ of the skew-hermitian Weyl algebra $\hat{A}_1$ that is generated by a finite set of single monomials $\mathcal{G}:=\{g^{\gamma _p}_{\sigma _p}\,\mid\, p\in \mathcal{R},\, |\mathcal{R}|<\infty\}$ is one of eight possible algebras listed in Figure~\ref{fig:list:non:abelian:finite:algebras:generated:by:single:g:terms}. This figure enumerates, furthermore, all possible realizations under the given constraints. 
		
		\begin{figure}[htpb]
			\centering
			\includegraphics[width=1.0\linewidth]{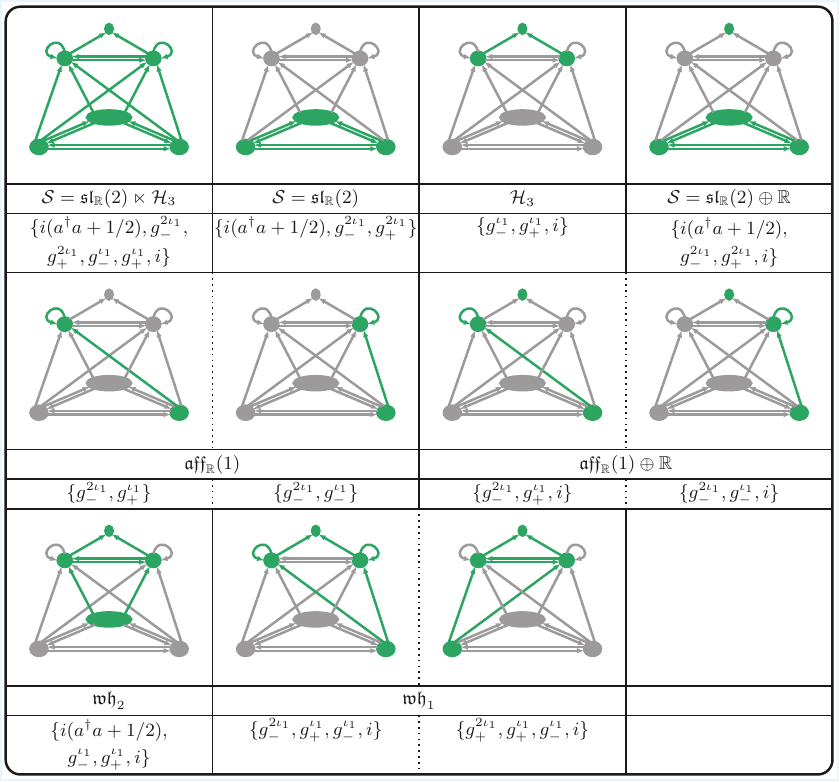}
			\caption{Glossary of all non-abelian and finite-dimensional Lie algebras that can be realized in $\hat{A}_1$ and are only generated from monomials. The directed graphs use the same basis as the depiction of the Schr\"odinger algebra in Figure~\ref{clone:trooper}(b). The realizations of the specific subalgebra under consideration is highlighted in green, while the remaining edges and vertices are shown in gray. The dimension of each realization is given by the number of green vertices.}
			\label{fig:list:non:abelian:finite:algebras:generated:by:single:g:terms}
		\end{figure}
	\end{theorem}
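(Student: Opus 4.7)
The plan is to reduce the problem to an exhaustive (but small) enumeration inside the six-dimensional Schr\"odinger algebra $\mathcal{S} \cong \lie{\hat{A}_1^{\leq 2}}$, and then to appeal to Algorithm~\ref{alg:generating:all:subalgebras:from:subset:of:basis} to produce all realizations of every non-abelian subalgebra generated by a subset of the canonical basis of $\hat{A}_1^{\leq 2}$.

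First, I would use the infinite-dimensionality lemmas to eliminate large portions of $\hat{A}_1$ from consideration. Writing $\mathcal{G} = \mathcal{G}^0 \cup \mathcal{G}^1 \cup \mathcal{G}^2 \cup \mathcal{G}^= \cup \mathcal{G}^\perp$, I would argue case-by-case that if $\lie{\mathcal{G}}$ is both finite dimensional and non-abelian, then $\mathcal{G}^\perp = \mathcal{G}^= = \varnothing$. Indeed, if $\mathcal{G}^\perp$ contains two (distinct) elements, or one element together with any element of $\mathcal{G}^1 \cup \mathcal{G}^2$, then Theorem~\ref{thm:andreea:sona} forces $\lie{\mathcal{G}}$ to be infinite dimensional; if $\mathcal{G}^\perp \ne \varnothing$ meets $\mathcal{G}^=$, then Proposition~\ref{prop:algebra:g=:gperp:space} does the same; and a single element of $\mathcal{G}^\perp$ together with $i a^\dagg a \in \mathcal{G}^0$ already generates an infinite-dimensional algebra by Proposition~\ref{prop:algebra:g0:gperp:divergence}. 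The only surviving scenario with $\mathcal{G}^\perp \ne \varnothing$ is $\mathcal{G} \subseteq \{i\} \cup \mathcal{G}^\perp$ with $|\mathcal{G}^\perp|=1$, which is abelian. A parallel argument using Proposition~\ref{prop:algebra:g1:g2:g=:space} eliminates $\mathcal{G}^= \ne \varnothing$ once $\mathcal{G}^1 \cup \mathcal{G}^2 \ne \varnothing$; in the remaining case $\mathcal{G} \subseteq \mathcal{G}^0 \cup \mathcal{G}^=$ the algebra is abelian by Proposition~\ref{prop:commutator:g0:g=:g0:g=}.

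Second, with the reduction $\mathcal{G} \subseteq \hat{A}_1^0 \cup \hat{A}_1^1 \cup \hat{A}_1^2$, every monomial in $\mathcal{G}$ lies in (a scalar multiple of) the finite basis $\mathcal{B} = \{i,\, i a^\dagg a,\, g_+^{\iota_1},\, g_-^{\iota_1},\, g_+^{2\iota_1},\, g_-^{2\iota_1}\}$ of $\hat{A}_1^{\leq 2}$, so by Proposition~\ref{prop:schroedinger:algebra} we are looking for non-abelian subalgebras of $\mathcal{S}$ generated by subsets of $\mathcal{B}$. Since the equivalence relation from Definition~\ref{def:equivalence:of:sets:of:monomial:generators} guarantees that non-zero real rescalings of monomials produce the same Lie closure, it suffices to run Algorithm~\ref{alg:generating:all:subalgebras:from:subset:of:basis} on the finite list $\mathcal{B}$ of six basis elements. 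The algorithm terminates because $\mathcal{B}$ is finite, and by its correctness it enumerates every realization of every Lie subalgebra of $\mathcal{S}$ obtainable from a subset of $\mathcal{B}$; discarding the abelian outputs leaves exactly the catalog claimed in Figure~\ref{fig:list:non:abelian:finite:algebras:generated:by:single:g:terms}.

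Third, I would verify the isomorphism label attached to each output of the algorithm. The bookkeeping is essentially done already by earlier propositions: $\mathcal{G} = \mathcal{G}^1$ yields (sub)algebras of $\gh_1$ (Proposition~\ref{prop:A11:algbera:heisenberg}); $\mathcal{G} = \mathcal{G}^0 \cup \mathcal{G}^1$ yields (sub)algebras of $\wh_2$ (Proposition~\ref{prop:algebra:go:g1:space}); $\mathcal{G} = \mathcal{G}^2$ yields (sub)algebras of $\slR{2}$ (Proposition~\ref{prop:algebra:g2:space}); $\mathcal{G} = \mathcal{G}^0 \cup \mathcal{G}^2$ yields (sub)algebras of $\slR{2} \oplus \R$ (Proposition~\ref{prop:algebra:g0:g2:space}); and $\mathcal{G} = \mathcal{G}^1 \cup \mathcal{G}^2$ (possibly with $\mathcal{G}^0$) yields (sub)algebras of $\mathcal{S}$ (Proposition~\ref{prop:algebra:g1:g2:space} and Proposition~\ref{prop:schroedinger:algebra}). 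Low-dimensional intermediate cases not already named (e.g., the two-dimensional non-abelian algebra spanned by $\{g_+^{2\iota_1}, g_-^{2\iota_1}\}$ together with the center $i$, or the five-dimensional semidirect sums appearing when combining $\hat{A}_1^1$ with $\hat{A}_1^2$) can be identified by directly comparing their commutation tables, drawn from Table~\ref{tab:Full:Commutator:Algebra:(pesudo):schroedinger:algebra}, against the standard low-dimensional classification.

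The main obstacle I expect is the bookkeeping of the enumeration in step two: one must be careful not to miss realizations that are inequivalent as subsets of $\hat{A}_1$ yet generate isomorphic algebras (so that each line of Figure~\ref{fig:list:non:abelian:finite:algebras:generated:by:single:g:terms} records \emph{every} realization, not just one representative). To handle this, rather than relying on the pruning inside the algorithm (which collapses realizations that span the same subspace), I would run an expanded variant that records the generating set itself and only collapses across the equivalence relation $\sim$ of Definition~\ref{def:equivalence:of:sets:of:monomial:generators}. With this modification the enumeration becomes a finite, mechanical check producing precisely the eight isomorphism classes and their listed realizations displayed in Figure~\ref{fig:list:non:abelian:finite:algebras:generated:by:single:g:terms}.
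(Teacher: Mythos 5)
Your proposal is correct and follows essentially the same route as the paper: the same elimination of $\mathcal{G}^\perp$ and $\mathcal{G}^=$ via Theorem~\ref{thm:andreea:sona} and Propositions~\ref{prop:algebra:g0:gperp:divergence}, \ref{prop:algebra:g=:gperp:space}, \ref{prop:algebra:g1:g2:g=:space}, and \ref{prop:commutator:g0:g=:g0:g=}, followed by running Algorithm~\ref{alg:generating:all:subalgebras:from:subset:of:basis} on the six-element basis of $\hat{A}_1^{\leq 2}$ and identifying each output by its commutation table. One small correction: no five-dimensional subalgebras arise in the enumeration (the dimensions are $1,2,3,4,6$, since combining generators from $\hat{A}_1^1$ and $\hat{A}_1^2$ already closes to the full six-dimensional Schr\"odinger algebra), and the paper's algorithm already records realizations at the level of distinct subspaces, so your proposed expanded variant is not needed.
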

	
	\begin{proof}
		Let $\mathcal{G}:=\{g^{\gamma _p}_{\sigma _p}\,\mid\, p\in \mathcal{R},\, |\mathcal{R}|<\infty\}$ be set of single monomials generating the Lie algebra $\g:=\langle\mathcal{G}\rangle_{\mathrm{Lie}}$.
		By Theorem~\ref{thm:andreea:sona}, Proposition~\ref{prop:algebra:g0:gperp:divergence}, and Proposition~\ref{prop:algebra:g=:gperp:space}, the set $\mathcal{G}$ can contain at most one monomial $g_\sigma^\gamma\in\hat{A}_1^\perp$ and if this is the case, one must have $\mathcal{G}\subseteq\{ci,g_\sigma^\gamma\}$. Thus, any finite-dimensional Lie algebra $\g$ generated by a set of single monomials $\mathcal{G}$ that includes an element $g_\sigma^\gamma\in\hat{A}_1^\perp$ must be abelian. Similarly, Propositions~\ref{prop:commutator:g0:g=:g0:g=},~\ref{prop:algebra:g1:g2:g=:space}, and~\ref{prop:algebra:g=:gperp:space} state that any $\mathcal{G}$ generating a finite-dimensional Lie algebra and containing a monomial $g_+^{\tilde{\gamma}}\in\hat{A}_1^=$ must be a subset of $\hat{A}_1^0\oplus\hat{A}_1^=$, and consequently generates an abelian Lie algebra by virtue of Proposition~\ref{prop:commutator:g0:g=:g0:g=}. Given these observations we conclude that any set $\mathcal{G}$ that generates a non-abelian finite-dimensional Lie algebra must be a subset of the set $\mathcal{G}_{\mathcal{S}}=\{i,ia^\dagg a,g_+^{\iota_1},g_-^{\iota_1},g_+^{2\iota_1},g_-^{2\iota_1}\}$. One can apply Algorithm~\ref{alg:generating:all:subalgebras:from:subset:of:basis} with the basis $\mathcal{B}=\{s_1=i,s_2=ia^\dagg a, s_3=g_+^{\iota_1},s_4=g_-^{\iota_1},s_5=g_+^{2\iota_1},s_6=g_-^{2\iota_1}\}$ to determine all possible subsets $\mathcal{G}\subseteq\mathcal{G}_{\mathcal{S}}$ that generate finite-dimensional Lie algebras. Algorithm~\ref{alg:generating:all:subalgebras:from:subset:of:basis} yields the following realizations in terms of their dimensionality $d$:
		\begin{enumerate}
			\item Here we have $\boldsymbol{d=1}$. The following six realizations are found: 
            \begin{align*}
                \lie{\{i\}},\quad\lie{\{ia^\dagg a\}},\quad\lie{\{g_+^{\iota_1}\}},\quad\lie{\{g_-^{\iota_1}\}},\quad\lie{\{g_+^{2\iota_1}\}},\quad\lie{\{g_-^{2\iota_1}\}}.
            \end{align*}
            These are clearly all abelian Lie algebras and hence mutually isomorphic as one-dimensional Lie algebras, and therefore they are also isomorphic to $\mathbb{R}$.
			\item Here we have $\boldsymbol{d=2}$. The following seven realizations are found:
            \begin{align*}
                \lie{\{i,ia^\dagg a\}},\quad \lie{\{i,g_+^{\iota_1}\}},\quad \lie{\{i,g_-^{\iota_1}\}},\quad \lie{\{i,g_+^{2\iota_1}\}},\quad \lie{\{i,g_-^{2\iota_1}\}},\quad \lie{\{g_+^{\iota_1},g_-^{2\iota_1}\}},\quad \lie{\{g_-^{\iota_1},g_-^{2\iota_1}\}}.
            \end{align*}
            The first five algebras are clearly abelian and therefore isomorphic to $\R^2$. The last two are non-abelian and therefore both isomorphic to $\aff_\R(1)$, since $\R^2$ and $\aff_\R(1)$ are the only two-dimensional Lie algebras up to isomorphism\plainrefs{Andrada:2005,DelBarco:2025}.
			\item Here we have $\boldsymbol{d=3}$. The following four realizations are found: 
            \begin{align*}
                \lie{\{i,g_+^{\iota_1},g_-^{2\iota_1}\}},\quad \lie{\{i,g_-^{\iota_1},g_-^{2\iota_1}\}},\quad \lie{\{i,g_+^{\iota_1},g_-^{\iota_1}\}}, \quad\lie{\{g_+^{2\iota_1},g_-^{2\iota_1},i(a^\dagg a+1/2)\}}.
            \end{align*}
            We now proceed to classify these.
			
			We start with the algebra $\lie{\{i,g_+^{\iota_1},g_-^{2\iota_1}\}}$. Here we can define the basis elements $e_1=g_-^{2\iota_1}/2$, $e_2=g_+^{\iota_1}$ and $e_3=i$. Using Table~\ref{tab:Full:Commutator:Algebra:(pesudo):schroedinger:algebra}, the commutation relations are: $[e_1,e_2]=e_2$, $[e_1,e_3]=0$, and $[e_2,e_3]=0$. Therefore $\lie{\{i,g_+^{\iota_1},g_-^{2\iota_1}\}}\cong\R\oplus\aff_\R(1)$ \plainrefs{Andrada:2005,DelBarco:2025}.
			Similarly, for the algebra $\lie{\{i,g_-^{\iota_1},g_-^{2\iota_1}\}}$, define the basis elements $e_1=-g_-^{2\iota_1}/2$, $e_2=g_-^{\iota_1}$, and $e_3=i$. Computing the commutation relation, allows to conclude $\lie{\{i,g_-^{\iota_1},g_-^{2\iota_1}\}}\cong\R\oplus\aff_\R(1)$. 
			Proposition~\ref{prop:A11:algbera:heisenberg} states that $\lie{\{i,g_+^{\iota_1},g_-^{\iota_1}\}}\cong\gh_1$ and Proposition~\ref{prop:algebra:g2:space} asserts that $\lie{\{g_+^{2\iota_1},g_-^{2\iota_1},i(a^\dagg a+1/2)\}}\cong\slR{2}$.
			According to Theorem 1.1 in \plainrefs{Andrada:2005}, the Lie algebras $\slR{2}$, $\gh_1$, and $\aff_\R(1)\oplus\R$ are pairwise non-isomorphic.
			\item Here we have $\boldsymbol{d=4}$. The following four realizations are found: 
            \begin{align*}
                 \lie{\{i,ia^\dagg a,g_+^{\iota_1},g_-^{\iota_1}\}},\quad\lie{\{i,ia^\dagg a,g_+^{2\iota_1},g_-^{2\iota_1}\}},\quad\lie{\{i,g_+^{\iota_1},g_-^{\iota_1},g_+^{2\iota_1}\}}, \quad\lie{\{i,g_+^{\iota_1},g_-^{\iota_1},g_-^{2\iota_1}\}}.
            \end{align*}
            We conclude, using Proposition~\ref{prop:algebra:go:g1:space}, that $\lie{\{i,ia^\dagg a,g_+^{\iota_1},g_-^{\iota_1}\}}\cong\wh_2$, while we conclude, using Proposition~\ref{prop:algebra:g0:g2:space}, that $\lie{\{i,ia^\dagg a,g_+^{2\iota_1},g_-^{2\iota_1}\}}\cong\slR{2}\oplus\R$. We now proceed to classify the remaining two algebras. 
			
			Consider the algebra $\lie{\{i,g_+^{\iota_1},g_-^{\iota_1},g_+^{2\iota_1}\}}$. Here, we can choose the basis elements $e_1=4i$, $e_2=g_+^{\iota_1}+g_-^{\iota_1}$, $e_3=g_+^{\iota_1}-g_-^{\iota_1}$, and $e_4=g_+^{2\iota_1}/2$. Using Table~\ref{tab:Full:Commutator:Algebra:(pesudo):schroedinger:algebra}, it is easy to confirm that, according to Definition~\ref{def:Wigner:Heisenberg:algebra:1}, $\lie{\{i,g_+^{\iota_1},g_-^{\iota_1},g_+^{2\iota_1}\}}\cong\wh_1$, since $[e_2,_3]=e_1$, $[e_2,e_4]=e_2$, and $[e_3,e_4]=-e_3$, while $e_1$ commutes with every other element. 
			We are left with considering the algebra $\lie{\{i,g_+^{\iota_1},g_-^{\iota_1},g_-^{2\iota_1}\}}$. Here, we can choose the following basis elements: $e_1=2i$, $e_2=g_-^{\iota_1}$, $e_3=g_+^{\iota_1}$, and $e_4=g_-^{2\iota_1}/2$. Once again, the commutation relations are: $[e_2,e_3]=e_1$, $[e_2,e_4]=e_2$, and $[e_3,e_4]=-e_3$ while $e_1$ commutes with every other element. Hence, one finds $\lie{\{i,g_+^{\iota_1},g_-^{\iota_1},g_-^{2\iota_1}\}}\cong\wh_1$. 
			\item Here we have $\boldsymbol{d=6}$. The only realization in this case is $\lie{\{i,ia^\dagg a,g_+^{\iota_1},g_-^{\iota_1},g_+^{2\iota_1},g_-^{2\iota_1}\}}$ which, by Proposition~\ref{prop:schroedinger:algebra}, is isomorphic to the Schr\"odinger algebra~$\mathcal{S}\cong\slR{2}\ltimes\gh_1$. \qedhere
		\end{enumerate}
	\end{proof}
	
	We believe that it may be helpful to highlight alternative conventions and names that are commonly used in the literature for the algebras obtained in Theorem~\ref{thm:glossary:monomial:genereted}. The Heisenberg algebra $\gh_1$ is also known as the reduced classical Galilei algebra $\mathrm{A}\Bar{\mathrm{G}}_1(1)$ \plainrefs{Nesterenko:2016}, which is sometimes also denoted by $A_{3,1}$, $\g_{3,1}$, $L_{3,2}$, or $L_0^4$ \plainrefs{Popovych:2003,Mubarakzyanov:1963,DeGraaf:2007,DeGraaf:2004}. Instead of $\gh_1$ it is also common to denote the Heisenberg algebra by $\mathcal{H}_3$, $H_1$, or $\gh_3$ \plainrefs{TST:2006,Vinet:2011,Andrada:2005}. The Schr\"odinger algebra $\mathcal{S}\cong\slR{2}\ltimes\gh_1$ is also referred to as the special Galileo algebra $\mathrm{AG}_3(1)$ \plainrefs{Nesterenko:2016}. The Wigner-Heisenberg algebra $\wh_1$ is sometimes denoted $A_{4,8}^{b=-1}$ or $\gd_4$, while the Wigner-Heisenberg algebra $\wh_2$ is denoted by $A_{4,9}^{a=0}$ or $\gd_{4,0}'$ \plainrefs{Popovych:2003,Andrada:2005}. By identifying $\tilde{e}_0=-e_4$, $\tilde{e}_1=e_2$, $\tilde{e}_2=e_3$, and $\tilde{e}_3=e_1$, one sees that $\wh_1\cong\gd_4$ \plainrefs{Andrada:2005}. Moreover, it is immediate that $\gd_{4,0}^{\prime}$ from \plainrefs{Andrada:2005} is isomorphic to the Wigner-Heisenberg algebra $\wh_2$. Moreover, the affine Lie algebra $\aff_\R(1)$ is also denoted by $\aff(\R)$, $A_{2,1}$, or $\g_{2,1}$ \plainrefs{Andrada:2005,Popovych:2003,Mubarakzyanov:1963}

	\section{Finite-dimensional Lie algebras containing a free Hamiltonian\label{section:lie:algebras:with:free:hamiltonian}}
	In this section we investigate the structure of finite-dimensional subalgebras of the skew-hermitian Weyl algebra $\hat{A}_1$ that include the free Hamiltonian term of the form $i(a^\dagg a+c)$, where $c\in\R$. The presence of such a term is physically motivated since it typically represents the free component of the Hamiltonian that induces free time-evolution of the Hamiltonian, and is normally part of the drift component when studying the system from the perspective of quantum control.
	Our goal is to determine necessary and sufficient conditions under which the inclusion of a free Hamiltonian term into a set of monomials still allows for the generated Lie algebra to be finite dimensional. We build on the previous section, which was concerned with the classification of monomial-generated algebras, and extend the analysis to arbitrary polynomials.
	This section culminates in the core structural result Theorem~\ref{thm:core:result}, which characterizes all finite-dimensional subalgebras of $\hat{A}_1$ that contain the term $i(a^\dagg a+c)$. Theorem~\ref{thm:all:physically:relevant:algebras:list} provides furthermore an exhaustive list of all such Lie algebras that are non-abelian.

	\subsection{Preliminary results}
	
	We start by observing that Propositions~\ref{prop:algebra:g0:gperp:divergence}, ~\ref{prop:algebra:g1:g2:g=:space}, and~\ref{prop:algebra:g=:gperp:space}, as well as Theorem~\ref{thm:andreea:sona}, remain valid even when considering the sets of monomials mentioned within and including terms of lower degree. This motivates the following lemma, which generalizes these results to linear combinations of monomials.
	
	\begin{lemma}\label{lem:adding:lower:degree:terms:extension}
		Let $e_1,e_2\in\hat{A}_1$ be two polynomials with degrees $d_1$, and $d_2$, respectively, such that $e_1\upto{d_1}c_1g_{\sigma_1}^{\gamma_1}$, and $e_2\upto{d_2} c_2 g_{\sigma_2}^{\gamma_2}$ with $c_1,c_2\in\R\setminus\{0\}$. Then, the Lie algebra $\g:=\lie{\{e_1,e_2\}}$ is infinite dimensional if either one of the following holds:
		\begin{enumerate}[label = (\alph*)]
			\item $g_{\sigma_1}^{\gamma_1}\in\hat{A}_1^1\oplus\hat{A}_1^2\oplus\hat{A}_1^\perp$ and $g_{\sigma_2}^{\gamma_2}=g_+^{\tilde{\gamma}}=2i (a^\dagg)^{\tilde{\alpha}}a^{\tilde{\alpha}}$ with $\tilde{\alpha}\in \N_{\geq2}$, or vice versa;
			\item $g_{\sigma_1}^{\gamma_1}\in \hat{A}_1^0\oplus\hat{A}_1^1\oplus\hat{A}_1^2$ and $g_{\sigma_2}^{\gamma_2}\in\hat{A}_1^\perp$ with $g_{\sigma_1}^{\gamma_1}\neq 2i$, or vice versa.
		\end{enumerate}
	\end{lemma}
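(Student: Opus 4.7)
The plan is to reduce the polynomial case to the monomial case already treated in Theorem~\ref{thm:andreea:sona} and in Propositions~\ref{prop:algebra:g0:gperp:divergence} and~\ref{prop:algebra:g1:g2:g=:space}. The key technical bridge is Proposition~\ref{prop:equivalnce:generic:com:chains}: starting from the generic Commutator Chain with seed $u^{(0)} = e_1$ and constant auxiliary sequence $s^{(\ell)} \equiv e_2$, whose elements all lie in $\g = \lie{\{e_1, e_2\}}$, and comparing it to the ``monomial'' chain with seed $c_1 g_{\sigma_1}^{\gamma_1}$ and auxiliary $c_2 g_{\sigma_2}^{\gamma_2}$, the two chains are equivalent up to degrees $d^{(\ell)} = d_1 + \ell(d_2 - 2)$. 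Hence, whenever the monomial-chain element at step $\ell$ has degree exactly $d^{(\ell)}$, so does $u^{(\ell)} \in \g$. Since polynomials of pairwise distinct degrees are linearly independent, $\g$ must then be infinite dimensional.

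For case (a), the auxiliary monomial $c_2 g_+^{\tilde{\gamma}}$ has $d_2 = |\tilde{\gamma}| = 2\tilde{\alpha} \geq 4$, so $d^{(\ell)}$ is strictly increasing. The proof of Proposition~\ref{prop:algebra:g1:g2:g=:space} uses a Commutator Chain of Type~I from \plainrefs{Bruschi:Xuereb:2024} whose monomial elements indeed realize degree $d^{(\ell)}$ at step $\ell$. Transferring this along the equivalence of Proposition~\ref{prop:equivalnce:generic:com:chains} closes this case.

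For case (b), split according to the location of $g_{\sigma_1}^{\gamma_1}$. If $g_{\sigma_1}^{\gamma_1} \in \hat{A}_1^1 \oplus \hat{A}_1^2$, the proof of Theorem~\ref{thm:andreea:sona} explicitly produces a generic Commutator Chain whose monomial elements have strictly increasing degree; Proposition~\ref{prop:equivalnce:generic:com:chains} transfers this to the polynomial chain inside~$\g$. Under the hypothesis $g_{\sigma_1}^{\gamma_1} \neq 2i$ the only remaining possibility is that $g_{\sigma_1}^{\gamma_1}$ is a non-zero multiple of $ia^\dagg a$. Here the naive chain stalls because Lemma~\ref{lem:commutator:form} gives $[ia^\dagg a, g_{\sigma_2}^{\gamma_2}]$ with leading term proportional to $g_{-\sigma_2}^{\gamma_2}$, of the same degree as $g_{\sigma_2}^{\gamma_2}$. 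The remedy mirrors Proposition~\ref{prop:algebra:g0:gperp:divergence}: a single preliminary commutator $[e_1, e_2] \in \g$ has leading term a nonzero multiple of $g_{-\sigma_2}^{\gamma_2}$, so both $e_2$ and $[e_1, e_2]$ lie in $\g$ with leading monomials of opposite $\sigma$-sign at the same multi-index $\gamma_2$. Constructing the Commutator Chain of Type~II of \plainrefs{Bruschi:Xuereb:2024} from these two polynomials, and once again invoking Proposition~\ref{prop:equivalnce:generic:com:chains} to pass from the monomial leading terms to the actual polynomial iterates, yields a sequence of elements in $\g$ with strictly increasing degree.

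The main obstacle will be the $ia^\dagg a$ subcase, since direct application of Proposition~\ref{prop:equivalnce:generic:com:chains} to the seed-auxiliary pair $(e_1, e_2)$ does not give a chain of growing degree; one must first extract a polynomial whose leading term carries the opposite $\sigma$-sign before invoking the Type~II construction. Once this intermediate step is inserted, the remaining subcases reduce routinely to combining Proposition~\ref{prop:equivalnce:generic:com:chains} with the monomial results already established.
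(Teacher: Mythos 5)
Your proposal is correct and follows essentially the same route as the paper: the paper's proof likewise reduces each case to the corresponding monomial result (Theorem~\ref{thm:andreea:sona}, Propositions~\ref{prop:algebra:g0:gperp:divergence}, \ref{prop:algebra:g1:g2:g=:space}, and~\ref{prop:algebra:g=:gperp:space}) and transfers the strictly increasing degrees of the monomial chains to the polynomial chains via Proposition~\ref{prop:equivalnce:generic:com:chains}. Your explicit handling of the $ia^\dagg a$ subcase---extracting $[e_1,e_2]$ with leading term proportional to $g_{-\sigma_2}^{\gamma_2}$ before building the Type~II chain---is exactly the step implicit in the paper's appeal to Proposition~\ref{prop:algebra:g0:gperp:divergence}, just spelled out in more detail.
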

	
	\begin{proof}
		The proof of this claim is a straightforward extension of Propositions~\ref{prop:algebra:g0:gperp:divergence}, ~\ref{prop:algebra:g1:g2:g=:space}, and~\ref{prop:algebra:g=:gperp:space}, as well as Theorem~\ref{thm:andreea:sona}, using Proposition~\ref{prop:equivalnce:generic:com:chains}. It is simply sufficient to repeat the proof of each claim by constructing generic Commutator Chains $C^{\mathrm{gen}}_1$ using the single monomials $g_{\sigma_1}^{\gamma_1}$ and $g_{\sigma_2}^{\gamma_2}$. These Commutator Chains contain elements of strictly increasing $d^{(\ell)}$ as a function of $\ell$. Proposition~\ref{prop:equivalnce:generic:com:chains} implies then that these generic Commutator Chains are equivalent to generic Commutator Chains $C^{\mathrm{gen}}_2$ that are constructed in the same fashion but using $e_1/c_1$ and $e_2/c_2$ instead of $g_{\sigma_1}^{\gamma_1}$ and $g_{\sigma_2}^{\gamma_2}$, with the property $C^{\mathrm{gen}}_1\upto{d^{(\ell)}} C^{\mathrm{gen}}_2$. Therefore, $C^{\mathrm{gen}}_2$ does also contain elements of strictly increasing degree, thus implying that the Lie algebra $\g:=\lie{\{e_1,e_2\}}$ is infinite dimensional.
	\end{proof}  
	
	\begin{lemma}\label{lem:linear:two:combinations}
		Consider a set $\mathcal{G}\subseteq\hat{A}_1$ that contains at least the two elements $ia^\dagg a+ic$ and $e_1\upto{d_0}c_1g_{\sigma_1}^{\gamma_1}+c_2g_{\sigma_2}^{\gamma_2}\neq0$ with $c,c_1,c_2\in\R$, $\gamma_1,\gamma_2\in\N_{\geq0}^2\setminus\N_=^{2}$ and $|\gamma_1|=|\gamma_2|=d_0=\deg(e_1)\geq 3$. Then the Lie algebra $\g:=\langle\mathcal{G}\rangle_\mathrm{Lie}$ is infinite dimensional.
	\end{lemma}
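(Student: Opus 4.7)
My plan is to use the free-Hamiltonian generator $ia^\dagg a + ic$ to \enquote{diagonalize} the top-degree part of $e_1$, isolating a single monomial in $\hat{A}_1^\perp$ as the leading term of some element of $\g$, and then invoke Lemma~\ref{lem:adding:lower:degree:terms:extension}(b) to conclude. Note that $ia^\dagg a + ic$ has leading monomial $ia^\dagg a = \tfrac{1}{2} g_+^\tau \in \hat{A}_1^0$, and $\tfrac{1}{2} g_+^\tau \neq 2i = g_+^0$, so the hypothesis of that lemma on the \enquote{free side} is satisfied.

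Because the scalar $ic$ is central, $[ia^\dagg a + ic, \cdot] = [ia^\dagg a, \cdot]$. A direct computation using $[a^\dagg a, a^\dagg] = a^\dagg$ and $[a^\dagg a, a] = -a$ (or Lemma~\ref{lem:commutator:form} specialized to $\hat{\gamma} = \tau$) gives the exact identities $[ia^\dagg a, g_+^\gamma] = \chi(\gamma)\, g_-^\gamma$ and $[ia^\dagg a, g_-^\gamma] = -\chi(\gamma)\, g_+^\gamma$, so $(\operatorname{ad}(ia^\dagg a))^2 g_\sigma^\gamma = -\chi(\gamma)^2 g_\sigma^\gamma$. The assumption $\gamma_1, \gamma_2 \notin \N_=^2$ (with $\alpha_i > \beta_i$) makes both $\chi(\gamma_i)$ strictly positive. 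Setting $v_0 := e_1$, $v_1 := [ia^\dagg a, e_1]$, and $v_2 := [ia^\dagg a, v_1]$, all lying in $\g$, and applying Proposition~\ref{prop:equivalnce:relation:commutator} (with $u = ia^\dagg a + ic$ of degree $2$), one extracts their degree-$d_0$ parts to be $c_1 g_{\sigma_1}^{\gamma_1} + c_2 g_{\sigma_2}^{\gamma_2}$, $\epsilon(\sigma_1) c_1 \chi(\gamma_1) g_{-\sigma_1}^{\gamma_1} + \epsilon(\sigma_2) c_2 \chi(\gamma_2) g_{-\sigma_2}^{\gamma_2}$, and $-c_1 \chi(\gamma_1)^2 g_{\sigma_1}^{\gamma_1} - c_2 \chi(\gamma_2)^2 g_{\sigma_2}^{\gamma_2}$, respectively, with $\epsilon(\pm) := \pm 1$.

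I then split into two cases. If $\gamma_1 \neq \gamma_2$, the constraints $|\gamma_1| = |\gamma_2|$ and $\alpha_i > \beta_i$ force $\chi(\gamma_1) \neq \chi(\gamma_2)$ (both positive), hence $\chi(\gamma_1)^2 \neq \chi(\gamma_2)^2$; the element $\chi(\gamma_2)^2\, v_0 + v_2 \in \g$ then has degree-$d_0$ part equal to the nonzero multiple $(\chi(\gamma_2)^2 - \chi(\gamma_1)^2)\, c_1\, g_{\sigma_1}^{\gamma_1}$ of a single monomial. If instead $\gamma_1 = \gamma_2 =: \gamma$ (forcing $\sigma_1 \neq \sigma_2$), the degree-$d_0$ parts of $v_0$ and $v_1$ form, in the basis $\{g_+^\gamma, g_-^\gamma\}$, a $2 \times 2$ coefficient matrix with determinant $(c_1^2 + c_2^2)\,\chi(\gamma) \neq 0$, so a suitable real combination of $v_0$ and $v_1$ isolates $g_+^\gamma$ or $g_-^\gamma$ at top degree. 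The degenerate possibility $c_1 = 0$ or $c_2 = 0$ is trivial, since then $e_1$ itself already has a single-monomial leading term.

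In every case we obtain $e_3 \in \g$ with $e_3 \upto{d_0} c\, g_\sigma^\gamma$ for some $c \neq 0$ and $\gamma \in \{\gamma_1, \gamma_2\}$; because $\gamma \notin \N_=^2$ and $|\gamma| = d_0 \geq 3$, the monomial $g_\sigma^\gamma$ lies in $\hat{A}_1^\perp$. Lemma~\ref{lem:adding:lower:degree:terms:extension}(b) applied to the pair $(ia^\dagg a + ic,\, e_3) \subseteq \g$ then yields that $\lie{\{ia^\dagg a + ic, e_3\}} \subseteq \g$ is infinite dimensional, completing the proof. The only genuine subtlety is the case split: two commutators with $ia^\dagg a$ are required when $\gamma_1 \neq \gamma_2$ because $(\operatorname{ad}(ia^\dagg a))^2$ acts by distinct scalars $-\chi(\gamma_i)^2$ and separates the two monomials, whereas a single commutator suffices when $\gamma_1 = \gamma_2$ since there $\operatorname{ad}(ia^\dagg a)$ has a nontrivial off-diagonal action on the top-degree plane.
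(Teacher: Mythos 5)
Your proof is correct and follows essentially the same route as the paper's: apply $\operatorname{ad}(ia^\dagg a)$ once or twice (depending on whether $\gamma_1=\gamma_2$) to exploit the eigenvalue structure $\pm\chi(\gamma_i)$, isolate a single monomial of $\hat{A}_1^\perp$ at leading degree, and then invoke Lemma~\ref{lem:adding:lower:degree:terms:extension}(b). The only point to patch is your parenthetical claim that $\gamma_1=\gamma_2$ forces $\sigma_1\neq\sigma_2$: if the two monomials coincide, the leading term collapses to $(c_1+c_2)\,g_{\sigma_1}^{\gamma_1}$ with $c_1+c_2\neq0$ (otherwise the degree would drop), which is already the single-monomial situation your trivial case handles.
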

	
	\begin{proof}
		We start with the observation that $[ia^\dagg a+ic,p]=[ia^\dagg a,p]$ holds for any polynomial $p\in\hat{A}_1$, since $i$ commutes with any element in $\hat{A}_1$. Thus, it is enough to consider the case $ia^\dagg a\in\mathcal{G}$, the extension to elements of the form $ia^\dagg a+ic$ is immediate.
		 
		We start analyzing a preliminary special case:
		If $g_{\sigma_1}^{\gamma_1}=g_{\sigma_2}^{\gamma_2}$, one must have $c_1\neq -c_2$ to satisfy the condition $e\not\simeq_{d_0}0$. This leaves $e\upto{d_0}(c_1+c_2)g_{\sigma_1}^{\gamma_1}\in\hat{A}_1^\perp$, to the highest degree, to be a constant multiple of a single monomial. Note that $g_{\sigma_1}^{\gamma_1}\in\hat{A}_1^\perp$, since $\gamma_1\in \Np{2}\setminus\N_=^{2}$ and $|\gamma_1|\geq3$. Lemma~\ref{lem:adding:lower:degree:terms:extension} then implies that $\g$ is infinite dimensional. 
		This case is trivial and will therefore be excluded from the following discussion. We also exclude similarly the case that either $c_1=0$ or $c_2=0$, since they reduce to the one above. 
		This leaves us with two remaining cases that we now consider separately.
			In the first case, we have 	$\sigma_1=-\sigma_2\equiv\sigma$ and $	\gamma_1=\gamma_2\equiv\gamma$. We compute
			\begin{align*}
				e_2:=[ia^\dagg a,e_1]\upto{d_0}c_1[ia^\dagg a,g_{\sigma}^{\gamma}]+c_2[ia^\dagg a,g_{-\sigma}^{\gamma}]\upto{d_0}c_1\sigma\chimap{\gamma}g_{-\sigma}^{\gamma}-c_2\sigma\chimap{\gamma}g_{\sigma}^\gamma=\sigma\chimap{\gamma}\left(c_1 g_{-\sigma}^{\gamma}-c_2g_{\sigma}^{\gamma}\right)
			\end{align*}
			using Lemma~\ref{lem:commutator:form}. We then introduce the family of elements
			\begin{align*}
				e_\xi:=e_1+\xi e_2\upto{d_0}\left(c_1-\sigma\xi\chimap{\gamma}c_2\right)g_{\sigma}^\gamma+\left(c_2+\sigma\xi\chimap{\gamma}c_1\right)g_{-\sigma}^\gamma,
			\end{align*}
			where $\xi\in\R$, and note that, crucially, $e_\xi\in\g$ for all $\xi$. The choice $\xi_*=-c_2/(\sigma\chimap{\gamma}c_1)$ implies $e_{\xi_*}\upto{d_0}(c_1^2+c_2^2)/c_1 g_\sigma^\gamma\in\hat{A}_1^\perp$ with $e_{\xi_*}\in\g$, and this choice can be made since $c_1\neq0$ and $\chi(\gamma)\neq0$. One has furthermore $e_{\xi_*}\neq0$, since $c_1\neq0\neq c_2$. 
			By Lemma~\ref{lem:adding:lower:degree:terms:extension} and the same argument of the preliminary case, we conclude that $\g$ is infinite dimensional.
			
			In the second case, we have $\gamma_1\neq\gamma_2$. Here, we compute
			\begin{align*}
				e_2:=[ia^\dagg a,e_1]\upto{d_0}c_1[ia^\dagg a, g_{\sigma_1}^{\gamma_1}]+c_2[ia^\dagg a,g_{\sigma_2}^{\gamma_2}]\upto{d_0}\sigma_1\chimap{\gamma_1}c_1g_{-\sigma_1}^{\gamma_1}+\sigma_2\chimap{\gamma_2}c_2g_{-\sigma_2}^{\gamma_2},
			\end{align*}
			using Lemma~\ref{lem:commutator:form}, Proposition~\ref{prop:equivalnce:relation:commutator} and the fact that $\chimap{\gamma\circ \tau}=\chimap{\tau\circ\gamma}=\chimap{\gamma}$, see Proposition~\ref{prop:basic:maps} and Definition~\ref{def:basic:maps}.
			We now compute
			\begin{align*}
				e_3:=[ia^\dagg a,e_2]\upto{d_0}-\chimap{\gamma_1}^2c_1g_{\sigma_1}^{\gamma_1}-\chimap{\gamma_2}^2c_2g_{\sigma_2}^{\gamma_2},
			\end{align*}
			We now note that $e_1$ and $e_3$ are linearly independent, and $e_3\in\g$ since it is obtained as commutators of elements of the algebra. In fact, in order to be linearly dependent, they would have to have $\chimap{\gamma_1}^2c_1=-\kappa\,c_1$ and $\chimap{\gamma_2}^2c_2=-\kappa\,c_2$, which trivially implies $\chimap{\gamma_1}^2=\chimap{\gamma_2}^2$, and therefore either (i) $\chimap{\gamma_1}=\chimap{\gamma_2}$, or (ii) $\chimap{\gamma_1}=-\chimap{\gamma_1}$. Now, $\chimap{\gamma_1}=\alpha_1-\beta_1$ and $\chimap{\gamma_2}=\alpha_2-\beta_2$, and we also know that (iii) $\alpha_1+\beta_1=\alpha_2+\beta_2$ by the assumption $|\gamma_1|=|\gamma_2|$. On the one hand, constraints (i) and (iii) together imply $\gamma_1=\gamma_2$, which leads to a contradiction. On the other hand, constraints (ii) and (iii) together imply $\gamma_2=\gamma_1^\dagg$. This is not possible since $\gamma_1>\gamma_1^\dagg$ and $\gamma_2>\gamma_2^\dagg$ by assumption, which leads to a contradiction. 
			
			Since $e_1$ and $e_3$ are linearly independent, and so are $g_{\sigma_1}^{\gamma_1}$ and $g_{\sigma_2}^{\gamma_2}$, we can then take suitable linear combinations of $e_1$ and $e_3$ to have $\xi_1 e_1+\xi_3 e_3\upto{d_0}g_{\sigma_1}^{\gamma_1}$ for appropriate $\xi_1,\xi_3\in\R$ (the same can be done for $g_{\sigma_2}^{\gamma_2}$). Now, the fact that $e_1\in\g$ and $e_3\in\g$ implies that $\xi_1 e_1+\xi_3 e_3\in\g$. Thus, we have reduced the problem to the preliminary case, and the algebra generated is infinite dimensional. 
		\end{proof}

	\begin{proposition}\label{prop:thm:25:equivalent}
		Let $e\in\hat{A}_1\setminus\{0\}$ be an arbitrary non-vanishing element of the skew-hermitian Weyl algebra $\hat{A}_1$ with a unique expansion $e=\sum_{p\in\mathcal{P}}c_pg_{\sigma_p}^{\gamma_p}$, where $\mathcal{P}\subseteq\N$ is a finite index set and $c_p\neq0$ for all $p\in\mathcal{P}$. Let furthermore $\tilde{\gamma}=(\tilde{\alpha},\tilde{\alpha})\in \N_=^2$ with $\tilde{\alpha}\geq 1$. Then:
        \begin{enumerate}[label = (\alph*)]
            \item $[ia^\dagg a,e]=0$ if and only if $e\in\hat{A}_1^0\oplus\hat{A}_1^=$;
            \item $\deg([ia^\dagg a,e])=\max\{|\gamma_p|\,\mid\,p\in\mathcal{P}\text{ and }\chimap{\gamma_p}\neq0\}$ if $[ia^\dagg a,e]\neq0$;
            \item $[i(a^\dagg)^{\tilde{\alpha}}a^{\tilde{\alpha}},g_\sigma^\gamma]\upto{d}\tilde{\alpha}\sigma\chimap{\gamma}g_{-\sigma}^{\gamma+\tilde{\gamma}-\tau}$ with $d=|\gamma|+|\tilde{\gamma}|-2$;
            \item $[i(a^\dagg)^{\tilde{\alpha}} a^{\tilde{\alpha}},e]=0$ for $\tilde{\alpha}\in\N_{\geq1}$ if and only if $e\in\hat{A}_1^0\oplus\hat{A}_1^=$;
            \item $\deg([i(a^\dagg)^{\tilde{\alpha}}a^{\tilde{\alpha}},e])=\max\{|\gamma_p|+|\tilde{\gamma}|-2\mid\,p\in\mathcal{P}\text{ and }\chimap{\gamma_p}\neq0\}$ if $[ia^\dagg a,e]\neq0$;
            \item if $e_2\upto{|\tilde{\gamma}|} ic(a^\dagg)^{\tilde{\alpha}} a^{\tilde{\alpha}}$ with $c\in\R\setminus\{0\}$ and  $\PP{e}{\hat{A}_1^0\oplus\hat{A}_1^=}=0$, then $\deg([e_2,e])=\deg([i(a^\dagg)^{\tilde{\alpha}}a^{\tilde{\alpha}},e])=|\tilde{\gamma}|+\deg(e)-2:=f$ and $[e_2,e]\upto{f}\tilde{\alpha}c\sum_{p\in\mathcal{P}}c_p\sigma_p\chimap{\gamma_p}g_{-\sigma_p}^{\gamma_p+\tilde{\gamma}-\tau}$.
        \end{enumerate}
	\end{proposition}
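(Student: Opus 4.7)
The plan is to treat the six parts in the order (c), (a), (b), then (d)--(e), and finally (f), reflecting their logical dependencies: the explicit commutator formula of (c) underpins the structural statements in (a), (b), (d), (e), while (f) is a perturbation argument built on top of (c).

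I would begin with part (c), since it contains the core calculation. Writing $i(a^\dagg)^{\tilde{\alpha}}a^{\tilde{\alpha}}=\tfrac12 g_+^{\tilde{\gamma}}$ and applying Lemma~\ref{lem:commutator:form} with $\hat{\gamma}=\tilde{\gamma}$, I would observe that $\chimap{\tilde{\gamma}\circ\gamma^\dagg}=-\tilde{\alpha}\chimap{\gamma}$, $\chimap{\tilde{\gamma}\circ\gamma}=\tilde{\alpha}\chimap{\gamma}$, and that $\tilde{\gamma}=\tilde{\gamma}^\dagg$ forces $\Theta(\gamma+\tilde{\gamma}^\dagg)-\tau=\gamma+\tilde{\gamma}-\tau$ (well-ordered because $\gamma$ is). When $\chimap{\gamma}\neq 0$, one checks $E(\tilde{\gamma}+\gamma^\dagg)=-1$. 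The two monomial contributions in Lemma~\ref{lem:commutator:form} therefore carry the same multi-index and, after the appropriate signs, add rather than cancel, producing $2\tilde{\alpha}\chimap{\gamma}g_{-\sigma}^{\gamma+\tilde{\gamma}-\tau}$ with the correct $\sigma$-dependent sign; dividing by two yields (c). Parts (a) and (b) then follow as the special case $\tilde{\alpha}=1$: because the commutator degree $|\gamma|+2-2=|\gamma|$ matches the source degree, the relation $[ia^\dagg a,g_\sigma^\gamma]=\sigma\chimap{\gamma}g_{-\sigma}^\gamma$ actually holds as an equality (no hidden lower-order terms). Extending by linearity and exploiting the linear independence of the $g_{-\sigma_p}^{\gamma_p}$, both the characterization of the kernel (with the observation that $g_-^{\tilde{\gamma}}=0$ forces any surviving $\sigma_p$ to equal $+$ when $\chimap{\gamma_p}=0$) and the degree formula follow immediately.

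For (d) and (e) I would argue in two directions. For the implication ``$e\in\hat{A}_1^0\oplus\hat{A}_1^=\Rightarrow[i(a^\dagg)^{\tilde{\alpha}}a^{\tilde{\alpha}},e]=0$'' in (d), I would invoke the standard identity $(a^\dagg)^k a^k = N(N-1)\cdots(N-k+1)$ with $N=a^\dagg a$, which places every basis element of $\hat{A}_1^0\oplus\hat{A}_1^=$ inside the commutative subalgebra $\R[N]$ and hence commuting with $i(a^\dagg)^{\tilde{\alpha}}a^{\tilde{\alpha}}$. For the converse direction, and simultaneously for (e), I apply (c) together with Proposition~\ref{prop:equivalnce:relation:commutator} termwise to the expansion of $e$. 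Crucially, the shift $(\sigma_p,\gamma_p)\mapsto(-\sigma_p,\gamma_p+\tilde{\gamma}-\tau)$ is injective on the index set, so the leading monomials $g_{-\sigma_p}^{\gamma_p+\tilde{\gamma}-\tau}$ are pairwise distinct, and they are nonzero since $\chimap{\gamma_p+\tilde{\gamma}-\tau}=\chimap{\gamma_p}$ is unchanged by the shift. Therefore, among the indices with $\chimap{\gamma_p}\neq 0$ there is no cancellation at the maximal degree, giving both the degree formula in (e) and the ``$\Rightarrow$'' implication of (d).

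Finally, part (f) reduces to (c) by a perturbation argument. I would decompose $e_2=ic(a^\dagg)^{\tilde{\alpha}}a^{\tilde{\alpha}}+r$ with $\deg(r)<|\tilde{\gamma}|$. Using Lemma~\ref{important:monomial:commutator:lemma} termwise and Proposition~\ref{prop:equivalnce:relation:commutator}, the remainder commutator $[r,e]$ has degree at most $\deg(r)+\deg(e)-2\leq |\tilde{\gamma}|-1+\deg(e)-2=f-1$, so it is $\upto{f}0$. The main contribution $[ic(a^\dagg)^{\tilde{\alpha}}a^{\tilde{\alpha}},e]$ is then handled termwise via (c); the hypothesis $\PP{e}{\hat{A}_1^0\oplus\hat{A}_1^=}=0$ guarantees that every $\gamma_p$ in the expansion satisfies $\chimap{\gamma_p}\neq 0$, so every term contributes nontrivially and the leading part realizes exactly the claimed formula, with $\deg([e_2,e])=f$. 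The main technical obstacle I anticipate is the sign bookkeeping in (c): verifying that with $\tilde{\gamma}=\tilde{\gamma}^\dagg$ the two contributions of Lemma~\ref{lem:commutator:form} have the \emph{same} multi-index and add (rather than cancel) after the appropriate signs $\hat{\epsilon},\hat{\epsilon}'$ and the factor of $1/2$ from $i(a^\dagg)^{\tilde{\alpha}}a^{\tilde{\alpha}}=\tfrac12 g_+^{\tilde{\gamma}}$ are accounted for. Once (c) is established with the correct sign, everything else is essentially mechanical.
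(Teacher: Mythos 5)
Your proposal is correct and reaches all six claims, but by a route that differs from the paper's in two substantive ways, both of which are gains in economy. First, for (c) you derive the leading-order formula from Lemma~\ref{lem:commutator:form} applied to $\tfrac{1}{2}g_+^{\tilde{\gamma}}$, using that $\tilde{\gamma}=\tilde{\gamma}^\dagg$ forces the two monomial contributions to share the multi-index $\gamma+\tilde{\gamma}-\tau$ and that $E(\tilde{\gamma}+\gamma^\dagg)=-1$ whenever $\chimap{\gamma}\neq 0$, so the contributions reinforce rather than cancel; the paper instead recomputes the commutator from scratch via the $[AB,CD]$ identity and Lemma~16 of the foundational work, obtaining the full expansion in $\tau$-shifts (all orders $j$) and only then truncating to the leading term. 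Your version is shorter but yields only the leading order, which suffices for (c), (e), (f) but not by itself for the exact-vanishing claims. Second, for the \enquote{if} directions of (a) and (d) you observe that $\hat{A}_1^0\oplus\hat{A}_1^=$ lies in $i\R[N]$ via $(a^\dagg)^k a^k = N(N-1)\cdots(N-k+1)$ with $N=a^\dagg a$, which is cleaner than the paper's verification that every summand of its explicit expansion vanishes when $\gamma=\gamma^\dagg$. The \enquote{only if} directions and the degree formulas (b), (e) then follow, as you say, from injectivity of $(\sigma_p,\gamma_p)\mapsto(-\sigma_p,\gamma_p+\tilde{\gamma}-\tau)$, the invariance $\chimap{\gamma_p+\tilde{\gamma}-\tau}=\chimap{\gamma_p}$, and the fact that lower-degree tails cannot cancel top-degree monomials; this matches the paper's uniqueness argument in substance. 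Your treatment of (f) by splitting off the remainder $r$ with $\deg(r)<|\tilde{\gamma}|$ and bounding $\deg([r,e])\leq f-1$ is exactly the role Proposition~\ref{prop:equivalnce:relation:commutator} plays in the paper.

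One imprecision to repair: in passing from (c) to (a)--(b) you assert that for $\tilde{\alpha}=1$ the relation $[ia^\dagg a, g_\sigma^\gamma]=\sigma\chimap{\gamma}g_{-\sigma}^\gamma$ holds \emph{exactly} \enquote{because the commutator degree matches the source degree.} Matching of top degrees does not by itself exclude lower-degree terms in the difference; exactness here comes from the fact that $a^\dagg a$ grades the Weyl algebra by $\chi$ (equivalently, Theorem~25 of the foundational work, which the paper cites directly). The exact identity genuinely matters for the \enquote{if} direction of (a), since vanishing of the leading coefficient is not vanishing of the commutator --- though your $i\R[N]$ argument already covers that case as the $\tilde{\alpha}=1$ instance of (d), so the lapse is cosmetic rather than structural.
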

	
	\begin{proof}
		Let us start with claim (a): Theorem 25 from \plainrefs{Bruschi:Xuereb:2024} and the linearity of the commutator imply:
		\begin{align*}
			[ia^\dagg a,e]=\sum_{p\in\mathcal{P}}c_p[ia^\dagg a,g_{\sigma_p}^{\gamma_p}]=\sum_{p\in\mathcal{P}}\sigma_p\chimap{\gamma_p}c_p g_{-\sigma_p}^{\gamma_p}.
		\end{align*}
		The individual monomials in the equation above are all linearly independent, if they do not vanish identically, as we assume that $\sum_{p\in\mathcal{P}}c_pg_{\sigma_p}^{\gamma_p}$ is a unique expansion of $e$. Thus, 
		$ia^\dagg a$ and $e$ can only commute 
		if either $\chimap{\gamma_p}=0$ or $\gamma_p\in\N_=^2$ and $\sigma=-$ hold
		for every $p\in\mathcal{P}$, as $g_-^{\tilde{\gamma}}=0$ for any $\tilde{\gamma}\in\N_=^2$. The converse is trivially also true.
		
		In order to complete the proof of (a), it is now convenient to introduce the following claims:
		\begin{enumerate}[label = {\arabic*.}]
			\item \textbf{Claim $A$}: \enquote{$ia^\dagg a$ and $e$ commute},
			\item \textbf{Claim $B_p$}: \enquote{$\chimap{\gamma_p}=0$},
			\item \textbf{Claim $C_p$}: \enquote{$\gamma_p\in \N_=^2$ and $\sigma=-$},
		\end{enumerate}
		which we first use to re-state what we have proven so far here. We have:  
		\begin{align*}
			A\Leftrightarrow \left(\forall\, p\in\mathcal{P}:\,B_p\vee C_p\right).
		\end{align*}
		However, if $\gamma_p\in\N_=^2$, one has trivially $\chimap{\gamma_p}=\alpha_p-\alpha_p=0$, i.e., $C_p\implies B_p$ for each $p\in\mathcal{P}$. We observe now that the following logical statement is correct:
		\begin{align*}
			\left(\left(A\Leftrightarrow \left(\forall\, p\in\mathcal{P}:\,B_p\vee C_p\right)\right)\wedge\left(\forall\, p\in\mathcal{P}:\,C_p\implies B_p\right)\right)\Leftrightarrow\left(A\Leftrightarrow\left(\forall\, p\in\mathcal{P}:\,B_p\right)\right).
		\end{align*}
		This can be straightforwardly verified using a truth table. Therefore, we have proven that $[ia^\dagg a,e]=0$ if and only if $\chimap{\gamma_p}=0$ for all $p\in\mathcal{P}$. The condition $\chimap{\gamma_p}=0$ for all $p\in\mathcal{P}$ implies then that $\alpha_p=\beta_p$ for all $p\in\mathcal{P}$ and consequently $e\in\hat{A}_1^0\oplus\hat{A}_1^=$, since $\PP{e}{\hat{A}_1^1\oplus\hat{A}_1^2\oplus\hat{A}_1^\perp}\neq 0$ is now prohibited by Definition~\ref{def:gens:mon:partition}, which proves (a).
		
		We now move on to claim (b). This claim follows from Lemma~\ref{important:monomial:commutator:lemma} and the observation that, if $[ia^\dagg a,e]\neq0$, then the degree of $[ia^\dagg a,e]$ equals the largest $|\gamma_p|$ for which $\chimap{\gamma_p}\neq0$.
		
		This leaves us with proving the claims (c), (d), (e), and (f). To do this, we introduce the coefficient $\lambda_\sigma$ that is defined by $\lambda_+:=i$ and $\lambda_-:=1$ and compute the commutator of $i(a^\dagg)^{\tilde{\alpha}}a^{\tilde{\alpha}}$ and an arbitrary monomial $g_\sigma^\gamma\in\hat{A}_1$:
		\begin{align*}
			[i(a^\dagg)^{\tilde{\alpha}} a^{\tilde{\alpha}},g_{\sigma}^{\gamma}]&=i\lambda_{\sigma}[(a^\dagg)^{\tilde{\alpha}}a^{\tilde{\alpha}}, (a^\dagg)^{\beta}a^{\alpha}+\sigma(a^\dagg)^{\alpha}a^{\beta}].
		\end{align*}
		Employing the bilinearity of the commutator, the identity $[AB,CD]=A[B,C]D+AC[B,D]+[A,C]DB+C[A,D]B$, as well as the observation that the canonical commutation relations imply $[a^\alpha,a^\beta]=0=[(a^\dagg)^\alpha,(a^\dagg)^\beta]$ for all $\alpha,\beta\in \N_{\geq0}$, we find:
		\begin{align*}
			[i(a^\dagg)^{\tilde{\alpha}} a^{\tilde{\alpha}},g_{\sigma}^{\gamma}]&=i\lambda_{\sigma}\left((a^\dagg)^{\tilde{\alpha}}[a^{\tilde{\alpha}},(a^\dagg)^{\beta}]a^{\alpha}+(a^\dagg)^{\beta}[(a^\dagg)^{\tilde{\alpha}},a^{\alpha}]a^{\tilde{\alpha}}+\sigma(a^\dagg)^{\tilde{\alpha}}[a^{\tilde{\alpha}},(a^\dagg)^{\alpha}]a^{\beta}+\sigma(a^\dagg)^{\alpha}[(a^\dagg)^{\tilde{\alpha}},a^{\beta}]a^{\tilde{\alpha}}\right).
		\end{align*}
		The individual commutators can be evaluated using Lemma 16 from \plainrefs{Bruschi:Xuereb:2024}, and we obtain:
		\begin{align*}
			[i(a^\dagg)^{\tilde{\alpha}} a^{\tilde{\alpha}},g_{\sigma}^{\gamma}]&=i\lambda_{\sigma}\sum_{j=1}^{\min\{\tilde{\alpha},\beta\}}j!\binom{\tilde{\alpha}}{j}\binom{\beta}{j}\left((a^\dagg)^{\tilde{\alpha}+\beta-j}a^{\tilde{\alpha}+\alpha-j}-\sigma(a^\dagg)^{\tilde{\alpha}+\alpha-j}a^{\tilde{\alpha}+\beta-j}\right)\\
			&\quad-i\lambda_\sigma\sum_{j=1}^{\min\{\tilde{\alpha},\alpha\}}j!\binom{\tilde{\alpha}}{j}\binom{\alpha}{j}\left((a^\dagg)^{\tilde{\alpha}+\alpha-j}a^{\tilde{\alpha}+\beta-j}-\sigma(a^\dagg)^{\tilde{\alpha}+\beta-j}a^{\tilde{\alpha}+\alpha-j}\right).
		\end{align*}
		Recalling that $g_\sigma^\gamma=g_\sigma^{(\alpha,\beta)}=\lambda_\sigma((a^\dagg)^\beta a^\alpha+\sigma(a^\dagg)^\alpha a^\beta)$, where any well-ordered multi-index $\gamma\in\N_{\geq0}^2$ satisfies $\alpha\geq \beta$, we can rewrite the expression above as:
		\begin{align}
			[i(a^\dagg)^{\tilde{\alpha}} a^{\tilde{\alpha}},g_{\sigma}^{\gamma}]&=-\sigma\left[\sum_{j=1}^{\min\{\tilde{\alpha},\beta\}}j!\binom{\tilde{\alpha}}{j}\binom{\beta}{j}g_{-\sigma}^{\gamma+\tilde{\gamma}-j\tau}\right]+\sigma\left[\sum_{j=1}^{\min\{\tilde{\alpha},\alpha\}}j!\binom{\tilde{\alpha}}{j}\binom{\alpha}{j}g_{-\sigma}^{\gamma+\tilde{\gamma}-j\tau}\right].\label{eqn:help:general:commutator:multi:adagga}
		\end{align}
		Reordering the sums finally yields:
		\begin{align*}
			[i(a^\dagg)^{\tilde{\alpha}} a^{\tilde{\alpha}},g_{\sigma}^{\gamma}]&=\sigma\left[\sum_{j=1}^{\min\{\tilde{\alpha},\alpha,\beta\}}j!\binom{\tilde{\alpha}}{j}\left(\binom{\alpha}{j}-\binom{\beta}{j}\right)g_{-\sigma}^{\gamma+\tilde{\gamma}-j\tau}\right]+\sigma\left[\sum_{j=\min\{\tilde{\alpha},\alpha,\beta\}+1}^{\min\{\tilde{\alpha},\alpha\}}j!\binom{\tilde{\alpha}}{j}\binom{\alpha}{j}g_{-\sigma}^{\gamma+\tilde{\gamma}-j\tau}\right].
		\end{align*}
		Note now that for $\tilde{\alpha}=1$, this reduces to $[i(a^\dagg)^{\tilde{\alpha}} a^{\tilde{\alpha}},g_{\sigma}^{\gamma}]=[ia^\dagg a,g_\sigma^\gamma]=\sigma\chimap{\gamma}g_{-\sigma}^{\gamma}$, which is consistent with Theorem 25 from \plainrefs{Bruschi:Xuereb:2024} and claim (a). We can furthermore conclude
		\begin{align*}
			[i(a^\dagg)^{\tilde{\alpha}} a^{\tilde{\alpha}},g_{\sigma}^{\gamma}]&\upto{d}\sigma\tilde{\alpha}\chimap{\gamma}g_{-\sigma}^{\gamma+\tilde{\gamma}-\tau},
		\end{align*} 
		where $d=|\gamma|+|\tilde{\gamma}|-2$. This shows claim (c).
		
		Next, we want to prove claims (d) and (e) combined. Note that any monomial $g_{{-\sigma}}^{\gamma+\tilde{\gamma}-j\tau}$ vanishes identically if $\gamma=\gamma^\dagg$ and {$\sigma=+$,} i.e.,if $g_\sigma^\gamma\in\hat{A}_1^0\oplus\hat{A}_1^=$. One does not need to consider the case $\gamma=\gamma^\dagg$ and {$\sigma=-$,} as then $g_-^\gamma=0$ by definition. On the other hand, if $\gamma>\gamma^\dagg$ it follows that the monomials $g_{-\sigma}^{\gamma+\tilde{\gamma}-j\tau}$ are all linearly independent for each $j\in\{1,\ldots,\min\{\tilde{\alpha},\alpha\}\}$. Thus, a necessary condition for the commutator $[i(a^\dagg)^{\tilde{\alpha}} a^{\tilde{\alpha}},g_\sigma^\gamma]$ to vanish is 
		\begin{align*}
			[i(a^\dagg)^{\tilde{\alpha}} a^{\tilde{\alpha}},g_\sigma^\gamma]\upto{d}\sigma\tilde{\alpha}\chimap{\gamma}g_{-\sigma}^{\gamma+\tilde{\gamma}-\tau}=0.
		\end{align*}
		This implies $\alpha=\beta$ and therefore $\gamma=\gamma^\dagg$, which in turn implies $[i(a^\dagg)^{\tilde{\alpha}} a^{\tilde{\alpha}},g_\sigma^\gamma]=0$ identically at all degrees (see the expansion above of the commutator). Nevertheless, the condition $\gamma=\gamma^\dagg$ is ruled out for this case. We can conclude that $[i(a^\dagg)^{\tilde{\alpha}} a^{\tilde{\alpha}},g_\sigma^\gamma]=0$ if and only if $g_\sigma^\gamma\in\hat{A}_n^0\oplus\hat{A}_n^=$, due to claim (a). 
		
		We can now split the index-set $\mathcal{P}$ into the two disjoint index-sets $\mathcal{P}^=$ and $\mathcal{P}^{\neq}$ satisfying $\mathcal{P}^=\cup\mathcal{P}^{\neq}=\mathcal{P}$ and $\mathcal{P}^=\cap \mathcal{P}^\neq=\emptyset$, defined by the property that $\gamma_p\in\N_=^2$ if $p\in\mathcal{P}^=$ and $\gamma_p\in\Np{2}\setminus\N_=^2$ if $p\in\mathcal{P}^\neq$. The commutator of $i(a^\dagg)^{\tilde{\alpha}}a^{\tilde{\alpha}}$ and $e$ can be computed by employing the bilinearity of the commutator. This allows us to compute the individual commutators of $i(a^\dagg)^{\tilde{\alpha}}a^{\tilde{\alpha}}$ with the monomials in the expansion of $e$, which can furthermore be split into two separate sums, one where the monomials possess multi-indices that lie in the space $\N_=^2$, and one where the monomials possess multi-indices that lie in the space $\N_{\geq0}^2\setminus\N_=^2$. Namely, we have
		\begin{align*}
			[i(a^\dagg)^{\tilde{\alpha}}a^{\tilde{\alpha}},e]&=\sum_{p\in\mathcal{P}^=}c_p[i(a^\dagg)^{\tilde{\alpha}}a^{\tilde{\alpha}},g_+^{\gamma_p}]+\sum_{p\in\mathcal{P}^{\neq}}c_p[i(a^\dagg)^{\tilde{\alpha}}a^{\tilde{\alpha}},g_{\sigma_p}^{\gamma_p}].
		\end{align*}
		The first sum vanishes identically. This follows directly from equation \eqref{eqn:help:general:commutator:multi:adagga},  and the fact that $\alpha=\beta$. One is left with:
		\begin{align*}
			[i(a^\dagg)^{\tilde{\alpha}}a^{\tilde{\alpha}},e]&=\sum_{p\in\mathcal{P}^{\neq}}c_p\left(\sigma\sum_{j=1}^{\min\{\tilde{\alpha},\alpha_p,\beta_p\}}j!\binom{\tilde{\alpha}}{j}\left(\binom{\alpha_p}{j}-\binom{\beta_p}{j}\right)g_{-\sigma}^{\gamma_p+\tilde{\gamma}-j\tau}+\sigma\sum_{j=\min\{\tilde{\alpha},\alpha_p,\beta_p\}+1}^{\min\{\tilde{\alpha},\alpha_p\}}j!\binom{\tilde{\alpha}}{j}\binom{\alpha_p}{j}g_{-\sigma}^{\gamma_p+\tilde{\gamma}-j\tau}\right).
		\end{align*}
		Our goal is to prove here that at least all monomials in the expansion above that are of maximal possible degree are unique. In other words, there exist no two monomials $g_{-\sigma}^{\gamma_p+\tilde{\gamma}-j\tau}$ and $g_{-\sigma'}^{\gamma_{p'}+\tilde{\gamma}-j'\tau}$ that coincide for $(\gamma_p,j)\neq(\gamma_{p'},j')$, where either $|\gamma_p|=\max\{|\gamma_p|\mid\,p\in\mathcal{P}^{\neq}\}:=d$ or $|\gamma_{p'}|=d$ and $j=1$ or $j'=1$ respectively. Hence, we study the properties of monomials with multi-indices $\gamma_{p_*}+\tilde{\gamma}-\tau$ in the expansion above, i.e., those that are uniquely described by the tuple $(\gamma_{p_*},1)$, where the index $p_*\in\mathcal{P}^\neq$ shall be chosen such that $|\gamma_{p_*}|=d$. Since non-vanishing monomials can only be linearly dependent if they possess the same multi-index, we must now consider the equation $\gamma_{p_*}+\tilde{\gamma}-\tau=\gamma_{p'}+\tilde{\gamma}-j'\tau$. This can be re-written as $\gamma_{p_*}-\gamma_{p'}=(1-j')\tau$, which can only be satisfied if $\gamma_{p_*}-\gamma_{p'}\in\Z_=^2$ and $\gamma_{p_*}\leq\gamma_{p'}$, since $j'\geq1$ and $\tau \in \N_=^2$. If $\gamma_{p_*}=\gamma_{p'}$, one must also have $j'=1$ and one is therefore considering the same multi-index on both sides of the equation $\gamma_{p_*}+\tilde{\gamma}-\tau=\gamma_{p'}+\tilde{\gamma}-j'\tau$. If, on the other hand $\gamma_{p_*}<\gamma_{p'}$, one must consequently have $j'>1$ and it can therefore never occur that $\gamma_{p_*}+\tilde{\gamma}-\tau=\gamma_{p'}+\tilde{\gamma}-j'\tau$. To see this, observe that two non-vanishing monomials can only be linearly dependent if they have the same degree, but here, one has $|\gamma_{p_*}+\tilde{\gamma}-\tau|=d+|\tilde{\gamma}-2>|d|+|\tilde{\gamma}-2j'$, since $d=\max\{|\gamma_p|\mid\,p\in\mathcal{P}^{\neq}\}\geq |\gamma_{p'}|$. Thus, we conclude that all monomials with multi-index $\gamma_{p_*}+\tilde{\gamma}-\tau$ are unique in the expansion above and have non-vanishing prefactors, since $\alpha_{p_*}>\beta_{p*}$. We can therefore conclude that, $i(a^\dagg)^{\tilde{\alpha}}a^{\tilde{\alpha}}$ and $e$ only commute if $\mathcal{P}^{\neq}=\emptyset$ or equivalently if $e\in\hat{A}_1^0\oplus\hat{A}_1^=$. One has furthermore $\deg([i(a^\dagg)^{\tilde{\alpha}}a^{\tilde{\alpha}},e])=d+|\tilde{\gamma}|-2$ if $\mathcal{P}^{\neq}\neq\emptyset$. This shows (d) and (e). 
		
		Let us turn to the final claim (f). Here, we consider an element $e_2\in\hat{A}_1$ with $e_2\upto{|\tilde{\gamma}|}ic(a^\dagg)^{\tilde{\alpha}}a^{\tilde{\alpha}}=c g_+^{\tilde{\gamma}}/2$ with $c\in\R\setminus\{0\}$ and a second element $e\in\hat{A}_1$ for which we assume $\PP{e}{\hat{A}_1^0\oplus\hat{A}_1^=}=0$. This implies in the previous notation for $e$: $\mathcal{P}^==\emptyset$ and consequently $\deg(e)=d=\max\{|\gamma_p|\,\mid\,p\in\mathcal{P}^{\neq}\}$. Furthemore, $\mathcal{P}\equiv\mathcal{P}^\neq$. We can exploit Proposition~\ref{prop:equivalnce:relation:commutator}, which implies that $[e_2,e]\upto{f}c[i(a^\dagg)^{\tilde{\alpha}}a^{\tilde{\alpha}},e]$ with $f=|\tilde{\gamma}|+d-2$. Furthermore, Proposition~\ref{prop:equivalnce:relation:commutator} and claim (c) imply that
		\begin{align*}
			[i(a^\dagg)^{\tilde{\alpha}}a^{\tilde{\alpha}},e]\upto{f}\tilde{\alpha}\sum_{p\in\mathcal{P}}c_p\sigma_p\chimap{\gamma_p}g_{-\sigma_p}^{\gamma_p+\tilde{\gamma}-\tau},
		\end{align*}
		as the previous result states that $[i(a^\dagg)^{\tilde{\alpha}}a^{\tilde{\alpha}},g_{\sigma_p}^{\gamma_p}]\upto{f}\sigma_p\tilde{\alpha}\chimap{\gamma_p}g_{-\sigma_p}^{\gamma_p+\tilde{\gamma}-\tau}$ for any $|\gamma_p|=d$ modulo any irrelevant term with a smaller degree which can be added to the relation above without changing its validity. We can combine these results to find
		\begin{align*}
			[e_2,e]\upto{f}\tilde{\alpha}c\sum_{p\in\mathcal{P}}c_p\sigma_p\chimap{\gamma_p}g_{-\sigma_p}^{\gamma_p+\tilde{\gamma}-\tau}\quad\text{with}\quad\deg\left(c\sum_{p\in\mathcal{P}}c_p\sigma_p\chimap{\gamma_p}g_{-\sigma_p}^{\gamma_p+\tilde{\gamma}-\tau}\right)=f.
		\end{align*}
		Thus, the previous results imply that   $\deg([e_2,e])=\deg([i(a^\dagg)^{\tilde{\alpha}}a^{\tilde{\alpha}},e])=f$.
	\end{proof}
	
	\begin{proposition}\label{prop:e:iaa:e:commutator:equal:element}
		Consider a set $\mathcal{G}$ that contains at least the elements $ia^\dagg a+ic$ and $e=c_1g_\sigma^\gamma+c_2 g_{-\sigma}^\gamma$, with $c,c_1,c_2\in\R$. Then, one has
		\begin{align*}
			e_1:=[e,[ia^\dagg a+ic,e]]\upto{d}-\left(c_1^2+c_2^2\right)\chimap{\gamma}\chimap{\gamma\circ\gamma}g_+^{\gamma+\gamma^\dagg-\tau},
		\end{align*}
		with $d=2(|\gamma|-1)$ and, crucially, $e_1\in\g:=\lie{\mathcal{G}}$. Furthermore, $d=\deg(e_1)$ if $\gamma\in\Np{2}\setminus \N_=^2$ and $e\neq0$.
	\end{proposition}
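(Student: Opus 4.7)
The plan is to peel off the inner commutator first, collapse the outer bracket to a single commutator between $g_\sigma^\gamma$ and $g_{-\sigma}^\gamma$, and then apply Lemma~\ref{lem:commutator:form}. The work is essentially bookkeeping; there is no serious obstacle.

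First, I would use the centrality of $i$ in $\hat{A}_1$ to rewrite $[ia^\dagg a + ic, e] = [ia^\dagg a, e]$, and then apply the exact identity $[ia^\dagg a, g_\sigma^\gamma] = \sigma\chimap{\gamma}g_{-\sigma}^\gamma$ (Theorem~25 of \plainrefs{Bruschi:Xuereb:2024}; compare the proof of Proposition~\ref{prop:thm:25:equivalent}(a)). Linearity then gives
\begin{align*}
[ia^\dagg a + ic, e] = \sigma\chimap{\gamma}\bigl(c_1 g_{-\sigma}^\gamma - c_2 g_\sigma^\gamma\bigr),
\end{align*}
again a linear combination of the same two monomials.

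Second, I would expand the outer bracket by bilinearity. The diagonal terms $[g_\sigma^\gamma, g_\sigma^\gamma]$ and $[g_{-\sigma}^\gamma, g_{-\sigma}^\gamma]$ vanish identically, and antisymmetry collapses the two cross-terms to
\begin{align*}
e_1 = \sigma\chimap{\gamma}(c_1^2 + c_2^2)\,[g_\sigma^\gamma, g_{-\sigma}^\gamma].
\end{align*}
Now I would apply Lemma~\ref{lem:commutator:form} with $\hat\gamma = \gamma$. Crucially, $\chimap{\gamma\circ\gamma^\dagg} = \alpha\beta - \beta\alpha = 0$ kills the first summand of the lemma, and since $\gamma + \gamma^\dagg \in \N_=^2$ one has $\Theta(\gamma + \gamma^\dagg) = \gamma + \gamma^\dagg$. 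Reading off the four explicit cases, both choices $\sigma = \pm$ collapse to the single expression
\begin{align*}
[g_\sigma^\gamma, g_{-\sigma}^\gamma] \upto{d} -\sigma\chimap{\gamma\circ\gamma}\,g_+^{\gamma + \gamma^\dagg - \tau}
\end{align*}
with $d = 2(|\gamma| - 1)$. Substituting and using $\sigma^2 = 1$ yields the stated leading-order formula, and $e_1 \in \g$ is immediate as an iterated commutator of elements of $\mathcal{G}$.

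Third, for the nondegeneracy claim when $\gamma \in \Np{2}\setminus\N_=^2$ and $e \neq 0$: well-ordering forces $\alpha > \beta$, so $\chimap{\gamma} = \alpha - \beta > 0$ and $\chimap{\gamma\circ\gamma} = \alpha^2 - \beta^2 > 0$, while $g_+^{\gamma + \gamma^\dagg - \tau} = 2i(a^\dagg)^{\alpha + \beta - 1}a^{\alpha + \beta - 1} \neq 0$. Since $\gamma \notin \N_=^2$ the monomials $g_\sigma^\gamma$ and $g_{-\sigma}^\gamma$ are linearly independent, so $e \neq 0$ forces $c_1^2 + c_2^2 > 0$. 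All four factors of the leading term are thus non-zero, giving $\deg(e_1) = d$. The only step requiring care is the uniform sign bookkeeping for $\sigma = \pm$ in Lemma~\ref{lem:commutator:form}; this is cleanly handled by the symmetry of the final formula under the swap $\sigma \leftrightarrow -\sigma$ (equivalently $c_1 \leftrightarrow c_2$), which leaves both $c_1^2 + c_2^2$ and the rest invariant.
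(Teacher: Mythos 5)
Your proposal is correct and follows essentially the same route as the paper's proof: strip the central $ic$, apply $[ia^\dagg a, g_\sigma^\gamma]=\sigma\chimap{\gamma}g_{-\sigma}^\gamma$ to the inner bracket, collapse the outer bracket to $\sigma\chimap{\gamma}(c_1^2+c_2^2)[g_\sigma^\gamma,g_{-\sigma}^\gamma]$, and finish with Lemma~\ref{lem:commutator:form} plus the positivity of $\chimap{\gamma}$ and $\chimap{\gamma\circ\gamma}$ for the degree claim. Your explicit sign bookkeeping ($\chimap{\gamma\circ\gamma^\dagg}=0$, $\Theta(\gamma+\gamma^\dagg)=\gamma+\gamma^\dagg$, invariance under $\sigma\leftrightarrow-\sigma$) is slightly more detailed than the paper's, but the argument is the same.
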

	
	\begin{proof}
		The proof is straightforward and, for later convenience, we note that $e_1:=[e,[ia^\dagg a+ic,e]]=[e,[ia^\dagg a,e]]$. We start by computing:
		\begin{align*}
			[ia^\dagg a,e]=c_1[ia^\dagg a,g_{\sigma}^\gamma]+c_2[ia^\dagg a,g_{-\sigma}^\gamma]=\sigma\chimap{\gamma}\left(c_1g_{-\sigma}^\gamma-c_2g_\sigma^\gamma\right),
		\end{align*}
		where we have used Theorem 25 from \plainrefs{Bruschi:Xuereb:2024}. The application of Lemma~\ref{lem:commutator:form} as well as the linearity and anti-symmetric property of the commutator yield then
		\begin{align*}
			[e,[ia^\dagg a,e]]=\sigma\chimap{\gamma}[c_1g_\sigma^\gamma+c_2g_{-\sigma}^\gamma,c_1g_{-\sigma}^\gamma-c_2g_\sigma^\gamma]=\sigma\chimap{\gamma}\left(c_1^2+c_2^2\right)[g_\sigma^\gamma,g_{-\sigma}^\gamma]\upto{d}-\left(c_1^2+c_2^2\right)\chimap{\gamma}\chimap{\gamma\circ\gamma}g_+^{\gamma+\gamma^\dagg-\tau},
		\end{align*}
		where $d=2(|\gamma|-1)$. If now $\gamma>\gamma^\dagg$, one has $\alpha>\beta$ and $\alpha^2>\beta^2$. This implies $\chimap{\gamma}\neq0\neq\chimap{\gamma\circ\gamma}$. Thus, we consequently have $\deg([e,[ia^\dagg a,e]])=d$ when $e\neq0$, since the prefactors cannot vanish, $g_+^{\gamma+\gamma^\dagg-\tau}$ cannot vanish, and the latter is a polynomial of degree $d$. The extension from $ia^\dagg a$ to $ia^\dagg a+ic$ is trivial.
	\end{proof}    
	
	\begin{lemma}\label{lem:help:lem:infinite:algebra:containg:support:perp:and:equal:elements}
		Let $g_+^{\tilde{\gamma}}\in \hat{A}_1^=$ be a monomial and $e=\sum_{p\in\mathcal{P}}c_pg_{\sigma_p}^{\gamma_p}\in \hat{A}_1$ a linear combination of distinct monomials with a non-empty index set $\mathcal{P}\subseteq\N_{\geq0}$. Suppose furthermore $\mathcal{P}$ contains an index $p_*$ for which $c_{p_*}\neq0$, $\gamma_{p_*}>\gamma_{p_*}^\dagg$, and $|\gamma_{p_*}|=\deg(e)$. Then, the generic Commutator Chain $C^{\mathrm{gen}}$ with seed element $u^{(0)}=e$ and auxiliary sequence $s^{(\ell)}\equiv g_+^{\tilde{\gamma}}\in\hat{A}_1^=$ for all $\ell\in\N_{\geq0}$ satisfies
		\begin{align*}
			u^{(\ell)}\upto{d^{(\ell)}}\sum_{p\in\mathcal{P}}c_p^{(\ell)} g_{(-1)^\ell\sigma_p}^{\gamma_p^{(\ell)}},
		\end{align*}
		where $d^{(\ell)}=\deg(e)+\ell(|\tilde{\gamma}|-2)=\deg(u^{(\ell)})$, $\gamma_p^{(\ell)}=\gamma_p+\ell(\tilde{\gamma}-\tau)$, and $c_{p_*}^{(\ell)}\neq 0$ for all $\ell\in\N_{\geq0}$. Moreover, all monomials appearing in the expansion above are unique if they are not identically zero.
	\end{lemma}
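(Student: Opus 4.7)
The plan is to prove the statement by induction on $\ell$. The base case $\ell=0$ is immediate: we set $c_p^{(0)}:=c_p$ and $\gamma_p^{(0)}:=\gamma_p$, so $u^{(0)}=e=\sum_{p\in\mathcal{P}}c_p^{(0)}g_{\sigma_p}^{\gamma_p^{(0)}}$ with $\deg(u^{(0)})=\deg(e)=d^{(0)}$, $c_{p_*}^{(0)}=c_{p_*}\neq 0$ by hypothesis, and the uniqueness of the listed monomials follows from the distinctness of the monomials in the given expansion of $e$. For the inductive step, assume the claim at level $\ell$. Writing $g_+^{\tilde\gamma}=2i(a^\dagg)^{\tilde\alpha}a^{\tilde\alpha}$, Proposition~\ref{prop:equivalnce:relation:commutator} and Proposition~\ref{prop:thm:25:equivalent}(c) applied term-by-term give
\begin{align*}
u^{(\ell+1)}=[u^{(\ell)},g_+^{\tilde\gamma}]\upto{d^{(\ell+1)}}-2\tilde\alpha\,(-1)^\ell\sum_{p\in\mathcal{P}}c_p^{(\ell)}\sigma_p\,\chimap{\gamma_p^{(\ell)}}\,g_{(-1)^{\ell+1}\sigma_p}^{\gamma_p^{(\ell)}+\tilde\gamma-\tau},
\end{align*}
so defining $\gamma_p^{(\ell+1)}:=\gamma_p^{(\ell)}+\tilde\gamma-\tau=\gamma_p+(\ell+1)(\tilde\gamma-\tau)$ and $c_p^{(\ell+1)}:=-2\tilde\alpha(-1)^\ell\sigma_p\chimap{\gamma_p^{(\ell)}}c_p^{(\ell)}$ yields the claimed form.

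The central observation driving the non-vanishing of $c_{p_*}^{(\ell)}$ is that both $\tilde\gamma$ and $\tau$ lie in $\N_=^2$, so $\chimap{\tilde\gamma-\tau}=0$ and hence $\chimap{\gamma_p^{(\ell)}}=\chimap{\gamma_p}$ for every $\ell$. In particular, the hypothesis $\gamma_{p_*}>\gamma_{p_*}^\dagg$ yields $\chimap{\gamma_{p_*}}=\alpha_{p_*}-\beta_{p_*}>0$, so the recursion $c_{p_*}^{(\ell+1)}=-2\tilde\alpha(-1)^\ell\sigma_{p_*}\chimap{\gamma_{p_*}}c_{p_*}^{(\ell)}$ preserves non-vanishing and propagates $c_{p_*}^{(\ell)}\neq 0$ for all $\ell\in\N_{\geq 0}$.

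For the degree equality $\deg(u^{(\ell)})=d^{(\ell)}$, note that since $\tilde\gamma-\tau\in\N_=^2$, the shift $\gamma\mapsto\gamma+\ell(\tilde\gamma-\tau)$ preserves the relation $\alpha>\beta$; hence $\gamma_{p_*}^{(\ell)}\notin\N_=^2$ and the monomial $g_{(-1)^\ell\sigma_{p_*}}^{\gamma_{p_*}^{(\ell)}}$ is non-zero, contributing a term of degree exactly $|\gamma_{p_*}|+\ell(|\tilde\gamma|-2)=d^{(\ell)}$ with non-vanishing coefficient. No cancellation with other leading terms can occur because the affine shift $\gamma_p\mapsto\gamma_p^{(\ell)}$ is injective on multi-indices, so distinct $\gamma_p$ yield distinct $\gamma_p^{(\ell)}$, and when both multi-index and sign factor coincide the underlying monomials in the original expansion of $e$ would have to coincide—contradicting distinctness. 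This simultaneously establishes the uniqueness claim for all non-vanishing monomials in the expansion.

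The main technical subtlety—not a conceptual obstacle—is that for indices $p$ with $\gamma_p\in\N_=^2$ one has $\chimap{\gamma_p}=0$, so $c_p^{(\ell)}=0$ for all $\ell\geq 1$, and moreover the candidate monomials $g_{(-1)^\ell\sigma_p}^{\gamma_p^{(\ell)}}$ may be identically zero for odd $\ell$. Both features are consistent with the stated conclusion, since vanishing coefficients produce no contribution and the equivalence $\upto{d^{(\ell)}}$ is insensitive to terms of strictly smaller degree; one only needs to verify that these degenerate indices do not disturb the leading-order analysis, which is immediate because $|\gamma_p|\leq\deg(e)$ and strict inequality forces $|\gamma_p^{(\ell)}|<d^{(\ell)}$.
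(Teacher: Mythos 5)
Your proof is correct and follows essentially the same route as the paper's: induction on $\ell$, applying Proposition~\ref{prop:thm:25:equivalent}(c) together with Proposition~\ref{prop:equivalnce:relation:commutator} term-by-term, using that $\chimap{\gamma_p^{(\ell)}}=\chimap{\gamma_p}$ since $\tilde\gamma-\tau\in\N_=^2$, and deducing non-cancellation and uniqueness from the injectivity of the shift $\gamma_p\mapsto\gamma_p+\ell(\tilde\gamma-\tau)$ combined with the distinctness of the original monomials. Your explicit treatment of the sign factor $(-1)^\ell\sigma_p$ (where the paper hedges with $\hat\epsilon_p$) and of the degenerate indices with $\gamma_p\in\N_=^2$ is a slight refinement but not a different argument.
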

	
	\begin{proof}
		We will use induction to prove this claim. Let us start with the base case $\ell=1$. Proposition~\ref{prop:thm:25:equivalent} (c) states $[g_\sigma^\gamma,g_+^{\tilde{\gamma}}]\upto{d}2\hat{\epsilon}\tilde{\alpha}(\alpha-\beta) g_{-\sigma}^{\gamma+\tilde{\gamma}-\tau}$ with $\hat{\epsilon}\in\{\pm\}$ and $d=|\gamma|+|\tilde{\gamma}|-2$, since $g_+^{\tilde{\gamma}}=g_+^{(\tilde{\alpha},\tilde{\alpha})}=2i(a^\dagg)^{\tilde{\alpha}}a^{\tilde{\alpha}}$. One has consequently:
		\begin{align*}
			u^{(1)}=[u^{(0)},s^{(0)}]=[e,g_+^{\tilde{\gamma}}]=\sum_{p\in\mathcal{P}} c_p[g_{\sigma_p}^{\gamma_p},g_+^{\tilde{\gamma}}]\upto{d^{(1)}}2\tilde{\alpha}\sum_{p\in\mathcal{P}}c_p\hat{\epsilon}_p(\alpha_p-\beta_p)g_{-\sigma_p}^{\gamma_p+\tilde{\gamma}-\tau},
		\end{align*}
		with $d^{(1)}=\deg(e)+|\tilde{\gamma}|-2$. A close inspection of the above shows that $u^{(1)}$ is of the desired form with $c_p^{(1)}=2\tilde{\alpha}c_p\hat{\epsilon}_p(\alpha_p-\beta_p)$ and $\gamma_p+\tilde{\gamma}-\tau=\gamma_p^{(1)}$. Proposition~\ref{prop:thm:25:equivalent} implies furthermore $\deg(u^{(1)})=d^{(1)}$. It is furthermore clear that all monomials in the expansion above are unique if they do not vanish. This can easily be seen by assuming the converse: One has $\gamma_p+\tilde{\gamma}-\tau=\gamma_{p'}+\tilde{\gamma}-\tau$, which is only possible if $\gamma_p=\gamma_{p'}$. Since the monomials in the expansion of $e$ are assumed to be distinct, one must consequently have already $p=p'$, implying that each monomial in the expansion of $u^{(1)}$ is unique. Note furthermore that $2\hat{\epsilon}_{p_*}\tilde{\alpha}(\alpha_{p_*}-\beta_{p_*})\neq0$, since $\alpha_{p_*}>\beta_{p_*}$. This also implies that every monomial in the expansion of $u^{(1)}$ with a multi-index $\gamma_p+\tilde{\gamma}-\tau$ has a non-vanishing prefactor, if $\gamma_p>\gamma_p^\dagg$ and $|\gamma_p|=\deg(e)$, as this is the only requirement for $\gamma_{p_*}$.
		
		We continue with the induction step and assume $u^{(\ell)}$ is of the desired form. We compute
		\begin{align*}
			u^{(\ell+1)}&=[u^{(\ell)},g_+^{\tilde{\gamma}}]\upto{d^{(\ell+1)}}\sum_{p\in\mathcal{P}} c_p^{(\ell)}[g_{(-1)^\ell\sigma_p}^{\gamma_p^{(\ell)}},g_+^{\tilde{\gamma}}]\upto{d^{(\ell+1)}}2\sum_{p\in\mathcal{P}}c_p^{(\ell)}\hat{\epsilon}_p\tilde{\alpha}((\alpha_p+\ell(\tilde{\alpha}-1))-(\beta_p+\ell(\tilde{\alpha}-1)))g_{(-1)^{\ell+1}\sigma_p}^{\gamma_p^{(\ell)}+\tilde{\gamma}-\tau}\\
			&\upto{d^{(\ell+1)}}2\sum_{p\in\mathcal{P}}c_p^{(\ell)}\hat{\epsilon}_p\tilde{\alpha}(\alpha_p-\beta_p)g_{(-1)^{\ell+1}\sigma_p}^{\gamma_p^{(\ell)}+\tilde{\gamma}-\tau}.
		\end{align*}
		Noticing that $\gamma_p^{(\ell)}+\tilde{\gamma}-\tau=\gamma_p^{(\ell+1)}$ allows us to conclude that, in the expansion above, all monomials have the desired multi-indices. They are furthermore unique by an analogous discussion as for the base case $\ell=1$. One has furthermore by the induction hypothesis that $c_{p_*}^{(\ell)}\neq0$ and consequently $c_{p_*}^{(\ell+1)}=2\hat{\epsilon}_{p_*}\tilde{\alpha}(\alpha_{p_*}-\beta_{{p_*}})c_{p_*}^{(\ell)}\neq 0 $ since $\alpha_{p_*}>\beta_{p_*}$. Finally, this allows us to use the same argument above to also conclude that $\deg(u^{(\ell+1)})=d^{(\ell+1)}$, thereby completing the proof  by induction.
	\end{proof}
	\begin{lemma}\label{lem:infinite:algebra:containg:support:perp:and:equal:elements}
		Consider a set of polynomials $\mathcal{G}$ containing two elements $e_1=\sum_{p\in\mathcal{P}}c_pg_{\sigma_p}^{\gamma_p}\in\mathcal{G}$ and $e_2=\sum_{q\in\mathcal{Q}}\hat{c}_qg_{\hat{\sigma}_q}^{\hat{\gamma}_q}\in\mathcal{G}$. Then, these two elements generate an infinite-dimensional Lie algebra $\g=\langle\mathcal{G}\rangle_{\mathrm{Lie}}$ if all of the following conditions hold:
		\begin{enumerate}[label = (\arabic*)]
			\item $e_1=c_1g_+^{\tilde{\gamma}}+e_1^{\neq}+e_1^=$, where $c_1\neq0$, $\deg(e_1)=|\tilde{\gamma}|>\deg(e_1^=+e_1^{\neq})$, $\PP{e_1^{\neq}}{\hat{A}_1^0\oplus\hat{A}_1^=}=0$, $e_1^=\in\hat{A}_1^0\oplus\hat{A}_1^=$, and $|\tilde{\gamma}|\geq 4$. 
			\item $e_2=\hat{c}_1g_+^{\tilde{\lambda}}+e_2^{\neq}+e_2^=$, where $\tilde{\lambda}=\tilde{\mu}\tau$, $\hat{c}_1\in\R$, $e_2^{\neq}\neq0$, $\PP{e_2^{\neq}}{\hat{A}_1^0\oplus\hat{A}_1^=}=0$, $|\tilde{\lambda}|>\deg(e_2^=)$, and $e_2^=\in\hat{A}_1^0\oplus\hat{A}_1^=$.
			\item One does not have simultaneously: (3a) $\hat{c}_1\neq0$; (3b) $|\tilde{\lambda}|+\deg(e_2^\neq)=|\tilde{\gamma}|+\deg(e_1^\neq)$; and (3c) the unique expansions $e_1^{\neq}\upto{\deg(e_1^{\neq})}\sum_{p\in\mathcal{P}_1^{\neq}}c_pg_{\sigma_p}^{\gamma_p}$ and $e_2^{\neq}\upto{\deg(e_2^{\neq})}\sum_{q\in\mathcal{Q}_1^{\neq}}\hat{c}_qg_{\hat{\sigma}_q}^{\hat{\gamma}_q}$ with $|\gamma_p|=\deg(e_1^{\neq})$ for all $p\in\mathcal{P}^{\neq}_1$, $|\hat{\gamma}_q|=\deg(e_2^{\neq})$ for all $q\in\mathcal{Q}_1^{\neq}$, and $|\mathcal{P}_1^{\neq}|=|\mathcal{Q}_1^{\neq}|$ satisfy the following property: for every $p\in\mathcal{P}_1^{\neq}$ there exists exactly one $q\in\mathcal{Q}_1^{\neq}$ and vice versa, such that $\gamma_p+\tilde{\gamma}=\hat{\gamma}_q+\tilde{\lambda}$, $\sigma_p=\hat{\sigma}_q$, and $\hat{\mu}\hat{c}_q\chimap{\hat{\gamma}_q}=\hat{\alpha}c_p\chimap{\gamma_p}$, where $\tilde{\gamma}=(\tilde{\alpha},\tilde{\alpha})$ and $\tilde{\lambda}=(\tilde{\mu},\tilde{\mu})$.
		\end{enumerate}
	\end{lemma}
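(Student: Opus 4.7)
My overall strategy is to produce an element $v\in\g:=\lie{\mathcal{G}}$ whose unique monomial expansion contains, at the maximal degree appearing in it, at least one monomial $g_{\sigma_{p_*}}^{\gamma_{p_*}}$ with $\gamma_{p_*}>\gamma_{p_*}^{\dagg}$ and non-vanishing coefficient. Once such a $v$ is secured, I invoke Lemma~\ref{lem:help:lem:infinite:algebra:containg:support:perp:and:equal:elements} with seed $u^{(0)}=v$ and auxiliary sequence $s^{(\ell)}\equiv g_+^{\tilde{\gamma}}$ to obtain a generic Commutator Chain whose elements $u^{(\ell)}$ strictly grow in degree as $d^{(\ell)}=\deg(v)+\ell(|\tilde{\gamma}|-2)$, noting that $|\tilde{\gamma}|-2\geq 2$ by condition~(1). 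This chain does not a priori sit inside $\g$; however, since $e_1/c_1\simeq_{|\tilde{\gamma}|}g_+^{\tilde{\gamma}}$, Proposition~\ref{prop:equivalnce:generic:com:chains} guarantees that the chain with the same seed and auxiliary sequence $s^{(\ell)}\equiv e_1/c_1$ agrees with it to leading degree and therefore also has strictly increasing degrees. The latter chain lies entirely in $\g$, producing infinitely many linearly independent elements and proving $\dim\g=\infty$.

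The construction of $v$ will be $v:=[e_1,e_2]\in\g$. Expanding by bilinearity and using that $\hat{A}_1^0\oplus\hat{A}_1^=$ is abelian (Proposition~\ref{prop:commutator:g0:g=:g0:g=}), all terms involving only equal-multi-index factors drop out, leaving
\begin{align*}
v = c_1[g_+^{\tilde{\gamma}},e_2^{\neq}]+\hat{c}_1[e_1^{\neq},g_+^{\tilde{\lambda}}]+[e_1^{\neq},e_2^{\neq}]+[e_1^{\neq},e_2^=]+[e_1^=,e_2^{\neq}].
\end{align*}
Applying Proposition~\ref{prop:thm:25:equivalent}(f) to the first two commutators gives that, to degree $|\tilde{\gamma}|+\deg(e_2^{\neq})-2$ and $|\tilde{\lambda}|+\deg(e_1^{\neq})-2$ respectively, each is a linear combination of monomials whose multi-indices $\hat{\gamma}_q+\tilde{\gamma}-\tau$ and $\gamma_p+\tilde{\lambda}-\tau$ are strictly greater than their daggered versions (since $e_1^{\neq},e_2^{\neq}$ have no support in $\hat{A}_1^0\oplus\hat{A}_1^=$ and $\tilde{\gamma},\tilde{\lambda}\in\N_=^2$). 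The remaining three commutators have strictly smaller degree than whichever of these two is maximal, because $\deg(e_i^{\neq}),\deg(e_i^=)$ are strictly less than $|\tilde{\gamma}|$ (resp.\ $|\tilde{\lambda}|$) by conditions~(1) and~(2).

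The case analysis then splits according to which of the two leading contributions dominates. If either $\hat{c}_1=0$ (negation of (3a)) or $|\tilde{\gamma}|+\deg(e_2^{\neq})\neq|\tilde{\lambda}|+\deg(e_1^{\neq})$ (negation of (3b)), then exactly one of the two leading contributions survives uncontested, and its leading monomials all have $\gamma>\gamma^{\dagg}$ and non-vanishing coefficients (the coefficients are products of nonzero constants and factors $\chimap{\gamma_p}$ or $\chimap{\hat{\gamma}_q}$, which are nonzero because of the $\gamma>\gamma^{\dagg}$ property). If instead both (3a) and (3b) hold, the two leading contributions have the same degree and could a priori cancel monomial by monomial; the explicit formula of Proposition~\ref{prop:thm:25:equivalent}(f) shows that complete cancellation occurs precisely when the matching of leading monomials described in (3c) holds with exactly the coefficient relation $\tilde{\mu}\hat{c}_q\chimap{\hat{\gamma}_q}=\tilde{\alpha}c_p\chimap{\gamma_p}$. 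Condition~(3) rules out this triple coincidence, so at least one leading monomial of $v$ survives with nonzero coefficient. This completes the construction of the required seed, and the argument from the first paragraph concludes the proof.

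The main obstacle I foresee is the bookkeeping in the third case: one must verify that, when condition~(3c) fails, the surviving non-cancelled monomials are really of the form needed by Lemma~\ref{lem:help:lem:infinite:algebra:containg:support:perp:and:equal:elements}, i.e.\ that among the monomials at the common maximal degree there is at least one with $\gamma>\gamma^{\dagg}$ and non-zero coefficient. This follows from Proposition~\ref{prop:thm:25:equivalent}(f) because all monomials appearing there automatically inherit $\gamma+\tilde{\gamma}-\tau>(\gamma+\tilde{\gamma}-\tau)^{\dagg}$ from $\gamma>\gamma^{\dagg}$, but the check that the failure of (3c) rules out every possible pairwise cancellation (and not merely existence of a single uncanceled pair) requires careful indexing of the two sums against each other, matching on $(\gamma_p+\tilde{\gamma},\sigma_p)$ versus $(\hat{\gamma}_q+\tilde{\lambda},\hat{\sigma}_q)$.
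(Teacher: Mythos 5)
Your proposal is correct and follows essentially the same route as the paper's proof: both reduce to the leading-order computation of the bracket of $e_1$ and $e_2$, identify the exact monomial-by-monomial cancellation condition as the simultaneous validity of (3a), (3b), and (3c), and then iterate against an element equivalent (to highest degree) to $g_+^{\tilde{\gamma}}$ to produce a generic Commutator Chain of strictly increasing degree. The only difference is organizational --- you seed the chain at $v=[e_1,e_2]$ and invoke Lemma~\ref{lem:help:lem:infinite:algebra:containg:support:perp:and:equal:elements} together with Proposition~\ref{prop:equivalnce:generic:com:chains}, whereas the paper seeds at $e_2$ with auxiliary sequence $e_1$ and carries out the corresponding induction directly --- which changes nothing of substance.
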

	
	The idea of this proof is simple. We start by considering in an initial case that $e_1=g_+^{\tilde{\gamma}}\in\hat{A}_1^=$ and $e_2\in \hat{A}_1$ is a polynomial that, to highest degree, contains monomials with multi-indices from $\N_{\geq0}^2\setminus\N_=^2$. This allows us to show that the Lie algebra $\lie{\{e_1,e_2\}}$ is infinite dimensional. We will then consecutively add terms to the two elements such that the modified elements cover all possible polynomials that satisfy all of the conditions (1), (2), and (3) from the claim above.
	
	\begin{proof}   
		Let $g_+^{\tilde{\gamma}}\in \hat{A}_1^=$ and $e=\sum_{p\in\mathcal{P}}c_pg_{\sigma_p}^{\gamma_p}\in \hat{A}_1$ a linear combination of distinct monomials with a non-empty index set $\mathcal{P}\subseteq\N_{\geq0}$ and an index $p_*\in\mathcal{P}$ such that $c_{p_*}\neq0$, $\gamma_{p_*}>\gamma_{p_*}^\dagg$, and $|\gamma_{p_*}|=\deg(e)$. Consider now the generic Commutator Chain $C^{\mathrm{gen}}$ with seed element $u^{(0)}=e$ and auxiliary sequence $s^{(\ell)}\equiv g_+^{\tilde{\gamma}}\in\hat{A}_1^=$ for all $\ell\in\N_{\geq0}$. It follows from Lemma~\ref{lem:help:lem:infinite:algebra:containg:support:perp:and:equal:elements} that the chain elements $u^{(\ell)}$ satisfy
		\begin{align*}
			u^{(\ell)}\upto{d^{(\ell)}}\sum_{p\in\mathcal{P}}c_p^{(\ell)} g_{(-1)^\ell\sigma_p}^{\gamma_p^{(\ell)}},
		\end{align*}
		where $d^{(\ell)}=\deg(e)+\ell(|\tilde{\gamma}|-2)$, $\gamma_p^{(\ell)}=\gamma_p+\ell(\tilde{\gamma}-\tau)$, and $c_{p_*}^{(\ell)}\neq 0$ for all $\ell\in\N_{\geq0}$. Moreover, all monomials appearing in the expansion above are unique if they are not identically zero. It follows that any set of polynomials $\mathcal{G}$ containing $g_+^{\tilde{\gamma}}$ and $e$ generates an infinite-dimensional Lie algebra $\g=\lie{\mathcal{G}}$, since all elements $u^{(\ell)}$ of the chain that they generate lie within the algebra and are all linearly independent.
		
		Next, we make the following observation: The monomial $g_+^{\tilde{\gamma}}$ and the restricted polynomial $e_2^\neq := \sum_{p\in\mathcal{P}^\neq}c_pg_{\sigma_p}^{\gamma_p}=\PP{e}{\hat{A}_1^1\oplus\hat{A}_1^2\oplus\hat{A}_1^\perp}$ with the index set $\mathcal{P}^{\neq}\subseteq\mathcal{P}$ defined by all $p\in\mathcal{P}$ for which $\gamma_p>\gamma_p^\dagg$ does hold, still generate an infinite-dimensional Lie algebra. This follows from the observation that, by Proposition~\ref{prop:thm:25:equivalent}, $u_1^{(\ell)}\upto{d^{(\ell)}}u^{(\ell)}$ for all $u_1^{(\ell)}\in C_1^{\mathrm{gen}}$ with $\ell\geq 1$, where $C_1^{\mathrm{gen}}$ is the generic commutator chain with seed elements $u^{(0)}=e_2^\neq$ and auxiliary sequence $s_1^{(\ell)}\equiv g_+^{\tilde{\gamma}}$ for all $\ell\in\N_{\geq0}$.
		
		In the next step, we extend the elements $g_+^{\tilde{\gamma}}$ and $e_2^\neq$ by adding appropriate terms. Hence, we define the polynomial $e_1:=g_+^{\tilde{\gamma}}+e_1^=+e_1^\neq$, where $e_1\upto{|\tilde{\gamma}|}g_+^{\tilde{\gamma}}\in\hat{A}_1^=$ and $e_1^=$ only has non-trivial support in the space $\hat{A}_1^0\oplus\hat{A}_1^=$, while $e_1^\neq$ only has non-trivial support in the space $\hat{A}_1^1\oplus\hat{A}_1^2\oplus\hat{A}_1^\perp$. That is $\deg(e_1)=|\tilde{\gamma}|>\deg(e_1^=+e_1^\neq)$, $\PP{e_1^{\neq}}{\hat{A}_1^0\oplus\hat{A}_1^=}=0$, $e_1^=\in\hat{A}_1^0\oplus\hat{A}_1^=$, and $|\tilde{\gamma}|\geq 4$. We extend similarly the polynomial $e_2^\neq$ by adding a monomial $g_+^{\tilde{\lambda}}\in\hat{A}_n^=$ and a polynomial $e_2^=\in \hat{A}_1^0\oplus\hat{A}_1^=$ that has a smaller degree than $g_+^{\tilde{\lambda}}$, i.e., $\deg(g_+^{\tilde{\lambda}})=|\tilde{\gamma}|>\deg(e_2^=)$. The extension of $e_2^\neq$ to $e^\neq$, obtained by adding $g_+^{\tilde{\lambda}}$ and $e_2^=$, is consequently given by $e_2:=e_2^\neq+e_2^=+g_+^{\tilde{\lambda}}.$ Thus, the elements $e_1$ and $e_2$ satisfy the conditions (1) and (2) respectively, modulo an appropriate overall scalar factor.
		
		Let us now consider the generic Commutator Chain $C_2^{\mathrm{gen}}$ with seed element $u_2^{(0)}=e_2$ and auxiliary chain $s_2^{(\ell)}\equiv e_1\equiv s_2$ for all $\ell\in\N_{\geq0}$, where $e_1^=,e_2^=,g_+^{\tilde{\gamma}},g_+^{\tilde{\lambda}}\in\hat{A}_1^0\oplus\hat{A}_1^=$, $\deg(e_{2}^=)<|\tilde{\lambda}|$, $\deg(e_{1}^=)<|\tilde{\gamma}|$, $\deg(e_1^\neq)<|\tilde{\gamma}|$, and $\PP{e_1^\neq}{\hat{A}_1^0\oplus\hat{A}_1^=}=0$. Proposition~\ref{prop:thm:25:equivalent} allows us to write:
		\begin{align*}
			[u_2^{(0)},s_2]\upto{f^{(1)}}[e_2^\neq,g_+^{\tilde{\gamma}}]-[e_1^\neq,g_+^{\tilde{\lambda}}],
		\end{align*}
		where $f^{(1)}=\max\{|\tilde{\lambda}|+\deg(e_1^\neq)-2,|\tilde{\gamma}|+\deg(e_2^\neq)-2\}$. The case in which the two terms $|\tilde{\lambda}|+\deg(e_1^\neq)-2$ and $|\tilde{\gamma}|+\deg(e_2^\neq)-2$ are equal has to be treated differently from the case where they are not, as in the former both commutators on the right-hand side in the expression above have the same degree, while in the later they do not, which allows to disregard the one with lower degree. Let us consider these cases separately.
		
		The first case we want to treat shall be the case where $|\tilde{\lambda}|+\deg(e_1^\neq)\neq|\tilde{\gamma}|+\deg(e_2^\neq)$.
		Here, we can choose without loss of generality $|\tilde{\gamma}|+\deg(e_2^\neq)>|\tilde{\lambda}|+\deg(e_1^\neq)$, since $|\tilde{\gamma}|+\deg(e_2^\neq)<|\tilde{\lambda}|+\deg(e_1^\neq)$ and $|\tilde{\gamma}|>\deg(e_1^\neq)$ implies $|\tilde{\lambda}|>\deg(e_2^\neq)$. Thus, the polynomial $e_2$ satisfies the exact same structural properties as $e_1$, allowing a simple relabelling of the quantities, which brings us to the aforementioned condition that $|\tilde{\gamma}|+\deg(e_2^\neq)>|\tilde{\lambda}|+\deg(e_1^\neq)$. Note that, when performing such relabelling, one has to repeat the derivation of the generic Commutator Chain $C_1^{\mathrm{gen}}$, where the qualitative result does not change. 		
		
		We proceed here employing induction and showing that $u_2^{(\ell)}\upto{d^{(\ell)}}u_1^{(\ell)}$ for all $\ell\in\N_{\geq1}$. The base case $\ell=1$ is immediately verified by observing that $f^{(1)}=d^{(1)}$ and $\deg([e_1^\neq,g_+^{\tilde{\lambda}}])\leq\deg(e_1^\neq)+|\tilde{\lambda}|-2<|\tilde{\gamma}|+\deg(e_2^\neq)-2=d^{(1)}$. This implies immediately $u_2^{(1)}\upto{d^{(1)}}u_1^{(1)}$.		
		We then proceed with the induction argument and assume that $u_2^{(\ell)}\upto{d^{(\ell)}}u_1^{(\ell)}$ for an $\ell\in\N_{\geq1}$, which requires us to consider $u_2^{(\ell+1)}=[u_2^{(\ell)},s_2]=[u_2^{(\ell)},g_+^{\tilde{\gamma}}]+[u_2^{(\ell)},e_1^\neq+e_1^=]$. We here have $[u_2^{(\ell)},g_+^{\tilde{\gamma}}]\upto{d^{(\ell+1)}}[u_1^{(\ell)},g_+^{\tilde{\gamma}}]=u_1^{(\ell+1)}$ by the induction hypothesis and Proposition~\ref{prop:equivalnce:relation:commutator}. Furthermore, we also have $\deg([u_2^{(\ell)},e_1^\neq+e_1^=])=d^{(\ell)}+\max\{\deg(e_1^\neq),\deg(e_1^=)\}-2<d^{(\ell)}+|\tilde{\gamma}|-2=d^{(\ell+1)}$ by the induction hypothesis and the initial assumptions. This concludes the proof by induction, and one has $\deg(u_2^{(\ell)})=d^{(\ell)}$, since $\deg(u_1^{(\ell)})=d^{(\ell)}$. This shows that if in condition (3), claim (3b) does not hold, $e_1$ and $e_2$ generate an infinite-dimensional Lie algebra.
		
		In the second case, we have $|\tilde{\lambda}|+\deg(e_1^\neq)=|\tilde{\gamma}|+\deg(e_2^\neq)$.
		Here, we have $|\tilde{\gamma}|>\deg(e_1^\neq)$ by the initial assumption and consequently also $|\tilde{\lambda}|>\deg(e_2^\neq)$. We can furthermore write $e_1^\neq$ as the unique decomposition $e_1^\neq=\sum_{q\in\mathcal{Q}^{\neq}}\hat{c}_qg_{\hat{\sigma}_q}^{\hat{\gamma}_q}$, where $\hat{c}_q\neq0$ and $\gamma_q>\gamma_q^\dagg$ for all $q\in\mathcal{Q}^{\neq}$, since $e_1^\neq$ lies completely in the space $\hat{A}_1^1\oplus\hat{A}_1^2\oplus\hat{A}_1^\perp$ and is therefore formed only of monomials with multi-indices from $\N_{\geq0}^2\setminus\N_=^2$. Proposition~\ref{prop:thm:25:equivalent} now yields
		\begin{align*}
			u_2^{(1)}=[u_2^{(0)},s_2]=[e_2,e_1]\upto{f^{(1)}}[e_2^\neq,g_+^{\tilde{\gamma}}]-[e_1^\neq,g_+^{\tilde{\lambda}}]\upto{d^{(1)}}-2\tilde{\alpha}\sum_{p\in\mathcal{P}^{\neq}}\sigma_pc_p\chimap{\gamma_p}g_{-\sigma_p}^{\gamma_p+\tilde{\gamma}-\tau}+2\tilde{\mu}\sum_{q\in\mathcal{Q}^{\neq}}{\hat{\sigma}}_q\hat{c}_q\chimap{\hat{\gamma}_q}g_{-\hat{\sigma}_q}^{\hat{\gamma}_q+\tilde{\lambda}-\tau},
		\end{align*}
		where $d^{(1)}=f^{(1)}=|\tilde{\lambda}|+\deg(e_1^\neq)-2=|\tilde{\gamma}|+\deg(e_2^\neq)-2$.
		We clearly see that the sets $\{g_{-\sigma_p}^{\gamma_p+\tilde{\gamma}-\tau}\}_{p\in\mathcal{P}^{\neq}}$ and $\{g_{-\hat{\sigma}_q}^{\hat{\gamma}_q+\tilde{\lambda}-\tau}\}_{q\in\mathcal{Q}^{\neq}}$ are, by similar arguments as before, two sets of mutually linearly independent monomials. Thus assuming $\deg(u_2^{(1)})<d^{(1)}$ implies, that for every $p\in\mathcal{P}^{\neq}$ with $|\gamma_p|=\deg(e_1)$, there must exist exactly one $q\in\mathcal{Q}^{\neq}$ such that $\gamma_p+\tilde{\gamma}-\tau=\hat{\gamma}_q+\tilde{\lambda}-\tau$, $\sigma_p=\hat{\sigma}_q$, and $\hat{\mu}\hat{c}_q\chimap{\hat{\gamma}_q}-\tilde{\alpha}c_p\chimap{\gamma_p}=0$. Let us explain this in detail. In the expansion above none of the generator monomials vanishes identically, nor do the prefactors, since $\chimap{\gamma}\neq 0$ if $\gamma>\gamma^\dagg$. Moreover, a monomial $-2\alpha \sigma_p c_p\chimap{\gamma_p}g_{-\sigma_p}^{\gamma_p+\tilde{\gamma}-\tau}$ can never be canceled by any number of other monomials $-2\alpha \sigma_{p'} c_{p'}\chimap{\gamma_{p'}}g_{-\sigma_{p'}}^{\gamma_{p'}+\tilde{\gamma}-\tau}$ with $p,p'\in\mathcal{P}^\neq$, since $\{g_{-\sigma_p}^{\gamma_p+\tilde{\gamma}-\tau}\}_{p\in\mathcal{P}^{\neq}}$ is a linearly independent set. Similarly no monomial $2\tilde{\mu}\hat{\sigma}_q\hat{c}_q\chimap{\hat{\gamma}_q}g_{-\hat{\sigma}_q}^{\hat{\gamma}_q+\tilde{\lambda}-\tau}$ can ever be canceled by any number of monomials $2\tilde{\mu}\hat{\sigma}_{q'}\hat{c}_{q'}\chimap{\hat{\gamma}_{q'}}g_{-\hat{\sigma}_{q'}}^{\hat{\gamma}_{q'}+\tilde{\lambda}-\tau}$ for $q,q'\in\mathcal{Q}^\neq$. However, for $\deg(u_2^{(1)})<d^{(1)}$ to be valid all of these monomials in the expansion of $u_2^{(1)}$ must, to the highest order, be canceled. Due to the linear independence of the monomials $g_\sigma^\gamma$, a monomial $-2\alpha \sigma_p c_p\chimap{\gamma_p}g_{-\sigma_p}^{\gamma_p+\tilde{\gamma}-\tau}$ can therefore only be canceled by a monomial $2\tilde{\mu}\hat{\sigma}_{q}\hat{c}_{q}\chimap{\hat{\gamma}_{q}}g_{-\hat{\sigma}_{q}}^{\hat{\gamma}_{q}+\tilde{\lambda}-\tau}$ and vice versa. However, since all generator monomials in the two sets $\{g_{-\sigma_p}^{\gamma_p+\tilde{\gamma}-\tau}\}_{p\in\mathcal{P}^{\neq}}$ and $\{g_{-\hat{\sigma}_q}^{\hat{\gamma}_q+\tilde{\lambda}-\tau}\}_{q\in\mathcal{Q}^{\neq}}$ are unique, there can always exists only one monomial $2\tilde{\mu}\hat{\sigma}_{q}\hat{c}_{q}\chimap{\hat{\gamma}_{q}}g_{-\hat{\sigma}_{q}}^{\hat{\gamma}_{q}+\tilde{\lambda}-\tau}$ that cancels a monomial $-2\alpha \sigma_p c_p\chimap{\gamma_p}g_{-\sigma_p}^{\gamma_p+\tilde{\gamma}-\tau}$ and vice. This is only possible if $-\sigma_p=-\hat{\sigma}_q$, $\gamma_p+\tilde{\gamma}-\tau=\hat{\gamma}_q+\tilde{\lambda}-\tau$ and $2(\tilde{\mu}\hat{\sigma}_q\hat{c}_q\chimap{\hat{\gamma}_q}-\tilde{\alpha}\sigma_pc_p\chimap{\gamma_p})=0$ for all $p\in\mathcal{P}^\neq$ and $q\in\mathcal{Q}^\neq$.

		If this condition is violated, that is, if both conditions (3a) and (3c) are violated, one has $\deg(u_2^{(1)})=d^{(1)}$, and it is straightforward to verify that all chain elements $u_2^{(\ell)}$ possess the strictly increasing degree $\deg(u_2^{(\ell)})=d^{(1)}+(\ell-1)(|\tilde{\gamma}|-2)$.
		Thus, the Lie algebra generated is infinite dimensional.
	
		To summarize, we have:
		\begin{align*}
			\left(\left(\left((1)\,\wedge\,(2)\right)\,\wedge\,\left(\neg (3b)\right)\right)\,\vee\,\left(\left((1)\,\wedge\,(2)\right)\,\wedge\,\left(\neg\left((3a)\,\wedge\,(3c)\right)\right)\right)\right)\implies \lie{\{e_1,e_2\}}\text{ is infinite dimensional.}
		\end{align*}
		This can compactly be written as
		\begin{align*}
			\left(\left((1)\,\wedge\,(2)\right)\,\wedge\,\left(\neg\left((3a)\,\wedge\,(3b)\,\wedge\,(3c)\right)\right)\right)\implies \left(\lie{\{e_1,e_2\}}\text{ is infinite dimensional.}\right),
		\end{align*}
		which is equivalent to 
		\begin{align*}
			\left(\left((1)\,\wedge\,(2)\,\wedge\,(3)\right)\implies \left(\lie{\{e_1,e_2\}}\text{ is infinite dimensional.}\right)\right)
		\end{align*}
		thereby concluding the proof.
	\end{proof}
	
	\begin{proposition}\label{prop:form:of:multiindex:with:same:degree}
		Let $d\in\N_{\geq0}$ and let $\Gamma_d=\{\gamma=(\alpha,\beta)\in\N_{\geq0}^2\,\mid\,|\gamma|=d\}$ be the set of every well-ordered multi-index $\gamma\in\N_{\geq0}^2$ that satisfies $|\gamma|=d$. Then, $\Gamma=\{(d-\ell,\ell)\,\mid\,\ell\in\N_{\geq0}\text{ such that }\ell\leq d/2\}$.   
	\end{proposition}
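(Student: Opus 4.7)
The statement is essentially a parametrization of well-ordered multi-indices of fixed total degree, so the proof is a direct double-inclusion argument. The plan is to show $\Gamma_d = \{(d-\ell,\ell) \mid \ell \in \N_{\geq 0},\ \ell \leq d/2\}$ by verifying each inclusion separately.

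For the forward inclusion ($\subseteq$), I would pick an arbitrary $\gamma=(\alpha,\beta)\in\Gamma_d$ and define $\ell:=\beta$. Since $\gamma$ is well-ordered (per the convention in Definition~\ref{def:basic:maps} and the discussion following it), we have $\alpha\geq\beta$. Combining this with the degree constraint $\alpha+\beta=d$ yields $\alpha=d-\ell$, so $\gamma=(d-\ell,\ell)$, and the inequality $\alpha\geq\beta$ becomes $d-\ell\geq\ell$, which gives $\ell\leq d/2$. Moreover $\ell=\beta\in\N_{\geq 0}$, so $\gamma$ is of the claimed form.

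For the reverse inclusion ($\supseteq$), I would take any $\ell\in\N_{\geq 0}$ with $\ell\leq d/2$ and verify that $\gamma:=(d-\ell,\ell)$ belongs to $\Gamma_d$. The condition $\ell\leq d/2$ guarantees both that $d-\ell\geq\ell\geq 0$ (so $\gamma\in\N_{\geq 0}^2$ and is well-ordered) and that $|\gamma|=(d-\ell)+\ell=d$, so $\gamma\in\Gamma_d$.

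There is no real obstacle here; the only subtlety worth noting explicitly is that $\ell$ ranges over integers, so when $d$ is odd the bound $\ell\leq d/2$ is equivalent to $\ell\leq\lfloor d/2\rfloor$, and when $d$ is even the endpoint $\ell=d/2$ corresponds to the palindromic multi-index $(d/2,d/2)\in\N_{=}^{2}$. Both cases are compatible with the claimed parametrization and require no separate treatment.
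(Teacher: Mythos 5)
Your proposal is correct and follows essentially the same route as the paper's own (one-line) proof, which likewise identifies $\ell=\beta$ and uses $\alpha\geq\beta$ together with $\alpha+\beta=d$ to get $\ell\leq d/2$. Your version is merely more explicit in also verifying the reverse inclusion, which the paper leaves implicit.
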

	
	\begin{proof}
		This claim follows immediately from the two conditions $\alpha\geq \beta$ and $d=\alpha+\beta$, as one only needs to identify $\ell=\beta$, since $\alpha=d-\beta$ and $\beta=2\beta/2\leq (\alpha+\beta)/2=d/2$.
	\end{proof}
	
	\begin{theorem}\label{thm:divergence:iadaga:perp}
		If $\{ia^\dagg a+ic, e\}\subseteq\mathcal{G}$ with $c\in\R$ and $\PP{e}{\hat{A}_1^\perp}\neq\{0\}$, then $\g:=\langle\mathcal{G}\rangle_{\mathrm{Lie}}$ is infinite dimensional.
	\end{theorem}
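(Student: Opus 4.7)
The plan is to construct an element $e_*\in\g$ whose top-degree part is a linear combination of $g_+^{\gamma_*}$ and $g_-^{\gamma_*}$ for a single multi-index $\gamma_*\in\Np{2}\setminus\N_=^2$ with $|\gamma_*|\geq 3$, and then invoke Lemma~\ref{lem:linear:two:combinations}. The key structural observation is that $\mathrm{ad}(ia^\dagg a+ic)=\mathrm{ad}(ia^\dagg a)$ preserves both degree and $\g$, annihilates $\hat{A}_1^0\oplus\hat{A}_1^=$ by Proposition~\ref{prop:thm:25:equivalent}(a), and acts on each two-dimensional subspace $\spn\{g_+^\gamma,g_-^\gamma\}$ as multiplication by $-\chimap{\gamma}^2$ when squared.

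First, I would set $d_0:=\deg(\PP{e}{\hat{A}_1^\perp})$; this satisfies $d_0\geq 3$ since every monomial in $\hat{A}_1^\perp$ has degree at least three. Decomposing $e$ along Definition~\ref{def:vector:spaces:a:1:k}, the pieces in $\hat{A}_1^0\oplus\hat{A}_1^=$ lie in the kernel of $\mathrm{ad}(ia^\dagg a+ic)$, while $\hat{A}_1^1\oplus\hat{A}_1^2$ only contains elements of degree at most two. Consequently, for any real polynomial $q$ with $q(0)=0$, the element $q\bigl(\mathrm{ad}(ia^\dagg a+ic)^2\bigr)(e)$ lies in $\g$, has degree at most $d_0$, and its degree-$d_0$ part equals $\sum_{\gamma}q(-\chimap{\gamma}^2)\bigl(c_{+,\gamma}g_+^\gamma+c_{-,\gamma}g_-^\gamma\bigr)$, where the sum runs over well-ordered multi-indices $\gamma\in\Np{2}\setminus\N_=^2$ with $|\gamma|=d_0$, and $c_{\pm,\gamma}$ denote the coefficients of $\PP{e}{\hat{A}_1^\perp}$ at degree $d_0$ in the basis of $\hat{A}_1$.

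Next, I would use Proposition~\ref{prop:form:of:multiindex:with:same:degree} to note that the well-ordered multi-indices in $\Np{2}\setminus\N_=^2$ with $|\gamma|=d_0$ are exactly $(d_0-\ell,\ell)$ for $0\leq\ell<d_0/2$, and they yield distinct positive values $\chimap{\gamma}=d_0-2\ell$, so the $\chimap{\gamma}^2$ are pairwise distinct and nonzero. Choose a multi-index $\gamma_*$ at degree $d_0$ with $(c_{+,\gamma_*},c_{-,\gamma_*})\neq(0,0)$, which exists because $\PP{e}{\hat{A}_1^\perp}\neq 0$, and define the Lagrange-interpolating polynomial $q(X):=X\prod_{\gamma\neq\gamma_*}(X+\chimap{\gamma}^2)$ with the product over the remaining degree-$d_0$ multi-indices. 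Then $q(0)=0$, $q(-\chimap{\gamma_*}^2)\neq 0$, and $q(-\chimap{\gamma}^2)=0$ for every other such $\gamma$, so $e_*:=q\bigl(\mathrm{ad}(ia^\dagg a+ic)^2\bigr)(e)\in\g$ satisfies $e_*\upto{d_0}c'_+g_+^{\gamma_*}+c'_-g_-^{\gamma_*}$ with $(c'_+,c'_-)\neq(0,0)$, $|\gamma_*|=d_0\geq 3$, and $\gamma_*\in\Np{2}\setminus\N_=^2$.

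Finally, Lemma~\ref{lem:linear:two:combinations} applied to $\{ia^\dagg a+ic,e_*\}\subseteq\g$ yields that $\lie{\{ia^\dagg a+ic,e_*\}}$ is infinite dimensional; since this algebra is contained in $\g$, the theorem follows. The main subtlety is making sure that lower-degree contributions to $q\bigl(\mathrm{ad}(ia^\dagg a+ic)^2\bigr)(e)$ neither interfere with the top-degree isolation nor invalidate the degree count: the strict inequality $d_0\geq 3>2$ bounds the $\hat{A}_1^1\oplus\hat{A}_1^2$ contributions safely below $d_0$, and the condition $q(0)=0$ eliminates any potential $\hat{A}_1^=$ contribution at degree $d_0$, leaving a clean top-degree part supported on the single multi-index $\gamma_*$.
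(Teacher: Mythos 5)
Your argument is correct, but it takes a genuinely different route from the paper. The paper removes the $\hat{A}_1^0\oplus\hat{A}_1^=$ support of $e$ with a single double commutator and then analyzes the nested commutator $[e,[ia^\dagg a,e]]$, whose top-degree part splits into a provably nonvanishing piece $g_+^{(d_0-1)\tau}\in\hat{A}_1^=$ and a possible $\hat{A}_1^\perp$ remainder; it then iterates this construction, obtaining either an infinite strictly-increasing-degree sequence or, upon termination, a configuration handled by Lemma~\ref{lem:infinite:algebra:containg:support:perp:and:equal:elements}. You instead exploit the fact that $\operatorname{ad}(ia^\dagg a)^2$ acts on each block $\spn\{g_+^\gamma,g_-^\gamma\}$ as multiplication by $-\chimap{\gamma}^2$, with these eigenvalues pairwise distinct and nonzero across the well-ordered $\gamma\in\Np{2}\setminus\N_=^2$ of fixed degree $d_0$ (by Proposition~\ref{prop:form:of:multiindex:with:same:degree}), so that a Lagrange-interpolating polynomial $q$ with $q(0)=0$ simultaneously kills the $\hat{A}_1^0\oplus\hat{A}_1^=$ support and projects the top-degree $\hat{A}_1^\perp$ part onto a single multi-index $\gamma_*$; the resulting $e_*\in\g$ with $e_*\upto{d_0}c'_+g_+^{\gamma_*}+c'_-g_-^{\gamma_*}$ falls squarely into the hypotheses of Lemma~\ref{lem:linear:two:combinations} (whose proof explicitly covers both the genuinely two-term case $\sigma_1=-\sigma_2$, $\gamma_1=\gamma_2$ and the degenerate single-monomial case via Lemma~\ref{lem:adding:lower:degree:terms:extension}). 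What your spectral-projection approach buys is a one-shot reduction with no iteration and no need to rule out cancellations among cross terms $[e_p,[ia^\dagg a,e_q]]$ with $p\neq q$, which is where most of the labor in the paper's proof sits; in effect you extend the linear-algebra-on-eigenspaces mechanism already present inside Lemma~\ref{lem:linear:two:combinations} to handle arbitrarily many multi-indices at once. The paper's route, by contrast, produces explicit high-degree elements of $\g$ (useful elsewhere in the classification) and does not require the eigenvalue-separation observation. All the delicate points in your argument check out: $\operatorname{ad}(ia^\dagg a+ic)=\operatorname{ad}(ia^\dagg a)$ is exactly block-diagonal in the monomial basis, $q$ real with $q(0)=0$ keeps $e_*$ in $\g$ and of degree exactly $d_0\geq 3>2$, and $\lie{\{ia^\dagg a+ic,e_*\}}\subseteq\g$.
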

	
	We want to extend Lemma~\ref{lem:linear:two:combinations} to be applicable to linear combinations of arbitrary size, which have non-zero support in $\hat{A}_1^\perp$. Note that we chose without loss of generality $c=0$, since $i$ is in the center of $\hat{A}_1$ and commuting $ia^\dagg a +ic$ with any element $\tilde{e}\in\hat{A}_1$ yields the same result as simply commuting $ia^\dagg a$ with $\tilde{e}$.
	
	The idea of the proof below is as follows: In a first step, we show that any pair $\{e,i(a^\dagg a+c)\}\subseteq\hat{A}_1$ for which $\PP{e}{\hat{A}_1^\perp}\neq 0$ generates an infinite-dimensional Lie algebra if the reduced pair $\{\PP{e}{\hat{A}_1^1\oplus\hat{A}_1^2\oplus\hat{A}_1^\perp},ia^\dagg a\}$ generates an infinite-dimensional Lie algebra. This is shown using the double commutator $[ia^\dagg a,[e,i a^\dagg a]]$, as this preserves the non-trivial support in $\hat{A}_1^1\oplus\hat{A}_1^2\oplus\hat{A}_1^\perp$ but eliminates all components in $\hat{A}_1^0\oplus\hat{A}_1^=$ from $e$. Thus, we can assume that, without loss of generality, $e$ lies in $\hat{A}_1^1\oplus\hat{A}_1^2\oplus\hat{A}_1^\perp$.
	
	The second part of the proof investigates the nested commutator $[e,[ia^\dagg a,e]]$. The result is, to highest degree, decomposed into terms that lie in $\hat{A}_1^=$ and terms that lie in $\hat{A}_1^\perp$. Using structural properties of the monomials, as well as previous results, it is shown that either the nested commutators yield new elements of strictly increasing degree, allowing a recursive construction of infinitely many linearly independent elements in $\lie{\{ia^\dagg a,e\}}$, or the recursive process leads to a configuration covered by a previous result, which is then used to also conclude that the algebra is infinite dimensional.

	\begin{proof}
		Let $e$ be an element in $\mathcal{G}$ that satisfies $\PP{e}{\hat{A}_1^\perp}\neq 0$. This element can now be decomposed as $e=e^=+e^\neq$ where $e^=\in\hat{A}_1^0\oplus\hat{A}_1^=$ is the component of $e$ that belongs to $\hat{A}_1^0\oplus\hat{A}_1^=$, i.e.,$\PP{e}{\hat{A}_1^0\oplus\hat{A}_1^=}=e^=$, and $e^\neq$ the component of $e$ that belongs to $\hat{A}_1^1\oplus\hat{A}_1^2\oplus\hat{A}_1^\perp$, i.e.,$\PP{e}{\hat{A}_1^1\oplus\hat{A}_1^2\oplus\hat{A}_1^\perp}=e^\neq$. We can furthermore decompose $e^\neq$ uniquely into a linear combination of monomials that only possess multi-indices from $\N_{\geq0}^2\setminus\N_=^2$. Since commuting a monomial $g_\sigma^\gamma$ with $ia^\dagg a$ yields a monomial with the same multi-index but an exchange of $\sigma$ to $-\sigma$, we can write
		\begin{align*}
			e^\neq\upto{d_0}\sum_{p\in\mathcal{P}}\left(c_pg_+^{\gamma_p}+\hat{c}_pg_-^{\gamma_p}\right)=\sum_{p\in\mathcal{P}}e_p\;\text{ with }\; e_p:=c_pg_+^{\gamma_p}+\hat{c}_pg_-^{\gamma_p},
		\end{align*}
		where for every $p,q\in\mathcal{P}$ with $p\neq q$ one has $\gamma_p\neq\gamma_q$, $\gamma_p>\gamma_p^\dagg$, $(c_p,\hat{c}_p)\neq0$, and $|\gamma_p|=d_0=\deg(e^\neq)$. The condition $(c_p,\hat{c}_p)\neq 0$ reflects that at least one monomial of the form $g_\sigma^{\gamma_p}$ appears in the expansion of $e^\neq$. Thus, if for instance only $g_+^{\gamma_p}$ appears in the expansion but not $g_-^{\gamma_p}$, we have $\hat{c}_p=0$. Nevertheless, we added these terms for later convenience. Note that the size of $\mathcal{P}$ is bounded by $|\mathcal{P}|<\deg(e^\neq)/2$, i.e., by the degree of $e^\neq$, since any well-ordered multi-index $\gamma\in\Np{2}\setminus\N_=^2$ with $|\gamma|=\deg(e^\neq)$ must, analogous to Proposition~\ref{prop:form:of:multiindex:with:same:degree}, be of the form $(\deg(e^\neq)-j)\iota_1+j\iota_2$ with $\deg(e)-j> j$.
		
		We can now compute the following nested commutator:
		\begin{align*}
			[ia^\dagg a,[e,ia^\dagg a]]=[ia^\dagg a,[e^\neq ,ia^\dagg a]]\upto{d_0}\sum_{p\in\mathcal{P}}(\chimap{\gamma_p})^2\left(c_p g_+^{\gamma_p}+\hat{c}_p g_-^{\gamma_p}\right).
		\end{align*}
		In the first step, we used the fact that Proposition~\ref{prop:thm:25:equivalent} implies $[ia^\dagg a,e^=]=0$. Then, we used the bilinearity of the Lie bracket and the identity $[ia^\dagg a,g_\sigma^\gamma]=\sigma\chimap{\gamma}g_{-\sigma}^\gamma$. Since, for every $p\in \mathcal{P}$, one has $\gamma_p>\gamma_p^\dagg$ and consequently $\chimap{\gamma_p}\neq 0$. Thus $\PP{[ia^\dagg a,[e,ia^\dagg a]]}{\hat{A}_1^\perp}\neq 0$ if $\PP{e}{\hat{A}_1^\perp}\neq 0$, but to highest degree $[ia^\dagg a,[e,ia^\dagg a]]$ has no support in $\hat{A}_1^0\oplus\hat{A}_1^=$. It is straightforward to verify that $[ia^\dagg a,[e,ia^\dagg a]]$ has only non-trivial support in the space $\hat{A}_1^1\oplus\hat{A}_1^2\oplus\hat{A}_1^\perp$, by considering lower degree terms and applying the same logic.
		
		We are now interested in computing the Lie closure $\lie{\mathcal{G}}$, which is obtained by iteratively computing commutators, as formalized in Definition~\ref{def:Lie:closure}.
		Thus, if an element $c\in \hat{A}_1$ is obtained by the commutator $[a,b]=c$ of two elements $a,b\in\mathcal{G}$, one can always assume without loss of generality that $c$ does also belong to $\mathcal{G}$, as including it in the set of generators $\mathcal{G}$ does not change the result of computing the Lie closure $\lie{\mathcal{G}}$. Thus, when $\mathcal{G}$ contains $ia^\dagg a$ and an element $e$ that satisfies $\PP{e}{\hat{A}_1^\perp}\neq0$, we can assume without loss of generality, that $[ia^\dagg a,[e,ia^\dagg a]]\in\mathcal{G}$, where $\PP{[ia^\dagg a,[e,ia^\dagg a]]}{\hat{A}_1^0\oplus\hat{A}_1^=}=0$ and $\PP{[ia^\dagg a,[e,ia^\dagg a]]}{\hat{A}_1^\perp}\neq 0$. Therefore, we can also assume without loss of generality that an element $e\in\{ia^\dagg a, e\}\subseteq\mathcal{G}$ that satisfies $\PP{e}{\hat{A}_1^\perp}\neq 0$ does always belong to the space $\hat{A}_1^1\oplus\hat{A}_1^2\oplus\hat{A}_1^\perp$, as otherwise, one can simply replace it with the element $[ia^\dagg a,[e,ia^\dagg a]]$ that satisfies these conditions.
		
		We now proceed with computing $[e,[ia^\dagg a,e]]$ to highest degree, which will allow us to conclude that $\lie{\{e,ia^\dagg a\}}$ is infinite dimensional if $\PP{e}{\hat{A}_1^\perp}\neq 0$ using previous results. Let now $e\in\mathcal{G}$ be an element with non-trivial support in $\hat{A}_1^\perp$. By the discussion above, we can assume that 
		\begin{align*}
			e=e^\neq\upto{d_0}\sum_{p\in\mathcal{P}}\left(c_pg_+^{\gamma_p}+\hat{c}_pg_-^{\gamma_p}\right)=\sum_{p\in\mathcal{P}}e_p\;\text{ with }\; e_p:=c_pg_+^{\gamma_p}+\hat{c}_pg_-^{\gamma_p},
		\end{align*}
		where for every $p,q\in\mathcal{P}$ with $p\neq q$ one has $\gamma_p\neq\gamma_q$, $\gamma_p>\gamma_p^\dagg$, $(c_p,\hat{c}_p)\neq0$, and $|\gamma_p|=d_0=\deg(e)$.
		First, using Proposition~\ref{prop:thm:25:equivalent}, we calculate:
		\begin{align*}
			[ia^\dagg a,e]\upto{d_0}\sum_{p\in\mathcal{P}}\chimap{\gamma_p}\left(c_pg_-^{\gamma_p}-\hat{c}_pg_+^{\gamma_p}\right).
		\end{align*}
		Then, we wish to compute
		\begin{align*}
			[e,[ia^\dagg a,e]]&\upto{d^{(1)}}\sum_{p,q\in\mathcal{P}}\chimap{\gamma_q}[c_pg_+^{\gamma_p}+\hat{c}_pg_-^{\gamma_p},c_qg_-^{\gamma_q}-\hat{c}_qg_+^{\gamma_q}].
		\end{align*}
		To tackle this commutator, we split the sum in the expression above into two separate contributions: one for which $p=q$ and one for which $p\neq q$ holds. 
		
		We begin tackling the case $p=q$. Here, we employ Proposition~\ref{prop:e:iaa:e:commutator:equal:element}, which yields:
		\begin{align*}
			\sum_{p\in\mathcal{P}}[e_p,[ia^\dagg a,e_p]]&\upto{d^{(1)}}-\sum_{p\in\mathcal{P}}\left(c_p^2+\hat{c}_p^2\right)\chimap{\gamma_p}\chimap{\gamma_p\circ\gamma_p}g_+^{\gamma_p+\gamma_p^\dagg-\tau}
			=-\sum_{p\in\mathcal{P}}\left(c_p^2+\hat{c}_p^2\right)\chimap{\gamma_p}\chimap{\gamma_p\circ\gamma_p}g_+^{(d_0-1)\tau}.
		\end{align*}
		Next, we  tackle the case $p\neq q$, and find
		\begin{align*}
			\sum_{\underset{p\neq q}{p,q\in\mathcal{P}}}[e_p,[ia^\dagg a,e_q]]&=\sum_{\underset{p\neq q}{p,q\in\mathcal{P}}}\chimap{\gamma_q}\left(c_pc_q[g_+^{\gamma_p},g_-^{\gamma_q}]-\hat{c}_p\hat{c}_q[g_-^{\gamma_p},g_+^{\gamma_q}]-c_p\hat{c}_q[g_+^{\gamma_p},g_+^{\gamma_q}]+\hat{c}_pc_q[g_-^{\gamma_p},g_-^{\gamma_q}]\right).
		\end{align*}
		We define for convenience the terms $p_+(p,q):=c_pc_q[g_+^{\gamma_p},g_-^{\gamma_q}]-\hat{c}_p\hat{c}_q[g_-^{\gamma_p},g_+^{\gamma_q}]$ and $p_-(p,q):=-c_p\hat{c}_q[g_+^{\gamma_p},g_+^{\gamma_q}]+\hat{c}_pc_q[g_-^{\gamma_p},g_-^{\gamma_q}]$ and compute:
		\begin{align*}
			p_+(p,q)+p_+(q,p)&=c_pc_q[g_+^{\gamma_p},g_-^{\gamma_q}]+c_qc_p[g_+^{\gamma_q},g_-^{\gamma_p}]-\hat{c}_p\hat{c}_q[g_-^{\gamma_p},g_+^{\gamma_q}]-\hat{c}_q\hat{c}_p[g_-^{\gamma_q},g_+^{\gamma_p}]\\
			&=\left(c_pc_q+\hat{c}_p\hat{c}_q\right)\left([g_+^{\gamma_p},g_-^{\gamma_q}]-[g_-^{\gamma_p},g_+^{\gamma_q}]\right).
		\end{align*}
		Furthermore, Lemma~\ref{lem:commutator:form} guarantees that
		\begin{align*}
			[g_+^{\gamma_p},g_-^{\gamma_q}]-[g_-^{\gamma_p},g_+^{\gamma_q}]&\upto{d^{(1)}}\chimap{\gamma_p\circ\gamma_q^\dagg}g_+^{\gamma_p+\gamma_q^\dagg}-\chimap{\gamma_p\circ\gamma_q}g_+^{\Theta(\gamma_p+\gamma_q^\dagg)-\tau}-\chimap{\gamma_p\circ\gamma_q^\dagg}g_+^{\gamma_p+\gamma_q^\dagg-\tau}-\chimap{\gamma_p\circ\gamma_q}g_+^{\Theta(\gamma_p+\gamma_q^\dagg)-\tau}\\
			&=-2\chimap{\gamma_p\circ\gamma_q^\dagg}g_+^{\Theta(\gamma_p+\gamma_q^\dagg)-\tau}.
		\end{align*}
		We then  similarly compute
		\begin{align*}
			p_-(p,q)+p_-(q,p)&=-c_p\hat{c}_q[g_+^{\gamma_p},g_+^{\gamma_q}]-c_q\hat{c}_p[g_+^{\gamma_q},g_+^{\gamma_p}]+\hat{c}_pc_q[g_-^{\gamma_p},g_-^{\gamma_q}]+\hat{c}_qc_p[g_-^{\gamma_q},g_-^{\gamma_p}]\\
			&=\left(\hat{c}_pc_q-c_p\hat{c}_q\right)\left([g_-^{\gamma_p},g_-^{\gamma_q}]-[g_+^{\gamma_p},g_+^{\gamma_q}]\right).
		\end{align*}
		Once again, Lemma~\ref{lem:commutator:form} guarantees that
		\begin{align*}
			[g_-^{\gamma_p},g_-^{\gamma_q}]-[g_+^{\gamma_p},g_-^{\gamma_q}]&\upto{d^{(1)}}\chimap{\gamma_p\circ\gamma_q^\dagg}g_-^{\gamma_p+\gamma_q-\tau}-\epsilon_{pq}\chimap{\gamma_p\circ\gamma_q}g_-^{\Theta(\gamma_p+\gamma_q^\dagg)-\tau}\\
			&+\chimap{\gamma_p\circ\gamma_q^\dagg}g_-^{\gamma_p+\gamma_q-\tau}+\epsilon_{pq}\chimap{\gamma_p\circ\gamma_q}g_-^{\Theta(\gamma_p+\gamma_q^\dagg)-\tau}\\
			&=2\chimap{\gamma_p\circ\gamma_q^\dagg}g_-^{\gamma_p+\gamma_q-\tau}.
		\end{align*}
		Here, it is important to note that $\gamma_p+\gamma_q-\tau$ is always in $\Np{2}$ for $p\neq q$ since $\gamma_p\neq\gamma_q$, and therefore, there exists only one multi-index $\gamma_p$ such that $\gamma_p=d_0\iota_1$, while all others must have $\beta\neq0$. This concludes treating the case $p\neq q$.
		
		Hence, combing the results for the two cases $p=q$ and $p\neq q$, we find:
		\begin{align}\label{eqn:expansion:e:iaa:e:lin:comb:lem}
			[e,[ia^\dagg a,e]]&\upto{d^{(1)}}-\sum_{p\in\mathcal{P}}\left(c_p^2+\hat{c}_p^2\right)\chimap{\gamma_p}\chimap{\gamma_p\circ\gamma_p}g_+^{(d_0-1)\tau}\nonumber\\
			&\quad+2\sum_{\underset{p<q}{p,q\in\mathcal{P}}}\chimap{\gamma_q}\left(\left(\hat{c}_pc_q-c_p\hat{c}_q\right)\chimap{\gamma_p\circ\gamma_q^\dagg}g_-^{\gamma_p+\gamma_q-\tau}-\left(c_pc_q+\hat{c}_p\hat{c}_q\right)\chimap{\gamma_p\circ\gamma_q^\dagg}g_+^{\Theta(\gamma_p+\gamma_q^\dagg)-\tau}\right).
		\end{align}
		Proposition~\ref{prop:form:of:multiindex:with:same:degree} allows us to write $\gamma_p:=(d_0-{x_p})\iota_1+{x_p}\iota_2$, where $x_p$ are integers smaller than $(d_0+1)/2$. We can furthermore choose the ordering of the multi-indices such that $p<q$ implies $x_p<x_q$. This implies furthermore $\Theta(\gamma_p+\gamma_q^\dagg)=\gamma_p+\gamma_q^\dagg=(d_0-{(x_p-x_q)})\iota_1+(d_0+{(x_p-x_q)})\iota_2$, since $p<q$. In turn, this implies $\gamma_p+\gamma_q^\dagg-\tau>\gamma_p^\dagg-\gamma_q-\tau$, as equality is only possible if {$x_p=x_q$,} which is forbidden by the condition $p<q$.  It is immediate to verify $\gamma_p+\gamma_q-\tau>\gamma_p^\dagg+\gamma_q^\dagg-\tau$, since $\gamma_{p'}>\gamma_{p'}^\dagg$ for all $p'\in\mathcal{P}$.  One recognizes furthermore that $(c_p^2+\hat{c}_p^2)\chimap{\gamma_p}\chimap{\gamma_p\circ\gamma_p}>0$, since at least one of the coefficients $c_p$ or $\hat{c}_p$ does not vanish and $\gamma_p>\gamma_p^\dagg$ for all $p\in\mathcal{P}$. Thus, one has consequently $-\sum_p(c_p^2+\hat{c}_p^2)\chimap{\gamma_p}\chimap{\gamma_p\circ\gamma_p}\neq0$, with $g_+^{(d_0-1)-\tau}$ being a unique monomial in the expansion \eqref{eqn:expansion:e:iaa:e:lin:comb:lem}, since $g_+^{(d_0-1)\tau}\in\hat{A}_1^=$ and $g_-^{\gamma_p+\gamma_q-\tau},g_+^{\gamma_p+\gamma_q^\dagg-\tau}\in\hat{A}_n^\perp$, as $d_0\geq3$.
		
		We now want to conclude that $\g:=\lie{\mathcal{G}}$ is infinite dimensional. This can be shown in two ways. One option is to find a recursive construction for which the elements are of strictly increasing degree and all belong to $\lie{\mathcal{G}}$. This would imply that $\g$ is infinite dimensional, as all these elements would be linearly independent. The other possibility is to observe that the situation at hand is already covered by a previous result. To prove all of this claim we start by noting that if the second sum in the expression \eqref{eqn:expansion:e:iaa:e:lin:comb:lem} does not vanish one is left again with a situation similar to the initial one with $[ia^\dagg a,[e,[ia^\dagg a,e]]]$ satisfying the same conditions as $e$. That is $\PP{[ia^\dagg a,[e,[ia^\dagg a,e]]]}{\hat{A}_1^0\oplus\hat{A}_1^=}=0$ and $\PP{[ia^\dagg a,[e,[ia^\dagg a,e]]]}{\hat{A}_1^\perp}\neq 0$. The key difference between the two situations is that $\deg([ia^\dagg a,[e,[ia^\dagg a,e]]])>\deg(e)$. Thus, we can simply replace $e$ with $[ia^\dagg a,[e,[ia^\dagg a,e]]]$ and iterate this procedure. This will either yield a sequence of elements of strictly increasing degree, rendering $\g$ infinite dimensional, or one will eventually reach a step $\ell_*$ where the second sum in the expression \eqref{eqn:expansion:e:iaa:e:lin:comb:lem} vanishes identically. In such case, one is left in the situation covered by Lemma~\ref{lem:infinite:algebra:containg:support:perp:and:equal:elements} with the new elements $e$ and $[e,[ia^\dagg a,e]]$ for the step $\ell_*$. Lemma~\ref{lem:infinite:algebra:containg:support:perp:and:equal:elements} implies then that $\g$ is infinite dimensional.
	\end{proof}
	
	\begin{lemma}
		If $\{e_1^=,e_2\}\subseteq\mathcal{G}$ with $e_1^=\in\hat{A}_1^0\oplus\hat{A}_1^=$, $e_2\in\hat{A}_1^0\oplus\hat{A}_1^1\oplus\hat{A}_1^2$, $\PP{e_1^=}{\hat{A}_1^=}\neq0$, and $\PP{e_2}{\hat{A}_1^1\oplus\hat{A}_1^2}\neq0$, then $\g:=\langle\mathcal{G}\rangle_{\mathrm{Lie}}$ is infinite dimensional.
	\end{lemma}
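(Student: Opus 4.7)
The plan is to reduce this statement to an application of Lemma~\ref{lem:infinite:algebra:containg:support:perp:and:equal:elements}, which already handles infinite-dimensional closures that arise when one generator carries a leading $\hat{A}_1^=$ component while the other has nontrivial support outside $\hat{A}_1^0\oplus\hat{A}_1^=$. First I would cast $e_1^=$ and $e_2$ into the templates required by the three conditions of that lemma; then I would verify those conditions; finally the lemma delivers that $\lie{\{e_1^=,e_2\}}\subseteq\g$ is infinite dimensional, which immediately implies that $\g$ is as well.

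Concretely, I would expand $e_1^=$ in the standard basis of $\hat{A}_1^0\oplus\hat{A}_1^=$ as $e_1^==\sum_{k\geq 0}\beta_k g_+^{k\tau}$, where $g_+^{0}=2i$, $g_+^{\tau}=2ia^\dagg a$, and $g_+^{k\tau}=2i(a^\dagg)^k a^k$ for $k\geq 2$. Because $\PP{e_1^=}{\hat{A}_1^=}\neq 0$, there is a largest index $\tilde{\alpha}\geq 2$ with $\beta_{\tilde{\alpha}}\neq 0$; I set $\tilde{\gamma}:=\tilde{\alpha}\tau$, $c_1:=\beta_{\tilde{\alpha}}\neq 0$, and $r:=e_1^=-c_1 g_+^{\tilde{\gamma}}\in\hat{A}_1^0\oplus\hat{A}_1^=$, which has degree strictly less than $|\tilde{\gamma}|=2\tilde{\alpha}\geq 4$. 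Taking the lemma's ``$e_1^=$'' to be $r$ and its ``$e_1^\neq$'' to be $0$ verifies Condition~(1).

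Next I would decompose $e_2=c_0 i+c_1' ia^\dagg a+e_2^{12}$, with $e_2^{12}\in\hat{A}_1^1\oplus\hat{A}_1^2$ nonzero (guaranteed by $\PP{e_2}{\hat{A}_1^1\oplus\hat{A}_1^2}\neq 0$), and choose $\tilde{\lambda}:=\tau$, $\hat{c}_1:=c_1'/2$ (allowed to vanish), $e_2^\neq:=e_2^{12}$, and $e_2^=:=c_0 i$. Then $|\tilde{\lambda}|=2>\deg(e_2^=)$, $\PP{e_2^\neq}{\hat{A}_1^0\oplus\hat{A}_1^=}=0$, and $e_2^\neq\neq 0$, so Condition~(2) is satisfied. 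For Condition~(3), the choice $e_1^\neq=0$ forces $\deg(e_1^\neq)=-\infty$, so the equality $|\tilde{\lambda}|+\deg(e_2^\neq)=|\tilde{\gamma}|+\deg(e_1^\neq)$ fails trivially; clause~(3b) is false, the conjunction (3a)$\wedge$(3b)$\wedge$(3c) is false, and Condition~(3) holds vacuously.

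With all three conditions in place, Lemma~\ref{lem:infinite:algebra:containg:support:perp:and:equal:elements} yields that $\lie{\{e_1^=,e_2\}}$ is infinite dimensional, and therefore so is $\g=\lie{\mathcal{G}}$. The only piece that might appear delicate is the $\hat{c}_1=0$ case (when $e_2$ carries no $ia^\dagg a$ component), but the lemma explicitly permits $\hat{c}_1=0$ in Condition~(2), and the vacuous failure of clause~(3b) through $e_1^\neq=0$ makes the argument uniform across all sub-cases, so no extra case analysis is needed.
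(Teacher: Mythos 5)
Your proof is correct and rests on the same workhorse as the paper's, namely Lemma~\ref{lem:infinite:algebra:containg:support:perp:and:equal:elements}; the only difference is in how you reach it. The paper first computes $[e_1^=,e_2]=[e_1^=,e_2^{\neq}]$ via Proposition~\ref{prop:thm:25:equivalent}, observes that this commutator has degree $|\tilde{\gamma}|+\deg(e_2^{\neq})-2\geq 3$ and non-vanishing support in $\hat{A}_1^\perp$, and only then invokes the lemma on the resulting pair. You instead verify the lemma's three hypotheses directly on the original pair $\{e_1^=,e_2\}$, exploiting the fact that Condition~(2) only demands $e_2^{\neq}\neq 0$ with $\PP{e_2^{\neq}}{\hat{A}_1^0\oplus\hat{A}_1^=}=0$ (not support in $\hat{A}_1^\perp$), and that taking $e_1^{\neq}=0$ makes clause~(3b) fail for degree reasons. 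Your checks are all sound — in particular $|\tilde{\gamma}|=2\tilde{\alpha}\geq 4$ follows from $\PP{e_1^=}{\hat{A}_1^=}\neq 0$, and the lemma indeed tolerates $\hat{c}_1=0$ — so your version buys a slightly shorter argument by skipping the intermediate commutator computation, at the cost of a more careful bookkeeping of the lemma's hypotheses.
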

	
	\begin{proof}
		Let $e_1^=$ and $e_2$ be defined as in the statement. Then, we can write $e_1^=\upto{d_1}cg_+^{\tilde{\gamma}}$ with $|\tilde{\gamma}|=\deg(e_1^=)=:d_1\geq 4$ and $c\neq0$, since $e_1^=$ has non-zero support in $\hat{A}_1^=$ due to the assumption $\PP{e_1^=}{\hat{A}_1^=}\neq0$ and there can be only one monomial $g_+^{\tilde{\gamma}}\in\hat{A}_1^=$ for each degree $d$. Furthermore, we can write
		\begin{align*}
			e_2=\sum_{p\in\mathcal{P}}c_pg_{\sigma_p}^{\gamma_p}=c_0g_+^{0} + c_1g_+^{\iota_1} + c_2g_-^{\iota_1} + c_3g_+^{\tau} + c_4g_+^{2\iota_1} + c_5g_-^{2\iota_1},
		\end{align*}
		where at least one of the coefficients $c_1,c_2,c_4,c_5$ does not vanish. We now note that $[e_1^=,e_2]=[e_1^=,e_2^{\neq}]$ with $e_2^{\neq}:=c_1g_+^{\iota_1} + c_2g_-^{\iota_1} + c_4g_+^{2\iota_1} + c_5g_-^{2\iota_1}$ due to Proposition~\ref{prop:thm:25:equivalent}, since the commutator with the remaining terms in $e_2$ with elements in $\hat{A}_1^0\oplus\hat{A}_1^=$ vanishes. The same proposition implies furthermore that $\deg([e_1^=,e_2^{\neq}])=|\tilde{\gamma}|+\deg(e_2^{\neq})-2\geq 3$ and $\PP{[e_1^=,e_2^{\neq}]}{\hat{A}_1^\perp}\neq\{0\}$, since $\deg(e_2^\neq)\geq 1$. Lemma~\ref{lem:infinite:algebra:containg:support:perp:and:equal:elements} implies then that $\g$ is infinite dimensional.
	\end{proof}

	\subsection{Core result}
	\begin{theorem}[Core result]\label{thm:core:result}
		Any finite set $\mathcal{G}\subseteq\hat{A}_1$ that contains the element $ia^\dagg a+ic$ generates a finite-dimensional Lie algebra $\g=\langle\mathcal{G}\rangle_{\mathrm{Lie}}$ if and only if the conditions (i) $\PP{\mathcal{G}}{\hat{A}_1^\perp}=\{0\}$, and (ii) $\PP{\mathcal{G}}{\hat{A}_1^1\oplus\hat{A}_1^2}=\{0\}$ when $\PP{\mathcal{G}}{\hat{A}_1^=}\neq\{0\}$ hold.
	\end{theorem}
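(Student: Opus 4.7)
The plan is to prove the biconditional by establishing sufficiency and necessity separately, relying on Theorem~\ref{thm:divergence:iadaga:perp}, Propositions~\ref{prop:schroedinger:algebra},~\ref{prop:commutator:g0:g=:g0:g=}, and~\ref{prop:thm:25:equivalent}, together with the unnumbered lemma immediately preceding the statement. For \emph{sufficiency}, I would assume (i) and (ii) and observe that (i) places $\mathcal{G}\subseteq\hat{A}_1^0\oplus\hat{A}_1^1\oplus\hat{A}_1^2\oplus\hat{A}_1^=$. If $\PP{\mathcal{G}}{\hat{A}_1^=}=\{0\}$, then $\mathcal{G}\subseteq\hat{A}_1^{\leq 2}$ and Proposition~\ref{prop:schroedinger:algebra} embeds $\g$ into the six-dimensional Schr\"odinger algebra. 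Otherwise (ii) forces $\PP{\mathcal{G}}{\hat{A}_1^1\oplus\hat{A}_1^2}=\{0\}$, so $\mathcal{G}\subseteq\hat{A}_1^0\oplus\hat{A}_1^=$, and Proposition~\ref{prop:commutator:g0:g=:g0:g=} makes $\g=\spn(\mathcal{G})$ abelian and hence finite dimensional since $\mathcal{G}$ is finite.

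For \emph{necessity} I would argue by contrapositive. If (i) fails, some $e\in\mathcal{G}$ satisfies $\PP{e}{\hat{A}_1^\perp}\neq\{0\}$, and Theorem~\ref{thm:divergence:iadaga:perp} applied directly to $\{ia^\dagg a+ic,e\}\subseteq\mathcal{G}$ shows that even this two-element subset generates an infinite-dimensional subalgebra of $\g$. Suppose instead that (i) holds but (ii) fails, and pick $e_1,e_2\in\mathcal{G}$ with $\PP{e_1}{\hat{A}_1^=}\neq 0$ and $\PP{e_2}{\hat{A}_1^1\oplus\hat{A}_1^2}\neq 0$. These elements may simultaneously have components in all four of $\hat{A}_1^0,\hat{A}_1^1,\hat{A}_1^2,\hat{A}_1^=$, so the preceding lemma does not apply directly. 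I would therefore isolate the desired summands inside $\g$ using iterated commutators with $ia^\dagg a$: by Proposition~\ref{prop:thm:25:equivalent}(a) the endomorphism $D:=\operatorname{ad}(ia^\dagg a)^2$ annihilates $\hat{A}_1^0\oplus\hat{A}_1^=$ and acts as multiplication by $-\chimap{\gamma}^2\in\{-1,-4\}$ on $\hat{A}_1^1$ and $\hat{A}_1^2$, respectively. Lagrange interpolation at the three eigenvalues $\{0,-1,-4\}$ produces the real polynomials $P_0(x)=(x+1)(x+4)/4$, $P_1(x)=-x(x+4)/3$, $P_2(x)=x(x+1)/12$, whose actions $P_k(D)$ realize the orthogonal projections onto the respective summands. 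Setting $e_1^\sharp:=P_0(D)e_1$ and $e_2^\sharp:=(P_1(D)+P_2(D))e_2$ then yields elements of $\g$ (since all powers of $D$ applied to $\mathcal{G}$ stay in the Lie closure) belonging to $\hat{A}_1^0\oplus\hat{A}_1^=$ and $\hat{A}_1^1\oplus\hat{A}_1^2$ respectively, with $\PP{e_1^\sharp}{\hat{A}_1^=}\neq 0$ and $\PP{e_2^\sharp}{\hat{A}_1^1\oplus\hat{A}_1^2}\neq 0$. Applying the preceding lemma to $\{e_1^\sharp,e_2^\sharp\}\subseteq\g$ then forces $\g$ to be infinite dimensional.

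The main obstacle will be the projection-extraction step in this second necessity case: one must verify cleanly that the polynomials $P_k(D)$ act as the claimed orthogonal projections on the ambient space $\hat{A}_1^0\oplus\hat{A}_1^1\oplus\hat{A}_1^2\oplus\hat{A}_1^=$, that the resulting elements $e_1^\sharp$ and $e_2^\sharp$ lie inside $\g$ despite $P_k$ having rational coefficients (so $\g$ must be closed under the relevant rational linear combinations, which it is as a real vector subspace), and that the required partial projections do not vanish. Once this Lagrange-interpolation bookkeeping is discharged, the remainder of the argument is a routine case assembly from results already proved earlier in the section.
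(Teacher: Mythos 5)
Your proposal is correct, and its overall architecture coincides with the paper's: sufficiency via the dichotomy $\mathcal{G}\subseteq\hat{A}_1^{\leq 2}$ (Proposition~\ref{prop:schroedinger:algebra}) versus $\mathcal{G}\subseteq\hat{A}_1^0\oplus\hat{A}_1^=$ (Proposition~\ref{prop:commutator:g0:g=:g0:g=}), necessity of (i) via Theorem~\ref{thm:divergence:iadaga:perp}, and necessity of (ii) by reducing to the lemma that pairs an $\hat{A}_1^=$-supported element with an $\hat{A}_1^1\oplus\hat{A}_1^2$-supported one. The one place you genuinely diverge is the reduction step when (ii) fails. The paper splits into two cases --- a single element carrying both kinds of support versus two distinct elements --- and in the first case applies a single commutator $[ia^\dagg a+ic,p_1]$, which by Proposition~\ref{prop:thm:25:equivalent} kills the $\hat{A}_1^0\oplus\hat{A}_1^=$ part and lands in $\hat{A}_1^1\oplus\hat{A}_1^2$, thereby reducing to the second case. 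You instead exploit that, once (i) holds, $D:=\operatorname{ad}(ia^\dagg a)^2$ acts diagonally on $\hat{A}_1^0\oplus\hat{A}_1^1\oplus\hat{A}_1^2\oplus\hat{A}_1^=$ with spectrum $\{0,-1,-4\}$ (indeed $Dg_\sigma^\gamma=-\chimap{\gamma}^2g_\sigma^\gamma$), and recover all three spectral projections by Lagrange interpolation; your interpolating polynomials are correct, the images $P_k(D)e$ lie in $\g$ because $\operatorname{ad}(ia^\dagg a)=\operatorname{ad}(ia^\dagg a+ic)$ and $\g$ is a real subspace, and the resulting $e_1^\sharp$, $e_2^\sharp$ satisfy exactly the hypotheses of the unnumbered lemma (whose proof in turn rests on Lemma~\ref{lem:infinite:algebra:containg:support:perp:and:equal:elements}). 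This buys you a uniform treatment that absorbs the paper's two sub-cases into one and makes the ``cleaning'' of generators systematic rather than ad hoc; the paper's single-commutator trick is more elementary and avoids any spectral bookkeeping, but is tailored to this particular configuration. Both are sound.
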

	
	\begin{proof}
		Let us start by assuming the conditions (i) and (ii) hold. Then, one has one of the following two cases: (a) $\mathcal{G}\subseteq\hat{A}_1^0\oplus\hat{A}_1^1\oplus\hat{A}_1^2=\hat{A}_1^{\leq 2}$, or (b) every element $e\in\mathcal{G}$ is an element of the space $\hat{A}_1^0\oplus\hat{A}_1^=$. 
		Consider case (a) first: Here, one has $\g\subseteq\langle\hat{A}_1^{\leq 2}\rangle_{\mathrm{Lie}}\cong\slR{2}\ltimes\gh_1$, by Proposition~\ref{prop:schroedinger:algebra} which is a six-dimensional Lie algebra. In case (b), one has $[\mathcal{G},\mathcal{G}]=\{0\}$ by Proposition~\ref{prop:commutator:g0:g=:g0:g=}, yielding $\g=\spn\{\mathcal{G}\}$ which is finite dimensional since $\mathcal{G}$ is finite.
		
		We prove the necessary implication by showing the opposite case, i.e., violating either condition (i) or (ii) implies that $\g$ is an infinite-dimensional Lie algebra. If $\PP{\mathcal{G}}{\hat{A}_1^\perp}\neq\{0\}$, then there must exist an element $e\in\mathcal{G}$ with $\PP{e}{\hat{A}_1^\perp}\neq\{0\}$, since $\mathcal{G}$ is non-empty and $\PP{ia^\dagg a +c}{\hat{A}_1^\perp}=0$. Thus, Theorem~\ref{thm:divergence:iadaga:perp} implies that $\g$ is infinite dimensional. This leaves assuming the case that $\PP{\mathcal{G}}{\hat{A}_1^=}\neq\{0\}$ and $\PP{\mathcal{G}}{\hat{A}_1^1\oplus\hat{A}_1^2}\neq\{0\}$. We can disregard the case $\PP{\mathcal{G}}{\hat{A}_1^\perp}\neq\{0\}$ because it reduces to the previous one. This leaves us with two possibilities: either there exists at least one element $e\in \mathcal{G}$ satisfies simultaneously both $\PP{e}{\hat{A}_1^=}\neq0$ and $\PP{e}{\hat{A}_1^1\oplus\hat{A}_1^2}\neq0$ or there exist two distinct elements in $e_1,e_2\in\mathcal{G}$ such that one satisfies $\PP{e_1}{\hat{A}_1^=}\neq0$ while the other one satisfies $\PP{e_2}{\hat{A}_1^1\oplus\hat{A}_1^2}\neq0$ but not $\PP{e_2}{\hat{A}_1^=}\neq0$. Let us consider these separately: 
		\begin{enumerate}
			\item Suppose $\mathcal{G}$ contains the element $p_1$ that satisfies simultaneously $\PP{p_1}{\hat{A}_1^=}\neq0$, $\PP{p_1}{\hat{A}_1^1\oplus\hat{A}_1^2}\neq0$, and $\PP{p_1}{\hat{A}_1^\perp}=0$. This allows us to decompose $p_1$ as the sum $p_1=e_1^=+e_1^{\neq}$, where $e_1^=\in\hat{A}_1^0\oplus\hat{A}_1^=$, $e_1^{\neq}\in\hat{A}_1^1\oplus\hat{A}_1^2$, and $\deg(e_1^=)\geq 4$. Then, Proposition~\ref{prop:thm:25:equivalent} and Table~\ref{tab:Full:Commutator:Algebra:(pesudo):schroedinger:algebra} imply $[ia^\dagg a +ic,p_1]=[ia^\dagg a, e_1^{\neq}]\in\hat{A}_1^1\oplus\hat{A}_1^2$. This leaves us with the upcoming case that $\mathcal{G}$ contains two distinct elements $e_1,e_2\in\mathcal{G}$ such that $\PP{e_1}{\hat{A}_1^=}\neq0$, $\PP{e_2}{\hat{A}_1^1\oplus\hat{A}_1^2}\neq0$, $\PP{e_2}{\hat{A}_1^=}=0$, and $\PP{\{e_1,e_2\}}{\hat{A}_1^\perp}=\{0\}$, since we can assume, without loss of generality, that $[ia^\dagg a +ic,p_1]$ belongs to $\mathcal{G}$.
			\item Suppose $\mathcal{G}$ contains two distinct elements $p_1,p_2$ such that $\PP{p_1}{\hat{A}_1^=}\neq0$, $\PP{p_2}{\hat{A}_1^1\oplus\hat{A}_1^2}\neq0$, $\PP{p_2}{\hat{A}_1^=}=0$, and $\PP{\{p_1,p_2\}}{\hat{A}_1^\perp}=\{0\}$. Here, we can decompose $p_1$ as the sum $p_1=e_1^=+e_1^{\neq}$, where $e_1^=\in\hat{A}_1^0\oplus\hat{A}_1^=$, $e_1^{\neq}\in\hat{A}_1^1\oplus\hat{A}_1^2$, and $\deg(e_1^=)\geq 4$. The element $p_2$ furthermore belongs  to the space $\hat{A}_1^{\leq2}$ by the assumption $\PP{p_2}{\hat{A}_1^=\oplus\hat{A}_1^\perp}=0$. Lemma~\ref{lem:infinite:algebra:containg:support:perp:and:equal:elements} implies then that $\g$ is infinite dimensional.
		\end{enumerate}
		To summarize, we have now shown that violating condition (i) or condition (ii) always leads to an infinite-dimensional Lie algebra. Thus, in order for $\g$ to be finite-dimensional, both conditions must hold.
	\end{proof}

	\begin{corollary}\label{cor:all:physically:relevant:finite:algebras}
		All finite-dimensional non-abelian Lie algebras $\g$ that can be realized in the Weyl algebra $\hat{A}_1$ and contain a free term $ia^\dagg a +ic\in\g$ are subalgebras of the Schr\"odinger algebra $\mathcal{S}\cong\slR{2}\ltimes\gh_1$, and have at least dimension three. 
	\end{corollary}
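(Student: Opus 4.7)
The plan is to apply Theorem~\ref{thm:core:result} directly and then handle the dimension lower bound by a semisimplicity argument.

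First I would take any finite basis of $\g$ as the generating set $\mathcal{G}$ in Theorem~\ref{thm:core:result}; this is legitimate because $\g$ is finite-dimensional and a basis generates $\g$ itself. Since $\g$ contains $ia^\dagg a + ic$ and is finite-dimensional, the theorem forces both (i) $\PP{\mathcal{G}}{\hat{A}_1^\perp} = \{0\}$ and (ii) $\PP{\mathcal{G}}{\hat{A}_1^1 \oplus \hat{A}_1^2} = \{0\}$ whenever $\PP{\mathcal{G}}{\hat{A}_1^=} \neq \{0\}$. Hence either $\g \subseteq \hat{A}_1^{\leq 2}$ or $\g \subseteq \hat{A}_1^0 \oplus \hat{A}_1^=$. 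The latter case is abelian by Proposition~\ref{prop:commutator:g0:g=:g0:g=}, contradicting the non-abelian hypothesis. Thus $\g$ is a subalgebra of $\lie{\hat{A}_1^{\leq 2}} \cong \mathcal{S}$ by Proposition~\ref{prop:schroedinger:algebra}.

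For the dimensionality claim, $\dim \g = 1$ is immediately ruled out since any one-dimensional Lie algebra is abelian. For $\dim \g = 2$, the only non-abelian real two-dimensional Lie algebra (up to isomorphism) is $\aff_\R(1)$, with relation $[h,x] = x$. I would exclude this as follows. Using Theorem 25 from \plainrefs{Bruschi:Xuereb:2024} and Definition~\ref{def:vector:spaces:a:1:k}, the operator $\operatorname{ad}(ia^\dagg a)$ on $\hat{A}_1^{\leq 2}$ is semisimple with complex spectrum $\{0,\pm i,\pm 2i\}$ (its kernel is $\hat{A}_1^0$, and on $\hat{A}_1^k$ for $k \in \{1,2\}$ it swaps $g_+^{k\iota_1}$ and $g_-^{k\iota_1}$ with factor $\pm k$). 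Its only real eigenvalue is $0$. Writing $ia^\dagg a + ic = \alpha h + \beta x$ in $\aff_\R(1)$ and noting that $\operatorname{ad}(ia^\dagg a + ic) = \operatorname{ad}(ia^\dagg a)$ (the $ic$ term is central in $\hat{A}_1$), a direct matrix computation in the basis $\{h,x\}$ gives the matrix $\bigl(\begin{smallmatrix} 0 & 0 \\ -\beta & \alpha \end{smallmatrix}\bigr)$ with eigenvalues $\{0,\alpha\}$. Reality of $\alpha$ and the spectral constraint force $\alpha = 0$. But the restriction of a semisimple operator to an invariant subspace is again semisimple, so $\operatorname{ad}(ia^\dagg a)|_\g$ is simultaneously semisimple and nilpotent, hence zero. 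This would make $ia^\dagg a + ic$ central in $\g \cong \aff_\R(1)$, contradicting the trivial center of $\aff_\R(1)$ together with the fact that $ia^\dagg a + ic$ is a non-zero element of $\hat{A}_1$.

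The main (and essentially only) obstacle is the dimension-two exclusion; everything else is a direct consequence of Theorem~\ref{thm:core:result} and Propositions~\ref{prop:schroedinger:algebra} and~\ref{prop:commutator:g0:g=:g0:g=}. The exclusion reduces to the elementary linear-algebra observation that a non-zero semisimple operator with purely imaginary spectrum cannot act as a non-trivial inner derivation of the real affine Lie algebra $\aff_\R(1)$.
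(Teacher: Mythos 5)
Your proposal is correct. The first half --- feeding a generating set of $\g$ into Theorem~\ref{thm:core:result}, discarding the $\hat{A}_1^0\oplus\hat{A}_1^=$ branch via Proposition~\ref{prop:commutator:g0:g=:g0:g=}, and landing inside $\lie{\hat{A}_1^{\leq 2}}\cong\mathcal{S}$ by Proposition~\ref{prop:schroedinger:algebra} --- is exactly the paper's argument, modulo the cosmetic point that the set handed to Theorem~\ref{thm:core:result} must literally contain $ia^\dagg a+ic$, so you should extend $\{ia^\dagg a+ic\}$ to a basis rather than take an arbitrary one. Where you genuinely diverge is the exclusion of dimension two. The paper proceeds computationally: it expands the hypothetical second generator $e$ in the six-element basis of $\hat{A}_1^{\leq 2}$, imposes $[i(a^\dagg a+c),e]=\kappa_1 e+\kappa_2\, i(a^\dagg a+c)$, and matches coefficients to get a linear system (effectively $(1+\kappa_1^2)c_j=0$ and $(4+\kappa_1^2)c_j=0$) whose only real solution kills every component of $e$ outside $\hat{A}_1^0$, forcing the commutator to vanish. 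You instead invoke the classification of two-dimensional real Lie algebras and a spectral argument: $\operatorname{ad}(ia^\dagg a)$ is semisimple on $\hat{A}_1^{\leq2}$ with spectrum $\{0,\pm i,\pm 2i\}$, its restriction to the invariant subspace $\g\cong\aff_\R(1)$ must have the real eigenvalue $\alpha$ in that spectrum, hence $\alpha=0$, and a restriction that is simultaneously semisimple and nilpotent is zero, contradicting the trivial center of $\aff_\R(1)$. Both routes are sound. Yours is shorter and isolates the real reason the computation must come out this way --- the nonzero eigenvalues $\pm i,\pm 2i$ are not real, the same real-versus-complex phenomenon the paper emphasizes around Proposition~\ref{prop:example:naive:assumption:wrong} --- at the price of importing the $\aff_\R(1)$ classification and the fact that restrictions of semisimple operators are semisimple. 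The paper's coefficient system is more pedestrian but is reused almost verbatim as the opening move of the proof of Theorem~\ref{thm:all:physically:relevant:algebras:list}, which is presumably why the authors kept the computational form.
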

	
	\begin{proof}
		Theorem~\ref{thm:core:result} implies that any finite-dimensional Lie algebra $\g\subseteq\hat{A}_1$ that contains a free term $ia^\dagg a +ic\in\g$ must itself be a subalgebra of either $\langle\hat{A}_1^0\oplus\hat{A}_1^=\rangle_{\mathrm{Lie}}$, or $\langle\hat{A}_1^0\oplus\hat{A}_1^1\oplus\hat{A}_1^2\rangle_{\mathrm{Lie}}=\lie{\hat{A}_1^{\leq 2}}$. Proposition~\ref{prop:commutator:g0:g=:g0:g=} implies that $\langle\hat{A}_1^0\oplus\hat{A}_1^=\rangle_{\mathrm{Lie}}$ is abelian. Therefore, any non-abelian algebra $\g$ must be a subalgebra of $ \langle\hat{A}_1^{\leq 2}\rangle_{\mathrm{Lie}}$, which is by Proposition~\ref{prop:schroedinger:algebra} isomorphic to the Schr\"odinger algebra $ \langle\hat{A}_1^{\leq2}\rangle_{\mathrm{Lie}}\cong\mathcal{S}\cong\slR{2}\ltimes\gh_1$.
		
		Next, we have to show that there exist no non-abelian subalgebras $\g$ of  $\slR{2}\ltimes\gh_1$ of dimension one or two that contain the free term $ia^\dagg a+ic$. We can disregard the case that $\g$ is one-dimensional, as it would read $\g=\lie{\{ia^\dagg a+ic\}}$ and then be trivially abelian. Now, we assume that there exists a non-abelian Lie algebra $\g$ of dimension two that contains the element $ia^\dagg a+ic$ and a second element $e\in \g$ such that $[ia^\dagg a+ic,e]=\kappa_1e+\kappa_2(i a^\dagg a+ic)\neq0$ for coefficients  $\kappa_1,\kappa_2\in\R$. Since $\g\subseteq\langle\hat{A}_1^{\leq2}\rangle_{\mathrm{Lie}}=\hat{A}_1^{\leq2}\cong\slR{2}\ltimes\gh_1$, we can express $e$ as a linear combination of the basis elements of $\hat{A}_1^{\leq 2}$ as follows:
		\begin{align*}
			e=\sum_{p\in\mathcal{P}}c_pg_{\sigma_p}^{\gamma_p}=c_0g_+^{0} + c_1g_+^{\iota_1} + c_2g_-^{\iota_1} + c_3g_+^{\tau} + c_4g_+^{2\iota_1} + c_5g_-^{2\iota_1},
		\end{align*}
		and compute, using Proposition~\ref{prop:thm:25:equivalent}, the following commutator 
		\begin{align*}
			[ia^\dagg a+ic,e]=c_1g_-^{\iota_1} - c_2g_+^{\iota_1} + 2c_4g_-^{2\iota_1} - c_5g_+^{2\iota_1}=\kappa_1e+\kappa_2(ia^\dagg a+ic).
		\end{align*}
		We match the linearly independent monomials with the expression $[ia^\dagg a+ic,e]=\kappa_1e+\kappa_2(i a^\dagg a+ic)$ obtained before and obtain the following system of equations:
		\begin{align*}
			c_1-\kappa_1c_2&=0,\;&\;-c_2-\kappa_1c_1&=0,\;&\;2c_4-\kappa_1c_5&=0,\;&\;-2c_5-\kappa_1c_4&=0.
		\end{align*}
		The only solution to this equation is $c_1=c_2=c_3=c_4=0$. This, however, implies $e\in\hat{A}_1^0$. Proposition~\ref{prop:thm:25:equivalent} implies then $[ia^\dagg a+ic,e]=0$, which is contradicts the assumption that $\g$ is a non-abelian two-dimensional Lie algebra. The smallest possible dimension for a non-abelian subalgebra containing a free term is therefore three, since non-abelian three-dimensional subalgebras of the Schr\"odinger algebra, such as the Heisenberg algebra, can be realized as subalgebras of $\hat{A}_1$, as shown in Proposition~\ref{prop:A11:algbera:heisenberg}.
	\end{proof}
	
	\begin{proposition}\label{prop:schroedinger:algebra:no:nice:properties}
		The Schr\"odinger algebra $\mathcal{S}\cong\langle\hat{A}_1^0\oplus\hat{A}_1^1\oplus\hat{A}_1^2\rangle_{\mathrm{Lie}}=\lie{\hat{A}_1^{\leq2}}$ is neither solvable, nilpotent, simple, semisimple, nor reductive.
	\end{proposition}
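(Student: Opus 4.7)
The plan is to prove each of the five negative claims by exhibiting explicit structural witnesses using the semidirect-sum decomposition $\mathcal{S} \cong \slR{2} \ltimes \gh_1$ from Definition~\ref{def:Schroedinger:algebra}, under which $\slR{2}$ sits as a subalgebra and $\gh_1$ as a proper non-trivial ideal of $\mathcal{S}$. Throughout, I will work in the basis $\{h,x,y,q,p,z\}$, exploiting the explicit commutation relations listed in Definition~\ref{def:Schroedinger:algebra} without reintroducing them.

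For non-solvability (and hence non-nilpotency, since nilpotent implies solvable), I would use that $\slR{2}$ is simple and therefore perfect, i.e.\ $[\slR{2},\slR{2}] = \slR{2}$. Since $\slR{2}$ embeds in $\mathcal{S}$, every term of the derived series $\mathcal{S}^{(k+1)} := [\mathcal{S}^{(k)},\mathcal{S}^{(k)}]$ still contains $\slR{2}$ and therefore never reaches zero; the same argument applied to the lower central series $\mathcal{S}^{[k+1]} := [\mathcal{S},\mathcal{S}^{[k]}]$ rules out nilpotency directly. This is the formal counterpart of the graphical observation made after Figure~\ref{clone:trooper}, where the closed directed walk among $\{h,x,y\}$ already certifies non-solvability.

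For non-simplicity, I would note that $\spn\{z\}$ is a non-trivial proper ideal of $\mathcal{S}$ since $[z,\mathcal{S}]=0$, so $\mathcal{S}$ cannot be simple. For non-semisimplicity, the nilpotent ideal $\gh_1 \subsetneq \mathcal{S}$ shows that the radical is non-trivial. Finally, for non-reductivity I would invoke the characterization that a Lie algebra is reductive if and only if its radical coincides with its center. I expect the one mildly non-routine step to be identifying the radical of $\mathcal{S}$ exactly; the argument is that the quotient $\mathcal{S}/\gh_1 \cong \slR{2}$ is simple, hence any ideal of $\mathcal{S}$ strictly containing $\gh_1$ must equal either $\gh_1$ or $\mathcal{S}$, and $\mathcal{S}$ itself fails to be solvable by the first step, so $\mathrm{rad}(\mathcal{S}) = \gh_1$. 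A direct inspection of the commutation relations yields $\mathcal{Z}(\mathcal{S}) = \spn\{z\}$, because any non-zero combination of $h,x,y,q,p$ fails to commute with at least one basis element (for instance $[h,x]=2x$, $[x,y]=h$, $[q,p]=z$). Consequently $\mathrm{rad}(\mathcal{S}) = \gh_1 \supsetneq \spn\{z\} = \mathcal{Z}(\mathcal{S})$, which establishes non-reductivity and completes the proof.
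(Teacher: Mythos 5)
Your proof is correct and follows essentially the same route as the paper's: both establish non-solvability/non-nilpotency via perfectness (the paper computes $[\mathcal{S},\mathcal{S}]=\mathcal{S}$ directly, you use the embedded perfect $\slR{2}$), both use the non-trivial center $\spn\{z\}$ and the solvable ideal $\gh_1$ for non-semisimplicity and non-simplicity, and both conclude non-reductivity from $\mathrm{rad}(\mathcal{S})\neq\mathcal{Z}(\mathcal{S})$. Your only refinement is pinning down $\mathrm{rad}(\mathcal{S})=\gh_1$ exactly via the simple quotient, where the paper is content with the dimension bound $\dim\mathrm{rad}(\mathcal{S})\geq 3 > 1 = \dim\mathcal{Z}(\mathcal{S})$.
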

	
	\begin{proof}
		We know that $\mathcal{S}\cong\langle\hat{A}_1^{\leq 2}\rangle_{\mathrm{Lie}}$ by virtue of  Proposition~\ref{prop:schroedinger:algebra}. This result is well known in the literature \plainrefs{Yang:2018}, but we want to discuss this property in more detail using the results obtained here. 
		
		Let us consider the basis $\mathcal{B}=\{i,ia^\dagg a, i(a+a^\dagg),a-a^\dagg,i(a^2+(a^\dagg)^2),a^2-(a^\dagg)^2\}$ of $\mathcal{S}$. Using the commutation relations of the basis elements (see Table~\ref{tab:Full:Commutator:Algebra:(pesudo):schroedinger:algebra}), we observe $[\mathcal{S},\mathcal{S}]=\mathcal{S}$ implying that $\mathcal{S}$ is neither solvable nor nilpotent \plainrefs{Knapp:1996}. 
		The center $\mathcal{Z}(\mathcal{S})$ of $\slR{2}\ltimes\gh_1$ is clearly the one-dimensional vector space generated by $i$, which is isomorphic to $\mathbb{R}$. Thus, $\mathcal{S}$ is not semisimple and consequently not simple since semisimplicity is a prerequisite for simplicity \plainrefs{fuchs2003symmetries}.
		Finally, we see that the Heisenberg algebra $\gh_1\cong\spn\{i,i(a+a^\dagg),a-a^\dagg\}$ is nilpotent and hence solvable. The commutation relation of the basis elements yields $[\slR{2}\ltimes\gh_1,\gh_1]\subseteq\gh_1$ (see Table~\ref{tab:Full:Commutator:Algebra:(pesudo):schroedinger:algebra}), which implies that $\gh_1$ is a nontrivial non-abelian ideal. Therefore, the dimension of the radical $\operatorname{rad}(\slR{2}\ltimes\gh_1)$ is at least three. One concludes the it is never possible to have $\mathcal{Z}(\slR{2}\ltimes\gh_1)=\operatorname{rad}(\slR{2}\ltimes\gh_1)$ making $\slR{2}\ltimes\gh_1$ non-reductive \plainrefs{fuchs2003symmetries}.
	\end{proof}
	
	\begin{theorem}[Classification of Lie subalgebras containing the free Hamiltonian]\label{thm:all:physically:relevant:algebras:list}
		Let $\g$ be a non-abelian finite-dimensional subalgebra of the skew-hermitian Weyl algebra $\hat{A}_1$ that contains a free Hamiltonian term $i(a^\dagg a+c)\in\g$ for some $c\in\R$. Then $\g$ is isomorphic to one of the following four Lie algebras: (a) the Schr\"odinger algebra $\mathcal{S}\cong\slR{2}\ltimes\gh_1$, (b) the algebra $\slR{2}\oplus\R$, (c) the special linear algebra $\slR{2}$, or (d) the solvable Wigner-Heisenberg algebra $\wh_2$.
	\end{theorem}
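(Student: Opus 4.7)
The plan is to combine Corollary~\ref{cor:all:physically:relevant:finite:algebras} with a straightforward spectral analysis of the adjoint action of $P := i(a^\dagg a + c)$. By that corollary, $\g$ is necessarily a subalgebra of $\lie{\hat{A}_1^{\leq 2}} \cong \mathcal{S}$ of dimension at least three, so the task reduces to enumerating the non-abelian subalgebras of the six-dimensional Schr\"odinger algebra that contain $P$.

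The key structural step is to exploit the fact that $\g$ is $\ad(P)$-invariant together with the eigenstructure of $\ad(P)=\ad(ia^\dagg a)$ on $\mathcal{S}=\hat{A}_1^0\oplus\hat{A}_1^1\oplus\hat{A}_1^2$. Using Proposition~\ref{prop:thm:25:equivalent}(a), $\ad(ia^\dagg a)$ preserves each summand and has minimal polynomials $x$, $x^2+1$, and $x^2+4$ on $\hat{A}_1^0$, $\hat{A}_1^1$, and $\hat{A}_1^2$, respectively. Since these are pairwise coprime over $\R$, primary decomposition implies that every $\ad(P)$-invariant subspace $\g\subseteq\mathcal{S}$ splits as $\g=\g^0\oplus\g^1\oplus\g^2$ with $\g^K=\g\cap\hat{A}_1^K$. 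Moreover, on $\hat{A}_1^1$ and $\hat{A}_1^2$ the map $\ad(P)$ has no real eigenvalues, so the only real invariant subspaces there are $\{0\}$ and the whole space; hence $\g^K\in\{\{0\},\hat{A}_1^K\}$ for $K\in\{1,2\}$.

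This leaves four cases, classified by the pair $(\g^1,\g^2)$. The case $(\{0\},\{0\})$ forces $\g\subseteq\hat{A}_1^0$, which is abelian and therefore excluded. The case $(\hat{A}_1^1,\{0\})$ gives $i=\tfrac{1}{2}[g_+^{\iota_1},g_-^{\iota_1}]\in\g^0$, and together with the linearly independent element $P\in\g^0$ this forces $\g^0=\hat{A}_1^0$, so $\g=\hat{A}_1^{\leq 1}\cong\wh_2$ by Proposition~\ref{prop:algebra:go:g1:space}. The case $(\hat{A}_1^1,\hat{A}_1^2)$ similarly forces $\g^0=\hat{A}_1^0$, giving $\g=\lie{\hat{A}_1^{\leq 2}}\cong\mathcal{S}$ via Proposition~\ref{prop:schroedinger:algebra}. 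The remaining case $(\{0\},\hat{A}_1^2)$ is where the value of $c$ enters: closure yields $[g_+^{2\iota_1},g_-^{2\iota_1}]=-8ia^\dagg a-4i\in\g^0$, which is linearly independent of $P$ unless $c=1/2$. When the two elements are independent, $\g^0=\hat{A}_1^0$ and $\g\cong\slR{2}\oplus\R$ by Proposition~\ref{prop:algebra:g0:g2:space}; when $c=1/2$, the subspace $\g^0$ may be one-dimensional, equal to $\spn\{i(a^\dagg a+1/2)\}$, yielding $\g\cong\slR{2}$ by Proposition~\ref{prop:algebra:g2:space} (or, still, $\g^0=\hat{A}_1^0$ and $\g\cong\slR{2}\oplus\R$).

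The main technical point is the primary-decomposition step, which is what rules out any \emph{mixed} subspaces spanning across different $\hat{A}_1^K$ summands and thereby makes the enumeration exhaustive. Once this is in place, the remainder of the proof reduces to routine case-by-case verification using the commutation relations collected in Table~\ref{tab:Full:Commutator:Algebra:(pesudo):schroedinger:algebra}, together with the isomorphism statements in Propositions~\ref{prop:algebra:go:g1:space},~\ref{prop:algebra:g2:space},~\ref{prop:algebra:g0:g2:space}, and~\ref{prop:schroedinger:algebra}.
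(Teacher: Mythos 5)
Your proposal is correct, and it takes a genuinely different and considerably shorter route than the paper. The paper proves this theorem by writing a second generator $e_1=ic_1+ic_2a^\dagg a+c_3g_+^{\iota_1}+c_4g_-^{\iota_1}+c_5g_+^{2\iota_1}+c_6g_-^{2\iota_1}$, repeatedly commuting with $i(a^\dagg a+c)$, and matching coefficients through an extensive tree of subcases (C0), (C1a)--(C1c), (C3a)--(C3d), followed by a further analysis of what happens when a third element is adjoined. Your argument replaces all of that with one structural observation: since $P=i(a^\dagg a+c)\in\g$, the subalgebra $\g\subseteq\hat{A}_1^{\leq 2}$ is $\operatorname{ad}(P)$-invariant, and $\operatorname{ad}(P)=\operatorname{ad}(ia^\dagg a)$ has minimal polynomials $x$, $x^2+1$, $x^2+4$ on $\hat{A}_1^0$, $\hat{A}_1^1$, $\hat{A}_1^2$ respectively; these are pairwise coprime over $\R$, so the primary-decomposition projections are polynomials in $\operatorname{ad}(P)$ and hence preserve $\g$, forcing $\g=\g^0\oplus\g^1\oplus\g^2$ with $\g^K=\g\cap\hat{A}_1^K$. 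The absence of real eigenvalues on $\hat{A}_1^1$ and $\hat{A}_1^2$ then leaves only four cases, and your case-by-case closure computation (including the $c=1/2$ dichotomy in the $(\{0\},\hat{A}_1^2)$ case, which reproduces the paper's case (C1c)) correctly recovers exactly $\wh_2$, $\mathcal{S}$, $\slR{2}\oplus\R$, and $\slR{2}$. The reliance on Corollary~\ref{cor:all:physically:relevant:finite:algebras} is legitimate, as that corollary is established before this theorem and independently of it. What your approach buys is a conceptual explanation for why no ``mixed'' subalgebras straddling the graded pieces can occur --- precisely the possibility the paper must laboriously exclude by hand --- at the cost of invoking the primary decomposition theorem for invariant subspaces, which the paper's elementary computation avoids.
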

	
	The idea of this proof is to consider the free Hamiltonian $i(a^\dagg a+c)$ and a second element $e\in\g$ and compute the Lie closure $\lie{\{e,i(a^\dagg a+c)\}}$ of these two elements. Then, we add a linearly independent third element and again compute the Lie closure of these three elements. Repeating this procedure and demanding finiteness and the fact that $\g$ is non-abelian will eventually yield all possible Lie subalgebras satisfying the desired constraints.
	
	\begin{proof}   
		Let $\g\subseteq\hat{A}_1$ be a non-abelian finite-dimensional Lie algebra that contains the free Hamiltonian term $i(a^\dagg a+c)$ for some arbitrary $c\in\R$. We want to start by deriving some constraints on an element $e\in\g$ that never coincides with a scalar multiple of a linear combination of the free Hamiltonian term $i(a^\dagg a+c)$ and the central element $i$. Suppose therefore that every $e\in\g$ commutes with $i(a^\dagg a+c)$. Then, by Proposition~\ref{prop:thm:25:equivalent}, $e\in \hat{A}_1^0\oplus\hat{A}_1^=$, and consequently $\g\subseteq\hat{A}_1^0\oplus\hat{A}_1^=$. This leads to a contradiction, as Proposition~\ref{prop:commutator:g0:g=:g0:g=} implies that $\g$ would then be abelian. Thus, there exists some element $e_1\in\g$ such that $[e_1,i(a^\dagg a+c)]\neq0$. By virtue of Theorem~\ref{thm:core:result} and Proposition~\ref{prop:schroedinger:algebra} we have $\g\subseteq\langle\hat{A}_1^0\oplus\hat{A}_1^1\oplus\hat{A}_1^2\rangle_{\mathrm{Lie}}\cong\mathcal{S}$. We can therefore write  $e_1=ic_1+ia^\dagg a c_2+c_3g_+^{\iota_1}+c_4g_-^{\iota_1}+c_5g_+^{2\iota_1}+c_6g_-^{2\iota_1}$, where $c_k\in\R$ for all $k$. Proposition~\ref{prop:thm:25:equivalent} states that at least one of the coefficients $c_3,c_4,c_5,c_6$ must be non-zero, as otherwise $e_1$ would commute with $i(a^\dagg a+c)$. 
		
		Next, we want check whether $\spn\{i(a^\dagg a,+c),e_1\}$ is closed under commutation. Here, Proposition~\ref{prop:thm:25:equivalent} implies that the result of the commutator of $i(a^\dagg a+c)$ and $e_1$, denoted by $e_2$ is given by $e_2:=[i(a^\dagg a+c),e_1]=c_3g_-^{\iota_1}-c_4g_+^{\iota_1}+2c_5g_-^{2\iota_1}-2c_6g_+^{2\iota_1}$. Suppose $e_2=i(a^\dagg a+c)\lambda_1+\lambda_2e_1$ for some scalars $\lambda_1,\lambda_2\in\R$, i.e., suppose $\spn\{i(a^\dagg a,+c),e_1\}=\lie{\{i(a^\dagg a,+c),e_1\}}$. Comparing the prefactors of the basis elements $i$, $ia^\dagg a$, $g_+^{\iota_1}$, $g_-^{\iota_1}$, $g_+^{2\iota_1}$, and $g_-^{2\iota_1}$ would imply the following system of equations:
		\begin{align*}
			\lambda_1c+\lambda_2 c_1&=0,\;&\;\lambda_1 +\lambda_2 c_2&=0,\;&\;\lambda_2 c_3&=-c_4,\;&\;\lambda_2c_4&=c_3,\;&\lambda_2 c_5&=-2c_6,\;&\;\lambda_2 c_6=2c_5.
		\end{align*}
		Simple manipulations yield:
		\begin{align*}
			\lambda_2(c_1-cc_2)&=0,\;&\;(1+\lambda_2^2)c_3&=0,\;&\;(1+\lambda_2^2)c_4&=0,\;&\;(4+\lambda_2^2)c_5&=0,\;&\;(4+\lambda_2^2)c_6&=0.
		\end{align*}
		Since $\lambda_2^2\geq 0$, the last four equations can only be satisfied if $c_3=c_4=c_5=c_6=0$, which contradicts the earlier assumption that at least one of these coefficients is non-zero. Hence, $\spn\{i(a^\dagg a+c),e_1,e_2\}$ is three-dimensional and $\lie{\{i(a^\dagg a+c),e_1\}}$ is at least three-dimensional. 
		
		Now, we want to investigate whether $\lie{\{i(a^\dagg a+c),e_1\}}$ can be three dimensional and if so which conditions need to be met. For this, we compute $e_3:=[e_2,i(a^\dagg a+c)]=c_3g_+^{\iota_1}+c_4g_-^{\iota_1}+4c_5g_+^{2\iota_1}+4c_6g_-^{2\iota_1}$ and suppose $e_3\in\spn\{i(a^\dagg a+c),e_1,e_2\}$. This would imply $e_3=i(a^\dagg a+c)\lambda_1+\lambda_2e_1+\lambda_3e_2$ for some real coefficients $\lambda_1,\lambda_2,\lambda_3\in\R$. Comparing the coefficients of the basis elements $ia^\dagg a$ and $i$, we obtain the condition \textbf{(C0)} that $\boldsymbol{\lambda_1c+\lambda_2 c_1=0}$, and $\boldsymbol{\lambda_1 +\lambda_2 c_2=0}$ holds. This is only possible if either: \textbf{(C1)} $\boldsymbol{c_1=cc_2}$, or \textbf{(C2)} $\boldsymbol{\lambda_1=\lambda_2=0}$. If now condition (C2) holds, one would be left with $e_3=\lambda_3 e_2$, and therefore
		\begin{align*}
			c_3(1+\lambda_3^2)&=0,\;&\;c_4(1+\lambda_3^2)&=0,\;&\;c_5(4+\lambda_3^2)&=0,\;&\;c_6(4+\lambda_3^2)&=0.
		\end{align*}
		This is not possible, for the same reason that $e_2=\lambda_2 e_1$ was previously ruled out: it would require all coefficients $c_3,c_4,c_5,c_6$ to vanish, contradicting the assumption that at least one of them is non-zero. 
		
		We are left considering the case (C1), i.e., $c_1=cc_2$. Here, by condition (C0), one must have $\lambda_1=-\lambda_2 c_2$, and the following relations hold:
		\begin{align*}
			c_3&=\lambda_2c_3-\lambda_3c_4,\;&\;c_4&=\lambda_2 c_4+\lambda_3 c_3,\;&\;4c_5&=\lambda_2c_5-2\lambda_3c_6,\;&\; 4c_6&=\lambda_2c_6+2\lambda_3c_5.
		\end{align*}
		We can rewrite these equations as follows:
		\begin{align}
			(1-\lambda_2)(c_3+c_4)&=\lambda_3(c_3-c_4),\;&\;(1-\lambda_2)(c_3-c_4)&=-\lambda_3(c_3+c_4),\nonumber\\
			(4-\lambda_2)(c_5+c_6)&=2\lambda_3(c_5-c_6),\;&\;(4-\lambda_2)(c_5-c_6)&=-2\lambda_3(c_5+c_6).\label{eqn:help:algebraic:conditions:C1:1}
		\end{align}
		To derive constraints on the coefficients $c_j$, we start supposing $\boldsymbol{(c_3,c_4)\neq0}$ and $\boldsymbol{(c_5,c_6)\neq 0}$ and denote this condition by \textbf{(C1a)}. 
		
		If we assume $\lambda_2=1$, then the first pair of equations in \eqref{eqn:help:algebraic:conditions:C1:1} read $\lambda_3(c_3-c_4)=0=-\lambda_3(c_3+c_4)$. This can only be satisfied if either $\lambda_3=0$ or $c_3=c_4=0$. The latter is prohibited by conditions (C1a), which we are currently considering, and it follows that $\lambda_3=0$. The last two equations in \eqref{eqn:help:algebraic:conditions:C1:1} read then $3(c_5-c_6)=0=3(c_5-c_6)$, which is only possible if $c_5=0=c_6$. However, this is again prohibited by condition (C1a) and therefore contradicts the assumption that $e_3\in\spn\{i(a^\dagg a+c),e_1,e_2\}$. Thus, we may safely assume $\lambda_2\neq1$ and rewrite the first two equations in \eqref{eqn:help:algebraic:conditions:C1:1} as: $((1-\lambda_2)^2+\lambda_3^2)(c_3-c_4)=0=((1-\lambda_2)^2+\lambda_3^2)(c_3+c_4)$. This can only be satisfied if $c_3=c_4=0$, which contradicts the assumption (C1a). Hence, the space $\spn\{i(a^\dagg a+c),e_1,e_2,e_3\}$ must be at least four-dimensional when $(c_3,c_4)\neq0$ and $(c_5,c_6)\neq 0$.
		
		We consider now the two remaining cases where $e_3\in\spn\{i(a^\dagg a+c),e_1,e_2\}$, namely the case \textbf{(C1b)} given by the constraints $\boldsymbol{c_1=cc_2}$ and $\boldsymbol{c_5=c_6=0}$, and the case \textbf{(C1c)} given by the constraints $\boldsymbol{c_1=cc_2}$ and $\boldsymbol{c_3=c_4=0}$. Let us consider them separately:
		
		\noindent For case (C1b), the constraints are $\boldsymbol{c_1=cc_2}$ and $\boldsymbol{c_5=c_6=0}$. Here, one has $e_3\in\spn\{i(a^\dagg a+c),e_1,e_2\}$. This can be verified by substituting $c_5=0=c_6$ and $\lambda_2=1$ into equation \eqref{eqn:help:algebraic:conditions:C1:1} and recalling that condition (C0) demands $\lambda_1=-\lambda_2c_2$. Combined, one has $e_3=-i(a^\dagg a+c)c_2+e_1$. We aim to check whether $\spn\{i(a^\dagg a+c),e_1,e_2\}$ is closed under the commutator. So far, we have seen that, under the given constraints, $e_2=[i(a^\dagg a+c),e_1]$ and $e_3=[i(a^\dagg a+c),e_2]$ both belong to $\spn\{i(a^\dagg a+c),e_1,e_2\}$. By the antisymmetry and the bilinearity of the commutator, we are left computing $[e_2,e_1]$. Here, one has $e_4:=-[e_1,e_2]=[e_1,[e_1,i(a^\dagg a+c)]]=[c_1i+c_2ia^\dagg a+c_3g_+^{\iota_1}+c_4g_-^{\iota_1},c_4g_+^{\iota_1}-c_3g_-^{\iota_1}]\in\g$, using Table~\ref{tab:Full:Commutator:Algebra:(pesudo):schroedinger:algebra}. The result is $e_4=c_2[ia^\dagg a,c_4g_+^{\iota_1}-c_3g_-^{\iota_1}]+(c_3^2+c_4^2)[g_-^{\iota_1},g_+^{\iota_1}]=c_2(c_3g_+^{\iota_1}+c_4g_-^{\iota_1})+ 2i(c_3^2+c_4^2)$, which is non-zero as, by the initial discussion on non-abelian elements, we need to assume that at least one of the coefficients $c_3,c_4$ does not vanish. This is clearly a linearly independent element from $\spn\{i(a^\dagg a+c),e_1,e_2\}$, which can be verified by identifying $(1,0,0,0)^\mathrm{Tp}\leftrightarrow i$, $(0,1,0,0)^\mathrm{Tp}\leftrightarrow ia^\dagg a$, $(0,0,1,0)^\mathrm{Tp}\leftrightarrow g_+^{\iota_1}$, and $(0,0,0,1)^\mathrm{Tp}\leftrightarrow g_-^{\iota_1}$. Then, solving the equation for linear independence $c_2(c_3g_+^{\iota_1}+c_4g_-^{\iota_1})+2i(c_3^2+c_4^2)=\lambda_1 i(a^\dagg a+c)+\lambda_2 e_1+\lambda_3 e_2$ with $\lambda_1,\lambda_2,\lambda_3\in\R$ is equivalent to verifying that the determinant of the matrix
		\begin{align*}
			\boldsymbol{M}_{\mathrm{C1b}}:=\begin{pmatrix}
				2(c_3^2+c_4^2) & c & c_1 & 0\\
				0 & 1 & c_2 & 0 \\
				c_2c_3 & 0 & c_3 & -c_4\\
				c_2c_4 & 0 & c_4 & c_3
			\end{pmatrix}
		\end{align*}
		is non-zero, since $e_2=[ia^\dagg a,e_1]=c_3g_+^{\iota_1}-c_4 g_+^{\iota_1}$. It is immediate to compute $\operatorname{det}(\boldsymbol{M}_{\mathrm{C1b}})=(c_3^2+c_4^2)(2c_3^2+2c_4^2+c_2(cc_2-c_1))$. Now recall that we are currently in case (C1) which demands $c_1=cc_2$, implying that $\operatorname{det}(\boldsymbol{M}_{\mathrm{C1b}})=2(c_3^2+c_4^2)^2$ and verifying that $e_4\notin\spn\{i(a^\dagg a+c),e_1,e_2\}$. It is furthermore easy to verify that $\langle\{i(a^\dagg a+c),e_1\}\rangle_{\mathrm{Lie}}=\langle\{i,ia^\dagg a,g_+^{\iota_1},g_-^{\iota_1}\}\rangle_{\mathrm{Lie}}$. 
		
		\noindent For case (C1c), the constraints are $\boldsymbol{c_1=cc_2}$ and $\boldsymbol{c_3=c_4=0}$. Here, one has $e_3\in\spn\{i(a^\dagg a+c),e_1,e_2\}$. This can be verified by substituting $c_3=0=c_4$ and $\lambda_2=4$ into equation \eqref{eqn:help:algebraic:conditions:C1:1} and recalling that condition (C0) demands $\lambda_1=-\lambda_2c_2$. Combined, one has $e_3=-4i(a^\dagg a+c)c_2+4e_1$. Similar to the previous case, we aim to check whether $\spn\{i(a^\dagg a+c),e_1,e_2\}$ is closed under the commutator, and are consequently left computing $[e_2,e_1]$. Here, one has $e_4:=-[e_1,e_2]/2=[e_1,[e_1,i(a^\dagg a+c)]/2]=[c_1i+c_2ia^\dagg a+c_5g_+^{2\iota_1}+c_6g_-^{2\iota_1},c_6g_+^{2\iota_1}-c_5g_-^{2\iota_1}]\in\g$, using Table~\ref{tab:Full:Commutator:Algebra:(pesudo):schroedinger:algebra}. The result is $e_4=c_2[ia^\dagg a,c_6g_+^{2\iota_1}-c_5g_-^{2\iota_1}]+(c_5^2+c_6^2)[g_-^{2\iota_1},g_+^{2\iota_1}]=2c_2(c_5g_+^{2\iota_1}+c_6g_-^{2\iota_1})+8i(c_5^2+c_6^2)(a^\dagg a+1/2)\neq0$. To check for linear independence, we follow the same strategy as the preceding case. That is, we identify $(1,0,0,0)^{\mathrm{Tp}}\leftrightarrow i$,  $(0,1,0,0)^\mathrm{Tp}\leftrightarrow ia^\dagg a$, $(0,0,1,0)^\mathrm{Tp}\leftrightarrow g_+^{2\iota_1}$, and $(0,0,0,1)^\mathrm{Tp}\leftrightarrow g_-^{2\iota_1}$ and compute the determinant of the matrix
		\begin{align*}
			\boldsymbol{M}_{\mathrm{C1c}}:=\begin{pmatrix}
				4(c_5^2+c_6^2)&c&cc_2&0\\
				8(c_5^2+c_6^2)&1&c_2&0\\
				2c_2c_5&0&c_5&c_6\\
				2c_2c_6&0&c_6&-c_5
			\end{pmatrix}.
		\end{align*}
		Note that we substituted here already the condition $c_1=cc_2$ given by the constraint (C1). One finds: $\operatorname{det}(\boldsymbol{M}_\mathrm{C1c})=4(c_5^2+c_6^2)^2(2c-1)$ and we can conclude that $e_4\in \spn\{i(a^\dagg a+c),e_1,e_2\}$ if $c=1/2$ and $e_4\notin\spn\{i(a^\dagg a+c),e_1,e_2\}$ otherwise. It is easy to verify that $\langle\{i(a^\dagg a+c),e_1\}\rangle_{\mathrm{Lie}}=\langle\{i(a^\dagg a+1/2),g_+^{2\iota_1},g_-^{2\iota_1}\}\rangle_{\mathrm{Lie}}$ if $c=1/2$ and $\langle\{i(a^\dagg a+c),e_1\}\rangle_{\mathrm{Lie}}=\langle\{i,ia^\dagg a,g_+^{2\iota_1},g_-^{2\iota_1}\}\rangle_{\mathrm{Lie}}$ otherwise.
		
		This concludes treating the case (C1). To summarize, we found that the only three-dimensional Lie subalgebra that can be generated by the free Hamiltonian term and an element $e_1=c_1i+c_2ia^\dagg a+c_3g_+^{\iota_1}+c_4g_-^{\iota_1}+c_5g_+^{2\iota_1}+c_6g_-^{2\iota_1}$ that does not commute with $i(a^\dagg a+c)$ is given by the case $c=1/2$, $c_1=c_2/2$ and $c_3=c_4=0$. All other ones are at least four-dimensional. We can now proceed with the remaining case \textbf{(C3)}, where $e_3\notin\spn\{i(a^\dagg a+c),e_1,e_2\}$; i.e., either $c_1\neq cc_2$ or $(c_3,c_4)\neq0$ and $(c_5,c_6)\neq 0$ simultaneously. We split this discussion into separate subcases and investigate whether $\spn\{i(a^\dagg a+c),e_1,e_2,e_3\}$ equals the Lie closure of the set $\{i(a^\dagg a+c),e_1\}$:
		\begin{enumerate}
			\item \textbf{Case (C3a)}: Here, at least one of the coefficients from each of the pairs $(c_3,c_4)$ and $(c_5,c_6)$ shall not vanish. Moreover, we assume that $\boldsymbol{c_1=cc_2}$. We now want to investigate whether $\spn\{i(a^\dagg a+c),e_1,e_2,e_3\}$ is closed under the commutator and construct therefore the two elements $e_4:=(i(a^\dagg a+c)c_2+e_3-e_1)/3=c_5g_+^{2\iota_1}+c_6 g_-^{2\iota_1}$, and $e_5:=e_1-i(a^\dagg a+c)c_2-e_4=c_3g_+^{\iota_1}+c_4g_-^{\iota_1}$ that clearly satisfy the condition $e_4,e_5\in\spn\{i(a^\dagg a+c),e_1,e_2,e_3\}$. One computes now: $[e_4,[e_4,i(a^\dagg a+c)]]=2(c_5^2+c_6^2)[g_-^{2\iota_1},g_+^{2\iota_1}]=16i(c_5^2+c_6^2)(a^\dagg a+1/2)\neq0$, and $[e_5,[e_5,i(a^\dagg a+c)]]=(c_3^2+c_4^2)[g_-^{\iota_1},g_+^{\iota_1}]=2i(c_3^2+c_4^2)\neq0$. Thus, both basis elements $i$ and $ia^\dagg a$ lie in $\langle\{i(a^\dagg a+c),e_1\}\rangle_{\mathrm{Lie}}$. Furthermore, it is straightforward to confirm that $\spn\{e_4,e_5,[ia^\dagg a,e_4],[ia^\dagg a,e_5]\}=\spn\{g_+^{\iota_1},g_-^{\iota_1},g_+^{2\iota_1},g_-^{2\iota_1}\}$. Consequently, $\langle\{i(a^\dagg a+c),e_1\}\rangle_{\mathrm{Lie}}=\langle\{i,ia^\dagg a,g_+^{\iota_1},g_-^{\iota_1},g_+^{2\iota_1},g_-^{2\iota_1}\}\rangle_{\mathrm{Lie}}\cong\mathcal{S}\cong\slR{2}\ltimes\gh_1$. 
			\item \textbf{Case (C3b)}: Here, the constraints shall be $\boldsymbol{c_1\neq c c_2}$, $\boldsymbol{c_3=c_4=0}$, and $\boldsymbol{(c_5,c_6)\neq0}$. In this case, we first note that at least one of the coefficients $c_1$ or $c_2$ must be non-zero; otherwise the condition $c_1\neq c c_2$ would be violated. Suppose exactly one of the coefficients $c_1$ or $c_2$ vanishes. Then, similar to case (C1c), it is straightforward to verify that $\spn\{i(a^\dagg a+c),e_1,e_2,e_3\}=\spn\{i,ia^\dagg a,g_+^{2\iota_1},g_-^{2\iota_1}\}$ and $\langle\{i(a^\dagg a+c),e_1\}\rangle_{\mathrm{Lie}}=\langle \{i,ia^\dagg a,g_+^{2\iota_1},g_-^{2\iota_1}\}\rangle_{\mathrm{Lie}}=\spn\{i,ia^\dagg a,g_+^{2\iota_1},g_-^{2\iota_1}\}$. Now suppose $c_1\neq0$ and $c_2\neq0$. We compute: $[e_2,e_3]=8(c_5^2+c_6^2)[g_-^{2\iota_1},g_+^{2\iota_1}]=64i(c_5^2+c_6^2)(a^\dagg a+1/2)\neq0$. If $c\neq 1/2$, then $i(a^\dagg a+1/2)$ and $i(a^\dagg a+c)$ are clearly independent, and we conclude $\langle\{i(a^\dagg a+c),e_1\}\rangle_{\mathrm{Lie}}=\langle \{i,ia^\dagg a,g_+^{2\iota_1},g_-^{2\iota_1}\}\rangle_{\mathrm{Lie}}$. If $c=1/2$, then $i(a^\dagg a+c)=i(a^\dagg a+1/2)$, and one observes $i(a^\dagg a+c)-(e_1-e_3/4)/c_2=i(cc_2-c_1)/c_2$. This does never vanish by the assumption $c_1\neq c c_2$ implying again $\langle\{i(a^\dagg a+c),e_1\}\rangle_{\mathrm{Lie}}=\langle \{i,ia^\dagg a,g_+^{2\iota_1},g_-^{2\iota_1}\}\rangle_{\mathrm{Lie}}=\spn\{i,ia^\dagg a,g_+^{2\iota_1},g_-^{2\iota_1}\}$.
			\item \textbf{Case (C3c)}: Here, the constraints shall be $\boldsymbol{c_1\neq c c_2$, $c_5=c_6=0}$, and $\boldsymbol{(c_3,c_4)\neq0}$. In this case, we can simply compute $[e_2,e_3]=(c_3^2+c_4^2)[g_-^{\iota_1},g_+^{\iota_1}]=2i(c_3^2+c_4^2)\neq0$. Hence, its is easy to verify that $\langle\{i(a^\dagg a+c),e_1\}\rangle_{\mathrm{Lie}}=\langle \{i,ia^\dagg a,g_+^{\iota_1},g_-^{\iota_1}\}\rangle_{\mathrm{Lie}}=\spn\{i,ia^\dagg a,g_+^{\iota_1},g_-^{\iota_1}\}$ holds.
			\item \textbf{Case (C3d)}: Here, the constraints shall be $\boldsymbol{c_1\neq c c_2}$, and $\boldsymbol{(c_3,c_4)\neq0}$, and $\boldsymbol{(c_5,c_6)\neq0}$. In this case, we can start, by introducing the element $h:=[[i(a^\dagg a+c),e_3],i(a^\dagg a+c)]$, as well as the four elements $e_4:=(h-e_3)/12=c_5g_+^{2\iota_1}+c_6 g_-^{2\iota_1}$, $e_5:=[i(a^\dagg a+c),e_4/2]=c_5 g_-^{2\iota_1}-c_6 g_+^{2\iota_1}$, $e_6:=(4e_3-h)/3=c_3g_+^{\iota_1}+c_4g_-^{\iota_1}$, and $e_7:=[i(a^\dagg a+c),e_6]=c_3g_-^{\iota_1}-c_4g_+^{\iota_1}$. Clearly $e_4,e_5,e_6,e_7,h\in\langle\{i(a^\dagg a+c),e_1\}\rangle_{\mathrm{Lie}}$ by construction. It is straightforward to verify that $\spn\{e_4,e_5,e_6,e_7\}=\spn\{g_+^{\iota_1},g_-^{\iota_1},g_+^{2\iota_1},g_-^{2\iota_1}\}$, by computing $c_5e_4-c_6e_5=(c_5^2+c_6^2)g_+^{2\iota_1}$, $c_6e_4+c_5e_5=(c_5^2+c_6^2)g_-^{2\iota_1}$, $c_3e_6-c_4e_7=(c_3^2+c_4^2)g_+^{\iota_1}$, and $c_4e_6+c_3e_7=(c_3^2+c_4^2)g_-^{\iota_1}$. Moreover, using the commutation relations $[g_-^{\iota_1},g_+^{\iota_1}]=2i$ and $[g_-^{2\iota_1},g_+^{2\iota_1}]=8i(a^\dagg a+1/2)$ we conclude that both $i$ and $ia^\dagg a$ lie in $\lie{\{i(a^\dagg a+c),e_1\}}$. Therefore $\langle\{i(a^\dagg a+c),e_1\}\rangle_{\mathrm{Lie}}=\langle\{i,ia^\dagg a,g_+^{\iota_1},g_-^{\iota_1},g_+^{2\iota_1},g_-^{2\iota_1}\}\rangle_{\mathrm{Lie}}\cong\mathcal{S}\cong\slR{2}\ltimes\gh_1$. 
		\end{enumerate}
		It is immediate to verify that cases (C3a) - (C3d) exhaust all possibilities to satisfy condition (C3). That is
		\begin{align*}
			\left(\mathrm{(C3a)}\,\vee\mathrm{(C3b)}\,\vee\mathrm{(C3c)}\,\vee\mathrm{(C3d)}\right)\Leftrightarrow\left(c_1=cc_2\,\vee\,\left((c_3,4_4)\neq 0\,\wedge\,(c_5,c_6)\neq 0\right)\right).
		\end{align*}
		We have shown that the Lie algebra generated by $i(a^\dagg a+c)$ and an element $e_1\in\langle\hat{A}_1^0\oplus\hat{A}_1^1\oplus\hat{A}_1^2\rangle_{\mathrm{Lie}}$ that does not commute with $ia^\dagg a$ is one of the following four Lie algebras: (i) $\langle\{i,ia^\dagg a,g_+^{\iota_1},g_-^{\iota_1}\}\rangle_{\mathrm{Lie}}$ (cf. Cases (C1b) and (C3c)), (ii) $\langle\{i(a^\dagg a+1/2),g_+^{2\iota_1},g_-^{2\iota_1}\}\rangle_{\mathrm{Lie}}$ (cf. Case (C1c)), (iii) $\langle\{i,ia^\dagg a,g_+^{2\iota_1},g_-^{2\iota_1}\}\rangle_{\mathrm{Lie}}$ (cf. Cases (C1c) and (C3b)), or (iv) $\langle\{i,ia^\dagg a,g_+^{\iota_1},g_-^{\iota_1},g_+^{2\iota_1},g_-^{2\iota_1}\}\rangle_{\mathrm{Lie}}$ (cf. cases (C3a) and (C3d)). We do now have to consider the cases that there exists an element $\tilde{e}_1\in \g$ that does not belong to $\langle\{i(a^\dagg a+c),e_1\}\rangle_{\mathrm{Lie}}$. This can only occur in the cases (i) - (iii). Consider these separately and investigate the Lie closure of the corresponding sets $\{i(a^\dagg a+c),e_1,\tilde{e}_1\}$:
		\begin{enumerate}[label = (\roman*)]
			\item $\boldsymbol{\lie{\{i(a^\dagg a+c),e_1\}}=\langle\{i,ia^\dagg a,g_+^{\iota_1},g_-^{\iota_1}\}\rangle_{\mathrm{Lie}}}$: We may assume without loss of generality that $\tilde{e}_1=\tilde{c}_5 g_+^{2\iota_1}+\tilde{c}_6g_-^{2\iota_1}$, with $\tilde{c}_5\neq 0$ or $\tilde{c}_6\neq 0$, since $\tilde{e}_1$ must have non-vanishing support in $\hat{A}_1^2$ and any component in $\spn\{i,ia^\dagg a,g_+^{\iota_1},g_-^{\iota_1}\}$ can be subtracted off. Let us consider the case $\tilde{c}_5\neq0$; the case $\tilde{c}_6\neq0$ can be treated analogously: Define $\tilde{e}_2:=2\tilde{c}_5\tilde{e}_1-\tilde{c}_6[ia^\dagg a,\tilde{e}_1]=2(\tilde{c}_5^2+\tilde{c}_6^2) g_+^{2\iota_1}$. Thus, $g_+^{2\iota_1}\in\g$ and since $g_-^{2\iota_1}=[ia^\dagg a,g_+^{2\iota_1}]/2$, we conclude that $g_-^{2\iota_1}\in\g$ as well. Therefore, $\g=\langle\{i,ia^\dagg a,g_+^{\iota_1},g_-^{\iota_1},g_+^{2\iota_1},g_-^{2\iota_1}\}\rangle_{\mathrm{Lie}}\cong\mathcal{S}\cong \slR{2} 
			\ltimes\gh_1$.
			\item $\boldsymbol{\lie{\{i(a^\dagg a+c),e_1\}}=\langle\{i(a^\dagg a+1/2),g_+^{2\iota_1},g_-^{2\iota_1}\}\rangle_{\mathrm{Lie}}}$: We may assume without loss of generality that $\tilde{e}_1=i\tilde{c}_1+ia^\dagg a\tilde{c}_2+\tilde{c}_3 g_+^{\iota_1}+\tilde{c}_4g_-^{\iota_1}$, where at least one of the coefficients $\tilde{c}_1,\tilde{c}_2,\tilde{c}_3,\tilde{c}_4\in\R$ does not vanish. Furthermore, if $\tilde{c}_3=\tilde{c}_4=0$, then $2\tilde{c}_1\neq \tilde{c}_2$, and it is easy to verify that in this case $\langle\{i(a^\dagg a+c),e_1,\tilde{e}_1\}\rangle_{\mathrm{Lie}}=\langle\{i,ia^\dagg a, g_+^{2\iota_1},g_-^{2\iota_1}\}\rangle_{\mathrm{Lie}}$. If there exists now a third element $\tilde{e}_2\in \g$ such that $\tilde{e}_2\notin \langle\{i,ia^\dagg a, g_+^{2\iota_1},g_-^{2\iota_1}\}\rangle_{\mathrm{Lie}}$, one needs to consider the upcoming case (iii). 
			
			Now suppose that at least one of the coefficients $\tilde{c}_3$ or $\tilde{c}_4$ does not vanish. Without loss of generality, assume $\tilde{c}_3\neq 0$; the case $\tilde{c}_4\neq 0$ can be treated analogously. One realizes that $(\tilde{c}_3^2+\tilde{c}_4^2)g_+^{\iota_1}=\tilde{c}_3[[i(a^\dagg a+1/2),\tilde{e}_1],i(a^\dagg a+1/2)]-\tilde{c}_4[i(a^\dagg a+1/2),\tilde{e}_1]\neq 0$. Thus, $g_+^{\iota_1}\in\g$, and since $g_-^{\iota_1}=[i(a^\dagg a+1/2),g_+^{\iota_1}]$, we conclude $g_-^{\iota_1}\in\g$, and $2i=[g_-^{\iota_1},g_+^{\iota_1}]\in\g$. Henceforth $\g=\langle\{i,ia^\dagg a,g_+^{\iota_1},g_-^{\iota_1},g_+^{2\iota_1},g_-^{2\iota_1}\}\rangle_{\mathrm{Lie}}\cong\mathcal{S}\cong\slR{2}\ltimes\gh_1$.
			\item $\boldsymbol{\lie{\{i(a^\dagg a+c),e_1\}}=\langle\{i,ia^\dagg a,g_+^{2\iota_1},g_-^{2\iota_1}\}\rangle_{\mathrm{Lie}}}$: This case can be treated analogously to the previous case (ii), when one assumed that either $\tilde{c}_3\neq0$ or $\tilde{c}_4\neq0$.
		\end{enumerate}
		We have now considered all possible cases in which a non-abelian finite-dimensional Lie algebra $g\subseteq\hat{A}_1$ contains a free Hamiltonian term $i(a^\dagg a+c)$. Our results obtained above allow us, together with Propositions~\ref{prop:algebra:go:g1:space},~\ref{prop:algebra:g2:space}, and~\ref{prop:algebra:g0:g2:space}, to conclude that $\g$ is isomorphic to one of the following Lie algebras: (a) $\langle\{i,ia^\dagg a,g_+^{\iota_1}, g_-^{\iota_1}, g_+^{2\iota_1}, g_-^{2\iota_1}\}\rangle_{\mathrm{Lie}}=\langle\hat{A}_1^0\oplus\hat{A}_1^1\oplus\hat{A}_1^2\rangle_{\mathrm{Lie}}\cong \mathcal{S}=\slR{2}\ltimes\gh_1$, (b) $\langle\{i,ia^\dagg a,g_+^{2\iota_1},g_-^{2\iota_1}\}\rangle_{\mathrm{Lie}}=\langle\hat{A}_1^0\oplus\hat{A}_1^2\rangle_{\mathrm{Lie}}=\slR{2}\oplus\R$, (c) $\langle\{i(a^\dagg a+1/2),g_+^{2\iota_1},g_-^{2\iota_1}\}\rangle_{\mathrm{Lie}}=\langle\hat{A}_1^2\rangle_{\mathrm{Lie}}\cong\slR{2}$, or $\langle\{i,ia^\dagg a,g_+^{\iota_1},g_-^{\iota_1}\}\rangle_{\mathrm{Lie}}=\langle\hat{A}_1^0\oplus\hat{A}_1^1\rangle_{\mathrm{Lie}}\cong \wh_2$.  
	\end{proof}

	\section{Nilpotent and non-solvable finite-dimensional Lie algebras realizable in $\hat{A}_1$\label{section:classification:nilpotent:nonsolvable}}
	
	In this section, we aim to classify all finite-dimensional nilpotent and non-solvable Lie algebras that can be faithfully realized as a subalgebra of the skew-hermitian Weyl algebra $\hat{A}_1$. While the classification of such algebras in the complex Weyl algebra $A_1$ has already been achieved \plainrefs{TST:2006,Tanasa:2005}, our focus here is on the real skew-hermitian setting, which introduces additional structural constraints.
	Rather than relying on weight-based methods as done in previous work \plainrefs{TST:2006,Tanasa:2005}, we develop the classification using the formalism introduced earlier combined with key results from the literature. This approach allows us to reach our goal.

	\subsection{Basic tools}
	
	We start our new endeavour by recalling key and relevant concepts relating to Lie algebras.
	First, we want to introduce the concept of Lie-algebra homomorphism, which formalizes the notion of structure preserving maps between Lie algebras. This concept is fundamental to our analysis, as it allows us to rigorously define when an abstract Lie algebra can be faithfully represented in the real skew-hermitian Weyl algebra $\hat{A}_1$.
	
	\begin{definition}[Lie-algebra homorphism]\label{def:Lie:algebra:homomorphism}
		Let $\g$ and $\gh$ be two real Lie algebras equipped with Lie brackets $[\cdot,\cdot]_\g$ and $[\cdot,\cdot]_{\gh}$ respectively. A \emph{Lie-algebra homomorphism} $\phi:\g\to\gh$ is a linear map that satisfies
		\begin{align}\label{eqn:def:Lie:algebra:isomorphism}
			\phi([x,y]_\g)=[\phi(x),\phi(y)]_{\gh}\quad\text{for all }x,y\in\g.
		\end{align}
		If  $\g=\gh$, $\phi$ is a called \emph{Lie-algebra endomorphism}, written $\phi\in\operatorname{End}(\g)$. If $\phi$ is invertible (or bijective) it is called a \emph{Lie-algebra isomorphism}. An endomorphic Lie-algebra isomorphism is denoted \emph{Lie-algebra automorphism} \plainrefs{Knapp:1996}. If there exists a Lie-algebra isomorphism $\phi$ between $\g$ and $\gh$, we write $\g\cong\gh$. 
	\end{definition}
	
	The notion of a realization refines the idea of a Lie-algebra homomorphism by specifying the target space. In our case, the target space is the skew-hermitian Weyl algebra $\hat{A}_1$. Realizations are central to our classification program, as they determine whether a given abstract Lie algebra can be concretely represented by a set of particular physical operators. Here, faithful realizations are of interest as they guarantee a one-to-one correspondence between the abstract generators and physical operators.
	\begin{definition}[Realization]
		Let $\g$ be a finite-dimensional real Lie algebra. A map $\Phi:\g\to\Phi(\g)\subseteq\hat{A}_1$ is called a \emph{realization} if $\Phi$ is a Lie-algebra homomorphism. A realization $\Phi$ is called \emph{faithful} if $\Phi$ is a Lie-algebra isomorphism~\plainrefs{Popovych:2003}.
	\end{definition}
	
	The next two definitions are essential to the concept of the semidirect sum, which is, by the Levi-Mal'tsev theorem \plainrefs{Kuzmin:1977}, a fundamental tool for understanding the structure of any finite-dimensional Lie algebra. We first introduce the derivation of a Lie algebra, which is the generalization of the standard derivation operator. That is, a derivation is a linear operator that satisfies the Leibniz identity. In our setting, it is important, as it ensures compatibility of Lie-algebra homomorphisms with the Lie bracket and is therefore crucial for characterizing how elements of one algebra act on another. We then introduce  representations, which generalize the action of a Lie algebra on other algebraic structures and are therefore essential for studying composite algebras, where one algebra acts on another. Here, in this work, they allow us to describe how subalgebras of $\hat{A}_1$ act reciprocally on each other.
	
	\begin{definition}[Derivation]\label{def:derivation}
		Let $\g$ be a Lie algebra. The map $\pi:\g\to\g$ is called a \emph{derivation} if and only if it is a linear map, satisfying the Leibniz rule $\pi([x,y])=[\pi(x),y]+[x,\pi(y)]$ for all $x,y\in\g$ \plainrefs{Knapp:1996}. 
	\end{definition}

	\begin{definition}[Representation]\label{def:representation}
		Let $\g$ and $\gh$ be two Lie algebras and $\pi:\g\to\operatorname{End}(\gh)$ a map that satisfies the conditions: (i) $\pi$ is linear, (ii) $\pi([x,y])=[\pi(x),\pi(y)]$ for all $x,y\in\g$, and (iii) $\pi(x)$ is a derivation on $\gh$ for all $x\in\g$. 
		Then $\pi$ is called a \emph{representation of $\g$ on $\gh$} \plainrefs{Knapp:1996}.
	\end{definition}

	\begin{definition}[Semidirect sum]\label{def:semidirect:product}
		The (outer) direct sum $\g\oplus\gh$ with the Lie bracket $[\cdot,\cdot]_{\pi}$ is called the semidirect sum $\g\ltimes_\pi\gh$ if and only if $\pi$ is a representation of $\g$ on $\gh$ and the Lie bracket $[\cdot,\cdot]_{\pi}$ is defined as
		\begin{align*}
			[(x_1,y_1),(x_2,y_2)]:=([x_1,x_2],[y_1,y_2]+\pi(x_1)y_2-\pi(x_2)y_1)
		\end{align*}
		for all $x_1,x_2\in\g$ and $y_1,y_2\in\gh$ \plainrefs{Barati:2020}. 
	\end{definition}
	
	It is easy to verify the following: (a) $\g\ltimes_\pi\gh$ is a Lie algebra, (b) the restriction of the Lie bracket $[\cdot,\cdot]_\pi$  to the algebras $\g$ and $\gh$ agrees with the Lie bracket on the original Lie algebras, and (iii) $[(x,0),(0,y)]=(0,\pi(x)y)$ for every $x\in\g$ and $y\in\gh$. We'll omit the representation indicator when $\pi(x)=[x,\cdot]$ and $\pi(x)(\gh)\neq\{0\}$, writing $\g\ltimes \gh$ instead of $\g\ltimes_\pi\gh$. We'll furthermore write $\g\oplus\gh$ instead of $\g\ltimes_\pi\gh$ if $\pi$ is a trivial derivation, i.e.,satisfies $\pi(x)y=0$ for all $x\in \g$ and $y\in\gh$. Note that, in the finite-dimensional setting, the terms \emph{semidirect sum} and \emph{semidirect product} are often used interchangeably in the literature \plainrefs{Knapp:1996,Woit:2017:Semidirect}.
	
	\begin{tcolorbox}[breakable, colback=Cerulean!3!white,colframe=Cerulean!85!black,title=Example: Semidirect sums]
		We want to clarify the Definition~\ref{def:semidirect:product} by presenting a few examples, and we want to justify the introduction of the convention to write $\ltimes$ instead of $\ltimes_{[\,\cdot\,,\,\cdot\,]}$, as done in the previous sections.
		\begin{example}\label{exa:semidirect:product:direct:sum:derivation:trivial}
			We here validate the notation $\g\oplus\gh$ for the semidirect sum $\g\ltimes_\pi\gh$ if $\pi(x)y=[x,y]=0$ for every $x\in\g$ and $y\in\gh$: Let $\g$ and $\gh$ be two finite-dimensional Lie algebras with bases $\mathcal{B}_\g=\{x_j\}_{j\in\mathcal{J}}$ and $\mathcal{B}_{\gh}=\{y_k\}_{k\in\mathcal{K}}$ respectively. Then, one can introduce the map $\Phi:\g\ltimes_\pi\gh\to\g\oplus\gh$, as a linear map defined by the identifications $(x_j,0)\leftrightarrow x_j$ and $(0,y_k)\leftrightarrow y_k$. It is clear that
			\begin{align*}
				\Phi([(x_j,y_k),(x_{j'},y_{k'})])&=\Phi([x_j,x_{j'}],[y_k,y_{k'}])=\Phi\left(\left(\sum_{\ell\in\mathcal{J}}f_{jj'\ell}x_\ell,\sum_{\ell'\in\mathcal{K}}\hat{f}_{kk'\ell'}y_{\ell'}\right)\right)=\sum_{\ell\in\mathcal{J}}f_{jj'\ell} x_\ell+\sum_{\ell'\in\mathcal{K}}\hat{f}_{kk'\ell'}y_{\ell'}\\
				&=[x_j,x_{j'}]+[y_k,y_{k'}]=[x_j,x_{j'}]+[x_j,y_{k'}]+[y_k,x_{j'}]+[y_k,y_{k'}]\\
				&=[x_j+y_k,x_{j'}+y_{k'}]=[\Phi(x_j,y_k),\Phi(x_{j'},y_{k'})],
			\end{align*}
			where $f_{jk\ell}$ and $\hat{f}_{jk\ell}$ are the structure constants of the algebras $\g$ and $\gh$ respectively. Thus, $\Phi$ is a Lie-algebra endomorphism. The one-to-one identification furthermore guarantees the invertibility of $\Phi$, making $\Phi$ a Lie-algebra isomorphism. Thus $\g\ltimes_\pi\gh\cong \g\oplus\gh$ if $\pi=[\,\cdot\,,\,\cdot\,]$ and $C_\g(\gh)=\g$.
		\end{example}
		
		We also want to verify that the Lie bracket $[\cdot,\cdot]$ constitutes a valid map satisfying the three conditions listed in Definition~\ref{def:semidirect:product}:
		\begin{example}
			Let $\g$ and $\gh$ be two finite-dimensional subalgebras of the skew-hermitian Weyl algebra $\hat{A}_1$. Then, one has by definition that the Lie bracket $[\cdot,\cdot]$ is bilinear. One has, furthermore, by the Jacobi identity
			\begin{align*}
				[[x,y],\cdot]=-[[y,\cdot],x]-[[\cdot,x],y]=[x,[y,\cdot]]-[y,[x,\cdot]]\quad\text{for all }x,y\in\g.
			\end{align*}
			Thus $[\cdot,\cdot]$ satisfies condition (ii) of Definition~\ref{def:representation}. This leaves computing
			\begin{align*}
				[x,[y,z]]=-[y,[z,x]]-[z,[x,y]]=[[x,y],z]+[y[x,z]]\quad\text{for all }x\in\g\text{ and }y,z\in\gh.
			\end{align*}
			Here, we used again the Jacobi identity. Thus $[x,\cdot]$ is a derivation on $\gh$ for every $x\in\g$ verifying condition (iii) of Definition~\ref{def:representation}. Hence, it is valid to write $\g\ltimes_{[\cdot,\cdot]}\gh=\g\ltimes\gh$.
		\end{example}
	\end{tcolorbox}

	\subsection{The abelian centralizer condition}
	
	The goal of this section is to prove the following statement: \emph{The centralizer of any non-scalar element of the skew-hermitian Weyl algebra $\hat{A}_1$ is an abelian subalgebra of $\hat{A}_1$}. We refer to this property as the \emph{abelian centralizer condition}. This condition is also known as the \emph{commutative centralizer condition} \plainrefs{TST:2006}. Alternatively, a Lie algebra satisfying this property is called \emph{commutative transitive} \plainrefs{Gorbatsevich:2017}.
	
	\begin{proposition}
		Let $\g$ be a Lie algebra and $x\in\g$. Then the centralizer $C_\g(x)$ is a subalgebra of $\g$.
	\end{proposition}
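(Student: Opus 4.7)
The plan is to verify the two defining properties of a Lie subalgebra: that $C_\g(x)$ is a linear subspace of $\g$, and that it is closed under the Lie bracket. Both properties follow from elementary manipulations, with the latter relying on the Jacobi identity.

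First, I would establish the vector-space structure. Given $y_1, y_2 \in C_\g(x)$ and scalars $\lambda_1, \lambda_2 \in \R$, bilinearity of the Lie bracket yields
\begin{align*}
[x, \lambda_1 y_1 + \lambda_2 y_2] = \lambda_1 [x, y_1] + \lambda_2 [x, y_2] = 0,
\end{align*}
so $\lambda_1 y_1 + \lambda_2 y_2 \in C_\g(x)$. The zero element lies trivially in $C_\g(x)$ since $[x,0]=0$, so $C_\g(x)$ is a nonempty linear subspace.

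Next, I would prove closure under the Lie bracket. Taking any $y_1, y_2 \in C_\g(x)$, I would invoke the Jacobi identity in the form
\begin{align*}
[x, [y_1, y_2]] + [y_1, [y_2, x]] + [y_2, [x, y_1]] = 0.
\end{align*}
Since $[x, y_1] = 0$ and $[y_2, x] = -[x, y_2] = 0$, the last two terms vanish, leaving $[x, [y_1, y_2]] = 0$. Hence $[y_1, y_2] \in C_\g(x)$, and combined with the vector-space step we conclude that $C_\g(x)$ is a subalgebra of $\g$.

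There is no genuine obstacle here, as the argument is entirely routine; the only subtle point is the need to use the Jacobi identity rather than to argue naively that $[x,\cdot]$ respects brackets. I note also that the same proof applies verbatim with $x$ replaced by an arbitrary subset $T \subseteq \g$, by requiring the vanishing condition $[t,y]=0$ for every $t \in T$ individually, which shows that the more general centralizer $C_\g(T)$ is likewise a Lie subalgebra.
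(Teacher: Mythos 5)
Your proof is correct and follows exactly the route the paper intends: the paper's own proof is the one-line remark that the claim follows from the bilinearity of the Lie bracket and the Jacobi identity, which is precisely what you have spelled out in detail. Your closing observation that the argument extends verbatim to $C_\g(T)$ for an arbitrary subset $T\subseteq\g$ is also consistent with how the paper later uses centralizers of sets.
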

	
	\begin{proof}
		This follows directly from the bilinearity of the Lie bracket and the Jacobi identity.
	\end{proof}
	
	\begin{proposition}\label{prop:center:of:A1}
		The center $\mathcal{Z}(\hat{A}_1)$ of the skew-hermitian Weyl algebra $\hat{A}_1$ is isomorphic to the real numbers $\R$, i.e., it only consists of scalar multiples of the identity, referred to as scalar elements.
	\end{proposition}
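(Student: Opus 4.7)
The plan is to use two successive commutation constraints together with Proposition~\ref{prop:thm:25:equivalent} to pin down any central element as a scalar multiple of $i$. Let $p \in \mathcal{Z}(\hat{A}_1)$. First, I would observe that $p$ commutes in particular with $ia^\dagg a$, so Proposition~\ref{prop:thm:25:equivalent}(a) immediately forces $p \in \hat{A}_1^0 \oplus \hat{A}_1^=$. Expanding in the basis of Definition~\ref{def:vector:spaces:a:1:k} gives
\[
p \;=\; c_0\, i \;+\; \sum_{k=1}^{K} c_k\, g_+^{k\tau}
\]
for some $K \in \N_{\geq0}$ and real coefficients $c_k$, with the sum understood to be empty when $K = 0$.

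Second, I would test $p$ against $g_+^{\iota_1} = i(a + a^\dagg) \in \hat{A}_1^1$. The scalar part $c_0\, i$ drops out of the commutator, and Proposition~\ref{prop:thm:25:equivalent}(c), applied with $\tilde{\gamma} = k\tau$ and $\gamma = \iota_1$ (so $\chimap{\iota_1} = 1$), yields
\[
[g_+^{k\tau},\, g_+^{\iota_1}] \;\upto{2k-1}\; 2k\, g_-^{(k,\,k-1)},
\]
whose leading monomial is nonzero (since $(k, k-1) \notin \N_=^2$) and has degree $2k - 1$. Because these leading monomials occupy strictly different degrees for different $k \geq 1$, no cancellation between them is possible; if the largest $k$ with $c_k \neq 0$ is some $K \geq 1$, then $[p,\, g_+^{\iota_1}]$ must have degree exactly $2K - 1 \geq 1$, contradicting $p \in \mathcal{Z}(\hat{A}_1)$. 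Hence $c_k = 0$ for all $k \geq 1$, so $p = c_0\, i$.

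Finally, since $i$ manifestly commutes with every polynomial in $\hat{A}_1$, the reverse inclusion $\spn\{i\} \subseteq \mathcal{Z}(\hat{A}_1)$ is trivial, and therefore $\mathcal{Z}(\hat{A}_1) = \spn\{i\}$. This is a one-dimensional abelian real Lie algebra, so it is isomorphic to $\R$ as a Lie algebra via the identification $c\, i \leftrightarrow c$.

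I do not anticipate a serious obstacle: the only point requiring care is that the leading-order monomials $g_-^{(k, k-1)}$ at degree $2k - 1$ cannot cancel across different values of $k$, which is immediate from the degree formula and the linear independence of basis monomials of distinct degrees. All the heavy lifting has already been done by the commutator machinery of Proposition~\ref{prop:thm:25:equivalent}, so the proof amounts to applying parts (a) and (c) of that proposition in sequence.
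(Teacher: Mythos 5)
Your proof is correct and follows essentially the same route as the paper's: first commute with $ia^\dagg a$ and invoke Proposition~\ref{prop:thm:25:equivalent}(a) to force the central element into $\hat{A}_1^0\oplus\hat{A}_1^=$, then commute with a degree-one element of $\hat{A}_1^1$ to eliminate everything but the scalar part. The only cosmetic differences are that the paper tests against $a-a^\dagg$ where you use $i(a+a^\dagg)$, and that you spell out the degree-separation argument (leading monomials $g_-^{(k,k-1)}$ living in distinct degrees $2k-1$) which the paper leaves implicit.
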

	
	\begin{proof}
		Let $x\in \mathcal{Z}(\hat{A}_1)$. Then, we can write as a unique decomoposition $x=\sum_{p\in\mathcal{P}}c_pg_{\sigma_p}^{\gamma_p}$, where $c_p\neq0$ for all $p\in\mathcal{P}$. Since $x\in\mathcal{Z}(\hat{A}_1)$,  it must commute with all elements in $\hat{A}_1$. In particular, it must commute with the two elements $ia^\dagg a\in\hat{A}_1$ and $a-a^\dagg\in\hat{A}_1$. Thus, one must have $[ia^\dagg a,x]=0$. Proposition~\ref{prop:thm:25:equivalent} implies then that $x=\sum_{p\in\mathcal{P}}c_pg^{p\tau}\in\hat{A}_1^0\oplus\hat{A}_1^=$. One must furthermore have $[x,a-a^\dagg]=0$. Here, again Proposition~\ref{prop:thm:25:equivalent} implies that $x=2i c$ for some $c\in\R$, i.e.,$x$ is a scalar multiple of the identity. The trivial observation $\langle \{2ic\mid\,c\in\R\}\rangle_{\mathrm{Lie}}=i\R\cong\R$ concludes the proof. 
	\end{proof}

	\begin{theorem}[Centralizer condition]\label{thm:commutative:centralizer:condition}
		Let $x\in\hat{A}_1\setminus\mathcal{Z}(\hat{A}_1)$. Then, the centralizer $C_{\hat{A}_1}(x)$ is an abelian subalgebra of~$\hat{A}_1$.
	\end{theorem}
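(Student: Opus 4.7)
The plan is to reduce the statement to the classical theorem of Dixmier~\plainrefs{Dixmier:1968}, which asserts that in the complex Weyl algebra $A_1$ the centralizer $C_{A_1}(y)$ of any non-scalar element $y\in A_1$ is a commutative subalgebra. Using this as a black box, the skew-hermitian version will follow by a short containment argument, the only real work being to certify that the non-centrality hypothesis in $\hat{A}_1$ translates correctly into a non-scalar hypothesis in $A_1$.

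First I would record the inclusion $\hat{A}_1\subseteq A_1$ together with the fact that the Lie bracket on $\hat{A}_1$ is the restriction of the commutator on $A_1$. This yields the identity $C_{\hat{A}_1}(x)=\hat{A}_1\cap C_{A_1}(x)$ for every $x\in\hat{A}_1$, which is the bridge between the two centralizers. Since centralizers are stable under the bracket by the Jacobi identity, $C_{\hat{A}_1}(x)$ is automatically a Lie subalgebra of $\hat{A}_1$, so the only substantive content of the statement is abelianness.

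Second I would verify that an element $x\in\hat{A}_1\setminus\mathcal{Z}(\hat{A}_1)$ is necessarily non-scalar when viewed inside $A_1$. Indeed, if $x=c\cdot\mathbf{1}$ for some $c\in\C$, the skew-hermiticity $x^\dagg=-x$ forces $\bar c=-c$, whence $c\in i\R$. By Proposition~\ref{prop:center:of:A1} we have $i\R\cong\mathcal{Z}(\hat{A}_1)$, which contradicts $x\notin\mathcal{Z}(\hat{A}_1)$. Hence $x$ is non-scalar in $A_1$, and Dixmier's theorem applies to $C_{A_1}(x)$. The conclusion is then immediate: for any $y_1,y_2\in C_{\hat{A}_1}(x)\subseteq C_{A_1}(x)$ commutativity of the larger centralizer gives $[y_1,y_2]=0$, so $C_{\hat{A}_1}(x)$ is abelian.

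The main obstacle, and the only non-trivial ingredient, is Dixmier's theorem itself. If one prefers a fully self-contained argument avoiding it, the hard part would be a direct leading-monomial analysis: given two elements of $\hat{A}_1$ both commuting with the same non-scalar $x$, one would compare their leading multi-indices in the ordering from Section~\ref{section:background:formalism} using Lemma~\ref{lem:commutator:form} and Lemma~\ref{important:monomial:commutator:lemma} to show that these leading parts must stand in a specific proportionality relation to those of $x$, forcing mutual commutation up to strictly lower-degree corrections and then propagating the conclusion by induction on the degree. This is essentially a real, skew-hermitian incarnation of the Burchnall--Chaundy rank-one phenomenon for commuting differential operators, and it is precisely what Dixmier's theorem packages away.
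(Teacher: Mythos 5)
Your proposal is correct and follows essentially the same route as the paper: both reduce to the Dixmier/Amitsur commutative-centralizer theorem for the Weyl algebra, after checking that non-centrality in $\hat{A}_1$ means non-scalarity (via Proposition~\ref{prop:center:of:A1}). The only cosmetic difference is that you use the direct containment $C_{\hat{A}_1}(x)\subseteq C_{A_1}(x)$ whereas the paper passes through the complexification $C_{\hat{A}_1}(x)+iC_{\hat{A}_1}(x)$; your version is marginally more direct but the substance is identical.
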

	
	\begin{proof}
		It is known that for any non-scalar element $x$ in the real single-mode Weyl algebra, denoted $A_{1,\R}$, or in the complex Weyl algebra $A_1$, the centralizer $C_{A_{1,\R}}(x)$, or respectively $C_{A_1}(x)$ forms an abelian Lie algebra \plainrefs{Dixmier:1968,Amitsur1958CommutativeLD}. Note that the real single-mode Weyl $A_{1,\R}$ algebra is defined analogously to the complex Weyl algebra $A_1$: both are the infinite-dimensional associative Lie algebras generated by the creation and annihilation operators $a^\dagg$ and $a$, which satisfy the canonical commutation relation $[a,a^\dagg]=1$, with the distinction that $A_{1,\R}$ is defined over the real numbers rather than the complex numbers. We elaborate on the proof below.
		
		Let now $x\in \hat{A}_1\setminus\mathcal{Z}(\hat{A}_1)$. By Proposition~\ref{prop:center:of:A1}, $x$ is not a scalar. Consider now the complexification of the centralizer $(C_{\hat{A}_1}(x))_\C=C_{\hat{A}_1}(x)+i C_{\hat{A}_1}(x)$. We want to show now that $(C_{\hat{A}_1}(x))_\C\subseteq A_1$ holds. We can write any element $z$ in the complexification of $C_{\hat{A}_1}(x)$ as $z=z_1+iz_2$, where $z_1,z_2\in C_{\hat{A}_1}(x)$. Then: $[x,z]=[x,z_1]+i[x,z_2]=0$, since both $z_1$ and $z_2$ commute with $x$. Thus, the complexification of the centralizer of any $x\in\hat{A}_1\setminus\mathcal{Z}(\hat{A}_1)$ is a subset of the centralizer of the same $x$ over the complex Weyl algebra $A_1$. Now we proceed by contradiction: Suppose that $C_{\hat{A}_1}(x)$ is not abelian. Then there exist two elements $y,z\in C_{\hat{A}_1}(x)$ such that $[x,y]\neq0$. One must now obviously have $x,y\in (C_{\hat{A}_1}(x))_\C\subseteq C_{A_1}(x)$. We know, however, that this centralizer $C_{A_1}(x)$ is an abelian algebra. This is a contradiction, showing that $C_{\hat{A}_1}(x)$ must already be abelian. Since any abelian set is closed under the Lie bracket, $C_{\hat{A}_1}(x)$ forms an abelian Lie algebra.
	\end{proof}   
	
	\begin{tcolorbox}[breakable, colback=Cerulean!3!white,colframe=Cerulean!85!black,title=Example: abelian centralizer condition in the two-mode case]
		We want to emphasize the importance of Theorem~\ref{thm:commutative:centralizer:condition} by highlighting a qualitative difference between the case for the single-mode skew-hermitian Weyl algebra $\hat{A}_1$ and the case for the multi-mode skew-hermitian Weyl algebras $\hat{A}_n$. We do so by demonstrating that the abelian centralizer condition does only hold in the single-mode case.
		\begin{example}\label{exa:abelian:centralizer:condition:two:modes}
			Let $\hat{A}_2$ denote the two-mode skew-hermitian Weyl algebra defined as the real skew-hermitian subalgebra of the complex universal enveloping algebra \plainrefs{Hall:Lie:groups:16}, generated by the elements $a_1,a_2,a_1^\dagg,a_2^\dagg,1$ with the canonical commutation relation $[a_j,a_k^\dagg]=\delta_{jk}$ while all other commutators vanish. We want to consider the centralizer $C_{\hat{A}_2}(x)$ of the element $x:=i(a_1^\dagg a_1+a_2^\dagg a_2)\in\hat{A}_2$. It is straightforward to confirm that two elements $y:=i(a_1a_2^\dagg+a_1^\dagg a_2)\in\hat{A}_2$ and $z:=a_1a_2^\dagg-a_1^\dagg a_2\in\hat{A_2}$ lie in the centralizer $ C_{\hat{A}_2}(x)$. However, their commutator is $[y,z]=2i(a_1^\dagg a_1-a_2^\dagg a_2)\neq0$. Furthermore, since $\hat{A}_2$ is trivially a subalgebra of $\hat{A}_n$, this finally implies that Theorem~\ref{thm:commutative:centralizer:condition} cannot be extended beyond the single-mode case.
		\end{example}
	\end{tcolorbox}

	\subsection{Finite-dimensional nilpotent subalgebras of the skew-hermitian Weyl algebra}
	
	We now demonstrate the significant implications of the abelian centralizer condition by proving a structural theorem for nilpotent subalgebras of $\hat{A}_1$. The ideas underlying the proofs of Propositions~\ref{prop:center:of:non:abelian:subalgebras}, ~\ref{prop:intersection:of:two:centralizer} and~\ref{prop:index:of:non:abelian:nilpotent:algebra} below are based on work in the literature \plainrefs{Dixmier:1968,Tanasa:2005}, and they have been adopted to the context of this study.
	
	\begin{theorem}\label{thm:classification:of:all:nilpotent:subalgberas}
		Let $\gn$ be an $n$-dimensional nilpotent subalgbera of $\hat{A}_1$. Then $\gn$ is either abelian or admits a basis $\{y_0,\ldots,y_{n-2},x\}$, such that $[y_j,y_k]=0$ for every $j,k\in\{0,1,\ldots,n-2\}$, and $[x,y_{j}]=y_{j-1}$ for all $j\in\{1,\ldots,n-2\}$.
	\end{theorem}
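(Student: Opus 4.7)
The plan is to produce an abelian ideal of codimension $1$ in $\gn$ on which a complementary generator $x$ acts as a single nilpotent Jordan block; the chain basis claimed by the theorem is then delivered by the Jordan decomposition of $\operatorname{ad}(x)$ on this ideal. Theorem~\ref{thm:commutative:centralizer:condition} is the main tool, used repeatedly via the sharper consequence that \emph{two commuting non-scalar elements of $\hat{A}_1$ have the same centralizer} (each lies in the abelian centralizer of the other, so the two centralizers contain one another).

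First I would pin down $\mathcal{Z}(\gn)$. If $z\in\mathcal{Z}(\gn)$ were non-scalar, Theorem~\ref{thm:commutative:centralizer:condition} would render $C_{\hat{A}_1}(z)$ abelian, and the inclusion $\gn\subseteq C_{\hat{A}_1}(z)$ would force $\gn$ abelian, contradicting the assumption. Hence $\mathcal{Z}(\gn)\subseteq\mathcal{Z}(\hat{A}_1)=\R\,i$; and since every non-trivial nilpotent Lie algebra has non-trivial center, one has $\mathcal{Z}(\gn)=\R\,i$, so in particular $i\in\gn$. Because $\gn/\mathcal{Z}(\gn)$ is again non-trivial nilpotent with non-trivial center, $\mathcal{Z}_2(\gn)\supsetneq\mathcal{Z}(\gn)$, so one may pick a non-scalar $y\in\mathcal{Z}_2(\gn)\setminus\mathcal{Z}(\gn)$.

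Set $\mathfrak{a}:=C_\gn(y)$. By Theorem~\ref{thm:commutative:centralizer:condition}, $\mathfrak{a}$ is abelian. Because $y\in\mathcal{Z}_2(\gn)$ implies $[y,\gn]\subseteq\mathcal{Z}(\gn)=\R\,i$, the endomorphism $\operatorname{ad}(y)\colon\gn\to\gn$ satisfies $\operatorname{ad}(y)^2=0$ and has one-dimensional image (it is nonzero because $y\notin\mathcal{Z}(\gn)$), so $\dim(\gn/\mathfrak{a})=\dim[y,\gn]=1$; moreover, the Jacobi identity yields $[[x',a],y]=-[[a,y],x']-[[y,x'],a]=0$ for every $x'\in\gn$ and $a\in\mathfrak{a}$ (the first term vanishes because $a\in\mathfrak{a}$, the second because $[y,x']\in\R\,i$ is central), showing that $\mathfrak{a}$ is an ideal. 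Picking any $x\in\gn\setminus\mathfrak{a}$ gives $\gn=\R\,x\oplus\mathfrak{a}$, and by Engel's theorem $\operatorname{ad}(x)|_\mathfrak{a}$ is a nilpotent endomorphism of the $(n-1)$-dimensional space $\mathfrak{a}$.

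The crux is then $\ker(\operatorname{ad}(x)|_\mathfrak{a})=\R\,i$, which forces $\operatorname{ad}(x)|_\mathfrak{a}$ to be a single Jordan block of size $n-1$. The inclusion $\R\,i\subseteq\mathfrak{a}\cap C_\gn(x)$ is immediate from centrality of $i$. Conversely, if $z\in\mathfrak{a}\cap C_\gn(x)$ were non-scalar, then $z$ would commute with both $y$ and $x$, and the equal-centralizer consequence of Theorem~\ref{thm:commutative:centralizer:condition} would give $C_{\hat{A}_1}(x)=C_{\hat{A}_1}(z)=C_{\hat{A}_1}(y)$, whence $[x,y]=0$ and $x\in\mathfrak{a}$, a contradiction. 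Fixing $y_{n-2}\in\mathfrak{a}$ with $\operatorname{ad}(x)^{n-2}(y_{n-2})\neq 0$ and setting $y_j:=\operatorname{ad}(x)^{n-2-j}(y_{n-2})$ for $j=0,\ldots,n-2$ yields a basis of $\mathfrak{a}$ with $[x,y_j]=y_{j-1}$ for $j\geq 1$ and $[x,y_0]=0$; the $y_j$ commute pairwise since $\mathfrak{a}$ is abelian, so $\{y_0,\ldots,y_{n-2},x\}$ is the desired basis of $\gn$. The main obstacle is exactly this single-block step: in a generic nilpotent Lie algebra $\operatorname{ad}(x)$ could have several Jordan blocks on a codimension-$1$ abelian ideal, and what collapses them to one here is the abelian-centralizer mechanism applied to elements of $\mathfrak{a}\cap C_\gn(x)$.
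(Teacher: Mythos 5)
Your argument is correct, and it reaches the basis by a genuinely different route than the paper. The paper (via Proposition~\ref{prop:index:of:non:abelian:nilpotent:algebra}) works with the \emph{lower} central series: it first proves $\dim(\gn_j)=n-j-1$ by showing that each map $\Phi_j:\gn_j\to\gn_{j+1}$, $e_1\mapsto[e,e_1]$ has kernel exactly $i\R$ (using Corollary~\ref{cor:intersection:of:two:centralizer}), and then builds the chain inductively from the bottom, starting at $y_0=i\in\gn_{n-2}$ and lifting one step of the lower central series at a time. You instead start from the \emph{upper} central series, take $y\in\mathcal{Z}_2(\gn)\setminus\mathcal{Z}(\gn)$, and use $\mathfrak{a}=C_\gn(y)$ as a codimension-one abelian ideal; the whole combinatorial induction is then replaced by the linear-algebra fact that a nilpotent endomorphism with one-dimensional kernel is a single Jordan block, the one-dimensionality of $\ker(\operatorname{ad}(x)|_{\mathfrak{a}})$ being exactly where the abelian-centralizer mechanism (Theorem~\ref{thm:commutative:centralizer:condition}, in the equal-centralizer form also used in Proposition~\ref{prop:intersection:of:two:centralizer}) enters. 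Both proofs ultimately rest on the same two inputs --- $\mathcal{Z}(\gn)=i\R$ and the fact that two commuting non-scalar elements share a centralizer --- but your version localizes the argument in a single kernel computation and delegates the chain structure to the Jordan form, which is shorter and arguably more transparent; the paper's version has the side benefit of explicitly computing the dimensions of every term of the lower central series along the way. One cosmetic remark: the nilpotency of $\operatorname{ad}(x)|_{\mathfrak{a}}$ follows directly from the definition of a nilpotent Lie algebra (the lower central series terminates), so the appeal to Engel's theorem is not needed.
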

	
	To prove this theorem one needs to make several important observations. Thus, to avoid an lengthy and non-illuminating proof, we first want to present some essential propositions that significantly reduce the complexity of the proof for Theorem~\ref{thm:classification:of:all:nilpotent:subalgberas}.
	
	\begin{proposition}\label{prop:center:of:non:abelian:subalgebras}
		Let $\g$ be a non-abelian subalgebra of $\hat{A}_1$. Then its center satisfies $\mathcal{Z}(\g)\subseteq\mathcal{Z}(\hat{A}_1)$, and one either has $\mathcal{Z}(\g)=i\R$ or $\mathcal{Z}(\g)=\{0\}$.
	\end{proposition}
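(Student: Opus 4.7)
The plan is to combine two already-established facts: Proposition~\ref{prop:center:of:A1}, which identifies $\mathcal{Z}(\hat{A}_1)$ with $i\R$, and the abelian centralizer condition, Theorem~\ref{thm:commutative:centralizer:condition}, which guarantees that the centralizer in $\hat{A}_1$ of any non-scalar element is abelian. With these tools the result follows by a short argument with essentially no computation.

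First, I would establish the inclusion $\mathcal{Z}(\g)\subseteq\mathcal{Z}(\hat{A}_1)$ by contradiction. Suppose there exists $z\in\mathcal{Z}(\g)$ with $z\notin\mathcal{Z}(\hat{A}_1)$. By Proposition~\ref{prop:center:of:A1}, $z$ is not a scalar multiple of the identity, so Theorem~\ref{thm:commutative:centralizer:condition} applies and $C_{\hat{A}_1}(z)$ is an abelian subalgebra of $\hat{A}_1$. Because $z\in\mathcal{Z}(\g)$, every element of $\g$ commutes with $z$, so $\g\subseteq C_{\hat{A}_1}(z)$. This forces $\g$ to be abelian, contradicting the hypothesis. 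Hence no such $z$ can exist and $\mathcal{Z}(\g)\subseteq\mathcal{Z}(\hat{A}_1)=i\R$.

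For the dichotomy, I would observe that scalar multiples of the identity commute with every element of $\hat{A}_1$ and therefore with every element of $\g$, so any scalar that happens to lie in $\g$ automatically lies in $\mathcal{Z}(\g)$. Combined with the inclusion just proved, this gives $\mathcal{Z}(\g)=\g\cap i\R$. Since $i\R$ is one-dimensional over $\R$, its intersection with the linear subspace $\g$ of $\hat{A}_1$ is either $\{0\}$ or all of $i\R$, yielding exactly the two alternatives $\mathcal{Z}(\g)=\{0\}$ and $\mathcal{Z}(\g)=i\R$. There is no real obstacle here; the only point to be careful about is not to assume implicitly that $i\in\g$, which is precisely what distinguishes the two cases in the conclusion.
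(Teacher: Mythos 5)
Your proposal is correct and follows essentially the same route as the paper: both use Proposition~\ref{prop:center:of:A1} together with the abelian centralizer condition (Theorem~\ref{thm:commutative:centralizer:condition}) to rule out non-scalar central elements via $\g\subseteq C_{\hat{A}_1}(z)$. Your treatment of the dichotomy via $\mathcal{Z}(\g)=\g\cap i\R$ being a subspace of a one-dimensional space is in fact slightly more explicit than the paper's, which simply asserts the two alternatives.
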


	\begin{proof}
		Suppose there exists a non-scalar $z\in \mathcal{Z}(\g)$. Then one has clearly $\g\subseteq C_{\hat{A}_1}(z)$. But, by Theorem~\ref{thm:commutative:centralizer:condition}, $C_{\hat{A}_1}(z)$ is abelian, and so must every subalgebra of it. However, $\g$ is a non-abelian subalgebra of $C_{\hat{A}_1}(z)$, which is a contradiction. Thus, $z$ must have been a scalar and one has either $\mathcal{Z}(\g)=i\R\cong \R$ or $\mathcal{Z}(\g)=\{0\}$. Hence $\mathcal{Z}(\g)\subseteq\mathcal{Z}(\hat{A}_1)$ if $\g$ is a non-abelian subalgebra of $\hat{A}_1$.
	\end{proof}

	\begin{proposition}\label{prop:intersection:of:two:centralizer}
		Let $e_1,e_2\in\hat{A}_1\setminus\mathcal{Z}(\hat{A}_1)$. Then, one has $C_{\hat{A}_1}(\{e_1,e_2\})=C_{\hat{A}_1}(e_1)$ if $[e_1,e_2]=0$ and $C_{\hat{A}_1}(\{e_1,e_2\})=\mathcal{Z}(\hat{A}_1)$ otherwise.
	\end{proposition}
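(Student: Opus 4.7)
The proof is a direct application of the abelian centralizer condition (Theorem~\ref{thm:commutative:centralizer:condition}), and the two cases can be handled independently. The plan is to treat each case in turn, exploiting the fact that the centralizer of any non-scalar element of $\hat{A}_1$ is abelian.

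For the first case, assume $[e_1,e_2]=0$. I would first note the trivial inclusion $C_{\hat{A}_1}(\{e_1,e_2\})\subseteq C_{\hat{A}_1}(e_1)$ which holds simply because a stronger condition (commuting with both $e_1$ and $e_2$) is imposed on the left-hand side. For the reverse inclusion, I would observe that $e_2\in C_{\hat{A}_1}(e_1)$ by hypothesis, and that Theorem~\ref{thm:commutative:centralizer:condition} guarantees $C_{\hat{A}_1}(e_1)$ to be abelian, since $e_1\notin\mathcal{Z}(\hat{A}_1)$. Hence every element of $C_{\hat{A}_1}(e_1)$ commutes with $e_2$, which yields $C_{\hat{A}_1}(e_1)\subseteq C_{\hat{A}_1}(e_2)$ and therefore $C_{\hat{A}_1}(e_1)\subseteq C_{\hat{A}_1}(\{e_1,e_2\})$.

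For the second case, assume $[e_1,e_2]\neq 0$. The inclusion $\mathcal{Z}(\hat{A}_1)\subseteq C_{\hat{A}_1}(\{e_1,e_2\})$ is immediate since scalars commute with every element of $\hat{A}_1$. For the reverse inclusion, I would argue by contradiction: take $x\in C_{\hat{A}_1}(\{e_1,e_2\})$ and suppose $x\notin \mathcal{Z}(\hat{A}_1)$. Then Theorem~\ref{thm:commutative:centralizer:condition} applies to $x$, so $C_{\hat{A}_1}(x)$ is an abelian subalgebra of $\hat{A}_1$. Since $[x,e_1]=0=[x,e_2]$, both $e_1$ and $e_2$ lie in $C_{\hat{A}_1}(x)$, and the abelianness of this centralizer forces $[e_1,e_2]=0$, contradicting the standing assumption. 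Hence $x\in\mathcal{Z}(\hat{A}_1)$.

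There is no significant obstacle here — the entire proof is driven by the single structural fact given in Theorem~\ref{thm:commutative:centralizer:condition}, and both cases reduce to routine symmetry/contradiction arguments once that theorem is invoked. The only subtle point is making sure that when we apply the centralizer condition we do so to a non-scalar element (either $e_1$ in the first case or $x$ in the second case), which is explicitly guaranteed by the hypothesis $e_1,e_2\notin\mathcal{Z}(\hat{A}_1)$ and by the contradiction assumption $x\notin\mathcal{Z}(\hat{A}_1)$, respectively.
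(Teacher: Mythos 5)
Your proposal is correct and uses essentially the same approach as the paper: both cases are driven by the abelian centralizer condition (Theorem~\ref{thm:commutative:centralizer:condition}), with the commuting case handled by noting that one centralizer is abelian and contains the other element, and the non-commuting case by a contradiction applied to a hypothetical non-scalar element of the joint centralizer. Your second case is in fact slightly more streamlined than the paper's (which first establishes $C_{\hat{A}_1}(e_1)=C_{\hat{A}_1}(e_3)=C_{\hat{A}_1}(e_2)$ as an intermediate step), but the underlying idea is identical.
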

	
	\begin{proof}
		Let $e_1,e_2\in\hat{A}_1\setminus\mathcal{Z}(\hat{A}_1)$. Thus, by Proposition~\ref{prop:center:of:A1}, they are not scalar multiples of $i$. If $[e_1,e_2]=0$, one has $e_1\in C_{\hat{A}_1}(e_2)$ and $e_2\in C_{\hat{A}_1}(e_1)$, since $e\in C_{\hat{A}_1}(e)$ holds for every $e\in\hat{A}_1$ by the antisymmetry of the commutator. Therefore $e_1,e_2\in C_{\hat{A}_1}(e_1)\cap C_{\hat{A}_1}(e_2)$. Let now $e_3\in C_{\hat{A}_1}(e_2)$. Then $[e_1,e_3]=0$ by Theorem~\ref{thm:commutative:centralizer:condition}, and consequently $e_3\in C_{\hat{A}_1}(e_1)$. Henceforth, $C_{\hat{A}_1}(\{e_1,e_2\})=C_{\hat{A}_1}(e_1)$.
		
		Consider now the case $[e_1,e_2]\neq 0$. Before proceeding, one needs to realise that for any $x\in\hat{A}_1\setminus\mathcal{Z}(\hat{A}_1)$ and $y\in C_{\hat{A}_1}(x)\setminus\mathcal{Z}(\hat{A}_1)$, one has always $C_{\hat{A}_1}(x)=C_{\hat{A}_1}(y)$. This is a simple consequence of Theorem~\ref{thm:commutative:centralizer:condition}. In fact, for every $z\in C_{\hat{A}_1}(x)$ one has $[y,z]=0$ and consequently $z\in C_{\hat{A}_1}(y)$. Similarly, for every $z\in C_{\hat{A}_1}(y)$ one has $[x,z]=0$ and consequently $C_{\hat{A}_1}(x)= C_{\hat{A}_1}(y)$. Now, suppose $e_3\in C_{\hat{A}_1}(\{e_1,e_2\})\setminus\mathcal{Z}(\hat{A}_1)$, i.e., $e_3\in C_{\hat{A}_1}(x)\setminus\mathcal{Z}(\hat{A}_1)$ and $e_3\in C_{\hat{A}_1}(y)\setminus\mathcal{Z}(\hat{A}_1)$. Then, one would have $C_{\hat{A}_1}(e_1)=C_{\hat{A}_1}(e_3)=C_{\hat{A}_1}(e_2)$ and consequently $[e_1,e_2]=0$ due to Theorem~\ref{thm:commutative:centralizer:condition}. This is a contradiction to the assumption $[e_1,e_2]\neq 0$. Thus  $C_{\hat{A}_1}(\{e_1,e_2\})\subseteq\mathcal{Z}(\hat{A}_1)$, since $ C_{\hat{A}_1}(\{e_1,e_2\})\setminus\mathcal{Z}(\hat{A}_1)$ must be empty from the start. The trivial observation that $i\R\in C_{\hat{A}_1}(\{x,y\})$ concludes the proof that $[e_1,e_2]\neq0$ implies $C_{\hat{A}_1}(\{e_1,e_2\})=\mathcal{Z}(\hat{A}_1)$.
	\end{proof}
	
	\begin{corollary}\label{cor:intersection:of:two:centralizer}
		Let $\g$ be a non-abelian subalgebra of $\hat{A}_1$, and let $e_1,e_2\in\g\setminus\mathcal{Z}(\g)$. Then, one has $C_\g(\{e_1,e_2\})=C_\g(e_1)$ if $[e_1,e_2]=0$ and $C_\g(\{e_1,e_2\})=\mathcal{Z}(\g)$ otherwise.
	\end{corollary}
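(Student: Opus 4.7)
The plan is to lift Proposition~\ref{prop:intersection:of:two:centralizer} from the ambient algebra $\hat{A}_1$ to the subalgebra $\g$ by a straightforward intersection argument. First I would record the trivial identity $C_\g(S) = C_{\hat{A}_1}(S) \cap \g$ for any $S \subseteq \g$, which follows from unwinding the definition of the centralizer: an element of $\g$ commutes with every element of $S$ in $\g$ exactly when it commutes with every element of $S$ in $\hat{A}_1$.

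Next I would check that the hypotheses of Proposition~\ref{prop:intersection:of:two:centralizer} apply, i.e., that $e_1, e_2 \in \hat{A}_1 \setminus \mathcal{Z}(\hat{A}_1)$. This is automatic from $e_1, e_2 \in \g \setminus \mathcal{Z}(\g)$: if some $e_j$ were in $\mathcal{Z}(\hat{A}_1)$, then, lying also in $\g$, it would commute with all of $\g \subseteq \hat{A}_1$ and hence sit in $\mathcal{Z}(\g)$, contradicting $e_j \notin \mathcal{Z}(\g)$. With the hypothesis verified, Proposition~\ref{prop:intersection:of:two:centralizer} applies directly.

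For the first case $[e_1,e_2]=0$, the proposition gives $C_{\hat{A}_1}(\{e_1,e_2\}) = C_{\hat{A}_1}(e_1)$, and intersecting with $\g$ yields $C_\g(\{e_1,e_2\}) = C_\g(e_1)$. For the second case $[e_1,e_2]\neq 0$, the proposition gives $C_{\hat{A}_1}(\{e_1,e_2\}) = \mathcal{Z}(\hat{A}_1)$, so intersection with $\g$ reduces the claim to the set-level identity $\mathcal{Z}(\hat{A}_1) \cap \g = \mathcal{Z}(\g)$. The inclusion $\mathcal{Z}(\g) \subseteq \mathcal{Z}(\hat{A}_1) \cap \g$ follows from Proposition~\ref{prop:center:of:non:abelian:subalgebras}, which uses the non-abelian hypothesis on $\g$ to force $\mathcal{Z}(\g) \subseteq \mathcal{Z}(\hat{A}_1)$. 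The reverse inclusion is immediate, since by Proposition~\ref{prop:center:of:A1} any element of $\mathcal{Z}(\hat{A}_1)$ is a scalar and therefore commutes with all of $\g$, placing it in $\mathcal{Z}(\g)$ once it lies in $\g$.

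There is no real obstacle here: the argument is a routine lifting, and the only subtlety is the identification $\mathcal{Z}(\hat{A}_1) \cap \g = \mathcal{Z}(\g)$. That identification genuinely requires the non-abelian hypothesis on $\g$, because an abelian $\g$ has $\mathcal{Z}(\g)=\g$, which in general need not sit inside $\mathcal{Z}(\hat{A}_1)=i\R$. All other steps are purely definitional, so the proof reduces to citing Propositions~\ref{prop:center:of:A1},~\ref{prop:center:of:non:abelian:subalgebras}, and~\ref{prop:intersection:of:two:centralizer} in sequence.
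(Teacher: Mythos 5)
Your proof is correct, and it takes a cleaner route than the paper's. The paper does not actually derive the corollary from Proposition~\ref{prop:intersection:of:two:centralizer} by restriction; instead it re-runs the whole argument inside $\g$, invoking Theorem~\ref{thm:commutative:centralizer:condition} directly: for $[e_1,e_2]=0$ it shows any $e_3\in C_\g(e_2)$ lies in $C_{\hat{A}_1}(e_2)$ and hence commutes with $e_1$ by the abelian centralizer condition, and for $[e_1,e_2]\neq 0$ it first establishes the intermediate claim that $C_\g(x)=C_\g(y)$ whenever $x,y\in\g\setminus\mathcal{Z}(\g)$ commute, then derives a contradiction from the existence of $e_3\in C_\g(\{e_1,e_2\})\setminus\mathcal{Z}(\g)$. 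You instead treat the proposition as a black box and intersect with $\g$, which requires exactly the two auxiliary facts you identify: the hypothesis transfer $e_j\in\g\setminus\mathcal{Z}(\g)\Rightarrow e_j\in\hat{A}_1\setminus\mathcal{Z}(\hat{A}_1)$, and the identification $\mathcal{Z}(\hat{A}_1)\cap\g=\mathcal{Z}(\g)$ via Propositions~\ref{prop:center:of:A1} and~\ref{prop:center:of:non:abelian:subalgebras}. Both of these you verify correctly, and you rightly flag that the non-abelian hypothesis is what makes the second identification work. Your version buys economy and makes the statement a genuine corollary of the proposition; the paper's version is self-contained within $\g$ at the cost of repeating the argument. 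Either is acceptable.
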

	
	\begin{proof}
		If $\g$ is abelian. Then $\g\setminus\mathcal{Z}(\g)=\emptyset$, and the above statement is not applicable. Thus, it is necessary to assume that $\g$ is a non-abelian subalgebra of $\hat{A}_1$.
		
		Now let $e_1,e_2\in\g\setminus\mathcal{Z}(\g)$. By virtue of Propositions~\ref{prop:center:of:A1} and~\ref{prop:center:of:non:abelian:subalgebras}, these two elements are non-scalar elements. If $e_1$ and $e_2$ commute, one has $e_1\in C_\g(e_2)\subseteq C_{\hat{A}_1}(e_2)$ and $e_2\in C_\g(e_1)\subseteq C_{\hat{A}_1}(e_1)$. Let now $e_3\in C_\g(e_2)$. Then one has $e_3\in C_{\hat{A}_1}(e_2)$ and consequently $[e_1,e_3]=0$ due to Theorem~\ref{thm:commutative:centralizer:condition}. This implies $e_3\in C_\g(e_1)$ and consequently $C_\g(\{e_1,e_2\})=C_\g(e_1)$.
		
		Consider an element $x\in \g\setminus\mathcal{Z}(\g)$ and a second element $y\in C_\g(x)\setminus\mathcal{Z}(\g)$. Let $z\in C_\g(x)\subseteq C_{\hat{A}_1}(x)$. Then one has $[z,y]=0$, due to Theorem~\ref{thm:commutative:centralizer:condition}, and consequently $z\in C_\g(y)$. One has similarly $[z,x]=0$ for every $z\in C_\g(y)$. Hence $C_{\g}(x)=C_\g(y)$ if $x,y\in \g\setminus\mathcal{Z}(\g)$ and $[x,y]=0$.
		
		Suppose now $e_3\in C_\g(\{e_1,e_2\})\setminus\mathcal{Z}(\g)$. Then, one has $C_\g(e_1)=C_\g(e_3)=C_\g(e_2)$ due to the previous observation, since $[e_1,e_3]=0=[e_2,e_3]$ and $e_1,e_2,e_3\in\g\setminus\mathcal{Z}(\g)$. This yields a contradiction, since $e_1\in C_\g(e_1)$ and $e_2\in C_\g(e_2)$ implies, in conjunction with the property $C_\g(e_1)=C_\g(e_2)$, that $[e_1,e_2]=0$. Thus $C_\g(\{e_1,e_2\})\setminus\mathcal{Z}(\g)=\emptyset$ and consequently $C_\g(\{e_1,e_2\})\subseteq\mathcal{Z}(\g)$. Let now $z\in \mathcal{Z}(\g)$. Then one trivially has $z\in C_\g(\{e_1,e_2\})$ implying $C_\g(\{e_1,e_2\})=\mathcal{Z}(\g)$ if $[e_1,e_2]=0$.
	\end{proof}
	
	\begin{proposition}\label{prop:index:of:non:abelian:nilpotent:algebra}
		Let $\gn$ be a non-abelian $n$-dimensional nilpotent subalgebra of $\hat{A}_1$. Then the nilpotency index of $\gn$ is $n-1$, i.e., the lower central series terminates at zero after $n-1$ steps. Moreover, one has:
		\begin{enumerate}[label = (\alph*)]
			\item $\dim(\gn_j)=n-j-1$ for all $j\in\{1,\ldots,n-1\}$, where $\gn_{j+1}:=[\gn,\gn_{j}]$ is the $(j+1)$-th Lie algebra in the lower central series and $\gn_0:=\gn$;
			\item The center of $\gn$ is $\mathcal{Z}(\gn)=\gn_{n-2}=i\R$;
			\item The first derived algebra (coinciding with the first Lie algebra in the lower central series) $\gn_1=[\gn,\gn]$ is abelian;
			\item $\gn=\spn\{e\}\oplus \gr$, where $\gr$ is abelian and $e$ is an element in $\gn$ such that $[e,\gr]\neq0$ and $\gn_1\subseteq\gr$.
		\end{enumerate} 
	\end{proposition}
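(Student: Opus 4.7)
The plan is to show that $\gn$ has an abelian ideal of codimension one and then to analyze the adjoint action of a complementary element to extract the filiform structure claimed in (a)--(d). If $\gn$ is abelian, there is nothing to prove, so assume $\gn$ non-abelian. First I will establish $\mathcal{Z}(\gn) = i\R$: nilpotent Lie algebras have nontrivial center, while Proposition~\ref{prop:center:of:non:abelian:subalgebras} gives $\mathcal{Z}(\gn) \subseteq \mathcal{Z}(\hat{A}_1) = i\R$, and these together force equality. The same argument, combined with the classical fact that the last nonzero term of the lower central series lies in the center, yields $\gn_{k-1} = i\R$ where $k$ is the nilpotency class.

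Next I construct the abelian ideal. Because $\gn$ is non-abelian and nilpotent, the upper central series satisfies $Z_2 \supsetneq Z_1 = i\R$; otherwise the series would stabilize at $Z_1$, forcing $\gn$ abelian. Picking any $y \in Z_2 \setminus i\R$, the map $\operatorname{ad}(y)$ sends $\gn$ into $Z_1 = i\R$ and is nonzero, so $\gr := C_\gn(y)$ has codimension one. By Theorem~\ref{thm:commutative:centralizer:condition}, $\gr$ is abelian. A short Jacobi computation shows $\gr$ is also an ideal: for $z \in \gn$ and $w \in \gr$, the identity $[[z,w],y] = -[[w,y],z] - [[y,z],w]$ vanishes because $[w,y] = 0$ and $[y,z] \in i\R = \mathcal{Z}(\gn)$ commutes with $w$.

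Write $\gn = \R x \oplus \gr$ for any $x \notin \gr$ and set $D := \operatorname{ad}(x)|_\gr$, which is a nilpotent endomorphism of $\gr$ since $\gn$ is nilpotent. The key claim is $\ker D = i\R$. The inclusion $i\R \subseteq \ker D$ is immediate. For the converse, suppose $w \in \gr \setminus i\R$ satisfies $[x,w] = 0$. Then $w$ is non-scalar, so Theorem~\ref{thm:commutative:centralizer:condition} makes $C_\gn(w)$ abelian; this centralizer contains the abelian $\gr$ (as $w \in \gr$) and also $x$, hence equals $\gn$, forcing $w \in \mathcal{Z}(\gn) = i\R$, a contradiction. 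Because $D$ is a nilpotent endomorphism of the $(n-1)$-dimensional space $\gr$ with one-dimensional kernel, its Jordan normal form is a single Jordan block of size $n-1$. Choosing a cyclic basis $\{y_0, y_1, \ldots, y_{n-2}\}$ of $\gr$ with $D(y_j) = y_{j-1}$ for $j \geq 1$ and $D(y_0) = 0$ yields the filiform basis $\{y_0, \ldots, y_{n-2}, x\}$ stated in the theorem; parts (a)--(d) and the nilpotency index $n-1$ then follow by direct inspection of the lower central series $\gn_j = \spn\{y_0, \ldots, y_{n-2-j}\}$.

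The main obstacle is the proof that $\ker D$ is one-dimensional. The commutative centralizer condition (Theorem~\ref{thm:commutative:centralizer:condition}) is doing essential work there, since a generic nilpotent adjoint action need not be a single Jordan block; it is precisely the constraint that centralizers of non-scalar elements of $\hat{A}_1$ be abelian which precludes any additional Jordan block by collapsing the whole algebra onto an abelian centralizer and thereby contradicting the non-abelian hypothesis on $\gn$.
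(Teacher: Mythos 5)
Your proof is correct, and its second half takes a genuinely different route from the paper's. Both arguments begin the same way (the center computations and the abelian-centralizer condition of Theorem~\ref{thm:commutative:centralizer:condition} are the engine in each case), and both locate an element whose adjoint action maps $\gn$ into $i\R$ — you take $y\in Z_2\setminus Z_1$ from the upper central series, while the paper takes $e_3\in\gn_{k-2}$ from the lower central series; these play the same role, and in both cases the codimension-one abelian subalgebra is its centralizer. Where you diverge is in extracting the structure: the paper proves $\gn_1$ abelian by a Jacobi computation, establishes the splitting $\gn=\spn\{e\}\oplus\gr$ by pairwise cancellation of non-centralizing elements, and then runs rank--nullity on the maps $\Phi_j=[e,\cdot]\colon\gn_j\to\gn_{j+1}$ (with $\ker\Phi_j=i\R$ via Corollary~\ref{cor:intersection:of:two:centralizer}) to get the dimension count and the index. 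You instead observe that $\gr$ is an ideal, that $D=\operatorname{ad}(x)|_{\gr}$ is nilpotent with $\ker D=i\R$, and invoke the Jordan normal form to conclude $D$ is a single Jordan block; the filiform basis, the dimensions $\dim(\gn_j)=n-j-1$, and the index $n-1$ all drop out at once. Your version buys economy — it proves the stronger Theorem~\ref{thm:classification:of:all:nilpotent:subalgberas} (the explicit filiform basis) in the same stroke, rather than deferring it to a separate induction — at the cost of importing the Jordan-form classification of nilpotent endomorphisms, whereas the paper's rank--nullity induction is more self-contained and makes the role of Corollary~\ref{cor:intersection:of:two:centralizer} explicit at each step. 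One small presentational point: your final sentence asserts parts (a)--(d) ``by direct inspection''; the inspection is genuinely routine (since $\gr$ is abelian one gets $\gn_1=D(\gr)$ and then $\gn_{j+1}=D(\gn_j)$, so $\gn_j=D^j(\gr)=\spn\{y_0,\ldots,y_{n-2-j}\}$), but it is worth writing out the one-line computation showing $\gn_{j+1}=D(\gn_j)$ rather than leaving it implicit.
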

	
	\begin{proof}
		Let $\gn$ be a non-abelian $n$-dimensional nilpotent subalgebra of $\hat{A}_1$ that is of nilpotency index $k$. This means the following: let $\{\gn_j\}_{j\in\N_{\geq0}}$ be the lower central series defined by $\gn_0:=\gn$ and $\gn_{j+1}:=[\gn,\gn_j]$ for all $j\in\N_{>0}$, then $\gn_{k-1}\neq \{0\}$ while $\gn_{\ell}=\{0\}$ for all $\ell\geq k$.
		
		First, we want to show that the center of $\mathfrak{n}$ is $\mathcal{Z}(\gn)=\gn_{k-1}=i\R$. Let therefore $z\in\mathcal{Z}(\gn)$. Then $z=ic$ for $c\in\R$ by virtue of Propositions~\ref{prop:center:of:A1} and~\ref{prop:center:of:non:abelian:subalgebras}, since $\gn$ is non-abelian by assumption and a finite-dimensional Lie subalgebra of $\hat{A}_1$. One has furthermore $\gn_{k-1}\subseteq\mathcal{Z}(\gn)$, since $[\gn,\gn_{k-1}]=\gn_k=\{0\}$. Moreover, one has $\gn_{k-1}\subseteq\mathcal{Z}(\hat{A}_1)$ by Proposition~\ref{prop:center:of:non:abelian:subalgebras}. Since $\mathcal{Z}(\hat{A}_1)=i\R$ and $\gn_{k-1}\neq\{0\}$, one must have $\gn_{k-1}=\mathcal{Z}(\hat{A}_1)=i\R=\mathcal{Z}(\gn)$, thus proving claim (b), if the nilpotency index of $\gn$ is $n-1$, which will be established below.
		
		Let us proceed with showing that $\gn_1$ is abelian. By the previous discussion, for every $c\in\R$ there must exist an element $\hat{e}_3\in\gn_{k-2}$ and $\hat{e}_1\in\gn$ such that $[\hat{e}_3,\hat{e}_1]=ic$. In particular, there exist two elements $e_3\in\gn_{k-2}$ and $e_1\in\gn$ such that $[e_1,e_3]=i$. Note that $e_3$ is not a scalar element, as it would otherwise commute with any element in $\gn$. Consider now two elements $\tilde{e}_1,\tilde{e}_2\in\gn$. These satisfy respectively $[\tilde{e}_1,e_3]=ic_1$ and $[\tilde{e}_2,e_3]=ic_2$, due to $[\gn,\gn_{k-2}]=i\R$. Using the Jacobi identity, one computes
		\begin{align*}
			[[\tilde{e}_1,\tilde{e}_2],e_3]=-[[\tilde{e}_2,e_3],\tilde{e}_1]-[[e_3,\tilde{e}_1],\tilde{e}_2]=[e_1,ic_1]+[ic_1,e_2]=0.
		\end{align*}
		Since $\tilde{e}_1$ and $\tilde{e}_2$ were arbitrarily chosen, and their commutator lies in $\gn_1=[\gn,\gn]$, the Lie algebra $\gn_1$ lies in the centralizer $C_{\gn}(e_3)$ which is abelian by Theorem~\ref{thm:commutative:centralizer:condition}, making $\gn_1$ itself also abelian and proving claim (c). 
		
		Next, we want to show that $\gn$ can be decomposed as $\gn=\spn\{e\}\oplus \gr$, where $\gr$ is abelian and $e$ is an element in $\gn$ such that $[e,\gr]\neq0$ and $\gn_1\subseteq\gr$. For this, suppose that the two coefficients $c_1$ and $c_2$ defined by the commutators $[\tilde{e}_1,e_3]=ic_1$ and $[\tilde{e}_2,e_3]=ic_2$ respectively satisfy $c_1\neq 0\neq c_2$. One can then simply linearly combine the equations $[\tilde{e}_1,e_3]=ic_1$ and $[\tilde{e}_2,e_3]=ic_2$ as follows: $c_2[\tilde{e}_1,e_3]-c_1[\tilde{e}_2,e_3]=[c_2\tilde{e}_1-c_1\tilde{e}_2,e_3]=0$. Hence, $c_2\tilde{e}_1-c_1\tilde{e}_2\in C_{\gn}(e_3)$. This shows that for any two elements $\hat{e}_1,\hat{e}_2\in\gn$ that do not commute with $e_3$ there exist two real non-zero coefficients $c_1,c_2\in\R$ such that $c_1\tilde{e}_1+c_2\tilde{e}_2\in C_{\gn}(e_3)$. This allows us to conclude that $\gn\setminus\mathcal{Z}(\gn)$ is one dimensional, and we can write $\gn=\spn\{e\}\oplus \gr$, where $\gr\subseteq C_{\gn}(e_3)$ and $e$ is some particular element that does not commute with $e_3$ and has no non-scalar support in $C_{\gn}(e_3)$. One has furthermore that $\gr$ is abelian, due to Theorem~\ref{thm:commutative:centralizer:condition}, showing claim (d). 
		
		This leaves showing that $k=n-1$ and $\dim(\gn_j)=n-j-1$ for all $j\in\{1,\ldots,k\}$. To prove this, one notices that
		\begin{align*}
			\gn_1&=[\gn,\gn]=[\spn\{e\},\gr],\;&\;\gn_2&=[\spn\{e\},[\spn\{e\},\gr]],\;&\;\ldots,\;&\;\;&\;\;&\;\gn_j=[\spn\{e\}[\ldots,[\spn\{e\},\gr]]],
		\end{align*}
		since $\gn_{j+1}\subseteq \gn_j$ and $\gn_1\subseteq \gr$. The last claim can easily be verified by assuming the converse. In such case, there would exist an element $\tilde{e}$ that can, without loss of generality, be written as $\tilde{e}=e+\tilde{r}$, where $\tilde{r}\in\gr$. However, $\tilde{e}$ cannot belong to $C_\gn(e_3)$, since $e$ does not commute with $e_3$ while $\tilde{r}$ does. Thus, we can define the linear map $\Phi_j:\gn_j\to\gn_{j+1},e_1\mapsto\Phi_j(e_1):=[e,e_1]$. Let now $e_4\in \gn_j$ with $j\geq 1$. Then, one has $e_4\in C_{\gn}(e_3)$, since $e_4\in\gn_j\subseteq\gn_1\subseteq C_{\gn}(e_3)$. Suppose now $e_4\in \ker\Phi_j$. Then, one has $[e,e_4]=0$ and consequently $e_4\in C_{\gn}(\{e,e_3\})\subseteq C_{\hat{A}_1}(\{e,e_3\})$. By Corollary~\ref{cor:intersection:of:two:centralizer}, $e_4$ must be in $\mathcal{Z}(\gn)$ and consequently $e_4=ic$ for some $c\in\R$. Hence $\ker\Phi_j\subseteq i\R$ and clearly also $\ker\Phi_j=i\R$ if $j<k$. The rank-nullity theorem implies that $\dim(\gn_j)=\dim(\ker\Phi_j)+\dim(\operatorname{Im}\Phi_j)=1+\dim(\gn_{j+1})$. Since $\gn_k=\{0\}$ and $\gn_{k-1}\neq\{0\}$, one has $\dim(\gn_{k-j})=j$ for all $j\in\{0,\ldots,k-1\}$, which proves claim (a). This leaves considering the linear map $\Phi_0:\gn\to\gn_1,e_1\mapsto\Phi_0(e_1)=[e,e_1]$. Here, one has now $\ker\Phi_0=\spn\{e,i\}$ and consequently $n=\dim(\gn)=\dim(\ker\Phi_0)+\dim(\operatorname{Im}\Phi_0)=2+k-1$. This implies $k = n-1$ and completes this proof.
	\end{proof}
	
	We can now return to Theorem~\ref{thm:classification:of:all:nilpotent:subalgberas} and proceed with its proof employing the tools developed up to here.
	
	\begin{proof}
		The idea of this proof is to employ claims (a) - (d) from Proposition~\ref{prop:index:of:non:abelian:nilpotent:algebra} to recursively define the desired basis. Thus, let $\gn$ be a non-abelian $n$-dimensional nilpotent subalgebra of $\hat{A}_1$.  Let $\{\gn_j\}_{j\in\N_{\geq0}}$ be the lower central series of $\gn$ defined by $\gn_0:=\gn$ and $\gn_{j+1}:=[\gn,\gn_j]$ for all $j\in\N_{\geq0}$. Proposition~\ref{prop:index:of:non:abelian:nilpotent:algebra} implies that the nilpotency index of $\gn$ is $n-1$, and $\dim(\gn_j)=n-j-1$ for $j\in\{1,\ldots,n-1\}$. Furthermore, it also states that $\gn_{n-2}=i\R$. Thus, for every $c\in\R$ there must exist an element $\tilde{y}\in\gn_{n-3}$ and $\tilde{x}\in\gn$ such that $[\tilde{x},\tilde{y}]=ic$. In particular, there exist two elements $y_1\in\gn_{n-3}$ and $x\in\gn$ such that $[x,y_1]=i=:y_0$. Note that $y_1$ is not a scalar element, as it would otherwise commute with any element in $\gn$. 
		
		We can now construct the desired basis inductively. That is, we show that $\gn$ admits the basis $\{y_0,\ldots,y_{n-2},x\}$, where $[y_j,y_k]=0$ for every $j\in\{0,1,\ldots,n-2\}$ and $[x,y_{j}]=y_{j-1}$ for every $j\in\{1,\ldots,n-2\}$. For the base case, we have to consider the following two cases: 
		\begin{enumerate}
			\item One has $\boldsymbol{n>3}$. Then $x\notin\gn_{n-3}$, since $y_1\in\gn_{n-3}\subseteq \gn_1$ which is an abelian algebra due to Proposition~\ref{prop:index:of:non:abelian:nilpotent:algebra}. Thus, one has $\spn\{y_1,y_0\}\subseteq\gn_{n-3}$, where $\dim(\spn\{y_1,y_0\})=2$. Proposition~\ref{prop:index:of:non:abelian:nilpotent:algebra} implies that $\dim(\gn_{n-3})=2$, implying $\gn_{n-3}=\spn\{y_0,y_1\}$. 
			\item One has $\boldsymbol{n=3}$. Then $x,y_1\in\gn$ and consequently $\gn=\spn\{y_0,y_1,x\}$ which is clearly isomorphic to the Heisenberg algebra $\gh_1$. 
		\end{enumerate}
		
		We can now iterate this process for $n>3$, however, we first note that Proposition~\ref{prop:index:of:non:abelian:nilpotent:algebra} states that $\gn=\spn\{e\}\oplus \gr$, where $\gr$ is abelian and $e\in\gn\setminus\gn_1$. Since $[x,y_1]\neq 0$ and $y_1\in\gn_{n-3}\subseteq\gn_1\subseteq\gr$, we know that $x\notin \gr$ and one can assume without loss of generality that $e=x$ since any non-trivial support $x$ has in $\gr$ can simply be disregarded because it commutes with any element in $\gr$ (alternatively, once can construct $e$ as a linear combination of $x$ and appropriate elements in $\gr$ such that $e$ has no non-zero support in $\gr$). 
		
		We proceed with the induction step and assume that the Lie algebra $\gn_{n-j}$ admits the basis $\{y_0,y_1,\ldots,y_{j-2}\}$ for some index $j\in\{2,\ldots,n-1\}$. One has again two cases:
		\begin{enumerate}
			\item One has $\boldsymbol{n>j+1}$, and we know that for every $z\in\spn\{y_0,y_1,\ldots,y_{j-2}\}$ there must exist two elements $\tilde{x}\in\gn$ and $\tilde{y}_{j-1}\in\gn_{n-j}$ such that $[\tilde{x},\tilde{y}_{j-1}]=z$. In particular, two elements $\tilde{x}'\in\gn$ and $\tilde{y}_{j-1}'\in\gn_{n-j}$ such that $[\tilde{x}',\tilde{y}_{j-1}']=y_{j-2}$. Clearly, one has $\tilde{x}'\notin \gr$ and therefore $\tilde{x}'=c x+ r$, where $c\in\R\setminus\{0\}$ and $r\in \gr$. One can now choose without loss of generality $x=\tilde{x}'$, since $[r,\tilde{y}_{j-1}']=0$. Thus, there exists some element $y_{j-1}\in\gn_{n-j+1}$ such that $[x,y_{j-1}]=y_{j-2}$. The condition $\dim(\gn_{n-j+1})=j$ implies then $\spn\{y_0,y_1,\ldots,y_{j-1}\}=\gn_{n-j+1}$. Here it is important to note that $y_{j-1}\notin\spn\{y_0,y_1,\ldots,y_{j-2}\}$ as otherwise $\gn$ wouldn't be nilpotent. The condition $\gn_{n-j-1}\subseteq\gn_1\subseteq\gr$ along with the assumption $n>j+1$ implies furthermore that $\spn\{y_0,y_1,\ldots,y_{j-1}\}$ is abelian. 
			\item One has $\boldsymbol{n=j}$. Here, we can still follow the logic of the previous case with the difference being that $x,y_{n-2}\in\gn$. In this case one sees $\gn=\spn\{y_0,y_1,\ldots,y_{n-2},x\}$, where $[y_j,y_k]=0$ for every $j,k\in\{0,1,\ldots,n-2\}$ and $[x,y_{j}]=y_{j-1}$ for every $j\in\{1,\ldots,n-2\}$.
		\end{enumerate}
		This completes the proof by induction and we conclude that iterating this procedure yields $\gn=\spn\{y_0,\ldots,y_{n-2},x\}$, where $[y_j,y_k]=0$ for every $j\in\{0,1,\ldots,n-2\}$ and $[x,y_{j}]=y_{j-1}$ for every $j\in\{1,\ldots,n-2\}$, where the finiteness of $\gn$ guarantees that this procedure terminates eventually after finitely many steps. Thus, every non-abelian finite-dimensional nilpotent subalgebra of $\hat{A}_1$ admits the claimed basis.
	\end{proof}
	
	\begin{corollary}
		Let $\g$ be a finite-dimensional solvable subalgebra of $\hat{A}_1$. Then its derived algebra $[\g,\g]$ is either (a)~abelian, or (b) nilpotent and has the basis $\{y_0,\ldots,y_{n-1},x\}$, where $[y_j,y_k]=0$ for every $j,k\in\{0,1,\ldots,n-2\}$ and $[x,y_{j}]=y_{j-1}$ for every $j\in\{1,\ldots,n-2\}$.
	\end{corollary}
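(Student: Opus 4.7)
The plan is to reduce this corollary directly to Theorem~\ref{thm:classification:of:all:nilpotent:subalgberas} by first showing that the derived algebra $[\g,\g]$ of any finite-dimensional solvable $\g \subseteq \hat{A}_1$ is nilpotent, and then observing that $[\g,\g]$ inherits the ambient embedding in $\hat{A}_1$, so the structural theorem for nilpotent subalgebras of $\hat{A}_1$ applies verbatim. The proof is essentially a two-line argument built on already-established machinery, so the main thing to get right is the justification for nilpotency of $[\g,\g]$.

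First, I would invoke the classical fact that over a field of characteristic zero the derived algebra of a finite-dimensional solvable Lie algebra is nilpotent. The standard way to see this is to pass to the complexification $\g_\C := \g \otimes_\R \C$, which is again solvable, and apply Lie's theorem: every finite-dimensional representation of a solvable complex Lie algebra admits a basis in which all elements act by upper-triangular matrices. Applying this to the adjoint-type embedding $\g_\C \hookrightarrow \gl(\g_\C)$ given by the regular representation shows that $[\g_\C,\g_\C]$ consists of strictly upper-triangular operators in that basis, hence $[\g_\C,\g_\C]$ is nilpotent. Since $[\g,\g]\otimes_\R \C = [\g_\C,\g_\C]$, nilpotency descends to $[\g,\g]$. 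This gives that $[\g,\g]$ is a finite-dimensional nilpotent subalgebra of $\hat{A}_1$.

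Once nilpotency of $[\g,\g]$ is established, Theorem~\ref{thm:classification:of:all:nilpotent:subalgberas} can be applied with $\gn := [\g,\g]$. The dichotomy in that theorem yields exactly the two alternatives (a) and (b) stated in the corollary: either $[\g,\g]$ is abelian, or it admits a basis $\{y_0, \ldots, y_{n-2}, x\}$ satisfying $[y_j, y_k] = 0$ for all $j,k \in \{0,1,\ldots,n-2\}$ and $[x, y_j] = y_{j-1}$ for all $j \in \{1, \ldots, n-2\}$, where $n = \dim([\g,\g])$. (The indexing in the corollary statement appears to carry a small typographical slip relative to Theorem~\ref{thm:classification:of:all:nilpotent:subalgberas}, where the basis size matches the dimension; this would be flagged but does not affect the argument.)

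The main obstacle, and really the only nontrivial input, is the reliance on the classical theorem that $[\g,\g]$ is nilpotent for solvable $\g$ in characteristic zero. Everything else is immediate from previously established structural results, in particular Theorem~\ref{thm:classification:of:all:nilpotent:subalgberas}, which already encodes the heavy lifting specific to $\hat{A}_1$ (namely the abelian centralizer condition from Theorem~\ref{thm:commutative:centralizer:condition} and its consequences for nilpotent subalgebras). No further case analysis specific to $\hat{A}_1$ is needed, since $[\g,\g] \subseteq \g \subseteq \hat{A}_1$ automatically places the derived algebra within the scope of the classification theorem.
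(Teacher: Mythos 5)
Your proposal is correct and follows essentially the same route as the paper: the paper likewise reduces the corollary to the classical fact that the derived algebra of a finite-dimensional solvable Lie algebra over a field of characteristic zero is nilpotent (cited from Knapp rather than re-derived via Lie's and Engel's theorems as you sketch) and then applies Theorem~\ref{thm:classification:of:all:nilpotent:subalgberas} to $[\g,\g]\subseteq\hat{A}_1$. Your observation about the indexing slip in the corollary's basis relative to the theorem is also apt.
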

	
	\begin{proof}
		It is a standard result in Lie theory that the derived algebra $[\g,\g]$ of any real finite-dimensional solvable algebra $\g$ is nilpotent \plainrefs{Knapp:1996} (this may not hold for fields with non-zero characteristic \plainrefs{Jacobson:1979}). Since $\g\subseteq\hat{A}_1$  it follows that its derived algebra $[\g,\g]$ is a finite-dimensional nilpotent subalgebra of $\hat{A}_1$. Theorem~\ref{thm:classification:of:all:nilpotent:subalgberas} implies then that any such subalgebra is either abelian or admits the desired basis, which concludes the proof. 
	\end{proof}

	\subsection{Non-solvable realizable Lie algebras}
	
	\begin{proposition}\label{prop:complexification:Invariant:properties}
		Let $\g$ be a finite-dimensional real Lie algebra and $\g_\C:=\g+i\g$ its complexification. Then one has: (a) $\g$ is nilpotent with nilpotency index $k$ if and only if $\g_\C$ is nilpotent with nilpotency index $k$; (b) $\g$ is solvable if and only if $\g_\C$ is solvable; (c) $\g$ is semisimple if and only if $\g_\C$ is semisimple. 
	\end{proposition}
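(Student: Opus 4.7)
My plan is to reduce all three claims to the single structural fact that forming the complexification commutes with taking Lie brackets of subspaces: for any two real subspaces $V,W \subseteq \g$ one has $[V_\C, W_\C] = [V,W]_\C$. This follows by a one-line bilinearity computation, expanding $[v_1 + i v_2, w_1 + i w_2]$ and separating real and imaginary parts to see that the result lies in $[V,W] + i[V,W]$, with both parts attainable. Combined with the elementary observation that for any real subspace $U \subseteq \g$ one has $U_\C = 0$ if and only if $U = 0$ (a real basis of $U$ is automatically a complex basis of $U_\C$), this will handle the bookkeeping for the series in parts (a) and (b).

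For part (a), I would then proceed by induction on $j$ to show that the lower central series satisfies $(\g_\C)_j = (\g_j)_\C$, using $(\g_\C)_{j+1} = [\g_\C, (\g_\C)_j] = [\g_\C, (\g_j)_\C] = [\g, \g_j]_\C = (\g_{j+1})_\C$. Hence $(\g_\C)_k = 0 \Leftrightarrow \g_k = 0$ and $(\g_\C)_{k-1} \neq 0 \Leftrightarrow \g_{k-1} \neq 0$, so the nilpotency indices coincide. For part (b), the identical argument applied to the derived series $\g^{(j+1)} = [\g^{(j)}, \g^{(j)}]$ yields $(\g_\C)^{(j)} = (\g^{(j)})_\C$, and solvability of one algebra is then equivalent to the termination of this series, hence to solvability of the other.

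For part (c), I would invoke Cartan's criterion for semisimplicity: a finite-dimensional Lie algebra over a field of characteristic zero is semisimple if and only if its Killing form is non-degenerate. Fixing a real basis $\{e_1,\ldots,e_n\}$ of $\g$, which is also a complex basis of $\g_\C$, the structure constants are identical, so the matrices representing $\mathrm{ad}(e_j)$ are the same real matrices whether viewed over $\R$ or over $\C$. Consequently the Gram matrix of the Killing form $B_\g$ on $\g$ and of $B_{\g_\C}$ on $\g_\C$ (in this common basis) coincide as real matrices, and $B_{\g_\C}$ is simply the $\C$-bilinear extension of $B_\g$. Since a real symmetric matrix is non-degenerate over $\R$ if and only if its determinant is non-zero, which is equivalent to non-degeneracy of the same matrix viewed over $\C$, Cartan's criterion yields the equivalence.

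I do not anticipate a serious obstacle: the whole proposition is a matter of carefully unwinding definitions combined with standard criteria, and the only subtlety is ensuring that $U_\C = 0 \Leftrightarrow U = 0$ and that the Killing form really is preserved under the identification $\g_\C = \g \otimes_\R \C$. Both points are immediate once a real basis is fixed. If a cleaner presentation of (c) is desired, I could alternatively argue via radicals, showing $\mathrm{rad}(\g_\C) = (\mathrm{rad}(\g))_\C$ using (b) applied to ideals; but the Killing-form route is the most self-contained given what has been developed in the excerpt.
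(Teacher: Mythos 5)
Your proposal is correct and takes essentially the same approach as the paper: the paper's proof of (a) and (b) rests on the identity $[\g,\g]_\C=[\g_\C,\g_\C]$ (cited from Knapp, Proposition 1.17), which you prove directly and then iterate along the lower central and derived series, and its proof of (c) cites Knapp's Corollary 1.50, whose standard proof is precisely your Killing-form/Cartan-criterion argument in a common real basis. No gaps.
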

	
	\begin{proof}
		We treat each case separately:
		\begin{enumerate}[label = (\alph*)]
			\item By Proposition 1.17 from \plainrefs{Knapp:1996}, one as $[\g,\g]_\C=[\g_\C,\g_\C]$. Consequently $\g$  is nilpotent with nilpotency index $k$ if and only if $\g_\C$ is nilpotent with nilpotency index $k$. 
			\item This can be shown analogously to the previous claim. 
			\item By Corollary 1.50 from \plainrefs{Knapp:1996}, $\g$ is semisimple if and only if $\g_\C$ is semisimple. \qedhere
		\end{enumerate}
	\end{proof}
	
	Before proceeding, it is necessary to comment on the faithful realizability of different Lie algebras under the constraint that the faithful realized algebras satisfy some specified structural property. Let $\g$ and $\gh$ be two finite-dimensional Lie algebras that can each be faithfully realized in $\hat{A}_1$ in multiple ways. That is, there exist at least four Lie-algebra homomorphisms $\Phi_1,\Phi_2:\g\to\hat{A}_1$ and $\Psi_1,\Psi_2:\gh\to\hat{A}_1$ that are, restricted to their images, isomorphic. 
	Suppose, for example, that these realizations satisfy the relation $[\Phi_j(\g),\Psi_j(\gh)]\subseteq \Psi_j(\gh)$ for each $j\in\{1,2\}$. We denote this relation by $\Phi_j(\g)\sim \Psi_j(\gh)$. 
	
	Let now $\{g_k\}_{k\in\mathcal{K}}$ and $\{h_\ell\}_{\ell\in\mathcal{L}}$ be bases of $\g$ and $\gh$ respectively. The sets $\{\Phi_1(g_k)\}_{k\in\mathcal{K}}$ and $\{\Psi_1(h_\ell)\}_{\ell\in\mathcal{L}}$ are then bases of $\Phi_1(\g)$ and $\Psi_1(\gh)$ respectively. We can now find a basis $\mathcal{B}_1$ of $\Phi_1(\g)\cup\Psi_1(\gh)$ by taking the set $\{\Psi_1(h_\ell)\}_{\ell\in\mathcal{L}}$ and adding all $\Phi_1(g_k)$ from $\{\Phi_1(g_k)\}_{k\in\mathcal{K}}$ that are linearly independent from the elements in $\{\Psi_1(h_\ell)\}_{\ell\in\mathcal{L}}$.
	
	We can now define the linear map $\Delta:\Phi_1(\g)\cup\Psi_1(\gh)\to\Phi_2(\g)\cup\Psi_2(\gh)$ via the identifications $\Delta(\Phi_1(g_k)):=\Phi_2(g_k)$ if $\Phi_1(g_k)\in\mathcal{B}_1$ and $\Delta(\Psi_1(h_\ell))=\Psi_2(h_\ell)$ for all $\ell\in\mathcal{L}$.
	It is straightforward to verify that $\Delta$ preserves the relation $\Phi_j(\g)\sim \Psi_j(\gh)$: 
	\begin{align*}
		\Delta([\Phi_1(g_k),\Psi_1(h_\ell)])&=\Delta\left(\sum_{j}c_{k\ell j}\Psi_j(h_j)\right)=\sum_j c_{k\ell j}\Delta(\Psi_1(h_j))=\sum_j c_{k\ell j}\Psi_2(h_j)\in\Psi_2(\gh).
	\end{align*}
	If $\Phi_1(g_k)$ is not in $\mathcal{B}_1$, one has clearly $[\Delta(\Phi_1(g_k)),\Delta(\Psi_1(h_\ell))]\in\Psi_2(\gh)$, otherwise, one finds:
	\begin{align*}
		[\Delta(\Phi_1(g_k)),\Delta(\Psi_1(h_\ell))]&=[\Phi_2(g_k),\Psi_2(h_\ell)]=\sum_j\tilde{c}_{k\ell j}\Psi_2(h_j)\in\Psi_2(\gh).
	\end{align*}
	
	Moreover, the restriction of $\Delta$ to $\Psi_1(\gh)$, written $\Delta|_{\Psi_1(\gh)}:\Psi_1(\gh)\to\Psi_2(\gh)$ and denoted $\Delta|_{\gh}$, is a Lie-algebra isomorphism. This follows from its linearity, invertability, and the fact that it preserves the Lie bracket:
	\begin{align*}
		\Delta|_{\Psi_1(\gh)}([\Psi_1(h_j),\Psi_2(h_k)])&=\Delta|_{\Psi_1(\gh)}(\Psi_1([h_j,h_k]))=\Delta|_{\Psi_1(\gh)}\left(\Psi_1\left(\sum_\ell f_{jk\ell} h_\ell\right)\right)=\Delta|_{\Psi_1(\gh)}\left(\sum_\ell f_{jk\ell}\Psi_1(h_\ell)\right)\\
		&=\sum_\ell f_{jk\ell}\Delta|_{\Psi_1(\gh)}(\Psi_1(h_\ell))=\sum_\ell f_{jk\ell}\Psi_2(h_\ell)=\Psi_2\left(\sum_\ell f_{jk\ell}h_\ell\right)=\Psi_2([h_j,h_k])\\
		&=[\Psi_2(h_k),\Psi_2(h_j)]=[\Delta|_{\Psi_1(\gh)}(\Psi_1(h_j)),\Delta|_{\Psi_1(\gh)}(\Psi_1(h_k))].
	\end{align*}
	Therefore, if we fix a particular faithful realization $\Phi_1$ of $\g$ and demand that a faithful realization $\Psi_1$ of $\gh$ satisfies the relation $\Phi_1(\g)\sim\Psi_1(\gh)$, it follows that for any other faithful realization $\Phi_2$ of $\g$, there exists a corresponding faithful realization $\Psi_2$ of $\gh$ that is isomorphic to $\Psi_1(\gh)$. We can depict the situation in the following diagram.
	\begin{figure}[H]\label{useful:diagram}
		\centering
		\begin{tikzpicture}
			\node[align = center] at (-3,0) {$\g$};
			\draw[dashed] (-1.7,0.8) rectangle (1.7,1.8);
			\draw[dashed] (-1.7,-0.8) rectangle (1.7,-1.8);
			\draw[<->] (-2.8,0.2) -- (-1.5,1.2);
			\draw[<->] (-2.8,-0.2) -- (-1.5,-1.2);
			\draw[<->] (2.8,0.2) -- (1.5,1.2);
			\draw[<->] (2.8,-0.2) -- (1.5,-1.2);
			\draw[<->] (0,-0.7) -- (0,0.7);
			\draw[<->] (1,-1) -- (1,1);
			\node[align = center] at (0+0.25,0) {$\Delta$};
			\node[align = center] at (1+0.4,0) {$\Delta|_{\gh}$};
			\node[align = center] at (2.4,0.8) {$\Psi_1$};
			\node[align = center] at (2.4,-0.8) {$\Psi_2$};
			\node[align = center] at (-2.4,0.8) {$\Phi_1$};
			\node[align = center] at (-2.4,-0.8) {$\Phi_2$};
			\node[align = center] at (0,1.3) {$\Phi_1(\g)\;\;\;\sim\;\;\;\Psi_1(\gh)$};
			\node[align = center] at (3,0) {$\gh$};
			\node[align = center] at (0,-1.3) {$\Phi_2(\g)\;\;\;\sim\;\;\;\Psi_2(\gh)$};
		\end{tikzpicture}
	\end{figure}\noindent
	The same reasoning applies when considering the relation $[\Phi_j(\g),\Psi_j(\gh)]\subseteq\{0\}$.
	
	\begin{proposition}\label{prop:realizable:semisimple:algebras}
		The only semisimple Lie algebra that can be faithfully realized in the skew-hermitian Weyl algebra $\hat{A}_1$ is $\slR{2}$.
	\end{proposition}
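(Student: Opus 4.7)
The plan is to reduce the problem to the known classification of finite-dimensional Lie subalgebras of the complex Weyl algebra $A_1$ by complexifying, extract $\slC{2}$ as the only complex-semisimple candidate, and then pin down its correct real form.

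First, I would complexify: given a faithful realization $\gs \subseteq \hat{A}_1$ of a semisimple Lie algebra, I form $\gs_\C := \gs + i\gs \subseteq A_1$. Because every element of $\hat{A}_1$ is skew-hermitian whereas every element of $i\hat{A}_1$ is hermitian, one has $\hat{A}_1 \cap i\hat{A}_1 = \{0\}$, and therefore $\gs \cap i\gs = \{0\}$. This makes $\gs_\C$ genuinely the abstract complexification of $\gs$, i.e., a complex Lie subalgebra of $A_1$ of complex dimension $\dim_\R(\gs)$; closure under the complex Lie bracket follows from the bilinearity of the commutator on $A_1$. By Proposition~\ref{prop:complexification:Invariant:properties}(c), $\gs_\C$ is again semisimple.

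Next, I would invoke the classification of non-trivial finite-dimensional subalgebras of $A_1$ recalled in Section~\ref{section:scope:of:the:work} and established in \plainrefs{Tanasa:2005,TST:2006}. Inspection of the six listed classes shows that $\slC{2}$ is the only semisimple representative: classes (ii) and (iii) have a non-trivial solvable radical ($\C$, respectively $\gh_{1,\C}$), while classes (iv)--(vi) are themselves solvable. Since $\gs_\C$ is a non-abelian finite-dimensional subalgebra of $A_1$, it must be isomorphic to one of the six classes, and semisimplicity forces $\gs_\C \cong \slC{2}$. This pins down $\dim_\R(\gs) = 3$ and identifies $\gs$ as a real form of $\slC{2}$. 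Up to isomorphism, the real forms of $\slC{2}$ are exactly $\slR{2}$ and $\su{2}$, cf.~\plainrefs{Gorbatsevich:2017}.

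Finally, I would rule out the compact form using Theorem~4.4 of \plainrefs{Joseph:1972} (reproduced as Theorem~\ref{thm:Joseph:4:4:full} in Appendix~\ref{App:Realizability:Recap}), which prohibits any semisimple Lie algebra integrating to a compact Lie group from being faithfully realized in $\hat{A}_1$; since $\su{2}$ corresponds to $\mathrm{SU}(2)$, it is excluded. Conversely, $\slR{2}$ is genuinely realized in $\hat{A}_1$ by Proposition~\ref{prop:algebra:g2:space}, completing the classification. The main subtlety in this plan is the complexification step: the reduction to the $A_1$-classification relies essentially on the vanishing of $\hat{A}_1 \cap i\hat{A}_1$, without which $\gs_\C$ could fail to be the abstract complexification of $\gs$ and the appeal to the complex classification would not go through.
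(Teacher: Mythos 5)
Your proof is correct, but it takes a genuinely different route from the paper's. The paper does not invoke the full classification of finite-dimensional subalgebras of $A_1$ from \plainrefs{Tanasa:2005,TST:2006}; instead it combines (i) a result of Klep--\v{S}penko that the only complex semisimple Lie algebra realizable in $A_1$ is three-dimensional, with (ii) the classification from \plainrefs{Gorbatsevich:2017} of real semisimple Lie algebras satisfying the abelian centralizer condition (Theorem~\ref{thm:commutative:centralizer:condition}), which leaves three candidates: $\su{2}$, $\slR{2}$, and $\slC{2}$ viewed as a six-dimensional real Lie algebra. The paper then must dispose of the third candidate separately, by exhibiting an explicit copy of $\su{2}$ inside real $\slC{2}$ and appealing to Joseph's Theorem~4.4. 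Your argument is more economical on exactly this point: by identifying $\gs_\C$ with one of the six classes of the $A_1$-classification and noting that only $\slC{2}$ is semisimple, you pin down $\dim_\R(\gs)=3$ at the outset, so the six-dimensional real $\slC{2}$ (whose complexification is $\slC{2}\oplus\slC{2}$, absent from the list) never enters, and only the two genuine real forms $\slR{2}$ and $\su{2}$ remain. Both routes then coincide in using Joseph's compactness obstruction to eliminate $\su{2}$ and Proposition~\ref{prop:algebra:g2:space} to realize $\slR{2}$. Your observation that $\hat{A}_1\cap i\hat{A}_1=\{0\}$ (skew-hermitian versus hermitian) guarantees $\gs+i\gs$ is the abstract complexification is a worthwhile explicit check that the paper leaves implicit. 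The trade-off is that your argument leans on the complete external classification of \plainrefs{TST:2006}, whereas the paper's version needs only the dimension bound for complex semisimple subalgebras together with the commutative-centralizer machinery it develops internally.
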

	
	\begin{proof}
		Let $\g$ be a semisimple Lie algebra that can be faithfully realized in $\hat{A}_1$. Then $\g_\C$, its complexification, can be faithfully realized in the complex Weyl algebra $A_1$. By virtue of  Proposition~\ref{prop:complexification:Invariant:properties}, it follows that $\g_\C$ is semisimple. It is known that the only complex semisimple Lie algebra that can be realized in $A_1$ is three-dimensional  \plainrefs{Klep:2010}. Furthermore, it is known that any real semisimple Lie algebra $\g$ satisfying the abelian centralizer condition is isomorphic to $\su{2}$, $\slR{2}$, or the real Lie algebra $\slC{2}$, see \plainrefs{Gorbatsevich:2017}. That is, $\slC{2}$ is seen as the real six-dimensional Lie algebra defined by the basis elements $\{h,x,y,ih,ix,iy\}$ and the commutation relations:
		\begin{align}\label{eqn:basis:for:sl:2:C}
			[h,x]&=2x,\;&\;[h,ix]&=2ix,\;&\;[h,y]&=-2y,\;&\;[h,iy]&=-2iy,\;&\;[x,y]&=h,\;&\;[x,iy]&=ih,\nonumber\\
			[ih,x]&=2ix,\;&\;[ih,ix]&=-2x,\;&\;[ih,y]&=-2iy,\;&\;[ih,iy]&=2y,\;&\;[ix,y]&=ih,\;&\;[ix,iy]&=-h,
		\end{align}
		while all other ones vanish.
		Let us begin by considering the three-dimensional cases. A classification of all three-dimensional Lie algebras has been provided over a century ago by Bianchi \plainrefs{Bianchi:1903}. The only semisimple candidates among the algebras provided by Bianchi are the real Lie algebras $\slR{2}$ and $\su{2}$. However, since $\su{2}$ corresponds to the compact Lie group $\operatorname{SU}(2)$ and is of rank one, it cannot be realized in $\hat{A}_1$, due to Theorem 4.4 from \plainrefs{Joseph:1972}, which prohibits the realization of semisimple Lie algebras with compact Lie groups that are of rank one. 
		
		Thus, we need to consider the possibility that $\slC{2}$ can be faithfully realized in $\hat{A}_1$. We can express $\slC{2}$ using the basis $\{h,ih,x,ix,y,iy\}$ defined by the commutation relations \eqref{eqn:basis:for:sl:2:C}.
		It follows that $\su{2}\subseteq\slC{2}$, since
		\begin{align*}
			\left[-\frac{1}{2}ih,\frac{1}{2}(ix+iy)\right]&=\frac{1}{2}(x-y),\;&\;\left[\frac{1}{2}(ix+iy),\frac{1}{2}(x-y)\right]&=-\frac{1}{2}ih,\;&\;\left[\frac{1}{2}(x-y),-\frac{1}{2}ih\right]&=\frac{1}{2}(ix+iy),
		\end{align*}
		which shows that $\lie{\{-ih/2,(ix+iy)/2,(x-y)/2\}}\cong\su{2}$. This, however, leads to a contradiction, as we have already seen that $\su{2}$ cannot be faithfully realized in $\hat{A}_1$. Hence, a faithful realization of $\slC{2}$ in $\hat{A}_1$ is also impossible.
		
		This implies that the only semisimple Lie algebra that can be realized in $\hat{A}_1$ is $\slR{2}$. Finally, Proposition~\ref{prop:algebra:g2:space} guarantees that $\slR{2}$ can be faithfully realized in $\hat{A}_1$.
	\end{proof}

	Note that Theorem 4.2 from \plainrefs{Joseph:1972} implies that no semisimple Lie algebra of rank greater than one can be realized in $\hat{A}_1$. Furthermore, it is important to mention that Simoni and Zaccaria have established in \plainrefs{simoni:1969} results equivalent to Joseph's Theorems 4.2 and 4.3 \plainrefs{Joseph:1972}. Given the significance of these results in the context of this section, we present a more detailed discussion in Appendix~\ref{App:Realizability:Recap}, where the proofs are reformulated using the adapted formalism from the foundational work \plainrefs{Bruschi:Xuereb:2024} on which the present one is based.
	
	\begin{theorem}
		Let $\g$ be a reductive Lie algebra that can be faithfully realized in $\hat{A}_1$. Then $\g\cong\slR{2}\oplus\R$ or $\g\cong \slR{2}$.
	\end{theorem}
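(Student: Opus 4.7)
The plan is to use the structural decomposition of reductive Lie algebras, namely $\g = \gs \oplus \mathcal{Z}(\g)$ where $\gs = [\g,\g]$ is semisimple and $\mathcal{Z}(\g)$ is the center, and then restrict each summand using the preceding results. Since this theorem appears in the subsection on non-solvable realizable algebras, one interprets it under the implicit assumption that $\gs \neq \{0\}$ (the purely abelian case being subsumed by the solvable branch of the classification); this reduction is what makes only the two listed isomorphism classes appear in the statement.

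First I would observe that if $\Phi : \g \to \hat{A}_1$ is a faithful realization, then the restriction $\Phi|_{\gs}$ is a faithful realization of the semisimple part $\gs$ as a subalgebra of $\hat{A}_1$. By Proposition~\ref{prop:realizable:semisimple:algebras}, the only semisimple Lie algebra admitting a faithful realization in $\hat{A}_1$ is $\slR{2}$, so $\gs \cong \slR{2}$. Next, because $\gs \neq \{0\}$ the algebra $\g$ is non-abelian, and Proposition~\ref{prop:center:of:non:abelian:subalgebras} then forces $\Phi(\mathcal{Z}(\g)) \subseteq \mathcal{Z}(\hat{A}_1) = i\R$. Since $\Phi$ is injective, the center $\mathcal{Z}(\g)$ has dimension at most one, leaving only the two possibilities $\mathcal{Z}(\g) = \{0\}$ or $\mathcal{Z}(\g) \cong \R$.

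Combining these two ingredients with the reductive decomposition yields the dichotomy: in the first case $\g = \gs \cong \slR{2}$, and in the second $\g = \gs \oplus \mathcal{Z}(\g) \cong \slR{2} \oplus \R$. To complete the proof I would invoke Proposition~\ref{prop:algebra:g2:space} and Proposition~\ref{prop:algebra:g0:g2:space}, which exhibit explicit faithful realizations $\lie{\hat{A}_1^2} \cong \slR{2}$ and $\lie{\hat{A}_1^0 \oplus \hat{A}_1^2} \cong \slR{2} \oplus \R$ in $\hat{A}_1$, thereby confirming that both isomorphism classes do occur.

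The main obstacle is primarily conceptual rather than technical: one must justify that the abelian reductive algebras are legitimately excluded from the statement (either by interpreting the claim as pertaining to non-solvable reductive algebras, consistent with the surrounding section, or by acknowledging that an abelian realizable algebra of arbitrary dimension would otherwise furnish additional cases). A secondary subtlety is verifying that the abstract center $\mathcal{Z}(\g)$ is faithfully mapped onto a subspace of $\mathcal{Z}(\Phi(\g))$; this is immediate from the fact that a Lie-algebra isomorphism preserves centers, so that $\Phi(\mathcal{Z}(\g)) = \mathcal{Z}(\Phi(\g)) \subseteq \mathcal{Z}(\hat{A}_1)$ by Proposition~\ref{prop:center:of:non:abelian:subalgebras}, which cleanly bounds its dimension.
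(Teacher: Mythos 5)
Your proof is correct, but it reaches the conclusion by a genuinely different route than the paper for the key step of bounding the center. Both arguments begin identically: decompose the reductive algebra as $\g=\gh\oplus\mathcal{Z}(\g)$ with $\gh$ semisimple and invoke Proposition~\ref{prop:realizable:semisimple:algebras} to force $\gh\cong\slR{2}$. From there the paper fixes the explicit realization $\lie{\hat{A}_1^2}$ of $\slR{2}$, expands an arbitrary central element $z$ in the monomial basis, and computes $[i(a^\dagg a+\tfrac12),z]=0$ and $[a^2-(a^\dagg)^2,z]=0$ via Proposition~\ref{prop:thm:25:equivalent} to conclude $z\in i\R$ by hand. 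You instead observe that $\Phi(\mathcal{Z}(\g))=\mathcal{Z}(\Phi(\g))$ for a faithful realization $\Phi$, that $\Phi(\g)$ is non-abelian since it contains a copy of $\slR{2}$, and then apply Proposition~\ref{prop:center:of:non:abelian:subalgebras} (which rests on the abelian centralizer condition, Theorem~\ref{thm:commutative:centralizer:condition}) to get $\mathcal{Z}(\Phi(\g))\subseteq i\R$ directly, with no computation and no need to fix a particular realization. Your route is shorter and reuses structural machinery already established earlier in the same section; the paper's route is a self-contained calculation that does not depend on the centralizer theorem. Your explicit remark that the statement tacitly excludes abelian reductive algebras (which are realizable in arbitrary dimension, e.g.\ $\spn\{i,ia^\dagg a\}\cong\R^2$, yet are reductive under the standard definition) is a legitimate caveat that the paper's own proof glosses over by silently assuming $\gh\neq\{0\}$; flagging it and resolving it by context is the right call. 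Your closing appeal to Propositions~\ref{prop:algebra:g2:space} and~\ref{prop:algebra:g0:g2:space} to confirm that both isomorphism classes actually occur matches the paper's use of Example~\ref{exa:semidirect:product:direct:sum:derivation:trivial} and those same propositions.
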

	
	\begin{proof}
		If $\g$ is a reductive Lie algebra that can be faithfully realized in $\hat{A}_1$ then, by definition, it is the direct sum of a semisimple Lie algebra $\gh$ and its center $\mathcal{Z}(\g)$, which is an abelian Lie algebra \plainrefs{fuchs2003symmetries}. The algebra $\gh$ must itself be realizable in $\hat{A}_1$. By Proposition~\ref{prop:realizable:semisimple:algebras}, one must have $\gh\cong\slR{2}$. Due to the discussion above, we can choose a particular faithful realization of $\slR{2}$. We choose $\gh=\langle\{i(a^\dagg a+\frac{1}{2}),i(a^2+(a^\dagg)^2,a^2-(a^\dagg)^2\}\rangle_{\mathrm{Lie}}$. Let $z\in\mathcal{Z}(\g)$ an arbitrary polynomial of degree $d$. We can write
		\begin{align*}
			z=\sum_{m=1}^{d}\sum_{j=0}^{\lceil\frac{m-1}{2}\rceil}c_{mj}g_+^{(m-j)\iota_1+j\iota_2}+\sum_{m=1}^{d}\sum_{j=0}^{\lfloor\frac{m-1}{2}\rfloor}\hat{c}_{mj}g_-^{(m-j)\iota_1+j\iota_2}+c_{00}i.
		\end{align*}
		We now apply Proposition~\ref{prop:thm:25:equivalent} and the condition that $\mathcal{Z}(\g)$ is the center of $\g$, i.e., $[x,z]=0$ for all $x\in\gh$, to find 
		\begin{align*}
			0=\left[i\left(a^\dagg a+\frac{1}{2}\right),z\right]=\sum_{m=1}^{d}\sum_{j=0}^{\lceil\frac{m-1}{2}\rceil-1}c_{mj}(m-2j)g_-^{(m-j)\iota_1+j\iota_2}+\sum_{m=1}^{d}\sum_{j=0}^{\lfloor\frac{m-1}{2}\rfloor}\hat{c}_{mj}(m-2j)g_+^{(m-j)\iota_1+j\iota_2}.
		\end{align*}
		Since the non-vanishing monomials appearing in the expansion above are linearly independent, one must have: (i) $\hat{c}_{jm}=0$ for all $j,m$, and (ii) $c_{jm}=0$ for all $j,m$ unless $m=2j$. Furthermore, $d$ must be even, as otherwise all coefficients $c_{dj}$ vanish, implying $\deg(z)<d$. This allows us to write
		\begin{align*}
			z=\sum_{j=0}^{d/2}c_jg_+^{j\tau}\in\hat{A}_1^0\oplus\hat{A}_1^=.
		\end{align*}
		We can now similarly compute $[a^2-(a^\dagg)^2,z]=0$ and apply Proposition~\ref{prop:thm:25:equivalent} to conclude that $z$ must be a scalar multiple of $i$. Thus, by example~\ref{exa:semidirect:product:direct:sum:derivation:trivial}, one has $\g\cong\slR{2}\oplus\R$ or $\g\cong \slR{2}$.
	\end{proof}
	
	\begin{definition}
		Let $\g$ and $\gh$ be two Lie algebras with faithful realizations $\Phi$ and $\Psi$ in $\hat{A}_1$. We say $\gh$ is \emph{stable under the actions} $[x,\cdot]$ for every $x\in \g$, if and only if $[\Phi(x),\Psi(y)]\in\Psi(\gh)$ for every $x\in\g$ and $y\in\gh$.
	\end{definition}

	\begin{proposition}\label{prop:stable:solvable:Lie:algebras}
		Let $\g$ be a finite-dimensional Lie algebra that can be faithfully realized in the skew-hermitian Weyl algebra $\hat{A}_1$ and is stable under the actions $[x,\cdot]$ for every $x\in\slR{2}$. Then $\g$ is isomorphic to  a subalgebra of $\langle\hat{A}_1^0\oplus\hat{A}_1^1\oplus\hat{A}_1^2\rangle_{\mathrm{Lie}}$. Furthermore, if $\g$ is solvable then it is isomorphic to any of the three algebras: $\{0\}$, $\R$, or the Heisenberg algebra $\gh_1$.
	\end{proposition}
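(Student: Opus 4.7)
The plan is to reduce the first claim to an application of Theorem~\ref{thm:core:result} on the enlarged Lie subalgebra $\slR{2}+\g$, and then to handle the solvable case by projecting onto the Levi factor of the resulting Schr\"odinger algebra. By the transport-of-realization argument spelled out immediately before the proposition, I may work with the specific faithful realization $\slR{2}=\lie{\hat{A}_1^2}$ from Proposition~\ref{prop:algebra:g2:space}; this realization in particular contains the free Hamiltonian term $i(a^\dagg a+\tfrac{1}{2})$, since $[i(a^2+(a^\dagg)^2),a^2-(a^\dagg)^2]=-8i(a^\dagg a+\tfrac{1}{2})$. Stability of $\g$ under $\slR{2}$, together with the fact that both are Lie subalgebras of $\hat{A}_1$, makes the vector space $\slR{2}+\g$ closed under the commutator and hence a finite-dimensional Lie subalgebra of $\hat{A}_1$ containing $i(a^\dagg a+\tfrac{1}{2})$.

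Applying Theorem~\ref{thm:core:result} to this enlarged algebra, the necessary conditions force $\PP{\slR{2}+\g}{\hat{A}_1^\perp}=\{0\}$ and, because $\slR{2}$ already has non-vanishing support in $\hat{A}_1^2$, also $\PP{\slR{2}+\g}{\hat{A}_1^=}=\{0\}$. Therefore $\g\subseteq\slR{2}+\g\subseteq\hat{A}_1^{\leq 2}=\lie{\hat{A}_1^{\leq 2}}$, where the final equality is Proposition~\ref{prop:algebra:g1:g2:space}. This establishes the first claim, since $\lie{\hat{A}_1^{\leq 2}}\cong\mathcal{S}\cong\slR{2}\ltimes\gh_1$ by Proposition~\ref{prop:schroedinger:algebra}.

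For the solvable case, the plan is to consider the canonical projection $\pi:\mathcal{S}\to\slR{2}$ with kernel $\gh_1=\lie{\hat{A}_1^1}$. The stability hypothesis pushes forward to $[\slR{2},\pi(\g)]\subseteq\pi(\g)$, so $\pi(\g)$ is an ideal of the simple algebra $\slR{2}$ and therefore equals $\{0\}$ or $\slR{2}$. Solvability of $\g$ transfers to its quotient $\pi(\g)$, ruling out $\pi(\g)=\slR{2}$, so $\g\subseteq\gh_1$. It then remains to enumerate the $\slR{2}$-stable Lie subalgebras of $\gh_1=\spn\{i,g_+^{\iota_1},g_-^{\iota_1}\}$: the scalar line $\spn\{i\}$ is a trivial one-dimensional $\slR{2}$-submodule, while $\spn\{g_+^{\iota_1},g_-^{\iota_1}\}$ is irreducible over $\R$ because, by Proposition~\ref{prop:thm:25:equivalent}, $\operatorname{ad}(ia^\dagg a)$ maps $g_+^{\iota_1}\mapsto g_-^{\iota_1}$ and $g_-^{\iota_1}\mapsto-g_+^{\iota_1}$, a rotation with no real eigenvector. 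The $\slR{2}$-stable subspaces of $\gh_1$ are therefore $\{0\}$, $\spn\{i\}$, $\spn\{g_+^{\iota_1},g_-^{\iota_1}\}$, and $\gh_1$; among these, only the first, second, and fourth are Lie subalgebras, since $[g_+^{\iota_1},g_-^{\iota_1}]=-2i\notin\spn\{g_+^{\iota_1},g_-^{\iota_1}\}$. This produces precisely the three isomorphism classes $\{0\}$, $\R$, and $\gh_1$ claimed in the proposition. The main obstacle is the legitimacy of the initial realization reduction, but this is directly supplied by the $\Delta$-map argument in the excerpt.
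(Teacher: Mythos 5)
Your proposal is correct. For the first claim you pass through the enlarged algebra $\slR{2}+\g$ and apply Theorem~\ref{thm:core:result} once; the paper does essentially the same thing (it feeds the pairs $\{i(a^\dagg a+\tfrac{1}{2}),x\}$ for $x\in\g$ into Theorem~\ref{thm:core:result} and Corollary~\ref{cor:all:physically:relevant:finite:algebras}), so the difference here is only packaging. For the solvable case, however, your route is genuinely different from, and considerably more economical than, the paper's. The paper argues by hand: it takes an element of $\g$ with nonzero component in $\lie{\hat{A}_1^2}$, computes its brackets with the three $\slR{2}$ generators, bounds the rank of an explicit matrix to show that $\g$ contains a full $\slR{2}$-triple modulo $\lie{\hat{A}_1^1}$, checks that the derived series of such a triple never terminates, and then runs a second rank computation to exclude two-dimensional subalgebras of $\gh_1$. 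You replace all of this with the observation that $\lie{\hat{A}_1^1}$ is an ideal of $\mathcal{S}$ with Levi complement $\lie{\hat{A}_1^2}$, so the projection $\pi$ onto the Levi factor is a Lie-algebra homomorphism; stability makes $\pi(\g)$ an ideal of the simple algebra $\slR{2}$, and solvability of the quotient $\pi(\g)\cong\g/(\g\cap\gh_1)$ forces $\pi(\g)=\{0\}$. The enumeration inside $\gh_1$ via the module decomposition $\gh_1=\spn\{i\}\oplus\spn\{g_+^{\iota_1},g_-^{\iota_1}\}$ (trivial plus irreducible, the latter justified correctly by the absence of real eigenvectors of $\operatorname{ad}(ia^\dagg a)$) likewise supersedes the paper's second rank argument. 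What your approach buys is brevity and conceptual transparency; what the paper's computation buys is an explicit description of the elements witnessing non-solvability, which is not needed for the statement. The two points you leave implicit --- that the submodules of a direct sum of a trivial module and a nontrivial irreducible one are exactly the four obvious ones, and the legitimacy of fixing the realization of $\slR{2}$ --- are, respectively, standard and supplied by the paper's $\Delta$-map discussion, so I see no gap.
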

	
	\begin{proof}
		Let $\g$ be as stated in the claim. We begin by choosing, without loss of generality, the faithful realization of $\slR{2}$ in $\hat{A}_1$, given by $\slR{2}\cong\langle\{i(a^\dagg a+\frac{1}{2}),i(a^2+(a^\dagg)^2),a^2-(a^\dagg)^2\}\rangle_{\mathrm{Lie}}$. This is possible, due to the comment below Proposition~\ref{prop:complexification:Invariant:properties}. We omit now to denote the faithful realization of $\g$ with a respective isomorphism and employ $\g$ for its realization  in order to evade a cumbersome notation. Since $\g$ is finite dimensional and stable under actions $[e,\cdot]$ for all $e\in\langle\{i(a^\dagg a+\frac{1}{2}),i(a^2+(a^\dagg)^2),a^2-(a^\dagg)^2\}\rangle_{\mathrm{Lie}}$, the Lie algebra $\langle\{i(a^\dagg a+{1}/{2}),x\rangle_{\mathrm{Lie}}$ must also be a finite-dimensional subalgebra of $\g$ for every $x\in\g$. By virtue of Theorem~\ref{thm:core:result} and Corollary~\ref{cor:all:physically:relevant:finite:algebras}, $\g$ must be isomorphic to a subalgebra of the Lie algebra $\langle\hat{A}_1^0\oplus\hat{A}_1^1\oplus\hat{A}_1^2\rangle_{\mathrm{Lie}}\cong\slR{2}\ltimes\gh_1$. This already shows the first part of this claim. 
		
		Now, let us proceed with the second one and assume that $\g$ is solvable. The idea behind the proof for this claim is to first show that $\g$ cannot have any nontrivial support in $\lie{\hat{A}_1^2}$. After establishing this, we simply investigate the remaining possibilities. We begin by supposing that $\g$ is solvable and has non-trivial support in $\langle\hat{A}_1^2\rangle_{\mathrm{Lie}}$. Then, Corollary~\ref{cor:decomposition:a0:a1:a2:space} implies that there exists an element $x\in\g$ such that $x=y+z$, where $y\in\langle\hat{A}_1^1\rangle_{\mathrm{Lie}}$ and $z\in\langle\hat{A}_1^2\rangle_{\mathrm{Lie}}$ is nonzero. One can now write $z=2i(a^\dagg a+1/2)c_1+g_+^{2\iota_1}c_2+g_-^{2\iota_1}c_3$, $y=ic_4+g_+^{\iota_1}c_5+g_-^{\iota_1}c_6$ and compute the commutators of $y$ and $z$ with the three basis elements in $\slR{2}$ realized within $\hat{A}_1$, given by $\slR{2}\cong\langle\{i(a^\dagg a+\frac{1}{2}),i(a^2+(a^\dagg)^2),a^2-(a^\dagg)^2\}\rangle_{\mathrm{Lie}}$. For the commutators involving $z$, one finds{\small
			\begin{align*}
				\left[z,2i\left(a^\dagg a+\frac{1}{2}\right)\right]&=\left[2i\left(a^\dagg a+\frac{1}{2}\right)c_1+i\left(a^2+(a^\dagg)^2\right)c_2+\left(a^2-(a^\dagg)^2\right)c_3,2i\left(a^\dagg a+\frac{1}{2}\right)\right]=4\left(i\left(a^2+(a^\dagg)^2\right)c_3-\left(a^2-(a^\dagg)^2\right)c_2\right),\\
				\left[z,i\left(a^2+(a^\dagg)^2\right)\right]&=\left[2i\left(a^\dagg a+\frac{1}{2}\right)c_1+i\left(a^2+(a^\dagg)^2\right)c_2+\left(a^2-(a^\dagg)^2\right)c_3,i\left(a^2+(a^\dagg)^2\right)\right]=4\left(2i\left( a^\dagg a+\frac{1}{2}\right)c_3+\left(a^2-(a^\dagg)^2\right)c_1\right),\\
				\left[z,\left(a^2-(a^\dagg)^2\right)\right]&=\left[2i\left(a^\dagg a+\frac{1}{2}\right)c_1+i\left(a^2+(a^\dagg)^2\right)c_2+\left(a^2-(a^\dagg)^2\right)c_3,\left(a^2-(a^\dagg)^2\right)\right]=-4\left(2i\left(a^\dagg a+\frac{1}{2}\right)c_2+i\left(a^2+(a^\dagg)^2\right)c_1\right).
		\end{align*}}\noindent
		For those involving $y$, one computes furthermore:
		\begin{align*}
			\left[y,2i\left(a^\dagg a+\frac{1}{2}\right)\right]&=\left[i c_4+ i(a+a^\dagg)c_5+(a-a^\dagg)c_6,2i\left(a^\dagg a+\frac{1}{2}\right)\right]=2\left(i(a+a^\dagg)c_6-(a-a^\dagg)c_5\right),\\
			\left[y,i\left(a^2+(a^\dagg)^2\right)\right]&=\left[i c_4+ i(a+a^\dagg)c_5+(a-a^\dagg)c_6,i\left(a^2+(a^\dagg)^2\right)\right]=2\left(i(a+a^\dagg)c_6+(a-a^\dagg)c_5\right),\\
			\left[y,\left(a^2-(a^\dagg)^2\right)\right]&=\left[i c_4+ i(a+a^\dagg)c_5+(a-a^\dagg)c_6,\left(a^2-(a^\dagg)^2\right)\right]=-2\left(i(a+a^\dagg)c_5-(a-a^\dagg)c_6\right).
		\end{align*}
		The condition that $\g$ is stable under the actions $[x,\cdot]$ for every $x\in \langle\{i(a^\dagg a+\frac{1}{2}),i(a^2+(a^\dagg)^2,a^2-(a^\dagg)^2\}\rangle_{\mathrm{Lie}}$ implies that the space spanned by the elements $x$ and $[x,e]$, for $e\in \{2i(a^\dagg a+\frac{1}{2}),i(a^2+(a^\dagg)^2),a^2-(a^\dagg)^2\}$, is a subspace of $\g$. That is, combining the previous results, one must have
		\begin{align*}
			V:=\spn\Biggl\{&x,g_+^{2\iota_1}c_3-g_-^{2\iota_1}c_2+\frac{1}{2}(g_+^{\iota_1}c_6-g_-^{\iota_1}c_5),2i\left(a^\dagg a+\frac{1}{2}\right)c_3+g_-^{2\iota_1}c_1+\frac{1}{2}(g_+^{\iota_1}c_6+g_-^{\iota_1}c_5),\\
			&2i\left(a^\dagg a+\frac{1}{2}\right)c_2+g_+^{2\iota_1}c_1+\frac{1}{2}(g_+^{\iota_1}c_5-g_-^{\iota_1}c_6)\Biggr\}\subseteq\g
		\end{align*}
		It is easy to confirm that this vector space, under the condition $(c_1,c_2,c_3)\in\R^3\setminus\{0\}$, is generally at least three-dimensional, except if $c_1\neq 0$ and $c_1^2=c_2^2+c_3^2$, in which case it is at least two-dimensional. To verify this, one may identify
		\begin{align*}
			\begin{pmatrix}
				1\\
				0\\
				0\\
				0\\
				0\\
				0
			\end{pmatrix}&\leftrightarrow 2i\left(a^\dagg a+\frac{1}{2}\right),\;&\;\begin{pmatrix}
				0\\
				1\\
				0\\
				0\\
				0\\
				0
			\end{pmatrix}&\leftrightarrow g_+^{2\iota_1},\;&\;\begin{pmatrix}
				0\\
				0\\
				1\\
				0\\
				0\\
				0
			\end{pmatrix}&\leftrightarrow g_-^{2\iota_1},\;&\;\begin{pmatrix}
				0\\
				0\\
				0\\
				1\\
				0\\
				0
			\end{pmatrix}&\leftrightarrow i,\;&\,\begin{pmatrix}
				0\\
				0\\
				0\\
				0\\
				1\\
				0
			\end{pmatrix}&\leftrightarrow g_+^{\iota_1},\;&\;\begin{pmatrix}
				0\\
				0\\
				0\\
				0\\
				0\\
				1
			\end{pmatrix}&\leftrightarrow g_-^{\iota_1},
		\end{align*}
		and compute the rank of the matrix
		\begin{align*}
			\boldsymbol{M}:=\begin{pmatrix}
				c_1 & 0 & c_3 & c_2 \\
				c_2 & c_3 & 0 & c_1\\
				c_3 & -c_2 & c_1 & 0\\
				c_4 & 0 & 0 & 0\\
				c_5 & \frac{1}{2}c_6 & \frac{1}{2}c_6 & \frac{1}{2}c_5\\
				c_6 & -\frac{1}{2}c_5 & \frac{1}{2}c_5 & \frac{1}{2}c_5
			\end{pmatrix}
		\end{align*}
		under the constraint that at least one of the coefficients $c_1$, $c_2$, or $c_3$ does not vanish.
		We consider the two cases that $\dim(V)\geq 3$ and $\dim(V)\geq 2$ separately.
		
		\textbf{First case}---Here we have $\dim(V)\geq3$ since either $c_1= 0$ or $c_1^2\neq c_2^2+c_3^2$.
		In this case it is clear that, by appropriate linear combinations of $x$, $[x,2i(a^\dagg a+1/2)]$, $[x,g_+^{2\iota_1}]$, and $[x,g_-^{2\iota_1}]$, the three linearly independent elements $\{2i(a^\dagg a+{1}/{2})+r_1,i(a^2+(a^\dagg)^2)+r_2,a^2-(a^\dagg)^2+r_3\}$ must belong to $\g$, where $r_1,r_2,r_3\in\langle\hat{A}_1^1\rangle_{\mathrm{Lie}}$. We now want to demonstrate that $\g$ is not solvable in this case, thereby contradicting the initial assumption. To achieve this, we compute terms from the first derived algebra $[\g,\g]$. Application of Table~\ref{tab:Full:Commutator:Algebra:(pesudo):schroedinger:algebra} allows us to conclude that the following relations hold:
		\begin{align*}
			\left[2i\left(a^\dagg a+\frac{1}{2}\right)+r_1,i\left(a^2+(a^\dagg)^2\right)+r_2\right]&=4\left(a^2-(a^\dagg)^2\right)+\tilde{r}_3,\\
			\left[2i\left(a^\dagg a+\frac{1}{2}\right)+r_1,\left(a^2-(a^\dagg)^2\right)+r_3\right]&=-4i\left(a^2+(a^\dagg)^2\right)+\tilde{r}_2,\\
			\left[i\left(a^2+(a^\dagg)^2\right)+r_2,\left(a^2-(a^\dagg)^2\right)+r_3\right]&=-8i\left(a^\dagg a+\frac{1}{2}\right)+\tilde{r}_1,
		\end{align*}
		where $\tilde{r}_1,\tilde{r}_2,\tilde{r}_3\in\langle\hat{A}_1^1\rangle_{\mathrm{Lie}}$. Now recall that the derived series $\mathcal{D}^\ell\g$ is defined by a recursive construction \plainrefs{Knapp:1996}, which we present here for ease of presentation:
		\begin{align*}
			\mathcal{D}^{\ell+1}\g:=[\mathcal{D}^\ell\g,\mathcal{D}^\ell\g]\qquad\text{for all }\ell\in\N_{\geq0},\qquad\text{where}\qquad \mathcal{D}^0\g:=\g.
		\end{align*}
		Now note that we have just demonstrated that for every $\ell\in \N_{\geq0}$ three linearly independent elements of the form $2i(a^\dagg a+{1}/{2})+r_1^{(\ell)}$, $i(a^2+(a^\dagg)^2)+r_2^{(\ell)}$, and $a^2-(a^\dagg)^2+r_3^{(\ell)}$ lie in each $\mathcal{D}^\ell\g$, where $r_1^{(\ell)},r_2^{(\ell)},r_3^{(\ell)}$ are some appropriate terms form $\lie{\hat{A}_1^1}$.
		Thus, $\g$ is not solvable, since the derived series can never terminate.
		
		\textbf{Second case}---Here we have $\dim(V)\geq2$. Recall that this is only possible if $c_1\neq 0$ and $c_1^2=c_2^2+c_3^2$. This can be combined to the condition $c_1^2=c_2^2+c_3^2\neq0$. By the stability condition, $\g$ contains the two distinct elements $z_1:=2i(a^\dagg a+1/2)c_3+(a^2-(a^\dagg)^2)c_1+r_1=[z,i(a^2+(a^\dagg)^2)]/4$, $z_2:= 2i(a^\dagg a+1/2)c_2+i(a^2+(a^\dagg)^2)c_1+r_2=[(a^2-(a^\dagg)^2),z]/4$, where $r_1$ and $r_2$ are suitable elements from $\langle\hat{A}_1^1\rangle_{\mathrm{Lie}}$. It is straightforward to verify that $z_1$ and $z_2$ are linearly independent if $c_1^2=c_2^2+c_3^2\neq 0$. One can apply the stability condition once more. To be more precise, since $\g$ is stable under the action $[e,\cdot]$ for every $e\in \langle\{i(a^\dagg a+\frac{1}{2}),i(a^2+(a^\dagg)^2),a^2-(a^\dagg)^2\}\rangle_{\mathrm{Lie}}$, it must also contain the two elements $[z_1, 2i(a^\dagg a+1/2)]$ and $[z_2,2i(a^\dagg a+1/2)]$. By the antisymmetry of the commutator it is enough to compute: 
		\begin{align*}
			\left[2i\left(a^\dagg a+\frac{1}{2}\right)c_3+g_-^{2\iota_1}c_1+r_1,2i\left(a^\dagg a+\frac{1}{2}\right)\right]&=4 g_+^{2\iota_1}c_1+\tilde{r}_1,\\
			\left[2i\left(a^\dagg a+\frac{1}{2}\right),2i\left(a^\dagg a+\frac{1}{2}\right)c_2+g_+^{2\iota_1}c_1+r_2\right]&=4 g_-^{2\iota_1}c_1+\tilde{r}_2,
		\end{align*}
		where $\tilde{r}_1,\tilde{r}_2\in\langle\hat{A}_1^1\rangle_{\mathrm{Lie}}$. Since $c_1\neq 0$, one has $i(a^2+(a^\dagg)^2)+\hat{r}_2,a^2-(a^\dagg)^2+\hat{r}_3\in\g$, for appropriate $\hat{r}_2,\hat{r}_3\in\langle\hat{A}_1^1\rangle_{\mathrm{Lie}}$. Employing Table~\ref{tab:Full:Commutator:Algebra:(pesudo):schroedinger:algebra} and computing the commutator of $[z_1, 2i(a^\dagg a+1/2)]$ and $[z_2,2i(a^\dagg a+1/2)]$, one can conclude that $\g$ contains the three linearly independent elements $2i(a^\dagg a+1/2)+r_1',i(a^2+(a^\dagg)^2)+r_2',(a^2-(a^\dagg)^2)+r_3'$ for suitable terms $r_1',r_2',r_3'\in\lie{\hat{A}_1^1}$. Hence, $\g$ is not solvable by the same argument of the previous case.
		
		This discussion shows that $\g$ can only be solvable and stable under the actions $[e,\cdot]$ for every $e\in \slR{2}\cong\langle\{i(a^\dagg a+\frac{1}{2}),i(a^2+(a^\dagg)^2),a^2-(a^\dagg)^2\}\rangle_{\mathrm{Lie}}$ if it belongs to $\lie{\hat{A}_1^{\leq2}}$ but has only trivial support in $\lie{\hat{A}_1^2}$. Hence, $\g$ must be a subalgebra of $\langle\{i,i(a+a^\dagg),(a-a^\dagg)\}\rangle_{\mathrm{Lie}}=\lie{\hat{A}_1^1}\cong\gh_1$. Suppose $\g\subseteq\lie{\hat{A}_1^1}$ is two-dimensional. From Proposition~\ref{prop:center:of:non:abelian:subalgebras} it follows then that $\g$ contains the non-vanishing element $x=2i c_1+g_+^{\iota_1}c_2+g_-^{\iota_1}c_3\in\g$ with $(c_2,c_3)\neq 0$. The commutators of $x$ with the three basis elements $\{2i(a^\dagg a+\frac{1}{2}),i(a^2+(a^\dagg)^2),a^2-(a^\dagg)^2\}$ read:
		\begin{align*}
			\left[x,2i\left(a^\dagg a+\frac{1}{2}\right)\right]&=\left[2i c_1+g_+^{\iota_1}c_2+g_-^{\iota_1}c_3,2i\left(a^\dagg a+\frac{1}{2}\right)\right]=2\left(g_+^{\iota_1}c_3-g_-^{\iota_1}c_2\right),\\
			\left[x,i\left(a^2+(a^\dagg)^2\right)\right]&=\left[2i c_1+g_+^{\iota_1}c_2+g_-^{\iota_1}c_3,i\left(a^2+(a^\dagg)^2\right)\right]=2\left(g_+^{\iota_1}c_3+g_-^{\iota_1}c_2\right),\\
			\left[x,\left(a^2-(a^\dagg)^2\right)\right]&=\left[2i c_1+g_+^{\iota_1}c_2+g_-^{\iota_1}c_3,\left(a^2-(a^\dagg)^2\right)\right]=-2\left(g_+^{\iota_1}c_2-g_-^{\iota_1}c_3\right).
		\end{align*}
		Since $\g$ is stable under the actions $[x,\cdot]$ for every $x\in\langle\{i(a^\dagg a+\frac{1}{2}),i(a^2+(a^\dagg)^2,a^2-(a^\dagg)^2\}\rangle_{\mathrm{Lie}}$, this implies that the vector space spanned by $x$ and the three elements above must be a subspace of $\g$. That is
		\begin{align*}
			W:=\spn\left\{2i c_1+g_+^{\iota_1}c_2+g_-^{\iota_1}c_3,g_+^{\iota_1}c_3-g_-^{\iota_1}c_2, g_+^{\iota_1}c_3+g_-^{\iota_1}c_2,g_+^{\iota_1}c_2-g_-^{\iota_1}c_3\right\}\subseteq\g.
		\end{align*}
		The assumption that $\g$ is two dimensional, implies that $W$ is also at most two-dimensional. By the same unit vector identification and rank-computing strategy used multiple times in this work, it is easy to see that $W$ is one-dimensional if $c_1\neq0$ and $c_2=c_3=0$, two-dimensional if $c_1=0$ and either $c_2\neq0$ or $c_3\neq0$, and three-dimensional in every other case. Note that the condition $x\neq0$ assures that $c_1=c_2=c_3=0$ is prohibited. We can therefore focus on the two cases in which $W$ is one- or two-dimensional.
		
		In the first case we consider the case that $c_1=0$ and $(c_2,c_3)\neq 0$. That is the case $\dim(W)=2$, where it is easy to verify that $W$ has the basis $\mathcal{B}=\{g_+^{\iota_1},g_-^{\iota_1}\}$. However, one can compute $\langle \mathcal{B}\rangle_{\mathrm{Lie}}$ utilizing Table~\ref{tab:Full:Commutator:Algebra:(pesudo):schroedinger:algebra}, and therefore see that $\dim(\langle\mathcal{B}\rangle_{\mathrm{Lie}})=\dim(\gh_1)=3$, which is impossible because $\langle \mathcal{B}\rangle_{\mathrm{Lie}}$ must be a subalgebra of the Lie algebra $\g$ (which is assumed to be two-dimensional).		
		
		In the second case we consider $c_1\neq 0$ and $c_2=0=c_3$. We have shown that this also implies that $\dim(W)=1$ and $W=i\R$.
		We assumed that $\g$ is two-dimensional. Thus, there exists at least one element $y\in\g$ with support in $\spn\{g_+^{\iota_1},g_-^{\iota_1}\}$. Repeating the procedure with $y$ instead of $x$ will consequently lead either to the case $\dim(W)=3$, which is forbidden by the assumption $\dim(\g)=2$ and $W\subseteq \g$, or the previous case.
		
		This shows that $\g$ cannot be two-dimensional and we are left with considering the case that $\g$ is either zero, one, or three dimensional. By the same argument as presented before, we have that if $\g$ is one-dimensional, it must follow that $\g=i\R\cong\R$ and if $\g$ is three dimensional, it must be $\g=\lie{\hat{A}_1^1}\cong\mathfrak{h}_1$. This concludes this proof. 
	\end{proof}
	
	With these structural constraints now established - including the role of semisimple components and the behavior of solvable radicals - we classify all finite-dimensional non-solvable Lie algebras that admit faithful realizations in $\hat{A}_1$.

	\begin{theorem}\label{thm:nonsolvable:algebras:realizable}
		Let $\g$ be a finite-dimensional non-solvable Lie algebra. Then $\g$ can be faithfully realized as a subalgebra of the skew-hermitian Weyl algebra $\hat{A}_1$ if and only if $\g$ is isomorphic to one the the following three algebras: (i)~$\slR{2}$, (ii)~$\slR{2}\oplus\R$, and (iii) $\slR{2}\ltimes_\pi\gh_1$, for some non-trivial suitable representation $\pi:\slR{2}\to\operatorname{End}(\gh_1)$. 
	\end{theorem}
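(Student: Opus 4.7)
The plan is to reduce the theorem to two pieces already established in the paper: the semisimple classification (Proposition~\ref{prop:realizable:semisimple:algebras}) and the stable-solvable-radical classification (Proposition~\ref{prop:stable:solvable:Lie:algebras}), glued together via the Levi-Mal'tsev decomposition.

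\textbf{Sufficient direction.} For each of the three listed algebras I would simply exhibit a faithful realization already constructed in the paper: $\slR{2}\cong\lie{\hat{A}_1^2}$ by Proposition~\ref{prop:algebra:g2:space}; $\slR{2}\oplus\R\cong\lie{\hat{A}_1^0\oplus\hat{A}_1^2}$ by Proposition~\ref{prop:algebra:g0:g2:space}; and $\slR{2}\ltimes\gh_1\cong\lie{\hat{A}_1^{\leq2}}$ by Proposition~\ref{prop:schroedinger:algebra}, where the representation $\pi$ is read off from Table~\ref{tab:Full:Commutator:Algebra:(pesudo):schroedinger:algebra} and is manifestly non-trivial.

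\textbf{Necessary direction.} Suppose $\g\subseteq\hat{A}_1$ is faithfully realized, finite-dimensional, and non-solvable. By the Levi-Mal'tsev theorem \plainrefs{Kuzmin:1977}, we write $\g=\gs\ltimes_\pi\gr$ with $\gs$ a semisimple Levi subalgebra and $\gr$ the solvable radical. Non-solvability forces $\gs\neq\{0\}$, and since $\gs$ is a non-zero semisimple subalgebra of $\hat{A}_1$, Proposition~\ref{prop:realizable:semisimple:algebras} gives $\gs\cong\slR{2}$. Because $\gr$ is an ideal of $\g$, the inclusion $[\gs,\gr]\subseteq\gr$ is exactly the stability hypothesis of Proposition~\ref{prop:stable:solvable:Lie:algebras}. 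That proposition, applied to the solvable subalgebra $\gr$, leaves only three possibilities: $\gr\cong\{0\}$, $\gr\cong\R$, or $\gr\cong\gh_1$.

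\textbf{Case analysis.} If $\gr=\{0\}$ then $\g\cong\slR{2}$. If $\gr\cong\R$ then $\pi$ takes values in $\operatorname{End}(\R)\cong\R$; since $\slR{2}$ is perfect ($[\slR{2},\slR{2}]=\slR{2}$) it admits no non-trivial one-dimensional representation, so $\pi$ is trivial and $\g\cong\slR{2}\oplus\R$. If $\gr\cong\gh_1$ I claim $\pi$ must be non-trivial. Suppose for contradiction $\pi\equiv0$; then $[\gs,\gr]=\{0\}$, so in the realization $\gr\subseteq C_{\hat{A}_1}(\gs)$. Pick any non-zero $x\in\gs$: since $\slR{2}$ has trivial center we have $\gs\cap\mathcal{Z}(\hat{A}_1)=\{0\}$ (by Proposition~\ref{prop:center:of:non:abelian:subalgebras}), so $x$ is non-scalar. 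The abelian centralizer condition (Theorem~\ref{thm:commutative:centralizer:condition}) then yields that $C_{\hat{A}_1}(x)$ is abelian, forcing $\gr$ abelian, which contradicts the non-abelianness of $\gh_1$. Hence $\pi$ must be non-trivial, giving $\g\cong\slR{2}\ltimes_\pi\gh_1$ as claimed.

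\textbf{Main obstacle.} The delicate step is ruling out the trivial-action case for $\gr\cong\gh_1$; this is what makes the classification sharp. The argument leans essentially on Theorem~\ref{thm:commutative:centralizer:condition}, a property specific to the single-mode algebra $\hat{A}_1$ that fails in $\hat{A}_n$ for $n\geq2$ (Example~\ref{exa:abelian:centralizer:condition:two:modes}), signaling that the multi-mode analogue of this theorem will be genuinely harder. A secondary subtlety worth remarking on, although not strictly required by the statement, is that the non-trivial $\pi$ is essentially unique: since Proposition~\ref{prop:stable:solvable:Lie:algebras} places every such $\gr$ inside $\lie{\hat{A}_1^{\leq2}}$ and the only Heisenberg subalgebra therein is the canonical $\lie{\hat{A}_1^1}$, the representation coincides up to conjugation with the one induced by the Schr\"odinger algebra.
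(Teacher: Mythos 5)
Your proposal is correct and follows essentially the same route as the paper's proof: Levi--Mal'tsev decomposition, Proposition~\ref{prop:realizable:semisimple:algebras} to pin the Levi factor to $\slR{2}$, Proposition~\ref{prop:stable:solvable:Lie:algebras} to restrict the radical to $\{0\}$, $\R$, or $\gh_1$, and the abelian centralizer condition (Theorem~\ref{thm:commutative:centralizer:condition}) to rule out a trivial action on $\gh_1$. Your use of perfectness of $\slR{2}$ in the $\gr\cong\R$ case is just a more conceptual phrasing of the explicit commutator computation the paper carries out there.
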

	
	\begin{proof}
		The Levi-Mal'tsev theorem guarantees that any finite-dimensional Lie algebra $\g$ can be decomposed into a semidirect sum of a semisimple Lie algebra $\g_1$ and its radical $\gr:=\operatorname{rad}(\g)$ for a suitable representation $\pi:\g_1\to\operatorname{End}(\gr)$ \plainrefs{Kuzmin:1977}. According to Proposition~\ref{prop:realizable:semisimple:algebras}, the only semisimple Lie algebra that can be faithfully realized in $\hat{A}_1$ is $\slR{2}$. Furthermore, Proposition~\ref{prop:stable:solvable:Lie:algebras} shows that the only solvable Lie algebras that are stable under the actions $[x,\cdot]$ for every $x\in\slR{2}$ are $\{0\}$, $\R$, and the Heisenberg algebra $\gh_1$. Thus, the radical $\gr$ is one of these three algebras. One must now have that $\g$ can be realized in $\hat{A}_1$ if and only if $\g$ is isomorphic to one of the three following algebras: (i)~$\slR{2}\ltimes_\pi\{0\}$, (ii) $\slR{2}\ltimes_\pi\R$, or (iii) $\slR{2}\ltimes_\pi\gh_1$, for a suitable representation $\pi:\slR{2}\to\operatorname{End}(\gr)$. Consider these three cases separately:
		\begin{enumerate}[label = (\roman*)]
			\item $\boldsymbol{\g=\slR{2}\ltimes_\pi\{0\}}$: Since $\pi$ is a representation on the zero vector space it must be trivial, and the semidirect sum reduces to a direct sum as shown in Example~\ref{exa:semidirect:product:direct:sum:derivation:trivial}. Therefore, as shown in Proposition~\ref{prop:algebra:g2:space}, $\slR{2}\cong\slR{2}\oplus\{0\}$ can be faithfully realized in $\hat{A}_1$, for example via the subalgebra $\langle\hat{A}_1^2\rangle_{\mathrm{Lie}}$.
			\item $\boldsymbol{\g=\slR{2}\ltimes_\pi\R}$: Since $\R$ is a one-dimensional field, the linear endomorphism $\pi(u)\in\operatorname{End}(\R)$ must act as a scalar multiplication for every $u\in\slR{2}$. In other words, one must have $\pi(u)(v)=v\pi(u)(1)=\pi(u)(1) v\in\R$ for all $u\in\slR{2}$ and $v\in\R$, where $\pi(u)(1)\in\R$. Thus, one has by Definition~\ref{def:representation} that $\pi([u,v])(w)=[\pi(u),\pi(v)](w)=\pi(u)(\pi(v)(w))-\pi(v)(\pi(u)(w))=\pi(u)(w\pi(v)(1))-\pi(v)(w\pi(u)(1))=w\pi(v)(1)\pi(u)(1)-w\pi(u)(1)\pi(v)(1)=0$ for every $u,v\in\slR{2}$ and $w\in\R$, since $z,\pi(x)(1),\pi(y)(1)$ are real numbers that commute. We can now choose the standard basis $\{h,x,y\}$ of $\slR{2}$ and find $\pi(h)(z)=\pi([x,y])(z)=0$, $\pi(x)(z)=\pi([h,\frac{1}{2}x])(z)=0$, $\pi(y)(z)=\pi([\frac{1}{2}y,h])(z)=0$ for every $z\in\R$. Hence, by linearity, $\pi(g)(z)=0$ for every $g\in\slR{2}$ and $z\in\R$, thereby showing that the representation $\pi$ is trivial. The semidirect sum reduces therefore to a direct sum and one has $\slR{2}\ltimes_\pi\R=\slR{2}\oplus\R$.
			Therefore, as shown in Proposition~\ref{prop:algebra:g0:g2:space} this algebra can be faithfully realized in $\hat{A}_1$, for example as $\langle\hat{A}_1^0\oplus\hat{A}_1^2\rangle_{\mathrm{Lie}}$.
			\item $\boldsymbol{\g=\slR{2}\ltimes_\pi\gh_1}$: Assume $\pi$ is the trivial representation, i.e., $\pi(u)(v)=0$ for every $u\in \slR{2}$ and $v\in\gh_1$. We'll now use the standard basis $\{h,x,y\}$ of $\slR{2}$ and $\{p,q,z\}$ of $\gh_1$ and the standard Lie bracket on $\slR{2}\ltimes_\pi\mathcal{H}$ introduced in Definition~\ref{def:semidirect:product}. One must now have $(0,q),(0,p)\in C_{\slR{2}\ltimes_\pi\gh_1}((h,0))$ since $[(h,0),(0,g)]=([h,0],[0,g]+\pi(h)(g)-\pi(0)(0))=(0,0)$ for every $g\in\gh_1$ if $\pi$ is trivial. But then $[(0,q),(0,p)]=(0,[q,p])=(0,z)\neq0$. This is prohibited by the abelian centralizer condition, see Theorem~\ref{thm:commutative:centralizer:condition}. Therefore, as shown in Proposition~\ref{prop:algebra:g1:g2:space}, Theorem~\ref{thm:core:result}, and Proposition~\ref{prop:schroedinger:algebra}, the Schr\"odinger algebra $\mathcal{S}=\slR{2}\ltimes\gh_1$ can be faithfully realized in $\hat{A}_1$, for example via the subalgebra $\lie{\hat{A}_1^1\oplus\hat{A}_1^2}$. In this example, the representation is not trivial, but it is given by the commutator $\pi\equiv [\,\cdot\,,\,\cdot\,]$. 
		\end{enumerate}
		This concludes the proof.
	\end{proof}
	
	Notably, in Theorem~\ref{thm:nonsolvable:algebras:realizable}, it is shown that the Lie algebra $\slR{2}\ltimes_\pi\mathfrak{h}_1$ can be faithfully realized in $\hat{A}_1$. However, the only explicit constraint provided for the representation $\pi$, which determines the structure of the semidirect sum, is that it is nontrivial; no further conditions are specified.
	
	\begin{corollary}
		Every non-solvable Lie algebra $\g\subseteq\hat{A}_1$ that contains a free term $i(a^\dagg a+c)$ is either three-, four-, or six-dimensional.
	\end{corollary}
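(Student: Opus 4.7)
The plan is to combine Theorem~\ref{thm:all:physically:relevant:algebras:list} with the known structural nature of each of the listed algebras. Theorem~\ref{thm:all:physically:relevant:algebras:list} asserts that any non-abelian finite-dimensional subalgebra of $\hat{A}_1$ that contains a free Hamiltonian term $i(a^\dagg a + c)$ must be isomorphic to one of exactly four Lie algebras: the Schr\"odinger algebra $\mathcal{S}\cong\slR{2}\ltimes\gh_1$, the direct sum $\slR{2}\oplus\R$, the special linear algebra $\slR{2}$, or the Wigner-Heisenberg algebra $\wh_2$. Since a non-solvable algebra is, a fortiori, non-abelian, any non-solvable $\g\subseteq\hat{A}_1$ containing a free Hamiltonian falls into one of these four cases.

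The next step is to filter out the solvable candidates. By Definition~\ref{def:Wigner:Heisenberg:algebra:2} (and as explicitly noted in the statement of Theorem~\ref{thm:all:physically:relevant:algebras:list}), the Wigner-Heisenberg algebra $\wh_2$ is solvable; its commutator relations show that its derived algebra is contained in the abelian Lie algebra spanned by $\{q,p,z\}$, which is in turn nilpotent. Hence $\wh_2$ must be excluded. The remaining three candidates are non-solvable: $\slR{2}$ is simple (hence non-solvable) by Definition~\ref{def:sl2:algebra}; $\slR{2}\oplus\R$ has a non-solvable semisimple summand; and the Schr\"odinger algebra $\mathcal{S}\cong\slR{2}\ltimes\gh_1$ is non-solvable by Proposition~\ref{prop:schroedinger:algebra:no:nice:properties}.

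Reading off the dimensions then finishes the argument: $\dim\slR{2}=3$, $\dim(\slR{2}\oplus\R)=4$, and $\dim(\slR{2}\ltimes\gh_1)=6$, so any non-solvable Lie algebra $\g\subseteq\hat{A}_1$ containing a free Hamiltonian term must have dimension three, four, or six. There is no genuine obstacle in this argument, since all of the heavy lifting has already been done in Theorem~\ref{thm:all:physically:relevant:algebras:list}; the corollary is essentially a bookkeeping statement that extracts from the classification exactly those entries that fail to be solvable.
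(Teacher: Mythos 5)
Your proof is correct and follows essentially the same route as the paper, which likewise obtains the corollary directly from Theorem~\ref{thm:all:physically:relevant:algebras:list} (together with Corollary~\ref{cor:all:physically:relevant:finite:algebras} and Theorem~\ref{thm:nonsolvable:algebras:realizable}) by discarding the single solvable entry $\wh_2$ and reading off the dimensions of the remaining three algebras. One small slip: the derived algebra of $\wh_2$ is $\spn\{q,p,z\}\cong\gh_1$, which is nilpotent but not abelian (since $[q,p]=z\neq 0$); this does not affect your conclusion that $\wh_2$ is solvable.
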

	
	\begin{proof}
		This is a simple consequence of Corollary~\ref{cor:all:physically:relevant:finite:algebras}, Theorem~\ref{thm:all:physically:relevant:algebras:list} and Theorem~\ref{thm:nonsolvable:algebras:realizable}.
	\end{proof}
	
	\begin{conjecture}
		The Schr\"odinger algebra $\mathcal{S}=\slR{2}\ltimes\gh_1$ admits a unique faithful realization as a subalgebra of the skew-hermitian Weyl algebra $\hat{A}_1$, up to automorphism. This realization is the one explicitly constructed in Proposition~\ref{prop:schroedinger:algebra}, no other subspace of $\hat{A}_1$ is isomorphic to $\mathcal{S}$, when equipped with the canonical commutation relation $[a,a^\dagg]=1$.
	\end{conjecture}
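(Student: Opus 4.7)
\begin{checkproof}[Proof sketch]
The plan is to reduce the conjecture to Theorem~\ref{thm:all:physically:relevant:algebras:list} by producing, inside the image of any faithful realization $\Phi\colon\mathcal{S}\to\hat{A}_1$, an element of the form $i(a^{\dagg}a+c)$. Fix the basis $\{h,x,y,q,p,z\}$ of Definition~\ref{def:Schroedinger:algebra}. In the concrete model of Proposition~\ref{prop:schroedinger:algebra} the combination $\tilde{y}-\tilde{x}$ evaluates to $i(a^{\dagg}a+\tfrac{1}{2})$, and intrinsically $y-x$ is a compact (elliptic) Cartan element of $\slR{2}\subseteq\mathcal{S}$: the eigenvalues of $\operatorname{ad}(y-x)\big|_{\slR{2}}$ are purely imaginary, a property preserved by any faithful realization.

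The key step is to show that $\Phi(y-x)$ is conjugate, via an inner automorphism of $\hat{A}_1$, to $i(a^{\dagg}a+c)$ for some $c\in\R$. By Proposition~\ref{prop:realizable:semisimple:algebras} $\Phi(\slR{2})\cong\slR{2}$, and the defining brackets $[\Phi(y-x),\Phi(h)]=2\Phi(y+x)$ and $[\Phi(y-x),\Phi(y+x)]=-2\Phi(h)$ show that $\operatorname{ad}(\Phi(y-x))$ acts on the two-dimensional space $\spn\{\Phi(h),\Phi(y+x)\}$ as a rotation with eigenvalues $\pm 2i$. Combining this structural constraint with Lemma~\ref{lem:commutator:form} and Proposition~\ref{prop:thm:25:equivalent}, one argues degree by degree that any contribution to $\Phi(y-x)$ of top degree $d\geq 3$ can be cancelled by conjugation with $\exp$ of a suitable element of $\hat{A}_1$; iteration brings $\Phi(y-x)$ into $\hat{A}_1^{\leq 2}$, after which a final symplectic rotation from $\operatorname{Ad}(\exp(\hat{A}_1^{2}))$ normalises it to $i(a^{\dagg}a+c)$.

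Once this reduction is achieved, $\Phi(\mathcal{S})$ is a six-dimensional non-abelian subalgebra of $\hat{A}_1$ containing the free Hamiltonian $i(a^{\dagg}a+c)$. Theorem~\ref{thm:all:physically:relevant:algebras:list} then restricts $\Phi(\mathcal{S})$ to be isomorphic to one of $\mathcal{S}$, $\slR{2}\oplus\R$, $\slR{2}$, or $\wh_2$; only $\mathcal{S}$ has dimension six. Inspection of the proof of that theorem shows that the unique six-dimensional subalgebra realising $\mathcal{S}$ is exactly $\lie{\hat{A}_1^{\leq 2}}$, which is the Schr\"odinger realisation of Proposition~\ref{prop:schroedinger:algebra}. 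The hard part is the key step above: establishing the conjugacy of a generic elliptic Cartan element to the canonical number operator requires a careful description of the orbits of the inner automorphism group of $\hat{A}_1$ (essentially the metaplectic group extended by displacements), which is not explicitly developed in the present paper and constitutes the technical core of the argument.
\end{checkproof}
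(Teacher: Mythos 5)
This statement is left as an open \emph{conjecture} in the paper; there is no proof of it in the text to compare against, so your proposal must stand on its own. As a strategy it is sensible --- reduce to Theorem~\ref{thm:all:physically:relevant:algebras:list} by showing that the image of the elliptic Cartan element $y-x$ (which is $i(a^{\dagg}a+\tfrac12)$ in the standard model, and whose adjoint action on $\spn\{h,y+x\}$ has eigenvalues $\pm2i$, as you correctly compute) can be normalised to a free Hamiltonian. But the proposal does not close the conjecture, and the gap is not merely one of missing detail.

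The core step, as you describe it, relies on conjugating $\Phi(y-x)$ by $\exp(v)$ for suitable $v\in\hat{A}_1$ to strip off its components of degree $d\geq 3$. This does not work as stated: by Lemma~\ref{important:monomial:commutator:lemma}, $\operatorname{ad}(v)$ for $\deg(v)\geq 3$ strictly \emph{increases} degree, so $\exp(\operatorname{ad}(v))$ applied to a polynomial produces a formal series of unboundedly growing degree rather than an element of $\hat{A}_1$. Only locally nilpotent adjoint actions (linear/affine elements, polynomials in $a^{\dagg}$ alone, etc.) exponentiate to automorphisms of the polynomial Weyl algebra, and the paper's own Proposition~\ref{prop:not:all:symplectic:matrices:generate:automorphisms:on:skew:hermitian:algebra} shows that even among linear symplectic maps only a proper subgroup preserves the skew-hermitian real form. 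So the ``cancel degree by degree'' iteration has no well-defined ambient group of automorphisms to run in, and the claim that every elliptic Cartan element of a realized $\slR{2}$ lies in a single orbit with $i(a^{\dagg}a+c)$ is precisely the unproved content of the conjecture, not a reduction of it. Note also that the paper's Example~\ref{exa:different:realizations:of:sl2} exhibits genuinely distinct subspaces realizing $\slR{2}$, and nonstandard degree-$3$ embeddings of $\mathfrak{sl}(2)$ into $A_1$ are known to exist, so ruling out higher-degree images of $y-x$ requires an actual argument (e.g., a degree-counting analysis of the relations $[\Phi(y-x),\Phi(h)]=2\Phi(y+x)$, $[\Phi(h),\Phi(y+x)]=-2\Phi(y-x)$ combined with the constraints imposed by the Heisenberg part), which is absent. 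Finally, even if the conjugacy step were established, your conclusion would give uniqueness only up to an automorphism of $\hat{A}_1$, whereas the second sentence of the conjecture asserts that no other subspace of $\hat{A}_1$ is isomorphic to $\mathcal{S}$; you would still need to show that the relevant automorphisms fix $\lie{\hat{A}_1^{\leq 2}}$ setwise.
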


	\section{Considerations\label{section:considerations}}
	We now provide final considerations regarding important questions that are left open, the implications for quantum dynamics, and the potential future scope of our work.

	\subsection{Non-nilpotent solvable subalgebras of the skew-hermitian Weyl algebra}
	In the previous classification of finite-dimensional subalgebras of $\hat{A}_1$ we did not provide sufficient results to classify non-nilpotent solvable Lie algebras within $\hat{A}_1$. In Section~\ref{section:scope:of:the:work} we showed that there exist at least two classes of non-nilpotent solvable Lie algebras that can be faithfully realized in $\hat{A}_1$. One family is denoted by $\tilde{\mathcal{L}}_n$ for $n\geq 2$, and it has the property that its first derived algebra is non-abelian. Notably, $\tilde{\mathcal{L}}_2\cong\wh_1$. The other family is denoted by $\gr(j_1,\ldots,j_n)$ with integers $0\leq j_1<\ldots< j_n$ and $n\geq 1$, and the first derived algebra is, however, abelian. Possible realizations include:
	\begin{align*}
		\tilde{\mathcal{L}}_n(n\geq 2)&\cong\left\langle\left\{\frac{1}{2}g_-^{2\iota_1},-\frac{1}{2}g_+^{\iota_1},i,g_-^{\iota_1},i\frac{1}{2!}(g_-^{\iota_1})^2,\frac{1}{3!}(g_-^{\iota_1})^3,\ldots,i^{n\mod 2}\frac{1}{(n-1)!}(g_-^{\iota_1})^{n-1}\right\}\right\rangle_{\textrm{Lie}},\\
		\gr(j_1,\ldots,j_n)&\cong\left\langle\left\{-\frac{1}{2}g_-^{2\iota_1},i^{j_{1-1}\mod2}(g_+^{\iota_1})^{j_1},\ldots,i^{j_{n-1}\mod 2}(g_+^{\iota_1})^{j_n}\right\}\right\rangle_{\textrm{Lie}}.
	\end{align*}
	However, we have also also seen that the Wigner-Heisenberg algebra $\wh_2$ is not isomorphic to the Wigner-Heisenberg algebra $\wh_1$. Since $[\wh_2,\wh_2]\cong\gh_1$, which is not abelian, it follows that $\wh_2$ is not isomorphic to any $\gr(j_1,j_2,j_3)$. Thus, there exists at least one finite-dimensional, non-nilpotent, and solvable Lie subalgebra of $\hat{A}_1$ that does not belong to either of the classes $\tilde{\mathcal{L}}_n$ or $\gr(j_1,\ldots,j_n)$. A natural question now arises, which we have not answered:
	\begin{quote}
		\textbf{Q}: \emph{Is $\wh_2$ the only additional finite-dimensional, non-nilpotent, and solvable Lie subalgebra of $\hat{A}_1$ that does not belong to either of the classes $\tilde{\mathcal{L}}_n$ or $\gr(j_1,\ldots,j_n)$? If there are more of such algebras, are they sporadic, or do they form an infinite family?}
	\end{quote}
	In the case of three-dimensional or two-dimensional Lie subalgebras that are solvable but not nilpotent it is straightforward to verify that those algebras possess an abelian first derived algebra, since $\dim_\R(\g)\geq \dim_\C(\g_\C)$ and every three-dimensional or two-dimensional non-nilpotent solvable Lie subalgebra of the complex Weyl algebra admits an abelian first derived algebra. We were furthermore unable to construct any additional four-dimensional solvable Lie algebras beyond those already discussed, nor were we able to identify such algebras in higher-dimensional cases. This leads us to propose the following conjecture:
	
	\begin{conjecture}
		The only finite-dimensional, non-nilpotent, and solvable Lie subalgebra of the skew-hermitian Weyl algebra $\hat{A}_1$ that is not covered by the two families $\tilde{\mathcal{L}}_n$ or $\gr(j_1,\ldots,j_n)$ is the sporadic Wigner-Heisenberg algebra $\wh_2$.
	\end{conjecture}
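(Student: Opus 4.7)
The plan is to combine the Tanasa-Stora-Tanasa classification of finite-dimensional non-abelian complex subalgebras of $A_1$ with the abelian centralizer condition (Theorem~\ref{thm:commutative:centralizer:condition}) and the nilpotent rigidity result (Theorem~\ref{thm:classification:of:all:nilpotent:subalgberas}), reducing the conjecture to a case-by-case classification of real forms. Given a non-nilpotent, finite-dimensional, solvable subalgebra $\g\subseteq\hat{A}_1$, the complexification $\g_\C:=\g+i\g\subseteq A_1$ inherits non-nilpotency and solvability from $\g$ via Proposition~\ref{prop:complexification:Invariant:properties}. The Tanasa-Stora-Tanasa classification then forces $\g_\C\cong\tilde{\mathcal{L}}_{n,\C}$ for some $n\geq 2$ or $\g_\C\cong\gr_\C(j_1,\ldots,j_n)$ for a strictly increasing tuple of non-negative integers. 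The conjecture therefore reduces to determining, for each complex candidate, which real forms can sit faithfully inside $\hat{A}_1$.

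The abelian-derived case $\g_\C\cong\gr_\C(j_1,\ldots,j_n)$ is the simpler one. Pick any $\hat{e}_0\in\g\setminus[\g,\g]$ and consider $D_\R:=\operatorname{ad}(\hat{e}_0)|_{[\g,\g]}$, a real linear endomorphism of the abelian real vector space $[\g,\g]$. Its complexification $D_\C$ is conjugate to $c\cdot\operatorname{diag}(-j_1,\ldots,-j_n)$ for some $c\in\C^\ast$; since $D_\C$ arises from a real operator, its spectrum is closed under complex conjugation, which forces $c/\bar c\in\{+1,-1\}$. The imaginary option $c\in i\R$ would require $\{-j_k\}=\{j_k\}$ as multisets, which is impossible for a strictly increasing tuple of non-negative integers (in which at most one entry can be zero). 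Therefore $c\in\R^\ast$, so $D_\R$ is $\R$-diagonalizable with distinct real eigenvalues; extracting one real eigenvector per eigenvalue rebuilds the standard realization $\gr(j_1,\ldots,j_n)$ displayed in Section~\ref{section:scope:of:the:work}, and no other real form arises.

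The nilpotent-derived case $\g_\C\cong\tilde{\mathcal{L}}_{n,\C}$ is the substantive one. The derived algebra $[\g,\g]$ is a non-abelian nilpotent subalgebra of $\hat{A}_1$ of dimension $n+1$, so Theorem~\ref{thm:classification:of:all:nilpotent:subalgberas} forces it to be abstractly isomorphic to $\mathcal{L}_n$ with canonical basis $\{y_0,\ldots,y_{n-1},x\}$. Writing $\g=\R\hat{e}_0\oplus[\g,\g]$ and $D_\R:=\operatorname{ad}(\hat{e}_0)|_{[\g,\g]}$, the complexification $D_\C$ is conjugate to $c\cdot D$ for some $c\in\C^\ast$, where $D$ is the standard graded derivation of $\mathcal{L}_{n,\C}$ with spectrum $S_n=\{1,0,-1,-2,\ldots,-(n-1)\}$ (inner derivations of $\mathcal{L}_{n,\C}$ are nilpotent and contribute nothing to this semisimple part). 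The same conjugation-closure requirement again pins down $c/\bar c\in\{+1,-1\}$, and the imaginary option demands $S_n=-S_n$ as multisets. This symmetry holds if and only if $n=2$, because $S_2=\{1,0,-1\}$ is negation-invariant while for $n\geq 3$ the value $-2\in S_n$ has no partner $+2\in S_n$. For $n\geq 3$ the operator $D_\R$ is therefore $\R$-diagonalizable with distinct real eigenvalues, forcing $\g\cong\tilde{\mathcal{L}}_n$. For $n=2$ the imaginary rescaling $c\in i\R$ produces a genuinely distinct real form on which $\operatorname{ad}(\hat{e}_0)$ has eigenvalues $\pm i,0$, namely $\wh_2$, realized explicitly by $\lie{\hat{A}_1^0\oplus\hat{A}_1^1}$ (Proposition~\ref{prop:algebra:go:g1:space}).

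The main obstacle is upgrading the spectrum-based reduction to a rigorous classification: real forms of $\tilde{\mathcal{L}}_{n,\C}$ are in bijection with $\operatorname{Aut}(\tilde{\mathcal{L}}_{n,\C})$-conjugacy classes of antilinear involutive automorphisms, and the heuristic that only $c/\bar c$ matters must be verified by computing the automorphism group and its action on the space of involutions. This should be tractable because $\operatorname{Aut}(\tilde{\mathcal{L}}_{n,\C})$ respects the weight decomposition of $\operatorname{ad}(e_0)$, so any automorphism acts by a permutation of weights combined with rescaling on each weight space; the only weight-set-preserving permutation of $S_n$ beyond the identity is negation, which survives only for $n=2$. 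Once this abstract classification of antilinear involutions is pinned down, one must cross-check that each abstract real form admits a faithful realization in $\hat{A}_1$, which for $\tilde{\mathcal{L}}_n$ and $\wh_2$ is already exhibited explicitly; any hypothetical additional real form, which our argument predicts cannot exist, would need to be excluded by invoking the abelian centralizer condition of Theorem~\ref{thm:commutative:centralizer:condition} directly. The sporadic appearance of $\wh_2$ is then structurally explained: $n=2$ is the unique instance in the $\tilde{\mathcal{L}}_{n,\C}$ family whose characteristic spectrum is invariant under $\lambda\mapsto-\lambda$, which is exactly the symmetry needed to accommodate an imaginary rescaling and thereby a second, non-split real form realizable in $\hat{A}_1$.
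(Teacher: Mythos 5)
The statement you are addressing is left as an open conjecture in the paper --- no proof is given there --- so there is nothing to compare against; I can only assess your strategy on its own terms. The overall architecture (complexify, invoke the Tanasa--Stora--Tanasa classification to pin $\g_\C$ down to $\tilde{\mathcal{L}}_{n,\C}$ or $\gr_\C(j_1,\ldots,j_n)$, then classify the real forms realizable in $\hat{A}_1$) is sound and is probably the natural route, and the spectral observation that $n=2$ is the unique member of the $\tilde{\mathcal{L}}_{n,\C}$ family whose $\operatorname{ad}(e_0)$-spectrum $\{1,0,-1,\ldots,-(n-1)\}$ is negation-invariant is exactly the right structural explanation for why $\wh_2$ is sporadic. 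But, as you concede yourself, this is not yet a proof: the passage from ``the eigenvalue multiset of $\operatorname{ad}(\hat e_0)$ on $[\g,\g]_\C$ is $c\cdot S_n$ with $c$ real (resp.\ imaginary)'' to ``$\g\cong\tilde{\mathcal{L}}_n$ (resp.\ $\g\cong\wh_2$)'' requires reconstructing the full real bracket structure --- in particular showing that after a real change of basis the constants in $[v_1,v_j]=c_jv_{j+1}$ can all be normalized to $1$, and that no further conjugacy class of antilinear involutions of $\tilde{\mathcal{L}}_{n,\C}$ survives --- and that unexecuted computation is essentially the content of the conjecture rather than a routine verification.

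Two concrete technical slips. First, your justification that $D_\C$ is conjugate to $c\cdot D$ (``inner derivations of $\mathcal{L}_{n,\C}$ are nilpotent and contribute nothing to this semisimple part'') is not valid as stated: $\operatorname{ad}(\lambda e_0+w)=\lambda\operatorname{ad}(e_0)+\operatorname{ad}(w)$ is not a Jordan decomposition, because the two summands do not commute ($[\operatorname{ad}(e_0),\operatorname{ad}(w)]=\operatorname{ad}([e_0,w])$ is generally nonzero). The correct argument is Lie's theorem: the weights of the adjoint representation of a complex solvable Lie algebra vanish on the derived algebra, so the characteristic polynomial of $\operatorname{ad}(x)$ depends only on $x$ modulo $[\g_\C,\g_\C]$; since $S_n$ has distinct entries, equality of spectra then upgrades to conjugacy. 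Second, in the $\gr_\C(j_1,\ldots,j_n)$ case with $j_1=0$ the quotient $\g/[\g,\g]$ is two-dimensional, so ``pick any $\hat e_0\in\g\setminus[\g,\g]$'' can return a central element with $D_\R=0$; you must choose $\hat e_0$ with non-nilpotent adjoint (which exists because $\g$ is non-nilpotent) and track the central direction separately. Neither slip is fatal, but together with the missing automorphism-group and involution computation --- and a check against the known classification of four-dimensional real Lie algebras that $\wh_1$ and $\wh_2$ are the \emph{only} real forms of $\tilde{\mathcal{L}}_{2,\C}$ --- they leave the proposal as a credible programme rather than a proof, which is consistent with the paper's own decision to state the result as a conjecture.
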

	\noindent
	We leave it to future work to address the open questions put forward above.

	\subsection{Arbitrary linear combinations of single monomials: the Igusa condition}
	Determining all finite-dimensional, non-nilpotent, solvable Lie subalgebras of $\hat{A}_1$ is a major quest. An aid to such endeavour is to identify a list of sufficient or necessary conditions under which two elements from $\hat{A}_1$ generate an infinite-dimensional Lie algebra. A few dacades ago, Igusa proved a result that yields such a sufficient condition for two elements of the real or complex Weyl algebra of $n$-modes to generate an infinite-dimensional Lie algebra \plainrefs{Igusa1981}. The proof of that result can be adapted with minor modifications for the skew-hermitian Weyl algebra, which is the case of interest here. Details on this extension can be found in Appendix~\ref{App:Igusa}. Here we provide a summary of the procedure to obtain the aforementioned extension. The idea is simple: one first determines a condition or criterion for two elements to generate a generic Commutator Chain that contains elements of strictly increasing degree. Then, one generalizes this conditions by considering all elements that can be transformed into the desired from using a suitable Lie-algebra isomorphism. This leads to the following result:
	
	\begin{theorem}[Igusa condition]\label{Igusa:condition}
		Let $e_1,e_2\in\hat{A}_1$ with degrees $d_1:=\deg(e_1)>2$ and $d_2:=\deg(e_2)>2$, where the two elements are both of the form:
		\begin{align*}
			e_j&\upto{d_j}\sum_{k=0}^{\lceil\frac{d_j-1}{2}\rceil}c_k^{(j)}g_+^{(d_j-k)\iota_1+k\iota_2}+\sum_{k=0}^{\lfloor\frac{d_j-1}{2}\rfloor}\hat{c}_k^{(j)}g_-^{(d_j-k)\iota_k+k\iota_2}.
		\end{align*}
		Then, the Lie algebra $\lie{\{e_1,e_2\}}$ is infinite dimensional if there exist two scalars $\lambda_1,\,\lambda_2$, and a complex symplectic matrix $\boldsymbol{\sigma}\in\operatorname{Sp}(2,\C)$ such that the following two equations have at least one solution:
		\begin{align*}
			1-\lambda_1\alpha_0^{(1)}\alpha_0^{(2)}&=0,\;&\;1-\lambda_2\left(d_2 \alpha_1^{(1)}\alpha_0^{(2)}-d_1\alpha_0^{(1)}\alpha_1^{(0)}\right)&=0,
		\end{align*}
		where the coefficients are defined as:
		{\small
			\begin{align*}
				\alpha_0^{(j)}&:=\sum_{k=0}^{d_j} f_k^{(j)}\sigma_{12}^k\sigma_{22}^{d_j-k},\;&\;\alpha_1^{(j)}&:=\sum_{k=0}^{d_j}\left(k \sigma_{11}\sigma_{12}^{k-1}\sigma_{22}^{d_j-k}+(d_j-k)\sigma_{21}\sigma_{12}^k\sigma_{22}^{d_j-k-1}\right),
				\quad
				\text{with}
				\quad
				f_k^{(j)}&:=\left\{\begin{matrix}
					ic_k^{(j)}+\hat{c}_k^{(j)}& \text{ if }k<d_j/2\\
					ic_k^{(j)}& \text{ if }k=d_j/2\\
					ic_k^{(j)}-\hat{c}_k^{(j)}& \text{ if }k>d_j/2
				\end{matrix}\right..
			\end{align*}
		}
	\end{theorem}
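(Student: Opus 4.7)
The plan is to establish the Igusa condition in two stages: first prove a ``canonical form'' version of the statement by means of the generic Commutator Chain machinery from Definition~\ref{Commutator:Chain:General}, and then pull this canonical form back to the stated hypotheses using an action of the symplectic group $\operatorname{Sp}(2,\C)$ on the complexification of $\hat{A}_1$. The coefficients $f_k^{(j)}$ are exactly the prefactors that appear when the basis monomials $g_+^{(d_j-k)\iota_1 + k\iota_2}$ and $g_-^{(d_j-k)\iota_1 + k\iota_2}$ are rewritten in the canonical (normal-ordered) complex basis $(a^\dagg)^k a^{d_j-k}$ and $(a^\dagg)^{d_j-k}a^k$; the three cases $k < d_j/2$, $k = d_j/2$, $k > d_j/2$ reflect the three multi-index regimes in Definition~\ref{eqn:def:monomials}.

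First, I would establish the canonical form: suppose $\tilde e_1,\tilde e_2 \in \hat{A}_{1,\C}$ have the property that their leading-degree parts are dominated by a single monomial of the type $a^{d_j}$ with non-zero coefficient $\alpha_0^{(j)}$ and a sub-leading monomial of type $(a^\dagg) a^{d_j-1}$ with coefficient proportional to $\alpha_1^{(j)}$. Form the generic Commutator Chain $C^{\mathrm{gen}}$ with seed $u^{(0)} = \tilde e_1$ and auxiliary sequence $s^{(\ell)} \equiv \tilde e_2$. Using Lemma~\ref{lem:commutator:form} combined with Lemma~\ref{lem:generic:chain:behavior} and Proposition~\ref{prop:equivalnce:generic:com:chains}, one can compute the leading-degree term of $u^{(\ell)}$ explicitly. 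Its coefficient will be a polynomial in $\alpha_0^{(1)},\alpha_0^{(2)},\alpha_1^{(1)},\alpha_1^{(2)}$ whose non-vanishing at every $\ell$ is controlled precisely by the two non-degeneracy conditions $1 - \lambda_1 \alpha_0^{(1)}\alpha_0^{(2)} = 0$ and $1 - \lambda_2(d_2\alpha_1^{(1)}\alpha_0^{(2)} - d_1 \alpha_0^{(1)}\alpha_1^{(2)}) = 0$; the scalars $\lambda_j$ are free rescaling parameters that absorb the product of combinatorial prefactors arising in the iterated commutator. Whenever solutions exist, $\deg(u^{(\ell)})$ grows strictly, so $\lie{\{\tilde e_1,\tilde e_2\}}$ contains infinitely many linearly independent elements.

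Second, I would promote this to the general statement via a change of creation and annihilation operators. The group $\operatorname{Sp}(2,\C)$ acts on the complexified Weyl algebra by Bogoliubov transformations $(a,a^\dagg) \mapsto (\sigma_{11} a + \sigma_{12} a^\dagg,\; \sigma_{21}a + \sigma_{22} a^\dagg)$, which preserve the canonical commutation relation and thus induce Lie-algebra automorphisms of $A_1$. Applying such a transformation to the leading-degree part of $e_j$ yields the polynomial $\sum_k f_k^{(j)}(\sigma_{11} a + \sigma_{12} a^\dagg)^k(\sigma_{21}a + \sigma_{22} a^\dagg)^{d_j-k}$; expanding this into normal-ordered monomials modulo terms of lower degree identifies $\alpha_0^{(j)}$ as the coefficient of $a^{d_j}$ and $\alpha_1^{(j)}$ as the coefficient of $(a^\dagg) a^{d_j-1}$. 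Thus the conditions of the theorem assert exactly that one can find a symplectic transformation after which the transformed elements $\tilde e_j$ satisfy the canonical-form hypotheses, reducing to the first stage. Finally, infinite-dimensionality of the complex Lie algebra $\lie{\{e_1,e_2\}}_\C$ transfers back to the real algebra by Proposition~\ref{prop:complexification:Invariant:properties}, since $\lie{\{e_1,e_2\}}_\C = \lie{\{e_1,e_2\}} + i\lie{\{e_1,e_2\}}$.

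The hard part will be the explicit combinatorial verification in the first step: tracking the leading coefficient of $u^{(\ell)}$ through $\ell$-fold iterated commutators and identifying it with a product/recursion built out of $\alpha_0^{(j)}$ and $\alpha_1^{(j)}$ alone. Two subtleties complicate this: (i) the pure-annihilation monomial $a^{d_j}$ commutes with itself, so the non-vanishing of the \emph{next} coefficient $\alpha_1^{(j)}$ is what actually drives the chain, which explains why two simultaneous conditions are needed rather than one; and (ii) one must ensure that no accidental cancellation occurs between the two competing contributions coming from the $\hat\epsilon$ and $\hat\epsilon'$ terms of Lemma~\ref{lem:commutator:form}, which is precisely the role of the Wronskian-like combination $d_2\alpha_1^{(1)}\alpha_0^{(2)} - d_1\alpha_0^{(1)}\alpha_1^{(2)}$. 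A secondary obstacle is the checking that the $\operatorname{Sp}(2,\C)$-action indeed descends to an automorphism of the Lie-algebra structure used throughout this work (as opposed to merely the associative structure of $A_1$); this follows from the fact that the Bogoliubov transformation is an associative-algebra automorphism and hence automatically preserves commutators, but its interaction with the real form $\hat A_1$ must be handled carefully, which is why the conclusion is phrased for the (complex) Lie closure and then transferred back.
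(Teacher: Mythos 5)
Your proposal is correct and follows essentially the same route as the paper: a canonical-form criterion established through a generic Commutator Chain whose leading coefficients are controlled by the non-vanishing of $\alpha_0^{(1)}\alpha_0^{(2)}$ and of the Wronskian-like combination $d_2\alpha_1^{(1)}\alpha_0^{(2)}-d_1\alpha_0^{(1)}\alpha_1^{(2)}$ (this is the content of Lemma~\ref{lem:Igusa:5} and Example~\ref{exa:Igusa:lemma:applied}), followed by a reduction of the general case to the canonical one via the Lie-algebra isomorphism induced by $\boldsymbol{\sigma}\in\operatorname{Sp}(2,\C)$ (Corollary~\ref{cor:extension:isomorphism:enveloping:algebra}), with the leading-order expansion of $\Phi(e_j)$ identifying $\alpha_0^{(j)}$ and $\alpha_1^{(j)}$ exactly as you describe. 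The only cosmetic deviation is your final transfer back to the real algebra through the complexification; the paper avoids this by noting that $\Phi$ restricted to $\hat{A}_1$ is already an isomorphism onto a real (not necessarily skew-hermitian) subalgebra of $A_1$, though your version also works since an infinite-dimensional complexification trivially forces the real algebra to be infinite dimensional.
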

	
	Note that Theorem~\ref{Igusa:condition} is a reformulated version of the original Igusa condition. A version more in line with the original one is given by Theorem~\ref{thm:Igusa:1} in Appendix~\ref{App:Igusa}.
	
	\begin{proof}
		Let $e_1,e_2\in\hat{A}_1$ be as stated in the claim. Then, each element can be expressed as:
		\begin{align*}
			e_j\upto{d_j}\left(ic_0^{(j)}+\hat{c}_0^{(j)}\right)a^{d_j}+\left(ic_1^{(j)}+\hat{c}_1^{(j)}\right)a^\dagg a^{d_j-1}+\ldots+\left(ic_1^{(j)}-\hat{c}_1^{(j)}\right)(a^\dagg)^{d_j-1}a+\left(ic_0^{(j)}-\hat{c}_0^{(j)}\right)(a^\dagg)^{d_j}.
		\end{align*}
		For convenience, we define $\hat{c}_{d_j/2}^{(j)}:=0$ when $d_j$ is even. This allows us to write:
		\begin{align}\label{eqn:def:f:k:j:help:Igusa}
			e_j\upto{d_j}\sum_{k=0}^{d_j}f_k^{(j)}(a^\dagg)^k a^{d_j-k}\qquad\text{with}\qquad f_k^{(j)}:=\left\{\begin{matrix}
				ic_k^{(j)}+\hat{c}_k^{(j)}&,\text{ if }k\leq d_j/2\\
				ic_k^{(j)}-\hat{c}_k^{(j)}&,\text{ if }k>d_j/2
			\end{matrix}\right..
		\end{align}
		Example~\ref{exa:Igusa:lemma:applied} in Appendix~\ref{App:Igusa} informs us that two elements of the full algebra with the properties just listed above generate an infinite-dimensional Lie algebra if the set of requirements
		\begin{align*}
			\hat{c}_0^{(1)}\hat{c}_0^{(2)}&\neq c_0^{(1)}c_0^{(2)},\;&\;\hat{c}_0^{(2)}c_0^{(1)}&\neq -\hat{c}_0^{(1)}c_0^{(2)},\\
			d_1\left(\hat{c}_0^{(1)}\hat{c}_1^{(2)}-c_0^{(1)}c_1^{(2)}\right)&\neq d_2\left(\hat{c}_1^{(1)}\hat{c}_0^{(2)}-c_1^{(1)}c_0^{(2)}\right),\;&\;d_1\left(c_0^{(1)}\hat{c}_1^{(2)}+\hat{c}_0^{(1)}c_1^{(2)}\right)&\neq d_2\left(c_1^{(1)}\hat{c}_0^{(2)}+\hat{c}_1^{(1)}c_0^{(2)}\right)
		\end{align*}
		is met.
		The set of steps necessary to obtain these equation is long and not illuminating. Therefore, we refer to Appendix~\ref{App:Igusa} for a detailed derivation where the interested reader can also find the clarifying Example~\ref{exa:Igusa:lemma:applied}, which is a reformulation of Lemma~\ref{lem:Igusa:5} to the current notation. These conditions can be reformulated, using the coefficients $f_k^{(j)}$ defined in \eqref{eqn:def:f:k:j:help:Igusa}, as:
		\begin{align*}
			1-\lambda_1 f_0^{(1)}f_0^{(2)}&=0,\;&\;1-\lambda_2\left(d_2 f_1^{(1)} f_0^{(2)}-d_1 f_0^{(1)} f_1^{(2)}\right)&= 0
		\end{align*}
		for some scalars $\lambda_1,\lambda_2$. If these equations have a solution, then the Lie algebra 
		$\g:=\lie{\{e_1,e_2\}}$ is infinite dimensional (cf. Lemma~\ref{lem:Igusa:5}).  By virtue of Corollary~\ref{cor:extension:isomorphism:enveloping:algebra} any symplectic matrix $\boldsymbol{\sigma}\in\operatorname{Sp}(2,\C)$ induces a Lie-algebra isomorphism for any subalgebra of the Weyl algebra. Consider therefore the linear map $\phi:\spn\{a^\dagg,a,1\}\to\spn\{a^\dagg,a,1\}$ defined by the action $\phi(a^\dagg):=\sigma_{11}a^\dagg+\sigma_{12}a$, $\phi(a):=\sigma_{21}a^\dagg+\sigma_{22}a$, and $\phi(1):=1$.
		Denote the extended associative endomorphism by $\Phi$. For more details, see Lemmata~\ref{lem:extension:automorphism:enveloping:algebra} and ~\ref{lem:Igusa:0}, as well as Corollary~\ref{cor:extension:isomorphism:enveloping:algebra}. It follows that
		\begin{align*}
			\Phi\left((a^\dagg)^{\ell}a^{d_j-\ell}\right)&=\left(\sigma_{11} a^\dagg+\sigma_{12} a\right)^\ell\left(\sigma_{21}a^\dagg+\sigma_{22}a\right)^{d_j-\ell}\nonumber\\
			&\upto{d_j}\sigma_{12}^\ell\sigma_{22}^{d_j-\ell}a^{d_j}+\left(\ell\sigma_{11}\sigma_{12}^{\ell-1}\sigma_{22}^{d_j-\ell}+(d_j-\ell)\sigma_{21}\sigma_{12}^\ell \sigma_{22}^{d_j-\ell-1}\right) a^\dagg a^{d_j-1}+\ldots,
		\end{align*}
		where we omitted to write any term that is of the form $c(a^\dagg)^x a^y$ with $x\geq2$. Hence, by the linearity of $\Phi$ we have
		\begin{align*}
			\Phi(e_j)\upto{d_j}\sum_{k=0}^{d_j} f_k^{(j)} \sigma_{12}^k\sigma_{22}^{d_j-k}a^{d_j}+\sum_{k=0}^{d_j}\left(k\sigma_{11}\sigma_{12}^{k-1}\sigma_{22}^{d_j-k}+(d_j-k)\sigma_{21}\sigma_{12}^k \sigma_{22}^{d_j-k-1}\right) a^\dagg a^{d_j-1}+\ldots.
		\end{align*}
		Since $\Phi$ is an isomorphism into another (not necessarily skew-hermitian) real subalgebra of the complex Weyl algebra, the same conditions for infinite-dimensionality apply now to $\Phi(e_1)$ and $\Phi(e_2)$. This concludes this proof.
	\end{proof}
	
	We have previously established several sufficient conditions under which two elements from the skew-hermitian Weyl algebra $\hat{A}_1$ generate an infinite-dimensional Lie subalgebra, given for example in Lemma~\ref{lem:infinite:algebra:containg:support:perp:and:equal:elements}, Proposition~\ref{prop:algebra:g=:gperp:space}, or Proposition~\ref{prop:algebra:g1:g2:g=:space}. These results suggest that the same conceptual framework as the one employed within the Igusa condition can be applied here. Let us, for example, consider a simplified version of Lemma~\ref{lem:infinite:algebra:containg:support:perp:and:equal:elements}.
	
	\begin{corollary}
		Let $\mathcal{G}$ be a set containing two elements $e_1=\sum_{p\in\mathcal{P}}c_pg_{\sigma_p}^{\gamma_p}\in\mathcal{G}$ and $e_2=\sum_{q\in\mathcal{Q}}\hat{c}_qg_{\hat{\sigma}_q}^{\hat{\gamma}_q}\in\mathcal{G}$. Then the Lie algebra $\g=\langle\mathcal{G}\rangle_{\mathrm{Lie}}$ infinite dimensional, if the following structural conditions are satisfied:
		\begin{itemize}
			\item[$\boldsymbol{e_1}$] \textbf{structure}: the element $e_1$ can be decomposed as $e_1=c_1g_+^{\tilde{\gamma}}+e_1^{\neq}+e_1^=$, where $c_1\neq0$, $\deg(e_1)=|\tilde{\gamma}|>\deg(e_1^=+e_1^{\neq})$, $\PP{e_1^{\neq}}{\hat{A}_1^0\oplus\hat{A}_1^=}=0$, $e_1^=\in\hat{A}_1^0\oplus\hat{A}_1^=$, and $|\tilde{\gamma}|\geq 4$.
			\item[$\boldsymbol{e_2}$]\textbf{structure}: the element $e_2$ can be decomposed as $e_2=e_2^{\neq}+e_2^=$, where $\deg(e^\neq)\geq \deg(e^=)$, $e_2^{\neq}\neq0$, $\PP{e_2^{\neq}}{\hat{A}_1^0\oplus\hat{A}_1^=}=0$, and $e_2^=\in\hat{A}_1^0\oplus\hat{A}_1^=$.
		\end{itemize}
	\end{corollary}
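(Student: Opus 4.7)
The plan is to derive this corollary from Lemma~\ref{lem:help:lem:infinite:algebra:containg:support:perp:and:equal:elements} together with the chain-equivalence Proposition~\ref{prop:equivalnce:generic:com:chains}, rather than by checking the more intricate hypotheses (in particular condition~(3)) of Lemma~\ref{lem:infinite:algebra:containg:support:perp:and:equal:elements} directly. I would introduce the generic Commutator Chain $\{u^{(\ell)}\}$ with seed $u^{(0)}=e_2$ and constant auxiliary $s^{(\ell)}\equiv e_1$ for all $\ell\in\N_{\geq0}$. By Definition~\ref{def:Lie:closure} every chain element lies in $\g=\lie{\mathcal{G}}$, so it suffices to show that the degrees $\deg(u^{(\ell)})$ grow without bound, which yields infinitely many linearly independent elements in $\g$.

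The first step is to verify that $e_2$ satisfies the hypothesis of Lemma~\ref{lem:help:lem:infinite:algebra:containg:support:perp:and:equal:elements} regardless of whether $\deg(e_2^\neq)>\deg(e_2^=)$ or $\deg(e_2^\neq)=\deg(e_2^=)$. Since $e_2^\neq\neq0$ and $\PP{e_2^\neq}{\hat{A}_1^0\oplus\hat{A}_1^=}=0$, every monomial appearing in its unique expansion has multi-index $\hat{\gamma}>\hat{\gamma}^\dagg$, and the degree assumption forces $\deg(e_2)=\deg(e_2^\neq)$. One can therefore select a distinguished index $q_*$ from the unique expansion of $e_2$ with $\hat{c}_{q_*}\neq0$, $|\hat{\gamma}_{q_*}|=\deg(e_2)$, and $\hat{\gamma}_{q_*}>\hat{\gamma}_{q_*}^\dagg$. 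Applying Lemma~\ref{lem:help:lem:infinite:algebra:containg:support:perp:and:equal:elements} with auxiliary monomial $g_+^{\tilde{\gamma}}$ extracted from the $e_1$-structure produces an auxiliary chain $\{v^{(\ell)}\}$ whose elements have degree exactly $d^{(\ell)}=\deg(e_2)+\ell(|\tilde{\gamma}|-2)$, strictly increasing in $\ell$ because $|\tilde{\gamma}|\geq4$.

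The next step transfers this degree growth to the chain of interest. From the $e_1$-structure one has $e_1\upto{|\tilde{\gamma}|}c_1 g_+^{\tilde{\gamma}}$ with $|\tilde{\gamma}|=\deg(e_1)$, so Proposition~\ref{prop:equivalnce:generic:com:chains} gives $u^{(\ell)}\upto{d^{(\ell)}}c_1^\ell\,v^{(\ell)}$, where the factor $c_1^\ell$ arises from the bilinearity of the commutator in the rescaled auxiliary $c_1 g_+^{\tilde{\gamma}}$. Since $v^{(\ell)}$ has degree exactly $d^{(\ell)}$ by the help-lemma and $c_1\neq0$, the top-degree parts of $u^{(\ell)}$ and $c_1^\ell v^{(\ell)}$ agree and are nonzero, forcing $\deg(u^{(\ell)})=d^{(\ell)}$. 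The collection $\{u^{(\ell)}\}_{\ell\in\N_{\geq0)}}$ therefore consists of pairwise linearly independent elements of $\g$, proving infinite dimensionality.

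The principal obstacle is one of coefficient bookkeeping: one must ensure that the lower-degree corrections $e_1^\neq+e_1^=$ in $e_1$ do not conspire at some step to cancel the top-degree monomial $c_{q_*}^{(\ell)}g_{(-1)^\ell\hat{\sigma}_{q_*}}^{\hat{\gamma}_{q_*}^{(\ell)}}$ predicted by the help-lemma. This is precisely what Proposition~\ref{prop:equivalnce:generic:com:chains} prevents, since it confines those corrections to degrees strictly below $d^{(\ell)}$. Bypassing Lemma~\ref{lem:infinite:algebra:containg:support:perp:and:equal:elements} via the help-lemma is exactly what makes this a clean corollary rather than a separate lemma, since it removes the need to case-analyze the intricate coefficient-matching condition~(3c).
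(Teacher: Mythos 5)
Your argument is correct, but it takes a different route from the paper, which disposes of this corollary in one line by citing Lemma~\ref{lem:infinite:algebra:containg:support:perp:and:equal:elements}: since the $e_2$-structure here contains no top-degree $g_+^{\tilde{\lambda}}$ component, condition (3a) of that lemma fails automatically, so condition (3) is vacuously satisfied and the lemma applies directly. You instead bypass the lemma entirely and reconstruct the relevant special case from its ingredients, namely Lemma~\ref{lem:help:lem:infinite:algebra:containg:support:perp:and:equal:elements} applied to the seed $e_2$ (whose hypotheses you correctly verify: $\deg(e_2)=\deg(e_2^{\neq})$ because the monomials of $e_2^{\neq}$ and $e_2^{=}$ lie in complementary subspaces and cannot cancel, so a distinguished index $q_*$ with $\hat{\gamma}_{q_*}>\hat{\gamma}_{q_*}^{\dagger}$ and $|\hat{\gamma}_{q_*}|=\deg(e_2)$ exists) together with Proposition~\ref{prop:equivalnce:generic:com:chains} to transfer the exact degree growth $d^{(\ell)}=\deg(e_2)+\ell(|\tilde{\gamma}|-2)$ from the model chain with auxiliary $c_1g_+^{\tilde{\gamma}}$ to the actual chain with auxiliary $e_1$. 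The degree-matching step is sound: $u^{(\ell)}\upto{d^{(\ell)}}c_1^{\ell}v^{(\ell)}$ with $\deg(c_1^{\ell}v^{(\ell)})=d^{(\ell)}$ forces $\deg(u^{(\ell)})=d^{(\ell)}$, which is strictly increasing because $|\tilde{\gamma}|\geq4$. What your route buys is self-containment and transparency --- it avoids the case analysis encoded in condition (3) of Lemma~\ref{lem:infinite:algebra:containg:support:perp:and:equal:elements}, which is irrelevant here --- at the cost of essentially retracing the first portion of that lemma's proof; the paper's citation is shorter but leaves the reader to check that the lemma's hypotheses (in particular the formal choice of $\tilde{\lambda}$ with $\hat{c}_1=0$) are met.
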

	
	\begin{proof}
		This is a direct consequence of Lemma~\ref{lem:infinite:algebra:containg:support:perp:and:equal:elements}.
	\end{proof}
	
	These conditions can equivalently be reformulated as follows. Let $e_1,e_2\in\hat{A}_1$ be two polynomials satisfying:
	\begin{enumerate}
		\item $d_1:=\deg(e_1)=2d_0$ with $d_0\in\N_{\geq2}$, and $e_1\upto{d_1} c_1 g_+^{d_0\tau}$ for the first element $e_1$.
		\item $e_2\upto{d_2}\sum_{k=0}^{d_2} f_k^{(2)} (a^\dagg)^k a^{d_2-k}$, with $d_2:=\deg(e_2)>0$ for the second element $e_2$, where $f_k^{(2)}\neq 0$ for at least one $k\in\{0,\ldots,d_2\}\setminus\{d_2/2\}$.
	\end{enumerate}
	Then, the Lie algebra $\g:=\lie{\{e_1,e_2\}}$ is infinite dimensional. 
	
	We now proceed and apply the central idea behind the Igusa condition by considering any pair of elements $\tilde{e}_1,\tilde{e}_2\in\hat{A}_1$ such that there exists a sympletic matrix $\boldsymbol{\sigma}\in\operatorname{Sp}(2,\C)$ that induces an isomorphism $\Phi$ for which $\Phi(\tilde{e}_1)$ and $\Phi(\tilde{e}_2)$ satisfy the conditions stated above. Then, the Lie algebra $\lie{\{\tilde{e}_1,\tilde{e}_2\}}$ is infinite dimensional. We can write this formally as follows:
	
	\begin{corollary}
		Let $\tilde{e}_1\in\hat{A}_1$ be an element of degree $d_1:=\deg(\tilde{e}_1)=2d_0$ with $d_0\in\N_{\geq2}$, and $\tilde{e}_2\in\hat{A}_1$ be another element with positive degree. Suppose there exists a symplectic matrix $\boldsymbol{\sigma}\in\operatorname{Sp}(2,\C)$ such that the induced Lie-algebra isomorphism $\Phi((a^\dagg)^\ell a^{d-\ell}):=(\sigma_{11} a^\dagg +\sigma_{12} a)^\ell(\sigma_{21}a^\dagg+\sigma_{22})^{d-\ell}$ satisfies $\Phi(\lie{\{\tilde{e}_1,\tilde{e}_2\}})\subseteq\hat{A}_1$, and the following hold:
		\begin{enumerate}
			\item The first element $\tilde{e}_1$ satisfies: $\Phi(\tilde{e}_1)\upto{d_1}ic_1 (a^\dagg)^{d_0}a^{d_0}$ with $c_1\in \R$.
			\item The second element $\tilde{e}_2$ satisfies $\Phi(\tilde{e}_2)\upto{d_2}\sum_{k=0}^{d_2}\hat{c}_k (a^\dagg)^k a^{d_2-k}$, where $\hat{c}_k\neq0$ for at least one $k\in\{0,\ldots,d_2\}\setminus\{d_2/2\}$.
		\end{enumerate}
		Then, the Lie algebra $\lie{\{\tilde{e}_1,\tilde{e}_2\}}$ is infinite dimensional.
	\end{corollary}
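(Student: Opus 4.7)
The strategy is to mirror the proof of the Igusa condition (Theorem~\ref{Igusa:condition}): transport the problem through the symplectic-induced isomorphism $\Phi$, apply the preceding corollary directly to $\Phi(\tilde{e}_1)$ and $\Phi(\tilde{e}_2)$, and then pull the infinite-dimensional conclusion back along $\Phi^{-1}$. The assumption $\Phi(\lie{\{\tilde{e}_1,\tilde{e}_2\}})\subseteq\hat{A}_1$ is precisely what is needed to make the preceding corollary applicable, since that corollary is phrased entirely in terms of the $\hat{A}_1$-decomposition.

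First, I would verify that $\Phi(\tilde{e}_1)$ satisfies the $e_1$ structural hypothesis. Indeed, $ic_1(a^\dagg)^{d_0}a^{d_0}=(c_1/2)g_+^{d_0\tau}$ is a monomial in $\hat{A}_1^=$ with multi-index $d_0\tau$ of total degree $|d_0\tau|=2d_0\geq 4$, matching the requirement $|\tilde{\gamma}|\geq 4$ exactly because $d_0\geq 2$. The remainder $\Phi(\tilde{e}_1)-(c_1/2)g_+^{d_0\tau}$ has degree strictly less than $d_1$ by definition of the relation $\simeq_{d_1}$, so it splits canonically as $e_1^{\neq}+e_1^=$ using the decomposition of $\hat{A}_1$ from Definition~\ref{def:vector:spaces:a:1:k}. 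Next, I would verify the $e_2$ structural hypothesis for $\Phi(\tilde{e}_2)$. Rewriting the highest-degree piece by pairing $(a^\dagg)^k a^{d_2-k}$ with $(a^\dagg)^{d_2-k}a^k$ produces linear combinations of $g_+^{(d_2-k)\iota_1+k\iota_2}$ and $g_-^{(d_2-k)\iota_1+k\iota_2}$ for every $k\neq d_2/2$; the possible unpaired term $k=d_2/2$ contributes only to $\hat{A}_1^0\oplus\hat{A}_1^=$. Since by assumption some $\hat{c}_k$ with $k\neq d_2/2$ is nonzero and the monomials $g_\pm^{\gamma}$ with $\gamma\notin\N_=^2$ are linearly independent in $\hat{A}_1$, the projection $e_2^{\neq}:=\mathcal{P}_{\hat{A}_1^1\oplus\hat{A}_1^2\oplus\hat{A}_1^\perp}(\Phi(\tilde{e}_2))$ does not vanish and has degree exactly $d_2\geq\deg(e_2^=)$, precisely the $e_2$ structural hypothesis of the preceding corollary.

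Having placed $\Phi(\tilde{e}_1)$ and $\Phi(\tilde{e}_2)$ into the hypotheses of the preceding corollary, I invoke it to conclude that $\lie{\{\Phi(\tilde{e}_1),\Phi(\tilde{e}_2)\}}$ is an infinite-dimensional subalgebra of $\hat{A}_1$. To finish, I would use that $\Phi$ is a Lie-algebra isomorphism onto its image (Corollary~\ref{cor:extension:isomorphism:enveloping:algebra}) and that any Lie-algebra homomorphism commutes with Lie closure, giving $\Phi(\lie{\{\tilde{e}_1,\tilde{e}_2\}})=\lie{\{\Phi(\tilde{e}_1),\Phi(\tilde{e}_2)\}}$. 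Injectivity of $\Phi$ then transports the dimension: $\dim\lie{\{\tilde{e}_1,\tilde{e}_2\}}=\dim\lie{\{\Phi(\tilde{e}_1),\Phi(\tilde{e}_2)\}}=\infty$.

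The only delicate point, and where I expect the bookkeeping to demand care, is the rewriting of the leading-degree sum of $\Phi(\tilde{e}_2)$ into the $g_\pm^\gamma$ basis. Individually the complex monomials $(a^\dagg)^k a^{d_2-k}$ are not skew-hermitian, and the coefficients $\hat{c}_k$ need not be real; what guarantees a well-defined $\hat{A}_1$-decomposition is exactly the hypothesis $\Phi(\tilde{e}_2)\in\hat{A}_1$, which imposes conjugacy relations forcing $\hat{c}_k$ and $\hat{c}_{d_2-k}$ to combine into real coefficients in front of $g_\pm^{(d_2-k)\iota_1+k\iota_2}$. Once these relations are recorded explicitly, the non-vanishing of at least one $\hat{c}_k$ with $k\neq d_2/2$ translates without loss into the non-vanishing of $e_2^{\neq}$, and the rest of the argument is a direct application of results already established.
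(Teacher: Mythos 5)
Your proposal is correct and follows essentially the same route the paper intends: the paper gives no explicit proof of this corollary, merely asserting that it follows by applying ``the central idea behind the Igusa condition'' (conjugate by the symplectic-induced isomorphism $\Phi$, apply the preceding corollary to $\Phi(\tilde{e}_1)$ and $\Phi(\tilde{e}_2)$, and transport infinite-dimensionality back through the isomorphism). Your write-up in fact supplies more detail than the paper does --- in particular the verification that the leading term of $\Phi(\tilde{e}_2)$ has non-vanishing projection onto $\hat{A}_1^1\oplus\hat{A}_1^2\oplus\hat{A}_1^\perp$ and that $c_1\neq 0$ is forced by degree preservation --- and these checks are sound.
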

	
	The specific equations that have to be solved by following the Igusa procedure can be computed explicitly when the need arises.
	Other sufficient conditions, such as those presented in Propositions~\ref{prop:algebra:g1:g2:g=:space} or~\ref{prop:algebra:g=:gperp:space}, can also be generalized in a similar fashion.

	\subsection{Nullity of finite-dimensional subalgebras}
	
	Another concept of interest in the context of Lie algebras is that of the nullity. We recall that the \textit{nullity} $\operatorname{Nul}(\g)$ of a Lie algebra $\g$ is defined as the minimal number of elements required to generate $\g$ \plainrefs{Knebelman:1935}. To be more precise:
	\begin{definition}[nullity]
		Let $\g$ be a finite-dimensional Lie algebra and $\mathcal{G}\subseteq\g$ a finite subset such that the Lie closure of $\mathcal{G}$ is $\g$, i.e., $\lie{\mathcal{G}}=\g$. Then,  we introduce the \emph{nullity} of $\g$, which is a key concept in the theory of Lie algebras \plainrefs{Alburquerque:1992}, defined as 
		\begin{align*}
			\operatorname{Nul}(\g):=\min\left\{|\mathcal{G}|\,\middle\mid\,\mathcal{G}\subseteq\g\text{ is a finite subset and }\lie{\mathcal{G}}=\g\right\}.
		\end{align*}
	\end{definition}
	Note that the nullity of a finite-dimensional Lie algebra is always a finite integer, as any basis $\mathcal{B}$ of $\g$ satisfies necessarily $\lie{\mathcal{B}}=\spn\{\mathcal{B}\}=\g$ and $|\mathcal{B}|=\dim(n)$. This implies that the nullity of a Lie algebra is bounded by its dimension, i.e., $\operatorname{Nul}(\g)\leq\dim(\g)$.
	We are particularly interested in finding the nullity of all non-abelian Lie subalgebras of $\hat{A}_1$, which are a particular class of algebras that we investigated in this work. We are able to obtain the following result:
	
	\begin{theorem}[Nullity of non-abelian subalgebras of $\hat{A}_1$]\label{thm:nullity}
		The Lie algebras $\slR{2}$, $\slR{2}\oplus\R$, $\slR{2}\ltimes\gh_1$, $\mathcal{L}_n$, $\wh_1$, $\wh_2$, $\tilde{\mathcal{L}}_n$ and $\gr(j_1,\ldots,j_n)$ all have nullity two.
	\end{theorem}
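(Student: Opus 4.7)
The plan is to establish the nullity by proving the two inequalities $\operatorname{Nul}(\mathfrak{g}) \geq 2$ and $\operatorname{Nul}(\mathfrak{g}) \leq 2$ for each algebra $\mathfrak{g}$ on the list. The lower bound is immediate and uniform: every algebra in the list is non-abelian, yet the Lie closure of a single element $\{x\}$ is $\operatorname{span}\{x\}$, which is one-dimensional and hence abelian; therefore the nullity of any non-abelian Lie algebra is at least $2$. The bulk of the work lies in establishing the upper bound $\operatorname{Nul}(\mathfrak{g}) \leq 2$ by exhibiting, for each algebra, an explicit pair of generators and verifying via repeated bracketing that the Lie closure recovers the full algebra.

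For the finite sporadic cases I would proceed algebra-by-algebra using the bases already fixed in the paper: for $\slR{2}$ the pair $\{x,y\}$ gives $[x,y]=h$ and exhausts a Chevalley basis; for $\slR{2}\oplus\R$ one needs a generator that mixes the semisimple part with the central $\R$, e.g.\ $\{i+g_+^{2\iota_1},\,g_-^{2\iota_1}\}$, where one bracket produces $-8i(a^\dagger a+1/2)$ from which $ia^\dagger a$ and then $i$ can be isolated by further brackets; for the Schrödinger algebra $\mathcal{S}=\slR{2}\ltimes\mathfrak{h}_1$ a suitable pair is $\{ia^\dagger a,\,g_+^{\iota_1}+g_+^{2\iota_1}\}$, whose nested commutators, read off the full commutator table of Appendix~\ref{App:commutation:relations}, successively yield all six basis elements; for $\mathfrak{wh}_2$ the pair $\{h,q\}$ produces $p$ and then $z$ in two brackets using Definition~\ref{def:Wigner:Heisenberg:algebra:2}; for $\mathfrak{wh}_1$ the pair $\{e_4,\,e_2+e_3\}$ generates $\pm e_2+e_3$ by repeated bracketing with $e_4$, whence $e_2$, $e_3$, and $e_1=[e_2,e_3]$ follow.

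For the infinite families $\mathcal{L}_n$, $\tilde{\mathcal{L}}_n$ and $\mathfrak{r}(j_1,\ldots,j_n)$ I would give uniform constructions. For $\mathcal{L}_n$ the pair $\{e_1,e_2\}$ suffices since $[e_1,e_j]=e_{j+1}$ produces $e_3,e_4,\ldots,e_{n+1}$ by induction. For $\tilde{\mathcal{L}}_n$ I would try $\{e_1,\,e_0+e_2\}$: the bracket with $e_1$ gives $-e_1+e_3$, and iterated brackets with $e_1$ produce $e_4,\ldots,e_{n+1}$, from which $e_0$ is recovered by subtraction and $e_2$ by another bracket with $e_0$. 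For $\mathfrak{r}(j_1,\ldots,j_n)$ with $j_1\neq 0$, the clean candidate is $\{e_0,\,e_1+\cdots+e_n\}$: iterating the adjoint action of $e_0$ yields the vectors $\sum_k(-j_k)^m e_k$ for $m=1,2,\ldots,n$, which by a Vandermonde determinant argument on the pairwise distinct $j_k$ span the abelian derived subalgebra $\operatorname{span}\{e_1,\ldots,e_n\}$. The case $j_1=0$ uses the analogous pair $\{e_0+e_1,\,e_1+e_2+\cdots+e_n\}$, where the vanishing of $j_1$ kills $e_1$ in the bracket and the Vandermonde argument runs on $j_2,\ldots,j_n$; $e_1$ is then recovered by subtraction and $e_0$ similarly.

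The main obstacle I anticipate is the family $\mathfrak{r}(j_1,\ldots,j_n)$: because the structure constants depend on a tuple of strictly increasing non-negative integers of unbounded length, the verification cannot be reduced to a fixed finite set of bracket computations and must rely on the Vandermonde-type invertibility of $(j_k^m)_{k,m}$, together with a careful separate treatment of the degenerate $j_1=0$ case where $(\C\ltimes\C^n)\oplus\C$ splits off an extra central direction. A secondary difficulty is verifying Lie closure for the Schrödinger algebra using only two generators: because $\mathcal{S}$ has $\slR{2}$ as a factor acting non-trivially on $\mathfrak{h}_1$, one must choose the pair so that iterated brackets access both the $\slR{2}$-part and the Heisenberg part, which makes the verification a non-trivial but finite computation using Table~\ref{tab:Full:Commutator:Algebra:(pesudo):schroedinger:algebra}.
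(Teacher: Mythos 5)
Your overall strategy---the uniform lower bound from non-abelianness, explicit two-element generating sets for the upper bound, and a Vandermonde argument for $\gr(j_1,\ldots,j_n)$---is essentially the paper's, and most of your pairs check out. (Two minor differences: the paper disposes of $\slR{2}$ by citing Kuranishi's theorem on simple Lie algebras rather than exhibiting a pair, and its Vandermonde argument needs no separate $j_1=0$ case, since invertibility of $(j_k^m)_{k,m}$ only requires the $j_k$ to be pairwise distinct.)

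However, your proposed pair for $\tilde{\mathcal{L}}_n$ fails. With the basis $\{e_0,e_1,\ldots,e_{n+1}\}$ and non-vanishing relations $[e_0,e_1]=e_1$, $[e_0,e_j]=-(n+1-j)e_j$, $[e_1,e_j]=e_{j+1}$ for $j\in\{2,\ldots,n\}$, the $(n+1)$-dimensional subspace
\begin{align*}
V:=\spn\{e_0+e_2,\;e_1,\;e_3,\;e_4,\;\ldots,\;e_{n+1}\}
\end{align*}
is already closed under the bracket: $[e_0+e_2,e_1]=e_1-e_3\in V$, $[e_0+e_2,e_j]=-(n+1-j)e_j\in V$ for $j\geq 3$, $[e_1,e_j]=e_{j+1}\in V$, and all remaining brackets vanish. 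Hence $\lie{\{e_1,e_0+e_2\}}\subseteq V\subsetneq\tilde{\mathcal{L}}_n$: the elements $e_0$ and $e_2$ are never separated, and your step ``$e_0$ is recovered by subtraction'' presupposes already having $e_2$, which the closure never produces. The theorem is of course still true for $\tilde{\mathcal{L}}_n$; the pair $\{e_0,\,e_1+e_2\}$ used in the paper works, since $[e_0,e_1+e_2]=e_1-(n-1)e_2$ together with $e_1+e_2$ yields $e_1$ and $e_2$ separately (the relevant $2\times2$ coefficient matrix has determinant $-n\neq0$), after which $e_3,\ldots,e_{n+1}$ follow by repeated bracketing with $e_1$.
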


	\begin{proof}
		First, observe that any Lie algebra of dimension greater than one has a nullity of at least two, due to the antisymmetry of the Lie bracket. To show this, one supposes the converse: Let $\g$ be an $n$-dimensional Lie algebra that is generated by the singleton set $\mathcal{S}:=\{e\}\subseteq\g$, and let $n\geq 2$. To compute the Lie closure $\lie{\mathcal{S}}$, we can start by computing the vector space spanned by $e$. One finds $\spn\{e\}\subseteq \lie{\mathcal{S}}$. Next, we need to compute the Lie bracket of every pair of elements in $\spn\{e\}$. By the antisymmetry of the Lie bracket, one has $[\kappa_1e,\kappa_2e]=0\in\spn\{e\}$ for all $\kappa_1,\kappa_2\in\R$. This shows that $\spn\{e\}$ is closed under the Lie bracket and it follows that $\lie{\mathcal{S}}=\spn\{e\}$. By assumption, one has $\lie{\mathcal{S}}=\g$ and consequently $1=\dim(\lie{\mathcal{S}})=\dim(\g)=n\geq 2$, which is a contradiction. Hence, the nullity of a finite-dimensional Lie algebra is at least two if $\dim(\g)\geq 2$. Thus, if we can bound the nullity of a given Lie algebra of dimension two or more from above by two, we can conclude that the nullity is two. We now consider each Lie algebra individually:
		\begin{enumerate}[label = (\roman*)]
			\item $\boldsymbol{\slR{2}}$: This Lie algebra is simple, and thus, by Theorem 6 from \plainrefs{Kuranishi:1951}, one has $\operatorname{Nul}(\slR{2})=2$.
			\item $\boldsymbol{\slR{2}\oplus\R}$: Let $z$ denote the central element, and use the basis $\{x,y,h\}$ of $\slR{2}$ from Definition~\ref{def:sl2:algebra}. It is straightforward to verify that $\lie{\{x,y+z\}}\cong\slR{2}\oplus\R$. Hence, $\operatorname{Nul}(\slR{2}\oplus\R)=2$.
			\item $\boldsymbol{\slR{2}\ltimes\gh_1}$: Consider the basis $\{x,y,h,p,q,z\}$ from Definition~\ref{def:Schroedinger:algebra}. We verify now that $\lie{\{y,h+x+p\}}\cong\slR{2}\ltimes\gh_1$ and start with computing $[h+x+p,y]=-2y+h$, which implies $\spn\{h,y,x+p\}\subseteq\lie{\{y,h+x+p\}}$ via appropriate linear combinations. Next, compute $[h,x+p]=2x-p$. Therefore, one must also have $\spn\{x,y,h,p\}\subseteq\lie{\{y,h+x+p\}}$. Finally, compute $[x,p]=q$ and $[q,p]=z$, confirming $\lie{\{y,h+x+p\}}\cong\slR{2}\ltimes\gh_1$. Hence, $\operatorname{Nul}(\slR{2}\ltimes\gh_1)=2$.
			\item $\boldsymbol{\mathcal{L}_n}$: These algebras have a basis $\{e_1,e_2,\ldots,e_n,e_{n+1}\}$ satisfying $[e_1,e_j]=e_{j+1}$ for all $j\in\{2,\ldots,n\}$. Clearly, $\lie{\{e_1,e_2\}}\cong\mathcal{L}_2$. Hence, $\operatorname{Nul}(\mathcal{L}_n)=2$ for all $n\geq 2$.
			\item $\boldsymbol{\wh_1}$: Using the basis $\{e_1,e_2,e_3,e_4\}$ from Definition~\ref{def:Wigner:Heisenberg:algebra:1}, it is straightforward to verify that $\lie{\{e_2+e_3,e_4\}}\cong\wh_1$. Hence, $\operatorname{Nul}(\wh_1)=2$.
			\item $\boldsymbol{\wh_2}$: With the basis $\{h,q,p,z\}$ from Definition~\ref{def:Wigner:Heisenberg:algebra:2}, it is easy to confirm that $\lie{\{h,q\}}\cong\wh_2$. Hence, $\operatorname{Nul}(\wh_2)=2$.
			\item $\boldsymbol{\tilde{\mathcal{L}}_n}$: These algebras admit a basis $\{e_0,e_1,\ldots,e_n,e_{n+1}\}$ with $[e_0,e_1]=e_1$, $[e_0,e_j]=-(n+1-j)e_j$, and $[e_1,e_j]=e_{j+1}$ for $j\in\{2,\ldots,n\}$. It is easy to verify that $\lie{\{e_0,e_1+e_2\}}\cong\tilde{\mathcal{L}}_n$. Hence, $\operatorname{Nul}(\tilde{\mathcal{L}}_n)=2$ for all $n\geq 2$.
			\item $\boldsymbol{\gr(j_1,\ldots,j_n)}$: These algebras admit a basis $\{e_0,e_1,\ldots,e_n\}$ with $[e_0,e_k]=j_ke_k$. Now compute $\operatorname{ad}_{e_0}^m\left(\sum_{k=1}^ne_k\right)=\sum_{k=1}^n j_k^m e_k$.
			Thus, one has $\lie{\{e_0,e_1+e_2+\ldots+e_n\}}\cong\gr(j_1,\ldots,j_n)$, since
			\begin{align*}
				\begin{pmatrix}
					1 & j_1 & j_1^2 & \ldots & j_1^{n-1}\\
					1 & j_2 & j_2^2 & \ldots & j_2^{n-1}\\
					1 & j_3 & j_3^2 & \ldots & j_3^{n-1}\\
					\vdots & \vdots & \vdots &&\vdots\\
					1& j_n & j_n^2 & \ldots & j_n^{n-1}
				\end{pmatrix}
			\end{align*}
			is a regular Vandermonde matrix \plainrefs{Macon:1958}, and $0\leq j_1<j_2<\ldots j_n$. Hence, $\operatorname{Nul}(\gr(j_1,\ldots,j_n))=2$.
		\end{enumerate}
		Note that none of the proofs for the individual Lie algebras presented above depend on any specific realization, making this a general statement.
	\end{proof}
	
	\begin{corollary}
		The nullity of every non-abelian finite-dimensional subalgebra of the complex Weyl algebra is two.
	\end{corollary}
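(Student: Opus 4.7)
The plan is to reduce the claim to the classification of non-abelian finite-dimensional subalgebras of the complex Weyl algebra $A_1$ that was recalled in Section~\ref{section:scope:of:the:work} (following \plainrefs{Tanasa:2005,TST:2006}): any such algebra is isomorphic to one of the six classes $\slC{2}$, $\slC{2}\oplus\C$, $\slC{2}\ltimes\gh_{1,\C}$, $\mathcal{L}_{n,\C}$, $\tilde{\mathcal{L}}_{n,\C}$, or $\gr_\C(j_1,\ldots,j_n)$. It then suffices to establish, for each class, that the nullity equals two. The lower bound $\operatorname{Nul}(\g)\geq 2$ is immediate whenever $\dim(\g)\geq 2$: a single generator $e$ produces only the one-dimensional commutative span $\mathbb{F}e$ under Lie closure, by antisymmetry of the bracket, and a non-abelian algebra has dimension at least two. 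This part of the argument is independent of the underlying field.

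For the upper bound, I would simply translate each construction used in the proof of Theorem~\ref{thm:nullity} to the complex setting. Crucially, the constructions there never invoke anything that is specific to $\R$: they exhibit explicit two-element generating sets and verify, purely through structure constants and linear independence arguments, that these sets generate the whole algebra. For the simple case $\slC{2}$, the nullity-two bound follows directly from Theorem 6 of \plainrefs{Kuranishi:1951}. For $\slC{2}\oplus\C$, $\slC{2}\ltimes\gh_{1,\C}$, $\mathcal{L}_{n,\C}$, $\tilde{\mathcal{L}}_{n,\C}$, the same basis labels and the same explicit pairs $\{x,y+z\}$, $\{y,h+x+p\}$, $\{e_1,e_2\}$, $\{e_0,e_1+e_2\}$, respectively, work verbatim since every intermediate computation is an iterated bracket whose outcome depends only on the structure constants.

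The only step that deserves a closer look is the family $\gr_\C(j_1,\ldots,j_n)$, since that argument relied on a Vandermonde determinant to conclude linear independence of the iterates $\operatorname{ad}_{e_0}^m(e_1+\ldots+e_n)$ for $m=0,\ldots,n-1$. The Vandermonde determinant, however, is non-zero whenever the nodes $j_1<\ldots<j_n$ are distinct, regardless of whether one works over $\R$ or $\C$, so the conclusion $\lie{\{e_0,e_1+\ldots+e_n\}}\cong\gr_\C(j_1,\ldots,j_n)$ carries over without modification. Combining these observations across the six classes yields $\operatorname{Nul}(\g)=2$ in every case.

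The main obstacle, if any, is purely bookkeeping: one must verify that the classification from \plainrefs{Tanasa:2005,TST:2006} really does cover every non-abelian finite-dimensional complex subalgebra and that none of the field-agnostic computations in Theorem~\ref{thm:nullity} secretly invoke a property special to $\R$ (for instance positivity or reality of eigenvalues). Since every step is either a linear-algebraic identity over a field of characteristic zero or an application of Kuranishi's theorem, which is stated for semisimple Lie algebras over arbitrary characteristic-zero fields, there is no genuine difficulty. The corollary therefore follows as an essentially formal consequence of Theorem~\ref{thm:nullity} together with the known classification.
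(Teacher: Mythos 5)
Your proof is correct, but it takes a genuinely different route from the paper's. The paper disposes of the corollary in one line: it lists the six classes from the classification of \plainrefs{TST:2006} and then invokes Theorem~1 of \plainrefs{Sato:1974} together with Theorem~\ref{thm:nullity}, i.e., it transfers the nullity computation from each real form to its complexification via a general field-extension result, rather than re-examining the generating sets. You instead re-run the explicit two-element constructions of Theorem~\ref{thm:nullity} over $\C$ and check, case by case, that nothing in those computations is specific to $\R$ --- the lower bound from antisymmetry, Kuranishi's theorem for $\slC{2}$ (which for $\slC{2}$ is anyway immediate from $[x,y]=h$), the structure-constant manipulations for $\slC{2}\oplus\C$, $\slC{2}\ltimes\gh_{1,\C}$, $\mathcal{L}_{n,\C}$, $\tilde{\mathcal{L}}_{n,\C}$, and the Vandermonde determinant for $\gr_\C(j_1,\ldots,j_n)$, which is nonzero over any characteristic-zero field once the $j_k$ are distinct. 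Your version buys self-containedness: it removes the dependence on Sato's transfer theorem at the cost of the (light) bookkeeping you acknowledge. The paper's version buys brevity and makes explicit that nullity is stable under complexification, which is a reusable fact. One cosmetic point in your favour: the paper's proof lists $\mathcal{L}_{2,\C}$ where the classification actually contains the whole family $\mathcal{L}_{n,\C}$ for $n\geq 2$; your enumeration handles the full family correctly.
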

	
	\begin{proof}
		The algebras in questions are $\slC{2}$, $\slC{2}\oplus\C$, $\slC{2}\ltimes \gh_{1,\C}$, $\mathcal{L}_{2,\C}$, $\tilde{\mathcal{L}}_{n,\C}$ and $\gr_\C(j_1,\ldots,j_n)$ \plainrefs{TST:2006}. The statement above is a direct consequence of Theorem~\ref{thm:nullity} and Theorem 1 from \plainrefs{Sato:1974}.
	\end{proof}
	
	The observation that all these non-abelian finite-dimensional Lie algebras have nullity two has physically relevant consequences. In the remit of quantum control theory, where one aims at steering an initial state to desired final target states, the nullity of the Lie algebra corresponds to the minimum number of control Hamiltonians required to generate the full dynamical Lie algebra governing the time evolution of the system, assuming that no drift term is present. Thus, a nullity of two implies that only two linearly independent control fields are sufficient to generate the whole Lie algebra (via nested commutators), even for solvable Lie algebras of arbitrarily high-dimension, such as $\tilde{\mathcal{L}}_n$. We can expect that this will have positive implications for the design of experimental setups or classical simulation methods.

	\subsection{Applications to physical context}
	
	We now proceed to comment on the implications and applications within physics and physical theories. The general idea has been spelled out in detail in the introductory chapters of this work, where we clearly motivated our work by the desire to control
	quantum dynamics, in particular, through the process of dynamics factorization.

	\subsubsection{Perspective from control theory\label{sec:consequences:control}}
	
	We shortly complement our discussion on the dichotomy between
	Lie algebras of finite and infinite dimension by commenting from the perspective of control theory.
	Even if the generated Lie algebra in the skew-hermitian Weyl algebra $\hat{A}_n$ is finite dimensional,
	we need to fulfill further analyticity conditions to obtain a \emph{restricted controllability}
	\plainrefs{Huang:Tarn:1983,Lan:2005,Wu:Tarn:2006,Nelson:1959,Bloch:finite:controlability:2010}.
	Moreover, we usually do not observe exact
	controllability in infinite-dimensional quantum systems \plainrefs{Ball1982,BCC:2019,Dirr2022,Huang:Tarn:1983,Lan:2005,KZSH:2014,Keyl:2019,ABFH:2018,KEGGZ:2023}
	and we expect only weaker conditions such as \emph{accessibility} (see, e.g., \plainrefs{Bloch2015}), \emph{pure-state controllability}, or 
	\emph{approximate controllability} 
	in terms of the strong topology
	\plainrefs{KZSH:2014,Keyl:2019}
	to hold. Furthermore controllability questions in infinite dimensions
	can no longer be resolved using an infinitesimal 
	perspective and one needs to switch to a group picture \plainrefs{KZSH:2014,Keyl:2019}.
	As discussed in Section~\ref{section:scope:of:the:work}, the existence of
	unbounded operators \plainrefs{Messiah:1999,meise1997,hall2013quantum,schmuedgen2012}
	also contributes to the complexity of infinite-dimensional controllability
	\plainrefs{Huang:Tarn:1983,KZSH:2014,KEGGZ:2023,Keyl:2019}.

	In this work, as well as the one on which it is based \plainrefs{Bruschi:Xuereb:2024}, we take a more formal point of view in order
	to build up a structural understanding 
	while, at the same time, mostly ignoring the complexities in infinite dimensions just mentioned above. It is seems curious
	that so few possibilities for finite-dimensional Lie algebras in the skew-hermitian Weyl algebra $\hat{A}_n$ occur.
	There, in the future, it might be interesting to also explore a characterization 
	of different infinite-dimensional Lie algebras that are realizable in the Weyl algebra.

	\begin{tcolorbox}[breakable, colback=Cerulean!3!white,colframe=Cerulean!85!black,title=State reachability in $\hat{A}_1$ ]
		In the remit of the theory of quantum control, one of the main quests is to determine if any state in the Hilbert space can be reached from a given initial state, once the dynamics have been chosen. This question can be settled in the context of finite-dimensional systems while, as we have just discussed in this section, the answer in the case of infinite-dimensional systems is more subtle and remains unsolved. Here, we present an example for the current single-mode case $\hat{A}_1$, where the dynamics do not allow for a full exploration of the Hilbert space starting from any initial state. This example clearly indicates that, for the chosen dynamics, full state reachability cannot be obtained via Hamiltonians with finite-dimensional Lie algebras. 
		\begin{example}	
			Assume the control system induces the abelian Lie algebra $\g$ containing the free term $ia^\dagg a$. Then, one has no control over the system, even with an infinite number of control terms:
			Let $H(t)=H_0+H_I(t)=H_0+\sum_{j\in\mathcal{J}}u_j(t) G_j$ such that $\g:=\lie{\{iH_0,iG_j\}}$ is abelian and contains the term $ia^\dagg a$. We know that $C_{\hat{A}_1}(ia^\dagg a)=\hat{A}_1^0\oplus\hat{A}_1^=$ and consequently $H(t)=\sum_{n=0}^\infty f_j(t) (a^\dagg)^n a^n$. Thus $U(t)\ket{0}=\ket{0}$ for all times $t$. Hence $|1-\langle 1|U(t)|0\rangle|^2=1$.
			Thus, if $H(t)$ contains a free term, a Kerr nonlinearity, and induces an abelian Lie algebra, one cannot reach all states in the Hilbert space. 
		\end{example}
	\end{tcolorbox}

	\subsubsection{Implications for quantum dynamics}
	As a final contribution to this commentary we add a brief discussion about the role of this work in the broad effort to control quantum dynamics. As already abundantly stated above, the ultimate goal is to be able to have an explicit solution to the dynamics of an arbitrary quantum system. Achieving this goal is prevented by the seemingly impossible-to-pierce complexity of the space of bosonic linear operators that act on the Hilbert space of bosonic quantum systems, also called the skew-hermitian Weyl algebra $\hat{A}_n$. The new approach pioneered in the literature has provided a fresh perspective on possible ways to systematically categorize such a space of operators \plainrefs{Bruschi:Xuereb:2024}, an endeavour that is at its beginnings. 
	
	Our current work has provided an in depth dive into the study of the properties of $\hat{A}_1$, that is, the simplest possible skew-hermitian Weyl algebra. This is meaningful because it is necessary to study all aspects relevant of the overarching ambition in this new direction of research. We have been able to provide significant results in the remit of Lie theory concerning this algebra, while we have left open a few questions that were not answered here.  
	
	An important aspect that must be highlighted here is that fundamental differences between $\hat{A}_1$ and $\hat{A}_n$ with $n\geq2$ are of special interest. In fact, although quantum dynamics of a single bosonic mode is of limited interest for practical applications, fundamental properties of $\hat{A}_1$ are of great interest in the quest of understanding the fundamental difference between single and multi-mode systems. One immediate observation 
	is that $\hat{A}_1$ is much simpler than the general case of $\hat{A}_n$ where the additional subset $\hat{A}_n^\mathrm{om}$
	need to be considered, while  $\hat{A}_1^\mathrm{om}$ is empty. The space $\hat{A}_n^\mathrm{om}$ is the space spanned by $g_\sigma^\gamma=g_\sigma^{\iota_j}a^{\tilde{\gamma}}$ with $|\tilde{\gamma}|\geq 2$ and $\tilde{\gamma}_j=\tilde{\gamma}_{n+j}=0$ \plainrefs{Bruschi:Xuereb:2024}. That is 
	\begin{equation*}
		\hat{A}_n^\mathrm{om}:=\spn\{g_\sigma^\gamma=g_\sigma^{(\alpha,\beta)}\,\mid\,|\gamma|=|\alpha|+|\beta|\geq 3,\,\alpha_k+\beta_k=1\text{ for a unique index }k,\,\alpha_j=\beta_j\text{ for }j\neq k\}. 
	\end{equation*}
	The \enquote{om} stands for \emph{optomechanics} where a typical optomechanical Hamiltonian is given by $H=\omega_1a_1^\dagg a_1+\omega_2 a_2^\dagg a_2 +ga_1^\dagg a_1(a_2+a_2^\dagg)$ \plainrefs{Aspelmeyer:Kippenberg:2014}.
	This is a key difference that will be explored in depth in future work, where the properties of skew-hermitian Weyl algebras with two or more modes will be studied.

	\section{Conclusions\label{sec:conclusion}}
	
	This work addresses the overarching question of determining finite factorizations for quantum dynamics of bosonic systems by characterizing the space of all possible Hamiltonian Lie algebras of one bosonic mode, also known as the skew-hermitian Weyl algebra $\hat{A}_1$. In particular, we focused on determining the structure and classification of finite-dimensional Lie subalgebras and achieved our goal following a threefold path. 
	
	First, we provided a complete and exhaustive glossary of all finite-dimensional Lie subalgebras $\g\subseteq\hat{A}_1$ that are generated by a finite set of single skew-hermitian monomials $g_{\sigma}^{\gamma}$, based on the decomposition of $\hat{A}_1$ into meaningful subspaces provided in the literature \plainrefs{Bruschi:Xuereb:2024}. This also includes a classification of all realizations of such (non-abelian) algebras that was achieved by means of a constructive and algorithmic approach. The resulting list includes, among others, the Wigner-Heisenberg algebras $\wh_1$ and $\wh_2$, the special linear algebra $\slR{2}$, and the Schr\"odinger algebra $\mathcal{S}\cong\slR{2}\ltimes\gh_1$.
	Second, we established necessary and sufficient conditions for the finiteness of Lie algebras containing a free Hamiltonian term $e_0=i(a^\dagg a+c)$, where $c\in\mathbb{R}$. This term appears (in its hermitian form) within all Hamiltonians that encode the dynamics of single-mode bosonic systems. We proved that any such non-abelian algebra must be a subalgebra of the Schr\"odinger algebra $\mathcal{S}$, and must be isomorphic to one of the following known algebras: the Wigner-Heisenberg $\wh_2$, the special linear $\slR{2}$, the algebra $\slR{2}\oplus\R$, or  the Schr\"odinger $\mathcal{S}$ itself. 
	Third, we extended our classification to include all nilpotent and non-solvable finite-dimensional Lie algebras. While the results found in this context have structures that resemble those known from the classification of the complex Weyl algebra $A_1$, the classification of all solvable but not nilpotent Lie algebras remains incomplete. We identified two infinite families of the latter algebras, namely $\tilde{\mathcal{L}}_n$ and $\gr(j_1,\ldots,j_n)$, as well as at least one sporadic example, namely the Wigner-Heisenberg algebra $\wh_2$, which does not belong to these classes $\tilde{\mathcal{L}}_n$ and $\gr(j_1,\ldots,j_n)$ in the real case although it does in the complex setting.
	
	Among important aspects of our work we note that a key technical contribution of this work was the generalization of the Commutator Chains (i.e., specific sequences of nested commutators) introduced in the literature \plainrefs{Bruschi:Xuereb:2024}, which allow us to control the maximal degree of meaningful nested commutators and therefore to derive precise conditions under which a given set of monomials yields an infinite-dimensional Lie algebra. This framework enabled us to relax constraints on the initial generating set as compared to those put forward in the foundational work \plainrefs{Bruschi:Xuereb:2024}, and therefore to explore more general conditions for algebraic finiteness. Another important aspect is the fact that our analysis was restricted to the case of a single bosonic mode, and therefore the skew-hermitian Weyl algebra $\hat{A}_1$. Nevertheless, the insights gained here constitute an important step in the understanding of the whole multi-mode framework within the remit of skew-hermitian Weyl algebra $\hat{A}_n$ of $n$ bosonic modes, thereby extending the ideas, concepts and results presented in the foundational work \plainrefs{Bruschi:Xuereb:2024}. A particularly important task was that of identifying less strict conditions under which a set of monomials yields a finite- or infinite-dimensional Lie algebra. Achieving this goal is generally more complicated in the multi-mode case because, among other things, the abelian centralizer condition does only hold for the single-mode case. The classification of all finite-dimensional solvable subalgebras remains an open problem and will be tackled in future work. The methods developed in this work are expected to aid in addressing this challenge.
	
	The results of our work have significance beyond their immediate mathematical implications, in particular within a broad part of physics. Among others, these results can inform on mathematical aspects of the theory in fields such as quantum control, quantum computing, quantum simulation, or quantum annealing, where understanding the algebraic structure of quantum dynamics is essential for the design of efficient, reliable and robust quantum protocols. Furthermore, a deeper understanding of the formalism can allow for the prediction of new phenomena, a task that would be otherwise difficult, if not impossible, through perturbation theory or numerical methods alone. Finally, our work contributes to the overarching ambition of understanding the fundamental differences that arise in the transition between physical systems with few (or one) constituents and those with large numbers of constituents.

	\section*{Data availability statement}
	
	Data sharing is not applicable to this work as no datasets are used to support the presented results.

	

	\section*{Conflict of interest}
	
	The authors have no relevant financial or non-financial interest to disclose.

	\section*{Author contributions}
	DEB and RZ conceived the idea of studying the properties of the single-mode skew-hermitian Weyl algebra. ASG and SB obtained the initial insights in this work. In particular, ASG obtained a first proof of Theorem~\ref{thm:andreea:sona} and related results, while SB studied the properties of the algebras using the symbolic software \textsc{Magma} \plainrefs{Bosma:MAGMA:1997}. TCH produced the proof of all other results, as well as the figures. DEB and RZ supervised the project. All authors contributed towards the preparation of the manuscript.

	\section*{Funding}
	ASG and SB were funded by the Deutsche Forschungsgemeinschaft (DFG, German Research Foundation) under Germany’s Excellence Strategy – \href{https://ml4q.de/}{Cluster of Excellence Matter and Light for Quantum Computing (ML4Q)} EXC 2004/1 – 390534769. SB was also funded by St. Olaf College's Johnson Family Opportunity Fund. DEB and TCH
	acknowledge support from the joint project No. 13N15685 ``German Quantum Computer based on Superconducting
	Qubits (GeQCoS)'' sponsored by the German Federal Ministry of Education and Research (BMBF) under the
	\href{https://www.quantentechnologien.de/fileadmin/public/Redaktion/Dokumente/PDF/Publikationen/Federal-Government-Framework-Programme-Quantum-technologies-2018-bf-C1.pdf}{framework programme “Quantum technologies -- from basic research to the market”}. DEB
	acknowledges furthermore support from the German Federal Ministry of Education and Research via
	the \href{https://www.quantentechnologien.de/fileadmin/public/Redaktion/Dokumente/PDF/Publikationen/Federal-Government-Framework-Programme-Quantum-technologies-2018-bf-C1.pdf}{framework programme “Quantum technologies -- from basic research to the market”}
	under contract number 13N16210 ``Diamond spin-photon-based quantum computer (SPINNING)''. RZ acknowledges funding
	under Horizon Europe programme HORIZON-CL4-2022-QUANTUM-02-SGA via the project
	\href{https://doi.org/10.3030/101113690}{101113690} (PASQuanS2.1).


	\appendix

	\section{The commutation relations in $\lie{\hat{A}_0^1\oplus\hat{A}_1^1\oplus\hat{A}_1^2}$\label{App:commutation:relations}}
	
	Here we provide lists of useful commutation relations to aid the proof of many claims of this work. Table~\ref{tab:Full:Commutator:Algebra:(pesudo):schroedinger:algebra} lists all non-trivial commutators among the six fundamental monomials of the Lie algebra $\lie{\hat{A}_0^1\oplus\hat{A}_1^1\oplus\hat{A}_1^2}$. This algebra is isomorphic to the Schr\"odinger algebra $\mathcal{S}=\slR{2}\ltimes\gh_1$, as shown in Proposition~\ref{prop:schroedinger:algebra}, altough the basis here differs from the standard one typically employed in the literature and shown in Definition~\ref{def:Schroedinger:algebra}. The chosen basis is adapted to the structure of the skew-hermitian Weyl algebra and emphasizes its decomposition into physically and algebraically meaningful subspaces, as defined in the literature \plainrefs{Bruschi:Xuereb:2024}. The monomials of interest are the monomials $g_+^{0}/2=i$, $g_+^\tau/2=ia^\dagg a$, $g_+^{\iota_1}$, $g_-^{\iota_1}$, $g_+^{2\iota_1}$, and $g_-^{2\iota_1}$.
	
	\begin{table}[h!]
		\centering
		{\small
			\begin{tblr}{ |c||c|c||c|c||c|c|  }
				\hline
				\SetCell[c=7]{c}{full commutator algebra of $\langle\hat{A}_1^0\oplus\hat{A}_1^1\oplus\hat{A}_1^2\rangle_{\textrm{Lie}}$} \\
				\hline
				\hline
				$[\downarrow,\rightarrow]$   & $i$ & $ia^\dagg a$ & $g_+^{\iota_1}$  & $g_-^{\iota_1}$  & $g_+^{2\iota_1}$  & $g_-^{2\iota_1}$  \\
				\hline
				\hline
				$i$    &    &  0  & 0 & 0 & 0 & 0  \\
				\hline
				$ia^\dagg a$   &    &    &  $g_-^{\iota_1}$ & $-g_+^{\iota_1}$ & $2g_-^{2\iota_1}$ & $-2g_+^{2\iota_1}$  \\
				\hline
				\hline
				$g_+^{\iota_1}$   &    &    &  & $-2i$ & $2g_-^{\iota_1}$ & $-2g_+^{\iota_1}$  \\
				\hline
				$g_-^{\iota_1}$ &  &  &  & & $2g_+^{\iota_1}$ & $2g_-^{\iota_1}$ \\
				\hline
				\hline
				$g_+^{2\iota_1}$ &  &  &  & & & $-8i(a^\dagg a+\frac{1}{2})$\\
				\hline
				$g_-^{2\iota_1}$ &  &  &  & & & \\
				\hline
			\end{tblr}
		}
		\caption{All commutators of the monomials of the Lie algebra  $\langle\hat{A}_1^0\oplus\hat{A}_1^1\oplus\hat{A}_1^2\rangle_\textrm{Lie}$. All omitted entries are either zero or follow by antisymmetry $[A,B]=-[B,A]$ of the commutator.}
		\label{tab:Full:Commutator:Algebra:(pesudo):schroedinger:algebra}
	\end{table}\noindent

	\section{Algebraic conditions for Lie-algebra isomorphisms\label{App:Testing:Isomorphisms}}
	
	We proceed to give algebraic conditions for Lie-algebra isomorphisms and restrict our attention, for simplicity, to the real case. It is immediate to see, due to  Definition~\ref{def:Lie:algebra:homomorphism}, that two Lie algebras can be isomorphic only if they have the same dimension. Consider two $m$-dimensional real Lie algebras $\g$ and $\gh$ with respective bases $\mathcal{B}_\g=\{x_j\}_{j=1}^m$ and $\mathcal{B}_{\gh}=\{y_j\}_{j=1}^m$ equipped with the commutation relations
	\begin{align*}
		[x_j,x_k]=\sum_{\ell=1}^mf_{jk\ell}x_\ell,\quad\text{and}\quad[y_j,y_k]=\sum_{\ell=1}^m\hat{f}_{jk\ell}y_\ell,
	\end{align*}
	where $f_{jk\ell}$ and $\hat{f}_{jk\ell}$ are the structure constants of the Lie algebras $\g$ and $\gh$ respectively.
	Suppose now that there exists a Lie-algebra isomorphism $\phi:\g\to\gh$. Then, by the linearity of $\phi$ we have
	\begin{align*}
		\phi(x_j)=\sum_{k=1}^ma_{jk}y_k\quad\text{for all }j\in\{1,\ldots,m\},
	\end{align*}
	where $a_{jk}$ are some real coefficients.
	The following identity must hold in order to satisfy the homomorphism condition from equation \eqref{eqn:def:Lie:algebra:isomorphism}, namely:
	\begin{align*}
		0&=[\phi(x_j),\phi(x_k)]-\phi([x_j,x_k])=\sum_{p,q=1}^ma_{jp}a_{kq}[y_p,y_q]-\sum_{p=1}^mf_{jkp}\phi(x_p)\\
		&=\sum_{p,q,\ell=1}^ma_{jp}a_{kq}\hat{f}_{pq\ell}y_\ell-\sum_{p,\ell=1}^ma_{p\ell}f_{jkp}y_\ell=\sum_{\ell=1}^m\left(\sum_{p=1}^m\left(a_{jp}\sum_{q=1}^ma_{kq}\hat{f}_{pq\ell}-a_{p\ell}f_{jkp}\right)\right)y_\ell.
	\end{align*}
	This condition only needs to be verified for the indices $1\leq j<k\leq m$, due to the antisymmetry of the commutator. We can reduce this to the condition
	\begin{align}
		\sum_{p=1}^m\left(a_{jp}\sum_{q=1}^ma_{kp}\hat{f}_{pq\ell}-a_{p\ell}f_{jkp}\right)&=0,
	\end{align}
	since $\mathcal{B}_{\gh}$ is a basis of linear independent elements $y_j$. The invertibility condition of $\phi$ can furthermore expressed by the condition $\det(\phi)\neq 0$ or equivalently $1-c\det(\phi)=0$ for $c\in\R$. Thus, testing whether $\g$ and $\gh$ are isomorphic amounts to solving a system of $m^2(m-1)/2+1$ equations with $m^2+1$ variables $\{a_{jk}\}_{j,k=1}^m$ and $c$, where all but one equations are quadratic.

	\section{Revisiting the Igusa condition\label{App:Igusa}}

	This appendix refines a criterion, originally proposed by Igusa \plainrefs{Igusa1981}, to determine when two elements in the Weyl algebra generate an infinite-dimensional Lie algebra. The core idea of this criterion is straightforward:  one identifies conditions under which two elements $x,y\in A_1$ generate a generic Commutator Chain whose elements are of strictly increasing degree. This ensures that the Lie algebra $\g:=\lie{\{x,y\}}$ is infinite dimensional. One then generalizes this criterion by characterizing all isomorphisms that map two elements $\tilde{x},\tilde{y}\in A_1$ to elements satisfying the necessary conditions that have been identified. Our goal is to provide a re-derivation of the Igusa condition using the formalism introduced in the literature on which our work is based \plainrefs{Bruschi:Xuereb:2024}.
	We start by introducing the following notations and concepts: 
	
	\noindent\textbf{Skew-hermitian polynomial space}: Let $(\mathbb{S}[X,Y],+,\cdot)\equiv\mathbb{S}[X,Y]$ denote the vector space of commutative polynomials over the complex numbers $\C$, equipped with the usual addition \enquote{+} and scalar multiplication \enquote{$\cdot$}, such that replacing the variables $X$ and $Y$ with the operators $a^\dagg$ and $a$ respectively yields skew-hermitian operators when the polynomials are written in the canonical form.
	We want to emphasize this: This operation is only well-defined if we require the polynomials $p\in\mathbb{S}[X,Y]$ to be in the canonical form $p(X,Y)=\sum_{j,k=0}^ma_{jk}X^j Y^k$, due to the non-abelian property of the Weyl algebra. We denote polynomials in $\mathbb{S}[X,Y]$ by symbols such as $x(X,Y)$, $y(X,Y)$, and define $d_x:=\deg(x(X,Y))$. Any polynomial $x\in\mathbb{S}[X,Y]$ can be expresses as
	\begin{align*}
		x(X,Y)=\sum_{m=0}^{d_x}\sum_{j=0}^mc_{mj}X^{m-j}Y^j,
	\end{align*}
	where $c_{mj}$ are complex coefficients. For convenience, we may write $x\equiv x(X,Y)$ when the context is clear.
	Note the following important observations: the space $\mathbb{S}[X,Y]$ is not closed under multiplication, and the product of two such polynomials lies in the commutative ring of complex polynomials $\C[X,Y]$.
	
	\noindent\textbf{Vectors in} $\boldsymbol{\C^2}$: Vectors in $\C^2$ are denoted by $\boldsymbol{\xi}$ and $\boldsymbol{\eta}$.
	
	\noindent\textbf{Symplectic structure}: We introduce the sympletic matrix $\boldsymbol{J}$ along with the symplectic form $\omega:\C^2\times\C^2\to\C$ defined by
	\begin{align*}
		\boldsymbol{J}:=\begin{pmatrix}
			0&1\\
			-1&0
		\end{pmatrix}\quad\text{and}\quad\omega(\boldsymbol{\xi},\boldsymbol{\eta})=\boldsymbol{\xi}^\Tp\boldsymbol{J}\boldsymbol{\eta}.
	\end{align*}
	These can be used to define the \emph{real symplectic Lie group} $\operatorname{Sp}(2,\C):=\{\boldsymbol{A}\in\C^{2\times2}\mid\,\boldsymbol{A}^\Tp\boldsymbol{J}\boldsymbol{A}\}$ as the matrix group containing all real $2\times 2$ matrices that preserve the symplectic form $\omega$. The complex symplectic Lie group $\operatorname{Sp}(2,\C)$ is the group of complex matrices with determinant one and therefore isomorphic to the Lie group $\operatorname{SL}(2,\C)$. Note the importance of this symplectic structure in the context of linear quantum optics \plainrefs{Adesso:Ragy:2014}.
	
	\noindent\textbf{Reduction map}: We introduce the reduction map $R:\mathbb{S}[X,Y]\to\mathbb{S}[X,Y]$. This map reduces a polynomial $x$ to its homogeneous component of highest degree \plainrefs{Cox:2013}. That is, given some polynomial $x=\sum_{m=0}^{d_x}\sum_{j=0}^mc_{mj}X^{m-j}Y^j$, the reduction map removes every monomial component in its decomposition that is not of the highest degree, i.e.: 
	\begin{align*}
		R(x)(X,Y)\equiv R_x(X,Y):=\sum_{j=0}^{d_x}c_{d_x j}X^{d_x-j}Y^{j},\quad\text{where}\quad x=\sum_{m=0}^{d_x}\sum_{j=0}^mc_{mj}X^{m-j}Y^j.
	\end{align*}
    The reduction map on the polynomial ring $\C[X,Y]$, i.e., the map $R:\mathbb{C}[X,Y]\to\mathbb{C}[X,Y]$, is analogously defined.
 	
	\noindent\textbf{Partial derivatives and gradient}: We define the two derivative maps $\partial_X,\partial_Y:\mathbb{S}[X,Y]\to\C[X,Y]$ by:
	\begin{align*}
		\partial_X\sum_{m=0}^{d_x}\sum_{j=0}^mc_{mj}X^{m-j}Y^j&=\sum_{m=0}^{d_x}{\sum_{j=0}^{m-1}}c_{mj}(m-j)X^{m-j-1}Y^j,\;&\;\partial_Y\sum_{m=0}^{d_x}\sum_{j=0}^mc_{mj}X^{m-j}Y^j&=\sum_{m=0}^{d_x}{\sum_{j=1}^{m}}c_{mj}jX^{m-j}Y^j.
	\end{align*}
	We define the gradient map
	\begin{align*}
		\nabla:\mathbb{S}[X,Y]\to\C[X,Y]\times\C[X,Y]
		\quad
		\text{by its action}
		\quad
		x(X,Y)\mapsto\begin{pmatrix}
			\partial_X(x(X,Y))\\
			\partial_Y (x(X,Y))
		\end{pmatrix}.
	\end{align*}
	These maps acting on the polynomial ring $\C[X,Y]$, i.e., the maps $\partial_X,\partial_Y:\mathbb{S}[X,Y]\to\C[X,Y]$ and $\nabla:\mathbb{C}[X,Y]\to\C[X,Y]\times\C[X,Y]$, are analogously defined. For notational convenience, we write $\nabla x(a^\dagg,a)$ to mean $\nabla x(X,Y)|_{(X,Y)=(a^\dagg ,a)}$, and similarly for $\partial_X (a^\dagg, a)$ and $\partial_Y x(a^\dagg ,a)$ instead of $\partial_X x(X,Y)|_{(X,Y)=(a^\dagg,a)}$ or $\partial_Y x(X,Y)|_{(X,Y)=(a^\dagg,a)}$ respectively. 
	
	\noindent\textbf{Symplectic action on polynomials}: Let $\boldsymbol{\sigma}\in\operatorname{Sp}(2,\C)$ be an arbitrary symplectic matrix. It can be used to define the map
	\begin{align*}
		\sigma:\mathbb{S}[X,Y]\to\C[X,Y]
		\quad
		\text{by its action}
		\quad
		x(X,Y)\mapsto (\sigma x)(X,Y):=x\left(\boldsymbol{\sigma}\begin{pmatrix}
			X\\
			Y
		\end{pmatrix}\right)=x(\sigma_{11}X+\sigma_{12}Y,\sigma_{21}X+\sigma_{22}Y).
	\end{align*}
	Given the definitions above we can now prove a set of useful results regarding the extension of the Igusa condition to the framework used in this work.
	
	\begin{proposition}\label{prop:help:commutator:q:p}
		Let $r,s\in\N_{\geq0}$ be two non-negative integers. Then one has: (a) $[(a^\dagg)^r,a^s]\upto{r+s-2}-rs(a^\dagg)^{r-1}a^{s-1}$, (b) $[(a^\dagg)^r,a^s]\upto{r+s-2}-rsa^{s-1}(a^\dagg)^{r-1}$, and (c) $(a^\dagg)^ra^s\upto{r+s}a^s (a^\dagg)^r$.
	\end{proposition}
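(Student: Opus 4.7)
The plan is to base all three claims on a single closed-form normal-ordering identity, namely Lemma~16 of \plainrefs{Bruschi:Xuereb:2024} (also used inside the proof of Proposition~\ref{prop:thm:25:equivalent}), which gives
\[
[a^s,(a^\dagg)^r] \;=\; \sum_{k=1}^{\min\{r,s\}} k!\binom{r}{k}\binom{s}{k}\,(a^\dagg)^{r-k}a^{s-k}.
\]
This formula is strong enough to yield all three statements essentially by reading off the leading term and bounding the degrees of the remaining summands.

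For claim (a), I would use $[(a^\dagg)^r,a^s] = -[a^s,(a^\dagg)^r]$ and inspect the right-hand side of the expansion above. The $k=1$ summand is $rs\,(a^\dagg)^{r-1}a^{s-1}$, a monomial of degree $r+s-2$, whereas every further summand (indexed by $k\geq 2$) has total degree $r+s-2k \leq r+s-4$. Subtracting the leading term therefore leaves a polynomial of degree strictly less than $r+s-2$, which by Definition~\ref{def:equivalence:relation:polynomials} is exactly $[(a^\dagg)^r,a^s]\upto{r+s-2}-rs(a^\dagg)^{r-1}a^{s-1}$. For claim (c), the same expansion yields $(a^\dagg)^r a^s - a^s(a^\dagg)^r = -[a^s,(a^\dagg)^r]$, so by what has just been shown this difference has degree at most $r+s-2<r+s$; hence $(a^\dagg)^r a^s \upto{r+s} a^s(a^\dagg)^r$.

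For claim (b), I would combine (a) with (c) applied to the pair $(r-1,s-1)$. The latter gives $(a^\dagg)^{r-1}a^{s-1} \upto{r+s-2} a^{s-1}(a^\dagg)^{r-1}$, since $(r-1)+(s-1)=r+s-2$. Invoking Proposition~\ref{prop:equivalence:relation:linear:combination} to chain this equivalence with (a) then produces the desired $[(a^\dagg)^r,a^s] \upto{r+s-2} -rs\,a^{s-1}(a^\dagg)^{r-1}$. There is no genuine obstacle here: the result is a routine but useful consequence of the canonical commutation relation $[a,a^\dagg]=1$, and the formalism of the equivalence relation $\upto{d}$ is designed precisely to absorb the sub-leading corrections. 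The only minor care required is in the boundary cases $r=0$ or $s=0$, in which both sides of (a) and (b) vanish identically (the factor $rs$ kills the claimed leading term) and (c) reduces to the trivial identity $(a^\dagg)^r a^s = a^s(a^\dagg)^r$; all of these are consistent with the statement.
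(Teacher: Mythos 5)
Your proof is correct and follows essentially the same route as the paper, which simply cites Lemma~16 of the foundational work for claim (a) and notes that (b) and (c) follow immediately; you have merely filled in the degree-counting details and the chaining of equivalences that the paper leaves implicit.
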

	
	\begin{proof}
		For claim (a), see Lemma 16 from \plainrefs{Bruschi:Xuereb:2024}. Claims (b) and (c) follow immediately from (a). 
	\end{proof}
	
	Proposition~\ref{prop:help:commutator:q:p} guarantees that the operator $o_1\in A_1$ obtained by replacing the variables $X$ and $Y$ from a canonically ordered polyonomial $p(X,Y)=\sum_{j,k=1}^ma_{jk}X^j Y^k$ with the creation and annihilation operators $a^\dagg$ and $a$ respectively satisfies $o_1\upto{\deg(p)} o_2$, where $o_2\in A_1$ is any operator obtained by replacing the variables $X$ and $Y$ from the polynomial $p(X,Y)$ also with the creation and annihilation operators $a^\dagg$ and $a$ respectively irrespective of the ordering of the variables.
	
	\begin{proposition}\label{prop:product:rule}
		Let $r,s\in\N_{\geq0}$ and $\sigma_{11},\sigma_{12},\sigma_{21},\sigma_{21}\in\C$. Then, the following identities hold:
		\begin{align*}
			\partial_X(\sigma_{11}X+\sigma_{12}Y)^r(\sigma_{21}X+\sigma_{22}Y)^s&=r\sigma_{11}(\sigma_{11}X+\sigma_{12}Y)^{r-1}(\sigma_{21}X+\sigma_{22}Y)^s+s\sigma_{21}(\sigma_{11}X+\sigma_{12}Y)^r(\sigma_{21}X+\sigma_{22}Y)^{s-1},\\
			\partial_Y(\sigma_{11}X+\sigma_{12}Y)^r(\sigma_{21}X+\sigma_{22}Y)^s&=r\sigma_{12}(\sigma_{11}X+\sigma_{12}Y)^{r-1}(\sigma_{21}X+\sigma_{22}Y)^s+s\sigma_{22}(\sigma_{11}X+\sigma_{12}Y)^r(\sigma_{21}X+\sigma_{22}Y)^{s-1}.
		\end{align*}
	\end{proposition}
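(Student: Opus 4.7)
The statement is essentially the product rule combined with the chain rule for polynomials in two commuting indeterminates, so the proof will be a direct computation. My plan is to establish both identities in parallel by applying the standard product rule for the derivation $\partial_X$ (and $\partial_Y$) on the commutative polynomial ring $\mathbb{C}[X,Y]$, namely $\partial_X(fg) = (\partial_X f)g + f(\partial_X g)$, which follows immediately from the linearity of $\partial_X$ and $\partial_Y$ as defined on monomials via the rules $\partial_X X^j Y^k = jX^{j-1}Y^k$ and $\partial_Y X^j Y^k = kX^j Y^{k-1}$.

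First, I would verify the chain-rule identities $\partial_X(\sigma_{11}X + \sigma_{12}Y)^r = r\sigma_{11}(\sigma_{11}X + \sigma_{12}Y)^{r-1}$ and $\partial_Y(\sigma_{11}X + \sigma_{12}Y)^r = r\sigma_{12}(\sigma_{11}X + \sigma_{12}Y)^{r-1}$, along with the analogous statements for $(\sigma_{21}X + \sigma_{22}Y)^s$. These can be shown by induction on $r$ using the product rule and the base cases $\partial_X(\sigma_{11}X + \sigma_{12}Y) = \sigma_{11}$ and $\partial_Y(\sigma_{11}X + \sigma_{12}Y) = \sigma_{12}$, or alternatively by applying the binomial theorem
\begin{equation*}
(\sigma_{11}X + \sigma_{12}Y)^r = \sum_{k=0}^{r}\binom{r}{k}\sigma_{11}^{k}\sigma_{12}^{r-k}X^{k}Y^{r-k}
\end{equation*}
and differentiating term by term, then recognizing the resulting sum as the binomial expansion of $r\sigma_{11}(\sigma_{11}X + \sigma_{12}Y)^{r-1}$ (respectively $r\sigma_{12}(\sigma_{11}X + \sigma_{12}Y)^{r-1}$).

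Second, I would set $f := (\sigma_{11}X + \sigma_{12}Y)^r$ and $g := (\sigma_{21}X + \sigma_{22}Y)^s$ and write
\begin{equation*}
\partial_X(fg) = (\partial_X f)g + f(\partial_X g), \qquad \partial_Y(fg) = (\partial_Y f)g + f(\partial_Y g),
\end{equation*}
substituting the chain-rule identities from the previous step to obtain exactly the two equalities in the proposition.

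There is no substantive obstacle here; the only point of mild care is that although the maps $\partial_X$ and $\partial_Y$ are originally defined on $\mathbb{S}[X,Y]$ with target $\mathbb{C}[X,Y]$, the product $fg$ lies in $\mathbb{C}[X,Y]$ rather than in $\mathbb{S}[X,Y]$, so one must use the analogously defined derivations on $\mathbb{C}[X,Y]$ (which the excerpt already introduces). With this understood, the proof is a one-line calculation from the product rule plus the chain-rule identities.
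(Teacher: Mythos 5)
Your proposal is correct. The route is slightly different from the paper's: the paper's proof expands both factors $(\sigma_{11}X+\sigma_{12}Y)^r$ and $(\sigma_{21}X+\sigma_{22}Y)^s$ directly via the binomial theorem, differentiates term by term, and then recombines the resulting double sum using the combinatorial identity $\binom{r-1}{k}\binom{s}{k'}+\binom{r}{k}\binom{s-1}{k'}=\frac{r+s-(k+k')}{rs}\binom{r}{k}\binom{s}{k'}$, whereas you first establish the Leibniz rule for $\partial_X,\partial_Y$ on the commutative ring $\C[X,Y]$ (verified on monomials and extended by linearity) and the chain-rule identities for powers of a single linear form, and only then combine. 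Your decomposition is more modular and sidesteps the binomial-coefficient bookkeeping entirely; the paper's version is more self-contained in a single expansion but requires verifying that identity. Your remark about working with the derivations on $\C[X,Y]$ rather than $\mathbb{S}[X,Y]$ is exactly the point the paper makes in the sentence preceding its proof, so that subtlety is properly handled.
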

	
	Note that we evade the problem of differentiating non-commutative products by performing the derivation on the commutative ring $\mathbb{C}[X,Y]$ and not the non-abelian polynomials $p(a^\dagg,a)$.

	\begin{proof}
		The proof is straightforward using the Binomial theorem. One only needs to recall that $\mathbb{S}[X,Y]$ is a subspace of the commutative ring $\C[X,Y]$, and that the following identity
		\begin{align*}
			\binom{r-1}{k}\binom{s}{k'}+\binom{r}{k}\binom{s-1}{k'}=\frac{r+s-(k+k')}{rs}\binom{r}{k}\binom{s}{k'}
		\end{align*}
		holds.
	\end{proof}
	
	\begin{lemma}\label{lem:Igusa:1}
		Let $x,y\in\mathbb{S}[X,Y]$ be two arbitrary polynomials with degrees $d_x$ and $d_y$ respectively and ${\sigma}\in\operatorname{Sp}(2,\C)$. Let furthermore $d:=d_x+d_y-2$. Then:
		\begin{enumerate}[label = (\alph*)]
			\item The reduction map and the symplectic action commute. That is: $(\sigma R_x)(X,Y)=R_{\sigma x}(X,Y)$;
			\item The commutator of the two polynomials $x(a^\dagg,a), y(a^\dagg ,a)\in\hat{A}_1$ is, to highest degree, uniquely determined by evaluating the symplectic form $\omega$ at the gradients of the reduction maps of $x$ and $y$ when $X$ and $Y$ are replaced with $a^\dagg$ and $a$ respectively. That is: $R_{[x,y]}(a^\dagg,a)\upto{d}-\omega(\nabla R_x(a^\dagg,a),\nabla R_y(a^\dagg,a))$;
			\item One has: $\omega(\nabla R_{\sigma x}(a^\dagg,a),\nabla R_{\sigma y}(a^\dagg,a))\upto{d}\sigma\omega(\nabla R_x(a^\dagg,a),\nabla R_y(a^\dagg,a))$.
		\end{enumerate}
	\end{lemma}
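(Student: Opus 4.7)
The plan is to tackle the three claims in order, as (b) and (c) both rely on (a), and (c) additionally uses the formula for leading-order commutators that underlies (b). Throughout, the critical subtlety is that $\mathbb{S}[X,Y]$ sits inside the commutative ring $\C[X,Y]$, whereas evaluation at $(a^\dagg,a)$ lands in the non-commutative algebra $\hat{A}_1$; Proposition~\ref{prop:help:commutator:q:p}(c) is what licenses passing polynomial identities through the evaluation map modulo $\upto{d}$.

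For (a), I would argue that the symplectic action $\sigma$ preserves the homogeneous-degree filtration of $\mathbb{S}[X,Y]$: since both substitutions $X\mapsto\sigma_{11}X+\sigma_{12}Y$ and $Y\mapsto\sigma_{21}X+\sigma_{22}Y$ are linear, each monomial $X^{m-j}Y^j$ is mapped to a homogeneous polynomial of the \emph{same} total degree $m$. Writing $x(X,Y)=\sum_{m=0}^{d_x}h_m(X,Y)$ as a sum of its homogeneous components, we get $(\sigma x)(X,Y)=\sum_{m=0}^{d_x}(\sigma h_m)(X,Y)$ with $\deg(\sigma h_m)=m$. Applying the reduction map then selects the $m=d_x$ summand on both sides, establishing $R_{\sigma x}=\sigma R_x$.

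For (b), my strategy is first to reduce to the leading parts. Since $\deg(x-R_x)<d_x$ and $\deg(y-R_y)<d_y$, Proposition~\ref{prop:equivalnce:relation:commutator} (or a direct degree count using Lemma~\ref{important:monomial:commutator:lemma}) gives $[x(a^\dagg,a),y(a^\dagg,a)]\upto{d}[R_x(a^\dagg,a),R_y(a^\dagg,a)]$. Next, expand $R_x=\sum_j c_j(a^\dagg)^{d_x-j}a^j$ and $R_y=\sum_k c'_k(a^\dagg)^{d_y-k}a^k$, then use Proposition~\ref{prop:help:commutator:q:p} to compute, to leading order,
\[
[(a^\dagg)^{r_1}a^{s_1},(a^\dagg)^{r_2}a^{s_2}]\upto{d}(s_1 r_2 - s_2 r_1)(a^\dagg)^{r_1+r_2-1}a^{s_1+s_2-1},
\]
which follows by substituting $a^{s_1}(a^\dagg)^{r_2}\upto{s_1+r_2}(a^\dagg)^{r_2}a^{s_1}+s_1 r_2(a^\dagg)^{r_2-1}a^{s_1-1}$ and the symmetric formula, then subtracting. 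Bilinearity and comparison with
\[
\omega(\nabla R_x,\nabla R_y)=\partial_X R_x\,\partial_Y R_y - \partial_Y R_x\,\partial_X R_y =\sum_{j,k}c_jc'_k(r_1 s_2-r_2 s_1)X^{r_1+r_2-1}Y^{s_1+s_2-1}
\]
(with $r_i=d_i-j_i$, $s_i=j_i$) identifies the leading-order commutator with $-\omega(\nabla R_x(a^\dagg,a),\nabla R_y(a^\dagg,a))$; noting $R_{[x,y]}(a^\dagg,a)\upto{d}[x,y](a^\dagg,a)$ closes the argument.

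For (c), I would combine (a) with the chain rule in Proposition~\ref{prop:product:rule}, which in vector form reads $\nabla(\sigma f)=\boldsymbol{\sigma}^{\Tp}(\sigma \nabla f)$ for any $f\in\C[X,Y]$ (the symplectic substitution is a ring homomorphism, so it commutes with products and hence with evaluating $\omega$). Setting $f=R_x$ or $R_y$ and using (a), we then compute
\[
\omega(\nabla R_{\sigma x},\nabla R_{\sigma y})=(\sigma\nabla R_x)^{\Tp}\boldsymbol{\sigma}\boldsymbol{J}\boldsymbol{\sigma}^{\Tp}(\sigma\nabla R_y)=\sigma\bigl[\omega(\nabla R_x,\nabla R_y)\bigr],
\]
where the middle step uses the identity $\boldsymbol{\sigma}\boldsymbol{J}\boldsymbol{\sigma}^{\Tp}=\boldsymbol{J}$, which is equivalent to $\boldsymbol{\sigma}^{\Tp}\in\operatorname{Sp}(2,\C)$ and hence holds for any $\boldsymbol{\sigma}\in\operatorname{Sp}(2,\C)$. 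This is a polynomial identity in $\C[X,Y]$; evaluating at $(X,Y)=(a^\dagg,a)$ and using Proposition~\ref{prop:help:commutator:q:p}(c) to control the reorderings modulo terms of degree $<d$ yields the claim. The main obstacle I anticipate is not conceptual but notational: keeping careful track of which factors live in $\mathbb{S}[X,Y]$ versus $\C[X,Y]$, and ensuring that all reorderings between $a$ and $a^\dagg$ introduced during evaluation are absorbed by the $\upto{d}$ relation rather than silently dropped.
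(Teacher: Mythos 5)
Your proof is correct. For parts (a) and (b) you take essentially the paper's route: (a) is the observation that a linear substitution preserves homogeneous degree (the paper simply writes out both sides of the identity and compares coefficients), and (b) rests on the same leading-order commutator formula --- your $[(a^\dagg)^{r_1}a^{s_1},(a^\dagg)^{r_2}a^{s_2}]\upto{d}(s_1r_2-s_2r_1)(a^\dagg)^{r_1+r_2-1}a^{s_1+s_2-1}$ is exactly what the paper extracts from the identity $[AB,CD]=A[B,C]D+AC[B,D]+[A,C]DB+C[A,D]B$ together with Proposition~\ref{prop:help:commutator:q:p}, and your preliminary reduction $[x,y]\upto{d}[R_x,R_y]$ is implicit in the paper's discarding of sub-leading terms at the end. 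For part (c), however, your argument is genuinely different and cleaner: the paper computes $\omega(\nabla R_{\sigma x},\nabla R_{\sigma y})$ by brute force via Proposition~\ref{prop:product:rule} and only at the last step recombines the eight resulting terms using $\sigma_{11}\sigma_{22}-\sigma_{12}\sigma_{21}=1$, whereas you isolate the two structural facts doing the work --- the chain rule $\nabla(\sigma f)=\boldsymbol{\sigma}^{\Tp}(\sigma\nabla f)$ and the identity $\boldsymbol{\sigma}\boldsymbol{J}\boldsymbol{\sigma}^{\Tp}=\boldsymbol{J}$, which for $2\times2$ matrices is equivalent to $\det\boldsymbol{\sigma}=1$ --- and obtain the exact identity in the commutative ring $\C[X,Y]$ before evaluating. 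This buys transparency and would generalize directly to $\operatorname{Sp}(2n,\C)$ in the multi-mode setting; the only price is that you must, as you correctly do, invoke Proposition~\ref{prop:help:commutator:q:p}(c) to argue that the reorderings incurred when substituting $(X,Y)\mapsto(a^\dagg,a)$ into a product of polynomials live in degree at most $d-2$ and are therefore absorbed by the $\upto{d}$ relation.
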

	
	\begin{proof}
		Let $x,y\in\mathbb{S}[X,Y]$ be two arbitrary polynomials and $\boldsymbol{\sigma}\in\operatorname{Sp}(2,\C)$ a symplectic matrix. We will now show that claim (a) follows directly from the definition of the reduction map and the symplectic action on a polynomial. For this, start by recalling that the symplectic action is defined by:
		\begin{align*}
			(\sigma x)(X,Y)= \sum_{m=0}^{d_x}\sum_{j=0}^mc_{mj}(\sigma_{11}X+\sigma_{12}Y)^{m-j}(\sigma_{21}X+\sigma_{22}Y)^j.
		\end{align*}
		Thus, calculating the reduction map of the polynomial above yields:
		\begin{align}\label{eqn:help:lem:Igusa:1:1}
			R_{\sigma x}(X,Y)=\sum_{j=0}^{d_x}c_{d_xj}(\sigma_{11}X+\sigma_{12}Y)^{d_x-j}(\sigma_{21}X+\sigma_{22}Y)^j.
		\end{align}
		Now, we observe that the reduction map of $x(X,Y)$ is given by $R_x(X,Y)=\sum_{j=0}^{d_x}c_{d_xj}X^{d_x-j}Y^j$. Acting with the symplectic action $\sigma$ on this polynomial reads now:
		\begin{align}\label{eqn:help:lem:Igusa:1:2}
			(\sigma R_x)(X,Y)=R_x(\sigma_{11}X+\sigma_{12}Y,\sigma_{21}X+\sigma_{22}Y)=\sum_{j=0}^{d_x}c_{d_xj}(\sigma_{11}X+\sigma_{12}Y)^{d_x-j}(\sigma_{21}X+\sigma_{22}Y)^j.
		\end{align}
		Comparing equations \eqref{eqn:help:lem:Igusa:1:1} and \eqref{eqn:help:lem:Igusa:1:2} concludes the proof of claim (a).
		
		To prove claim (b), one starts by computing the commutator $[x(a^\dagg,a),y(a^\dagg,a)]$. The bilineartity of the commutator allows us to write:
		\begin{align*}
			[x(a^\dagg,a),y(a^\dagg,a)]&=\sum_{m=0}^{d_x}\sum_{j=0}^m\sum_{m'=0}^{d_y}\sum_{j'=0}^{m'}c_{mj}\hat{c}_{m'j'}[(a^\dagg)^{m-j}a^j,(a^\dagg)^{m'-j'}a^{j'}].
		\end{align*}
		Next, we can observe that $[a^\alpha,a^\beta]=0=[(a^\dagg)^\alpha,(a^\dagg)^\beta]$ for all $\alpha,\beta\in\N_{\geq0}$. Hence, the identity $[AB,CD]=A[B,C]D+AC[B,D]+[A,C]DB+C[A,D]B$ allows us to find:
		\begin{align*}
			[x(a^\dagg,a),y(a^\dagg,a)]&=\sum_{m=0}^{d_x}\sum_{j=0}^m\sum_{m'=0}^{d_y}\sum_{j'=0}^{m'}c_{mj}\hat{c}_{m'j'}\left((a^\dagg)^{m'-j'}[(a^\dagg)^{m-j},a^{j'}]a^j-(a^\dagg)^{m-j}[(a^\dagg)^{m'-j'},a^j]a^{j'}\right).
		\end{align*}
		Proposition~\ref{prop:help:commutator:q:p} now implies
		\begin{align*}
			[x(a^\dagg,a),y(a^\dagg,a)]&\upto{d}-\sum_{m=0}^{d_x}\sum_{j=0}^m\sum_{m'=0}^{d_y}\sum_{j'=0}^{m'}c_{mj}\hat{c}_{m'j'}\left((m-j)j'-j(m'-j')\right)(a^\dagg)^{m+m'-(j+j')-1}a^{j+j'-1},
		\end{align*}
		where $d=d_x+d_y-2$.
		Now, recall that terms of degree smaller than $d$ do not contribute to the validity of the relation \enquote{$\upto{d}$}. Thus, one can discard terms that are not of maximal degree and one has
		\begin{align*}
			[x(a^\dagg,a),y(a^\dagg,a)]&\upto{d}-\sum_{j=0}^{d_x}\sum_{j'=0}^{d_y}c_{d_xj}\hat{c}_{d_yj'}((d_x-j)j'-j(d_y-j'))(a^\dagg)^{d_x+d_y-(j+j')-1}a^{j+j'-1}=R_{[x,y]}(a^\dagg,a).
		\end{align*}
		Next, we want to compute $\omega(\nabla R_x(a^\dagg,a),\nabla R_y(a^\dagg,a))$. This can be done using the definition of $\omega$ and utilizing Proposition~\ref{prop:product:rule}:
			\begin{align*}
				\omega(\nabla R_x(a^\dagg,a),\nabla R_y(a^\dagg,a))&=(\partial_X R_x(a^\dagg,a))(\partial_Y R_y(a^\dagg,a))-(\partial_Y R_x(a^\dagg,a))(\partial_X R_y(a^\dagg,a))\\
				&=\sum_{j=0}^{d_x}(d_x-j)c_{d_xj}(a^\dagg)^{d_x-j-1}a^j\sum_{j'=0}^{d_y}j'\hat{c}_{d_yj'}(a^\dagg)^{d_y-j'}a^{j'-1}\\
				&\qquad-\sum_{j=0}^{d_x}jc_{d_xj}(a^\dagg)^{d_x-j}a^{j-1}\sum_{j'=0}^{d_y}(d_y-j')\hat{c}_{d_yj'}(a^\dagg)^{d_y-j'-1}a^{j'}.
			\end{align*}
			Now we can use Proposition~\ref{prop:help:commutator:q:p} to furthermore compute
			\begin{align*}
				\omega(\nabla R_x(a^\dagg,a),\nabla R_y(a^\dagg,a))&\upto{d}\sum_{j=0}^{d_x}\sum_{j'=0}^{d_y}c_{d_xj}\hat{c}_{d_yj'}((d_x-j)j'-j(d_y-j'))(a^\dagg)^{d_x+d_y-(j+j')-1}a^{j+j'-1}.
			\end{align*}\noindent
			We can conclude that $R_{[x,y]}(a^\dagg,a)\upto{d}-\omega(\nabla R_x(a^\dagg,a),\nabla R_y(a^\dagg,a))$, proving claim (b). 
			
			Finally, we want to show claim (c). To achieve this, we compute $\omega(\nabla R_{\sigma x}(a^\dagg,a),\nabla R_{\sigma y}(a^\dagg,a))$ using the definition of the symplectic form $\omega$:
			\begin{align*}
				&\omega(\nabla R_{\sigma x}(a^\dagg,a),\nabla R_{\sigma y}(a^\dagg,a))=(\partial_X R_{\sigma x}(a^\dagg,a))(\partial_Y R_{\sigma y}(a^\dagg,a))-(\partial_Y R_{\sigma x}(a^\dagg,a))(\partial_X R_{\sigma y} (a^\dagg,a))\\
				&\qquad=\sum_{j=0}^{d_x}\sum_{j'=0}^{d_y}c_{d_xj}\hat{c}_{d_yj'}\Bigl(\left(\partial_X(\sigma_{11}X+\sigma_{12}Y)^{d_x-j}(\sigma_{21}X+\sigma_{22}Y)^j\right)\left(\partial_Y(\sigma_{11}X+\sigma_{12}Y)^{d_y-j'}(\sigma_{21}X+\sigma_{22}Y)^{j'}\right)\\
				&\qquad\qquad-\left(\partial_Y(\sigma_{11}X+\sigma_{12}Y)^{d_x-j}(\sigma_{21}X+\sigma_{22}Y)^j\right)\left(\partial_X(\sigma_{11}X+\sigma_{12}Y)^{d_y-j'}(\sigma_{21}X+\sigma_{22}Y)^{j'}\right)\Bigr)\Biggr|_{(X,Y)=(a^\dagg,a)}.
			\end{align*}
			Now, we can employ Proposition~\ref{prop:product:rule} to calculate the derivations $\partial_X$ and $\partial_Y$. Substituting $a^\dagg$ for $X$ and $a$ for $Y$ yields:
			\begin{align*}
				&\quad\omega(\nabla R_{\sigma x}(a^\dagg,a),\nabla R_{\sigma y}(a^\dagg,a))\\&=\sum_{j=0}^{d_x}\sum_{j'=0}^{d_y}c_{d_xj}\hat{c}_{d_yj'}\Bigl(\sigma_{11}\sigma_{12}(d_x-j)(d_y-j')(\sigma_{11}a^\dagg+\sigma_{12}a)^{d_x-j-1}(\sigma_{21}a^\dagg+\sigma_{22}a)^j(\sigma_{11}a^\dagg+\sigma_{12}a)^{d_y-j'-1}(\sigma_{21}a^\dagg+\sigma_{22}a)^{j'}\\
				&\qquad\qquad+\sigma_{11}\sigma_{22}(d_x-j)j'(\sigma_{11}a^\dagg+\sigma_{12}a)^{d_x-j-1}(\sigma_{21}a^\dagg+\sigma_{22}a)^j(\sigma_{11}a^\dagg+\sigma_{12}a)^{d_y-j'}(\sigma_{21}a^\dagg+\sigma_{22}a)^{j'-1}\\
				&\qquad\qquad+\sigma_{21}\sigma_{12}j(d_y-j')(\sigma_{11}a^\dagg+\sigma_{12}a)^{d_x-j}(\sigma_{21}a^\dagg+\sigma_{22}a)^{j-1}(\sigma_{11}a^\dagg+\sigma_{12}a)^{d_y-j'-1}(\sigma_{21}a^\dagg+\sigma_{22}a)^{j'}\\
				&\qquad\qquad+\sigma_{21}\sigma_{22}jj'(\sigma_{11}a^\dagg+\sigma_{12}a)^{d_x-j}(\sigma_{21}a^\dagg+\sigma_{22}a)^{-1}(\sigma_{11}a^\dagg+\sigma_{12}a)^{d_y-j'}(\sigma_{21}a^\dagg+\sigma_{22}a)^{j'-1}\\
				& \qquad\qquad-\sigma_{12}\sigma_{11}(d_x-j)(d_y-j')(\sigma_{11}a^\dagg+\sigma_{12}a)^{d_x-j-1}(\sigma_{21}a^\dagg+\sigma_{22}a)^j(\sigma_{11}a^\dagg+\sigma_{12}a)^{d_y-j'-1}(\sigma_{21}a^\dagg+\sigma_{22}a)^{j'}\\
				&\qquad\qquad-\sigma_{12}\sigma_{21}(d_x-j)j'(\sigma_{11}a^\dagg+\sigma_{12}a)^{d_x-j-1}(\sigma_{21}a^\dagg+\sigma_{22}a)^j(\sigma_{11}a^\dagg+\sigma_{12}a)^{d_y-j'}(\sigma_{21}a^\dagg+\sigma_{22}a)^{j'-1}\\
				&\qquad\qquad-\sigma_{22}\sigma_{11}j(d_y-j')(\sigma_{11}a^\dagg+\sigma_{12}a)^{d_x-j}(\sigma_{21}a^\dagg+\sigma_{22}a)^{-1}(\sigma_{11}a^\dagg+\sigma_{12}a)^{d_y-j'-1}(\sigma_{21}a^\dagg+\sigma_{22}a)^{j'}\\
				&\qquad\qquad-\sigma_{22}\sigma_{21}jj'(\sigma_{11}a^\dagg+\sigma_{12}a)^{d_x-j}(\sigma_{21}a^\dagg+\sigma_{22}a)^{j-1}(\sigma_{11}a^\dagg+\sigma_{12}a)^{d_y-j'}(\sigma_{21}a^\dagg+\sigma_{22}a)^{j'-1}\Bigr).
			\end{align*}
			These terms can now be recombined by recalling that $\det(\boldsymbol{\sigma})=\sigma_{11}\sigma_{22}-\sigma_{12}\sigma_{21}=1$ is the defining property of a symplectic matrix $\boldsymbol{\sigma}\in\operatorname{Sp}(2,\C)$. Hence:
			\begin{align*}
				&\quad\omega(\nabla R_{\sigma x}(a^\dagg,a),\nabla R_{\sigma y}(a^\dagg,a))\\
				&=\sum_{j=0}^{d_x}\sum_{j'=0}^{d_y}c_{d_xj}\hat{c}_{d_yj'}\Bigl((d_x-j)j'(\sigma_{11}a^\dagg+\sigma_{12}a)^{d_x-j-1}(\sigma_{21}a^\dagg+\sigma_{22}a)^j(\sigma_{11}a^\dagg+\sigma_{12}a)^{d_y-j'}(\sigma_{21}a^\dagg+\sigma_{22}a)^{j'-1}\\
				&\qquad\qquad-j(d_y-j')(\sigma_{11}a^\dagg+\sigma_{12}a)^{d_x-j}(\sigma_{21}a^\dagg+\sigma_{22}a)^{j-1}(\sigma_{11}a^\dagg+\sigma_{12}a)^{d_y-j'-1}(\sigma_{21}a^\dagg+\sigma_{22}a)^{j'}\Bigr).
			\end{align*}
			Utilizing Proposition~\ref{prop:help:commutator:q:p} yields finally
			{\small
				\begin{align*}
					\omega(\nabla R_{\sigma x}(a^\dagg,a),\nabla R_{\sigma y}(a^\dagg,a))&\upto{d}\sum_{j=0}^{d_x}\sum_{j'=0}^{d_y}c_{d_xj}\hat{c}_{d_yj'}\left((d_x-j)j'-j(d_y-j')\right)(\sigma_{11}a^\dagg+\sigma_{12}a)^{d_x+d_y-(j+j')-1}(\sigma_{21}a^\dagg+\sigma_{22}a)^{j+j'-1}\\
					&\upto{d}\sigma\omega(\nabla R_x(a^\dagg,a),\nabla R_y(a^\dagg,a)),
			\end{align*}}\noindent
			which is the desired result, and therefore completes the proof of claim (c).
		\end{proof}
		
		\begin{lemma}\label{lem:Igusa:2}
			Let $x\in\mathbb{S}[X,Y]$ be a polynomial of degree $d_x$ satisfying $x-R_x=0$, and let $\boldsymbol{\sigma}=(\boldsymbol{\xi}|\boldsymbol{\eta})\in\operatorname{Sp}(2,\C)$ be an arbitrary symplectic matrix. Then, one has:
			\begin{align*}
				(\sigma x)(a^\dagg,a)\upto{d_x}x(\boldsymbol{\eta})a^{d_x}+( \nabla x(\boldsymbol{\eta}))^\Tp\boldsymbol{\xi}a^\dagg a^{d_x-1}+\mathcal{O}\left((a^\dagg)^2a^{d_x-2}+\ldots+(a^\dagg)^{d_x}\right),
			\end{align*}
			where the remainder $\mathcal{O}((a^\dagg)^2a^{d_x-2}+\ldots+(a^\dagg)^{d_x})$ contains only terms of the form $(a^\dagg)^j a^{d_x-j}$ with $j\in\{2,\ldots,d_x\}$.
		\end{lemma}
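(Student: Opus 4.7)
The plan is to reduce the non-commutative computation to a commutative polynomial expansion, and then read off the first two coefficients in powers of $a^\dagg$. Since $x$ is homogeneous of degree $d_x$ (the hypothesis $x - R_x = 0$), the symplectic action gives
\begin{align*}
(\sigma x)(X,Y) = \sum_{j=0}^{d_x} c_{d_x j}\, (\sigma_{11}X + \sigma_{12}Y)^{d_x - j}(\sigma_{21}X + \sigma_{22}Y)^{j}.
\end{align*}
Substituting $X \mapsto a^\dagg$ and $Y \mapsto a$ produces non-commuting factors, but Proposition~\ref{prop:help:commutator:q:p}(c) guarantees $(a^\dagg)^r a^s \upto{r+s} a^s (a^\dagg)^r$, so all reorderings only affect terms of degree strictly less than $d_x$. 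Hence, modulo $\upto{d_x}$, I may expand as if $a^\dagg$ and $a$ commute.

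I would then apply the binomial theorem to each factor and collect by the total power of $a^\dagg$. Writing $(\sigma_{11}a^\dagg + \sigma_{12}a)^{d_x - j}$ and $(\sigma_{21}a^\dagg + \sigma_{22}a)^j$ as commuting binomials, the term with no $a^\dagg$ (i.e., the coefficient of $a^{d_x}$) is
\begin{align*}
\sum_{j=0}^{d_x} c_{d_x j}\, \sigma_{12}^{d_x - j}\sigma_{22}^{j} = x(\sigma_{12},\sigma_{22}) = x(\boldsymbol{\eta}),
\end{align*}
because $\boldsymbol{\eta}$ is the second column of $\boldsymbol{\sigma}$. The coefficient of $a^\dagg a^{d_x - 1}$ (exactly one $a^\dagg$) picks up a linear derivative from the expansion:
\begin{align*}
\sum_{j=0}^{d_x} c_{d_x j}\Bigl[(d_x - j)\sigma_{11}\sigma_{12}^{d_x - j - 1}\sigma_{22}^{j} + j\sigma_{21}\sigma_{12}^{d_x - j}\sigma_{22}^{j - 1}\Bigr] = \sigma_{11}\,\partial_X x(\boldsymbol{\eta}) + \sigma_{21}\,\partial_Y x(\boldsymbol{\eta}) = (\nabla x(\boldsymbol{\eta}))^{\Tp}\boldsymbol{\xi},
\end{align*}
where I identified $\boldsymbol{\xi} = (\sigma_{11}, \sigma_{21})^{\Tp}$. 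Everything else is absorbed into $\mathcal{O}((a^\dagg)^{2}a^{d_x - 2} + \ldots + (a^\dagg)^{d_x})$.

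There is no single hard step here; the proof is essentially bookkeeping. The main point requiring care is justifying that the freely reordered commutative expansion actually equals $(\sigma x)(a^\dagg,a)$ up to $\upto{d_x}$. For this I would appeal explicitly to Proposition~\ref{prop:help:commutator:q:p}(c) applied to each monomial $(a^\dagg)^r a^s$ of degree $r + s = d_x$, possibly iterated, which keeps the error below degree $d_x$; the linearity of the degree-drop relation (Proposition~\ref{prop:equivalence:relation:linear:combination}) then lets me sum the reorderings across the full binomial expansion without losing the $\upto{d_x}$ estimate. With that justification in place, the two extracted coefficients give precisely the claimed leading terms, and the remaining monomials are of the stated form $(a^\dagg)^{k} a^{d_x - k}$ with $k \geq 2$.
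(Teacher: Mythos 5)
Your proposal is correct and follows essentially the same route as the paper's proof: expand the binomials, use the fact that reordering words in $a^\dagg,a$ only produces corrections of degree at most $d_x-2$ (the paper likewise invokes Proposition~\ref{prop:help:commutator:q:p} to justify treating the factors as commuting modulo $\upto{d_x}$), and read off the coefficients of $a^{d_x}$ and $a^\dagg a^{d_x-1}$ as $x(\boldsymbol{\eta})$ and $(\nabla x(\boldsymbol{\eta}))^{\Tp}\boldsymbol{\xi}$. Your column identifications $\boldsymbol{\xi}=(\sigma_{11},\sigma_{21})^{\Tp}$ and $\boldsymbol{\eta}=(\sigma_{12},\sigma_{22})^{\Tp}$ are consistent with the paper's convention $\boldsymbol{\sigma}=(\boldsymbol{\xi}|\boldsymbol{\eta})$, so the bookkeeping checks out.
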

		
		\begin{proof}
			Let $x\in\mathbb{S}[X,Y]$ be a polynomial of degree $d_x$ satisfying $x-R_x=0$, i.e., let $x$ be a homogeneous polynomial of degree $d_x$. Let furthermore $\boldsymbol{\sigma}=(\boldsymbol{\xi}|\boldsymbol{\eta})\in\operatorname{Sp}(2,\C)$. Then, by the definition of the symplectic action on polynomials, one has:
			\begin{align*}
				(\sigma x)(a^\dagg,a)&=\sum_{j=0}^{d_x}c_{d_xj}\left(\xi_1a^\dagg+\eta_1a\right)^{d_x-j}\left(\xi_2a^\dagg+\eta_2a\right)^j.
			\end{align*}
			We can expand the two terms $(\xi_1a^\dagg+\eta_1a)^{d_x-j}$ and $(\xi_2a^\dagg+\eta_2a)^j$ individually, and then group terms with the same number of creation and annihilation operators together. This yields: 
			\begin{align*}
				(\sigma x)(a^\dagg,a)&=\sum_{j=0}^{d_x}c_{d_xj}\left(\xi_1^{d_x-j}(a^\dagg)^{d_x-j}+\ldots+\xi_1\eta_1^{d_x-j-1}\left(a^\dagg a^{d_x-j-1}+\ldots+a^{d_x-j-1}a^\dagg\right)+\eta_1^{d_x-j}a^{d_x-j}\right)\\
				&\qquad\qquad\times\left(\xi_2^j(a^\dagg)^j+\ldots+\xi_2\eta_2^{j-1}\left(a^\dagg a^{j-1}+\ldots+a^{j-1}a^\dagg\right)+\eta_2^ja^j\right).
			\end{align*}
			Applying Proposition~\ref{prop:help:commutator:q:p}, one can show that $a^\dagg a^\ell+a a^\dagg a^{\ell-1}+\ldots+a^{\ell-1}a^\dagg a+a^\ell a^\dagg\upto{\ell+1}(\ell+1)a^\dagg a^\ell$. It follows that
			{\small
				\begin{align*}
					(\sigma x)(a^\dagg,a)&\upto{d_x}\sum_{j=0}^{d_x}c_{d_xj}\left(\xi_1^{d_x-j}(a^\dagg)^{d_x-j}+\ldots+(d_x-j)\xi_1\eta_1^{d_x-j-1}a^\dagg a^{d_x-j-1}+\eta_1^{d_x-j}a^{d_x-j}\right)\left(\xi_2^j(a^\dagg)^j+\ldots+j\xi_2\eta_2^{j-1}a^\dagg a^{j-1}+\eta_2^{j}a^j\right).
			\end{align*}}\noindent
			We can now multiply the two brackets together, group terms with the same number of creation and annihilation operators together, and use Proposition~\ref{prop:help:commutator:q:p} to normal order the operators. We find:
			\begin{align}
				(\sigma x)(a^\dagg,a)&\upto{d_x}\sum_{j=0}^{d_x}c_{d_xj}\left(\xi_1^{d_x-j}\xi_2^j(a^\dagg)^{d_x}+\ldots+\left((d_x-j)\xi_1\eta_1^{d_x-j-1}\eta_2^{j}+j\xi_2\eta_1^{d_x-j}\eta_2^{j-1}\right)a^\dagg a^{d_x-1}+\eta_1^{d_x-j}\eta_2^ja^{d_x}\right).\label{eqn:help:Lem:Igusa:2:1}
			\end{align}\noindent
			One observes furthermore
			\begin{align*}
				x(\boldsymbol{\eta})a^{d_x}=\sum_{j=0}^{d_x}c_{d_xj}\eta_1^{d_x-j}\eta_2^{j}a^{d_x},\;\;\text{and}\;\;\nabla x(\boldsymbol{\eta})&=\sum_{j=0}^{d_x}c_{d_xj}\left.\begin{pmatrix}
					(d_x-j)X^{d_x-j-1}Y^j\\
					jX^{d_x-j}Y^{j-1}
				\end{pmatrix}\right|_{(X,Y)=\boldsymbol{\eta}}=\sum_{j=0}^{d_x}c_{d_xj}\eta_1^{d_x-j-1}\eta_2^{j-1}\begin{pmatrix}
					(d_x-j)\eta_2\\
					j\eta_1
				\end{pmatrix}.
			\end{align*}
			Thus, one has:
			\begin{align*}
				(\nabla x(\boldsymbol{\eta}))^\Tp\boldsymbol{\xi}a^\dagg a^{d_x-1}&=\sum_{j=0}^{d_x}c_{d_xj}\left((d_x-j)\xi_1\eta_1^{d_x-j-1}\eta_2^j+j\xi_2\eta_1^{d_x-j}\eta_2^{j-1}\right)a^\dagg a^{d_x-1},
			\end{align*}
			and henceforth, by combining the results for $x(\boldsymbol{\eta})$ and $( \nabla x(\boldsymbol{\eta}))^\Tp\boldsymbol{\xi}a^\dagg a^{d_x-1}$, we find:
			\begin{align*}
				x(\boldsymbol{\eta})a^{d_x}+( \nabla x(\boldsymbol{\eta}))^\Tp\boldsymbol{\xi}a^\dagg a^{d_x-1}=\sum_{j=0}^{d_x}c_{d_xj}\left(\left((d_x-j)\xi_1\eta_1^{d_x-j-1}\eta_2^{j}+j\xi_2\eta_1^{d_x-j}\eta_2^{j-1}\right)a^\dagg a^{d_x-1}+\eta_1^{d_x-j}\eta_2^ja^{d_x}\right).
			\end{align*}
			Comparing this with equation \eqref{eqn:help:Lem:Igusa:2:1}, yields
			\begin{align*}
				(\sigma x)(a^\dagg,a)\upto{d_x}x(\boldsymbol{\eta})a^{d_x}+( \nabla x(\boldsymbol{\eta}))^\Tp\boldsymbol{\xi}a^\dagg a^{d_x-1}+\mathcal{O}\left((a^\dagg)^2a^{d_x-2}+\ldots+(a^\dagg)^{d_x}\right),
			\end{align*}
			as claimed.
		\end{proof}
		
		\begin{lemma}\label{lem:Igusa:3}
			Let $x,y\in\mathbb{S}[X,Y]$ be two polynomials of degree greater than zero. If there exists a vector $\boldsymbol{\eta}\in\C^2$ such that the following conditions apply:
			\begin{enumerate}[label =(\roman*)]
				\item $R_x(\boldsymbol{\eta})R_y(\boldsymbol{\eta})\neq0$,
				\item $\omega(\nabla R_x,\nabla R_y)(\boldsymbol{\eta})\neq0$,
				\item The set $\{\boldsymbol{J}(\nabla R_x(\boldsymbol{\eta})),\boldsymbol{J}(\nabla R_y(\boldsymbol{\eta})),\boldsymbol{\eta}\}$ is linearly dependent.
			\end{enumerate}
			Then, there exists a symplectic matrix $\boldsymbol{\sigma}=(\boldsymbol{\xi}|\boldsymbol{\eta})\in\operatorname{Sp}(2,\C)$ such that
			\begin{align*}
				\sigma R_x(a^\dagg,a)&\upto{d_x}a_0a^{d_x}+a_1a^\dagg a^{d_x-1}+\mathcal{O}\left((a^\dagg)^2a^{d_x-2}+\ldots+(a^\dagg)^{d_x}\right),\\
				\sigma R_y(a^\dagg,a)&\upto{d_y}b_0a^{d_x}+b_1a^\dagg a^{d_x-1}+\mathcal{O}\left((a^\dagg)^2a^{d_x-2}+\ldots+(a^\dagg)^{d_x}\right),
			\end{align*}
			where $a_0b_0\neq0$ and $d_xa_0b_1-d_ya_1b_0\neq0$.
		\end{lemma}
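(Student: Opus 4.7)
The plan is to read off the coefficients $a_0, a_1, b_0, b_1$ directly from Lemma~\ref{lem:Igusa:2} and then verify both nonvanishing conditions by choosing $\boldsymbol{\xi}$ appropriately. Since $R_x$ and $R_y$ are by construction homogeneous of degrees $d_x$ and $d_y$, the hypothesis $x - R_x = 0$ of Lemma~\ref{lem:Igusa:2} is satisfied by $R_x$ (and likewise for $R_y$). Thus, for any symplectic matrix $\boldsymbol{\sigma} = (\boldsymbol{\xi}|\boldsymbol{\eta}) \in \operatorname{Sp}(2,\C)$, that lemma yields
\begin{align*}
(\sigma R_x)(a^\dagg,a) &\upto{d_x} R_x(\boldsymbol{\eta})\,a^{d_x} + (\nabla R_x(\boldsymbol{\eta}))^{\Tp}\boldsymbol{\xi}\, a^\dagg a^{d_x-1} + \mathcal{O}\left((a^\dagg)^2 a^{d_x-2} + \ldots + (a^\dagg)^{d_x}\right),
\end{align*}
and analogously for $R_y$. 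We then read off $a_0 := R_x(\boldsymbol{\eta})$, $a_1 := (\nabla R_x(\boldsymbol{\eta}))^{\Tp}\boldsymbol{\xi}$, $b_0 := R_y(\boldsymbol{\eta})$, and $b_1 := (\nabla R_y(\boldsymbol{\eta}))^{\Tp}\boldsymbol{\xi}$. Assumption (i) immediately delivers $a_0 b_0 = R_x(\boldsymbol{\eta}) R_y(\boldsymbol{\eta}) \neq 0$.

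Next I would verify $d_x a_0 b_1 - d_y a_1 b_0 \neq 0$. Using Euler's identity for homogeneous polynomials, $\boldsymbol{\eta}^{\Tp}\nabla R_x(\boldsymbol{\eta}) = d_x R_x(\boldsymbol{\eta})$ and likewise for $R_y$, the difference takes the compact form
\begin{align*}
d_x a_0 b_1 - d_y a_1 b_0 = (\boldsymbol{\eta}^{\Tp}\boldsymbol{a})(\boldsymbol{b}^{\Tp}\boldsymbol{\xi}) - (\boldsymbol{a}^{\Tp}\boldsymbol{\xi})(\boldsymbol{\eta}^{\Tp}\boldsymbol{b}),
\end{align*}
with $\boldsymbol{a} := \nabla R_x(\boldsymbol{\eta})$ and $\boldsymbol{b} := \nabla R_y(\boldsymbol{\eta})$. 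Expanding component-wise in $\C^2$ collapses the right-hand side to $-\omega(\boldsymbol{\xi},\boldsymbol{\eta})\,\omega(\boldsymbol{a},\boldsymbol{b})$. Since the defining property $\det(\boldsymbol{\sigma}) = \xi_1\eta_2 - \xi_2\eta_1 = \omega(\boldsymbol{\xi},\boldsymbol{\eta}) = 1$ of $\operatorname{Sp}(2,\C)$ fixes the first symplectic product to unity, the whole expression reduces to $-\omega(\nabla R_x, \nabla R_y)(\boldsymbol{\eta})$, which is nonzero by assumption (ii).

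It remains to exhibit an admissible $\boldsymbol{\xi}$. Because $R_x$ has positive degree and $R_x(\boldsymbol{\eta}) \neq 0$ by (i), we have $\boldsymbol{\eta} \neq \boldsymbol{0}$, so the scalar equation $\xi_1\eta_2 - \xi_2\eta_1 = 1$ admits solutions; any such $\boldsymbol{\xi}$ produces a symplectic $\boldsymbol{\sigma} = (\boldsymbol{\xi}|\boldsymbol{\eta})$ and the two claimed nonvanishing properties follow. Assumption (iii) is automatic in $\C^2$, where any three vectors are linearly dependent, so it plays no active role in this particular proof and seems to be a vestige of a higher-dimensional generalization. The main obstacle is therefore not conceptual but bookkeeping: the identity collapsing the bilinear expression to $-\omega(\boldsymbol{\xi},\boldsymbol{\eta})\,\omega(\boldsymbol{a},\boldsymbol{b})$ must be handled carefully with the sign convention $\omega(\boldsymbol{u},\boldsymbol{v}) = u_1 v_2 - u_2 v_1$, but once the conventions are fixed the calculation is a direct substitution.
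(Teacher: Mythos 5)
Your proof is correct, but it takes a genuinely different route from the paper's. The paper constructs one specific completion $\boldsymbol{\xi}:=\frac{1}{d_xR_x(\boldsymbol{\eta})}\boldsymbol{J}(\nabla R_x(\boldsymbol{\eta}))$, verifies $\omega(\boldsymbol{\xi},\boldsymbol{\eta})=1$ via the same Euler identity you invoke, and then uses the decomposition $\boldsymbol{J}(\nabla R_y(\boldsymbol{\eta}))=\alpha\boldsymbol{\xi}+\beta\boldsymbol{\eta}$ to compute $b_1=-\beta$ explicitly; the special choice forces $a_1=(\nabla R_x(\boldsymbol{\eta}))^{\Tp}\boldsymbol{\xi}\propto\omega(\nabla R_x(\boldsymbol{\eta}),\nabla R_x(\boldsymbol{\eta}))=0$, so the key quantity collapses to $d_xa_0b_1=-\omega(\nabla R_x(\boldsymbol{\eta}),\nabla R_y(\boldsymbol{\eta}))$. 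You instead observe that for \emph{any} symplectic completion of $\boldsymbol{\eta}$ the combination $d_xa_0b_1-d_ya_1b_0=(\boldsymbol{\eta}^{\Tp}\boldsymbol{a})(\boldsymbol{b}^{\Tp}\boldsymbol{\xi})-(\boldsymbol{a}^{\Tp}\boldsymbol{\xi})(\boldsymbol{\eta}^{\Tp}\boldsymbol{b})=-\omega(\boldsymbol{\xi},\boldsymbol{\eta})\,\omega(\boldsymbol{a},\boldsymbol{b})$ is an invariant; I checked the bilinear collapse with the convention $\omega(\boldsymbol{u},\boldsymbol{v})=u_1v_2-u_2v_1$ and it is right, as is the existence of $\boldsymbol{\xi}$ from $\boldsymbol{\eta}\neq\boldsymbol{0}$ (which indeed follows from (i) and $d_x>0$). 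Your remark that condition (iii) is vacuous in $\C^2$ is also accurate --- the paper nominally invokes it to justify the decomposition of $\boldsymbol{J}(\nabla R_y(\boldsymbol{\eta}))$, but that decomposition already follows from $\{\boldsymbol{\xi},\boldsymbol{\eta}\}$ spanning $\C^2$, so (iii) is a vestige of the multi-mode setting. What your argument buys is economy and choice-independence (no need to engineer $a_1=0$); what the paper's buys is an explicit $\boldsymbol{\sigma}$ and a template that survives the generalization to $\C^{2n}$, where (iii) becomes a genuine hypothesis.
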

		
		\begin{proof}
			Let $x,y\in\mathbb{S}[X,Y]$ be two polynomials satisfying the three conditions above for some fixed $\boldsymbol{\eta}\in\C^2$. The goal is to construct a symplectic matrix $\boldsymbol{\sigma}=(\boldsymbol{\xi}|\boldsymbol{\eta})\in\operatorname{Sp}(2,\C)$ explicitly such that the transformed polynomials $\sigma R_x$ and $\sigma R_y$, when evaluated at $(a^\dagg,a)$, have the desired leading-order structure. 
			
			We start by noting that condition (i) implies $R_x(\boldsymbol{\eta})\neq0$. Thus, we can define the vector
			\begin{align*}
				\boldsymbol{\xi}:=\frac{1}{d_x R_x(\boldsymbol{\eta})}\boldsymbol{J}(\nabla R_x(\boldsymbol{\eta})),
			\end{align*}
			and compute:
			\begin{align*}
				\omega(\boldsymbol{\xi},\boldsymbol{\eta})&=\boldsymbol{\xi}^{\Tp}\boldsymbol{J}\boldsymbol{\eta}=\frac{1}{d_x R_x(\boldsymbol{\eta})}(\nabla R_x(\boldsymbol{\eta}))^{\Tp}\boldsymbol{J}^{\Tp}\boldsymbol{J}\boldsymbol{\eta}=\frac{1}{d_x R_x(\boldsymbol{\eta})}\begin{pmatrix}
					\sum_{j=0}^{d_x}c_{d_xj}(d_x-j)\eta_1^{d_x-j-1}\eta_2^j\\
					\sum_{j=0}^{d_x}c_{d_xj} j\eta_1^{d_x-j}\eta_2^{j-1}
				\end{pmatrix}^{\Tp}\begin{pmatrix}
					\eta_1\\
					\eta_2
				\end{pmatrix}\\
				&=\frac{1}{d_x R_x(\boldsymbol{\eta})}\sum_{j=0}^{d_x}c_{d_xj}\eta_1^{d_x-j}\eta_2^j(d_x-j+j)=1.
			\end{align*}
			The combined matrix $\boldsymbol{\sigma}:=(\boldsymbol{\xi}|\boldsymbol{\eta})$ is consequently a symplectic matrix, since $\boldsymbol{\sigma}^{\Tp}\boldsymbol{J}\boldsymbol{\sigma}=\boldsymbol{J}\omega(\boldsymbol{\xi},\boldsymbol{\eta})=\boldsymbol{J}$. Condition (iii) implies now that $\boldsymbol{\xi}$, $\boldsymbol{\eta}$, and $\boldsymbol{J}(\nabla R_y(\boldsymbol{\eta}))$ are linearly dependent. Thus, one can write $\boldsymbol{J}(\nabla R_y(\boldsymbol{\eta}))=\alpha\boldsymbol{\xi}+\beta\boldsymbol{\eta}$ for some coefficients $\alpha,\beta\in\C$. To compute those coefficients, we need to calculate $\boldsymbol{J}(\nabla R_y(\boldsymbol{\eta}))$ explicitly:
			\begin{align}
				\boldsymbol{J}(\nabla R_y(\boldsymbol{\eta}))=\begin{pmatrix}
					0&1\\
					-1&0
				\end{pmatrix}\begin{pmatrix}
					\sum_{j=0}^{d_y}\hat{c}_{d_yj}(d_y-j)\eta_1^{d_y-j-1}\eta_2^j\\
					\sum_{j=0}^{d_y}\hat{c}_{d_yj}j\eta_1^{d_y-j}\eta_2^{j-1}
				\end{pmatrix}=\sum_{j=0}^{d_y}\hat{c}_{d_yj}\eta_1^{d_y-j-1}\eta_2^{j-1}\begin{pmatrix}
					j\eta_1\\
					(j-d_y)\eta_2
				\end{pmatrix}.\label{eqn:help:lem:Igusa:3:1}
			\end{align}
			To express $\boldsymbol{J}(\nabla R_y(\boldsymbol{\eta}))$ in terms of $\boldsymbol{\xi}$ and $\boldsymbol{\eta}$, recall that $\boldsymbol{\xi}$ is proportional to $\boldsymbol{J}(\nabla R_x(\boldsymbol{\eta}))$. As will be apparent later, it is helpful to compute:
			\begin{subequations}
				\begin{alignat}{2}
					R_y(\boldsymbol{\eta})\boldsymbol{J}(\nabla R_x(\boldsymbol{\eta}))&=\sum_{j=0}^{d_x}\sum_{j'=0}^{d_y}c_{d_xj}\hat{c}_{d_yj'}\eta_1^{d_y-j'}\eta_2^{j'}\eta_1^{d_x-j-1}\eta_2^{j-1}\begin{pmatrix}
						j\eta_1\\
						(j-d_x)\eta_2
					\end{pmatrix}\nonumber\\
					&=\sum_{j=0}^{d_x}c_{d_xj}\eta_1^{d_x-j}\eta_2^{j}\sum_{j'=0}^{d_y}\hat{c}_{d_yj'}\eta_1^{d_y-j'-1}\eta_2^{j'-1}\begin{pmatrix}
						j\eta_1\\
						(j-d_x)\eta_2
					\end{pmatrix},\label{eqn:help:lem:Igusa:3:2}\\
					(\nabla R_x(\boldsymbol{\eta}))^{\Tp}\boldsymbol{J}(\nabla R_y(\boldsymbol{\eta}))&=\sum_{j=0}^{d_x}\sum_{j'=0}^{d_y}c_{d_xj}\hat{c}_{d_yj'}\eta_1^{d_x-j-1}\eta_2^{j-1}\eta_1^{d_y-j'-1}\eta_2^{j'-1}\begin{pmatrix}
						(d_x-j)\eta_2\\
						j\eta_1
					\end{pmatrix}^{\Tp}\begin{pmatrix}
						j'\eta_1\\
						(d_y-j')\eta_2
					\end{pmatrix}\nonumber\\
					&=\sum_{j=0}^{d_x}c_{d_xj}\eta_1^{d_x-j}\eta_2^{j}\sum_{j'=0}^{d_y}\hat{c}_{d_yj'}\eta_1^{d_y-j'-1}\eta_2^{j'-1}\left((d_x-j)j'-j(d_y-j')\right).\label{eqn:help:lem:Igusa:3:3}
				\end{alignat}
			\end{subequations}	
			Combining \eqref{eqn:help:lem:Igusa:3:2} and \eqref{eqn:help:lem:Igusa:3:3} by multiplying appropriate terms yields:
			\begin{align*}
				d_yR_y(\boldsymbol{\eta})\boldsymbol{J}(\nabla R_x(\boldsymbol{\eta}))+(\nabla R_x(\boldsymbol{\eta}))^{\Tp}\boldsymbol{J}(\nabla R_y(\boldsymbol{\eta}))\boldsymbol{\eta}&=d_x\sum_{j=0}^{d_x}c_{d_xj}\eta_1^{d_x-j}\eta_2^{j}\sum_{j'=0}^{d_y}\hat{c}_{d_yj'}\eta_1^{d_y-j'-1}\eta_2^{j'-1}\begin{pmatrix}
					j'\eta_1\\
					(j'-d_y)\eta_2
				\end{pmatrix}
			\end{align*}
			Comparing this with equation \eqref{eqn:help:lem:Igusa:3:1} allows us to write
			\begin{align*}
				d_yR_y(\boldsymbol{\eta})\boldsymbol{J}(\nabla R_x(\boldsymbol{\eta}))+(\nabla R_x(\boldsymbol{\eta}))^{\Tp}\boldsymbol{J}(\nabla R_y(\boldsymbol{\eta}))\boldsymbol{\eta}&=d_x R_x(\boldsymbol{\eta})\boldsymbol{J}(\nabla R_y(\boldsymbol{\eta}))
			\end{align*}
			and consequently:
			\begin{align*}
				\boldsymbol{J}(\nabla R_y(\boldsymbol{\eta}))=\frac{d_yR_y(\boldsymbol{\eta})\boldsymbol{J}(\nabla R_x(\boldsymbol{\eta}))+(\nabla R_x(\boldsymbol{\eta}))^{\Tp}\boldsymbol{J}(\nabla R_y(\boldsymbol{\eta}))\boldsymbol{\eta}}{d_xR_x(\boldsymbol{\eta})}=d_y R_y(\boldsymbol{\eta}) \boldsymbol{\xi}+\frac{1}{d_xR_x(\boldsymbol{\eta})}\omega(\nabla R_x(\boldsymbol{\eta}),\nabla R_y(\boldsymbol{\eta}))\boldsymbol{\eta}.
			\end{align*}
			Thus, we can make the identifications
			\begin{align*}
				\alpha=d_yR_y(\boldsymbol{\eta}),\qquad\text{and}\qquad\beta=\frac{1}{d_xR_x(\boldsymbol{\eta})}\omega(\nabla R_x(\boldsymbol{\eta}),\nabla R_y(\boldsymbol{\eta})).
			\end{align*}
			Lemma~\ref{lem:Igusa:2} now implies
			\begin{align*}
				\sigma R_x(a^\dagg,a)&\upto{d_x}R_x(\boldsymbol{\eta})a^{d_x}+(\nabla R_x(\boldsymbol{\eta}))^{\Tp}\boldsymbol{\xi}a^\dagg a^{d_x-1}+\mathcal{O}\left((a^\dagg)^2a^{d_x-2}+\ldots+(a^\dagg)^{d_x}\right),\\
				\sigma R_y(a^\dagg,a)&\upto{d_y}R_y(\boldsymbol{\eta})a^{d_y}+(\nabla R_y(\boldsymbol{\eta}))^{\Tp}\boldsymbol{\xi}a^\dagg a^{d_y-1}+\mathcal{O}\left((a^\dagg)^2a^{d_y-2}+\ldots+(a^\dagg)^{d_y}\right).
			\end{align*}
			Condition (i) also implies that $a_0:=R_x(\boldsymbol{\eta})\neq0\neq R_y(\boldsymbol{\eta}):=b_0$. The anti-symmetric property of the symplectic form $\omega$ (i.e., $\omega(\boldsymbol{\xi},\boldsymbol{\eta})=-\omega(\boldsymbol{\eta},\boldsymbol{\xi})$) allows us furthermore to identify:
			\begin{align*}
				(\nabla R_y(\boldsymbol{\eta}))^{\Tp}\boldsymbol{\xi}&=\frac{1}{d_xR_x(\boldsymbol{\eta})}(\nabla R_y(\boldsymbol{\eta}))^{\Tp}\boldsymbol{J}(\nabla R_x(\boldsymbol{\eta}))=\frac{\omega(\nabla R_y(\boldsymbol{\eta}),\nabla R_x(\boldsymbol{\eta}))}{d_x R_x(\boldsymbol{\eta})}=-\frac{\omega(\nabla R_x(\boldsymbol{\eta}),\nabla R_y(\boldsymbol{\eta}))}{d_xR_x(\boldsymbol{\eta})}=-\beta=:b_1,\\
				(\nabla R_x(\boldsymbol{\eta}))^{\Tp}\boldsymbol{\xi}&=\frac{1}{d_x R_x(\boldsymbol{\eta})}(\nabla R_x(\boldsymbol{\eta}))^{\Tp}\boldsymbol{J}(\nabla R_x(\boldsymbol{\eta}))=\frac{\omega(\nabla R_x(\boldsymbol{\eta}),\nabla R_x(\boldsymbol{\eta}))}{d_xR_x(\boldsymbol{\eta})}=0=:a_1.
			\end{align*}
			Conditions (i) and (ii) guarantee now that $\beta$ is a non-zero coefficient. Thus, one has
			\begin{align*}
				\sigma R_x(a^\dagg,a)&\upto{d_x}a_0a^{d_x}+a_1a^\dagg a^{d_x-1}+\mathcal{O}\left((a^\dagg)^2a^{d_x-2}+\ldots+(a^\dagg)^{d_x}\right),\\
				\sigma R_y(a^\dagg,a)&\upto{d_y}b_0a^{d_x}+b_1a^\dagg a^{d_x-1}+\mathcal{O}\left((a^\dagg)^2a^{d_x-2}+\ldots+(a^\dagg)^{d_x}\right),
			\end{align*}
			where $a_0b_0\neq0$ and $d_xa_0b_1-d_ya_1b_0=d_xa_0b_1=-\omega(\nabla R_x(\boldsymbol{\eta}),\nabla R_y(\boldsymbol{\eta}))\neq0$, due to condition (ii).
		\end{proof}
		
		In Lemma~\ref{lem:Igusa:3}, we considered two polynomials $x,y\in\mathbb{S}[X,Y]$ and showed that, under certain constraints, they can be transformed into a specific form that will be significant for the results that follow. A key property of these transformed polynomials is that the inequality $d_xa_0b_1-d_ya_1b_0\neq 0$ holds, where the coefficients $d_x,a_0,a_1,d_y,b_0,b_1$ are determined by the transformed polynomials. This inspires the following definition:
		
		\begin{definition}\label{def:delta:function:appendix}
			Let $x,y\in\C[X,Y]$ be two polynomials of degree $d_x$ and $d_y$ respectively, satisfying:
			\begin{align*}
				x(a^\dagg,a)\upto{d_x}\sum_{m=0}^{d_x}a_m (a^\dagg)^m a^{d_x-m},\qquad\text{and}\qquad y(a^\dagg,a)\upto{d_y}\sum_{m=0}^{d_y}b_m (a^\dagg)^m a^{d_x-m}.
			\end{align*}
			We define the function $\delta:\C[X,Y]\times\C[X,Y]\to\C,$ via the action $(x,y)\mapsto \delta(x,y):= d_ya_1b_0-d_xa_0b_1$.
		\end{definition}
		
		\begin{lemma}\label{lem:Igusa:4}
			Let $x,y\in\C[X,Y]$ be two polynomials of degree $d_x$ and $d_y$ respectively, that admit the following expansions
			\begin{subequations}
				\begin{align}
					R_x(a^\dagg,a)&\upto{d_x}a_0a^{d_x}+a_1a^\dagg a^{d_x-1}+\mathcal{O}\left((a^\dagg)^2a^{d_x-2}+\ldots (a^\dagg)^{d_x}\right),\label{eqn:lem:Igusa_4:1}
				\end{align}
				\begin{align}
					R_y(a^\dagg,a)&\upto{d_y}b_0a^{d_y}+b_1a^\dagg a^{d_y-1}+\mathcal{O}\left((a^\dagg)^2a^{d_y-2}+\ldots+(a^\dagg)^{d_y}\right).\label{eqn:lem:Igusa_4:2}
				\end{align}
			\end{subequations}
			Let the function $\delta:\C[X,Y]\times\C[X,Y]\to\C$ be defined as in Definitions~\ref{def:delta:function:appendix}. Then, the commutator satisfies:
			\begin{align*}
				[x,y](a^\dagg,a)\upto{d}-\delta(x,y)a^{d}+\mathcal{O}\left(a^\dagg a^{d-1}+\ldots (a^\dagg)^{d}\right),
			\end{align*}
			where $d:=d_x+d_y-2$.
		\end{lemma}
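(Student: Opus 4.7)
The strategy is to reduce the computation to a finite, tractable calculation by (a) discarding lower-degree terms via Proposition~\ref{prop:equivalnce:relation:commutator} so that only the reduced polynomials $R_x$ and $R_y$ matter for the degree-$d$ content of $[x,y]$, and (b) identifying, among all bilinear commutators $[(a^\dagg)^j a^{d_x-j},(a^\dagg)^k a^{d_y-k}]$, exactly those that can produce the pure $a^d$ monomial. My expectation is that the only contributions to the pure $a^d$ term come from commuting $a^{d_x}$ with $a^\dagg a^{d_y-1}$ and from commuting $a^\dagg a^{d_x-1}$ with $a^{d_y}$, and the two resulting coefficients will combine into $-\delta(x,y)$.

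Concretely, I would first invoke Proposition~\ref{prop:equivalnce:relation:commutator} to reduce to $[R_x,R_y](a^\dagg,a)\upto{d}[x,y](a^\dagg,a)$, so it suffices to expand the commutator of the leading homogeneous pieces. Writing $R_x(a^\dagg,a)=\sum_{j=0}^{d_x}a_j(a^\dagg)^j a^{d_x-j}$ and similarly for $R_y$, bilinearity gives a double sum of commutators $[(a^\dagg)^j a^{d_x-j},(a^\dagg)^k a^{d_y-k}]$. Using the identity $[AB,CD]=A[B,C]D+AC[B,D]+[A,C]DB+C[A,D]B$ together with Proposition~\ref{prop:help:commutator:q:p}(a), each such commutator equals, modulo degree $d-1$, a linear combination of monomials of total degree $d$ whose creation-operator exponent is at least $j+k-1$. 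Hence for $j+k\ge 2$ no pure-$a^d$ term can arise, and the only contributions to the $a^d$ coefficient come from the four pairs $(j,k)\in\{(0,0),(0,1),(1,0),(1,1)\}$.

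The case $(0,0)$ gives $[a^{d_x},a^{d_y}]=0$, and the case $(1,1)$ yields, by the identity $[A,BC]=[A,B]C+B[A,C]$ and Proposition~\ref{prop:help:commutator:q:p}(a), a monomial proportional to $a^\dagg a^{d-1}$ rather than $a^d$. The two remaining cases give
\begin{align*}
[a^\dagg a^{d_x-1},a^{d_y}]&\upto{d-1}-d_y\,a^d,&[a^{d_x},a^\dagg a^{d_y-1}]&\upto{d-1}d_x\,a^d,
\end{align*}
so assembling the prefactors $a_0,a_1,b_0,b_1$ yields the pure-$a^d$ coefficient $a_0b_1\cdot d_x+a_1b_0\cdot(-d_y)=-(d_y a_1 b_0-d_x a_0 b_1)=-\delta(x,y)$, while all other surviving contributions live in $\mathrm{span}\{a^\dagg a^{d-1},\dots,(a^\dagg)^d\}$.

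The only real obstacle is bookkeeping: I must verify that higher-order tails (the $\mathcal O((a^\dagg)^2 a^{d_x-2}+\cdots)$ parts of $R_x$ and $R_y$), when commuted against the leading $a^{d_x}$ or $a^{d_y}$ terms of the other polynomial, never produce a pure-$a^d$ contribution. This follows from the above observation that $[(a^\dagg)^j a^{d_x-j},(a^\dagg)^k a^{d_y-k}]$, to highest degree, contains only monomials with $(a^\dagg)$-exponent $\geq j+k-1$; whenever $j\ge 2$ or $k\ge 2$, this lower bound is $\ge 1$, ruling out an $a^d$ summand. Combining all of this proves the lemma.
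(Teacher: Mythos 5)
Your proof is correct, but it takes a more elementary route than the paper's. The paper first passes to $R_{[x,y]}$ and then invokes Lemma~\ref{lem:Igusa:1}(b) to rewrite the leading order of the commutator as the symplectic form $-\omega(\nabla R_x(a^\dagg,a),\nabla R_y(a^\dagg,a))$; expanding that double sum explicitly, it reads off the coefficient of $a^d$ from the two index pairs with vanishing $a^\dagg$-exponent. You instead bypass Lemma~\ref{lem:Igusa:1} entirely: you expand $[R_x,R_y]$ bilinearly into monomial commutators, observe that $[(a^\dagg)^j a^{d_x-j},(a^\dagg)^k a^{d_y-k}]$ is, to leading order, proportional to $(a^\dagg)^{j+k-1}a^{d-(j+k-1)}$, and conclude that only $(j,k)$ with $j+k\le 1$ can feed the pure $a^d$ term; the two surviving elementary commutators $[a^\dagg a^{d_x-1},a^{d_y}]=-d_y a^d$ and $[a^{d_x},a^\dagg a^{d_y-1}]=d_x a^d$ then assemble exactly to $-\delta(x,y)$, matching the paper's answer. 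The trade-off is that the paper's route reuses machinery it needs elsewhere (the Poisson-bracket identity yields the full leading-order expansion, not just the $a^d$ coefficient), while yours is self-contained for this lemma at the cost of the extra bookkeeping argument about the creation-operator exponent — which you do supply, and which is correct. One minor citation quibble: Proposition~\ref{prop:equivalnce:relation:commutator} is stated for skew-hermitian monomials in $\hat{A}_1$, whereas here $x,y$ are general elements of $A_1$; the degree bound $\deg([x,y])\le d_x+d_y-2$ that justifies truncating to $R_x,R_y$ is the statement the paper actually cites (Theorem 1.1 of Coutinho) in its own proof, so you should point to that instead, but this does not affect the validity of your argument.
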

		
		\begin{proof}
			This claim can be verified by a simple calculation. First note that for any polynomial $z(a^\dagg,a)$, one has clearly $z(a^\dagg,a)\upto{\deg(z)}R_z(a^\dagg,a)$. Thus, for two polynomials $x,y\in\C[X,Y]$ of degree $d_x$ and $d_y$ respectively that posses the expansions \eqref{eqn:lem:Igusa_4:1} and \eqref{eqn:lem:Igusa_4:2}, we can write $[x,y](a^\dagg,a)\upto{d}R_{[x,y]}(a^\dagg,a)$, since Theorem 1.1 from \plainrefs{Coutinho:1995} implies $\deg([x,y](a^\dagg,a)])\leq \deg(x(a^\dagg,a))+\deg(y(a^\dagg,a))-2=d_x+d_y-2=d$, and $z_1(a^\dagg,a)\upto{f_1}z_2(a^\dagg,a)$ for polynomials $z_1,z_2\in\C[X,Y]$ implies $z_1(a^\dagg,a)\upto{f_2}z_2(a^\dagg,a)$ for all $f_2\geq f_1$. The application of Lemma~\ref{lem:Igusa:1} yields:
			\begin{align*}
				[x,y](a^\dagg,a)&\upto{d}R_{[x,y]}(a^\dagg,a)\upto{d}-\omega(\nabla R_x(a^\dagg,a),\nabla R_y(a^\dagg,a)).\\
			\end{align*}
			The definition of the symplectic form $\omega$ and the gradient allows us to find:
			\begin{align*}
				[x,y](a^\dagg,a)&\upto{d}-(\nabla R_x(a^\dagg,a))^{\Tp}\boldsymbol{J}(\nabla R_y(a^\dagg,a))\\
				&=-\sum_{j=0}^{d_x}\sum_{j'=0}^{d_y}c_{d_xj}\hat{c}_{d_yj'}\begin{pmatrix}
					(d_x-j)(a^\dagg)^{d_x-j-1}a^j\\
					j(a^\dagg)^{d_x-j}a^{j-1}
				\end{pmatrix}^{\Tp}\begin{pmatrix}
					0&1\\
					-1&0
				\end{pmatrix}\begin{pmatrix}
					(d_y-j')(a^\dagg)^{d_y-j'-1}a^{j'}\\
					j'(a^\dagg)^{d_y-j'}a^{j'-1}
				\end{pmatrix}.
			\end{align*}
			Performing the matrix multiplication and employing Proposition~\ref{prop:help:commutator:q:p} allows us to obtain
			\begin{align*}
				[x,y](a^\dagg,a)&\upto{d}-\sum_{j=0}^{d_x}\sum_{j'=0}^{d_y}c_{d_xj}\hat{c}_{d_yj'}\left((d_x-j)j'-j(d_y-j')\right)(a^\dagg)^{d_x+d_y-(j+j')-1}a^{j+j'-1}.
			\end{align*}
			We can focus on the terms proportional to $a^d$. In other words every term in the sum above that is proportional to a term proportional $(a^\dagg)^\ell a^{d-\ell}$ with $\ell\geq 1$ can be discarded into $\mathcal{O}\left(a^\dagg a^{d-1}+\ldots+(a^\dagg)^d\right)$. This leaves us with 
			\begin{align*}
				[x,y](a^\dagg,a)&\upto{d}-(c_{d_x,d_x-1}\hat{c}_{d_yd_y}d_y-c_{d_xd_x}\hat{c}_{d_y,d_y-1}d_x)a^{d}+\mathcal{O}\left(a^\dagg a^{d-1}+\ldots+(a^\dagg)^d\right).
			\end{align*}
			The coefficients $c_{d_x,d_x-1}$, $\hat{c}_{d_yd_y}$, $c_{d_xd_x}$, and $\hat{c}_{d_y,d_y-1}$ are, by equations \eqref{eqn:lem:Igusa_4:1} and \eqref{eqn:lem:Igusa_4:2}, defined as $c_{d_x,d_x-1}\equiv a_1$, $\hat{c}_{d_yd_y}\equiv b_0$, $c_{d_xd_x}\equiv a_0$, and $\hat{c}_{d_y,d_y-1}\equiv b_1$, which prompts the result:
			\begin{align*}
				[x,y](a^\dagg,a)&\upto{d}-(a_1b_0d_y-a_0b_1d_x)a^{d}+\mathcal{O}\left(a^\dagg a^{d-1}+\ldots+(a^\dagg)^d\right).
			\end{align*}
			Recalling the definition of $\delta$ we have
			\begin{align*}
				[x,y](a^\dagg,a)\upto{d}-\delta(x,y)a^d+\mathcal{O}\left(a^\dagg a^{d-1}+\ldots+(a^\dagg)^d\right),
			\end{align*}
			which concludes the proof.
		\end{proof}
		
		\begin{lemma}\label{lem:Igusa:5}
			Let $x,y\in\C[X,Y]$ be two polynomials of degree $d_x>2$ and $d_y>2$ respectively that admit the following expansions:
			\begin{align*}
				R_x(a^\dagg,a)&\upto{d_x}a_0a^{d_x}+a_1a^\dagg a^{d_x-1}+\mathcal{O}\left((a^\dagg)^2a^{d_x-2}+\ldots (a^\dagg)^{d_x}\right),\\
				R_y(a^\dagg,a)&\upto{d_y}b_0a^{d_y}+b_1a^\dagg a^{d_y-1}+\mathcal{O}\left((a^\dagg)^2a^{d_y-2}+\ldots+a^{d_y}\right),
			\end{align*}
			where $a_0b_0\neq0$ and $\delta(x,y)\neq0$. Then $x$ and $y$ generate an infinite-dimensional Lie algebra.
		\end{lemma}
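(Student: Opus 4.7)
The plan is to construct a generic Commutator Chain with seed $u^{(0)} := x$ and constant auxiliary sequence $s^{(\ell)} \equiv y$, so $u^{(\ell+1)} := [u^{(\ell)}, y]$. Every chain element lies in $\g := \lie{\{x,y\}}$ by definition, so if I can guarantee that $\deg(u^{(\ell)})$ strictly increases with $\ell$, the $u^{(\ell)}$ are forced to be linearly independent (different total degrees in $\hat{A}_1$ cannot cancel), and $\g$ will be infinite-dimensional. The natural expected degree is $D_\ell := d_x + \ell(d_y-2)$, and since $d_y > 2$ the $D_\ell$ are strictly increasing, so the whole problem reduces to keeping track of the leading coefficient of $u^{(\ell)}$ in the pure-$a$ direction.

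For the base of the induction, Lemma~\ref{lem:Igusa:4} applied directly to $x, y$ gives
\[
[x,y](a^\dagg,a) \upto{D_1} -\delta(x,y)\, a^{D_1} + \mathcal{O}\!\left(a^\dagg a^{D_1-1} + \ldots + (a^\dagg)^{D_1}\right),
\]
and the hypothesis $\delta(x,y) \neq 0$ guarantees $\deg(u^{(1)}) = D_1$ with non-vanishing coefficient $A^{(1)}_0 = -\delta(x,y)$ of $a^{D_1}$. For the inductive step, a second application of Lemma~\ref{lem:Igusa:4}, now with $u^{(\ell)}$ in place of $x$, produces the coupled recurrence
\[
A^{(\ell+1)}_0 = D_\ell\, A^{(\ell)}_0\, b_1 - d_y\, A^{(\ell)}_1\, b_0,
\]
where $A^{(\ell)}_0, A^{(\ell)}_1$ denote the coefficients of $a^{D_\ell}$ and $a^\dagg a^{D_\ell - 1}$ in $u^{(\ell)}$. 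Provided $A^{(\ell)}_0 \neq 0$ for every $\ell$, the degree of $u^{(\ell)}$ is exactly $D_\ell$ and the argument closes.

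The main obstacle is exactly this non-vanishing of $A^{(\ell)}_0$, because the recurrence couples $A^{(\ell)}_0$ to $A^{(\ell)}_1$, which via an analogous computation couples to $A^{(\ell)}_2$, and so on. A naive one-variable induction will not close. My plan to overcome this is to track the whole finite tuple of leading coefficients $\bigl(A^{(\ell)}_0, A^{(\ell)}_1, \ldots\bigr)$ as a linear dynamical system in $\ell$, and to show, using a dominant-term (Vandermonde-type) analysis anchored by the hypotheses $a_0 b_0 \neq 0$ and $\delta(x,y) \neq 0$, that the top slot cannot vanish identically along the chain. Should the straight $y$-chain exhibit isolated degenerate steps, the backup strategy is to substitute a bifurcating commutator tree that alternately uses $x$ and $y$ as auxiliary, mirroring the three-case strategy of Theorem~\ref{thm:andreea:sona}: the same $\delta$ invariant governs each branch, and the symplectic-normalised form ensures that at least one descendant at every branching retains a non-zero pure-$a$ leading coefficient, which is enough to pluck out an infinite linearly independent subfamily and conclude.
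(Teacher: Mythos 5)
There is a genuine gap. You correctly identify the central obstruction — the recurrence $A^{(\ell+1)}_0 = D_\ell A^{(\ell)}_0 b_1 - d_y A^{(\ell)}_1 b_0$ couples the leading coefficient to $A^{(\ell)}_1$, which in turn couples to $A^{(\ell)}_2$, and so on, so a one-variable induction along the fixed $y$-chain does not close — but neither of your two proposed resolutions is actually carried out, and the primary one is not viable as stated. Nothing in the hypotheses prevents $\delta(u^{(\ell)},y)$ from vanishing at some step of a fixed $y$-chain; the hypotheses $a_0b_0\neq0$ and $\delta(x,y)\neq0$ constrain only the pair $(x,y)$, not the pair $(u^{(\ell)},y)$, so the "dominant-term (Vandermonde-type) analysis" has no anchor and the top slot genuinely can die along the straight chain. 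Your backup plan (alternate between $x$ and $y$ as auxiliary) is the right idea, and is in fact the route the paper takes: it seeds the chain with $y$ and at each step picks $s^{(\ell)}=x$ if $\delta(u^{(\ell)},x)\neq0$, else $s^{(\ell)}=y$ if $\delta(u^{(\ell)},y)\neq0$. But the entire content of the lemma is the claim you leave unproved, namely that at least one of these two choices always succeeds.

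That claim is established by a short explicit computation which your proposal omits. Writing $u^{(\ell)}\upto{d^{(\ell)}} -\delta(u^{(\ell-1)},s^{(\ell-1)})a^{d^{(\ell)}}+c_1^{(\ell)}a^\dagg a^{d^{(\ell)}-1}+\ldots$, one has
\begin{align*}
\delta(u^{(\ell)},x)&=d_xa_0c_1^{(\ell)}-d^{(\ell)}\delta(s^{(\ell-1)},u^{(\ell-1)})a_1,&
\delta(u^{(\ell)},y)&=d_yb_0c_1^{(\ell)}-d^{(\ell)}\delta(s^{(\ell-1)},u^{(\ell-1)})b_1.
\end{align*}
If both vanish, then $c_1^{(\ell)}=0$ would force $a_1=b_1=0$ (using the induction hypothesis $\delta(u^{(\ell-1)},s^{(\ell-1)})\neq0$ and $d^{(\ell)}\neq0$), contradicting $\delta(x,y)=d_ya_1b_0-d_xa_0b_1\neq0$; hence $c_1^{(\ell)}\neq0$, and solving for $a_0$ and $b_0$ and combining with $a_0b_0\neq0$ yields $a_1b_1\neq 0$ and then $d_ya_1b_0-d_xa_0b_1=0$, again contradicting $\delta(x,y)\neq0$. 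This dichotomy is what replaces the intractable tracking of the full coefficient tuple: the paper never controls $c_1^{(\ell)}$ at all, it only needs that one of the two auxiliary choices keeps the new leading coefficient nonzero, and since $d_x,d_y>2$ either choice strictly raises the degree. Without this computation your proposal is a plan rather than a proof.
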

		
		\begin{proof}
			Let $x,y$ be as stated in the lemma. We now consider the generic Commutator Chain $C^{\mathrm{gen}}$ with seed element $u^{(0)}=y(a^\dagg,a)$ and auxiliary chain
			\begin{align*}
				s^{(\ell)}(a^\dagg,a)&=\left\{\begin{matrix}
					x(a^\dagg,a),&\quad&\text{if }\delta(u^{(\ell)},x)\neq0,\\
					y(a^\dagg,a),&\quad&\text{if }\delta(u^{(\ell)},x)=0\text{ and }\delta(u^{(\ell)},y)\neq0,\\
					0,&\quad&\text{otherwise}
				\end{matrix}\right.\qquad\text{ for all $\ell\in\N_{\geq0}$}.
			\end{align*}
			Note that one can associate to each chain element $u^{(\ell)}\in C^{\mathrm{gen}}$ a polynomial $u^{(\ell)}(X,Y)\in\C[X,Y]$ by replacing $a^\dagg$ with the variable $X$ and $a$ with the variable $Y$. 
			
			The goal is now to verify that the chain elements $u^{(\ell)}$ are of strictly increasing degree $d^{(\ell)}$. This would imply that $\lie{\{x(a^\dagg,a),y(a^\dagg,a)\}}$ is infinite dimensional since all $u^{(\ell)}$ would be linearly independent and lie within this Lie algebra.
			To do this, we will now demonstrate inductively that the chain elements $u^{(\ell)}$ satisfy
			\begin{align*}
				R_{u^{(\ell)}}\upto{d^{(\ell)}}-\delta(u^{(\ell-1)},s^{(\ell-1)})a^{d^{(\ell)}}+c_1^{(\ell)}a^\dagg a^{d^{(\ell)}-1}+\mathcal{O}\left((a^\dagg)^2a^{d^{(1)}-2}+\ldots+(a^\dagg)^{d^{(1)}}\right),
			\end{align*}
			where $c_1^{(\ell)}\in\C$, $s^{(\ell-1)}\in \{x,y\}$, $\delta(u^{(\ell-1)},s^{(\ell-1)})\neq 0$, and $d^{(\ell)}=d^{(\ell-1)}+\deg(s^{(\ell-1)})-2>d^{(\ell-1)}$ for all $\ell\in\N_{\geq0}$.
			
			Let us start with the base case $\ell=1$. By assumption, we have $\delta(u^{(0)},x)=\delta(y,x)=-\delta(x,y)\neq 0$ and consequently $s^{(0)}(a^\dagg,a)=x(a^\dagg,a)$, since $\delta$ is antisymmetric. Thus, by virtue of Lemma~\ref{lem:Igusa:4}, we have
			\begin{align*}
				R_{u^{(1)}}=R_{[y,x]}(a^\dagg,a)\upto{d^{(1)}}-\delta(y,x)a^{d^{(1)}}+c_1a^\dagg a^{d^{(1)}-1}+\mathcal{O}\left((a^\dagg)^2a^{d^{(1)}-2}+\ldots+(a^\dagg)^{d^{(1)}}\right),
			\end{align*}
			where $d^{(1)}=d_x+d_y-2$ and $c_1\in\C$.
			
			With the base case established, let us proceed with the induction step. Thus, assume for one $\ell\in\N_{\geq0}$, the chain element $u^{(\ell)}$ is such that 
			\begin{align*}
				R_{u^{(\ell)}}\upto{d^{(\ell)}}-\delta(u^{(\ell-1)},s^{(\ell-1)})a^{d^{(\ell)}}+c_1^{(\ell)}a^\dagg a^{d^{(\ell)}-1}+\mathcal{O}\left((a^\dagg)^2a^{d^{(1)}-2}+\ldots+(a^\dagg)^{d^{(1)}}\right),
			\end{align*}
			where $c_1^{(\ell)}\in\C$, $s^{(\ell-1)}\in \{x,y\}$, $\delta(u^{(\ell-1)},s^{(\ell-1)})\neq 0$, and $d^{(\ell)}=d^{(\ell-1)}+\deg(s^{(\ell-1)})-2>d^{(\ell-1)}$.  We start by showing that $s^{(\ell)}\neq 0$. Thus, by the definition of the auxiliary chain elements $s^{(\ell)}$, we need to compute
			\begin{align}
				\delta(u^{(\ell)},x)=d_xa_0c_1^{(\ell)}-d^{(\ell)}\delta(s^{(\ell-1)},u^{(\ell-1)})a_1,
				\qquad
				\text{and}
				\qquad
				\delta(u^{(\ell)},y)=d_yb_0c_1^{(\ell)}-d^{(\ell)}\delta(s^{(\ell-1)},u^{(\ell-1)})b_1.\label{eqn:help:lem:Igusa:5:1}
			\end{align}
			To suppose the converse, assume $s^{(\ell)}=0$, i.e., both expressions above vanish. This allows us to obtain, by subtracting and adding these equations together, the following constraints
			\begin{align*}
				0=\left(d_xa_0+d_yb_0\right)c_1^{(\ell)}-d^{(\ell)}\delta(s^{(\ell-1)},u^{(\ell-1)})\left(a_1+b_1\right)\qquad\text{and}\qquad0=\left(d_xa_0-d_yb_0\right)c_1^{(\ell)}-d^{(\ell)}\delta(s^{(\ell-1)},u^{(\ell-1)})\left(a_1-b_1\right). 
			\end{align*}
			We aim to show that these constraints yield a contradiction to the initial assumption. To accomplish this task, we first need to show that these constraints demand $c_1^{(\ell)}\neq 0$ to hold. To prove this claim, suppose $c_1^{(\ell)}$ vanishes. The induction hypothesis $\delta(u^{(\ell-1)},s^{(\ell-1)})\neq 0$ and $d^{(\ell)}>d^{(\ell-1)}>\ldots>d^{(1)}>d^{(0)}=d_y\neq 0$ imply now $a_1=b_1$ and $a_1=-b_1$. Consequently $a_1=b_1=0$ which violates the initial assumption $\delta(x,y)=d_y a_1b_0-d_xa_0b_1\neq 0$. Thus $c_1^{(\ell)}\neq 0$ and we can rewrite equations \eqref{eqn:help:lem:Igusa:5:1}, under the assumption that $\delta(u^{(\ell)},s^{(\ell)})$ vanishes, as
			\begin{align*}
				a_0=\frac{d^{(\ell)}}{d_x}\frac{\delta(s^{(\ell-1)},u^{(\ell-1)})}{c_1^{(\ell)}}a_1\qquad\text{and}\qquad b_0=\frac{d^{(\ell)}}{d_y}\frac{\delta(s^{(\ell-1)},u^{(\ell-1)})}{c_1^{(\ell)}}b_1,
			\end{align*}
			since $d_x>2$ and $d_y>2$ by the initial assumption. Multiplying these expressions yields
			\begin{align*}
				a_0b_0=(d^{(\ell)})^2\frac{(\delta(s^{(\ell-1)},u^{(\ell-1)}))^2}{(c_1^{(\ell)})^2}\frac{a_1b_1}{d_xd_y}.
			\end{align*}
			The initial assumption $a_0b_0\neq0$, together with the induction hypothesis $\delta(s^{(\ell-1)},u^{(\ell-1)})\neq 0\neq d^{(\ell)}$, consequently imply $a_1b_1\neq0$ and thus $a_1\neq0\neq b_1$. This allows us to write
			\begin{align*}
				a_0d_xb_1=d^{(\ell)}\frac{\delta(s^{(\ell-1)},u^{(\ell-1)})}{c_1^{(\ell)}}a_1b_1\qquad\text{and}\qquad b_0d_y a_1=d^{(\ell)}\frac{\delta(s^{(\ell-1)},u^{(\ell-1)})}{c_1^{(\ell)}}a_1b_1.
			\end{align*}
			Subtracting these equations from each other yields
			\begin{align*}
				0=d_ya_1b_0-d_xa_0b_1=\delta(x,y),
			\end{align*}
			which cannot occur because of the initial assumption $\delta(x,y)\neq0$. Hence, at least one of $\delta(u^{(\ell)},x)$ or $\delta(u^{(\ell)},y)$ must be nonzero. Therefore, by definition, $s^{(\ell)}\neq0$ and  $s^{(\ell)}\in\{x,y\}$. Employing Lemma~\ref{lem:Igusa:4} allows us to obtain
			\begin{align*}
				R_{u^{(\ell+1)}}=R_{[u^{(\ell)},s^{(\ell)}]}\upto{d^{(\ell+1)}}-\delta(u^{(\ell)},s^{(\ell)})a^{d^{(\ell+1)}}+c_1^{(\ell+1)}a^\dagg a^{d^{(\ell+1)}-1}+\mathcal{O}\left((a^\dagg)^2a^{d^{(\ell+1)}-2}+\ldots+(a^\dagg)^{d^{(\ell+1)}}\right),
			\end{align*}
			where $d^{(\ell+1)}=\deg(u^{(\ell)})+\deg(s^{(\ell)})-2=d^{(\ell)}+\deg(s^{(\ell)})-2$. Now recall that generally $s^{(\ell)}\in\{x,y,0\}$ and to show that $s^{(\ell)}\neq0$, we have assumed the converse, i.e., that $s^{(\ell)}=0$, which consequently implied $\delta(u^{(\ell)},s^{(\ell)})=0$, which led to a contradiction. Thus, we can also conclude that $\delta(u^{(\ell)},s^{(\ell)})\neq 0$ by the same reasoning for which now $s^{(\ell)}$ cannot vanish and  therefore satisfies $s^{(\ell)}\in\{x,y\}$. That is, showing $\delta(u^{(\ell)},s^{(\ell)})\neq 0$ can be done by simply repeating the same steps as before, which we can therefore skip.  Furthermore, we also have that $\deg(u^{(\ell+1)})=d^{(\ell+1)}$ which is larger than $d^{(\ell)}$, since $s^{(\ell)}\in \{x,y\}$ and $\deg(x)>2<\deg(y)$ by the initial assumption. This completes the proof by induction and therefore also the proof establishing that the Lie algebra $\lie{x(a^\dagg,a),y(a^\dagg,a)\}}$ generated by the two polynomials $x(a^\dagg,a)$ and $y(a^\dagg,a)$ is infinite dimensional.

		\end{proof}
		
		It is now helpful to illustrate the importance and implications of Lemma~\ref{lem:Igusa:5} with a simple example, which we present below.
		
		\vspace{0.5cm}
		
		\begin{tcolorbox}[breakable, colback=Cerulean!3!white,colframe=Cerulean!85!black,title=Example: Application of Lemma~\ref{lem:Igusa:5}]
			
			Let us consider two normal-ordered polynomials $e_1,e_2\in\hat{A}_1$. We aim now at translating the conditions given in Lemma~\ref{lem:Igusa:5} into conditions of the coefficients of $e_1,e_2$. Such translation provides a more concrete understanding of the abstract criteria in Lemma~\ref{lem:Igusa:5}, and it also illustrates how they manifest in the structure of specific elements of the skew-hermitian Weyl algebra.
			\begin{example}\label{exa:Igusa:lemma:applied}
				Let $e_1,e_2\in\hat{A}_1$, and define $d_1:=\deg(e_1)$, $d_2:=\deg(e_2)$. Then, one can write:
				\begin{align*}
					e_j&\upto{d_j}\sum_{k=0}^{\lceil\frac{d_j-1}{2}\rceil}c_k^{(j)}g_+^{(d_j-k)\iota_1+k\iota_2}+\sum_{k=0}^{\lfloor\frac{d_j-1}{2}\rfloor}\hat{c}_k^{(j)}g_-^{(d_j-k)\iota_1+k\iota_2}\\
					&\upto{d_j}\left(ic_0^{(j)}+\hat{c}_0^{(j)}\right)a^{d_j}+\left(ic_1^{(j)}+\hat{c}_1^{(j)}\right)a^\dagg a^{d_j-1}+\ldots+\left(ic_1^{(j)}-\hat{c}_1\right)(a^\dagg)^{d_j-1}a+\left(ic_0^{(j)}-\hat{c}_0^{(j)}\right)(a^\dagg)^{d_j}.
				\end{align*}
				The condition $a_0b_0\neq0$ from Lemma~\ref{lem:Igusa:5} translates into the requirement that: 
				\begin{align*}
					\hat{c}_0^{(1)}\hat{c}_0^{(2)}\neq c_0^{(1)}c_0^{(2)},\qquad\hat{c}_0^{(2)}c_0^{(1)}\neq-\hat{c}_0^{(1)}c_0^{(2)}.
				\end{align*}
				The second condition, $\delta(x,y)\neq0$, becomes the following requirement:
				\begin{align*}
					d_1\left(\hat{c}_0^{(1)}\hat{c}_1^{(2)}-c_0^{(1)}c_1^{(2)}\right)\neq d_2\left(\hat{c}_1^{(1)}\hat{c}_0^{(2)}-c_1^{(1)}c_0^{(2)}\right),\qquad d_1\left(c_0^{(1)}\hat{c}_1^{(2)}+\hat{c}_0^{(1)}c_1^{(2)}\right)\neq d_2\left(c_1^{(1)}\hat{c}_0^{(2)}+\hat{c}_1^{(1)}c_0^{(2)}\right).
				\end{align*}
				This condition, in its current form, does not capture all pairs of elements in $\hat{A}_1$ that generate infinite-dimensional Lie algebras. For instance, consider the elements $e_1=g_-^{3\iota_1}$, and $e_2=g_+^{3\iota_1}$. These two can be used to construct a Commutator Chain of Type II, which contains elements of strictly increasing degree (as shown in \plainrefs{Bruschi:Xuereb:2024}), thereby implying that the Lie algebra $\lie{\{e_1,e_2\}}$ that contains them is infinite dimensional.
				However, if one attempts to verify this using the conditions from Lemma~\ref{lem:Igusa:5}, one finds: $\hat{c}_0^{(1)}=1=c_0^{(2)}$, while all other coefficients vanish. The first condition is not satisfied, since $\hat{c}_0^{(1)}\hat{c}_0^{(2)}=0=c_0^{(1)}c_0^{(2)}$. The second condition is also not satisfied, since 
				$c_1^{(1)}=\hat{c}_1^{(1)}=c_1^{(2)}=\hat{c}_1^{(2)}=0$. Thus, even arbitrary linear combinations of $e_1$ and $e_2$ will never satisfy the second condition. This highlights the need for a generalization of Lemma~\ref{lem:Igusa:5}.
			\end{example}
		\end{tcolorbox}
		
		\begin{lemma}\label{lem:extension:automorphism:enveloping:algebra}
			Let $\phi:\g\to\g$ be a Lie-algebra automorphism of a finite-dimensional Lie algebra $\g$, where the Lie bracket is given by the commutator. Then, $\phi$ induces a Lie-algebra automorphism $\Phi:U(\g)\to U(\g)$ on the universal enveloping algebra $U(\g)$, such that the following applies: let $\{x_j\}_{j\in\mathcal{J}}\subseteq\g$ be a basis of $\g$, and let $\{\alpha_j\}_{j\in\mathcal{J}}\subseteq\N_{\geq0}^{|\mathcal{J}|}$ be a set of non-negative integers. The map $\Phi:U(\g)\to U(\g)$ is defined on basis elements by the action:
			\begin{align*}
				x=\prod_{j\in\mathcal{J}}x_j^{\alpha_j}\mapsto\Phi(x):=\prod_{j\in\mathcal{J}}\phi(x_j)^{\alpha_j},
			\end{align*}
			and extended linearly to all of $U(\g)$, i.e., $\Phi(y+\lambda z)=\Phi(y)+\lambda\Phi(z)$ for arbitrary $y,z\in U(\g)$ and scalar $\lambda$. The map $\Phi$ is an associative Lie-algebra automorphism.
		\end{lemma}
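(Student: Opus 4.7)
The plan is to invoke the universal property of the universal enveloping algebra rather than verify the PBW-basis formula directly. Since the Lie bracket on $U(\g)$ is the commutator and $\phi:\g\to\g$ is a Lie-algebra homomorphism, the composition $\iota\circ\phi:\g\to U(\g)$ with the canonical inclusion $\iota:\g\hookrightarrow U(\g)$ is itself a Lie-algebra homomorphism into an associative algebra. The universal property then delivers a unique associative algebra homomorphism $\Phi:U(\g)\to U(\g)$ satisfying $\Phi(\iota(x))=\iota(\phi(x))$ for every $x\in\g$. Identifying $x_j$ with $\iota(x_j)$, multiplicativity of $\Phi$ forces
\begin{align*}
\Phi\Bigl(\prod_{j\in\mathcal{J}}x_j^{\alpha_j}\Bigr)=\prod_{j\in\mathcal{J}}\Phi(x_j)^{\alpha_j}=\prod_{j\in\mathcal{J}}\phi(x_j)^{\alpha_j},
\end{align*}
which is precisely the formula in the statement. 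Since the ordered monomials $\prod_j x_j^{\alpha_j}$ form a vector space basis of $U(\g)$ by the Poincaré-Birkhoff-Witt theorem, the values above determine $\Phi$ uniquely, and the linear extension prescribed in the lemma coincides with the universal-property map and is therefore well-defined.

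For bijectivity, I would apply the construction to $\phi^{-1}$, which is itself a Lie-algebra automorphism of $\g$. This yields an associative algebra homomorphism $\Psi:U(\g)\to U(\g)$ with $\Psi(x_j)=\phi^{-1}(x_j)$ for all $j$. Both compositions $\Phi\circ\Psi$ and $\Psi\circ\Phi$ are associative algebra endomorphisms of $U(\g)$ that restrict to the identity on $\iota(\g)$, so the uniqueness clause in the universal property forces each of them to equal $\mathrm{id}_{U(\g)}$. Hence $\Psi=\Phi^{-1}$ and $\Phi$ is an associative algebra automorphism. Preservation of the commutator bracket on $U(\g)$ then follows automatically from multiplicativity and linearity, making $\Phi$ simultaneously a Lie-algebra automorphism of $U(\g)$.

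The main obstacle, if one insisted on proving the statement directly from the PBW formula, would be showing that the prescription $\prod_j x_j^{\alpha_j}\mapsto\prod_j\phi(x_j)^{\alpha_j}$ respects the associative product: a product of two ordered PBW monomials is generically not ordered and must be rewritten using the relations $x_jx_k-x_kx_j=[x_j,x_k]\in\g$, which generates correction terms of lower filtration degree. Checking directly that these corrections are compatible with the map amounts to an induction on filtration degree combined with the Jacobi identity. The universal-property approach packages exactly this compatibility into a single abstract axiom and should therefore be frictionless; I do not expect any deeper difficulty to surface once the appeal to $U(\g)$ is made, and the bulk of the proof reduces to citing the PBW theorem and the universal property verbatim.
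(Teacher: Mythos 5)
Your proof is correct, but it takes a genuinely different route from the one written out in the paper. You invoke the universal property of $U(\g)$: the composition $\iota\circ\phi$ is a Lie-algebra homomorphism of $\g$ into the associative algebra $U(\g)$, so it extends uniquely to an associative algebra endomorphism $\Phi$, multiplicativity forces the stated formula on ordered PBW monomials, bijectivity follows by applying the same construction to $\phi^{-1}$ and using the uniqueness clause to identify both composites with $\mathrm{id}_{U(\g)}$, and preservation of the commutator is automatic from multiplicativity and linearity. This is essentially the argument the paper only gestures at in the remark preceding its proof (citing Bourbaki's Proposition 6.2.3 and the canonical map $\sigma_0\circ\phi$), whereas the paper's own proof deliberately avoids the universal property and instead verifies everything by hand on the PBW basis: it checks multiplicativity by an induction on the length of products of basis elements, using adjacent transpositions to reorder non-canonical products and tracking the commutator correction terms on both sides, and it constructs the inverse $\Psi$ explicitly. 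Your approach buys brevity and conceptual clarity at the cost of presupposing the universal property and the functoriality it encodes; the paper's approach is self-contained and exposes exactly the filtration-degree bookkeeping you correctly identify as the obstacle to a direct verification. The only point worth making explicit in your write-up is that the identification of $x_j$ with $\iota(x_j)$ is licensed by the injectivity of the canonical map, which is itself a consequence of PBW; with that noted, nothing is missing.
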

		
		This lemma is a direct consequence of well known results in the literature. For instance, Proposition 6.2.3 from \plainrefs{Bourbaki:algebra:1974} established that the extension map $\Phi$ exists and is unique. Moreover, the proof of Proposition 2.2.1 from \plainrefs{Bourbaki:Lie:algebra:1971} shows that the map $\Phi:U(\g)\to U(\g)$ is given by the composition of the canonical mapping $\sigma_0$ from $\g$ to $U(\g)$ and the automorphism $\phi:\g\to\g$, i.e., $\Phi=\sigma_0\circ\phi$. Thus $\Phi$ satisfies exactly the properties stated in the lemma above. For readers unfamiliar with the theory of universal enveloping algebras, we provide a comprehensive proof blow. This proof may be skipped if the explanation above is sufficient.

		\begin{proof}
			We start by observing that both the codomain and domain of $\Phi$ are the universal enveloping algebra $U(\g)$. This follows directly from the assumption that $\phi$ is an automorphism and that $\Phi$ satisfies, by construction, $\Phi(y+\lambda z)=\Phi(x)+\lambda \Phi(z)$ for all $y,z\in U(\g)$ and scalar coefficients $\lambda$. To establish that $\Phi$ is an associative Lie-algebra automorphism, we must verify the following four properties:
			\begin{enumerate}
				\item \textbf{Linearity}: The map $\Phi$ is linear by construction.
				\item \textbf{Associativity}:  
				We now prove inductively that $\Phi(x_{j_1}\ldots x_{j_m})=\Phi(x_{j_1})\ldots\Phi(x_{j_m})$ holds for an arbitrary product $x_{j_1}\ldots x_{j_m}$ of basis elements $x_j\in\g$.
				
				\textit{Base case} ($m=2$): For any two indices $j,k\in \{1,\ldots,n\}$ with $j\leq k$, the product $x_jx_k$ is already in the canonical order, and we have $\Phi(x_j x_k)=\Phi(x_j)\Phi(x_k)$ by construction. Suppose now $j>k$. By the definition of the commutator, one has $x_jx_k=x_kx_j+[x_j,x_k]$. The linearity of $\Phi$ implies then $\Phi(x_jx_k)=\Phi(x_kx_j+[x_j,x_k])=\Phi(x_kx_j)+\Phi([x_j,x_k])$. Let now $f_{jk\ell}$ be the structure constants of $\g$. That is, the commutator of two basis elements $x_j$ and $x_k$ can be written as $[x_j,x_k]=\sum_\ell f_{jk\ell} x_\ell$. Thus, by the definition of $\Phi$, one has $\Phi(x_jx_k)=\phi(x_k)\phi(x_j)+\Phi\left(\sum_{\ell}f_{jk\ell}x_\ell\right)=\phi(x_k)\phi(x_j)+\sum_{\ell}f_{jk\ell}\phi(x_\ell)$, since $x_kx_j$ and $x_\ell$ are basis elements, by the Poincaré-Birkhoff-Witt Theorem \plainrefs{PBW:Thm} and the assumption that $j>k$. Recalling that $\phi$ is a Lie-algebra automorphism, i.e., it preserves the Lie bracket, we can write $\Phi(x_jx_k)=\phi(x_k)\phi(x_j)+[\phi(x_j),\phi(x_k)]=\phi(x_j)\phi(x_k)=\Phi(x_j)\Phi(x_k)$.
				One must have consequently $\Phi(x_jx_k)=\Phi(x_j)\Phi(x_k)$ independently of the ordering of $x_j$ and $x_k$.
				
				\textit{Induction step}: Assume that for some some $m\in\N$, the identity $\Phi(x_{j_1}\ldots x_{j_m})=\Phi(x_{j_1})\ldots\Phi(x_{j_m})$ holds for all products of $m$ basis elements $x_{j_k}\in \g$. We consider now a product of $m+1$ basis elements and aim to show that $\Phi(x_{j_1}\ldots x_{j_m}x_{j_{m+1}})=\Phi(x_{j_1})\ldots\Phi(x_{j_m})\Phi(x_{j_{m+1}})$. To do this, we start by constructing a permutation $\pi$ from the symmetric group $S_{m+1}$ such that the reordered tuple $(x_{\pi(j_1)},\ldots,x_{\pi(j_m)},x_{\pi(j_{m+1})})$ is in non-decreasing order with respect to the canonical order on the basis of $\g$. In other words, $\pi$ satisfies the requirement $\pi(j_1)\leq\pi(j_2)\leq\ldots\leq \pi(j_m)\leq \pi(j_{m+1})$. We construct $\pi$ procedurally using Algorithm~\ref{alg:ordering:pbw}. This algorithm allows us to decompose $\pi$ into a sequence of adjacent transpositions $\tau_j=(k\;\;k+1)$, which sort the product $x_{j_1}\ldots x_{j_m} x_{j_{m+1}}$ into canonical order.
				\begin{algorithm}[htpb]
					\DontPrintSemicolon
					\KwData{A tuple $\boldsymbol{j}=(j_1,\ldots,j_m)\in\Np{m}$}
					\KwResult{A permutation $\pi\in S_m$ as a decomposition of transpositions $\tau_j$}
					\SetKwData{Left}{left}\SetKwData{This}{this}\SetKwData{Up}{up}
					\SetKwFunction{Union}{Union}\SetKwFunction{FindCompress}{FindCompress}
					\SetKwInOut{Input}{input}\SetKwInOut{Output}{output}
					\BlankLine
					$\pi\leftarrow\mathrm{id}_{S_{m}}$\;
					$\ell\leftarrow m$\;
					\While{$\ell>1$}{
						\While(\tcc*[h]{Notation: $\pi(\boldsymbol{j}):=(\pi(j_1),\ldots,\pi(j_m))$}){$\exists\,k\in\{1,\ldots,\ell-1\}$: $(\pi(\boldsymbol{j}))_k>(\pi(\boldsymbol{j}))_\ell$}{
							$k\leftarrow\min\{k\in\{1,\ldots,\ell-1\}\,\mid\,(\pi(\boldsymbol{j}))_k>(\pi(\boldsymbol{j}))_\ell\}$\;
							$\pi\leftarrow (\ell\;\;\ell-1)\circ\ldots\circ(k+1\;\;k)\circ\pi$\;
						}
						$\ell\leftarrow\ell-1$\;
					}
					\Return $\pi=\tau_o\circ\ldots\circ\tau_1$\;
					\caption{Achieving canonical ordering an $n$-tuple}
					\label{alg:ordering:pbw}
				\end{algorithm}
				If $\pi=\mathrm{id}_{S_{m+1}}$, it is clear that $\Phi(x_{j_1}\ldots x_{j_m}x_{j_{m+1}})=\Phi(x_{j_1})\ldots\Phi(x_{j_m})\Phi(x_{j_{m+1}})$, since the basis elements $x_j\in\g$ are already canonically ordered. Now suppose $\pi=\tau_{o}\circ\ldots\circ\tau_1\neq\mathrm{id}_{S_{m+1}}$ for some $o\in\N_{\geq1}$. 
				Let now $\tau_1=(k\;\;k+1)$ be a transposition swapping the $k$-th and $(k+1)$-th element. Then, analogously to the base case we have
				\begin{align*}
					\Phi\left(x_{j_1}\ldots x_{j_{k-1}} x_{j_k}x_{j_{k+1}}x_{j_{k+2}}\ldots x_{j_m}x_{j_{m+1}}\right)=\Phi\left(x_{j_1}\ldots x_{j_{k-1}}\left(x_{j_{k+1}}x_{j_k}+[x_{j_k},x_{j_{k+1}}]\right)x_{j_{k+2}}\ldots x_{j_m}x_{j_{m+1}}\right).
				\end{align*}
				The linearity of $\Phi$ allows us to write
				\begin{align*}
					\Phi\left(x_{j_1}\ldots x_{j_{k-1}} x_{j_k}x_{j_{k+1}}x_{j_{k+2}}\ldots x_{j_m}x_{j_{m+1}}\right)&=\Phi\left(x_{j_1}\ldots x_{j_{k-1}}x_{j_{k+1}}x_{j_k}x_{j_{k+2}}\ldots x_{j_m}x_{j_{m+1}}\right)\\
					&\quad+\Phi\left(x_{j_1}\ldots x_{j_{k-1}}[x_{j_k},x_{j_{k+1}}]x_{j_{k+2}}\ldots x_{j_m}x_{j_{m+1}}\right).
				\end{align*}
				We can now identify $j_p=\tau_1(j_p)$ if $p\notin\{k,k+1\}$ and $j_{k+1}=\tau_1(j_k)$, as well as $j_{k}=\tau_1(j_{k+1})$ and recall that the structure constants of $\g$ are given by $f_{jk\ell}$, i.e., $[x_j,x_k]=\sum_\ell f_{jk\ell} x_{\ell}$. Hence, we have by the linearity of $\Phi$:
				\begin{align*}
					\Phi\left(x_{j_1}\ldots x_{j_m}x_{j_{m+1}}\right)&=\Phi(x_{\tau_1(j_1)}\ldots x_{\tau_1(j_{m-1})}x_{\tau_1(j_{m})}x_{\tau_1(j_{m+1})})\\
					&\qquad+\sum_{\ell_1=1}^n f_{j_k,j_{k+1},\ell_1}\Phi(x_{\tau_1(j_1)}\ldots x_{\tau_1(j_{k-1})}x_{\ell_1}x_{\tau_1(j_{k+2})}\ldots x_{\tau_1(j_{m+1})}).
				\end{align*}
				Invoking the induction hypothesis for the last term yields finally:
				\begin{align*}
					\Phi\left(x_{j_1}\ldots x_{j_m}x_{j_{m+1}}\right)&=\Phi(x_{\tau_1(j_1)}\ldots x_{\tau_1(j_{m-1})}x_{\tau_1(j_{m})}x_{\tau_1(j_{m+1})})\\
					&\qquad+\sum_{\ell_1=1}^n f_{j_k,j_{k+1},\ell_1}\Phi(x_{\tau_1(j_1)})\ldots\Phi(x_{\tau_1(j_{k-1})})\Phi(x_{\ell_1})\Phi(x_{\tau_1(j_{k+2})})\ldots\Phi(x_{\tau_1(j_{m+1})}),
				\end{align*}
					Repeating this for all remaining transpositions $\tau_2,\ldots,\tau_o$ in the decomposition of $\pi$ yields:
					\begin{align*}
						\Phi\left(x_{j_1}\ldots x_{j_m}x_{j_{m+1}}\right)&=\Phi(x_{\pi(j_1)}\ldots x_{\pi(j_{m-1})}x_{\pi(j_{m})}x_{\pi(j_{m+1})})\\
						&\qquad+\sum_{k=1}^o\sum_{\ell_k=1}^n c_{k\ell_k}\Phi(x_{\tau_k\circ\ldots\circ\tau_1(j_1)})\ldots\Phi(x_{\ell_k})\ldots\Phi(x_{\tau_k\circ\ldots\circ\tau_1(j_{m+1})}),
					\end{align*}
					where $c_{k\ell_k}$ are the appropriate structure constants determined by the permutation $\tau_k\circ\ldots\circ\tau_1$. Now one has $\Phi(x_{\pi(j_1)}\ldots x_{\pi(j_{m})}x_{\pi(j_{m+1})})=\Phi(x_{\pi(j_1)})\ldots \Phi(x_{\pi(j_{m})})\Phi(x_{\pi(j_{m+1})})$, since the product $x_{\pi(j_1)}\ldots x_{\pi(j_{m})}x_{\pi(j_{m+1})}$ is now correctly ordered. Hence:
					\begin{align}
						\Phi\left(x_{j_1}\ldots x_{j_m}x_{j_{m+1}}\right)&=\Phi(x_{\pi(j_1)})\ldots\Phi( x_{\pi(j_{m-1})})\Phi(x_{\pi(j_{m})})\Phi(x_{\pi(j_{m+1})})\nonumber\\
						&\qquad+\sum_{k=1}^o\sum_{\ell_k=1}^n c_{k\ell_k}\Phi(x_{\tau_k\circ\ldots\circ\tau_1(j_1)})\ldots\Phi(x_{\ell_k})\ldots\Phi(x_{\tau_k\circ\ldots\circ\tau_1(j_{m+1})}).\label{eqn:help:automorphism:extension:1}
					\end{align}
					We now repeat the same process starting from $\Phi\left(x_{j_1}\right)\ldots \Phi(x_{j_m})\Phi\left(x_{j_{m+1}}\right)$. We note that $\Phi(x_{\hat{k}'})\Phi(x_{\hat{k}})+[\Phi(x_{\hat{k}}),\Phi(x_{\hat{k}'})]=\Phi(x_{\hat{k}})\Phi(x_{\hat{k}'})$ and $[\Phi(x_j),\Phi(x_k)]=[\phi(x_j),\phi(x_k)]=\sum_\ell f_{jk\ell}\phi(x_\ell)=\sum_\ell f_{jk\ell}\Phi(x_\ell)$. Therefore, following similar steps to those taken above, we obtain
					\begin{align*}
						\Phi\left(x_{j_1}\right)\ldots \Phi(x_{j_m})\Phi\left(x_{j_{m+1}}\right)&=\Phi\left(x_{j_1}\right)\ldots\left(\Phi(x_{j_{k+1}})\Phi(x_{j_k})+\left[\Phi(x_{j_k}),\Phi(x_{j_{k+1}})\right]\right)\Phi(x_{j_{k+2}})\ldots \Phi\left(x_{j_{m+1}}\right)\\
						&=\Phi\left(x_{\tau_1(j_1)}\right)\ldots \Phi(x_{\tau_1(j_m)})\Phi\left(x_{\tau_1(j_{m+1})}\right)\\
						&\qquad+\sum_{\ell_1=1}^n f_{j_k,j_{k+1},\ell_1}\Phi(x_{\tau_1(j_1)})\ldots\Phi(x_{\tau_1(j_{k-1})})\Phi(x_{\ell_1})\Phi(x_{\tau_1(j_{k+2})})\ldots\Phi(x_{\tau_1(j_{m+1})}),
					\end{align*}
					where we recalled that $\tau_1=(k\;\;k+1)$ and consequently $\tau_1(j_\ell)=j_\ell$ if $\ell\notin\{k,k+1\}$ and $j_{k+1}=\tau_1(j_k)$, $j_k=\tau_1(j_{k+1})$ otherwise. Repeating this for all remaining transpositions $\tau_2,\ldots,\tau_o$ in the decomposition of $\pi$ yields consequently:
					\begin{align}
						\Phi\left(x_{j_1}\right)\ldots \Phi(x_{j_m})\Phi\left(x_{j_{m+1}}\right)&=\Phi(x_{\pi(j_1)})\ldots\Phi( x_{\pi(j_{m-1})})\Phi(x_{\pi(j_{m})})\Phi(x_{\pi(j_{m+1})})\nonumber\\
						&\qquad+\sum_{k=1}^o\sum_{\ell_k=1}^n c_{k\ell_k}\Phi(x_{\tau_k\circ\ldots\circ\tau_1(j_1)})\ldots\Phi(x_{\ell_k})\ldots\Phi(x_{\tau_k\circ\ldots\circ\tau_1(j_{m+1})}).\label{eqn:help:automorphism:extension:2}
					\end{align}
					Comparing equations \eqref{eqn:help:automorphism:extension:1} and \eqref{eqn:help:automorphism:extension:2} finally allows us to conclude that
					\begin{align*}
						\Phi\left(x_{j_1}\ldots x_{j_m}x_{j_{m+1}}\right)=\Phi\left(x_{j_1}\right)\ldots \Phi(x_{j_m})\Phi\left(x_{j_{m+1}}\right),
					\end{align*}
					thereby completing the proof by induction.

					The linearity of $\Phi$ implies now that $\Phi(\tilde{x}_1\tilde{x}_2)=\Phi(\tilde{x}_1)\Phi(\tilde{x}_2)$ for all $\tilde{x}_1,\tilde{x}_2\in U(\g)$, since any arbitrary elements $\tilde{x}_1,\tilde{x}_2\in U(\g)$ can, by the Poincaré-Birkhoff-Witt Theorem \plainrefs{PBW:Thm}, be expressed as unique linear combinations of the form
					\begin{align*}
						\tilde{x}_1=\sum_{k\in\mathcal{K}}c_k^{(1)}\prod_{j\in\mathcal{J}}x_j^{\alpha_j^{(1)}},\quad\tilde{x}_2=\sum_{k'\in\mathcal{K}'}{c}_{k'}^{(2)}\prod_{j'\in\mathcal{J}}^nx_{j'}^{\alpha_{j'}^{(2)}},
					\end{align*}
					where the coefficients $c_k^{(1)},c_{k'}^{(2)}$ are non-zero for all $k\in\mathcal{K}\subseteq\N_{\geq1}$ and $k'\in\mathcal{K}'\subseteq\N_{\geq1}$. Hence:{\small
						\begin{align*}
							\Phi(\tilde{x}_1\tilde{x}_2)&=\Phi\left(\sum_{k\in\mathcal{K}}c_k^{(1)}\prod_{j\in\mathcal{J}}x_j^{\alpha_j^{(1)}}\sum_{k'\in\mathcal{K}'}{c}_{k'}^{(2)}\prod_{j'\in\mathcal{J}}x_{j'}^{\alpha_{j'}^{(2)}}\right)=\sum_{k\in\mathcal{K}}\sum_{k'\in\mathcal{K}'}c_k^{(1)}c_{k'}^{(2)}\Phi\left(\prod_{j\in\mathcal{J}}x_j^{\alpha_j^{(1)}}\prod_{j'\in\mathcal{J}}x_{j'}^{\alpha_{j'}^{(2)}}\right)\\
							&=\sum_{k\in\mathcal{K}}\sum_{k'\in\mathcal{K}'}c_k^{(1)}c_{k'}^{(2)}\prod_{j\in\mathcal{J}}\Phi\left(x_j^{\alpha_j^{(1)}}\right)\prod_{j'\in\mathcal{J}}\Phi\left(x_{j'}^{\alpha_{j'}^{(2)}}\right)=\sum_{k\in\mathcal{K}}c_k^{(1)}\prod_{j\in\mathcal{J}}\Phi\left(x_j^{\alpha_j^{(1)}}\right)\sum_{k'\in\mathcal{K}'}{c}_{k'}^{(2)}\prod_{j'\in\mathcal{J}}\Phi\left(x_{j'}^{\alpha_{j'}^{(2)}}\right)=\Phi(\tilde{x}_1)\Phi(\tilde{x}_2).
					\end{align*}}\noindent
					One has consequently
					\begin{align*}
						\Phi((xy)z)=\Phi(xy)\Phi(z)=(\Phi(x)\Phi(y))\Phi(z)=\Phi(x)(\Phi(y)\Phi(z))=\Phi(x)\Phi(yz)=\Phi(x(yz))
					\end{align*}
					for all $x,y,z\in U(\g)$, showing that $\Phi$ is associative.
					
					\item \textbf{Preservation of the Lie bracket}: Let $x,y\in U(\g)$. Then, by the linearity and associativity of $\Phi$, one has for the commutator of $x$ and $y$: 
					\begin{align*} 
						\Phi\left([x,y]\right)&=\Phi\left(xy-yx\right)=\Phi(x)\Phi(y)-\Phi(y)\Phi(x)=[\Phi(x),\Phi(y)],
					\end{align*}\noindent
					since the Lie bracket of the universal enveloping algebra $U(\g)$ is defined as the commutator.
					
					\item \textbf{Invertibility}: The map $\phi:\g\to\g$ is a Lie-algebra automorphism. Thus, its inverse $\phi^{-1}:\g\to\g$ exists and is also a Lie-algebra automorphism. Define the map $\Psi:U(\g)\to U(\g)$ by extending $\phi^{-1}$ in the same way $\Phi$ was defined from $\phi$. That is, for any basis element $x=\prod_{j\in\mathcal{J}}x_j^{\alpha_j}\in U(\g)$ define the action:
					\begin{align*}
						x\mapsto\Psi(x):=\prod_{j\in\mathcal{J}}(\phi^{-1}(x_j))^{\alpha_j},
					\end{align*}
					and extend $\Phi$ linearly to all of $U(\g)$. Then $\Psi$ is, by virtue of the previous discussion, also an associative Lie-algebra endomorphism that maps the universal enveloping algebra $U(\g)$ to itself. Let now $x\in U(\g)$ be an arbitrary element with the unique decomposition $x=\sum_{k\in\mathcal{K}}c_k\prod_{j\in\mathcal{J}}x_j^{\alpha_j}$. Then, one has:
					\begin{align*}
						\Phi(\Psi(x))&=\Phi\left(\sum_{k\in\mathcal{K}}\prod_{j\in\mathcal{J}}(\phi^{-1}(x_j))^{\alpha_j}\right)=\sum_{k\in\mathcal{K}}c_k\Phi\left(\prod_{j\in\mathcal{J}}(\phi^{-1}(x_j))^{\alpha_j}\right)=\sum_{k\in\mathcal{K}}c_k\prod_{j\in\mathcal{J}}\left(\Phi\left(\phi^{-1}(x_j)\right)\right)^{\alpha_j}\\
						&=\sum_{k\in\mathcal{K}}c_k\prod_{j\in\mathcal{J}}\left(\phi\left(\phi^{-1}(x_j)\right)\right)^{\alpha_j}=x=\sum_{k\in\mathcal{K}}c_k\prod_{j\in\mathcal{J}}\left(\phi^{-1}\left(\phi(x_j)\right)\right)^{\alpha_j}=\sum_{k\in\mathcal{K}}c_k\prod_{j\in\mathcal{J}}\left(\Psi\left(\phi(x_j)\right)\right)^{\alpha_j}\\
						&=\sum_{k\in\mathcal{K}}c_k\Psi\left(\prod_{j\in\mathcal{J}}\phi(x_j)^{\alpha_j}\right)=\Psi\left(\sum_{k\in\mathcal{K}}c_k\prod_{j\in\mathcal{J}}\phi(x_j)^{\alpha_j}\right)=\Psi(\Phi(x)).
					\end{align*}
					Hence, $\Phi$ is, by the existence of the inverse $\phi^{-1}$, invertible and its inverse is $\Psi$.
				\end{enumerate}
				We finally conclude that $\Phi$ is a Lie-algebra automorphism.
			\end{proof}
			
			\begin{lemma}\label{lem:Igusa:0}
				Every pair $(\boldsymbol{\sigma},\boldsymbol{\lambda})$ of a symplectic matrix $\boldsymbol{\sigma}\in\operatorname{Sp}(2,\C)$ and a vector $\boldsymbol{\lambda}\in\C^2$ induces a Lie-algebra automorphism $\Phi$ of a subalgebra of the complex Weyl algebra ${A}_{1}$.
			\end{lemma}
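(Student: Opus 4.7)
The plan is to define an explicit Lie-algebra automorphism on the complex Heisenberg algebra $\gh_{1,\C} = \spn\{a^\dagg, a, 1\}$, which sits inside the complex Weyl algebra $A_1$ as a three-dimensional subalgebra under the commutator bracket. Concretely, given $\boldsymbol{\sigma}\in\operatorname{Sp}(2,\C)$ with entries $\sigma_{jk}$ and $\boldsymbol{\lambda}=(\lambda_1,\lambda_2)^{\Tp}\in\C^2$, I would define $\Phi:\gh_{1,\C}\to\gh_{1,\C}$ on generators by $\Phi(1):=1$, $\Phi(a^\dagg):=\sigma_{11}a^\dagg+\sigma_{12}a+\lambda_1\cdot 1$, and $\Phi(a):=\sigma_{21}a^\dagg+\sigma_{22}a+\lambda_2\cdot 1$, and extend $\Phi$ linearly to all of $\gh_{1,\C}$.

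The linearity of $\Phi$ holds by construction, so the substantive task is verifying that $\Phi$ preserves the Lie bracket and is invertible. Since the only non-trivial commutator among the generators is $[a,a^\dagg]=1$, preservation of the Lie bracket reduces to computing
\begin{align*}
[\Phi(a),\Phi(a^\dagg)] &= [\sigma_{21}a^\dagg+\sigma_{22}a+\lambda_2,\sigma_{11}a^\dagg+\sigma_{12}a+\lambda_1]\\
&= \sigma_{21}\sigma_{12}[a^\dagg,a]+\sigma_{22}\sigma_{11}[a,a^\dagg]=\left(\sigma_{11}\sigma_{22}-\sigma_{12}\sigma_{21}\right)\cdot 1 = \det(\boldsymbol{\sigma})\cdot 1 = 1,
\end{align*}
where every commutator involving the central element $1$ vanishes, and where the last equality uses precisely the symplectic condition $\det(\boldsymbol{\sigma})=1$ inherent to $\operatorname{Sp}(2,\C)$. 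Hence $\Phi([a,a^\dagg])=\Phi(1)=1=[\Phi(a),\Phi(a^\dagg)]$, so $\Phi$ is a Lie-algebra homomorphism on $\gh_{1,\C}$.

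To establish invertibility, I would construct the inverse explicitly. Since $\boldsymbol{\sigma}\in\operatorname{Sp}(2,\C)$ satisfies $\det(\boldsymbol{\sigma})=1$, its inverse is $\boldsymbol{\sigma}^{-1}=\left(\begin{smallmatrix}\sigma_{22}&-\sigma_{12}\\-\sigma_{21}&\sigma_{11}\end{smallmatrix}\right)$, which also lies in $\operatorname{Sp}(2,\C)$. Define $\Psi:\gh_{1,\C}\to\gh_{1,\C}$ by $\Psi(1):=1$, $\Psi(a^\dagg):=\sigma_{22}a^\dagg-\sigma_{12}a+\mu_1\cdot 1$, and $\Psi(a):=-\sigma_{21}a^\dagg+\sigma_{11}a+\mu_2\cdot 1$, where $\boldsymbol{\mu}:=-\boldsymbol{\sigma}^{-1}\boldsymbol{\lambda}$ is chosen so that the constant terms cancel upon composition. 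A direct computation on generators then verifies $\Phi\circ\Psi=\Psi\circ\Phi=\operatorname{id}_{\gh_{1,\C}}$, which confirms that $\Phi$ is bijective, hence a Lie-algebra automorphism of the subalgebra $\gh_{1,\C}\subseteq A_1$.

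There is no real obstacle in this proof: once one recognizes that the defining condition $\det(\boldsymbol{\sigma})=1$ of $\operatorname{Sp}(2,\C)$ matches exactly the algebraic identity needed to preserve the single non-trivial Heisenberg bracket, the result follows by routine verification. The mildly delicate point is simply the bookkeeping for the affine shift $\boldsymbol{\lambda}$, which contributes only to commutators of the form $[\lambda_j,\,\cdot\,]=0$ because $1$ is central in $\gh_{1,\C}$; this is precisely why arbitrary shifts $\boldsymbol{\lambda}\in\C^2$ are allowed, with no compatibility condition relating $\boldsymbol{\sigma}$ and $\boldsymbol{\lambda}$. Combined with Lemma~\ref{lem:extension:automorphism:enveloping:algebra}, this $\Phi$ then lifts to an associative Lie-algebra automorphism of the full Weyl algebra $A_1$ (the content of the subsequent Corollary~\ref{cor:extension:isomorphism:enveloping:algebra}).
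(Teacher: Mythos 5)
Your proposal is correct and takes essentially the same route as the paper: both define the affine map on the Heisenberg algebra $\gh_{1,\C}=\spn\{a^\dagg,a,1\}$, use $\det(\boldsymbol{\sigma})=1$ to verify that the single nontrivial bracket is preserved, establish invertibility (the paper via $\det(\phi)=\det^2(\boldsymbol{\sigma})\neq 0$, you via the explicit inverse with $\boldsymbol{\mu}=-\boldsymbol{\sigma}^{-1}\boldsymbol{\lambda}$, which is a harmless variation), and then lift to $A_1$ through Lemma~\ref{lem:extension:automorphism:enveloping:algebra}.
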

			
			\begin{proof}
				Let $\boldsymbol{\sigma}=(\boldsymbol{\eta}|\boldsymbol{\xi})\in\operatorname{Gl}(2,\C)$ be an invertible matrix with $\boldsymbol{\xi},\boldsymbol{\eta}\in\C^2$ and let $\boldsymbol{\lambda}\in\C^2$. Introduce the linear map $\phi:\langle\{a^\dagg,a,1\}\rangle_{\mathrm{Lie}}\to\langle\{a^\dagg,a,1\}\rangle_{\mathrm{Lie}}$ defined by the actions:
				\begin{align*}
					\phi(a^\dagg)&:=\xi_1 a^\dagg+\eta_1 a+\lambda_1,\;&\;\phi(a)&:=\xi_2a^\dagg+\eta_2a+\lambda_2,\;&\;\phi(1)&:=\det(\boldsymbol{\sigma}).
				\end{align*}
				A short calculation yields:
				\begin{align*}
					[\phi(a^\dagg),\phi(a)]&=[\xi_1 a^\dagg+\eta_1 a+\lambda_1,\xi_2a^\dagg+\eta_2a+\lambda_2]=\det(\boldsymbol{\sigma})=\phi(1),\;&\;[\phi(a^\dagg),\phi(1)]&=0,\;&\;[\phi(a),\phi(1)]&=0.
				\end{align*}
				One has furthermore $\det(\phi)=\det^2(\boldsymbol{\sigma})\neq0$. Thus, $\phi$ defines a Lie-algebra automorphism on $\langle\{a^\dagg,a,1\}\rangle_{\mathrm{Lie}}$. It is clear that this is the complex Heisenberg algebra $\gh_{1,\C}$.
				By Lemma~\ref{lem:extension:automorphism:enveloping:algebra}, $\phi$ induces a Lie-algebra automorphism of $U(\lie{\{a^\dagg,a,1\}})\cong A_1$\footnote{Note that the complex Weyl algebra $A_1$ is the universal enveloping algebra of the complex Heisenberg algebra $\gh_{1,\C}$ only if one requires that the central elements of $\gh_{1,\C}$ are scalar multiples of the unit-element.}, and consequently, by restriction to any subalgebra $\g$ of $A_1$, a Lie-algebra automorphism of $\g$. We may restrict our attention from invertible matrices $\boldsymbol{\sigma}\in\operatorname{Gl}(2,\C)$ to those with determinant one, i.e., symplectic matrices, such that $\Phi(1)=1$.
			\end{proof}

			The claim of Lemma~\ref{lem:Igusa:0} is generally not true for the skew-hermitian Weyl algebra $\hat{A}_1$, since $\Phi$ will, by construction, not be skew-hermitian for all symplectic matrices. To elaborate on this point, consider the following observations: the first issue is that $\hat{A}_1$ is not the universal enveloping algebra of any real Lie algebra, because squaring a skew-hermitian element yields a hermitian one. Thus Lemma~\ref{lem:extension:automorphism:enveloping:algebra} is not applicable in this setting. 
			The second issue is formalized in the following proposition:
			\begin{proposition}\label{prop:not:all:symplectic:matrices:generate:automorphisms:on:skew:hermitian:algebra}
				The only symplectic matrices $\boldsymbol{\sigma}\in\operatorname{Sp}(2,\C)$ and vectors $\boldsymbol{\lambda}\in\C^2$ that induce a Lie-algebra automorphism of a subalgebra of the real skew-hermitian Weyl algebra $\hat{A}_1$, constructed via the usual linear extension of a Lie-algebra automorphism $\phi$ of the complex Heisenberg algebra $\lie{\{a^\dagg,a,1\}}\cong\gh_{1,\C}$, have the form
				\begin{align*}
					\boldsymbol{\sigma}=\begin{pmatrix}
						e^{i\varphi}\cosh(s)&e^{-i\vartheta}\sinh(s)\\
						e^{i\vartheta}\sinh(s)&e^{-i\varphi}\cosh(s)
					\end{pmatrix}\quad\text{for }s,\varphi,\vartheta\in\R,\quad\text{and}\quad\boldsymbol{\lambda}=\lambda_1\begin{pmatrix}
						1\\
						1
					\end{pmatrix}+i\lambda_2\begin{pmatrix}
						1\\
						-1
					\end{pmatrix}\quad\text{for }\lambda_1,\lambda_2\in\R.
				\end{align*}
			\end{proposition}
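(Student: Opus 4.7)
The plan is to characterize the pairs $(\boldsymbol{\sigma}, \boldsymbol{\lambda})$ for which the induced algebra automorphism $\Phi$ of $A_1$ (built via Lemma~\ref{lem:Igusa:0}) restricts to a well-defined map on $\hat{A}_1$, that is, maps skew-hermitian elements to skew-hermitian elements. The key observation is that, since $\Phi$ is an associative algebra automorphism and $(\cdot)^\dagg$ is an anti-automorphism of $A_1$, the condition $\Phi(\hat{A}_1)\subseteq\hat{A}_1$ is equivalent to $\Phi\circ(\cdot)^\dagg=(\cdot)^\dagg\circ\Phi$ on all of $A_1$. By multiplicativity this in turn reduces to its validity on the generators $\{a^\dagg,a,1\}$ of $A_1$.

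First, I would note that $\Phi(1)^\dagg=\Phi(1)$ is trivial because $\det(\boldsymbol{\sigma})=1\in\R$. Next, I would impose $\Phi(a^\dagg)^\dagg=\Phi((a^\dagg)^\dagg)=\Phi(a)$. Expanding the left-hand side yields $\overline{\xi_1}\,a+\overline{\eta_1}\,a^\dagg+\overline{\lambda_1}$, while the right-hand side is $\xi_2a^\dagg+\eta_2a+\lambda_2$. Comparing the coefficients of $a^\dagg$, $a$, and $1$ produces the three constraints $\xi_2=\overline{\eta_1}$, $\eta_2=\overline{\xi_1}$, and $\lambda_2=\overline{\lambda_1}$. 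The companion condition $\Phi(a)^\dagg=\Phi(a^\dagg)$ is just the complex conjugate of this system and contributes nothing new.

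Next, I would combine these relations with the symplectic condition $\det(\boldsymbol{\sigma})=\xi_1\eta_2-\eta_1\xi_2=1$. Substituting $\xi_1=\overline{\eta_2}$ and $\xi_2=\overline{\eta_1}$ turns this into $|\eta_2|^2-|\eta_1|^2=1$, which admits the hyperbolic parameterization $\eta_2=e^{-i\varphi}\cosh(s)$ and $\eta_1=e^{-i\vartheta}\sinh(s)$ with $s,\varphi,\vartheta\in\R$. Reading off $\xi_1=e^{i\varphi}\cosh(s)$ and $\xi_2=e^{i\vartheta}\sinh(s)$ reconstructs the claimed matrix form. For $\boldsymbol{\lambda}$, writing $\lambda_1=\lambda_1'+i\lambda_2'$ with $\lambda_1',\lambda_2'\in\R$ and enforcing $\lambda_2=\overline{\lambda_1}=\lambda_1'-i\lambda_2'$ yields the claimed decomposition $\boldsymbol{\lambda}=\lambda_1'(1,1)^\Tp+i\lambda_2'(1,-1)^\Tp$. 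Conversely, direct substitution shows that any pair of the stated form preserves the dagger on $\{a^\dagg,a,1\}$ and hence on all of $\hat{A}_1$; invertibility of the restriction $\Phi|_{\hat{A}_1}$ follows because the inverse pair obtained from $\boldsymbol{\sigma}^{-1}$ is again of the prescribed form by the same analysis.

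The main anticipated obstacle is interpretive rather than computational: the statement refers to automorphisms of \emph{a} subalgebra of $\hat{A}_1$, and trivial subalgebras such as the center $i\R$ are preserved by any $\Phi$. The natural and intended reading is that $\Phi$ should map $\hat{A}_1$ into itself (equivalently, preserve the real skew-hermitian Heisenberg subalgebra $\spn\{2i,i(a+a^\dagg),a-a^\dagg\}$ whose generators, together with their products, span $\hat{A}_1$). Clarifying that the relevant nontrivial condition is precisely dagger-preservation on the generators $\{a^\dagg,a,1\}$ — thereby ruling out degenerate cases — is the subtle step; once this is settled, the algebraic computations above furnish both directions of the characterization.
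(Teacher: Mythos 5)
Your proposal is correct and follows essentially the same route as the paper: both reduce the problem to the generator-level constraints $\xi_2=\overline{\eta_1}$, $\eta_2=\overline{\xi_1}$, $\lambda_2=\overline{\lambda_1}$ (you via $\Phi(a^\dagg)^\dagg=\Phi(a)$, the paper via skew-hermiticity of $\phi(i(a+a^\dagg))$ and $\phi(a-a^\dagg)$, which are equivalent conditions), combine them with $\det(\boldsymbol{\sigma})=1$, and parameterize hyperbolically; your observation that $\Phi\circ(\cdot)^\dagg=(\cdot)^\dagg\circ\Phi$ propagates from generators to all of $A_1$ is a slightly cleaner way to get the sufficiency direction than the paper's explicit check on every monomial $g_\sigma^\gamma$. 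The only cosmetic discrepancy is the column-ordering convention for $\boldsymbol{\sigma}$ (you implicitly use $(\boldsymbol{\xi}|\boldsymbol{\eta})$ where the paper uses $(\boldsymbol{\eta}|\boldsymbol{\xi})$, which flips the sign of the determinant formula), but your derivation is internally consistent and lands exactly on the claimed matrix family.
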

			
			\begin{proof}
				Consider the symplectic matrix $\boldsymbol{\sigma}=(\boldsymbol{\eta}|\boldsymbol{\xi})\in\operatorname{Sp}(2,\C)$ and vector $\boldsymbol{\lambda}\in\R^2$. Suppose $\phi$ is a Lie-algebra automorphism of the complex algebra $\lie{\{a^\dagg,a,1\}}\cong\gh_{1,\C}$ that is induced by $\boldsymbol{\sigma}$ and $\boldsymbol{\lambda}$. That is $\phi$ is defined by the actions $\phi(a^\dagg):=\xi_1a^\dagg+\eta_1a+\lambda_1$, $\phi(a):=\xi_2a^\dagg+\eta_2a+\lambda_2$, $\phi(1):=1$. Let furthermore $\Phi$ be the usual extension of $\phi$ to the universal enveloping algebra $U(\lie{\{a^\dagg,a,1\}})\cong A_1$, as defined in Lemma~\ref{lem:extension:automorphism:enveloping:algebra}. It is clear that $\Phi$ is a Lie-algebra automorphism of a subalgebra of the real skew-hermitian Weyl algebra $\hat{A}_1$ only if $\phi$ is a Lie algebra automorphism on $\lie{\{i,i(a+a^\dagg),a-a^\dagg\}}\cong\gh_1$ when restricted to it. Thus, the Lie-algebra automorphism $\phi$ must satisfy the following four properties:
				\begin{enumerate}[label=(\roman*)]
					\item $\phi$ must be linear on $\hat{A}_1$, i.e., $\phi(x+\lambda y)=\phi(x)+\lambda\phi(y)$ for all $x,y\in\hat{A}_1$ and $\lambda\in\R$;
					\item $\phi$ must preserve the commutator on $\hat{A}_1$, i.e., $[\phi(x),\phi(y)]=\phi([x,y])$ for all $x,y\in\hat{A}_1$;
					\item $\phi$ must be invertible on $\hat{A}_1$;
					\item $\phi$ must, for every skew-hermitian element in the domain, yield a skew-hermitian result, i.e., $(\phi(x))^\dagg=-\phi(x)$ for all $x,y\in \gh_1\cong\spn\{i(a+a^\dagg),a-a^\dagg,i\}$.
				\end{enumerate}
				This implies that the following identities must hold:
				\begin{align*}
					\phi(i(a+a^\dagg))&=i\left((\xi_1+\xi_2)a^\dagg+(\eta_1+\eta_2)a+(\lambda_1+\lambda_2)\right)=i\left((\eta_1^*+\eta_2^*)a^\dagg+(\xi_1^*+\xi_2^*)a+(\lambda_1^*+\lambda_2^*)\right)^\dagg=-(\phi(i(a+a^\dagg)))^\dagg,\\
					\phi(a-a^\dagg)&=(\xi_2-\xi_1)a^\dagg+(\eta_2-\eta_1)a+(\lambda_2-\lambda_1)=(\eta_1^*-\eta_2^*)a^\dagg+(\xi_1^*-\xi_2^*)a+(\lambda_1^*-\lambda_2^*)^\dagg=-(\phi(a-a^\dagg))^\dagg.
				\end{align*}
				Here, we first used the linearity of $\phi$ and its definition. Then, we used that the hermitian conjugation is an involutory operation, i.e.,$(x^\dagg)^\dagg=x$ for all $x\in A_1$. The last step follows from condition (iv), as $\phi$ must yield skew-hermitian results when restricted to elements from $\spn\{i(a+a^\dagg),a-a^\dagg,i\}$.
				Comparing the prefactors of the basis elements $a^\dagg$, $a$, and 1, requires: $\xi_1=\eta_2^*$, $\xi_2=\eta_1^*$, and $\lambda_1=\lambda_2^*$. The symplectic condition $\det(\boldsymbol{\sigma})=\xi_2\eta_1-\xi_1\eta_2=1$ implies consequently $|\eta_1|^2-|\eta_2|^2=1$. Thus, one can parameterize $|\eta_1|$ and $|\eta_2|$ by the real parameter $s\in\R$ as $|\eta_1|=\cosh(s)$ and $|\eta_2|=\sinh(s)$, since the hyperbolic functions $\sinh(s)$ and $\cosh(s)$ satisfy $\cosh^2(s)-\sinh^2(s)=1$, and $\sinh(\R)=\R$, as well as $\cosh(\R)=[1,\infty)$, i.e., exhausting the possible domain of $|\eta_1|$ and $|\eta_2|$ respectively. Since $\eta_1$ and $\eta_2$ are two complex numbers, we can write $\eta_1=\cosh(s)e^{i\varphi}$ and $\eta_2=\sinh(s)e^{i\vartheta}$, where $\varphi,\vartheta\in[0,2\pi)$ are two angles. The conditions $\xi_1=\eta_2^*$ and $\xi_2=\eta_1^*$ allow us consequently to write $\xi_1=\sinh(s)e^{-i\vartheta}$ and $\xi_2=\cosh(s)e^{-i\varphi}$. Moreover, since $\lambda_1\in\C$, we can parameterize it with two real numbers such that $\lambda_1=u+iv$, where $u,v\in\R$. The requirement $\lambda_1=\lambda_2^*$ yields then $\lambda_2=u-iv$ and we can compactly write 
				\begin{align*}
					\boldsymbol{\sigma}=\begin{pmatrix}
						e^{i\varphi}\cosh(s)&e^{-i\vartheta}\sinh(s)\\
						e^{i\vartheta}\sinh(s)&e^{-i\varphi}\cosh(s)
					\end{pmatrix}\quad\text{for }s\in\R,\,\varphi,\vartheta\in[0,2\pi)\quad\text{and}\quad\boldsymbol{\lambda}=u\begin{pmatrix}
						1\\
						1
					\end{pmatrix}+iv\begin{pmatrix}
						1\\
						-1
					\end{pmatrix}\quad\text{for }u,v\in\R.
				\end{align*}
				It is straightforward to verify that the usual linear extension $\Phi$ of $\phi$ defines a Lie-algebra automorphism on $\hat{A}_1$. Indeed, the conditions $\xi_1^*=\eta_2$, $\xi_2^*=\eta_1$, $\lambda_1^*=\lambda_2$ imply:
				\begin{align*}
					-(\Phi(g_+^{\gamma}))^\dagg&=-\left(\Phi\left(i\left((a^\dagg)^\beta a^\alpha+(a^\dagg)^\alpha a^\beta\right)\right)\right)^\dagg=i\left((\phi(a^\dagg))^\beta(\phi(a))^\alpha+(\phi(a^\dagg))^\alpha (\phi(a))^\beta\right)^\dagg\\
					&=i\left(\left(\xi_1a^\dagg+\eta_1a+\lambda_1\right)^\beta\left(\xi_2a^\dagg+\eta_2a+\lambda_2\right)^\alpha+\left(\xi_1a^\dagg+\eta_1a+\lambda_1\right)^\alpha\left(\xi_2a^\dagg+\eta_2a+\lambda_2\right)^\beta\right)^\dagg\\
					&=i\left(\left(\eta_2^*a^\dagg+\xi_2^*a+\lambda_2^*\right)^\beta\left(\eta_1^*a^\dagg+\xi_1^*a+\lambda_1^*\right)^\alpha+\left(\eta_2^*a^\dagg+\xi_2^*a+\lambda_2^*\right)^\alpha\left(\eta_1^*a^\dagg+\xi_1^*a+\lambda_1^*\right)^\beta\right)^\dagg\\
					&=\Phi(g_+^\gamma),\\
					-(\Phi(g_-^{\gamma}))^\dagg&=-\left(\Phi\left((a^\dagg)^\beta a^\alpha-(a^\dagg)^\alpha a^\beta\right)\right)^\dagg=-\left((\phi(a^\dagg))^\beta(\phi(a))^\alpha-(\phi(a^\dagg))^\alpha (\phi(a))^\beta\right)^\dagg\\
					&=-\left(\left(\xi_1a^\dagg+\eta_1a+\lambda_1\right)^\beta\left(\xi_2a^\dagg+\eta_2a+\lambda_2\right)^\alpha-\left(\xi_1a^\dagg+\eta_1a+\lambda_1\right)^\alpha\left(\xi_2a^\dagg+\eta_2a+\lambda_2\right)^\beta\right)^\dagg\\
					&=\left(\left(\eta_2^*a^\dagg+\xi_2^*a+\lambda_2^*\right)^\beta\left(\eta_1^*a^\dagg+\xi_1^*a+\lambda_1^*\right)^\alpha-\left(\eta_2^*a^\dagg+\xi_2^*a+\lambda_2^*\right)^\alpha\left(\eta_1^*a^\dagg+\xi_1^*a+\lambda_1^*\right)^\beta\right)^\dagg\\
					&=\Phi(g_-^\gamma),
				\end{align*}
				showing that $\Phi$ preserves the skew-hermitian property of the monomials $g_\sigma^\gamma$. The remaining properties required for $\Phi$ to be a Lie-algebra automorphism - namely linearity, preservation of the Lie bracket, and invertibility - are satisfied as well. These follow analogously to the arguments presented in Lemmas~\ref{lem:extension:automorphism:enveloping:algebra} and~\ref{lem:Igusa:0}. Therefore, $\Phi$ defines a Lie-algebra automorphism of a subalgebra of the real skew-hermitian Weyl algebra $\hat{A}_1$ as claimed.
			\end{proof}
			
			\begin{corollary}\label{cor:extension:isomorphism:enveloping:algebra}
				Let $\boldsymbol{\sigma}=(\boldsymbol{\eta}|\boldsymbol{\xi})\in\operatorname{Sp}(2,\C)$ be a symplectic matrix with $\boldsymbol{\xi},\boldsymbol{\eta}\in\C^2$, and let $\boldsymbol{\lambda}\in\C^2$, which together define the linear map $\phi:\spn\{a,a^\dagg,1\}\cong\gh_{1,\C}\to\spn\{a^\dagg,a,1\}\cong\gh_{1,\C}$ via
				\begin{align*}
					\phi(a^\dagg):=\xi_1a^\dagg+\eta_1 a +\lambda_1,
					\qquad
					\phi(a):=\xi_2a^\dagg+\eta_2a+\lambda_2,
					\qquad
					\text{and}
					\qquad
					\phi(1):=1.
				\end{align*}
				Define the extension $\Phi:{A}_{1}\to A_1$ of $\phi$ as the linear map  given by the action on the basis elements
				\begin{align*}
					(a^\dagg)^{\alpha}a^\beta1^\gamma=(a^\dagg)^\alpha a^\beta\mapsto\Phi\left((a^\dagg)^\alpha a^\beta 1^\gamma\right):= (\phi(a^\dagg))^\alpha (\phi(a))^\beta (\phi(1))^\gamma=  (\phi(a^\dagg))^\alpha (\phi(a))^\beta, 
				\end{align*}
				for all $\alpha,\beta,\gamma\in\N_{\geq0}$.
				Then, the restriction $\Phi|_{\hat{A}_1}:\hat{A}_1\to\Phi|_{\hat{A}_1}(\hat{A}_1)$ of $\Phi$ to the skew-hermitian Weyl algebra $\hat{A}_1$ is a Lie-algebra isomorphism from $\hat{A}_1$ to another real Lie algebra. The same holds true for any subalgebra of $\hat{A}_1$.
			\end{corollary}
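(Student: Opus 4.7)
The plan is to leverage Lemma~\ref{lem:Igusa:0}, which already asserts that the extension $\Phi$ is a Lie-algebra automorphism of (a subalgebra of) the complex Weyl algebra $A_1$, and then to observe that the current statement is in fact weaker than Proposition~\ref{prop:not:all:symplectic:matrices:generate:automorphisms:on:skew:hermitian:algebra}: we do not require $\Phi(\hat{A}_1) \subseteq \hat{A}_1$, only that $\Phi|_{\hat{A}_1}$ be a Lie-algebra isomorphism onto its image, where the image is understood as a real Lie algebra sitting inside $A_1$ viewed over $\mathbb{R}$. Since $\boldsymbol{\sigma}$ is symplectic, $\det(\boldsymbol{\sigma}) = 1$, so the condition $\phi(1) := 1$ in the corollary is consistent with the convention used in Lemma~\ref{lem:Igusa:0}.

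The first step is to invoke Lemma~\ref{lem:Igusa:0} to conclude that $\Phi: A_1 \to A_1$ is a $\mathbb{C}$-linear Lie-algebra automorphism, hence in particular bijective and commutator-preserving. Any $\mathbb{C}$-linear map is automatically $\mathbb{R}$-linear, so the restriction $\Phi|_{\hat{A}_1}$ is an $\mathbb{R}$-linear map from the real vector space $\hat{A}_1$ into $A_1$ (viewed as a real vector space). Second, I would verify that the image $\Phi|_{\hat{A}_1}(\hat{A}_1)$ is closed under the commutator and hence is a real Lie subalgebra of $A_1$: given $\Phi(x), \Phi(y) \in \Phi(\hat{A}_1)$ with $x,y \in \hat{A}_1$, commutator preservation yields $[\Phi(x), \Phi(y)] = \Phi([x,y])$, and since $\hat{A}_1$ is itself closed under the Lie bracket, $[x,y] \in \hat{A}_1$, so $[\Phi(x), \Phi(y)] \in \Phi(\hat{A}_1)$.

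Third, I would establish bijectivity of the restriction as a map onto its image. Surjectivity onto $\Phi|_{\hat{A}_1}(\hat{A}_1)$ is tautological, and injectivity follows since the global map $\Phi$ on $A_1$ is bijective, so its restriction to any $\mathbb{R}$-subspace is injective. Combined with the preservation of the Lie bracket established in the previous step, $\Phi|_{\hat{A}_1}$ is a Lie-algebra isomorphism in the sense of Definition~\ref{def:Lie:algebra:homomorphism}. For an arbitrary subalgebra $\mathfrak{g} \subseteq \hat{A}_1$, the same three steps apply verbatim: restrict $\Phi$ to $\mathfrak{g}$, check closure of the image under the commutator using that $\mathfrak{g}$ is itself a Lie algebra, and conclude injectivity from the global bijectivity of $\Phi$.

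The main conceptual obstacle is not computational but rather to emphasize the distinction from Proposition~\ref{prop:not:all:symplectic:matrices:generate:automorphisms:on:skew:hermitian:algebra}: an arbitrary symplectic $\boldsymbol{\sigma}$ need not preserve skew-hermiticity, so in general $\Phi(\hat{A}_1) \not\subseteq \hat{A}_1$, but the image is nonetheless a well-defined real Lie algebra in $A_1$ (with the inherited commutator as Lie bracket), and this is all that is needed for the Igusa-style argument in Theorem~\ref{Igusa:condition}, where one only needs an isomorphism that transports the finiteness or infiniteness of the generated Lie algebra between the two pictures.
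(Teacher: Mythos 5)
Your proposal is correct and follows essentially the same route as the paper's proof: both deduce the result directly from Lemma~\ref{lem:Igusa:0} (and hence Lemma~\ref{lem:extension:automorphism:enveloping:algebra}), noting that the restriction of the global automorphism $\Phi$ to $\hat{A}_1$ (or any subalgebra) is $\R$-linear, injective, and bracket-preserving, and that the image is a real Lie algebra that need not be skew-hermitian. You merely spell out more explicitly the steps (closure of the image under the commutator, injectivity inherited from global bijectivity, consistency of $\phi(1)=1$ with $\det(\boldsymbol{\sigma})=1$) that the paper leaves implicit.
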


			\begin{proof}
				This is a direct consequence of Lemma~\ref{lem:extension:automorphism:enveloping:algebra} and~\ref{lem:Igusa:0}, since restricting the map $\Phi$ to the real skew-hermitian subalgebra $\hat{A}_1$ implies that $\Phi|_{\hat{A}_1}$ is a linear and invertible map that preserves the commutation relations. Therefore $\Phi|_{\hat{A}_1}$ is a Lie-algebra isomorphism. Moreover, the image $\Phi|_{\hat{A}_1}(\hat{A}_1)$ is, by construction, a real Lie algebra, although not necessarily skew-hermitian, as demonstrated in Proposition~\ref{prop:not:all:symplectic:matrices:generate:automorphisms:on:skew:hermitian:algebra}. Finally, restricting $\Phi|_{\hat{A}_1}$ to any subalgebra of $\hat{A}_1$ naturally extends the isomorphism to those subalgebras as well. 
			\end{proof}
			
			\begin{theorem}\label{thm:Igusa:1}
				Let $x,y\in\mathbb{S}[X,Y]$ be two polynomials of degree greater than two. We here use the notation of Lemma~\ref{lem:Igusa:3}.  If there exists a vector $\boldsymbol{\eta}\in\C^2$ such that that the following conditions apply:
				\begin{enumerate}[label =(\roman*)]
					\item $R_x(\boldsymbol{\eta})R_y(\boldsymbol{\eta})\neq0$,
					\item $\omega(\nabla R_x,\nabla R_y)(\boldsymbol{\eta})\neq0$,
					\item The set $\{\boldsymbol{J}(\nabla R_x)(\boldsymbol{\eta}),\boldsymbol{J}(\nabla R_y)(\boldsymbol{\eta}),\boldsymbol{\eta}\}$ is linearly dependent.
				\end{enumerate}
				Then, the two polynomials $x(a^\dagg,a)$ and $y(a^\dagg,a)$ generate an infinite-dimensional real Lie algebra which is explicitly given by $\g:=\lie{\{x(a^\dagg,a),y(a^\dagg,a)\}}\subseteq\hat{A}_1$.
			\end{theorem}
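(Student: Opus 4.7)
The strategy is to reduce the general statement to Lemma~\ref{lem:Igusa:5} via the symplectic change of variables produced by Lemma~\ref{lem:Igusa:3}, and then transport the resulting infinite-dimensional Lie algebra back through the isomorphism of Corollary~\ref{cor:extension:isomorphism:enveloping:algebra}. Concretely, I would proceed as follows.

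First, I would apply Lemma~\ref{lem:Igusa:3} to the pair $(x,y)$ and the vector $\boldsymbol{\eta}\in\C^2$ supplied by the hypotheses. Since conditions (i)--(iii) of the theorem match the hypotheses of that lemma verbatim, it yields a symplectic matrix $\boldsymbol{\sigma}=(\boldsymbol{\xi}|\boldsymbol{\eta})\in\operatorname{Sp}(2,\C)$ together with coefficients $a_0,a_1,b_0,b_1\in\C$ such that
\begin{align*}
	\sigma R_x(a^\dagg,a)&\upto{d_x}a_0a^{d_x}+a_1a^\dagg a^{d_x-1}+\mathcal{O}\!\left((a^\dagg)^2a^{d_x-2}+\ldots+(a^\dagg)^{d_x}\right),\\
	\sigma R_y(a^\dagg,a)&\upto{d_y}b_0a^{d_y}+b_1a^\dagg a^{d_y-1}+\mathcal{O}\!\left((a^\dagg)^2a^{d_y-2}+\ldots+(a^\dagg)^{d_y}\right),
\end{align*}
with $a_0b_0\neq 0$ and $d_xa_0b_1-d_ya_1b_0\neq 0$, i.e.\ $\delta(\sigma x,\sigma y)\neq 0$ in the notation of Definition~\ref{def:delta:function:appendix}.

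Next, I would promote this from reductions to the full polynomials. Lemma~\ref{lem:Igusa:1}(a) gives the crucial identity $R_{\sigma x}=\sigma R_x$ and likewise for $y$. Therefore $\sigma x$ and $\sigma y$, viewed as elements of $\C[X,Y]$, satisfy precisely the hypotheses of Lemma~\ref{lem:Igusa:5}: their reductions have the correct leading-monomial shape, $a_0b_0\neq 0$, and $\delta(\sigma x,\sigma y)\neq 0$. Invoking that lemma yields that the Lie algebra $\tilde{\g}:=\lie{\{(\sigma x)(a^\dagg,a),(\sigma y)(a^\dagg,a)\}}$ contains a sequence of nested commutators of strictly increasing degree, and is therefore infinite dimensional.

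Finally, to pull this conclusion back to the original polynomials, I would invoke Corollary~\ref{cor:extension:isomorphism:enveloping:algebra} with $\boldsymbol{\lambda}=0$: the symplectic matrix $\boldsymbol{\sigma}$ induces a Lie-algebra isomorphism $\Phi$ of (the relevant subalgebra of) the complex Weyl algebra $A_1$ onto its image, and by construction $\Phi(x(a^\dagg,a))=(\sigma x)(a^\dagg,a)$ and $\Phi(y(a^\dagg,a))=(\sigma y)(a^\dagg,a)$. Since Lie-algebra isomorphisms preserve dimension, $\g:=\lie{\{x(a^\dagg,a),y(a^\dagg,a)\}}$ is isomorphic to $\tilde{\g}$ and is therefore also infinite dimensional. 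To obtain a real infinite-dimensional subalgebra of $\hat{A}_1$ (rather than only of $A_1$), I would note that polynomials of pairwise distinct degrees are linearly independent over $\R$ as well as over $\C$; the chain of elements produced inside $\tilde{\g}$ by Lemma~\ref{lem:Igusa:5} has strictly increasing degree, and the same holds for its $\Phi$-preimage in $\g\subseteq\hat{A}_1$, which settles the real case.

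The main subtlety, and the step I would be most careful about, is the compatibility between the symplectic transformation and the skew-hermitian setting: by Proposition~\ref{prop:not:all:symplectic:matrices:generate:automorphisms:on:skew:hermitian:algebra}, $\Phi$ need not preserve $\hat{A}_1$, so $\tilde{\g}$ may live only in $A_1$. The point is that one does not need skew-hermiticity for the chain construction of Lemma~\ref{lem:Igusa:5}, only for the final conclusion about $\g$; and since $\g$ was a real subalgebra of $\hat{A}_1$ by hypothesis on $x,y\in\mathbb{S}[X,Y]$, the isomorphism $\Phi|_\g$ transfers the infinite-dimensionality back to $\hat{A}_1$ without issue.
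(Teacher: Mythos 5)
Your proposal is correct and follows essentially the same route as the paper's own proof: Lemma~\ref{lem:Igusa:3} (together with Lemma~\ref{lem:Igusa:1}) produces the symplectic matrix and the leading-order expansions with $a_0b_0\neq0$ and $\delta(\sigma x,\sigma y)\neq0$, Lemma~\ref{lem:Igusa:5} then gives infinite dimensionality of the transformed algebra, and Corollary~\ref{cor:extension:isomorphism:enveloping:algebra} transports this back to $\g$. Your closing remark that the induced isomorphism need not preserve $\hat{A}_1$ (cf.\ Proposition~\ref{prop:not:all:symplectic:matrices:generate:automorphisms:on:skew:hermitian:algebra}) but that this is harmless because only the dimension of $\g$ matters is a detail the paper leaves implicit, and it is handled correctly.
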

			
			\begin{proof}
				Let $x$ be $y$ be defined as in the statement. By Lemma~\ref{lem:Igusa:1} and Lemma~\ref{lem:Igusa:3}, there exists a symplectic matrix $\boldsymbol{\sigma}\in\operatorname{Sp}(2,\C)$ such that the transformed polynomials satisfy the following expansions:
				\begin{align*}
					\sigma R_x(a^\dagg,a)&\upto{d_x}R_{\sigma x}(a^\dagg,a)\upto{d_x}a_0a^{d_x}+a_1a^\dagg a^{d_x-1}+\mathcal{O}\left((a^\dagg)^2a^{d_x-2}+\ldots+(a^\dagg)^{d_x}\right),\\
					\sigma R_y(a^\dagg,a)&\upto{d_y}R_{\sigma y}(a^\dagg,a)\upto{d_y}b_0a^{d_x}+b_1a^\dagg a^{d_x-1}+\mathcal{O}\left((a^\dagg)^2a^{d_x-2}+\ldots+(a^\dagg)^{d_x}\right),
				\end{align*}
				where $a_0b_0\neq0$ and $d_xa_0b_1-d_ya_1b_0\neq0$. According to Corollary~\ref{cor:extension:isomorphism:enveloping:algebra}, the symplectic transformation matrix $\boldsymbol{\sigma}$ induces a Lie-algebra isomorphism. Therefore the Lie algebra generated by $x(a^\dagg,a)$ and $y(a^\dagg,a)$ is isomorphic to the Lie algebra generated by the transformed elements $\sigma x(a^\dagg,a)$ and $\sigma y(a^\dagg,a)$. Since these transformed elements satisfy the conditions of Lemma~\ref{lem:Igusa:5}, it follows that the Lie algebra they generate is infinite dimensional. Hence, the original Lie algebra $\g:=\lie{\{x(a^\dagg,a),y(a^\dagg,a)\}}$ is also infinite dimensional.
			\end{proof}

			\section{Realizability of Lie algebras with compact Lie groups\label{App:Realizability:Recap}}
			
			This section is dedicated to providing a contextualized and comprehensive proof of the claim that no semisimple Lie algebra of rank $n$ or higher, associated with a compact Lie group, can be realized within the real $n$-mode skew-hermitian Weyl algebra $\hat{A}_n$. While this result was originally proven in \plainrefs{Joseph:1972}, the presentation there may pose challenges for some readers. In particular, the use of the same symbol for both the number of modes and the dimension of the Lie algebra could lead to ambiguity.  Additionally, the argument presumes a high level of familiarity with the subject, occasionally invoking properties without detailed justification. The proof also relies on an inductive approach, though the inductive steps are not explicitly carried out.
			
			We take this as our starting point to provide here a complete and contextualized proof that can be understood even by readers who are not deeply familiar with the topic.
			
			\begin{theorem}[Theorem 4.3 from \plainrefs{Joseph:1972} (simplified)]\label{thm:Casimir:Element:identity}
				Let $\g$ be a semisimple Lie algebra of rank $n$ with realization in the $n$-mode complex Weyl algebra $A_n$. Then the Casimir elements of the enveloping algebra $U(\g)$ are all constant multiples of the identity.
			\end{theorem}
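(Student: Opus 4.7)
The plan is to reduce the claim to a statement about the center of the Weyl skew field and then exploit the rank hypothesis via the root space decomposition. Let $\pi: U(\g) \to A_n$ denote the realization (extended algebraically from $\g$), and for a Casimir $c \in Z(U(\g))$ write $\tilde{c} := \pi(c)$. Since $c$ commutes with every element of $U(\g)$, the element $\tilde{c}$ commutes with every element of $\pi(\g) \subseteq A_n$. The target is to show $\tilde{c} \in \mathbb{C}\cdot 1$, so the strategy is to argue that the hypothesis ``$\g$ has rank $n$ and acts in $A_n$'' forces the centralizer of $\pi(\g)$ in $A_n$ to contain only scalars.

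First I would pass from $A_n$ to its Ore skew field of fractions $D_n$ (the Weyl field), whose center is well known to be $\mathbb{C}$. It therefore suffices to show that $\tilde{c}$ lies in the center of $D_n$. Next, I would fix a Cartan subalgebra $\gh \subseteq \g$ with $\dim \gh = n$. The images $\pi(h)$ for $h\in\gh$ form an $n$-dimensional commutative subspace of $A_n$. Using the fact that $A_n$ has Gelfand--Kirillov dimension $2n$ and that its maximal commutative subalgebras have GK dimension exactly $n$, I would verify that $\pi(\gh)$ can be chosen to be algebraically independent, so that its centralizer $C_{D_n}(\pi(\gh))$ is itself a maximal commutative subfield of $D_n$ of transcendence degree $n$ over $\mathbb{C}$. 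In particular, $\tilde{c}$ is a rational function of the commuting operators $\pi(h_1), \ldots, \pi(h_n)$.

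Next I would bring in the root-space decomposition $\g = \gh \oplus \bigoplus_{\alpha \in \Phi} \g_\alpha$. The adjoint action of $\gh$ on $A_n$ (and on $D_n$) induces a weight decomposition; the fact that $\tilde{c}$ centralizes $\pi(\gh)$ means $\tilde{c}$ is of weight zero. Moreover, $\tilde{c}$ must commute with $\pi(e_\alpha)$ and $\pi(f_\alpha)$ for every root $\alpha$. Weight-zero elements of $D_n$ killed by both raising and lowering operators behave like $\pi(\g)$-invariants; combining this with the weight analysis, I would show that such an element must be independent of the $\pi(h_i)$ coordinates, i.e. a scalar. Concretely, if $\tilde{c} = F(\pi(h_1),\ldots,\pi(h_n))$ is a rational function, then $[\pi(e_\alpha), \tilde{c}] = 0$ forces $F$ to be annihilated by the shift operators corresponding to each root $\alpha$, and since the roots span $\gh^*$, this forces $F$ to be constant.

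The hard part will be the rigorous execution of the step above: proving that the only weight-zero invariants of the $\pi(\g)$-action inside the Weyl field $D_n$ are scalars when $\mathrm{rank}(\g) = n$. The subtlety is that root vectors do not act as simple translations on $C_{D_n}(\pi(\gh))$ --- the conjugation $[\pi(e_\alpha), \cdot]$ is only a derivation, and the commutativity $[\tilde{c}, \pi(e_\alpha)] = 0$ translates into a system of differential constraints on the rational function $F$. I would expect to resolve this by an induction on $\mathrm{rank}(\g)$: restricting to the Levi subalgebra associated with each simple root reduces the problem to lower rank in a Weyl subalgebra of fewer modes, with the base case $n=1$ handled directly (essentially by Proposition~\ref{prop:center:of:A1} together with the fact that the commutant of a single non-scalar element in $A_1$ is a polynomial ring in that element, so the only element commuting with a full $\slR{2}$ worth of operators is scalar). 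The matching of ranks between $\g$ and the ambient Weyl algebra is precisely what makes this induction carry through without the extra ``degrees of freedom'' that would allow non-scalar invariants to survive.
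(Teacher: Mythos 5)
Your high-level target (show that the image $\tilde{c}$ of the Casimir under the realization is a scalar) is correct, but two of the load-bearing steps do not hold as stated. First, the claim that the centralizer of $\pi(\gh)$ in the Weyl skew field is a maximal commutative subfield with $\tilde{c}$ a \emph{rational function} of $\pi(h_1),\dots,\pi(h_n)$ is unsupported: you have not shown that the $\pi(h_i)$ are algebraically independent (without this, an algebraic dependence among $n{+}1$ commuting elements may fail to involve $\tilde{c}$ at all), the abelian-centralizer property genuinely fails in the multi-mode Weyl algebra (cf.\ Example~\ref{exa:abelian:centralizer:condition:two:modes}), and even granting the transcendence-degree bound you only obtain that $\tilde{c}$ is \emph{algebraic} over $\C(\pi(h_1),\dots,\pi(h_n))$, not rational in them, which breaks the subsequent shift-periodicity argument. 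Second, the proposed induction on rank via Levi subalgebras does not reduce the number of modes: a rank-$(n-1)$ subalgebra still sits inside $A_n$, where the rank-equals-modes hypothesis is no longer tight, and non-scalar Casimirs do survive in that regime --- the Schwinger-type realization of $\su{2}$ (rank one) in the two-mode algebra of Example~\ref{exa:su:realization} has a non-scalar quadratic Casimir built from the total number operator. So both the inductive step and the way you intend to use the base case fail.

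The paper's proof takes a genuinely different and more robust route. It invokes Joseph's Theorem 4.1 to produce a $2n$-dimensional subalgebra spanned by commuting $h_1,\dots,h_n$ and commuting $e_1,\dots,e_n$ with $[h_j,e_k]=\alpha_{jk}e_k$, $\alpha_{jk}\in\mathbb{Q}$ and $\det(\boldsymbol{\alpha})\neq 0$; it then applies the algebraic-dependence result (Joseph's Theorem 3.3) to the $n{+}1$ pairwise commuting elements $e_1,\dots,e_n,C$, and finally hits the resulting relation with $\operatorname{ad}_{h_k}^{\ell}$ for every $\ell$. Linear independence of the PBW monomials forces all exponent tuples appearing in the relation to share the same eigenvalue $\sum_p j_p\alpha_{kp}$, which yields a nonzero integer vector in the kernel of $\boldsymbol{\alpha}$ and contradicts $\det(\boldsymbol{\alpha})\neq 0$. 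This eigenvalue-separation trick is precisely what substitutes for the algebraic-independence and rationality claims your argument would otherwise need; without it, or an equivalent mechanism, your proposal has a genuine gap.
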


			\begin{proof}
				First, we note that Theorem 4.1 from \plainrefs{Joseph:1972} guarantees that any semisimple Lie algebra $\g$ of rank $n$ admits a subalgebra $\gh$ of dimension $2n$ with basis $\{h_1,\ldots,h_n,e_1,\ldots,e_n\}$ satisfying
				\begin{align}\label{eqn:appendix:subalgebra:semisimple:rank:n:algebra}
					[h_j,h_k]&=0,\;&\;[e_j,e_k]&=0,\;&\;[h_j,e_k]&=\alpha_{jk}e_k,
				\end{align}
				where $\det(\boldsymbol{\alpha})\neq0$ and $\alpha_{jk}\in\mathbb{Q}$ for all $j,k\in\{1,\ldots,n\}$. Let $C\in U(\g)$ be a Casimir element of the universal enveloping algebra $U(\g)$. It must, by definition, belong to the center of the enveloping algebra $U(\g)$ and consequently satisfy $[x,C]=0$ for all $x\in\g$. Assume now that $C$ is non-trivial. Then, by Theorem 3.3 from \plainrefs{Joseph:1972}, there exist complex and non-vanishing coefficients $c_{j_1,\ldots,j_{n+1}}$, such that:
				\begin{align*}
					u=\sum_{j_1,\ldots,j_{n+1}}c_{j_1,\ldots,j_{n+1}}e_1^{j_1}\ldots e_n^{j_n}C^{j_{n+1}}=0,
				\end{align*}
				It is straightforward to verify that
				\begin{align*}
					[h_k,e_1^{j_1}\ldots e_n^{j_n}]=\sum_{p=1}^n j_p\alpha_{kp} e_1^{j_1}\ldots e_n^{j_n}.
				\end{align*}
				Therefore, for any $\ell\in\N_{\geq1}$, one has 
				\begin{align*}
					\operatorname{ad}_{h_k}^\ell(u)=\sum_{j_1,\ldots,j_{n+1}}c_{j_1,\ldots,j_{n+1}}\left(\sum_{p=1}^n j_p\alpha_{ kp}\right)^\ell e_1^{j_1}\ldots e_n^{j_n}C^{j_{n+1}}=0,
				\end{align*}
				since $C$ commutes with any element in $\g$. It then follows that the quantity $\sum_{p=1}^n j_p\alpha_{ kp}$  must be a constant for all $n$-tuples of indices $(j_1,\ldots,j_n)$ appearing in the sum above, since the expression for $u$ must consist of two or more terms that are linearly independent by the Poincaré-Birkhoff-Witt Theorem \plainrefs{PBW:Thm}. Thus, there exist at least two distinct integer-$n$-tuple $(j_1,\ldots,j_n)$ and $(j_1',\ldots,j_n')$ such that
				\begin{align*}
					\sum_{p=1}^nj_p\alpha_{kp}=\sum_{p=1}^nj_p'\alpha_{kp}
				\end{align*}
				for all $k\in\{1,\ldots,n\}$. Defining, $\hat{j}_p:=j_p-j_p'$, we obtain an nonzero integer-$n$-tuple $(\hat{j}_1,\ldots,\hat{j}_n)$ satisfying:
				\begin{align*}
					\sum_{p=1}^n\hat{j}_p\alpha_{kp}=0.
				\end{align*}
				This implies that the rows of the matrix $\boldsymbol{\alpha}$ are linearly dependent. In particular, there exists at least one index $k'$ such that either $\alpha_{k'p}=0$ for all $p$, or 
				\begin{align*}
					\alpha_{k'p}=\frac{1}{\hat{j}_{k'}}\sum_{\underset{p=1}{p\neq k'}}^n\hat{j}_p\alpha_{k'p}.
				\end{align*}
				In either case, the matrix $\boldsymbol{\alpha}$ has linearly dependent column vectors, implying $\det(\boldsymbol{\alpha})=0$, which contradicts the assumption $\det(\boldsymbol{\alpha})\neq0$. The Casimir element $C$ must therefore be trivial.
			\end{proof}

			\begin{theorem}[Theorem 4.2 from \plainrefs{Joseph:1972}]\label{them:Joseph:4:2:}
				Let $\g$ be a semisimple Lie algebra of rank $r>n$. Then $\g$ has no realizations in the complex $n$-mode Weyl algebra $A_n$.
			\end{theorem}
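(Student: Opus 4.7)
The plan is to argue by contradiction and to rerun the Casimir-element calculation from the proof of Theorem~\ref{thm:Casimir:Element:identity} one level deeper, at the level of the rank-$r$ subalgebra itself. Suppose there exists a realization $\Phi\colon\g\to A_n$ of a semisimple Lie algebra $\g$ of rank $r>n$. Invoking Theorem~4.1 of \plainrefs{Joseph:1972}, I would pick a $2r$-dimensional subalgebra $\gh\subseteq\g$ with basis $\{h_1,\ldots,h_r,e_1,\ldots,e_r\}$ satisfying
\begin{align*}
[h_j,h_k]=0,\qquad [e_j,e_k]=0,\qquad [h_j,e_k]=\alpha_{jk}e_k,
\end{align*}
with $\det(\boldsymbol{\alpha})\neq 0$ and $\alpha_{jk}\in\mathbb{Q}$, and transport this structure to $A_n$ via $\Phi$. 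This produces $r$ mutually commuting elements $\Phi(e_1),\ldots,\Phi(e_r)$ in the complex $n$-mode Weyl algebra, on which the $r$ mutually commuting $\Phi(h_k)$ act semisimply with the same non-degenerate weight matrix $\boldsymbol{\alpha}$.

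The second step is to exploit the fact that $r>n$ exceeds the number of algebraically independent generators one can fit into a commutative subalgebra of $A_n$. Applying Theorem~3.3 of \plainrefs{Joseph:1972} to the commuting family $\Phi(e_1),\ldots,\Phi(e_r)$, I obtain a nontrivial polynomial identity of the form
\begin{align*}
u:=\sum_{(j_1,\ldots,j_r)\in S} c_{j_1,\ldots,j_r}\,\Phi(e_1)^{j_1}\cdots \Phi(e_r)^{j_r}=0,
\end{align*}
where $S$ is a finite, non-empty index set consisting of at least two distinct tuples and every $c_{j_1,\ldots,j_r}$ is nonzero. I would then apply $\operatorname{ad}_{\Phi(h_k)}$ repeatedly to this identity. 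Using the eigenvalue relation $[\Phi(h_k),\Phi(e_p)]=\alpha_{kp}\Phi(e_p)$, iteration produces, for each $\ell\geq 1$ and each $k\in\{1,\ldots,r\}$,
\begin{align*}
0=\operatorname{ad}_{\Phi(h_k)}^{\ell}(u)=\sum_{(j_1,\ldots,j_r)\in S}c_{j_1,\ldots,j_r}\Bigl(\sum_{p=1}^{r}j_p\alpha_{kp}\Bigr)^{\ell}\Phi(e_1)^{j_1}\cdots\Phi(e_r)^{j_r}.
\end{align*}

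The third step is to extract the contradiction exactly as in the proof of Theorem~\ref{thm:Casimir:Element:identity}. A Vandermonde-type argument over $\ell$ forces the weights $\sum_p j_p\alpha_{kp}$ to take a common value on all tuples in $S$ for each $k$, so subtracting any two distinct tuples in $S$ produces a nonzero integer vector $\hat{\boldsymbol{j}}\in\mathbb{Z}^r$ with $\boldsymbol{\alpha}\hat{\boldsymbol{j}}=\boldsymbol{0}$, contradicting $\det(\boldsymbol{\alpha})\neq 0$ and completing the argument.

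The main obstacle will be the careful invocation of Theorem~3.3 of \plainrefs{Joseph:1972}: it must yield a relation $u=0$ whose monomials in the $\Phi(e_p)$ are pairwise distinct and have nonzero coefficients, so that the PBW-style linear independence argument underlying the last step actually goes through. A plausible subtlety is that one may first need to replace the $\Phi(e_p)$ by a maximal algebraically independent sub-family (reducing $r$ to the effective number of independent commuting generators available in $A_n$), and then verify that the restriction of $\boldsymbol{\alpha}$ to the corresponding rows and columns is still non-singular; this appears to be where the strict inequality $r>n$ is actually used to manufacture a relation one level shorter than the one available in the rank-matched case.
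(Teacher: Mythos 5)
Your proposal is correct and follows essentially the same route as the paper: both replace the Casimir factor $C^{j_{n+1}}$ in the argument of Theorem~\ref{thm:Casimir:Element:identity} by the extra commuting generators $e_{n+1},\ldots,e_r$ supplied by the rank-$r$ subalgebra, invoke the algebraic dependence of $r>n$ commuting elements of $A_n$ (Theorem~3.3 of \plainrefs{Joseph:1972}) to get a nontrivial relation, and then use iterated $\operatorname{ad}_{h_k}$ to force a nonzero integer vector into the kernel of $\boldsymbol{\alpha}$, contradicting $\det(\boldsymbol{\alpha})\neq 0$. The closing worry about passing to an algebraically independent subfamily is unnecessary for this argument, since the relation produced by Theorem~3.3 already suffices.
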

			
			\begin{proof}
				This follows by an argument analogous to that of Theorem~\ref{thm:Casimir:Element:identity}. If a semisimple Lie algebra of rank $r>n$ were realizable in $A_n$, then it would admit a subalgebra of dimension $2r>2n$ with the same structural properties as those captured in equation \eqref{eqn:appendix:subalgebra:semisimple:rank:n:algebra}. One can therefore replace $C^{j_{n+1}}$ with the appropriate product $e_{n+1}^{j_{n+1}}\ldots e_r^{j_r}$. From that point onward, retracing the remaining steps of the proof of Theorem~\ref{thm:Casimir:Element:identity} again leads to a contradiction, as the linear independence of the monomials implies $\det(\boldsymbol{\alpha})=0$. Hence, such a realization is not possible.
			\end{proof}

			\begin{theorem}\label{thm:compact:Lie:group:negative:definite:Killing:form}
				If $G$ is a compact Lie group with Lie algebra $\g$, then the Killing form on $\g$ is negative semidefinite. If $\g$ is furthermore semisimple, then the Killing form is negative definite.
			\end{theorem}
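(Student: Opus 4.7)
The plan is to exploit compactness of $G$ in the standard way: averaging over the Haar measure to produce an $\Ad$-invariant inner product on $\g$. Specifically, if $G$ is compact then its Haar measure $\mu$ is finite, and given any positive-definite inner product $\langle\cdot,\cdot\rangle_0$ on $\g$ (which exists since $\g$ is a finite-dimensional real vector space), I would define
\begin{align*}
\langle X,Y\rangle:=\int_G\langle \Ad(g)X,\Ad(g)Y\rangle_0\,\mathrm{d}\mu(g).
\end{align*}
Finiteness of the measure together with continuity of $\Ad$ ensures this integral converges and defines a positive-definite inner product, and its construction makes it $\Ad(G)$-invariant in the sense that $\langle\Ad(h)X,\Ad(h)Y\rangle=\langle X,Y\rangle$ for all $h\in G$.

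Next, I would differentiate the invariance identity $\langle \Ad(\exp(tZ))X,\Ad(\exp(tZ))Y\rangle=\langle X,Y\rangle$ at $t=0$ for $Z\in\g$. Using the standard identity $\tfrac{\mathrm{d}}{\mathrm{d}t}\Ad(\exp(tZ))|_{t=0}=\operatorname{ad}(Z)$, this yields $\langle\operatorname{ad}(Z)X,Y\rangle+\langle X,\operatorname{ad}(Z)Y\rangle=0$, i.e., $\operatorname{ad}(Z)$ is skew-symmetric with respect to $\langle\cdot,\cdot\rangle$ for every $Z\in\g$. Consequently, relative to an orthonormal basis with respect to $\langle\cdot,\cdot\rangle$, the matrix of $\operatorname{ad}(Z)$ is real and antisymmetric, hence its square is symmetric and negative semidefinite. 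Taking traces gives
\begin{align*}
K(Z,Z)=\tr(\operatorname{ad}(Z)\operatorname{ad}(Z))=-\tr(\operatorname{ad}(Z)^{\!\top}\operatorname{ad}(Z))\leq 0,
\end{align*}
so the Killing form $K$ is negative semidefinite on $\g$.

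For the second part, I would invoke Cartan's criterion for semisimplicity, which states that a finite-dimensional Lie algebra $\g$ is semisimple if and only if its Killing form is non-degenerate. If $\g$ is semisimple then $K$ is non-degenerate, and combined with negative semidefiniteness (just proved), strict negativity $K(Z,Z)<0$ for every nonzero $Z\in\g$ follows: indeed, if $K(Z,Z)=0$ for some $Z\neq 0$, the Cauchy–Schwarz inequality for the negative semidefinite form $-K$ would force $K(Z,W)=0$ for all $W\in\g$, contradicting non-degeneracy. No step here is a serious obstacle; the main subtlety is simply justifying the existence of an $\Ad$-invariant inner product via Haar averaging, which relies precisely on compactness of $G$ (finiteness of $\mu(G)$), and this is the single ingredient that distinguishes the compact case from the general one.
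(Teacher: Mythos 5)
Your proposal is correct and follows essentially the same route as the paper: the paper cites Corollary 4.26 of Knapp for negative semidefiniteness (whose standard proof is precisely your Haar-averaging and skew-symmetry argument) and then upgrades to negative definiteness via Cartan's criterion exactly as you do. The only difference is that you spell out the details delegated to the citation, including the Cauchy--Schwarz step that the paper leaves implicit.
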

			
			\begin{proof}
				The first statement is Corollary 4.26 from \plainrefs{Knapp:1996}. The second statement can be proven analogously. One simply recalls that, by Cartans's criterion, a Lie algebra is semisimple if and only if its Killing form is non-degenerate \plainrefs{Dieudonne:1953}. This implies that $\mathrm{Tr}((\operatorname{ad}_X)^2)\leq 0$ for all $X\in\g$, used in the final step of the proof of Corollary 4.26 in \plainrefs{Knapp:1996}, can be replaced by the stronger condition $\mathrm{Tr}((\operatorname{ad}_X)^2)<0$ for all $X\in\g$, thereby showing that the Killing form on a semisimple Lie algebra associated with a compact Lie group is negative definite.
			\end{proof}
			
			\begin{theorem}\label{thm:form:of:quadratic:Casimir:element}
				Let $\g$ be a semisimple and finite-dimensional Lie algebra with basis $\{x_j\}_{j=1}^N$. Then the element 
				\begin{align}
					C=\sum_{j,k=1}^N\kappa_{jk} x_jx_k
				\end{align}
				is a Casimir element of $\g$ in the universal enveloping algebra $U(\g)$, where $\kappa_{jk}$ is the inverse matrix of the matrix associated with the Killing form.
			\end{theorem}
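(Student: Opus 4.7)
The plan is to verify directly that $C$ commutes with every basis element $x_\ell$ of $\g$, since this implies $C$ lies in the center of $U(\g)$ (every element of $U(\g)$ is a polynomial in the $x_j$, and commuting with each $x_\ell$ extends via the Leibniz rule). First I would observe that the matrix $K_{jk}:=\kappa(x_j,x_k)$ associated with the Killing form is invertible by Cartan's criterion for semisimplicity, so that its inverse $\kappa_{jk}$ (satisfying $\sum_p K_{jp}\kappa_{pk}=\delta_{jk}$) is well-defined and the statement makes sense.

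The main tool is the invariance (associativity) of the Killing form, $\kappa([x_\ell,x_j],x_k)=-\kappa(x_j,[x_\ell,x_k])$, which follows from the cyclic property of the trace defining $\kappa$. Writing the structure constants as $[x_\ell,x_j]=\sum_m f^m_{\ell j}x_m$ and defining the lowered structure constants
\begin{align*}
f_{\ell j k}:=\sum_m f^m_{\ell j}K_{mk},
\end{align*}
invariance translates to $f_{\ell j k}=-f_{\ell k j}$, which combined with the usual antisymmetry $f_{\ell j k}=-f_{j\ell k}$ shows that $f_{\ell j k}$ is totally antisymmetric in all three indices.

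Next I would expand the commutator using the standard identity $[A,BC]=[A,B]C+B[A,C]$:
\begin{align*}
[x_\ell,C]=\sum_{j,k,m}\kappa_{jk}\bigl(f^m_{\ell j}\,x_m x_k+f^m_{\ell k}\,x_j x_m\bigr).
\end{align*}
After renaming summation indices so that both terms present coefficients attached to the same ordered product $x_j x_k$, and re-expressing each $f^m_{\ell n}$ via $f^m_{\ell n}=\sum_p f_{\ell n p}\kappa_{pm}$, the coefficient of $x_j x_k$ becomes
\begin{align*}
\sum_{m,p}\bigl(\kappa_{mk}\kappa_{pj}\,f_{\ell m p}+\kappa_{jm}\kappa_{pk}\,f_{\ell m p}\bigr).
\end{align*}
Swapping the dummies $m\leftrightarrow p$ in the first term and factoring yields $\sum_{m,p}\kappa_{jm}\kappa_{pk}(f_{\ell p m}+f_{\ell m p})$, which vanishes by the antisymmetry of $f_{\ell m p}$ in its last two indices. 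Hence $[x_\ell,C]=0$ for every $\ell$, proving $C\in\mathcal{Z}(U(\g))$.

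The only potential obstacle is keeping the index gymnastics transparent, in particular tracking which indices are "raised" by $\kappa_{jk}$ and which are "lowered" by $K_{jk}$, and ensuring that the renaming used to match the two sums is performed consistently before invoking the total antisymmetry. Once that bookkeeping is set up carefully, the cancellation is immediate and the proof is complete.
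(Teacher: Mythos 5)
Your proposal is correct and follows essentially the same route as the paper: direct verification that $[x_\ell,C]=0$ via the invariance identity $B([a,b],c)=B(a,[b,c])$ of the Killing form, with the cancellation coming from antisymmetry of the lowered structure constants. Your formulation in terms of the totally antisymmetric $f_{\ell jk}$ matches the paper's boxed ``alternative proof'' (which uses raised/lowered indices) almost exactly, while the paper's main proof performs the same cancellation by substituting the identity $f_{j\ell k}=-\sum_{p,q}\kappa_{kp}f_{p\ell q}b_{jq}$ directly.
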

			
			\begin{proof}
				Let $\g$ be a semisimple Lie algebra of dimension $N$ with basis $\{x_j\}_{j=1}^N$, and let $B$ denote its Killing form, defined on the basis elements by $B(x_j,x_k)=b_{jk}$. Since $\g$ is semisimple, the Killing form is, by Cartan's Criterion, non-degenerate \plainrefs{Dieudonne:1953}. Hence, there exists a matrix $\boldsymbol{\kappa}$ such that $\sum_{\ell=1}^N \kappa_{j\ell}b_{jk}=\delta_{jk}$. Because the Killing form is symmetric, i.e., $b_{jk}=b_{kj}$, its inverse $\kappa_{jk}$ is also symmetric. This can be verified that for any invertible symmetric $n\times n$-matrix $A$:
				\begin{align*}
					A^{-1}=\mathds{1}A^{-1}=(AA^{-1})^{\Tp}A^{-1}=(A^{-1})^{\Tp}A^{\Tp}A^{-1}=(A^{-1})^{\Tp}AA^{-1}=(A^{-1})^{\Tp}.
				\end{align*}
				Now consider the element
				\begin{align}
					C=\sum_{j,k=1}^N\kappa_{jk}x_jx_k\in U(\g).
				\end{align}
				We compute its commutator with an arbitrary basis element $x_\ell\in\g$:
				\begin{align*}
					[C,x_\ell]&=\sum_{j,k=1}^N[\kappa_{jk}x_jx_k,x_\ell]=\sum_{j,k=1}^N\kappa_{jk}\left(x_k[x_j,x_\ell]+[x_k,x_\ell]x_j\right).
				\end{align*}
				Using the symmetry of $\kappa_{jk}$ and relabeling indices, we obtain:
				\begin{align*}
					[C,x_\ell]&=\sum_{j,k=1}^N\kappa_{jk}\left(x_k[x_j,x_\ell]+[x_j,x_\ell]x_k\right)=\sum_{j,k,p=1}^N\kappa_{jk}f_{j\ell p}(x_kx_p+x_px_k),
				\end{align*}
				where $f_{j\ell p}$ are the structure constants of $\g$ determined by $[x_j,x_k]=\sum_{\ell=1}^Nf_{jk\ell}x_\ell$. Noting that $k$ and $p$ are not free indices, allows again to obtain by relabeling:
				\begin{align*}
					[C,x_\ell]&=\sum_{j,k,p=1}^N\left(\kappa_{jk}f_{j\ell p}+\kappa_{jp}f_{j\ell k}\right)x_kx_p.
				\end{align*}
				We can now use the fact that the Killing form satisfies the identity $B([a,b],c)=B(a,[b,c])$ for all $a,b,c\in\g$. This can be shown by recalling that the Killing form is typically defined as \plainrefs{Procesi:2006}:  
				\begin{align*}
					B:\g\times\g\to\C,
					\qquad
					\text{via its action}
					\qquad
					(x,y)\mapsto B(x,y):=\operatorname{Tr}(\operatorname{ad}_{x}\circ\operatorname{ad}_{y}).
				\end{align*}
				First, observe that by the Jacobi identity:
				\begin{align*}
					\operatorname{ad}_{[x,y]}(\cdot)=[[x,y],\cdot]=-[[y,\cdot],x]-[[\cdot,x],y]=[x,[y,\cdot]]-[y,[x,\cdot]]=\operatorname{ad}_x(\operatorname{ad}_y(\cdot))-\operatorname{ad}_y(\operatorname{ad}_x(\cdot)).
				\end{align*}
				Using the cyclic property of linearity of the trace, as well es the bilinearity of the commutator, one obtains:
				\begin{align*}
					B([x,y],z)&=-B([y,x],z)=-\operatorname{Tr}(\operatorname{ad}_{[y,x]}\circ\operatorname{ad}_z)=-\operatorname{Tr}(\operatorname{ad}_y\circ \operatorname{ad}_x\circ\operatorname{ad}_z)+\operatorname{Tr}(\operatorname{ad}_x\circ\operatorname{ad}_y\circ\operatorname{ad}_z)\\
					&=\operatorname{Tr}(\operatorname{ad}_x\circ\operatorname{ad}_y\circ\operatorname{ad}_z)-\operatorname{Tr}(\operatorname{ad}_x\circ\operatorname{ad}_z\circ\operatorname{ad}_y)=\operatorname{Tr}(\operatorname{ad}_x\circ\operatorname{ad}_{[y,z]})=B(x,[y,z]).
				\end{align*}
				Using this property, we find:
				\begin{align*}
					f_{j\ell k}&=\sum_{q=1}^Nf_{j\ell q}\delta_{qk}=\sum_{q=1}^N f_{j\ell q}\sum_{p=1}^N\kappa_{kp}b_{pq}=\sum_{p,q=1}^N\kappa_{j\ell q}\kappa_{kp}b_{qp}=\sum_{p,q=1}^NB(f_{j\ell q}x_q,\kappa_{kp}x_p)=\sum_{p=1}^N B([x_j,x_\ell],\kappa_{kp}x_p)\\
					&=-\sum_{p=1}^NB(x_j,[\kappa_{kp}x_p,x_\ell])=-\sum_{p,q=1}^N B(x_j,\kappa_{kp}f_{p\ell q}x_q)=-\sum_{p,q=1}^N\kappa_{kp}f_{p\ell q}b_{jq}.
				\end{align*}
				This allows us to calculate:
				\begin{align*}
					\sum_{j,k,r=1}^N\kappa_{jr}f_{j\ell k}x_k x_r&=-\sum_{j,k,r,p,q=1}^Nf_{p\ell q}\kappa_{kp}\kappa_{jr}b_{jq}x_kx_r=-\sum_{k,r,p,q=1}^Nf_{p\ell q}\kappa_{kp}\delta_{rq}x_kx_r\\
					&=-\sum_{k,q,p}^Nf_{p\ell q}\kappa_{kp}x_kx_q=-\sum_{j,k,r}^N\kappa_{jk}f_{j\ell r}x_kx_r.
				\end{align*}
				Hence, we conclude
				\begin{align*}
					[C,x_\ell]&=\sum_{j,k,r=1}^N\left(\kappa_{jk}f_{j\ell r}+\kappa_{jr}f_{j\ell k}\right)x_kx_r=0.
				\end{align*}
				Since this holds for any basis element $x_\ell\in\g$, it follows that $C$ is in the center of $U(\g)$.
			\end{proof}
			
			Note that this proof can be performed more elegantly using tensorial notation and Einstein's summation convention \plainrefs{Einstein:1916}. One simply needs to interpret $\kappa_{jk}$ as $b^{jk}$, where $b_{jk}$ is used to \textit{lower indices} and $b^{jk}$ is used to \textit{raise them}, analogously to the role of the metric tensor in differential geometry \plainrefs{Fecko:2006}.
			
			\vspace{0.5cm}
			
			\begin{tcolorbox}[breakable, colback=Cerulean!3!white,colframe=Cerulean!85!black,title=An alternative proof of Theorem~\ref{thm:form:of:quadratic:Casimir:element}]
				\begin{proof}
					Consider an $N$-dimensional semisimple Lie algebra $\g$ with basis $\{x_j\}_{j=1}^N$. Let the Killing form be denoted by $B(x_j,x_k):=\kappa_{jk}$. By Cartan's criterion, the Killing form is non-degenerate, so there exists an inverse of $\kappa_{jk}$, which we will denote by $\kappa^{jk}$. Using Einstein's summation convention, one has $\kappa^{j\ell}\kappa_{\ell k}={\delta^j}_k$. Since $\kappa_{jk}$ is symmetric, its inverse $\kappa^{jk}$ is also symmetric. We can now use these two elements to define index raising an lowering operations: $x^j:=\kappa^{jk}x_k$ and $x_j:=\kappa_{jk}x^k$. Let us compute the commutator:
					\begin{align*}
						[x^jx_j,y]&=x^j[x_j,y]+[x^j,y]x_j=x^j[x_j,y]+\kappa^{jk}[x_k,y]x_j=x^j[x_j,y]+[x_j,y]x^j.
					\end{align*}
					Now, write the commutator of $x_j$ and the arbitrary element $y\in\g$ as the linear combination $[x_j,y]={\alpha_j}^kx_k$. Substituting this into the expression above, we obtain:
					\begin{align}\label{eqn:second:proof:casimir:element:form}
						[x^jx_j,y]&={\alpha_j}^k(x^jx_k+x_kx^j)={\alpha_j}^k(\kappa^{j\ell}x_\ell x_k+\kappa^{j\ell}x_kx_\ell)=\alpha^{\ell k}(x_\ell x_k+x_k x_\ell)=(\alpha^{\ell k}+\alpha^{k\ell})x_kx_\ell.
					\end{align}
					Here, we simply relabeled contracted indices and used the symmetry of $\kappa^{jk}$. We compute now ${\alpha_j}^k$ using the Killing form:
					\begin{align*}
						{\alpha_j}^k&={\alpha_j}^\ell{\delta_\ell}^k=B({\alpha_j}^\ell x_\ell,x^k)=B([x_j,y],x^k)=-B(x_j,[x^k,y])=-B(x_j,\kappa^{k\ell}[x_\ell,y])\\
						&=-B(x_j,\kappa^{k\ell}{\alpha_\ell}^qa_q=-\kappa^{k\ell}{\alpha_\ell}^qB(x_j,\kappa_{qp}x^p)=-\kappa^{k\ell}{\alpha_\ell}^q\kappa_{qp}{\delta_j}^p=-{\alpha}^{kq}\kappa_{qj}=-{\alpha^k}_j.
					\end{align*}
					Raising the first index, we obtain: $\alpha^{jk}=\kappa^{j\ell}{\alpha_j}^k=\kappa^{j\ell}(-{\alpha^k}_\ell)=-\alpha^{kj}$. Hence $\alpha^{jk}$ is antisymmetric. In the computation above, we used the property $B([x,y],z)=B(x,[y,z])$ shown previously, as well as the observation $B(x_j,x^k)=B(x_j,\kappa^{k\ell}x_\ell)=\kappa_{k\ell}\kappa_{j\ell}={\delta_j}^k$. Since $\alpha^{jk}$ is anti-symmetric, equation \eqref{eqn:second:proof:casimir:element:form} vanishes, and this shows that $x^jx_j$ is an element of $\mathcal{Z}(U(\g))$, making it a Casimir element.
				\end{proof}
			\end{tcolorbox}
			
			It is worth noting that this proof proceeds in the opposite direction as that usually taken in the literature. Here, we begin by defining the Killing form and then demonstrate that the quadratic Casimir element lies in the center of $U(\g)$. Typically, one defines a Casimir element as an element of the center of $U(\g)$ and then derives the properties of the Killing form from it.
			
			\begin{corollary}\label{cor:quadratic:Casimir:element:form}
				Theorem~\ref{thm:form:of:quadratic:Casimir:element} holds true also for simple Lie algebras.
			\end{corollary}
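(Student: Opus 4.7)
The plan is to derive this corollary as an immediate consequence of Theorem~\ref{thm:form:of:quadratic:Casimir:element} by invoking the standard structural fact that every simple Lie algebra is semisimple. Concretely, I would begin by recalling that a simple Lie algebra $\g$ is, by definition, a non-abelian Lie algebra whose only ideals are $\{0\}$ and $\g$ itself. Consequently, the radical $\operatorname{rad}(\g)$, being an ideal, must equal either $\{0\}$ or $\g$; the latter is impossible because $\g$ is non-abelian while $\operatorname{rad}(\g)$ being all of $\g$ would force $\g$ to be solvable, and the derived series of a simple Lie algebra does not terminate (one readily verifies $[\g,\g]=\g$, since $[\g,\g]$ is an ideal of $\g$ that must be nonzero for a non-abelian algebra). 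Hence $\operatorname{rad}(\g)=\{0\}$, which is precisely the definition of semisimplicity.

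Once semisimplicity is established, Theorem~\ref{thm:form:of:quadratic:Casimir:element} applies verbatim: the Killing form $B$ is non-degenerate by Cartan's criterion, so its matrix $b_{jk}=B(x_j,x_k)$ admits an inverse $\kappa_{jk}$, and the element
\begin{align*}
    C=\sum_{j,k=1}^N\kappa_{jk}x_jx_k
\end{align*}
lies in the center of $U(\g)$ by the computation carried out in the proof of Theorem~\ref{thm:form:of:quadratic:Casimir:element}. No additional machinery is required beyond noting the implication ``simple $\Rightarrow$ semisimple.''

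There is no substantial obstacle here; the only subtlety worth explicitly addressing is justifying that the radical of a simple Lie algebra vanishes, which I would do in a single sentence by appealing to the ideal structure. The proof is thus essentially a one-line appeal to Theorem~\ref{thm:form:of:quadratic:Casimir:element}, prefaced by the remark that the class of simple Lie algebras is a subclass of the semisimple ones.
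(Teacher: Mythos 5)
Your proposal is correct and follows the same route as the paper, which simply observes that every simple Lie algebra is semisimple and then applies Theorem~\ref{thm:form:of:quadratic:Casimir:element}. The extra paragraph justifying that the radical of a simple Lie algebra vanishes is a harmless (and standard) elaboration of the implication the paper states in one line.
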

			
			\begin{proof}
				This follows directly from Theorem~\ref{thm:form:of:quadratic:Casimir:element}, since any simple algebra is also semisimple.
			\end{proof}
			
			\begin{theorem}[Theorem 4.4 from \plainrefs{Joseph:1972} for single mode (reduced)]
				Let $\g$ be a finite-dimensional semisimple Lie algebra of rank one with a diagonalizable Killing form. Then $\g$ has no realizations within the skew-hermitian Weyl algebra $\hat{A}_1$.
			\end{theorem}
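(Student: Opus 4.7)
The plan is to combine three ingredients already assembled in the appendix: the Casimir scalarity result (Theorem~\ref{thm:Casimir:Element:identity}), the explicit form of the quadratic Casimir built from the inverse Killing form (Corollary~\ref{cor:quadratic:Casimir:element:form}), and the skew-hermiticity structure of $\hat{A}_1$. Concretely, a rank-one semisimple Lie algebra has dimension three, and the hypothesis on the Killing form allows me to pick a basis $\{X_1,X_2,X_3\}$ in which $B$ is diagonal; in the substantive (compact) case all eigenvalues share one sign, and after rescaling one may write $B(X_j,X_k)=\epsilon\,\delta_{jk}$ with a definite $\epsilon\neq 0$. The quadratic Casimir is then $C=\epsilon^{-1}\sum_j X_j^2$. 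Assuming for contradiction a non-trivial realization $\phi:\g\to\hat{A}_1$, Theorem~\ref{thm:Casimir:Element:identity} forces $\phi(C)=\lambda\,\mathds{1}$, hence $\sum_j \phi(X_j)^2=\epsilon\lambda\,\mathds{1}$.

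Next I would exploit skew-hermiticity directly: each $\phi(X_j)$ is skew-hermitian, so $\phi(X_j)^2$ is hermitian with $\langle\psi|\phi(X_j)^2|\psi\rangle=-\|\phi(X_j)|\psi\rangle\|^2\leq 0$. Pairing the Casimir identity with any unit vector yields $\epsilon\lambda=-\sum_j\|\phi(X_j)|\psi\rangle\|^2$, which already rules out the case of indefinite sign on the left-hand side and, in the sign-definite case, rules out $\lambda=0$ (else each $\phi(X_j)$ vanishes, contradicting non-triviality).

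The crux is a degree/principal-symbol argument. Let $d=\max_j\deg(\phi(X_j))$; the case $d=0$ is immediate since three commuting scalars cannot obey the non-abelian rank-one relations, so $d\geq 1$. Pass to the associated graded of $A_1$, which is the commutative polynomial ring $\C[x,y]$, and let $p_j\in\C[x,y]$ be the homogeneous degree-$d$ principal symbol of $\phi(X_j)$ (so $p_j=0$ whenever $\deg(\phi(X_j))<d$, while at least one $p_j$ is non-zero). Because the graded product coincides with the Weyl product at top order, the top-degree piece of $\phi(X_j)^2$ is $p_j^2$. Since $\sum_j\phi(X_j)^2=\epsilon\lambda\,\mathds{1}$ has degree zero but the left-hand side has degree $2d>0$, the degree-$2d$ part must cancel, giving
\begin{equation*}
\sum_j p_j^2=0\qquad\text{in }\C[x,y].
\end{equation*}
Translating skew-hermiticity to the symbol level via $((a^\dagg)^\alpha a^\beta)^\dagg=(a^\dagg)^\beta a^\alpha$ (using Proposition~\ref{prop:help:commutator:q:p} so that reorderings do not affect the top order) yields the conjugation rule $c_{\beta\alpha}^\ast=-c_{\alpha\beta}$ on coefficients, which is equivalent to $p_j(\bar z,z)\in i\R$ for every $z\in\C$. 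Writing $p_j(\bar z,z)=iq_j(z,\bar z)$ with $q_j$ real-valued, the vanishing above reduces on the real slice $x=\bar y$ to $-\sum_j q_j^2\equiv 0$, forcing each $q_j\equiv 0$; thus $p_j$ vanishes on $\{x=\bar y\}$ and therefore identically in $\C[x,y]$, contradicting $\max_j p_j\neq 0$.

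The main obstacle will be step three: one must be careful to make the associated-graded identification precise (tracking exactly how normal ordering and the relation $[a,a^\dagg]=1$ only produce lower-order corrections when squaring), and to verify that the conjugation rule derived from skew-hermiticity really does pin the symbols to the imaginary axis on the slice $x=\bar y$. The sign structure of the Killing form enters decisively precisely here: the argument $\sum q_j^2=0\Rightarrow q_j=0$ requires the coefficients in $\sum_j \epsilon^{-1}p_j^2=0$ to share a sign. This is why the proof applies to the compact rank-one case (and to $\su{2}$ in particular, recovering Joseph's result), but fails for $\slR{2}$, where the indefinite Killing form leads to the symbol equation $p_1^2+p_2^2-p_3^2=0$, which admits non-trivial real solutions---consistent with the explicit realization of $\slR{2}$ in $\hat{A}_1^2$ exhibited in Proposition~\ref{prop:algebra:g2:space}.
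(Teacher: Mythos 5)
Your proof is correct and follows the same overall strategy as the paper's: the image of the quadratic Casimir $C=\epsilon^{-1}\sum_j X_j^2$ must be a scalar by Theorem~\ref{thm:Casimir:Element:identity}, so the top-degree homogeneous part of $\sum_j\phi(X_j)^2$ must cancel, and skew-hermiticity then forces all leading coefficients to vanish. The difference lies in how that cancellation is exploited. The paper expands each basis element in the hermitian/anti-hermitian generators $x=(a+a^\dagg)/\sqrt{2}$, $y=-(a-a^\dagg)/\sqrt{2}$, derives the reality constraint $\alpha_{jk}=-(-1)^k\alpha_{jk}^*$, and runs an induction over the anti-diagonal sums $\sum_{k+\ell=z}\sum_j\alpha_{jk}\alpha_{j\ell}=0$, peeling off one $|\alpha_{jk}|^2$ at a time. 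You instead pass to principal symbols $p_j\in\C[x,y]$, note that skew-hermiticity makes $p_j(\bar{z},z)$ purely imaginary, and restrict $\sum_j p_j^2=0$ to the totally real slice $x=\bar{y}$, where it becomes $-\sum_j q_j^2\equiv 0$ with $q_j$ real-valued; the conclusion $p_j\equiv 0$ then follows from the linear independence of the monomials $\bar{z}^\alpha z^\beta$, which is the same PBW input the paper uses. Your version is a cleaner packaging of the positivity argument and has the virtue of making explicit where definiteness of the Killing form enters --- the paper writes \enquote{diagonalizable} but silently normalizes $C$ to $\sum_j e_j^2$ with all coefficients $+1$, which is exactly the definite case; your observation that the symbol equation for $\slR{2}$ becomes $p_1^2+p_2^2-p_3^2=0$ and admits nontrivial solutions is correct and consistent with Proposition~\ref{prop:algebra:g2:space}. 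Two minor remarks: your second paragraph (expectation values $\langle\psi|\phi(X_j)^2|\psi\rangle$) is not needed for the contradiction and, unlike the rest of the argument, leaves the purely algebraic setting, where for unbounded operators one would have to address domains and whether the formal dagger is the true adjoint; the paper avoids this by staying at the level of PBW coefficients, and so can you, since $d\geq 1$ already follows from non-commutativity. Also, \enquote{rank one implies dimension three} should be understood as rank of the complexification, so that $\g_\C\cong\slC{2}$, which is the sense in which the paper and Joseph use the term.
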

			
			The idea of this proof is the following: We assume that $\g$ can be realized in the complex Weyl algebra $A_1$. Using the fact that the Killing form is diagonalizable, we choose a basis in which the Casimir element $C$ is the sum of squares of all basis elements. We then express these basis elements in terms of generators of $A_1$. These generators are always three in number, of which one is the trivial unit element, while the other two have the following properties: one is hermitian while another is skew-hermitian. Since $\g$ is a semisimple Lie algebra of rank one, the Casimir element must be a scalar multiple of the identity. Expanding $C$ in terms of the basis elements and applying the Poincaré-Birkhoff-Witt Theorem leads us to a system of equations for the coefficients of the elements in the expansion . An inductive argument shows that all coefficients must vanish, contradicting the assumption that the basis elements are non-zero since they are obtained as linear combinations of the generators with such coefficients. Therefore, the desired realization in the real skew-hermitian Weyl algebra $\hat{A}_1$ is impossible.
			
			\begin{proof}
				Let $\g$ be an $N$-dimensional semisimple Lie algebra of rank one that admits a realization in the complex Weyl algebra $A_1$. To avoid a cumbersome notation, we denote the abstract Lie algebra $\g$ and its realization in $A_1$ with the same symbol. By
				Theorem~\ref{thm:form:of:quadratic:Casimir:element}, $\g$ contains the Casimir element of the form $C=\sum_{jk}\kappa_{jk}x_jx_k$, where $\{x_j\}_{j=1}^N$ is a basis of $\g$, and $\kappa_{jk}$ is the inverse matrix of the matrix associated  with the Killing form. Since the Killing form is assumed to be diagonalizable, there exists a basis $\{e_j\}_{j=1}^N$ such that $C=\sum_{j=1}^Ne_j^2$. Here, we want to mention that $N=\dim(\g)\geq 3$, as any Lie algebra of dimension two or lower is trivially solvable and therefore not semisimple.
				
				Let $m:=\max\{\deg(e_j)\}$. This must clearly be a positive integer, since otherwise $\g$ would be abelian, contradicting the non-degeneracy of the Killing form (by Cartan's criterion for semisimplicity \plainrefs{Dieudonne:1953}).
				
				We now choose the basis $x:=(a+a^\dagg)/\sqrt{2}$ and $y:=-(a-a^\dagg)/\sqrt{2}$ of $\gh_{1,\C}\cong\lie{\{a^\dagg,a,1\}}\cong\lie{\{x,y,1\}}$, which is possible, since $[x,y]=1$, establishing the Lie-algebra isomorphism $\lie{\{a^\dagg,a,1\}}\cong\lie{\{x,y,1\}}$ via the identification $x\leftrightarrow a$, $y\leftrightarrow a^\dagg$, and $1\rightarrow 1$. This basis has the advantage that $x^\dagg=x$ and $y^\dagg=-y$.
				Each $e_j\in A_1\cong U(\gh_{1,\C})$ can, by the Poincaré-Birkhoff-Witt Theorem \plainrefs{PBW:Thm}, be written as:
				\begin{align*}
					e_j\upto{m}\sum_{k=0}^m\alpha_{jk} x^{m-k}y^k,
				\end{align*}
				where $\alpha_{jk}\in\C$.
				Assuming $\g$ has a realization within the real skew-hermitian Weyl algebra $\hat{A}_1$ implies that the elements $e_j$ must be skew-hermitian. This implies:
				\begin{align*}
					e_j^\dagg\upto{m}\sum_{k=0}^m\alpha_{jk}^*(-1)^ky^kx^{m-k}\upto{m}\sum_{k=0}^m\alpha_{jk}^*(-1)^kx^{m-k}y^k\upto{m}-\sum_{k=0}^m\alpha_{jk}x^{m-k}y^k\upto{m}-e_j.
				\end{align*}
				In the first step, we used the fact that $(AB)^\dagg=B^\dagg A^\dagg$, $\alpha_{jk}\in\C$, $x^\dagg=x$, and $y^\dagg=-y$. In the second step, we commuted $x^{m-k}$ and $y^k$. This is possible as $\deg([x^\mu,y^{\nu}])\leq \mu+\nu-2$ holds (cf. Proposition 13 in~\plainrefs{Bruschi:Xuereb:2024}), which shows that $x^{m-k}y^k\upto{m}y^kx^{m-k}$, as commuting the two terms only adds terms of degree smaller than $m$. The last two steps, follow from the condition that $e_j$ is skew-hermitian, i.e., $e_j^\dagg=-e_j$. The Poincaré-Birkhoff-Witt Theorem \plainrefs{PBW:Thm} allows us now the identification $\alpha_{jk}=-(-1)^k\alpha_{jk}^*$ for all $j\in\{1,\ldots,n\}$ and $k\in\{1,\ldots,m\}$. Theorem~\ref{thm:Casimir:Element:identity} now also guarantees that $C=c\mathds{1}$ for an appropriate $c\in\C$. However, we then have:
				\begin{align*}
					C=\sum_{j=1}^Ne_j^2\upto{2m}\sum_{j=1}^N\sum_{k,\ell=0}^m\alpha_{jk}x^{m-k}y^k\alpha_{j\ell}x^{m-\ell}y^\ell\upto{2m}\sum_{j=1}^N\sum_{k,\ell=0}^m\alpha_{jk}\alpha_{j\ell}x^{2m-(k+\ell)}y^{k+\ell}\upto{2m}0.
				\end{align*}
				Here, we first recalled that the Casimir element is, in the chosen basis, simply the sum of all basis elements squared. In the next step, we expanded the basis elements $e_j$ to highest degree and noted that the degree of $e_j^2$ is twice the degree of $e_j$. In the last step, we reordered the $x$ and $y$ terms, which only adds terms of lower degree, thereby not changing the \enquote{$\upto{2m}$} relation.
				The Poincaré-Birkhoff-Witt Theorem now implies that the monomials $x^ry^s$ are linearly independent, so the above expression can only vanish if the following sum vanishes for all $z\in\{0,1,\ldots,2m\}$:
				\begin{align}\label{eqn:in:proof:no:skew:hermitian:compact:semisimple:sum}
					\sum_{k+\ell=z}\sum_{j=1}^N\alpha_{jk}\alpha_{j\ell}=0.
				\end{align}
				We now prove inductively that equation \eqref{eqn:in:proof:no:skew:hermitian:compact:semisimple:sum} implies $\alpha_{jk}=0$ for all $j\in\{1,\ldots,N\}$ and $k\in\{0,\ldots,m\}$. Before proceeding, observe that by the symmetry of $\alpha_{jk}\alpha_{j\ell}$ under the exchange $k\leftrightarrow\ell$, and the identity $\alpha_{jk}=-(-1)^k\alpha_{jk}^*$, one has:
				\begin{align*}
					\sum_{k+\ell=z}\sum_{j=1}^N\alpha_{jk}\alpha_{j\ell}=2\sum_{\underset{k<\ell}{k+\ell=z}}\sum_{j=1}^N\alpha_{jk}\alpha_{j\ell}-\sum_{2k=z}\sum_{j=1}^n(-1)^k|\alpha_{jk}|^2,
				\end{align*}
				where the first sum in the second term is empty if $z$ is odd and consists of a single term if $z$ is even.
				
				We proceed with the induction. The induction hypothesis is that for every $2z\in\{0,2,\ldots,2m\}$, equation \eqref{eqn:in:proof:no:skew:hermitian:compact:semisimple:sum} implies $\alpha_{jk}=0$ for all $k\leq z$ and all $j\in\{1,\ldots,n\}$.
				
				Consider the base case $z=0$. Here, equation \eqref{eqn:in:proof:no:skew:hermitian:compact:semisimple:sum} reduces to the equation
				\begin{align*}
					-\sum_{j=1}^N|\alpha_{j0}|^2=0,
				\end{align*}
				since the only pair of non-negative integers $j,k$ that sum to $0$ is $(j,k)=(0,0)$.
				This equation implies immediately $\alpha_{j0}=0$ for all $j\in\{1,\ldots,N\}$, as the sum of the squares of the absolute values of a set of complex numbers is positive definite. 
				
				We can consequently continue with the induction step. Assume that for one element $2z\in\{0,2,\ldots,2m\}$, one has $\alpha_{jk}=0$ for all $k\leq z$ and $j\in\{1,\ldots,N\}$. Then equation \eqref{eqn:in:proof:no:skew:hermitian:compact:semisimple:sum} reads for $z':=2(z+1)$:
				\begin{align*}
					2\sum_{\underset{k<\ell}{k+\ell=2(z+1)}}\sum_{j=1}^N\alpha_{jk}\alpha_{j\ell}-\sum_{j=1}^N(-1)^{z+1}|\alpha_{j,z+1}|^2=0.
				\end{align*}
				The condition $k+\ell=2(z+1)$ with $k<\ell$ in the first sum implies $k<(k+\ell)/2<z+1$. Hence $k\leq z$. By the induction hypothesis, $\alpha_{jk}=0$ for all such $k$ and $j\in\{1,\ldots,N\}$. So the first sum vanishes and one is left with:
				\begin{align*}
					\sum_{j=1}^N|\alpha_{j,z+1}|^2=0.
				\end{align*}
				This, together with the induction hypothesis, implies $\alpha_{jk}=0$ for all $k\leq z+1$ and $j\in\{1,\ldots,N\}$, which concludes the proof by induction.
				
				We have consequently shown that $\alpha_{jk}=0$ for all $j\in\{1,\ldots,N\}$ and $k\in\{0,\ldots,m\}$. This, however, cannot be true, as we assumed $e_j\upto{m}\sum_k\alpha_{jk} x^{m-k} y^k\not\simeq_m0$ for at least one $j\in\{1,\ldots,N\}$. Thus $\g$ has no realizations within the real skew-hermitian Weyl algebra $\hat{A}_1$.
			\end{proof}
			
			We can repeat the previous proof for the case of the $n$-mode Weyl Algebra $A_n$.
			
			\begin{theorem}[Theorem 4.4 from \plainrefs{Joseph:1972} for $n$ modes (reduced)]\label{thm:Joseph:4:4:full}
				Let $\g$ be a finite-dimensional semisimple Lie algebra of rank $n$ with diagonalizable Killing form. Then $\g$ has no realizations in the skew-hermitian Weyl algebra $\hat{A}_n$.
			\end{theorem}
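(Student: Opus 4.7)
The plan is to mirror almost verbatim the strategy of the single-mode proof just above, replacing single-integer indices with multi-indices over the $n$ modes. Suppose for contradiction that the semisimple Lie algebra $\g$ of rank $n$ with diagonalizable Killing form admits a faithful realization in $\hat{A}_n$. By Theorem~\ref{thm:form:of:quadratic:Casimir:element} and the diagonalizability hypothesis, we may choose a basis $\{e_j\}_{j=1}^N$ of $\g$ in which the quadratic Casimir takes the simple form $C=\sum_{j=1}^N e_j^2$. Theorem~\ref{thm:Casimir:Element:identity} (which applies precisely because $\g$ has rank $n$ and is realized in the $n$-mode Weyl algebra $A_n\supseteq\hat{A}_n$) forces $C=c\mathds{1}$ for some $c\in\C$. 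Let $m:=\max_j\deg(e_j)\geq 1$; the bound $m\geq 1$ holds because an abelian realization would violate semisimplicity via the non-degeneracy of the Killing form.

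Next, for each mode $\nu\in\{1,\ldots,n\}$ introduce the hermitian element $x_\nu:=(a_\nu+a_\nu^\dagg)/\sqrt{2}$ and the skew-hermitian element $y_\nu:=-(a_\nu-a_\nu^\dagg)/\sqrt{2}$, which satisfy $[x_\mu,y_\nu]=\delta_{\mu\nu}$ with all other generator commutators vanishing. By the Poincaré–Birkhoff–Witt theorem, each $e_j$ admits a unique canonical expansion
\begin{align*}
e_j\upto{m}\sum_{(\boldsymbol{\alpha},\boldsymbol{\beta}):|\boldsymbol{\alpha}|+|\boldsymbol{\beta}|=m}\alpha_{j,(\boldsymbol{\alpha},\boldsymbol{\beta})}\mathbf{x}^{\boldsymbol{\alpha}}\mathbf{y}^{\boldsymbol{\beta}},
\end{align*}
where $\mathbf{x}^{\boldsymbol{\alpha}}=x_1^{\alpha_1}\cdots x_n^{\alpha_n}$ and similarly for $\mathbf{y}^{\boldsymbol{\beta}}$. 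Skew-hermiticity $e_j^\dagg=-e_j$, combined with $x_\nu^\dagg=x_\nu$, $y_\nu^\dagg=-y_\nu$, and the observation that reordering $\mathbf{y}^{\boldsymbol{\beta}}\mathbf{x}^{\boldsymbol{\alpha}}$ back to canonical order only alters terms of degree strictly below $m$, yields the analogue of the single-mode relation, namely $\alpha_{j,(\boldsymbol{\alpha},\boldsymbol{\beta})}=-(-1)^{|\boldsymbol{\beta}|}\alpha_{j,(\boldsymbol{\alpha},\boldsymbol{\beta})}^*$. In particular $\alpha_{j,(\boldsymbol{\alpha},\boldsymbol{\beta})}^2=-(-1)^{|\boldsymbol{\beta}|}|\alpha_{j,(\boldsymbol{\alpha},\boldsymbol{\beta})}|^2$, so $\sum_j\alpha_{j,(\boldsymbol{\alpha},\boldsymbol{\beta})}^2$ is a definite-sign sum of moduli squared. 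Matching the degree-$2m$ part of $C=c\mathds{1}$ in the PBW basis gives, for every multi-index $(\boldsymbol{A},\boldsymbol{B})$ with $|\boldsymbol{A}|+|\boldsymbol{B}|=2m$, the coefficient identity
\begin{align*}
\sum_{j=1}^N\sum_{\substack{\boldsymbol{\alpha}+\boldsymbol{\alpha}'=\boldsymbol{A}\\ \boldsymbol{\beta}+\boldsymbol{\beta}'=\boldsymbol{B}}}\alpha_{j,(\boldsymbol{\alpha},\boldsymbol{\beta})}\,\alpha_{j,(\boldsymbol{\alpha}',\boldsymbol{\beta}')}=0.
\end{align*}

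Finally, I will close the argument by induction along any total order on multi-indices—concretely, the lexicographic order on $(\boldsymbol{\alpha},\boldsymbol{\beta})\in\mathbb{N}_{\geq 0}^{2n}$. The crucial observation, which replaces the elementary inequality $k<z+1$ used in the single-mode case, is that whenever $(\boldsymbol{\alpha},\boldsymbol{\beta})+(\boldsymbol{\alpha}',\boldsymbol{\beta}')=2\gamma^{\star}$ with $(\boldsymbol{\alpha},\boldsymbol{\beta})\neq(\boldsymbol{\alpha}',\boldsymbol{\beta}')$, the first coordinate in which they disagree must deviate in opposite directions from $\gamma^{\star}$, so exactly one of the two multi-indices is strictly lex-smaller than $\gamma^{\star}$. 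Therefore, applying the coefficient identity at $(\boldsymbol{A},\boldsymbol{B})=2\gamma^{\star}$ and using the inductive hypothesis that $\alpha_{j,\gamma}=0$ for every $\gamma<\gamma^{\star}$, every off-diagonal product vanishes, leaving $\sum_j\alpha_{j,\gamma^{\star}}^2=0$. By the definite-sign property established above, this forces $\alpha_{j,\gamma^{\star}}=0$ for all $j$, advancing the induction. Iterating over all $\gamma$ with $|\gamma|=m$ shows that every coefficient in every $e_j$ vanishes at degree $m$, contradicting $m=\max_j\deg(e_j)$.

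The main conceptual obstacle is the multi-index lemma in the final paragraph: the single-mode proof exploits that in $k+\ell=2z$ with $k<\ell$ one automatically has $k<z$, whereas for vector multi-indices the analogous statement is less immediate and must be proved by considering the leading position of disagreement. Once that lemma is settled, the rest is a straightforward bookkeeping transcription of the single-mode argument; in particular, verifying that the reordering-induced lower-degree corrections remain harmless under the $\upto{2m}$ equivalence is routine via Proposition~13 of the foundational work \plainrefs{Bruschi:Xuereb:2024}, and the skew-hermiticity bookkeeping is unchanged up to the replacement of the single exponent $k$ by $|\boldsymbol{\beta}|$.
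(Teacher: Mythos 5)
Your proposal is correct and follows the same overall strategy as the paper's proof (quadratic Casimir in a Killing-diagonalizing basis, Theorem~\ref{thm:Casimir:Element:identity} forcing $C=c\mathds{1}$, PBW expansion in the hermitian/skew-hermitian generators $x_\mu,y_\mu$, the sign constraint $\alpha_{j\gamma}=-(-1)^{|\boldsymbol{\beta}|}\alpha_{j\gamma}^*$ making the diagonal sums definite, and an induction killing all top-degree coefficients). The one place where you genuinely diverge is the final induction, and your version is cleaner. The paper organizes the induction by how concentrated the target multi-index is on a single mode, working through $\gamma_*=2(m-z)\iota_\mu+\hat{\gamma}^{(z)}$ and then iterating over the possible shapes of $\hat{\gamma}^{(z)}$; this requires a somewhat delicate bookkeeping to confirm that every $\gamma\in S(m)$ is eventually reached. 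You instead induct along the lexicographic order on $S(m)$, and your key lemma --- that $\gamma+\lambda=2\gamma^\star$ with $\gamma\neq\lambda$ forces the first coordinate of disagreement to deviate from $\gamma^\star$ in opposite directions, so exactly one of the pair is lex-smaller than $\gamma^\star$ --- is correct (before that coordinate both agree with $\gamma^\star$, and at it one lies strictly below and one strictly above $\gamma^\star_p$). This makes the vanishing of all off-diagonal products at $\gamma_*=2\gamma^\star$ immediate from the inductive hypothesis, handles the base case vacuously (the lex-minimal element of $S(m)$ admits no off-diagonal decomposition within $S(m)$), and manifestly exhausts all of $S(m)$. The only blemishes are cosmetic: you overload $\boldsymbol{\alpha}$ as both a multi-index and a coefficient label, and you should note explicitly that the sign $-(-1)^{|\boldsymbol{\beta}^\star|}$ is common to all $j$ for fixed $\gamma^\star$, which is what makes $\sum_j\alpha_{j\gamma^\star}^2=0$ imply $\sum_j|\alpha_{j\gamma^\star}|^2=0$. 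Neither affects correctness.
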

			
			The idea behind this proof is similar to the previous one. However, we anticipate that the induction procedure will be more involved.
			
			\begin{proof}
				Let $\g$ be an $N$-dimensional semisimple Lie algebra of rank $n$ that admits a realization within the complex Weyl algebra $A_n$ of $n$ modes. To avoid a combersome notation, we denote the abstract Lie algebra $\g$ and its realization in $A_n$ with the same symbol. By Theorem~\ref{thm:form:of:quadratic:Casimir:element}, $\g$ contains the Casimir element of the form $C=\sum_{jk}\kappa_{jk}x_jx_k$, where $\{x_j\}_{j=1}^N$ is a basis of $\g$, and $\kappa_{jk}$ the inverse matrix of the matrix associated with the Killing form. The Killing form is, by assumption, diagonalizable. Thus, there exists a basis $\{e_j\}_{j=1}^N$ such that $C=\sum_{j=1}^Ne_j^2$. Let $m:=\max\{\deg(e_j)\}$. Clearly $m\geq 1$, since otherwise $\g$ would be abelian, contradicting Cartan's criterion for semisimplicity \plainrefs{Dieudonne:1953}.
				
				We now choose the basis $x_\mu=(a_\mu+a_\mu^\dagg)/\sqrt{2}$ and $y_\mu=-(a_\mu-a_\mu^\dagg)/\sqrt{2}$ of $\mathfrak{h}_{n,\C}\cong\lie{\{x_1,y_1,\ldots,x_n,y_n,1\}}\cong\lie{\{a_1^\dagg,a_1,\ldots,a_n^\dagg,a_n,1\}}$, which is possible, since $[x_\mu,y_\nu]=x_\mu y_\nu-y_\nu x_\mu=\delta_{\mu\nu}$ establishes the Lie-algebra isomorphism $\lie{\{x_1,y_1,\ldots,x_n,y_n,1\}}\cong\lie{\{a_1^\dagg,a_1,\ldots,a_n^\dagg,a_n,1\}}$ via the identification $x_\mu\leftrightarrow a_\mu$, $y_\mu\leftrightarrow a^\dagg_\mu$, and $1\leftrightarrow 1$, where $a_\mu$ is the bosonic annihilation operator of the $\mu$-th mode, while $a^\dagg_\mu$ is the bosonic creation operator of the $\mu$-th mode. This basis has the advantage that $x_\mu^\dagg=x_\mu$ and $y_\mu^\dagg=-y_\mu$. 
				
				Before proceeding, we want to introduce the multi-index $\gamma$ of an $n$-mode system, as a non-negative integer vector of length $2n$, i.e., $\gamma\in\N_{\geq0}^{2n}$ . This allows us to introduce the magnitude of a multi-index as:
				\begin{align*}
					|\,\cdot\,|:\N_{\geq0}^{2n}\to \N_{\geq0},\;\gamma\mapsto|\gamma|:=\sum_{\mu=1}^{2n}|\gamma_\mu|=\sum_{\mu=1}^{2n}\gamma_\mu.
				\end{align*}
				Using the Poincaré-Birkhoff-Witt Theorem \plainrefs{PBW:Thm}, we can consequently write any $e_j\in U(\mathfrak{h}_{N,\C})=A_n$ as:
				\begin{align*}
					e_j\upto{m}\sum_{\gamma\in S(m)}\alpha_{j\gamma} x^\gamma\quad\text{with}\quad x^\gamma:=\prod_{\mu=1}^n x_\mu^{\gamma_\mu}y_\mu^{\gamma_{\mu+n}}, 
				\end{align*}
				where $\alpha_{j\gamma}\in\C$ and $S(m):=\{\gamma\in \Np{2n}\,\mid\,|\gamma|=m\}$ is the set of all multi-indices of magnitude $m$.
				Assuming $\g$ has a realization within the real skew-hermitian Weyl algebra $\hat{A}_1$, i.e., $\g\subseteq\hat{A}_n$, implies that the elements $e_j$ must be skew-hermitian. Consequently:
				\begin{align}
					e_j^\dagg\upto{m}\sum_{\gamma\in S(m)}\alpha_{j\gamma}^*(-1)^{\sum_{\mu=1}^n\gamma_{\mu+n}} x^\gamma \upto{m}-\sum_{\gamma\in S(m)}\alpha_{j\gamma}x^\gamma\upto{m}-e_j.\label{eqn:help:thm;joseph:fin:1}
				\end{align}
				Let us explain the first step in detail. Note that $(AB)^\dagg = B^\dagg A^\dagg$ and consequently
				\begin{align*}
					(x^\gamma)^\dagg=\left(\prod_{\mu=1}^n x_\mu^{\gamma_\mu}y_\mu^{\gamma_{\mu+n}}\right)^\dagg=\prod_{\mu=1}^n \left(y_{n+1-\mu}^{\gamma_{2n+1-\mu}}\right)^\dagg \left(x_{n+1-\mu}^{\gamma_{n+1-\mu}}\right)^\dagg=\prod_{\mu=1}^n (-1)^{\gamma_{2n+1-\mu}}y_{n+1-\mu}^{\gamma_{2n+1-\mu}}x_{n+1-\mu}^{\gamma_{n+1-\mu}},
				\end{align*}
				since $x_\mu^\dagg=x_\mu$ and $y_\mu^\dagg=-y_\mu$. Now, we recognize that any pair of products $y_{n+1-\mu}^{\gamma_{2n+1-\mu}} x_{n+1-\mu}^{\gamma_{n+1-\mu}}$ and $y_{n+1-\nu}^{\gamma_{2n+1-\nu}} x_{n+1-\nu}^{\gamma_{n+1-\nu}}$ commute if $\mu\neq \nu$, since $[x_\mu,x_\nu]=0=[y_\mu,y_\nu]$ and $[x_\mu,y_\nu]=\delta_{\mu\nu}$ for all $\mu,\nu\in\{1,\ldots,n\}$, and $[AB,CD]=A[B,C]D+AC[B,D]+[A,C]DB+C[A,D]B$. Thus, we can write:
				\begin{align*}
					(x^\gamma)^\dagg=\prod_{\mu=1}^n (-1)^{\gamma_{\mu+n}}y_\mu^{\gamma_{\mu+n}}x_\mu^{\gamma_\mu}=(-1)^{\sum_{\mu=1}^n\gamma_{\mu+n}}\prod_{\mu=1}^ny_\mu^{\gamma_{\mu+n}}x_\mu^{\gamma_\mu}.
				\end{align*}
				Finally, we observe that by $[x_\mu,y_\nu]=\delta_{\mu\nu}$, one can commute $y_\mu^{\gamma_{\mu+n}}$ and $x_\mu^{\gamma_\mu}$ without adding terms of degree $\gamma_\mu+\gamma_{\mu+n}$, but instead only terms of lower degree. Hence
				\begin{align*}
					(x^\gamma)^\dagg\upto{|\gamma|}(-1)^{\sum_{\mu=1}^n\gamma_{\mu+n}}\prod_{\mu=1}^nx_\mu^{\gamma_\mu}y_\mu^{\gamma_{\mu+n}}=(-1)^{\sum_{\mu=1}^n\gamma_{\mu+n}}x^\gamma.
				\end{align*}
				The first step in equation \eqref{eqn:help:thm;joseph:fin:1} follows consequently from the observation that $m:=\max\{\deg(e_j)\}= |\gamma|$ for all $\gamma\in S(m)$ and the fact that, for all $u,v\in A_n$, $u\upto{f_1} v$ implies $v\upto{f_2}v$ if $f_1\leq f_2$. The remaining relations in \eqref{eqn:help:thm;joseph:fin:1} represent simply the fact that $e_j$ must be skew-hermitian.
				By means of the Poincaré-Birkhoff-Witt Theorem \plainrefs{PBW:Thm}, the monomials $x^{\gamma}$ are linearly independent, yielding the condition $\alpha_{j\gamma}=-(-1)^{\sum_\mu \gamma_{\mu+n}}\alpha_{j\gamma}^*$ for all $j\in\{1,\ldots,N\}$ and $\gamma\in S(m)$. Theorem~\ref{thm:Casimir:Element:identity} guarantees now that $C=c\mathds{1}$ for some $c\in\C$. This implies:
				\begin{align*}
					C=\sum_{j=1}^Ne_j^2\upto{2m}\sum_{j=1}^n\sum_{\gamma,\lambda\in S(m)}\alpha_{j\gamma}x^\gamma\alpha_{j\lambda}x^\lambda\upto{2m}\sum_{j=1}^n\sum_{\gamma,\lambda\in S(m)}\alpha_{j\gamma}\alpha_{j\lambda}x^{\gamma+\lambda}\upto{2m}0,
				\end{align*}
				where we started by recalling the the Casimir element, in the chosen basis, is simply the sum of all squared basis elements. In the next step, we expanded the basis elements $e_j$ to highest degree and used the fact that $\deg(e_j^2)=2\deg(e_j)$. The last step follows analogously to the discussion above from the observation that commuting $x_\mu$ and $y_\mu$ elements does only add terms of lower degree.
				The Poincaré-Birkhoff-Witt Theorem \plainrefs{PBW:Thm} now implies that the elements $x^{\gamma+\lambda}$ and $x^{\gamma'+\lambda'}$ are only linearly dependent if $\gamma+\lambda=\gamma'+\lambda'$. Thus, the following sums must vanish
				\begin{align}\label{eqn:in:proof:no:skew:hermitian:compact:semisimple:sum:nmode}
					\sum_{\underset{\gamma+\lambda=\gamma_*}{\gamma,\lambda\in S(m)}}\sum_{j=1}^N\alpha_{j\gamma}\alpha_{j\lambda}=0
				\end{align}
				for all $\gamma_*\in\{\gamma\in \Np{2n}\,\mid\,|\gamma|=2m\}=S(2m)$. 
				
				We now prove inductively that \eqref{eqn:in:proof:no:skew:hermitian:compact:semisimple:sum:nmode} already  implies $\alpha_{j\gamma}=0$ for all $j\in\{1,\ldots,N\}$ and $\gamma\in S(m)$. To be precise, the induction hypothesis is that for a given $z\in\{0,\ldots,m\}$ the coefficients $\alpha_{j\gamma}$ vanish for every $j\in\{1,\ldots,n\}$ and $\gamma\in S(m)$ for which there exists a $\mu\in\{1,\ldots,2n\}$ and $\lambda\in S(m)$ such that $\gamma+\lambda=\gamma_*=2(m-z)\iota_\mu+\hat{\gamma}^{(z)}$, where $\iota_\mu\in\Np{2n}$ such that $(\iota_\mu)_\nu=\delta_{\mu\nu}$ and $\hat{\gamma}^{(z)}\in2\Np{2n}$ such that $\hat{\gamma}_\mu^{(z)}=0$ and $|\hat{\gamma}^{(z)}|=2z$. 
				
				We start with the base case corresponding to case $z=0$ in the previous proof. In this case, we consider the multi-indices $\gamma_*=2m\iota_\mu$ for all $\mu\in\{1,2,\ldots,2n\}$ simultaneously. The only $\gamma,\lambda\in S(m)$ satisfying $\gamma+\lambda=2m\iota_\mu$ are clearly $\gamma=\lambda=m\iota_\mu$. Equation \eqref{eqn:in:proof:no:skew:hermitian:compact:semisimple:sum:nmode} then reads:
				\begin{align*}
					0=\sum_{j=1}^N\alpha_{j,m\iota_\mu}^2\;\;\;\implies\;\;\;\sum_{j=1}^N|\alpha_{j,m\iota_\mu}|^2=0,
				\end{align*}
				since $\alpha_{j,m\iota_\mu}=-(-1)^{\sum_{\nu=1}^nm(\iota_\mu)_{\nu+n}}\alpha_{j,m\iota_\mu}^*=\epsilon \alpha_{j,m\iota_\mu}^*$, where $\epsilon=-1$ if either $\mu\leq n$ or $m\in2\N_{\geq0}$, and $\epsilon=+1$ otherwise. This implies $\alpha_{j,m\iota_\mu}=0$ for all $j\in\{1,\ldots,N\}$ and $\mu\in\{1,\ldots,2n\}$. 
				
				We proceed with the induction step and suppose the induction hypothesis holds for some $z-1\in\{0,\ldots,m\}$. Consider consequently the multi-index $\gamma_*=2(m-z)\iota_\mu+\hat{\gamma}^{(z)}$ for each $\mu\in\{1,\ldots,2m\}$, where $\hat{\gamma}_\mu^{(z)}=0$, $\hat{\gamma}^{(z)}\in2\Np{2n}$, and $|\hat{\gamma}^{(z)}|=2z$. We must consider all $\gamma,\lambda\in S(m)$ satisfying $\gamma+\lambda=\gamma_*$. Since $\gamma_\mu$ and $\lambda_\mu$ are non-negative integers that must satisfy $\gamma_\mu+\lambda_\mu=2(m-z)$, we can choose, without loss of generality, that $\gamma_\mu\geq \lambda_\mu$. This allows us to conclude that $\gamma_\mu\geq (\gamma_\mu+\lambda_\mu)/2=(m-z)$. We can now consider the two cases $\gamma_\mu>\lambda_\mu$ and $\gamma_\mu=\lambda_\mu$ separately. To differentiate these two cases, we introduce the sets $M_\mu^>:=\{\gamma\in S(m)\,\mid\,\gamma_\mu> m-z\}$, $M_\mu^<:=\{\gamma\in S(m)\,\mid\,\gamma_\mu< m-z\}$ and $M^=_\mu:=\{\gamma\in S(m)\,\mid\,\gamma_\mu=m-z\}$. It is clear that they are mutually disjoint and satisfy $M_\mu^>\cup M_\mu^=\cup M_\mu^<=S(m)$, which allows us to rewrite equation \eqref{eqn:in:proof:no:skew:hermitian:compact:semisimple:sum:nmode} as:
				\begin{align*}
					\sum_{\underset{\gamma+\lambda=\gamma_*}{\gamma+\lambda=\gamma_*}}\sum_{j=1}^N\alpha_{j\gamma}\alpha_{j\lambda}=\sum_{\underset{\gamma+\lambda=\gamma_*}{\gamma\in M_\mu^>,\lambda\in M_\mu^<}}\sum_{j=1}^N\alpha_{j\gamma}\alpha_{j\lambda}+\sum_{\underset{\lambda+\gamma=\gamma_*}{\gamma,\lambda\in M_\mu^=}}\sum_{j=1}^N\alpha_{j\gamma}\alpha_{j\lambda}=0.
				\end{align*}
				By the induction hypothesis, one has that $\alpha_{j\gamma}$ vanishes for every $j\in\{1,\ldots,n\}$ and $\gamma\in M_\mu^>$. This leaves, in the equation above, only the sum over multi-indices in the set $M_\mu^=$ as the other sum vanishes. Thus, consider the case
				$\gamma_\mu=\lambda_\mu=m-z$. This will now be treated by an iterative process, and we begin with the case $\hat{\gamma}^{(z)}=2z\iota_\nu$ for some $\nu\neq\mu$. Then, one must have $\gamma=\lambda=(m-z)\iota_\mu+z\iota_\nu$, since $\gamma,\lambda$ are non-negative integer vectors that satisfy $|\gamma|=|\lambda|=m$.
				Equation \eqref{eqn:in:proof:no:skew:hermitian:compact:semisimple:sum:nmode} then becomes
				\begin{align*}
					\sum_{j=1}^N|\alpha_{j,(m-z)\iota_\mu+z\iota_\nu}|^2=0,
				\end{align*}
				implying $\alpha_{j,(m-z)\iota_\mu+z\iota_\nu}=0$ for all $\mu\neq\nu$ and $j\in\{1,\ldots,N\}$. 
				
				Next, one considers $\hat{\gamma}^{(z)}=2(z-1)\iota_\nu+2\iota_{\nu'}$, where $\nu'\notin\{\mu,\nu\}$. Then, the only $\gamma,\lambda\in S(m)$ with $\gamma+\lambda=\gamma_*$ that are, by the induction hypothesis, not associated to coefficients $\alpha_{j\eta}$ that either satisfy $\alpha_{j\gamma}=0$ or $\alpha_{j\lambda}=0$ are $\gamma=\lambda=(m-z)\iota_\mu +(z-1)\iota_\nu+\iota_{\nu'}$. It is immediate to obtain again that the associated $\alpha_{j\gamma}$ must all vanish for all $j\in\{1,\ldots,N\}$. This procedure can be repeated for all $\hat{\gamma}^{(z)}$ with $|\hat{\gamma}^{(z)}|=2z$, ultimately showing that $\alpha_{j,\gamma}=0$ for all $\gamma=(m-z)\iota_\mu+\gamma^{(r)}$, where $|\gamma^{(r)}|=z$. This completes the proof by induction. 
				
				We have consequently shown that $\alpha_{j\gamma}=0$ for all $j\in\{1,\ldots,N\}$ and all $\gamma\in S(m)$. This contradicts the assumption that $e_j\not\simeq_m0$ for at least one $j$, completing this proof, as it shows that $\g$ has no realization within the real skew-hermitian Weyl algebra $\hat{A}_n$.
			\end{proof}
			
			\begin{corollary}\label{cor:Joseph:4:4}
				Let $\g$ be a finite-dimensional semisimple Lie algebra of rank $n$ corresponding to a compact Lie group $G$. Then $\g$ has no realizations in the skew-hermitian Weyl algebra $\hat{A}_n$.
			\end{corollary}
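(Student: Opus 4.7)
The plan is to derive the corollary as a direct consequence of two prior results in the excerpt: Theorem~\ref{thm:compact:Lie:group:negative:definite:Killing:form}, which tells us that the Killing form of a semisimple Lie algebra corresponding to a compact Lie group is negative definite, and Theorem~\ref{thm:Joseph:4:4:full}, which rules out faithful realizations of rank-$n$ semisimple Lie algebras in $\hat{A}_n$ whenever the Killing form is diagonalizable. The only non-trivial link to establish is that negative definiteness of the Killing form implies diagonalizability in the sense required by Theorem~\ref{thm:Joseph:4:4:full}.

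First, I would assume $\g$ is a finite-dimensional semisimple Lie algebra of rank $n$ associated with a compact Lie group $G$. Applying Theorem~\ref{thm:compact:Lie:group:negative:definite:Killing:form} immediately yields that the Killing form $B:\g\times\g\to\R$ is negative definite. In particular, once a basis $\{x_j\}_{j=1}^N$ of $\g$ is fixed, the associated Gram matrix $\boldsymbol{b}$ with entries $b_{jk}=B(x_j,x_k)$ is a real symmetric matrix whose eigenvalues are all strictly negative.

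Second, I would invoke the spectral theorem for real symmetric matrices: there exists a real orthogonal matrix $\boldsymbol{O}$ such that $\boldsymbol{O}^{\Tp}\boldsymbol{b}\boldsymbol{O}$ is diagonal. Transporting this back to $\g$ by changing basis to $\{e_j\}_{j=1}^N$ with $e_j=\sum_k O_{jk}x_k$, the Killing form becomes diagonal in the new basis. This is precisely the diagonalizability hypothesis required for Theorem~\ref{thm:Joseph:4:4:full}. Moreover, because the spectrum is strictly negative (and hence non-zero), the Killing form is also non-degenerate, which is consistent with Cartan's criterion and with $\g$ being semisimple.

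Finally, I would simply apply Theorem~\ref{thm:Joseph:4:4:full} to conclude that $\g$ admits no realization in $\hat{A}_n$, completing the proof. I expect no substantial obstacle here: the only subtlety is making the diagonalizability argument explicit, and this is covered by the spectral theorem. The substantive work has already been done in establishing Theorem~\ref{thm:Joseph:4:4:full} (via the Casimir element argument combined with the Poincaré--Birkhoff--Witt expansion and the skew-hermiticity constraint) and in Theorem~\ref{thm:compact:Lie:group:negative:definite:Killing:form} (via the compactness-induced invariant inner product).
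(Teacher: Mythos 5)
Your proposal is correct and follows exactly the paper's own argument: apply Theorem~\ref{thm:compact:Lie:group:negative:definite:Killing:form} to get a negative definite (hence diagonalizable) Killing form, then invoke Theorem~\ref{thm:Joseph:4:4:full}. The only difference is that you spell out the spectral-theorem step that the paper leaves implicit, which is a harmless elaboration.
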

			
			\begin{proof}
				This is a direct consequence of Theorem~\ref{thm:Joseph:4:4:full} and Theorem~\ref{thm:compact:Lie:group:negative:definite:Killing:form}. Since the Killing form of a semisimple Lie algebra associated with a compact Lie algebra is negative definite and can therefore be diagonalized. Theorem~\ref{thm:Joseph:4:4:full} then prohibits the realization of such a Lie algebra in $\hat{A}_n$.
			\end{proof}
			
			\vspace{0.2cm}
			
			\begin{tcolorbox}[breakable, colback=Cerulean!3!white,colframe=Cerulean!85!black,title=Example: Compact Lie algebras with realizations in $\hat{A}_n$]
                In Corollary~\ref{cor:Joseph:4:4}, we have established that no semisimple (and therefore no simple) Lie algebra of rank $n$ corresponding to a compact Lie group $G$ can be realized within the skew-hermitian Weyl algebra $\hat{A}_n$. However, this does not preclude the realizations of certain semisimple Lie algebras of rank $n$ corresponding to a compact Lie group $G$ within $\hat{A}_m$ with $m\geq n$. The following examples demonstrates this, as it shows that the semisimple Lie algebra $\mathfrak{su}(2)$, which is of rank one and corresponding to the compact Lie group $\operatorname{SU}(2)$, can be realized with any multi-mode skew-hermitian Weyl algebra $\hat{A}_n$ with $n\geq 2$.
				\begin{example}\label{exa:su:realization}
					Consider the two-mode skew-hermitian Weyl algebra $\hat{A}_2$ generated by the creation operators $a^\dagg$, $b^\dagg$ and the annihilation operators $a$, $b$. Define the following three operators:
					\begin{align}
						\sigma_1:=\frac{1}{2}\left(ab^\dagg-a^\dagg b\right),\quad\sigma_2:=\frac{i}{2}\left(ab^\dagg+a^\dagg b\right),\quad\sigma_3:=-\frac{i}{2}\left(a^\dagg a-b^\dagg b\right).
					\end{align}
					It is straightforward to verify: $[\sigma_1,\sigma_2]=\sigma_3$, $[\sigma_2,\sigma_3]=\sigma_1$, and $[\sigma_3,\sigma_1]=\sigma_2$. These operators are a basis for the real special unitary Lie algebra $\su{2}$ as a vector space, which corresponds to the compact Lie group $\operatorname{SU}(2)$. This Lie algebra is semisimple and of rank one \plainrefs{Pfeifer:2003}.
				\end{example}
			\end{tcolorbox}

			\section{Factorization of the dynamics induced by Hamiltonians with known algebras\label{app:factorization}}
			
			The Wei-Norman method applied to bosonic systems has been discussed in some detail in the literature \plainrefs{Qvarfort:2025}. To apply this method, we begin by considering the time evolution operator associated with a time-dependent Hamiltonian $H(t)$. This operator is given by the time-ordered exponential:
			\begin{align}\label{eqn:appendix:time:ordered:exponential:appendix}
				U(t)=\overset{\leftarrow}{\mathcal{T}}\exp\left[-i\int_0^t H(\tau)\mathrm{d}\tau\right].
			\end{align}
			To proceed, we adopt the factorization-ansatz:
			\begin{align}\label{ansatz:Wei:Norman:appendix}
				U(t)=\prod_{j=1}^n U_j(t),\quad\text{with}\quad U_j(t)=e^{-f_j(t) g_j},
			\end{align}
			where $g_j$ are linearly independent elements of a finite-dimensional Lie algebra $\g$ (in our case, $\g\subseteq\hat{A}_1$). The functions $f_j(t)$ are real-valued and encode the dynamics of the system.
			Taking the time derivative of \eqref{eqn:appendix:time:ordered:exponential:appendix} and \eqref{ansatz:Wei:Norman:appendix} yields
			\begin{align*}
				\frac{\mathrm{d}}{\mathrm{d} t}U(t)&=-i H(t)U(t)=\Dot{U}_1(t) U_2(t)\ldots U_n(t)+U_1(t)\Dot{U}_2(t) U_3(t)\ldots U_n(t)+U_1(t)\ldots U_{n-1}(t) \Dot{U}_n(t)\\
				&=-\left(\Dot{f}_1g_1+\Dot{f}_2(t)U_1(t) g_2U_1^\dagg(t)+\ldots+\Dot{f}_n(t) U_1(t)\ldots U_{n-1}(t)g_nU_{n-1}^\dagg(t)\ldots U_1^\dagg(t)\right)U(t).
			\end{align*}
			Here, we used the product rule, the identity $\Dot{U}_j(t)=-\Dot{f}_j(t)g_j U_j(t)$, and the fact that the operators $U_j(t)$ are unitary. Further manipulations of the equations above allow us to obtain the differential equation
			\begin{align}\label{eqn:app:diff:equations:general}
				iH(t)=\sum_{j=1}^nu_j(t)g_j=\Dot{f}_1g_1+\Dot{f}_2(t)U_1(t) g_2U_1^\dagg(t)+\ldots+\Dot{f}_n(t) U_1(t)\ldots U_{n-1}(t)g_nU_{n-1}^\dagg(t)\ldots U_1^\dagg(t).
			\end{align}
			The solution, as already discussed, is obtained by computing the similitude relations $U_j(t) g_kU_j^\dagg(t)$ and equating the coefficients of identical monomials $g_j$. Note that this process is successful only if the Hamiltonian Lie algebra $\g=\lie{\{g_j\}_{j=1}^n}=\spn\{g_j\}_{j=1}^n$ is finite dimensional.
			
			Below, we specialize the problem to two finite-dimensional algebras that are key to this work: namely, the Schr\"odinger algebra $\mathcal{S}$ and the Wigner-Heisenberg algebra $\wh_2$.

			\subsection{The Schr\"odinger algebra}
			
			We start by considering the Schr\"odinger algebra $\mathcal{S}$, a $6$-dimensional non-abelian Lie algebra that is spanned by the basis $\{i,ia^\dagg a,g_-^{2\iota_1},g_+^{2\iota_1},g_-^{\iota_1},g_+^{\iota_1}\}$, as shown in Proposition~\ref{prop:schroedinger:algebra}. In this case, there are many similitude relations that we need to compute since they determine the differential equations governing the dynamics of the system of interest. We list the similitude relations with all relevant elements below.

			We will use the Hadamard lemma\footnote{Note that naming this formula the Hadamard lemma can be confusing, as the term 'Hadamard lemma' often refers to a different result from the literature  \plainrefs{Nestruev:2020}.}  to compute the similitude relations $\exp(g_jx)g_k \exp(-g_j x)$ for some $g_j,g_k\in\{g_1:=ia^\dagg a,g_2:=g_-^{2\iota_1},g_3:=g_+^{2\iota_1},g_4:=g_-^{\iota_1},g_5:=g_+^{\iota_1},g_6:=i\}$ \plainrefs{Adesso:Ragy:2014}. This reads:
			\begin{align}\label{eqn:hadamard:lemma}
				e^{g_j x}g_ke^{-g_j x}=\sum_{n=0}^\infty\frac{1}{n!}\operatorname{ad}_{g_j x}^n(g_k)=\sum_{n=0}^\infty\frac{x^n}{n!}\operatorname{ad}_{g_j}^n(g_k),
			\end{align}
			where the adjoint maps $\operatorname{ad}_{g_j }^n$ are defined as $n$ nested commutators. To be precise, the maps $\operatorname{ad}_{g_j }^n:\g\to \g$ are recursively defined via the initial maps $\operatorname{ad}_{g_j}^0:\g\to\g,\, x\mapsto x$, and $\operatorname{ad}_{g_j}^1\equiv \operatorname{ad}_{g_j}:\g\to\g,\;x\mapsto[g_j,x]$, as well as the formation rule $\operatorname{ad}_{g_j}^{n+1}:=\operatorname{ad}_{g_j}^n\circ\operatorname{ad}_{g_j}:\g\to \g,\,x\mapsto \operatorname{ad}_{g_j}^{n+1}(x)=\operatorname{ad}_{g_j}^n(\operatorname{ad}_{g_j}(x))$. 
			
			\begin{itemize}
				\item \textbf{Similitude relations with} $\boldsymbol{\exp(ia^\dagg a x)}$\textbf{.}
				We begin by computing the similitude relation for the conjugation of a general monomial $g_\sigma^\gamma\in\hat{A}_1$ by the exponential of the operator $ia^\dagg a$.
				To do this, we prove now inductively that:
				\begin{align*}
					\operatorname{ad}_{ia^\dagg a }^n(g_\sigma^\gamma)&=(-1)^{\lfloor n/2\rfloor}(\sigma \chimap{\gamma})^ng_{(-1)^n\sigma}^\gamma.
				\end{align*}
				The base cases $n=0$ and $n=1$ can easily be verified by consolidating Table~\ref{tab:Full:Commutator:Algebra:(pesudo):schroedinger:algebra}. Thus, we can proceed with the induction step:
				\begin{align*}
					\operatorname{ad}_{ia^\dagg a }^{n+1}(g_\sigma^\gamma)=(-1)^{\lfloor n/2\rfloor}(\sigma \chimap{\gamma})[ia^\dagg a ,g_{(-1)^n\sigma}^\gamma]=(-1)^{\lfloor n/2\rfloor+n}(\sigma \chimap{\gamma})^{n+1}g_{(-1)^{n+1}\sigma}^\gamma.
				\end{align*}
				Observing that $(-1)^{\lfloor n/2\rfloor +n}=(-1)^{\lfloor(n+1)/2\rfloor}$ concludes this proof by induction. The resulting similitude relations are consequently:
				\begin{align}\label{eqn:app:similitude:number:operator:g:sigma:gamma}
					e^{ia^\dagg a x}g_\sigma^\gamma e^{-ia^\dagg a x}&=\sum_{n=0}^\infty\frac{1}{n!}(-1)^{\lfloor n/2\rfloor}(\sigma x\chimap{\gamma})^n g_{(-1)^n\sigma}^\gamma=\cos(x\chimap{\gamma})g_\sigma^\gamma+\sigma
					\sin(x\chimap{\gamma})g_{-\sigma}^\gamma.
				\end{align}
				\item \textbf{Similitude relations with} $\boldsymbol{\exp(g_-^{2\iota_1}x)}$\textbf{.}
				
				Here, we won't present a general formula for $\exp(g_-^{2\iota_1})g_\sigma^\gamma \exp(-g_-^{2\iota_1}x)$, as done before; instead, we focus on the $g_\sigma^\gamma$, for which the similitude relations are necessary to factorize the dynamics determined by elements from the Schrödinger algebra. That is $g_\sigma^\gamma\in \{g_+^{2\iota_1},g_-^{\iota_1},g_+^{\iota_1}\}$. Note that we do not need to consider elements $z$ from the center of $\mathcal{S}$, since those satisfy $\exp(g_-^{2\iota_1}) z \exp(-g_-^{2\iota_1}x)=z$.
				
				We start by showing inductively that
				\begin{align}
					\operatorname{ad}_{g_-^{2\iota_1}}^n(g_+^{2\iota_1})&=2^{\lfloor n/2\rfloor}8^{\lfloor(n+1)/2\rfloor}\left(\Theta_n g_+^{2\iota_1}+\Theta_{n+1} i\left(a^\dagg a+\frac{1}{2}\right)\right),
				\end{align}
				where $\Theta_n=1$ if $n$ is even and $\Theta_n=0$ if $n$ is odd. The base case is clear, and we can move on to the induction step:
				\begin{align*}
					\operatorname{ad}_{g_-^{2\iota_1}}^{n+1}(g_+^{2\iota_1})&=2^{\lfloor n/2\rfloor}8^{\lfloor(n+1)/2\rfloor}\left(\Theta_n [g_-^{2\iota_1},g_+^{2\iota_1}]+\Theta_{n+1} [g_-^{2\iota_1},ia^\dagg a]\right)=2^{\lfloor n/2\rfloor}8^{\lfloor(n+1)/2\rfloor}\left(\Theta_n 8i\left(a^\dagg a+\frac{1}{2}\right)+\Theta_{n+1} 2g_+^{2\iota_1}\right).
				\end{align*}
				Here, we used that $i\in\mathcal{Z}(\mathcal{S})$ to obtain the first equality. Suppose now $n=2n'$ is even. Then:
				\begin{align*}
					\operatorname{ad}_{g_-^{2\iota_1}}^{n+1}(g_+^{2\iota_1})&=2^{n'}8^{n'+1}i\left(a^\dagg a+\frac{1}{2}\right)=2^{\lfloor (n+1)/2\rfloor}8^{\lfloor(n+2)/2\rfloor}\left(\Theta_{n+1} g_+^{2\iota_1}+\Theta_{n+2} i\left(a^\dagg a+\frac{1}{2}\right)\right).
				\end{align*}
				One finds similarly, if $n=2n'+1$ is odd:
				\begin{align*}
					\operatorname{ad}_{g_-^{2\iota_1}}^{n+1}(g_+^{2\iota_1})&=2^{n'+1}8^{n'+1}g_+^{2\iota_1}=2^{\lfloor (n+1)/2\rfloor}8^{\lfloor(n+2)/2\rfloor}\left(\Theta_{n+1} g_+^{2\iota_1}+\Theta_{n+2} i\left(a^\dagg a+\frac{1}{2}\right)\right),
				\end{align*}
				which concludes this induction. The resulting similitude relation reads, therefore:
				\begin{align}\label{eqn:app:similitude:minus:2iota:plus:2iota}
					e^{g_-^{2\iota_1}x}g_+^{2\iota_1}e^{-g_-^{2\iota_1}}&=\sum_{n=0}^\infty\frac{x^n}{n!}2^{\lfloor n/2\rfloor}8^{\lfloor(n+1)/2\rfloor}\left(\Theta_n g_+^{2\iota_1}+\Theta_{n+1} i\left(a^\dagg a+\frac{1}{2}\right)\right)=\cosh(4x)g_+^{2\iota_1}+2i\sinh(4x)\left(a^\dagg a+\frac{1}{2}\right).
				\end{align}
				
				Next, we inductively show the identity
				\begin{align}
					\operatorname{ad}_{g_-^{2\iota_1}}^{n}(g_\sigma^{\iota_1})=\sigma^n 2^n g_\sigma^{\iota_1}.
				\end{align}
				The base cases with $n=0$ and $g_\sigma^{\iota_1}\in\{g_+^{\iota_1},g_-^{\iota_1}\}$ are clear and the induction step follows immediately from Table~\ref{tab:Full:Commutator:Algebra:(pesudo):schroedinger:algebra}:
				\begin{align*}
					\operatorname{ad}_{g_-^{2\iota_1}}^{n+1}(g_\sigma^{\iota_1})=\sigma^n 2^n [g_-^{2\iota_1},g_\sigma^{\iota_1}]=\sigma^{n+1} 2^{n+1} g_\sigma^{\iota_1}.
				\end{align*}
				The resulting similitude relations read consequently:
				\begin{align}
					e^{g_-^{2\iota_1}x}g_\sigma^{\iota_1}e^{-g_-^{2\iota_1}x}&=\sum_{n=0}^\infty\frac{x^n}{n!}\sigma^n 2^n g_\sigma^{\iota_1}=e^{2\sigma x}g_\sigma^{\iota_1}.
				\end{align}
				
				\item \textbf{Similitude relations with} $\boldsymbol{\exp(g_+^{2\iota_1}x)}$\textbf{.}
				
				Here, we focus on the similtude relation $\exp(g_+^{2\iota_1} x)g_\sigma^{\iota_1} \exp(-g_+^{2\iota_1})$. We start by showing inductively:
				\begin{align}
					\operatorname{ad}_{g_+^{2\iota_1}}^{n}(g_\sigma^{\iota_1})&=(-2)^ng_{(-1)^n\sigma}^{\iota_1}.
				\end{align}
				The base cases with $n=0$ and $g_\sigma^{\iota_1}\in\{g_+^{\iota_1},g_-^{\iota_1}\}$ are clear. For the induction step, one observes:
				\begin{align*}
					\operatorname{ad}_{g_+^{2\iota_1}}^{n+1}(g_\sigma^{\iota_1})&=(-2)^n[g_+^{2\iota_1},g_{(-1)^n\sigma}^{\iota_1}]=(-2)^{n+1}g_{(-1)^{n+1}}^{\iota_1}.
				\end{align*}
				The resulting similtude relations read consequently:
				\begin{align}
					e^{g_+^{2\iota_1}x}g_\sigma^{\iota_1}e^{-g_+^{2\iota_1}x}\sum_{n=0}^\infty\frac{x^n}{n!}(-2)^{n}g_{(-1)^n \sigma}^{\iota_1}=\cosh(2x)g_\sigma^{\iota_1}-\sinh(2x)g_{-\sigma}^{\iota_1}.
				\end{align}
				
				\item \textbf{Similitude relations with} $\boldsymbol{\exp(g_-^{\iota_1}x)}$\textbf{.}
				
				Here, the relevant similtude relations relevant for the upcoming calculations are $\exp(g_-^{\iota_1}x)g_k\exp(-g_-^{\iota_1}x)$ with $g_k\in \{ia^\dagg a, g_-^{2\iota_1},g_+^{2\iota_1},g_+^{\iota_1}\}$. These can easily be computed using Table~\ref{tab:Full:Commutator:Algebra:(pesudo):schroedinger:algebra} and recognizing that the series in equation \eqref{eqn:hadamard:lemma} terminates after at most three steps. Compute, for instance, the similtude relation $\exp(g_-^{\iota_1}x)g_-^{\iota_1}\exp(-g_-^{\iota_1}x)$ by noticing that $\operatorname{ad}_{g_-^{\iota_1}}(g_+^{\iota_1})=[g_-^{\iota_1},g_+^{\iota_1}]=2i$ and therefore $\operatorname{ad}_{g_-^{\iota_1}}^n(g_+^{\iota_1})=0$ for all $n\geq 2$. Thus, the resulting similitude relation reads:
				\begin{align}\label{eqn:app:similitude:minus:iota:1:plus:iota:1}
					e^{g_-^{\iota_1}x}g_+^{\iota_1}e^{-g_-^{\iota_1}x}&=g_+^{\iota_1}+2ix.
				\end{align}
				We also have:
				\begin{align}\label{eqn:app:similitude:minus:iota:A0}
					e^{g_-^{\iota_1}x}(ia^{\dagg}a) e^{-g_-^{\iota_1}x}=ia^{\dagg}a + xg_+^{\iota_1}+ix^2,
				\end{align} 
				since $\operatorname{ad}_{g_-^{\iota_1}}(ia^\dagg a)=[g_-^{\iota_1},ia^\dagg a]=g_+^{\iota_1}$, and thus $\operatorname{ad}^2_{g_-^{\iota_1}}(g_-^{2\iota_1})=[g_-^{\iota_1},g_+^{\iota_1}]=2i$, while $\operatorname{ad}^n_{g_-^{\iota_1}}(g_-^{2\iota_1})=0$  for all $n\geq 3$. Similarly, for $g_-^{2\iota_1}$ and $g_+^{2\iota_1}$, we obtain:
				\begin{align}
					e^{g_-^{\iota_1}x}g_-^{2\iota_1}e^{-g_-^{\iota_1}x}&=g_-^{2\iota_1}+2xg_-^{\iota_1},\label{eqn:app:similitude:minus:iota:minus:2iota}\\
					e^{g_-^{\iota_1}x}g_+^{2\iota_1}e^{-g_-^{\iota_1}x}&=g_+^{2\iota_1}+2xg_+^{\iota_1}+2ix^2\label{eqn:app:similitude:minus:iota:plus:2iota}.
				\end{align}
				
				\item \textbf{Similitude relations with $\boldsymbol{\exp(g_+^{\iota_1}x)}$.}\\
				Likewise, we find the following expressions:
				\begin{align}
					e^{g_+^{\iota_1}x}(ia^{\dagg}a) e^{-g_+^{\iota_1}x}&=ia^{\dagg}a - xg_-^{\iota_1}+ix^2,\label{eqn:app:similitude:plus:iota:A0}\\
					e^{g_+^{\iota_1}x} g_-^{2\iota_1} e^{-g_+^{\iota_1}x}&=g_-^{2\iota_1} - 2x g_+^{\iota_1}\label{eqn:app:similitude:plus:iota:minus:2iota},\\
					e^{g_+^{\iota_1}x}g_+^{2\iota_1} e^{-g_+^{\iota_1}x}&=g_+^{2\iota_1}+2xg_-^{\iota_1}-2ix^2\label{eqn:app:similitue:plus:iota:plus:2iota}.
				\end{align}
			\end{itemize}
			
			We now proceed with the similitude relations necessary to solve our problem.
			
			\begin{itemize}
				\item \textbf{First relation.}
				\begin{align*}
					U_1(t) g_2 U_1^\dagg(t)=e^{-ia^{\dagg}af_1(t)}g_-^{\iota_1}e^{ia^{\dagg}af_1(t)}=\cos(f_1(t))g_-^{\iota_1}+\sin(f_1(t))g_+^{\iota_1},
				\end{align*}
				as derived from \eqref{eqn:app:similitude:number:operator:g:sigma:gamma}.
				\item \textbf{Second relation.}
				\begin{align*}
					\prod_{j=1}^2 U_j(t) g_3\prod_{j=1}^2 U_{3-j}^\dagg(t)=e^{-ia^{\dagg}af_1(t)}e^{-g_-^{\iota_1}f_2(t)}g_+^{\iota_1}e^{g_-^{\iota_1}f_2(t)}e^{ia^{\dagg}af_1(t)}=\cos(f_1(t))g_+^{\iota_1} - \sin(f_1(t))g_-^{\iota_1}-2if_2(t),
				\end{align*}
				using \eqref{eqn:app:similitude:minus:iota:1:plus:iota:1} and \eqref{eqn:app:similitude:number:operator:g:sigma:gamma}.
				\item \textbf{Third relation.}
				\begin{align*}
					\prod_{j=1}^3 U_j(t) g_4\prod_{j=1}^3 U_{4-j}^\dagg(t)&= e^{-ia^{\dagg}af_1(t)}e^{-g_-^{\iota_1}f_2(t)}e^{-g_+^{\iota_1}f_3(t)}g_-^{2\iota_1}e^{g_+^{\iota_1}f_3(t)}e^{g_-^{\iota_1}f_2(t)}e^{ia^{\dagg}af_1(t)}\\
					&=\cos(2f_1(t))g_-^{2\iota_1}+\sin(2f_1(t))g_+^{2\iota_1}-2\left(f_2(t)\cos(f_1(t))+f_3(t)\sin(f_1(t))\right)g_-^{\iota_1}\\
					&\quad+2\left(f_3(t)\cos(f_1(t))-f_2(t)\sin(f_1(t))\right)g_+^{\iota_1}-4if_2(t)f_3(t)
				\end{align*}
				using \eqref{eqn:app:similitude:plus:iota:minus:2iota}, \eqref{eqn:app:similitude:minus:iota:minus:2iota}, \eqref{eqn:app:similitude:minus:iota:1:plus:iota:1}, and \eqref{eqn:app:similitude:number:operator:g:sigma:gamma}.
				\item \textbf{Fourth relation.}
				\begin{align*}
					\prod_{j=1}^4 U_j(t) g_5\prod_{j=1}^4 U_{5-j}^\dagg(t)&=e^{-ia^{\dagg}af_1(t)}e^{-g_-^{\iota_1}f_2(t)}e^{-g_+^{\iota}f_3(t)}e^{-g_-^{2\iota_1}f_4(t)}g_+^{2\iota_1}e^{g_-^{2\iota_1}f_4(t)}e^{g_+^{\iota}f_3(t)}e^{g_-^{\iota_1}f_2(t)}e^{ia^{\dagg}af_1(t)}\\
					&=\cosh(4f_4(t))\cos(2f_1(t))g_+^{2\iota_1}-\cosh(4f_4(t))\sin(2f_1(t))g_-^{2\iota_1}\\
					&\quad-2\left(f_2(t)e^{-4f_4(t)}\cos(f_1(t))+f_3(t)e^{4f_4(t)}\sin(f_1(t))\right)g_+^{\iota_1}\\
					&\quad+2\left(f_2(t)e^{-4f_4(t)}\sin(f_1(t))-f_3(t)e^{4f_4(t)}\cos(f_1(t))\right)g_-^{\iota_1}\\
					&\quad-2\sinh(4f_4(t))(ia^{\dagg}a) + 2i(f_2(t))^2e^{-4f_4(t)}-2i(f_3(t))^2e^{4f_4(t)}-i\sinh(4f_4(t)),
				\end{align*}
				using \eqref{eqn:app:similitude:minus:2iota:plus:2iota}, \eqref{eqn:app:similitue:plus:iota:plus:2iota}, \eqref{eqn:app:similitude:plus:iota:A0}, \eqref{eqn:app:similitude:minus:iota:plus:2iota}, \eqref{eqn:app:similitude:minus:iota:A0}, and \eqref{eqn:app:similitude:number:operator:g:sigma:gamma}.
				\item \textbf{Fith relation.}{\small
					\begin{align*}
						\prod_{j=1}^5 U_j(t) g_6\prod_{j=1}^6 U_{6-j}^\dagg(t)&=e^{-ia^{\dagg}af_1(t)}e^{-g_-^{\iota_1}f_2(t)}e^{-g_+^{\iota}f_3(t)}e^{-g_-^{2\iota_1}f_4(t)}e^{-g_+^{2\iota_1}f_5(t)}ie^{g_+^{2\iota_1}f_5(t)}e^{g_-^{2\iota_1}f_4(t)}e^{g_+^{\iota}f_3(t)}e^{g_-^{\iota_1}f_2(t)}e^{ia^{\dagg}af_1(t)}=i.
				\end{align*}}
			\end{itemize}
			\noindent
			For a Hamiltonian
			\begin{align*}
				iH(t)=\sum_{j=1}^6 u_j(t)g_j=ia^\dagg a u_1(t)+(a-a^\dagg)u_2(t)+i(a+a^\dagg)u_3(t)+\left(a^2-(a^\dagg)^2\right)u_4(t)+i\left(a^2+(a^\dagg)^2\right) u_5(t)+iu_6(t),
			\end{align*}
			which admits the factorization (up to an irrelevant phase) for the time evolution operator as:
			\begin{align*}
				U(t)=e^{-ia^{\dagg}af_1(t)}e^{-(a-a^{\dagg})f_2(t)}e^{-i(a+a^{\dagg})f_3(t)}e^{-(a^2-(a^{\dagg})^2)f_4(t)}e^{-i(a^2+(a^{\dagg})^2)f_5(t)}
			\end{align*}
			the functions $f_j(t)$ with $j\in\{1,2,3,4,5\}$ are found by (numerically) solving the set of coupled differential equations, derived using the relations above:
			\begin{subequations}
				\begin{align}
					u_1(t) &= \dot{f}_1(t) - 2\dot{f}_5(t)\sinh(4f_4(t)), \\[6pt]
					u_2(t) &= \dot{f}_2(t)\cos(f_1(t)) - \dot{f}_3(t)\sin(f_1(t))   - 2\dot{f}_4(t)\bigl(f_2(t)\cos(f_1(t)) + f_3(t)\sin(f_1(t))\bigr)  \nonumber \\
					&\quad + 2\dot{f}_5(t)\bigl(f_2(t)e^{-4f_4(t)}\sin(f_1(t)) - f_3(t)e^{4f_4(t)}\cos(f_1(t))\bigr), \\[6pt]
					u_3(t) &= \dot{f}_2(t)\sin(f_1(t)) + \dot{f}_3(t)\cos(f_1(t))   + 2\dot{f}_4(t)\bigl(f_3(t)\cos(f_1(t)) - f_2(t)\sin(f_1(t))\bigr)  \nonumber \\
					&\quad - 2\dot{f}_5(t)\bigl(f_2(t)e^{-4f_4(t)}\cos(f_1(t)) + f_3(t)e^{4f_4(t)}\sin(f_1(t))\bigr), \\[6pt]
					u_4(t) &= \dot{f}_4(t)\cos(2f_1(t)) - \dot{f}_5(t)\cosh(4f_4(t))\sin(2f_1(t)), \\[6pt]
					u_5(t) &= \dot{f}_4(t)\sin(2f_1(t)) + \dot{f}_5(t)\cosh(4f_4(t))\cos(2f_1(t)).
				\end{align}
			\end{subequations}
			This linear system of equations can be inverted:
			\begin{subequations}
				\begin{align}
					\dot{f}_1(t)&=u_1(t)+2\sinh(4f_4(t))\dot{f}_5(t),\\
					\dot{f}_2(t)&=u_2(t)\cos(f_1(t))+u_3(t)\sin(f_1(t))+2\dot{f}_4(t)f_2(t)+2\dot{f}_5(t)f_3(t)e^{4f_4(t)},\\
					\dot{f}_3(t)&=u_3(t)\cos(f_1(t))-u_2(t)\sin(f_1(t))-2\dot{f}_4(t)f_3(t)+2\dot{f}_5(t)f_2(t)e^{-4f_4(t)},\\
					\dot{f}_4(t)&=u_4(t)\cos(2f_1(t))+u_5(t)\sin(2f_1(t)),\\
					\dot{f}_5(t)&=\frac{u_5(t)\cos(2f_1(t))-u_4(t)\sin(2f_1(t))}{\cosh(4f_4(t))}.
				\end{align}
			\end{subequations}
			Substituting the implicit dependencies finally yields equations \eqref{Schroedinger:factorization:final:equations}.

			\subsection{The Wigner-Heisenberg algebra of second kind}
			We move on to the case of Wigner-Heisenberg algebra $\wh_2$.
			We choose the usual basis $\{g_1:=ia^\dagg a,g_2:=g_-^{\iota_1},g_3:=g_+^{\iota_1},g_4:=i\}$, which leaves us with the task of computing the following relations:
			\begin{itemize}
				\item \textbf{First relation.}
				\begin{align*}
					e^{-ia^\dagg a f_1(t)}g_-^{\iota_1}e^{ia^\dagg a f_1(t)}&=\cos(f_1(t))g_-^{\iota_1}+\sin(f_1(t))g_+^{\iota_1},
				\end{align*}
				as derived from \eqref{eqn:app:similitude:number:operator:g:sigma:gamma}.
				\item \textbf{Second relation.}
				\begin{align*}
					e^{-ia^\dagg a f_1(t)}e^{-g_-^{\iota_1}f_2(t)}g_+^{\iota_1}e^{g_-^{\iota_1}f_2(t)}e^{ia^\dagg a f_1(t)}&=e^{-ia^\dagg a f_1(t)}\left(g_+^{\iota_1}-2if_2(t)\right)e^{ia^\dagg a f_1(t)}\\
					&=\cos(f_1(t))g_+^{\iota_1}-\sin(f_1(t))g_-^{\iota_1}-2if_2(t),
				\end{align*}
				using equations \eqref{eqn:app:similitude:minus:iota:1:plus:iota:1} and \eqref{eqn:app:similitude:number:operator:g:sigma:gamma}.
				\item \textbf{Third relation.}
				\begin{align*}
					e^{-ia^\dagg a f_1(t)}e^{-g_-^{\iota_1}f_2(t)}e^{-g_+^{\iota_1}f_3(t)}ie^{g_+^{\iota_1}f_3(t)}e^{g_-^{\iota_1}f_2(t)}e^{ia^\dagg a f_1(t)}&=i.
				\end{align*}
			\end{itemize}
			We can therefore complete the Wei-Norman method and write the differential equation of interest as
			\begin{align*}
				u_1(t)&=\Dot{f}_1(t),\\
				u_2(t)&=\Dot{f}_2(t)\cos(f_1(t))-\Dot{f}_3(t)\sin(f_1(t)),\\
				u_3(t)&=\Dot{f}_2(t)\sin(f_1(t))+\Dot{f}_3(t)\cos(f_1(t)),\\
				u_4(t)&=\Dot{f}_4(t)-2\Dot{f}_3(t)f_2(t),
			\end{align*}
			with the initial condition $f_j(0)=0$ for all $j\in\{1,2,3,4\}$. The formal solution to this system of coupled differential equations can easily be obtained, and we present it below:
			\begin{subequations}
				\begin{align}
					f_1(t)&=\int_0^t u_1(\tau)\mathrm{d}\tau,\\
					f_2(t)&=\int_0^t\left(\cos(f_1(\tau)) u_2(\tau)+\sin(f_1(\tau)) u_3(\tau)\right) \mathrm{d}\tau,\\
					f_3(t)&=\int_0^t\left(\cos(f_1(\tau)) u_3(\tau)-\sin(f_1(\tau)) u_2(\tau)\right) \mathrm{d}\tau,\\
					f_4(t)&=\int_0^t\left(u_4(\tau)+2\Dot{f}_3(\tau)f_2(\tau)\right)\mathrm{d}\tau.
				\end{align}
			\end{subequations}
			We can now also choose a typical Hamiltonian $H(t)=\omega a^\dagg a + i\,u_2(t)(a-a^\dagg)+u_3(t)(a+a^\dagg)$ and have 
			\begin{subequations}\label{Wigner:Heisenberg:decouplig:solution:appendix}
				\begin{align}
					f_1(t)&=\omega\,t,\\
					f_2(t)&=\int_0^t\left(\cos(\omega\,\tau) u_2(\tau)+\sin(\omega\,\tau) u_3(\tau)\right) \mathrm{d}\tau,\\
					f_3(t)&=\int_0^t\left(\cos(\omega\,\tau) u_3(\tau)-\sin(\omega\,\tau) u_2(\tau)\right) \mathrm{d}\tau,\\
					f_4(t)&=2\int_0^t\Dot{f}_3(\tau)f_2(\tau)\mathrm{d}\tau,
				\end{align}
			\end{subequations}
			in terms of the control functions $u_k(t)$.
			
			
			\clearpage
			
			\bibliographystyle{apsrev4-2}
			\bibliography{main.bib}
			
		\end{document}